\documentclass[10pt]{article}

\usepackage{amsmath,amssymb,amsthm,amsfonts, mathrsfs,latexsym,bbm,xspace,graphicx,float,mathtools,braket,}
\usepackage[letterpaper,margin=0.94in]{geometry}

\usepackage{enumitem}

\usepackage{ytableau}

\usepackage[colorinlistoftodos]{todonotes}
\newtheorem{theorem}{Theorem}[section]
\newtheorem*{namedtheorem}{\theoremname}
\newcommand{\theoremname}{testing}

\newtheorem{lemma}[theorem]{Lemma}

\newtheorem{proposition}[theorem]{Proposition}

\newtheorem{corollary}[theorem]{Corollary}

\newtheorem{problem}[theorem]{Problem}

\theoremstyle{definition}
\newtheorem{definition}[theorem]{Definition}
\newtheorem{construction}[theorem]{Construction}

\newtheorem{remark}[theorem]{Remark}

\newtheorem{example}[theorem]{Example}

\newtheorem{fact}[theorem]{Fact}

\renewcommand{\Pr}{\mathop{\bf Pr\/}}
\newcommand{\E}{\mathop{\bf E\/}}

\newcommand{\Var}{\mathop{\bf Var\/}}
\newcommand{\Cov}{\mathop{\bf Cov\/}}

\newcommand{\tr}{\mathrm{tr}} \newcommand{\Tr}{\tr} 
\newcommand{\poly}{\mathrm{poly}}

\newcommand{\sgn}{\mathrm{sgn}}

\newcommand{\diag}{\mathrm{diag}}

\newcommand{\C}{\mathbb C}
\newcommand{\N}{\mathbb N}

\newcommand{\F}{\mathbb F}

\newcommand{\p}{\mathsf{p}}

\newcommand{\eps}{\varepsilon}

\newcommand{\Hom}{\mathrm{Hom}}

\newcommand{\dens}{\mathfrak{D}}

\newcommand{\ignore}[1]{}

\usepackage[table]{xcolor}

\usepackage[T1]{fontenc}
\usepackage[utf8]{inputenc}
\usepackage{booktabs}
\definecolor{TSUYUKUSA}{RGB}{46, 169, 223}
\definecolor{KURENAI}{RGB}{203, 27, 69}
\definecolor{hkugreen}{HTML}{49B880}
\definecolor{hkublue}{HTML}{009CDE}
\definecolor{hkudarkgreen}{HTML}{1C584B}
\definecolor{hkudarkred}{HTML}{FF665E}
\definecolor{navyblue}{HTML}{000080}
\definecolor{quantumviolet}{HTML}{771C56}
\usepackage[pdfstartview=FitH,pdfpagemode=UseNone,colorlinks=true,citecolor=hkudarkgreen,linkcolor=quantumviolet,backref=page,linktocpage=true, urlcolor = navyblue]
{hyperref}

\usepackage[nameinlink,capitalize]{cleveref}

\usepackage{hyperref}
\usepackage{graphicx}
\usepackage{subcaption}
\usepackage{amssymb}
\usepackage{mathtools}
\usepackage{amsthm}
\usepackage{dsfont}
\usepackage{ytableau}
\usepackage{braket}
\usepackage{diagbox}
\usepackage{bbm}

\usepackage{pgfplots}

\pgfplotsset{compat=1.18}
\usepackage{tikz}
\usetikzlibrary{quantikz2}
\usepackage{aligned-overset}
\usetikzlibrary{decorations.pathmorphing, calc, arrows.meta}
\usepackage{thm-restate}

\usepackage{framed}
\usepackage{environ}

\definecolor{deepblue}{RGB}{0, 51, 153}
\definecolor{deepred}{RGB}{153, 0, 51}

\usepackage{enumitem}
\usepackage{algorithmic}
\usepackage{float}

\newcommand{\I}{\mathrm{i}}

\allowdisplaybreaks[1]

\newcommand{\rank}{\textnormal{rank}}


\renewcommand{\Ket}[1]{| #1 \rangle \! \rangle}
\renewcommand{\Bra}[1]{\langle \! \langle #1 |}
\renewcommand{\Braket}[1]{\langle \! \langle #1 \rangle \! \rangle}

\newcommand{\gap}{\mathrm{gap}}

\newcommand{\ybox}{%
  \tikz[baseline={([yshift=-.5ex]current bounding box.center)}, scale=0.25] {
    \draw (0,0) rectangle (1,1);
  }%
}
\newcommand{\yrow}{%
  \tikz[baseline={([yshift=-.5ex]current bounding box.center)}, scale=0.25] {
    \draw (0,0) rectangle (2,1);
    \draw (1,0) -- (1,1);
  }%
}
\newcommand{\ycol}{%
  \tikz[baseline={([yshift=-.5ex]current bounding box.center)}, scale=0.25] {
    \draw (0,0) rectangle (1,2);
    \draw (0,1) -- (1,1);
  }%
}

\newcommand{\sybox}{%
  \tikz[baseline={([yshift=-.5ex]current bounding box.center)}, scale=0.15] {
    \draw (0,0) rectangle (1,1);
  }%
}
\newcommand{\syrow}{%
  \tikz[baseline={([yshift=-.5ex]current bounding box.center)}, scale=0.15] {
    \draw (0,0) rectangle (2,1);
    \draw (1,0) -- (1,1);
  }%
}
\newcommand{\sycol}{%
  \tikz[baseline={([yshift=-.5ex]current bounding box.center)}, scale=0.15] {
    \draw (0,0) rectangle (1,2);
    \draw (0,1) -- (1,1);
  }%
}

\newcommand{\su}{\mathsf{SU}}
\usepackage{bm}

\usepackage{tabularx}
\usepackage{multirow}
\usepackage{tabularray}
\usepackage{array}
\usepackage{afterpage}
\usepackage{enumitem}
\usepackage{authblk}
\UseTblrLibrary{varwidth}

\newcommand{\etc}{etc.}
\newcommand{\X}{\mathcal{X}}
\newcommand{\obs}{\mathsf{Obs}}
\newcommand{\B}{\mathscr{B}}
\renewcommand{\P}{\mathscr{P}}
\newcommand{\ie}{\text{i.e.}\xspace}
\newcommand{\eg}{\text{e.g.}\xspace}
\newcommand{\etal}{\text{et al.}\xspace}

\newcommand{\Y}{\mathsf{Y}}
\newcommand{\dd}{\,\mathrm{d}}
\newcommand{\U}{\mathsf{U}}
\newcommand{\gl}{\mathrm{GL}}

\newcommand{\qnc}{\mathsf{QNC}}
\newcommand{\nc}{\mathsf{NC}}

\renewcommand{\b}{\mathsf{B}}

\newcommand{\bfp}{\mathbf{p}}
\newcommand{\bfq}{\mathbf{q}}
\renewcommand{\H}{\mathcal{H}}

\newcommand{\openone}{\mathbb{I}}
\newcommand{\caH}{\mathcal{H}}
\newcommand{\caU}{\mathcal{U}}
\newcommand{\bigO}{\mathcal{O}}
\newcommand{\bigo}[1]{\mathcal{O}\left(#1\right)}
\newcommand{\bigtheta}[1]{\Theta\left(#1\right)}

\newcommand{\covering}{\mathsf{C}}

\usepackage{mathtools,mleftright}
\mleftright
\def\eqref#1{\textup{(\ref{#1})}}
\newcommand{\eref}[1]{Eq.~\textup{(\ref{#1})}}
\newcommand{\lref}[1]{Lemma~\ref{#1}}
\newcommand{\tref}[1]{Theorem~\ref{#1}}
\newcommand{\pref}[1]{Proposition~\ref{#1}}

\title{Optimal classical shadow estimation of unitary channels at Heisenberg limit}

\author[1]{Entong He\thanks{Co-first author. Both authors contributed equally.}\thanks{\href{mailto:ethe@cs.hku.hk}{\texttt{ethe@cs.hku.hk}}}}
\author[1]{Zihao Li$^{\ast}$\thanks{\href{mailto:zihaoli@hku.hk}{\texttt{zihaoli@hku.hk}}}}
\author[2, 3]{Noam Scully}
\author[3, 2, 4]{Sisi Zhou}
\author[1]{Yuxiang Yang}

\affil[1]{\normalsize QICI Quantum Information and Computation Initiative, \protect\\ School of Computing and Data Science, The University of Hong Kong, Hong Kong SAR, China.}
\affil[2]{\normalsize Department of Physics and Astronomy, University of Waterloo, Ontario N2L 3G1, Canada.}
\affil[3]{\normalsize Perimeter Institute for Theoretical Physics, Waterloo, Ontario N2L 2Y5, Canada.}
\affil[4]{\normalsize Department of Applied Mathematics and Institute for Quantum Computing, \protect\\ University of Waterloo, Ontario N2L 3G1, Canada.}

\allowdisplaybreaks
\date{}

\begin{document}

\maketitle

\vspace{-2.2em}
\begin{abstract}
Full tomography of an unknown quantum evolution is resource-intensive and often unnecessary when the goal is only to predict selected properties. This motivates the study of classical shadow estimation of unitary channels (CSEU), a task in which one queries an unknown $d$-dimensional unitary $U$ and stores classical data that can later be used to predict expectation values $\mathrm{tr}[O \cdot U\rho U^\dagger]$ up to additive error $\varepsilon$ for arbitrary input states $\rho$ and observables $O$. We propose a parallel, non-adaptive CSEU protocol using $\mathcal{O}(d\varepsilon^{-1})$ queries when the input states or observables have constant rank. This achieves Heisenberg scaling with respect to $\varepsilon$ and is query-optimal, as we prove a matching $\Omega(d\varepsilon^{-1})$ lower bound that remains valid even with stronger access to the unknown unitary. Our query-optimal CSEU protocol provides a versatile and powerful tool for quantum learning theory, pushing the performance limits of several fundamental learning tasks, including unitary channel tomography, Hamiltonian learning, boundary-regime quantum channel tomography, Pauli transfer matrix learning, inverse-free amplitude estimation, pure-state property estimation, and shallow-circuit learning. Remarkably, we show that optimal unitary channel tomography can be achieved using only parallel queries, closing the gap between the best achievable efficiency of parallel and sequential tomography protocols. Together, these applications establish our framework as a fundamental tool for learning properties of quantum processes, particularly for certain key tasks that require high precision.
\end{abstract}

\hypersetup{linktocpage}

\vspace{-1.3em}

\tableofcontents

\newpage

\section{Introduction}\label{sec:intro}

Parallelism is a fundamental driver of large-scale data processing.
In classical computing, many revolutionary advances, from scientific simulation to artificial intelligence, rely not only on faster processors but also on distributing massive computations across many processors and executing them concurrently \cite{lecun2015deep, hennessy2019golden}. 
A similar issue arises in quantum information processing: when an unknown quantum process is queried many times, the arrangement of these queries can be as important as their number \cite{giovannetti2006quantum,giovannetti2011advances,  PhysRevLett.113.250801,PhysRevLett.117.160801,zhou2021asymptotic,Liu_2023,kurdzialek2023using}.
Sequential protocols may require long coherent circuits or many rounds of measurement and classical feedback, whereas parallel protocols query many copies of the process at once and can substantially reduce coherent depth and experimental runtime. 
This distinction is also important for modern quantum platforms \cite{preskill2018nisq}--such as superconducting circuits, neutral Rydberg atoms, and photonic devices--where many degrees of freedom can be controlled simultaneously, but long sequential query circuits and repeated feedback loops can impose a significant time overhead and accumulate errors \cite{arute2019quantum,wu2021strong,bluvstein2024logical,zhong2020quantum}.
Understanding which quantum learning tasks can be parallelized without sacrificing query efficiency is, therefore, a crucial and fundamental question, both conceptually and practically.

A fundamental task in this direction is to learn information about an unknown quantum evolution \cite{4655455, PhysRevLett.97.170501,Haah_2023, Gebhart_2023, Angrisani_2025}. 
The most straightforward and risk-free solution is full process tomography, which reconstructs a classical description of the entire quantum channel \cite{Kahn_2007, Yang_2020, Haah_2023, mele2025optimal, grewal2026efficientlearningstructuredquantum,chen_Girardi2026}. 
However, full tomography often provides extensively more information than is needed. 
In many physical applications, the goal is only to predict selected properties of the quantum process. 
This is analogous to the prediction of quantum state properties \cite{buadescu2021improved,aasonson_shadow2018}, where classical shadows allow one to predict the expectation of an unknown state on many observables without reconstructing the full state \cite{Huang_2020,Grier_2024,ChenRobust2021,west2026classicalshadowsarbitrarygroup,PhysRevResearch.5.023027,PhysRevLett.131.240602,Zhou2023perform,PRXQuantum.5.010352,5khs-7dyz}. 
This (classical) shadow perspective has exerted a substantial and wide-ranging influence because it separates data acquisition from later prediction and can be realized through highly parallel measurements, enabling a ``measure once, use repeatedly'' workflow \cite{Huang_2020,elben2023randomized}. 
Building upon this perspective, we study the \emph{classical shadow estimation of unitary channels} (CSEU) \cite{Kunjummen_2023, Levy_2024, Li_2025}: given access to an unknown $d$-dimensional unitary $U$, the goal is to produce classical data from which one can predict expectation values $\tr[O \cdot U\rho U^\dagger]$ for arbitrary input states $\rho$ and observables $O$. 
This captures a common setting in which the unknown quantum dynamics are either expensive to query or are generated at a short time indeterminately. The questions to be asked may only be specified later, forcing us to decouple the probing of the quantum resource from extracting the properties we want. Such nature makes CSEU a natural primitive for learning and verifying quantum devices.

Previous approaches revealed pieces of the optimal picture but left a fundamental gap. On the one hand, a naive approach for CSEU is to perform full tomography of the unknown unitary and then use the resulting estimate to answer prediction requests. 
This approach can achieve an order-$\eps^{-1}$  query complexity \cite{Haah_2023, Yang_2020} in the target precision parameter $\eps$, which is commonly known as the Heisenberg scaling, but it pays a larger query cost in the system dimension $d$. 
On the other hand, earlier CSEU protocols achieved nearly optimal dependence on $d$, but did not attain the Heisenberg scaling \cite{Li_2025}.
It has therefore remained open whether one can achieve
both optimal dimension dependence and Heisenberg scaling. 
A different line of recent works improves query efficiency under additional restrictions on the channel or on the allowed input states and observables \cite{Huang_learningtopredict2023,Huang_2024,caro2022learning,Zhao_2024,Huang_2022,du2025efficient}. 
By contrast, we stick to the fully general CSEU setting, where the unknown process is an arbitrary unitary and the prediction requests may involve arbitrary input states and observables, ensuring that our protocol is maximally flexible for a substantially broader range of applications.

Our main result closes this gap: we propose a parallel and non-adaptive CSEU protocol that achieves the optimal scaling in both the dimension $d$ and the target precision $\eps$. 
For observables with constant rank, the protocol uses $\mathcal O(d \eps^{-1})$ queries to the unknown unitary, attaining the Heisenberg limit. 
We also prove a matching lower bound for the query complexity, which holds even under a stronger access model in which the learner may query inverse and controlled versions of the unknown unitary. 
This shows that the query-optimal protocol can be achieved without sacrificing compatibility with parallel experimental implementations.

A key consequence is that optimal unitary tomography itself can be parallelized. 
Although full tomography is more demanding than shadow estimation, we show that CSEU and full unitary tomography are asymptotically ``equivalent'' at the level of query complexity, via a canonical conversion: an optimal CSEU protocol translates directly into an optimal tomography protocol by evaluating the shadow across a finite set of state-observable tests.
Because this conversion falls completely into the classical post-processing side, it preserves the query architecture. Therefore, our CSEU-based protocol yields a parallel, non-adaptive unitary tomography protocol with optimal query efficiency.
In a nutshell, our result closes a fundamental gap between parallel and sequential protocols for optimal unitary tomography \cite{Haah_2023, grewal2026efficientlearningstructuredquantum}, and shows that CSEU is a reusable primitive rather than merely a weaker form of tomography.

Beyond unitary tomography, this optimal shadow protocol pushes the performance limits of several other central problems in quantum learning. 
It yields nearly time-optimal learning of completely general Hamiltonians from real-time dynamics, without assuming locality, sparsity, or any ansatz structure. 
Besides, it implies an optimal protocol for boundary-regime quantum channel tomography, and gives near-optimal learning of the Pauli transfer matrix of unitary channels in the high-precision regime. 
Furthermore, our protocol improves efficiency for inverse-free amplitude estimation, pure-state property estimation, and learning shallow quantum circuits in high-precision regimes. 
Together, these results establish our CSEU framework as a central primitive for high-precision quantum learning. 
By combining optimal query complexity, parallel implementability, and broad applicability, this framework provides a natural foundation for learning intricate quantum dynamics in regimes where both query efficiency and parallel experimental architecture matter.

\section{Setup and notations}\label{sec:query_model_notation}

Throughout the paper, we use standard asymptotic notations from \cite{concrete_math}: the symbols $\mathcal O(\cdot)$, $\Omega(\cdot)$, $\Theta(\cdot)$, $o(\cdot)$, and $\omega(\cdot)$ hide universal constants independent of the relevant parameters, such as $d$ and $\eps$. 
The notations $\widetilde{\mathcal O}(\cdot)$, $\widetilde{\Omega}(\cdot)$, and $\widetilde{\Theta}(\cdot)$ further suppress polylogarithmic factors in relevant parameters. 
When we say that a protocol is \emph{near-optimal}, we mean that its query or time complexity matches the corresponding lower bound up to such polylogarithmic factors, in the stated parameter regime. 
For a unitary operator $U$, we write the calligraphic $\mathcal U$ for the associated unitary channel,
$\mathcal U(\cdot):=U(\cdot)U^\dagger$.
For a Hilbert space $\mathcal{H}$,  we denote by  
$\mathcal{L}(\mathcal{H})$ the set of linear operators on $\mathcal{H}$. 
For a pure state vector $\ket{\psi}\in\caH$, we use the symbol $\psi$ to denote the rank-one projector
$\psi:=\ket{\psi}\!\bra{\psi}$. 
For $0<\eta<1$, a set $\covering=\{\ket{\psi_j}\}_j$ of pure states on $\caH$ is called a pure-state $\eta$-covering net on $\caH$ if, for every pure state $\ket{\phi}\in \caH$, there exists some $\ket{\psi_j}\in\covering$ such that
$\frac12\|\phi-\psi_j\|_1\le \eta$. 
For a quantum channel $\mathcal E:\mathcal L(\mathcal H_A)\to\mathcal L(\mathcal H_B)$, a Stinespring dilation with ancilla space $\mathcal H_E$ is an isometry
$V:\mathcal H_A\to\mathcal H_B\otimes\mathcal H_E$
such that
$\mathcal E(\rho)=\Tr_E (V\rho V^\dagger)$
for all input states $\rho$. The dimension of $\mathcal H_E$ is called the ancilla dimension. We use double-ket notation, $\Ket{X}:=\sum_{i,j}\braket{i|X|j}\ket{i}\ket{j}$, and boldface symbols (\eg, $\mathbf{p}$, $\mathbf{q}$) to denote vectors.

\begin{figure}[tbp!]
    \centering
    \includegraphics[width=0.85\linewidth]{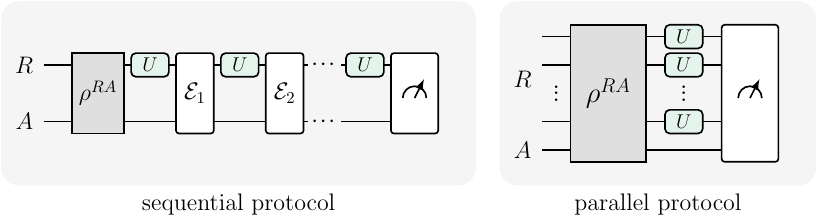}
    \caption{Sequential and parallel query protocols for learning an unknown unitary $U$.  In a sequential protocol, different uses of $U$ may be interleaved with intermediate quantum operations $\mathcal E_1,\mathcal E_2,\ldots$ assisted by ancillas. In a parallel protocol, all uses of $U$ appear in a single non-adaptive layer.  The initial state $\rho^{RA}$ may be entangled between the query register $R$ and an arbitrarily large ancillary register $A$. Any protocol is concluded with a joint measurement on all output systems and ancillas, followed by classical post-processing.}
    \label{fig:parallel_vs_sequential}
\end{figure}

We also distinguish between parallel and sequential query protocols for an unknown unitary, as illustrated in Figure~\ref{fig:parallel_vs_sequential}.  By a \emph{parallel} protocol, we mean a protocol whose oracle-query stage uses all copies of the unknown unitary $U$ in a single non-adaptive layer, possibly on an arbitrary entangled probe state with arbitrary ancillas, followed by an arbitrary final joint measurement and classical post-processing.  In contrast, a \emph{sequential} protocol may interleave different uses of $U$ with intermediate quantum channels, which can encompass intermediate measurements, classical feedback, adding or discarding ancillary systems, and other adaptive operations.  Formally, the class of parallel protocols is contained in the class of sequential protocols: a sequential circuit can simulate a parallel one by using intermediate channels, such as SWAP operations with ancillas, merely to route different subsystems through the oracle, without introducing any adaptivity. Nevertheless, we distinguish between the two architectures because a genuinely parallel protocol has oracle-query depth one and is therefore more favorable for reducing coherent runtime and experimental depth.  When we say that a protocol is \emph{parallel and non-adaptive}, we refer to its oracle-query stage; the subsequent classical post-processing may still depend on the collected classical data.

\section{Optimal classical shadow estimation of unitary channels}\label{sec:main_results}
We now formalize the classical shadow estimation task for unitary dynamics (CSEU). 
The definition separates $(a)$ the data-acquisition stage, where the unknown unitary is queried, from $(b)$ the later prediction stage, where arbitrary state-observable requests are answered using only the stored classical data.

\begin{problem}[CSEU]\label{prob:CSEU}
Consider a $d$-dimensional quantum system. Let $1\leq \B\leq d$, $0<\eps<1$, and 
\[
\obs(\B) = \left\{ O \in \mathbb{C}^{d \times d}: O = O^\dagger, \|O\|_{\infty} = 1, \tr(O_0^2)\leq \B \right\}
\]
be a set of bounded observables, where $O_0=O-\Tr(O) \cdot \openone/d$ denotes the traceless part of $O$. 
Let $U\in \U(d)$ be an unknown unitary accessible only through black-box oracle queries. 
The goal of CSEU is to output some classical data $\mathsf{CS}(U)\in\{0,1\}^*$ (called the \emph{classical shadow}) together with a deterministic prediction function
\[
f:\{0,1\}^*\times \C^{d \times d} \times \C^{d \times d} \rightarrow \mathbb{R}, 
\]
such that for any classically specified quantum state $\rho\in \C^{d \times d}$ and observable $O\in \obs(\B)$, 
\begin{align}\label{eq:desired_accuracy}
\Pr\left[
\bigl|f \bigl(\mathsf{CS}(U),\rho,O\bigr)-\Tr[O \cdot U\rho U^\dag]\bigr|
\le \eps
\right]
\ge \frac{2}{3}.
\end{align}
Here, the probability is taken over the randomness in the generation of $\mathsf{CS}(U)$.
\end{problem}

The condition $\|O\|_\infty=1$ in Problem~\ref{prob:CSEU} is imposed only for convenience; a general nonzero observable $O$ can be readily handled by applying the protocol to $O/\|O\|_\infty$ and rescaling the final estimate accordingly. 

\begin{remark}[Predicting many properties]
\label{remark:Mproperties}
The CSEU guarantee immediately extends from a single prediction request to a finite batch of requests. 
Let $0<\delta<1$ and let
$(\rho_1,O_1),\ldots,(\rho_M,O_M)$ be $M$ classically specified state-observable pairs with $O_\ell\in\obs(\B)$. 
If a protocol solves Problem~\ref{prob:CSEU} using $K$ oracle queries, then by repeating the protocol independently
$\mathcal O(\log(M/\delta))$ times and taking coordinate-wise medians, one can estimate all quantities
$\Tr[ O_{\ell} \cdot U \rho_{\ell} U^\dag]$
to additive error $\eps$ with overall failure probability at most $\delta$, using $\mathcal O\left(K\log(M/\delta)\right)$ queries (see Section~\ref{sec:mean_Amplification} for more details). 
Thus, as in standard shadow protocols for quantum states \cite{Huang_2020}, the query overhead for predicting many properties of a unitary channel is logarithmic in the number of requested properties.
\end{remark}

Several approaches have been developed to solve the CSEU task in Problem~\ref{prob:CSEU}. 
One natural approach is to first perform full tomography of the unknown unitary $U$ and then use the resulting estimate $\widehat{U}$ to answer prediction requests.
This corresponds to the case $\mathsf{CS}(U)= \widehat{U}$ and $f \bigl(\mathsf{CS}(U),\rho,O\bigr)= \tr\bigl( O\widehat{U}\rho\widehat{U}^\dag\bigr)$. 
For example, the query-optimal sequential tomography protocol of \cite{Haah_2023} produces an $\eps$-accurate estimate $\widehat U$ in diamond norm using $\bigO(d^2\eps^{-1})$ queries to $U$. 
This approach achieves Heisenberg scaling in $\eps$, but it incurs a quadratic dependence on the dimension $d$; indeed, $\Omega(d^2)$ queries are necessary for full unitary tomography \cite{Haah_2023}. 
The works that initiated the study of CSEU \cite{Kunjummen_2023, Levy_2024} reduce process shadow estimation to state shadow estimation through the Choi-Jamiołkowski isomorphism \cite{CHOI1975285, Jamiokowski1972}, leading to a query complexity of $\bigO(d^2\B\eps^{-2})$. 
Subsequently, Grier, Pashayan, and Shaeffer \cite{Grier_2024} developed a sample-optimal protocol for shadow estimation of pure states, which yields an $\bigO(d^2\eps^{-2})$ query complexity when adapted to CSEU \cite{Li_2025}. 
More recently, Li, Yi, Zhou, and Zhu \cite{Li_2025} achieved the best-known query complexity prior to this work, namely $\bigO(d\eps^{-2}+d\sqrt{\B}\eps^{-1})$. 
Their protocol reduces the dimension dependence to linear, but it still does not achieve Heisenberg scaling in the target precision $\eps$. 
Thus, existing approaches either have superlinear dependence on the system dimension or fail to achieve Heisenberg scaling. 
Table~\ref{tab:compare} summarizes these bounds and highlights the improvement achieved in this work.

Our work closes this gap by designing a two-stage learning protocol that builds upon the canonical framework of Bisio, Chiribella, D'Ariano, Facchini, and Perinotti for optimal average-case learning of unitary transformations \cite{Bisio_2010}.  Within this framework, we make carefully tailored choices of probe states and covariant measurements, adapted to different parameter regimes, to extract classical data from parallel queries to the unknown unitary and use the data to construct classical estimators for the target expectation $\Tr[O\cdot U\rho U^\dagger]$.

Our main results are stated as follows.

\renewcommand{\thetable}{\arabic{table}}
\begin{table*}[t!]
\centering
\renewcommand{\arraystretch}{1.35}
\setlength{\tabcolsep}{15pt}
\begin{tabular}{c|l|l}
\hline\hline
& \multicolumn{1}{c|}{\textbf{Upper bounds}} & \multicolumn{1}{c}{\textbf{Lower bounds}} \\[0.3ex]
\hline 

\multirow{6.5}{*}{\textbf{Previous works}} 
&\rule{-3.3pt}{4.3ex}  $\bigo{\dfrac{d^2 \B}{\eps^2}}$ \cite{Kunjummen_2023, Levy_2024} &\multirow{6.5}{*}{$\Omega\left(\dfrac{d}{\log d}\right)$ \cite{Li_2025}}  \\[2.2ex]

& $\bigo{\dfrac{d^2}{\eps^2}}$ \cite{Grier_2024}
& \\[2.2ex]

& $\bigo{\dfrac{d^2}{\eps}}$ \cite{Haah_2023}
& \\[2.2ex]

& $\bigo{\dfrac{d}{\eps^2} + \dfrac{d\sqrt{\B}}{\eps}}$ \cite{Li_2025}
&  \\[2ex]

\hline \rowcolor{hkugreen!20}
\textbf{This work}
&\rule{-3.3pt}{4.3ex} $\bigo{\dfrac{d \sqrt{\B}}{\eps} }$
& $\Omega\left(\dfrac{d}{\eps}\right)$ \\[2ex]
\hline\hline
\end{tabular}
\caption{\label{tab:compare}
Comparison of bounds on the query complexity of the CSEU problem.
Here we compare our results on the CSEU problem with the best-known previous results (summarised from \cite[Table I]{Li_2025}).
Both our upper and lower bounds improve upon all previous results.
For the first time, we close the gap between the upper and lower bounds when $\B$ is a constant.}
\end{table*}

\begin{theorem}[Upper bound]
\label{thm:cseu_upper_bound_informal}
There exists a protocol that solves Problem~\ref{prob:CSEU} using $\mathcal O(d\sqrt{\B}\,\eps^{-1})$ parallel queries to the unknown unitary $U$. The resulting classical shadow data can be stored using $\poly(d)$ complex numbers, and for any given request $(\rho,O)$, the prediction function $f$ can be evaluated in $\poly(d)$ classical time, up to standard finite-precision overheads\footnote{Here and throughout, standard finite-precision overheads refer to the additional bit/time complexity required to represent real or complex numbers and to perform arithmetic operations with sufficient precision to preserve the stated error guarantees.}.
\end{theorem}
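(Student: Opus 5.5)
The plan is to build a two-stage parallel protocol inside the Bisio--Chiribella--D'Ariano--Facchini--Perinotti covariant framework. Stage one produces a coarse estimate; stage two refines it with Heisenberg-limited precision. The final estimator targets $\Tr[O\cdot U\rho U^\dagger]$ directly rather than the full unitary. By linearity in $\rho$ and the definition of $\obs(\B)$, I will write $\Tr[O\cdot U\rho U^\dagger]=\Tr(O)/d+\Tr[O_0 U\rho U^\dagger]$. It then suffices to estimate $\Tr[O_0U\rho U^\dagger]$ to error $\eps$, where $\|O_0\|_2\le\sqrt{\B}$ and $\|O_0\|_\infty\le 2$.

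\textbf{Stage one (coarse learning).} Use $N_1=\mathcal O(d\sqrt{\B}\eps^{-1})$ parallel queries split into batches. Each batch acts on a probe state $\sum_\lambda c_\lambda \Ket{P_\lambda}/\sqrt{d_\lambda}$ supported on the Schur--Weyl isotypic components of $(\C^d)^{\otimes n}\otimes(\C^d)^{\otimes n}$. Here $n\approx d$, and the weights $c_\lambda$ are chosen as in the optimal estimation-of-$\U(d)$ literature. Measuring with the covariant POVM $\{\,\mathrm{d}V\,(V^{\otimes n}\otimes I)\ket{\eta}\!\bra{\eta}(V^{\otimes n}\otimes I)^\dagger\}$ yields a Haar-covariant estimate $\widehat V$. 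The output distribution depends only on $\widehat V^\dagger U$. By Schur's lemma, its moments $\E[\widehat V^{\otimes k}\otimes \overline{\widehat V}^{\otimes k}]$ are then $U^{\otimes k}\otimes\overline U^{\otimes k}$ times known irrep-wise scalars $\gamma_\mu$ fixed by $c_\lambda$. I will choose the probe so that the fidelity $|\tr(\widehat V^\dagger U)|^2/d^2$ is $1-\mathcal O(d^2/n^2)$, which gives an estimate with $\|\widehat V - e^{i\phi}U\|_\infty$ bounded by a small constant with constant probability.

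\textbf{Stage two (unbiased refinement).} Inverting the known scalars on the adjoint irrep gives an unbiased estimator of $U\rho U^\dagger$ of the form $\widehat X=\gamma^{-1}\left(\widehat V\rho\widehat V^\dagger-\Tr(\rho)I/d\right)+I/d$ restricted to the traceless part. The variance of $\Tr[O_0\widehat X]$ is controlled by the second-moment operator, which by Schur--Weyl decomposes over the irreps in $\mathrm{ad}^{\otimes 2}$. The plan is to bound this variance by $\mathcal O(\|O_0\|_2^2\, d^2/n^2)$. This is the analogue of the $d^2\eps^{-1}$ Heisenberg bound, but it gains because only one matrix element of the adjoint action is needed. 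Taking $n\sim d\sqrt{\B}/\eps$ queries in a single coherent batch then makes the standard deviation at most $\eps/3$. Chebyshev gives success probability $2/3$, and the median-of-means of Remark~\ref{remark:Mproperties} boosts it. If a single batch does not give Heisenberg scaling directly because of heavy tails of the covariant estimator, I will use stage one's coarse estimate $\widehat V_0$ to recentre. Stage two is then run on the conjugated unitary $\widehat V_0^\dagger U$, which is close to identity. Its error is estimated locally, where the probe concentrated near $\lambda$ with large first row behaves like a phase-estimation (NOON-type) state with $1/n$ precision. All queries in both stages can be executed in parallel because the recentring is classical post-processing: the stage-two probe is fixed, and its covariant outcome is combined with $\widehat V_0$ afterwards.

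\textbf{Complexity and the main obstacle.} The shadow stores $\widehat V$ (or the pair $\widehat V_0$, $\widehat V_1$) as $d\times d$ matrices, which is $\poly(d)$ numbers. The prediction $f$ computes $\Tr[O\widehat V\rho\widehat V^\dagger]$ with the known rescaling in $\poly(d)$ time, which gives the storage and time claims of Theorem~\ref{thm:cseu_upper_bound_informal}. I expect the main obstacle to be the variance bound in stage two. It requires computing the second-moment scalars for the chosen probe across all irreps of $\mathrm{ad}\otimes\mathrm{ad}$ and showing that the contributions combine into $\|O_0\|_2^2 d^2/n^2$ rather than $\|O_0\|_2^2 d^4/n^2$. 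I will first try to reduce the problem to the single-parameter estimation of the "error generator" $H$ in $\widehat V_0^\dagger U=e^{iH}$. The error $\Tr[O_0[H,\rho]]$ is linear in $H$, and its variance under a covariant Heisenberg-limited estimator of $H$ should be $\mathcal O(\|[\rho,O_0]\|_2^2/n^2)\cdot d^2$, using isotropy of the error distribution.
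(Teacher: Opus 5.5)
Your stage two is essentially the paper's high-precision branch: a single covariant snapshot, the linear estimator $\p^{-1}\Tr[O_0\widehat U\rho_0\widehat U^\dagger]$, and second moments decomposed over the irreps of $\mathrm{ad}^{\otimes 2}$. The step you flag as the main obstacle is a genuine gap, because $\mathcal O(\|O_0\|_2^2d^2/n^2)$ is not what this estimator delivers.

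By covariance, $\widehat U=e^{\mathrm i\theta}e^{\mathrm iH}U$ with $H$ traceless and conjugation-invariant in law. Hence $\E[H\otimes H]=\beta(\F-\openone/d)$. Expanding to second order gives $1-\p\approx\beta d$ and
\[
\Var\bigl[\p^{-1}\Tr(O_0\widehat U\rho_0\widehat U^\dagger)\bigr]\approx\beta\,\bigl\|[U\rho U^\dagger,O_0]\bigr\|_F^2\approx\frac{1-\p}{d}\,\bigl\|[U\rho U^\dagger,O_0]\bigr\|_F^2 .
\]
For example, take $U|\phi\rangle=(|a\rangle+|b\rangle)/\sqrt2$ and $O=|a\rangle\langle a|$, so $\B=1$. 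To first order the estimate fluctuates as $\pm\Im H_{ab}$, with variance $\beta/2$.

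Your isotropy heuristic therefore has the right form, but the fidelity you feed it, $1-\mathcal O(d^2/n^2)$, is off by $d^2$. The optimal entanglement infidelity of $n$-query unitary estimation is $\Theta(d^4/n^2)$, and the sine-power probe of Construction~\ref{construction:YRC_programming_scheme} gives $1-\p\approx 3\pi^2d^4/n^2$. So the variance is of order $d^3/n^2$ already for $\B=1$. That forces $n=\Theta(d^{3/2}/\eps)$, which coincides with $d\sqrt\B/\eps$ only when $\B=\Theta(d)$. When $n\lesssim d^2$ the snapshot is not even concentrated near $U$, so no such expansion is available at all.

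Stage one does not rescue this. A batch of $n\approx d$ queries has entanglement fidelity at most about $(n+1)/d^2$, so it yields no constant-accuracy $\widehat V_0$. Moreover, the law of $U^\dagger\widehat U$ does not depend on $U$, so classically recentring a fixed-probe outcome cannot sharpen it.

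The paper argues differently, by regime.
\begin{itemize}
\item For $\eps\ge d^{-1}\sqrt{\P/\B}$ it takes $L$ independent $s=d$ Plancherel snapshots. It uses the quadratic U-statistic $\widehat\Lambda_{ij}$ built on $\Phi_{\mathcal U}^2=d\,\Phi_{\mathcal U}$ and claims a leading variance term of order $1/L^2$. Your proposal has no counterpart to this, and averaging linear estimates over batches only gains $1/L$.
\item For $\eps\le d^{-1}\sqrt{\P/\B}$ it uses exactly your estimator with the sine-power probe. It asserts $\Var=\mathcal O(\B d^2/s^2)$ in Lemma~\ref{lemma:large_s_regime_estimator_X_variance}.
\end{itemize}

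You cannot close your gap by citing that lemma, however. Its evaluation of $\mathcal E_{\bfq,s}(\sigma_0\otimes\sigma_0)$ assigns the component $\frac{1}{2d}(\sigma_0^2\otimes\openone+\openone\otimes\sigma_0^2)\F-\cdots$ to $\nu_3$, with coefficient $c_{\nu_3}$. But $\sigma_0\otimes\sigma_0$ commutes with $\F$, so it has no $\nu_3$ component. That term is the adjoint ($\nu_1$) part and should carry $c_{\nu_1}$, with $1/(d\pm2)$ in place of $1/(2d)$ on the $P_\pm$ blocks.

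After this correction, the $N^{-2}$ coefficient of $\Tr[O^2\sigma_0^2]$ no longer cancels. Using the paper's own $\mathscr L_2(\nu)$, it equals $\approx 2(1-\p)/d$, which reproduces the first-order formula above.

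The moderate branch also deserves a re-check before you borrow it. For $s\le d$ the Plancherel outcome density is exactly $|\Tr(U^\dagger\widehat U)|^{2s}/s!$. From this density, the first-order (Hoeffding) term of $\Var[\widehat Z(\widehat{\mathsf{\Lambda}},L)]$ comes out of order $d\,\Tr[\rho_0\widetilde O^{2}\rho_0]/(sL)$, with $\widetilde O=U^\dagger O U$. That points to an $\Omega(d/\eps^2)$ cost rather than the claimed $\mathcal O(\eps^{-2})$ term.
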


Theorem~\ref{thm:cseu_upper_bound_informal} gives the first CSEU protocol that simultaneously achieves linear dependence on the system dimension $d$ and Heisenberg-limited dependence on the target precision $\eps$. See \autoref{tab:compare} for a comparison with existing protocols. 
In particular, for low-rank or small effective-size observables with $\B=\bigo{1}$, the query complexity reduces to
$\mathcal O \left(d \eps^{-1}\right)$.
As shown by the following theorem, this scaling is optimal in both $d$ and $\eps$.

\begin{theorem}[Lower bound]
\label{thm:cseu_lower_bound}
Any protocol that solves Problem~\ref{prob:CSEU}
through black-box queries to 
$U$, $U^\dagger$, $\mathrm{c}U = |0\rangle\langle 0| \otimes \mathbb{I} + |1\rangle\langle 1| \otimes U$,
and $\mathrm{c}U^\dagger = |0\rangle\langle 0| \otimes \mathbb{I} + |1\rangle\langle 1| \otimes U^\dagger$ must use ${\Omega}\left(d\,\eps^{-1}\right)$ queries.
\end{theorem}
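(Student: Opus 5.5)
Our plan is to prove the $\Omega(d\eps^{-1})$ bound by combining two separate lower bounds, $\Omega(\eps^{-1})$ and $\Omega(d)$, and then boosting them to the product through a packing-plus-phase-estimation reduction. The cleanest route is a single hard family: unitaries whose action on a fixed state encodes both a $\log$-size discrete label (requiring $\Omega(d)$ queries to identify) and a small phase of size $\Theta(\eps)$ (requiring $\Omega(\eps^{-1})$ queries per bit). A CSEU solver with constant-rank observables ($\B=O(1)$) must then solve this discrimination task.

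\textbf{Hard family.} For $x$ in a set of $2^{\Omega(d)}$ strings (a code in $\{0,1\}^{d}$ with constant relative distance), let
\[
U_x=\sum_{j=1}^{d} e^{\I \theta x_j}\ket{j}\!\bra{j},
\qquad \theta = c\,\eps\ (\text{small constant } c)\ \text{scaled appropriately}.
\]
Phase differences alone are invisible with one input basis state, so probe with $\rho=\psi$ for $\ket{\psi}=(\ket{j}+\ket{k})/\sqrt2$ and $O=\psi'$ a rank-one projector. Then $\Tr[O\,U_x\rho U_x^\dagger]$ differs by $\Theta(\theta^2)$ or $\Theta(\theta)$ depending on $x_j\oplus x_k$. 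Choosing $O$ as the rotated state $(\ket{j}+\I\ket{k})/\sqrt2$ makes the dependence linear in $\theta$. Hence an $\eps$-accurate CSEU shadow, with failure probability amplified to $1/\poly(d)$ via Remark~\ref{remark:Mproperties} at $O(\log d)$ overhead, lets us recover $x_j\oplus x_k$ for all pairs in a spanning set, and thus $x$ up to a global flip.

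The $\log d$ loss should be avoided. Instead, use the shadow only for $O(d)$ pairs and require that most bits are correct. Fano's inequality with the code distance then gives mutual information $\Omega(d)$ between $x$ and the outcome.

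\textbf{Information bound.} The key step is to upper-bound the information obtainable from $T$ queries to $U_x$, $U_x^\dagger$, $\mathrm{c}U_x$, $\mathrm{c}U_x^\dagger$ by $O(T\theta)$ bits. This is done with a hybrid/phase argument. Every query, controlled or inverse, applies a diagonal unitary whose eigenphases lie in $\{0,\pm\theta\}$. The final state of any sequential algorithm is therefore a superposition over paths, and its dependence on $x$ is through the total accumulated phase, a sum of at most $T$ increments of size $\theta$. Perturbing $x$ to $x'$ changes the final state by at most $T\theta$ in norm, by the hybrid argument: each query contributes at most $\|U_x-U_{x'}\|_\infty\le\theta$. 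This alone only gives a two-point bound $T\gtrsim 1/\theta$. For the $d$ factor we need a capacity-type bound: $T$ queries of a diagonal unitary with phases in a $\theta$-window convey at most $O(T\theta+\ldots)$ bits about $x$.

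The route here is to view the oracle as $\exp(\I\theta X)$ with $X=\diag(x)$ and apply the Heisenberg-limited information bound for unitary families generated by a Hamiltonian of bounded norm. The Holevo information after $T$ uses is at most $O(T\theta\,\log(\cdot))$. To remove the log, use a quantum Fisher information / Bures-angle argument averaged over a random code: the averaged Bures distance per query is $\theta\cdot\sqrt{\text{variance}}$. This should give $I(x;\text{out})=O(T\theta)$, hence $T=\Omega(d/\theta)=\Omega(d/\eps)$.

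\textbf{Main obstacle.} The hard step is obtaining $I=O(T\theta)$ without log factors in the presence of $\mathrm{c}U$ and inverses. If that fails, we would fall back on an adversary-method argument for "phase-oracle search," or on reduction to known $\Omega(d/\eps)$ bounds for unitary/phase estimation in $d$ dimensions (e.g.\ the Heisenberg lower bound of \cite{Haah_2023} adapted to diagonal unitaries, whose own proof handles controlled and inverse access).
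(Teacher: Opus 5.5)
Your route (a direct hard family of diagonal phase oracles plus an information bound) differs from the paper's, and its central step is a genuine gap. The only bound you actually have is the small-incremental-entangling/Holevo estimate (as in \lref{lem:SIE} and \lref{lem:small-incremental-holevo}). It charges $O(\theta\log d)$ of Holevo information per query, controlled and inverse queries included, and so gives only $T=\Omega\bigl(d/(\eps\log d)\bigr)$.

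The log-free bound $I(x;\mathrm{out})=O(T\theta)$ that you then need is false as a capacity statement for $\theta$-phase oracles. Index $j\in\{0,1\}^{\log_2 d}$, take $x_j=s\cdot j \bmod 2$ (the Hadamard code), and choose $\theta$ with $\pi/\theta$ an integer. Then $T=\pi/\theta$ sequential queries implement $\diag\bigl((-1)^{x_j}\bigr)$, and Bernstein--Vazirani recovers $s$, i.e.\ $\log d$ bits. So $I=\Theta(T\theta\log d)$ at that $T$. Consequently any argument that bounds the Holevo-information gain per query independently of the ensemble must lose the $\log d$; a valid proof has to exploit the regime $T\theta\gtrsim d$ globally.

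The tools you propose for removing the log do not do this.
\begin{itemize}
\item Pairwise hybrid/Bures-angle bounds give the two-point $1/\theta$. They do not by themselves control the mutual information with a $2^{\Omega(d)}$-element family.
\item Quantum Fisher information is provably too weak. Even for sequential protocols using $\mathrm{c}U$ and $U^\dagger$, the phases $\phi_j=\theta x_j$ satisfy $\sum_j F_{jj}\le 4T^2$. This constraint is saturable and consistent with $T\sim\sqrt d/\theta$, so Cram\'er--Rao/van Trees reasoning caps out at $\Omega(\sqrt d/\eps)$.
\end{itemize}
Completing your route would require a tight lower bound for fractional (continuous-time) phase queries for codeword identification, e.g.\ an adversary bound in the Hamiltonian-oracle model. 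That is a substantial result you have not supplied. Separately, to read off $x_k-x_j$ from an $\eps$-accurate estimate of $\tfrac12+\tfrac12\sin\bigl(\theta(x_k-x_j)\bigr)$, you need $\theta$ to be a \emph{large} constant multiple of $\eps$, not a small one.

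Your fallback of invoking \cite{Haah_2023} is the right instinct, and it is what the paper does. But no adaptation to diagonal unitaries and no new hard instance is needed. By \pref{prop:TomoEfficiency}, a CSEU protocol with $\B=1$ using $K(d,\eps)$ queries yields a tomography protocol using $O(d)\,K(d,\eps)$ queries with diamond error $5\eps$:
\begin{itemize}
\item Fix a $1/40$ pure-state covering net $\covering$ with $|\covering|=e^{O(d)}$.
\item Estimate all $|\covering|^2$ rank-one quantities $\Tr[\psi_a U\psi_b U^\dagger]$ to accuracy $\eps$, using the median trick of Remark~\ref{remark:Mproperties} at overhead $O(\log|\covering|^2)=O(d)$.
\item Output any unitary consistent with these estimates. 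Consistency forces $5\eps$-closeness, because the diamond distance of two unitary channels is witnessed, up to a factor $2/5$, by a pair of net states (\lref{lemma:covering_net_saturation}).
\end{itemize}
The reduction only repeats the CSEU protocol and post-processes classically, so it uses exactly the same oracles $U,U^\dagger,\mathrm{c}U,\mathrm{c}U^\dagger$. The Haah--Kothari--O'Donnell--Tang bound (\lref{lemma:unitary_tomo_LB}) then gives $O(d)\,K(d,\eps)\ge\Omega(d^2/\eps)$, i.e.\ $K=\Omega(d/\eps)$. All the Heisenberg-limit difficulty under inverse and controlled access is inherited from the cited tomography bound. The $O(d)$ cost of the covering-net reduction is exactly what converts $d^2$ into $d$.
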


\tref{thm:cseu_lower_bound} has two important implications. 
First, it shows that the Heisenberg-scaling dependence $\eps^{-1}$ achieved by Theorem~\ref{thm:cseu_upper_bound_informal} is unavoidable. 
In particular, when $\B$ is a constant, our upper and lower bounds match up to constant factors, thereby fully characterizing the query complexity of CSEU in both the dimension $d$ and the accuracy $\eps$. 
Second, the lower bound holds even under a substantially stronger oracle model: the learner is allowed to query not only $U$, but also its inverse $U^\dagger$, the controlled counterpart $\mathrm cU$, and its inverse $\mathrm cU^\dagger$. 
Thus, access to inverse or controlled versions of the unknown unitary cannot improve the query complexity of CSEU. 
This establishes the optimality of our CSEU protocol even against protocols with stronger coherent control over the unknown unitary.

\section{Applications}
\label{sec:applications_cseu}

Our query-optimal CSEU protocol can be invoked as a ubiquitous subroutine for quantum learning theory. In this section, we review its application in several important quantum learning tasks, including two of the most fundamental ones, (unitary) channel tomography and Hamiltonian learning. Remarkably, our CSEU-based protocol gives a query-optimal unitary channel tomography protocol and a nearly time-optimal Hamiltonian learning protocol that uses parallel and non-adaptive queries to the unitary. The near optimality of Hamiltonian learning is due to a new lower bound for total evolution time that we derive. \autoref{tab:performance_of_applications_with_cseu} highlights the performance of our CSEU-based protocols in corresponding tasks that are unconditionally optimal, near-optimal in the high-precision regime, or currently best-known, in comparison to prior work.

\DefTblrTemplate{conthead-text}{default}{}

% Clear all top headers so captions don't appear at the top
\DefTblrTemplate{firsthead}{default}{}
\DefTblrTemplate{middlehead}{default}{}
\DefTblrTemplate{lasthead}{default}{}

% Put the caption in the bottom footers (without the "Continued on next page" text)
\DefTblrTemplate{firstfoot}{default}{
  \UseTblrTemplate{caption}{default}
}
\DefTblrTemplate{middlefoot}{default}{
  \UseTblrTemplate{capcont}{default}
}
\DefTblrTemplate{lastfoot}{default}{
  \UseTblrTemplate{capcont}{default} % Change 'capcont' to 'caption' if you don't want the "(Continued...)" text on the very last page
}

\begin{longtblr}[
  caption = {Highlight of the performance of the query-optimal CSEU when applied to selected quantum learning tasks discussed in Section~\ref{sec:applications_cseu}, compared with state-of-the-art results from prior work. For Hamiltonian learning tasks, the figure of merit is the total evolution time; for other tasks, it is the number of queries to the unknown (unitary) channel. 
  For the two tomography tasks, \(\eps\) denotes the error in diamond norm; for the task of Hamiltonian learning in normalized
Frobenius norm (NFN), \(\eps\) denotes the error in NFN; and for the remaining three tasks, \(\eps\) denotes the additive estimation error. The symbols \(\Theta\) and \(\widetilde{\Theta}\) indicate that the achieved complexity is optimal or near-optimal in the stated regime. 
  %For all listed tasks, the query complexity of our protocols is optimal or near-optimal (in the high-precision regime $\eps = o(d^{-1})$), except for the task marked by \(^{\bm{\ddagger}}\), for which our protocol is best-known.  
  The green entries indicate that our protocols are optimal or near-optimal in the full parameter regime; the light green entries indicate near-optimality in the high-precision regime $\eps = \mathcal O(d^{-1})$; and the blue entry indicates the best-known bound. 
  %\red{$^{\bm{\ddagger}}$Best-known result. For all other listed tasks, the query complexity of our protocol is proven to be optimal or near-optimal in $d$ and $\eps$.}
  % The background colors {\setlength{\fboxsep}{0pt}\fcolorbox{black}{hkugreen!35}{\rule{0pt}{0.6em}\rule{1em}{0pt}}} and {\setlength{\fboxsep}{0pt}\fcolorbox{black}{TSUYUKUSA!25}{\rule{0pt}{0.6em}\rule{1em}{0pt}}} indicate that the
  % Here we only present the tasks that our CSEU-based protocol is unconditionally optimal, near-optimal, or currently best-known.
  },
  label = {tab:performance_of_applications_with_cseu}
]{
  width = 0.98\linewidth,
  colspec = {
    X[2.3,c,m]   % Task
    X[1.3,c,m]   % SOTA - Paper
    X[2.5,c,m]   % SOTA - Upper bounds
    X[2.5,c,m]   % This work
  },
  hlines = {black}, 
  vlines = {black}, 
  row{1,2} = {font=\bfseries}, 
  rowhead = 2 
}

% --- HEADERS ---
\SetCell[r=2]{c} Task & \SetCell[c=2]{c} State-of-the-art & & \SetCell[r=2]{c} This work \\
 & Paper & Complexity & \\

% =========================================================
% TASK 1: Unitary channel tomography
% =========================================================
\SetCell[r=3]{c, bg=gray!15} {\textbf{Unitary channel tomography in diamond distance} \\ (Section \ref{subsec:unitary_channel_tomography})} & 
\cite{Haah_2023, grewal2026efficientlearningstructuredquantum} & ${\Theta}\left(\dfrac{d^2}{\eps}\right)$ (sequential) & \SetCell[r=3]{c, bg=hkugreen!25} $\Theta\left(\dfrac{d^2}{\eps}\right)$ 
\(\left(\begin{tabular}{@{}l@{}}
parallel
\end{tabular}\right)\) \\
\cline[fg=gray!30]{2-3}

& 
\cite{Haah_2023} & $\bigo{\dfrac{d^2}{\eps^2}}$ (parallel) &  \\
\cline[fg=gray!30]{2-3}

& 
\cite{Yang_2020, Haah_2023} & $\bigo{\dfrac{d^{2.5}}{\eps}}$ (parallel) &  \\

% =========================================================
% TASK 2: Boundary-regime channel tomography
% =========================================================
\SetCell[r=1]{c, bg=gray!15} {\textbf{Boundary-regime ($rd_2 = d_1$) channel tomography} \\ (Section \ref{subsec:boundary_regime_channel_tomography})} & 
\cite{chen_Girardi2026} & $\mathcal O\left(\min\left\{\dfrac{r d_1^{3/2}d_2}{\eps}, \dfrac{r d_1d_2}{\eps^2}\right\}\right)$ & \SetCell[r=1]{c, bg=hkugreen!25} $\Theta\left(\dfrac{rd_1 d_2}{\eps} \right)$ \\
% =========================================================
% TASK 3: Learning Pauli coefficients of Hamiltonians
% =========================================================
\SetCell[r=2]{c, bg=gray!15} {\textbf{Learning Pauli coefficients of Hamiltonians} \\ (Section \ref{subsec:learning_general_hamiltonians})} & 
\cite{HuAnsatzFree2025} & $\widetilde{\bigO}\left( \dfrac{d^4}{\eps} \right)$ & \SetCell[r=2]{c, bg = hkugreen!10} $\widetilde{\Theta}\left( \dfrac{d}{\eps} \right)$ 
% \(\left(\begin{tabular}{@{}l@{}}
% near-optimal at\\
% high precision
% \end{tabular}\right)\) 
\\
\cline[fg=gray!30]{2-3}
& \cite{caro2022learning, zhao2024learning} & $\widetilde{\bigO}\left( \dfrac{\left\|H\right\|_{\infty}^3}{\eps^4} \right)$ & \\
% =========================================================
% TASK 4: Hamiltonian learning in normalized Frobenius norm
% =========================================================
\SetCell[r=2]{c, bg=gray!15} {\textbf{Hamiltonian learning in normalized Frobenius norm} \\ (Section \ref{subsec:learning_general_hamiltonians})} & \cite{HuAnsatzFree2025} & $\widetilde{\bigO}\left( \dfrac{d^5}{\eps} \right)$ & \SetCell[r=2]{c, bg = hkugreen!25} $\widetilde{\Theta}\left( \dfrac{d^2}{\eps} \right)$ \\
\cline[fg=gray!30]{2-3}
& \cite{castaneda2023hamiltonian} & $\widetilde{\bigO}\left( \dfrac{d^2 \left\|H\right\|_{\infty}^2}{\eps^2} \right)$ & \\

\SetCell[r=1]{c, bg=gray!15} {\textbf{Learning entries of Pauli transfer matrix} \\ (Section \ref{subsec:ptm_learning})} & 
\cite{caro2022learning} & $\mathcal O\left(\dfrac{\log d}{\eps^4}\right)$ & \SetCell[r=1]{c, bg=hkugreen!10} {$\widetilde{\Theta}\left(\dfrac{d}{\eps}\right)$ }
\\
% =========================================================
% TASK 6: Inverse-free amplitude estimation
% =========================================================
 \SetCell[r=1]{c, bg=gray!15} {\textbf{Inverse-free amplitude estimation} \\ (Section \ref{subsec:inverse_free_amplitude_estimation})} & \cite{chen2025inverse} & $\bigo{\min\left\{ \dfrac{d^{3/2}}{\eps}, \dfrac{1}{\eps^2} \right\}}$ &  \SetCell[r=1]{c, bg =hkublue!10} {$\bigo{\min\left\{ \dfrac{d \sqrt{r} }{\eps}, \dfrac{1}{\eps^2} \right\}}$} 
\end{longtblr}

\subsection{Query-optimal unitary channel tomography with parallel queries}
\label{subsec:unitary_channel_tomography}

Process tomography for unknown unitary channels is one of the most fundamental tasks in quantum learning theory~\cite{scott2008optimizing, PhysRevA.90.012110, Gutoski_2014, Yang_2020, Nielsen_2021, Haah_2023}. 
It serves as a basic yet significant primitive in learning, verification, and control tasks for quantum dynamics. 
Beyond query complexity, an equally important issue is the \emph{query architecture} and its impact on the overall execution time. While an adaptive protocol does not necessarily demand a longer quantum coherence time—since it can interleave short-depth queries with measurements and state re-preparation—its inherently serial nature and reliance on classical feedback loops severely bottleneck the total experimental runtime. Therefore, it is crucial to determine whether optimal query complexity can be achieved using parallel and non-adaptive access to the unknown unitary as well, as this permits concurrent execution and yields significant time savings.
This sequential-versus-parallel question is a central and fundamental theme in quantum metrology \cite{giovannetti2006quantum,4655455,PhysRevLett.113.250801,PhysRevLett.117.160801,zhou2021asymptotic, Liu_2023,kurdzialek2023using}, since sequential protocols may require maintaining coherence across many adaptive uses of the unknown operation. In contrast, parallel protocols can be implemented faster in shallow-circuit quantum devices, with an increased requirement on circuit size and the capability to generate entanglement.

For unitary tomography, the protocol proposed by Haah, Kothari, O'Donnell, and Tang~\cite{Haah_2023} achieves the optimal query complexity $\mathcal O(d^2\eps^{-1})$. 
However, their protocol is sequential and adaptive: a single experimental round may involve up to $\Theta(\eps^{-1})$ coherent sequential uses of the unknown unitary $U$. 
It is therefore natural to ask whether this sequential coherent depth is intrinsic to query-optimal unitary tomography. 
In this section, we show that it is not. 
Building on our CSEU protocol, we obtain two parallel and non-adaptive tomography protocols. 
The first is exactly query-optimal and follows from a covering-net reduction from CSEU to full unitary tomography, but its classical reconstruction is information-theoretic and inefficient. 
The second replaces this global reconstruction by a local reconstruction around a coarse reference unitary, thereby reducing the classical post-processing time to $\poly(d)$ at the cost of only logarithmic overhead in the number of queries. 
Thus, query-optimal unitary tomography can be achieved without long coherent sequential control, and a computationally efficient parallel variant is available with nearly optimal query complexity. 
We summarize the performance guarantees as follows.

\begin{theorem}[Parallel unitary channel tomography from CSEU]
\label{thm:parallel_unitary_tomography}
Let $U\in\mathsf U(d)$ be an unknown unitary accessible through black-box oracle queries. 
The following two tomography protocols are available: 

\begin{enumerate}
\item There exists a parallel and non-adaptive protocol that outputs a classical description of a unitary $\widehat U$ satisfying
$\Pr\!\big[\|\widehat{\mathcal U}-\mathcal U\|_\diamond\le \eps
\big]\ge 2/3$
using $\mathcal O\left(d^2\eps^{-1}\right)$ queries to $U$. 
This query complexity is optimal.

\item There exists a protocol whose oracle-query stage is parallel and non-adaptive, and which outputs a classical description of a unitary $\widehat U$ satisfying
$\Pr\!\big[\|\widehat{\mathcal U}-\mathcal U\|_\diamond\le \eps
\big] \ge 2/3$ using
$\widetilde{\mathcal O}\left(d^2\eps^{-1}\right)$
queries to $U$. Moreover, its classical time complexity is $\poly(d)$, up to standard finite-precision overheads.
\end{enumerate}
\end{theorem}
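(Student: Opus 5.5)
We prove both parts by reducing tomography to the CSEU protocol of Theorem~\ref{thm:cseu_upper_bound_informal}, applied with rank-one observables so that $\B=\mathcal O(1)$. All queries are spent in the shadow-generation stage, and every subsequent step is classical post-processing. Hence both resulting protocols inherit the parallel, non-adaptive query architecture.

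\textbf{Part 1.}
\begin{enumerate}
\item Fix a pure-state $\eta$-covering net $\covering$ on $\C^d$ with $\eta=c\,\eps/d$, where $c$ is a small constant. Standard volume arguments give $|\covering|\le (3/\eta)^{2d}$, so $\log|\covering|=\mathcal O(d\log(d/\eps))$.
\item For each pair $(\psi_j,\psi_k)\in\covering^2$ we want $\tr[\psi_k\,U\psi_jU^\dagger]$ to additive error $\eps'$. A naive median amplification (Remark~\ref{remark:Mproperties}) over all $M=|\covering|^2$ pairs would cost an extra $\log M=\widetilde{\mathcal O}(d)$ factor. That gives $\widetilde{\mathcal O}(d^2\eps^{-1})$ but not the exact optimal bound.
\item To remove the logarithm, we will not estimate every pair. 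We will use the concentration of the shadow estimator for rank-one pairs directly and require only a weaker, averaged condition: with probability $2/3$, for every candidate $V$ in a finite net of unitaries at scale $\eps$, either $V$ is consistent with the data or $V$ is rejected.
\item Concretely, we reduce diamond distance to an average-case fidelity-type quantity. If $\|\mathcal V-\mathcal U\|_\diamond>\eps$, then $|\tr(V^\dagger U)|/d\le 1-\Omega(\eps^2)$, up to phase. Consequently, for Haar-random $\psi$, $\E\,\tr[\mathcal V(\psi)\mathcal U(\psi)]\le 1-\Omega(\eps^2/d)$, and many input states witness this deficit.
\item We then run a minimum-distance (Scheffé-tournament-style) selection over a unitary net of size $\exp(\mathcal O(d^2\log(1/\eps)))$. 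The tests are random rank-one pairs whose outcome statistics derive from the covariant measurement in the CSEU protocol.
\item The key point, and the main obstacle, is showing that the union bound over this net is absorbed. The shadow for $K=\mathcal O(d^2/\eps)$ queries must have effective error $\eps/d$ per pure-state pair, uniformly. I would try to prove this by viewing the CSEU estimator with $K$ queries as a Heisenberg-limited estimate of the whole Choi vector $\Ket{U}/\sqrt d$. Its error in $\ell_2$ would be $\mathcal O(d/K)$ in infidelity, which at $K=\Theta(d^2/\eps)$ already gives diamond error $\eps$ directly, without a union bound.
\item So the plan is to extract, from the optimal covariant measurement of \cite{Bisio_2010} underlying our CSEU protocol, a direct estimate $\widehat U$ satisfying $1-|\tr(\widehat U^\dagger U)|^2/d^2=\mathcal O((d/K)^2)$ with constant probability. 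This uses only the fidelity analysis already done for Theorem~\ref{thm:cseu_upper_bound_informal}.
\item We then convert fidelity to diamond norm via $\|\widehat{\mathcal U}-\mathcal U\|_\diamond\le 2\sqrt{1-F_{\min}}$. The required passage from average-case to worst-case error costs the factor $d$, which is exactly where $d^2$ rather than $d$ arises.
\item Optimality follows from the $\Omega(d^2\eps^{-1})$ lower bound of \cite{Haah_2023}.
\end{enumerate}

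\textbf{Part 2.}
\begin{enumerate}
\item First obtain a coarse estimate $U_0$ with $\|\mathcal U_0-\mathcal U\|_\diamond\le 1/\poly\log$ using $\mathcal O(d^2)$ parallel queries, by Part~1 at constant accuracy computed via an efficient covariant estimator.
\item Since the reference must be fixed before querying, we instead use the data post hoc. Write $U=U_0e^{iH}$ with $\|H\|$ small. The unknown $H$ is determined linearly, to first order, by the $d^2$ quantities $\tr[O\,U_0^\dagger U\rho U^\dagger U_0]$ for $\rho,O$ ranging over a fixed basis of rank-$\le 2$ operators (for example, $\ket{i}$, $(\ket{i}+\ket{j})/\sqrt2$, $(\ket{i}+\I\ket{j})/\sqrt2$).
\item Estimate these $\mathcal O(d^2)$ expectations with the CSEU shadow, using Remark~\ref{remark:Mproperties} with $M=\mathcal O(d^2)$. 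This costs a $\log d$ factor, whence the $\widetilde{\mathcal O}$.
\item Each expectation is computed in $\poly(d)$ time.
\item Solve the resulting $\poly(d)$-size linear or least-squares system for $H$, project onto Hermitian matrices, and output $U_0e^{i\widehat H}$.
\item The main technical step is the error propagation. Entrywise errors $\eps/d$ in the $d^2$ matrix elements yield operator-norm error $\mathcal O(\eps)$ only if the errors are controlled jointly. I would handle this by grouping columns: estimate $U\ket{i}$ up to phase from diagonal and off-diagonal pairs, and bound the column errors by Gram–Schmidt stability.
\item Finally, the quadratic remainder is controlled by the coarse accuracy of $U_0$.
\end{enumerate}
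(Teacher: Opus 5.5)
\textbf{Part 1 has a genuine gap, and the missing idea is that a constant-precision net already suffices.} The logarithm you try to remove is an artifact of choosing the net at precision $c\eps/d$. The paper uses a pure-state $1/40$-covering net, so $\log|\covering|^2=\mathcal O(d)$. The median trick of Remark~\ref{remark:Mproperties} then multiplies the $\B=1$ CSEU cost $\mathcal O(d/\eps)$ by exactly $\mathcal O(d)$, giving $\mathcal O(d^2\eps^{-1})$ with no logarithm.

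Constant precision is enough because the approximation error is \emph{multiplicative} in the distance. For $\Upsilon=\mathcal U-\mathcal V$, the diamond norm is attained at a pure input $\rho$. The Helstrom projector $O$ of $\Upsilon(\rho)$ is rank one, with $2\Tr[O\Upsilon(\rho)]=\|\Upsilon\|_\diamond$. Replacing $\rho$ and $O$ by net elements changes this quantity by at most $\tfrac1{10}\|\Upsilon\|_\diamond$, using H\"older and $\|\Upsilon(X)\|_1\le\|\Upsilon\|_\diamond\|X\|_1$. Hence some net pair separates $U$ and $V$ by more than $\tfrac25\|\Upsilon\|_\diamond$. Consequently, every unitary consistent to within $\eps$ with $\eps$-accurate estimates on all pairs is $5\eps$-close to $U$.

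Your replacement routes do not work.
\begin{itemize}
\item A union bound over an $\exp(\mathcal O(d^2\log(1/\eps)))$-size unitary net costs a factor $\mathcal O(d^2\log(1/\eps))$ in repetitions. That makes the bound worse, not better.
\item Step~7 is false. Nothing in the CSEU analysis bounds the accuracy of the snapshot $\widehat U$ itself; the estimators are debiased by $1/\p_{\bfq}$.
\item The claimed rate is in fact impossible. Infidelity $\mathcal O((d/K)^2)$ would give $\min_\phi\|\widehat U-e^{\I\phi}U\|_\infty\le\min_\phi\|\widehat U-e^{\I\phi}U\|_F=\mathcal O(d^{1.5}/K)$. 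That is diamond error $\eps$ from $\mathcal O(d^{1.5}\eps^{-1})$ queries, contradicting the lower bound you invoke in step~9.
\item The paper's own computation gives $1-\p_{\bfq}=\Theta(N^{-2})$ with $N=\Theta(s/d^2)$, i.e.\ entanglement infidelity $\Theta(d^4/K^2)$. Converting that to worst case yields only the old $\mathcal O(d^{2.5}\eps^{-1})$ parallel bound.
\item The paper explicitly leaves open whether $\widehat U$ alone gives optimal tomography.
\end{itemize}

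\textbf{Part 2 has the right architecture, but two steps fail as written.} As in the paper, the shadow is collected independently of the reference. The reference is used only classically, to specify $\mathcal O(d^2)$ rank-one requests estimated to accuracy $\Theta(\eps/d)$ at a $\log d$ median overhead.

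(i) The coarse reference cannot come from Part~1. Its reconstruction is an exhaustive search over $\mathsf U(d)$ against an exponential-size net, so it is not efficient. The only known worst-case guarantee for the covariant snapshot alone is $\mathcal O(d^{2.5}\eps^{-1})$. The paper instead uses the non-adaptive base protocol of \cite[Section~2]{Haah_2023}: $\mathcal O(d^2)$ queries, $\poly(d)$ time, constant accuracy $c_0$.

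(ii) Solving the first-order system in $U=U_0e^{\I H}$ leaves a bias of order $\|H\|^2$. This bias is fixed by the reference accuracy and does not drop below $\eps$ as $\eps\to0$, so the coarse accuracy cannot control the quadratic remainder.

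The paper never linearizes. With $V=W^\dagger U$:
\begin{itemize}
\item the basis tests give $|V_{kj}|^2$ exactly;
\item the output-superposition tests give $\Re$ and $\Im$ of the column-normalized entries $C_{kj}=e^{-\I\theta_j}V_{kj}$ exactly, after dividing by $|V_{jj}|\ge1-c_0$;
\item input-superposition tests with $(|1\rangle+\I|j\rangle)/\sqrt2$ yield the relative column phases through a $2\times2$ real linear system whose smallest singular value is near $2$.
\end{itemize}
Closeness to $W$ is used only for conditioning, so each entry has error $\mathcal O(\eta)$.

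Your step~6 recovers columns only up to individual phases. Those phases change $\mathcal U$, so the relative-phase step is missing there as well. You would need either this closed-form reconstruction or an exact nonlinear solve on the same fixed data.

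Finally, no joint error control is needed. Entrywise error $\mathcal O(\eps/d)$ gives Frobenius, hence operator-norm, error $\mathcal O(\eps)$. Projecting onto the polar factor in $\mathsf U(d)$ at most doubles it.
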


\subsubsection{A query-optimal parallel protocol}
\label{section:exact_optimal_parallel}

\begin{figure}[tbp!]
    \centering
    \includegraphics[width=0.95\linewidth]{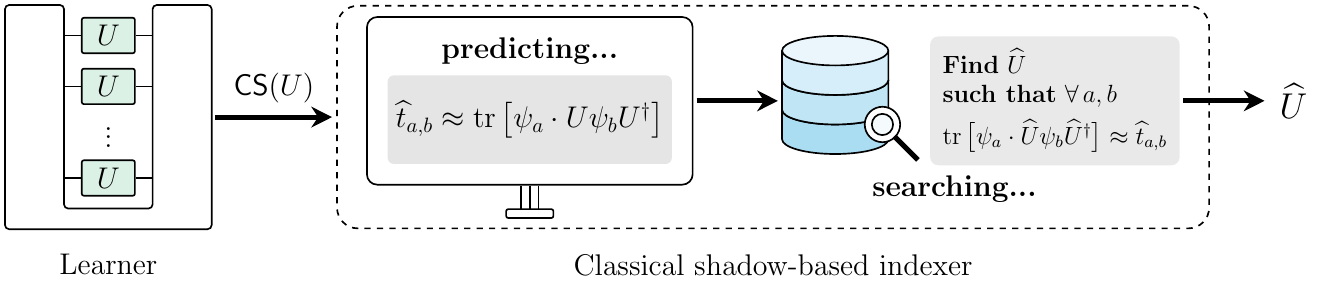}
    \caption{Conversion of a CSEU protocol into a unitary channel tomography protocol. The indexing database consists of pairs of pure states $(\psi_a,\psi_b)$ chosen from a constant-precision covering net, which provide consistency tests for reconstructing the unknown unitary.
The CSEU protocol estimates the transition probabilities $\Tr[\psi_a U\psi_b U^\dagger]$, and the reconstruction step outputs a unitary $\widehat U$ whose transition probabilities are consistent with these estimates. 
The symbol ``$\approx$'' indicates closeness in absolute value within a prescribed threshold.}
\label{fig:cseu_to_tomography}
\end{figure}

We first show how to convert any CSEU protocol into a full unitary tomography protocol.
The reduction uses CSEU to estimate transition probabilities on a finite covering net of pure states, and then reconstructs a unitary consistent with these estimates. The reduction is illustrated in \autoref{fig:cseu_to_tomography} and formalized in the following proposition; see Appendix~\ref{appendix:proof_LB} for a proof.  

\begin{proposition}[CSEU-to-tomography reduction]
\label{prop:TomoEfficiency}
Suppose $\mathcal A$ is a protocol that solves Problem~\ref{prob:CSEU} with parameter $\B=1$ and accuracy $0<\eps\le 1/5$, using $K(d,\eps)$ oracle queries. 
Then, by running $\mathcal A$ as a subroutine, one can construct a unitary tomography protocol $\mathcal A_{\rm tomo}$ that uses $\bigo{d}\,K(d,\eps)$ oracle queries in total and outputs a classical description of a unitary $\widehat U$ satisfying
\[
\Pr \left[
\bigl\|\widehat{\mathcal U}-\mathcal U\bigr\|_\diamond \le 5\eps
\right]
\ge \frac23 .
\]
Moreover, this reduction preserves the query architecture of $\mathcal A$: It repeats $\mathcal A$ independently and then performs classical post-processing on the resulting outputs. 
In particular, if $\mathcal A$ uses only parallel and non-adaptive oracle queries to $U$, then so does $\mathcal A_{\rm tomo}$. 
\end{proposition}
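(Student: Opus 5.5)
The plan is to use $\mathcal A$ to estimate the transition probabilities $p_{ab}:=\Tr[\psi_a U\psi_b U^\dagger]$ for all pairs drawn from a pure-state $\eta$-covering net $\covering=\{\ket{\psi_j}\}$ on $\C^d$, where $\eta$ is a small absolute constant. We then output any unitary consistent with these estimates. Standard volumetric bounds give $|\covering|\le (C/\eta)^{2d}$, so the number of requests is $M=|\covering|^2=e^{\bigO(d)}$. Each request $(\rho,O)=(\psi_b,\psi_a)$ has $\|\psi_a\|_\infty=1$ and $\tr((\psi_a)_0^2)=1-1/d\le 1$, so it lies in $\obs(1)$. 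By Remark~\ref{remark:Mproperties}, running $\mathcal A$ independently $\bigO(\log(M/\delta))=\bigO(d)$ times with constant $\delta$ and taking coordinate-wise medians yields estimates $\hat p_{ab}$ with $|\hat p_{ab}-p_{ab}|\le\eps$ for all $a,b$ simultaneously, with probability at least $2/3$. This uses $\bigO(d)\,K(d,\eps)$ queries. Since only independent repetitions and classical post-processing are involved, the query architecture is preserved. The reconstruction step outputs any unitary $\widehat U$ with $|\Tr[\psi_a\widehat U\psi_b\widehat U^\dagger]-\hat p_{ab}|\le\eps$ for all pairs; such a $\widehat U$ exists because $U$ itself qualifies. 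Its transition probabilities then differ from those of $U$ by at most $2\eps$ on every net pair.

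The core of the proof is a deterministic stability lemma. Consider the Hermiticity-preserving map $\Phi:=\mathcal U-\widehat{\mathcal U}$, define $\Delta(\psi):=\Phi(\psi)$, and let
\[
s:=\sup_{\psi,\phi}\bigl|\Tr[\phi\,\Delta(\psi)]\bigr|
=\sup_{\psi}\|\Delta(\psi)\|_\infty .
\]
The equality holds because $\Delta(\psi)$ is traceless Hermitian, so $\|\Delta(\psi)\|_\infty=\max_\phi|\Tr[\phi\Delta(\psi)]|$. We bootstrap from the net to all pure states in two substitutions.

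For the output side, fix $\psi$. The operator $\Delta(\psi)$ has rank at most $2$, so $\|\Delta(\psi)\|_1\le 2\|\Delta(\psi)\|_\infty$. Replacing the maximizing $\phi$ by its net neighbour $\psi_a$ therefore costs at most $\tfrac12\|\phi-\psi_a\|_1\cdot\|\Delta(\psi)\|_1\le 2\eta\|\Delta(\psi)\|_\infty$.

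For the input side, $\psi-\psi_b$ is traceless of rank at most $2$ with eigenvalues $\pm\lambda$, where $\lambda\le\eta$. Hence $\Delta(\psi)-\Delta(\psi_b)=\lambda\,(\Delta(e_1)-\Delta(e_2))$ for pure states $e_1,e_2$, and this has operator norm at most $2\eta s$.

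Combining the two substitutions gives
\[
s\le 2\eps+2\eta s+2\eta s,
\qquad\text{so}\qquad
s\le \frac{2\eps}{1-4\eta}.
\]
Consequently,
\[
\sup_\psi\tfrac12\bigl\|U\psi U^\dagger-\widehat U\psi\widehat U^\dagger\bigr\|_1\le s\le \frac{2\eps}{1-4\eta}.
\]

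The remaining step, and the one I expect to be the main obstacle, is converting this ancilla-free bound into a diamond-norm bound with the constant $5$. For unitary channels, with $W:=U^\dagger\widehat U$, the standard formula is
\[
\|\widehat{\mathcal U}-\mathcal U\|_\diamond=2\sqrt{1-\mu^2},
\qquad \mu=\min\{|z|: z\in\text{numerical range of } W\},
\]
where $\mu$ is computed by minimizing $|\Tr[\rho W]|$ over possibly mixed $\rho$ arising from an entangled input. By the Toeplitz--Hausdorff theorem the numerical range is convex, so the minimum is attained at a pure $\rho=\psi$. The pure-input trace distance between two pure outputs is $\sqrt{1-|\braket{\psi|W|\psi}|^2}$, so the diamond distance equals $2\sup_\psi\tfrac12\|U\psi U^\dagger-\widehat U\psi\widehat U^\dagger\|_1\le 4\eps/(1-4\eta)$. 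Choosing $\eta\le 1/20$ gives at most $5\eps$. The hypothesis $\eps\le 1/5$ should only serve to keep all quantities in the regime where these estimates are meaningful, for example $s<1$.

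I will double-check the factor conventions in the diamond-norm identity (trace norm versus trace distance) carefully. If they turn out to be off by a factor, I will tighten the net-approximation steps, for instance by using $\tfrac12\|\cdot\|_1$ more sharply on rank-$2$ differences, rather than change the overall strategy.
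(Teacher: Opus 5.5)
Your argument is correct. Its outer layer matches the paper exactly: identifier states from a constant-precision net, median amplification over $|\covering|^2=e^{\bigO(d)}$ requests, the exhaustive consistency search, and the resulting $2\eps$ gap on every net pair. The difference lies in the stability step.

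The paper proves it contrapositively in \lref{lemma:covering_net_saturation}. It takes an ancilla-free pure input attaining $\|\mathcal U-\widehat{\mathcal U}\|_\diamond$ (\lref{lemma:saturability_diamond_norm_Krank_1}) and the rank-one Helstrom projector of \lref{lemma:Helstrom2}. It then replaces both by $1/40$-net neighbours and bounds each replacement cost by $\tfrac1{20}\|\mathcal U-\widehat{\mathcal U}\|_\diamond$ via H\"older's inequality and contractivity, which yields the factor $\tfrac25$.

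You instead bound the worst-case pure input/test gap $s$ by its value on the net plus $4\eta s$. This self-bounding argument needs neither an attaining optimizer nor the Helstrom lemma. You also derive the ancilla-free reduction yourself from the numerical-range formula and Toeplitz--Hausdorff, rather than citing it.

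Because you work in operator norm and use that every difference involved is rank-two and traceless (so $\|X\|_\infty=\tfrac12\|X\|_1$), each replacement costs you only $\eta\|\widehat{\mathcal U}-\mathcal U\|_\diamond$ instead of $2\eta\|\widehat{\mathcal U}-\mathcal U\|_\diamond$. This gives the slightly sharper bound $\|\widehat{\mathcal U}-\mathcal U\|_\diamond\le 4\eps/(1-4\eta)$. A $1/20$-net already suffices for $5\eps$, and the paper's $1/40$-net would give $\tfrac{40}{9}\eps$.

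Your factor bookkeeping is right: under the paper's trace-norm convention the diamond norm equals $2s$. The hypothesis $\eps\le 1/5$ is indeed inessential on your route.
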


Let us briefly describe the construction of $\mathcal A_{\rm tomo}$. 
Let $\covering=\{\ket{\psi_a}\}_a$ be a pure-state $1/40$-covering net on $\mathbb C^d$. 
Using the CSEU protocol $\mathcal A$ together with the standard median trick, we simultaneously estimate all 
\[\Tr[\psi_a\,U\psi_bU^\dagger],
    \qquad
    \psi_a,\psi_b\in\covering,
\]
within additive error $\eps$ and with overall success probability at least $2/3$. 
Since $|\covering|^2=\exp(\bigo{d})$, this simultaneous estimation only incurs an additional factor $\bigo{d}$ in query complexity (see Remark~\ref{remark:Mproperties}). 
Denote the resulting estimates by $\widehat t_{a,b}$.
The tomography protocol then performs an information-theoretic reconstruction in post-processing: it searches over all unitaries in $\U(d)$ and outputs any unitary $\widehat U$ satisfying
\[
    \left|
    \widehat t_{a,b}
    -
    \Tr\bigl[\psi_a\,\widehat U\psi_b\widehat U^\dagger\bigr]
    \right|
    \le \eps,
    \qquad
    \forall\,\psi_a,\psi_b\in\covering .
\]
The key point is that any feasible $\widehat U$ must be close to $U$ in diamond distance. 
Indeed, if this were not the case, one can find two states in the covering net $\covering$ that distinguish $U$ from $\widehat U$, contradicting the consistency condition and the accuracy of the estimates, and consequently violating the criterion for the database search presented in \autoref{fig:cseu_to_tomography}. 
The full proof of \pref{prop:TomoEfficiency} is deferred to Appendix~\ref{appendix:proof_LB}. In other words, the states we build from $\mathsf{C}$ serve as ``identifier states'' for unitaries in $\U(d)$.

Applying Proposition~\ref{prop:TomoEfficiency} to our CSEU upper bound in \tref{thm:cseu_upper_bound_informal} with $\B=1$, since each observable $\psi_a$ is a rank-one projector, $\Tr[(\psi_a - \openone / d)^2] \leq 1$, we obtain a unitary tomography protocol with query complexity
\[
K_{\rm tomo}
=
\mathcal O\left(\frac{d}{\eps}\right)\cdot \mathcal O(d)
=
\mathcal O\left(\frac{d^2}{\eps}\right).
\]
This matches the known lower bound $\Omega(d^2\eps^{-1})$ for unitary tomography~\cite{Haah_2023}. 
Moreover, since the reduction only repeats the underlying CSEU protocol independently and then applies classical post-processing, it preserves the parallel and non-adaptive architecture of our CSEU protocol.
This proves the first item of \tref{thm:parallel_unitary_tomography}. 
In contrast, the protocol of~\cite{Haah_2023} relies on sequential coherent operations and adaptive refinement. 
Therefore, our result provides a practically appealing resource tradeoff: it reduces the coherent oracle depth and avoids adaptivity by using parallel oracle calls.
In experimental settings where long coherent sequential control is costly, this can be a substantial advantage.

\begin{remark}[A canonical tomography approach from parallel queries via classical shadow] \label{remark:canonical_tomography_from_shadow}
    The Haah-Kothari-O'Donnell-Tang protocol \cite{Haah_2023} provides a canonical approach to bootstrap a coarse, constant-error estimate of a unitary channel to a Heisenberg-limited one. It utilizes sequential, interleaved queries to evaluate the power $(U V_j^\dagger)^{p_j}$, where $V_j$ is the intermediate estimate of $U$ at the $j$-th iteration, similar to the phase estimation protocol. The protocol adaptively updates $V_j$ to $V_{j+1}$ and adjusts $p_j$ via quantum measurement feedback and classical post-processing. This bootstrap relies on the continuous geometry of the candidate unitary class and may not directly extend to certain settings, for example, when the candidate class is a non-closed subgroup of $\U(d)$ \cite[Remark~3.4]{Haah_2023}.
Nevertheless, the framework has been widely adopted in various quantum process tomography tasks \cite{Zhao_2024, christensen2026learningfermioniclinearoptics, grewal2026efficientlearningstructuredquantum}.

    In comparison, our approach saves us from the demand for adaptivity and the group structure of candidate unitaries, thanks to the nature of classical shadows. Data collection is independent of the observables (and states), while query complexity scales logarithmically with respect to the number of expectations to be estimated. To learn unitaries that are known a priori in a finite subset $\mathsf{S} \subset \U(d)$, a canonical approach is to choose any pair of unitaries in $\mathsf{S}$ that are far in diamond distance, compute their identifier states [cf. \pref{prop:TomoEfficiency}], and store them in the database. Whenever we are to identify $U \in \mathsf{S}$ via classical shadow, we are to estimate $\Tr[\psi_a U \psi_b U^\dagger]$ for all pairs $(\psi_a, \psi_b)$ to a proper precision, which is destined to be small for the ground truth. This canonical approach readily generalizes to any subset of $\U(d)$ possessing a covering net of sufficient precision, provided the identifier states are precomputed accordingly. Notably, a constant-covering net for the set of $d$-dimensional pure states is always a valid source for building the identifier states [cf. \pref{prop:TomoEfficiency}], while it is sometimes unnecessary for subsets with a much smaller covering net, see Corollary \ref{corollary:recover_bounded_gate_complexity_learning} for an example. This hypothesis-testing-based tomography guarantees that, upon collection, the classical shadow already contains the information necessary to reconstruct the unknown unitary.
\end{remark}

\subsubsection{A computationally efficient and nearly query-optimal protocol} 
The covering-net reduction above establishes an exactly query-optimal and parallel unitary tomography protocol. 
However, this protocol is not computationally efficient in its classical post-processing. 
Indeed, the reduction estimates transition probabilities on a pure-state covering net $\covering$ of size $\exp(\mathcal O(d))$. 
Thus, the number of estimated quantities
$\Tr[\psi_a\,U\psi_bU^\dagger]$
is already $|\covering|^2=\exp(\mathcal O(d))$. 
Moreover, the final reconstruction step requires searching over all unitaries in $\mathsf U(d)$ and outputs a unitary that is consistent with all these estimates. 
These steps incur a classical running time exponential in $d$. 
Thus, Proposition~\ref{prop:TomoEfficiency} proves the existence of a query-optimal and parallel unitary tomography protocol, but does not by itself yield an efficient classical reconstruction procedure.

We now show that this drawback can be removed at the cost of only logarithmic overhead in the query complexity. 
The main idea is to replace the global covering-net reconstruction by a local reconstruction around a coarse reference unitary $W$ [cf. Remark~\ref{remark:canonical_tomography_from_shadow}]. 
This coarse reference can be obtained using the base unitary tomography protocol of \cite[Section~2]{Haah_2023}. 
Their protocol uses $\mathcal O(d^2)$ independent, non-adaptive queries to $U$ and $\poly(d)$ classical running time, and outputs a unitary $W\in\U(d)$ satisfying
$\|\mathcal U-\mathcal W\|_\diamond \le c_0$
with high success probability, where $c_0>0$ is a small constant independent of $d$ and $\eps$. 
Since the queries in this coarse-tomography step are independent and non-adaptive, they can be performed in parallel. 
Once such a reference unitary $W$ is obtained, it suffices to estimate only $\bigO(d^2)$ carefully chosen rank-one transition probabilities, as stated in the following lemma, whose proof is deferred to Appendix~\ref{appendix:proof_lemma_unitary-tomography-rank-one}.

\begin{lemma}
\label{lemma:unitary-tomography-rank-one}
Let $c_0,c_\star>0$ be sufficiently small constants. 
Suppose $U\in \mathsf U(d)$ is an unknown unitary, and $W\in \mathsf U(d)$ is a known unitary satisfying
$\|\mathcal U-\mathcal W\|_\diamond \le c_0$.
Then there exists an explicit collection of
$M=3d^2-2$ pairs of rank-one projectors $(P_m,Q_m)$, depending only on $W$ and not on $U$, with the following property:  
For any $0<\eta<c_\star/d$, suppose one is given estimates $\widehat p_m$ satisfying 
\[
\left|
\widehat p_m
-
\Tr(Q_m U P_m U^\dagger)
\right|
\le \eta
\qquad
\text{for all } m=1,\dots,M .
\]
Then one can construct, using $W$ and $\{\widehat p_m\}_{m=1}^M$, a unitary estimate $\widehat U$ such that
$\big\|\mathcal U-\widehat{\mathcal U}\big\|_\diamond
=
\mathcal O(d\,\eta)$.
Moreover, the overall classical computation time required to specify $\{(P_m,Q_m)\}_{m=1}^M$
and constructing $\widehat U$ is $\poly(d)$.
\end{lemma}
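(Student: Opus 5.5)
The plan is to reduce to a unitary close to the identity and read off its matrix entries from interference-type transition probabilities. Set $V:=W^\dagger U$. Since $\|\mathcal U-\mathcal W\|_\diamond\le c_0$, there is a global phase $\theta$ with $\|V-e^{\I\theta}\openone\|_\infty=\mathcal O(c_0)$. Because the global phase is irrelevant for the channel, I will aim to reconstruct the representative $V$ normalised so that $V_{11}>0$. With $c_0$ small, every diagonal entry then satisfies $|V_{jj}|\ge 1/2$, every off-diagonal entry satisfies $|V_{ij}|\le 1/10$, and each diagonal phase is close to $1$.

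All tests have the form $P=\ket{\alpha}\!\bra{\alpha}$ and $Q=W\ket{\beta}\!\bra{\beta}W^\dagger$, where $\ket\alpha,\ket\beta$ are computational basis vectors or states $\ket{+_{kl}}=(\ket k+\ket l)/\sqrt2$, $\ket{+\I_{kl}}=(\ket k+\I\ket l)/\sqrt2$. These pairs depend only on $W$, and each gives $\Tr(QUPU^\dagger)=|\braket{\beta|V|\alpha}|^2$. The $M=3d^2-2$ tests are as follows.
\begin{enumerate}
\item The $d^2$ squared moduli $|V_{ij}|^2$. Only the diagonal ones are actually used.
\item For each column $j$ and each row $i\ne j$, the pair $\ket\alpha=\ket j$, $\ket\beta\in\{\ket{+_{ji}},\ket{+\I_{ji}}\}$. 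These return $\tfrac12|V_{jj}+V_{ij}|^2$ and $\tfrac12|V_{jj}-\I V_{ij}|^2$, and together with item 1 they determine $\mathrm{Re}$ and $\mathrm{Im}$ of $\overline{V_{jj}}V_{ij}$. This gives $2d(d-1)$ tests.
\item For $j\ge2$, the pair $\ket\alpha\in\{\ket{+_{1j}},\ket{+\I_{1j}}\}$, $\ket\beta=\ket{+_{1j}}$. This gives $2(d-1)$ tests, for a total of $d^2+2d(d-1)+2(d-1)=3d^2-2$.
\end{enumerate}

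Reconstruction proceeds column by column. From item 1, $|V_{jj}|=\sqrt{|V_{jj}|^2}$ is obtained with error $\mathcal O(\eta)$; the square root is well conditioned because $|V_{jj}|^2\ge1/4$. Fixing $V_{jj}$ to be real positive, item 2 yields $V_{ij}=\overline{V_{jj}}V_{ij}/|V_{jj}|$ with error $\mathcal O(\eta)$. Every column $j$ is then known up to an unknown phase $e^{\I\varphi_j}$, with $\varphi_1=0$ by the normalisation $V_{11}>0$.

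The main obstacle is aligning these column phases stably. For this I use item 3. The test with $\ket\alpha=\ket{+_{1j}}$ returns $\tfrac14|a_j+e^{\I\varphi_j}b_j|^2$, and the test with $\ket\alpha=\ket{+\I_{1j}}$ returns $\tfrac14|a_j+\I e^{\I\varphi_j}b_j|^2$. Here $a_j=V_{11}+V_{j1}$ and $b_j$ is $V_{1j}+V_{jj}$ evaluated on the partially reconstructed (phase-fixed) column $j$. Both are already known to within $\mathcal O(\eta)$ and both have modulus $\ge 1/3$. The two tests therefore determine $\mathrm{Re}$ and $\mathrm{Im}$ of $\overline{a_j}b_je^{\I\varphi_j}$ to within $\mathcal O(\eta)$, and dividing by $|a_j||b_j|=\Omega(1)$ yields $e^{\I\varphi_j}$ to within $\mathcal O(\eta)$. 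The point of this choice is that it avoids relying on small off-diagonal entries as phase references, which would amplify errors.

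After alignment, every entry of the resulting matrix $\widetilde V$ is within $\mathcal O(\eta)$ of the true (phase-normalised) $V$. Hence $\|\widetilde V-V\|_\infty\le\|\widetilde V-V\|_F=\mathcal O(d\eta)$, which is below a small constant because $\eta<c_\star/d$. I then set $\widehat V$ to be the unitary polar factor of $\widetilde V$. The polar factor is the nearest unitary to $\widetilde V$ in operator norm, so the triangle inequality gives $\|\widehat V-V\|_\infty\le2\|\widetilde V-V\|_\infty=\mathcal O(d\eta)$. Finally output $\widehat U=W\widehat V$. Using $\|\mathcal X-\mathcal Y\|_\diamond\le2\|X-Y\|_\infty$ for unitaries, this gives $\|\widehat{\mathcal U}-\mathcal U\|_\diamond=\mathcal O(d\eta)$. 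Every step, including writing down the states and computing one SVD, takes $\poly(d)$ time.
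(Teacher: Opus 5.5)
Your proposal is correct and follows the paper's proof essentially step for step. Both use the reduction to $V=W^\dagger U$ near a phase times $\openone$. Your families for the moduli $|V_{kj}|^2$ and for the column entries relative to $V_{jj}$ are identical to the paper's. Both then use $2(d-1)$ interference tests between columns $1$ and $j$ to align the column phases, followed by a polar-factor projection and the bound $\|\widehat{\mathcal U}-\mathcal U\|_\diamond\le 2\|\widehat V-V\|_\infty$.

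The one real variation is the phase-alignment family. The paper fixes the input $(\ket{1}+\I\ket{j})/\sqrt2$ and varies the output. That gives a $2\times 2$ system $P_j(C)$ whose conditioning must be argued perturbatively from $P_j(\openone)$ via Weyl's inequality. You instead fix the output $W\ket{+_{1j}}$ and vary the input phase. Your two readings then give $\mathrm{Re}$ and $\mathrm{Im}$ of $\overline{a_j}b_je^{\I\varphi_j}$ directly, and the only conditioning you need is $|a_j||b_j|=\Omega(1)$. This is slightly cleaner.

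Two small slips need fixing.

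\emph{First}, your remark that only the diagonal moduli from item~1 are used is wrong. To read $\mathrm{Re}$ and $\mathrm{Im}$ of $\overline{V_{jj}}V_{ij}$ linearly from item~2, you must subtract $\tfrac12(|V_{jj}|^2+|V_{ij}|^2)$, so the off-diagonal $|V_{ij}|^2$ are needed. Without them an $O(c_0^2)$ bias survives, which is not $O(\eta)$. The paper uses $p_{kj}$ exactly this way.

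\emph{Second}, in the phase step you must divide the estimate of $\overline{a_j}b_je^{\I\varphi_j}$ by the estimated complex number $\overline{a_j}b_j$, whose modulus is $\Omega(1)$, not by $|a_j||b_j|$. Dividing by the modulus leaves a residual phase $\arg(\overline{a_j}b_j)=O(c_0)$ in each column. That would cap the final error at order $c_0$ rather than $O(d\eta)$.
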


Let us explain how Lemma~\ref{lemma:unitary-tomography-rank-one} leads to an efficient tomography protocol. 
After obtaining the constant-accuracy reference $W$, we construct the $M=3d^2-2$ rank-one tests $(P_m,Q_m)$ from the lemma. 
Each target quantity
$\Tr[Q_m U P_m U^\dagger]$
is a CSEU prediction task with input state $P_m$ and observable $Q_m$. 
Since each $Q_m$ is a rank-one projector, this corresponds to the case $\B=1$. 
To obtain the final diamond-norm error $\eps$, Lemma~\ref{lemma:unitary-tomography-rank-one} shows that it suffices to estimate all these $M$ quantities to accuracy $\eta=\Theta(\eps/d)$.
We emphasize that the dependence of these rank-one tests on $W$ does not make the oracle-query stage adaptive.  The CSEU shadow data used for the refinement can be generated independently of $W$, using a fixed parallel and non-adaptive query experiment; after $W$ is computed, it is used only classically to specify the prediction requests $(P_m,Q_m)$ and to post-process the already collected shadow data. 

Applying our CSEU protocol together with the median trick in Remark~\ref{remark:Mproperties}, the $M$ transition probabilities can be estimated simultaneously using
\[
K_{\rm refine}
=\mathcal O\left(\frac{d\sqrt{\B}}{\eta}\log M\right)
=\mathcal O\left(\frac{d\log d}{\eta}\right)
=\widetilde{\mathcal O}\left(\frac{d^2}{\eps}\right)
\]
parallel and non-adaptive queries to $U$. 
The corresponding classical post-processing is also efficient. 
For each of the $T=\mathcal O(\log M)=\mathcal O(\log d)$ independent shadows, we evaluate the prediction function on all $M=\mathcal O(d^2)$ request pairs $(P_m,Q_m)$, and then take coordinate-wise medians over the $T$ independent estimates. 
Each evaluation of the prediction function can be performed in $\poly(d)$ time. Thus, this simultaneous-estimation step requires $\poly(d)$ classical post-processing time, up to standard finite-precision overheads.

Finally, the subsequent reconstruction of $\widehat U$ from the estimates of $\Tr[Q_m U P_m U^\dagger]$ also runs in $\poly(d)$ time according to \lref{lemma:unitary-tomography-rank-one}. 
Including the initial coarse tomography step, which uses $\mathcal O(d^2)$ queries and $\poly(d)$ classical time, the overall query complexity remains
$\widetilde{\mathcal O}\left(d^2\eps^{-1}\right)$, 
and the overall classical post-processing time is $\poly(d)$, up to standard finite-precision overheads.

This proves the second item of Theorem~\ref{thm:parallel_unitary_tomography}. 
Together with the covering-net reduction in Section~\ref{section:exact_optimal_parallel}, this shows that our CSEU protocol yields both an exactly query-optimal parallel tomography protocol and a computationally efficient nearly query-optimal variant.

\subsection{Query-optimal boundary-regime quantum channel tomography}
\label{subsec:boundary_regime_channel_tomography}

We next show that our parallel unitary tomography protocol also improves the known upper bound for quantum channel tomography in the boundary regime. 
Consider an unknown quantum channel
$\mathcal E:\mathcal L(\mathbb C^{d_1})\to \mathcal L(\mathbb C^{d_2})$
with Kraus rank at most $r$. Then the dilation rate is defined as $\tau:=r d_2/d_1$ \cite{chen_Girardi2026}.
Since quantum channels are trace-preserving, one always has $\tau\ge 1$. 
The boundary regime corresponds to $\tau=1$, or equivalently $r d_2=d_1$. Chen \etal~\cite{chen_Girardi2026} showed that, in this regime, quantum channel tomography under diamond norm error admits the upper bound
\[
    \mathcal O\left(
        \min\left\{
        \frac{r d_1^{3/2}d_2}{\eps},
        \frac{r d_1d_2}{\eps^2}
        \right\}
    \right),
\]
and proved the lower bound $\Omega\left( r d_1d_2\eps^{-1}\right)$.
Thus, before our observation, there remained a gap of a factor $\sqrt{d_1}$ in the Heisenberg-scaling term for diamond-norm tomography in the boundary regime.

The key tool we use is the following ``dilation does not help'' theorem for parallel testers \cite{chen_Girardi2026,girardi2025random}. 
We state it in a form tailored to our application.

\begin{lemma}[Dilation does not help for parallel testers~{\cite[Theorem 1.5]{chen_Girardi2026}}]
\label{lem:dilation_does_not_help}
Let $\mathcal E$ be an unknown quantum channel, and $\mathsf{Dilation}_r(\mathcal E)$ be the set of its Stinespring dilations with ancilla dimension $r$. 
Suppose a channel-estimation task can be solved by a parallel tester that makes $K$ queries to an arbitrary dilation $V\in \mathsf{Dilation}_r(\mathcal E)$. 
Then the same task can be solved by a parallel tester that makes $K$ queries directly to $\mathcal E$.
\end{lemma}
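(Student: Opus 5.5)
The idea is to simulate any parallel tester on a dilation by one that queries $\mathcal E$ directly. Here a parallel tester is a probe state $\rho^{RA}$ and a POVM $\{M_x\}$. Its outcome statistics on $V$ are $\Tr[M_x (V^{\otimes K}\otimes \openone_A)\rho (V^{\dagger\otimes K}\otimes\openone_A)]$.

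All dilations with ancilla dimension $r$ are related by $V' = (\openone_B\otimes W)V$ with $W\in\U(r)$ acting on the environment. This is Stinespring uniqueness when $r$ equals the Kraus rank, and when $r$ is larger, $W$ ranges over isometries composed with unitaries. The task only concerns $\mathcal E$, so the tester's success guarantee holds for every element of $\mathsf{Dilation}_r(\mathcal E)$. In particular it holds on average over a Haar-random choice of $W$. So I will first replace the given tester by its $W$-twirled version: query $V_W=(\openone\otimes W)V$ with $W$ Haar random, and apply the same measurement. This version succeeds with the same probability for every $\mathcal E$.

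The averaged output state on the $K$ environment registers $E_1\cdots E_K$ is
\[
\int dW\, (W^{\otimes K}) X (W^{\dagger\otimes K}),
\]
where $X$ is the joint state after querying a fixed $V$. This average is the $K$-fold twirl, which by Schur--Weyl duality is a map onto the commutant algebra spanned by permutation operators $\pi_{E}$ of the environment registers.

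The key step is to realize this twirl physically using only $\mathcal E^{\otimes K}$, which discards the $E$ registers, plus operations on fresh ancillas. I would write the twirl as a channel $\mathcal T$ on $E^{\otimes K}$ that is a "measure-and-prepare in Schur basis" map. Concretely, $\mathcal T(Y)=\sum_\lambda \Tr_{Q_\lambda}[\Pi_\lambda Y\Pi_\lambda]\otimes \openone_{Q_\lambda}/d_{Q_\lambda}$ in the Schur decomposition $E^{\otimes K}\cong\bigoplus_\lambda Q_\lambda\otimes P_\lambda$, where $Q_\lambda$ is the $\U(r)$ irrep and $P_\lambda$ the $S_K$ irrep.

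Next I must show that the $P_\lambda$ and $\lambda$ information can be regenerated without access to $E$. The idea is to purify instead of reconstruct. $\mathcal E^{\otimes K}$ applied to the input gives the marginal on $B^{\otimes K}A$. A canonical purification of that marginal, built on a fresh $r^{K}$-dimensional register $E'$ with an entangled reference, agrees with $V^{\otimes K}$ up to some unitary on $E'$ that depends on the unknown $\mathcal E$. Twirling then washes out exactly the $\U(r)$ part. The $S_K$ part remains, but it can be fixed by a known operation, because the joint state of $V^{\otimes K}\rho$ is permutation-covariant across query slots together with the $R$ registers. I would implement this by preparing $K$ maximally entangled pairs, feeding halves appropriately, and symmetrizing. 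This follows the route of \cite{girardi2025random} via the Choi/port picture.

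The main obstacle is exactly this step: reconstructing $\mathcal T(X)$ from the $B^{\otimes K}A$ marginal alone. The step I would try first is to show that $\mathcal T(X)$ is a fixed linear, completely positive function of $\Tr_{E^{K}}X$ composed with a known channel. I would do this via the identity $\int W^{\otimes K}(\cdot)W^{\dagger\otimes K} = \sum_{\pi,\sigma}\mathrm{Wg}(\pi\sigma^{-1},r)\,\Tr[\pi^\dagger\,\cdot\,]\,\sigma$, checking positivity using the $K$-copy structure. If that fails, the fallback is to invoke \cite[Theorem~1.5]{chen_Girardi2026} directly as the statement is literally stated there; here it is quoted in tailored form.
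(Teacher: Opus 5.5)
The paper gives no proof of this lemma. It is imported from \cite[Theorem~1.5]{chen_Girardi2026}, which rests on the random Stinespring superchannel of \cite{girardi2025random}. So your fallback is exactly the paper's treatment.

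Your first reduction, Haar-averaging over the environment unitary, is sound. The caveat about isometries is unnecessary: any two elements of $\mathsf{Dilation}_r(\mathcal E)$ differ by some $W\in\mathsf U(r)$ on $E$ whatever the Kraus rank, because both Choi vectors purify $J_{\mathcal E}$ with an $r$-dimensional purifying system.

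Your self-contained route, however, breaks at the step you flag, and as you state it that step is false. Write $Y:=(V^{\otimes K}\otimes\openone_A)\rho(V^{\otimes K}\otimes\openone_A)^\dagger$. The covariance $V^{\otimes K}\pi_R=\pi_B\pi_E V^{\otimes K}$ gives $\Tr_{E^K}[Y\pi_E^\dagger]=(\mathcal E^{\otimes K}\otimes\mathrm{id}_A)(\rho\,\pi_R^\dagger)\,\pi_B$. So the Weingarten coefficients of the twirl need the channel evaluated on the one-sided permuted operators $\rho\,\pi_R^\dagger$, and these are not determined by $(\mathcal E^{\otimes K}\otimes\mathrm{id}_A)(\rho)$.

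Here is a concrete counterexample with $K=r=2$, qubits, and $\rho=\ket{0}\!\bra{0}\otimes\ket{1}\!\bra{1}$. Take the identity channel with dilation $\ket{i}\mapsto\ket{i}\ket{0}$ and the completely dephasing channel with dilation $\ket{i}\mapsto\ket{i}\ket{i}$. Both give $\mathcal E^{\otimes 2}(\rho)=\ket{01}\!\bra{01}$. Their twirled dilation outputs, however, are $\ket{01}\!\bra{01}\otimes\Pi_{\mathrm{sym}}/3$ and $\ket{01}\!\bra{01}\otimes(\Pi_{\mathrm{sym}}/6+\Pi_{\mathrm{anti}}/2)$ respectively. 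Hence no map applied after $\mathcal E^{\otimes K}$ to the unchanged probe can reproduce $\mathcal T(X)$; the probe itself must change.

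Your other two ideas do not supply this. The unitary relating a canonical purification of the $B^KA$ marginal to $Y$ is an arbitrary unitary on $E^{\otimes K}$, not of the form $W^{\otimes K}$, so the twirl does not remove it. Feeding halves of maximally entangled pairs gives $J_{\mathcal E}^{\otimes K}$, but applying Choi states to an arbitrary probe needs gate teleportation, which with $K$ copies succeeds only probabilistically.

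The missing idea is to use the permutation covariance you mention \emph{coherently in the probe}. Do the following:
\begin{enumerate}
\item Prepare a register $C$ in $(K!)^{-1/2}\sum_{\pi\in\mathfrak S_K}\ket{\pi}$ and apply $\sum_\pi\ket{\pi}\!\bra{\pi}_C\otimes\pi_R$ to the input slots.
\item Make the single parallel layer of $K$ queries to $\mathcal E$.
\item Apply $\sum_\pi\ket{\pi}\!\bra{\pi}_C\otimes\pi_B^\dagger$.
\end{enumerate}
The $\ket{\pi}\!\bra{\sigma}$ block of the result equals $\frac{1}{K!}\pi_B^\dagger(\mathcal E^{\otimes K}\otimes\mathrm{id}_A)(\pi_R\rho\,\sigma_R^\dagger)\sigma_B=\frac{1}{K!}\Tr_{E^K}[Y(\sigma^{-1}\pi)_E]$ for every dilation $V$.

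For $r\ge K$, map $C$ into $E^{\otimes K}$ by the isometry $\ket{\pi}\mapsto(\pi^{-1})_E\ket{1,2,\dots,K}$. A direct computation shows the resulting state $\Xi$ satisfies $\Tr_{E^K}[\Xi\,\alpha_E^\dagger]=\Tr_{E^K}[Y\alpha_E^\dagger]$ for all $\alpha\in\mathfrak S_K$. The $W^{\otimes K}$-twirl is the Hilbert--Schmidt projection onto $\mathrm{span}\{\alpha_E\}\otimes\mathcal L(B^{\otimes K}\otimes A)$, so it depends only on these pairings. Applying a Haar-random $W^{\otimes K}$ to $\Xi$ yourself therefore yields exactly $\int dW\,(W^{\otimes K})_E\,Y\,(W^{\otimes K})_E^\dagger$.

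For $r<K$, run the same construction with an environment padded to dimension $K$, then compress in the Schur basis. The twirled state has no weight on diagrams with more than $r$ rows, so you can replace the maximally mixed $\mathsf U(K)$-irrep factor by the maximally mixed $\mathsf U(r)$ one.

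The preparation of $C$ and the controlled permutation become part of the new probe, and everything after the queries becomes part of the new POVM. The result is a parallel tester making $K$ queries to $\mathcal E$.
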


We emphasize that the parallel nature of Lemma~\ref{lem:dilation_does_not_help} is essential here. 
The lemma shows that access to a Stinespring dilation does not help for \emph{parallel} testers, but it does not imply an analogous statement for sequential testers. 
Therefore, although the Haah-Kothari-O'Donnell-Tang protocol~\cite{Haah_2023} is query-optimal, it is sequential and adaptive and hence cannot be directly combined with Lemma~\ref{lem:dilation_does_not_help} to obtain an optimal channel-tomography protocol in the boundary regime. 
This is precisely where our parallel unitary tomography protocol becomes crucial: it is compatible with the dilation-does-not-help lemma. 
It can therefore be transferred from unitary tomography of the dilation to tomography of the original channel.
Thus, the parallel architecture of our tomography protocol is the key new ingredient that closes the remaining gap in boundary-regime quantum channel tomography. 
Combining Lemma~\ref{lem:dilation_does_not_help} with our parallel unitary tomography protocol yields the optimal boundary-regime diamond-norm tomography bound, answering an open problem proposed by Chen \etal \cite{chen_Girardi2026} of whether query-optimal quantum channel tomography is attainable across all parameter regimes.

\begin{theorem}[Optimal boundary-regime channel tomography]
\label{thm:boundary_regime_channel_tomography}
Let $\mathcal E:\mathcal L(\mathbb C^{d_1})\to \mathcal L(\mathbb C^{d_2})$ be an unknown quantum channel with Kraus rank at most $r$, and suppose it lies in the boundary regime $r d_2=d_1$. Then there exists a parallel channel tomography protocol that outputs a classical description of a channel $\widehat{\mathcal E}$ satisfying
\[
\Pr\left[
\big\|\widehat{\mathcal E}-\mathcal E\big\|_\diamond\le \eps
\right]
\ge \frac23,
\]
using
$\mathcal O\left(r d_1d_2\eps^{-1}\right)$
queries to $\mathcal E$. 
Moreover, this query complexity is optimal. 
\end{theorem}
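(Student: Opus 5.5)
We would prove Theorem~\ref{thm:boundary_regime_channel_tomography} by reducing to parallel unitary tomography via Lemma~\ref{lem:dilation_does_not_help}. In the boundary regime $rd_2=d_1$, any Stinespring dilation $V:\mathbb C^{d_1}\to\mathbb C^{d_2}\otimes\mathbb C^{r}$ of $\mathcal E$ with ancilla dimension $r$ maps a $d_1$-dimensional space isometrically into a space of the same dimension $d_2 r=d_1$. Hence $V$ is a unitary in $\U(d_1)$ after identifying $\mathbb C^{d_2}\otimes\mathbb C^r\cong\mathbb C^{d_1}$. Moreover, $\mathsf{Dilation}_r(\mathcal E)$ is nonempty because the Kraus rank is at most $r$; we pad with zero Kraus operators if necessary.

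\textbf{Step 1 (tomography of the dilation).} Run the parallel, non-adaptive protocol of item~1 of Theorem~\ref{thm:parallel_unitary_tomography} on the unknown unitary $V$ with $d=d_1$. This uses $\mathcal O(d_1^2\eps^{-1})=\mathcal O(rd_1d_2\eps^{-1})$ queries and outputs $\widehat V$ with $\|\widehat{\mathcal V}-\mathcal V\|_\diamond\le\eps$ with probability at least $2/3$.

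\textbf{Step 2 (post-processing).} Output $\widehat{\mathcal E}(\rho):=\Tr_E(\widehat V\rho\widehat V^\dagger)$. Since the partial trace is a channel, applying it to the output of any extension can only decrease the trace distance, by monotonicity under $\Tr_E\otimes\mathrm{id}$. Therefore $\|\widehat{\mathcal E}-\mathcal E\|_\diamond\le\|\widehat{\mathcal V}-\mathcal V\|_\diamond\le\eps$.

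\textbf{Step 3 (transfer).} Check that Steps~1 and~2 together form a parallel tester with $K$ queries to an \emph{arbitrary} dilation $V\in\mathsf{Dilation}_r(\mathcal E)$ that solves the channel-estimation task for $\mathcal E$. The guarantee of Step~1 holds for every unitary $V$, and Step~2 is correct for every dilation, because all dilations yield the same reduced channel. Lemma~\ref{lem:dilation_does_not_help} then converts this into a parallel tester using $K=\mathcal O(rd_1d_2\eps^{-1})$ queries to $\mathcal E$ itself. Optimality follows from the $\Omega(rd_1d_2\eps^{-1})$ lower bound of Chen \etal~\cite{chen_Girardi2026} quoted above.

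The main point requiring care is Step~3. The tester must be valid uniformly over the choice of dilation, since different dilations differ by a unitary on the ancilla. The success event must also depend only on $\mathcal E$ and not on which $V$ was queried. The plan handles this by stating the task as outputting an $\eps$-close estimate of $\mathcal E$ rather than of $V$. A secondary subtlety is the identification of $\mathbb C^{d_2}\otimes\mathbb C^r$ with $\mathbb C^{d_1}$; we fix any such isomorphism, and both the tomography guarantee and the partial-trace post-processing are unaffected by this choice.
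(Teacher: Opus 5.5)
Your proposal is correct and follows the paper's proof essentially step by step. You view the dilation as a $d_1$-dimensional unitary, run the parallel $\mathcal O(d_1^2\eps^{-1})$ tomography protocol, and trace out the ancilla using contractivity of the diamond norm; you then transfer to queries on $\mathcal E$ via Lemma~\ref{lem:dilation_does_not_help} and cite the lower bound of \cite{chen_Girardi2026}, while correctly spelling out two details the paper leaves implicit (uniformity over the choice of dilation, and padding with zero Kraus operators to get ancilla dimension exactly $r$).
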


\tref{thm:boundary_regime_channel_tomography} removes the remaining order $\sqrt{d_1}$ gap in the boundary-regime diamond-norm upper bound of \cite{chen_Girardi2026}. 
Consequently, the optimal query complexity of boundary-regime quantum channel tomography is now fully characterized, matching the Heisenberg-scaling lower bound $\Omega\left( r d_1d_2\eps^{-1}\right)$ in all parameters $r, d_1,d_2,\eps$.

\begin{proof}[Proof of \tref{thm:boundary_regime_channel_tomography}]
Let $V\in \mathsf{Dilation}_r(\mathcal E)$ be an arbitrary Stinespring dilation of $\mathcal E$ with ancilla dimension $r$. 
Thus
$\mathcal E(\rho)=\Tr_{\mathrm{anc}} \left[V\rho V^\dagger\right]$,
where
$V:\mathbb C^{d_1}\to \mathbb C^{d_2}\otimes \mathbb C^r$
is an isometry. 
In the boundary regime $r d_2=d_1$, the input and output dimensions of $V$ are equal. 
So the dilation $V$ can be viewed as a $d_1$-dimensional unitary.

Suppose for the moment that we have query access to the dilation channel $\mathcal V(\cdot)=V(\cdot)V^\dagger$.
Applying our parallel unitary tomography protocol to $V$, we can output a unitary $\widehat V$ satisfying $\|\widehat{\mathcal V}-\mathcal V\|_\diamond\le \eps$ (with probability at least $2/3$)
using $\mathcal O\left(d_1^2\eps^{-1}\right)=\mathcal O\left(r d_1d_2\eps^{-1}\right)$ queries to $V$. 
Define the reconstructed channel
$\widehat{\mathcal E}(\rho):=
\Tr_{\rm anc} [\widehat V\rho \widehat V^\dagger]$.
By the contractivity of the diamond norm, we have  
\[
\left\|\widehat{\mathcal E}-\mathcal E\right\|_\diamond
\le
\left\|\widehat{\mathcal V}-\mathcal V\right\|_\diamond \leq \eps.
\]
Therefore, the above dilation-query protocol learns $\mathcal E$ to diamond-norm error $\eps$ using
$\mathcal O(r d_1d_2\eps^{-1})$ queries to the dilation $V$.

Finally, this dilation-query protocol is parallel since our unitary tomography protocol is parallel. 
By Lemma~\ref{lem:dilation_does_not_help}, it can be simulated by a parallel tester that makes the same number of queries directly to the unknown channel $\mathcal E$. 
This proves the upper bound $\mathcal O(r d_1d_2\eps^{-1})$, which matches the existing lower bound $\Omega(r d_1d_2\eps^{-1})$ for boundary-regime channel tomography \cite{chen_Girardi2026}. 
Hence, the query complexity is optimal.
\end{proof}

\subsection{Near-optimal learning of general Hamiltonians}
\label{subsec:learning_general_hamiltonians}

Learning an unknown Hamiltonian from its time evolution is a central task in quantum learning theory and many-body physics. 
Let
\[
H=\sum_{\bfp\in\{0,1,2,3\}^n}\mu_H(\bfp)\sigma_{\bfp}
\]
be the canonical decomposition of an unknown traceless Hamiltonian acting on an $n$-qubit system, $\sigma_{\bfp}$ are Pauli operators, and
$\mu_H(\bfp)=\frac1d\Tr(H\sigma_{\bfp})\in\mathbb R$ are the Pauli coefficients, where $d=2^n$.
Given query access to the real-time evolution unitaries
$U_t:=e^{-\mathrm{i}Ht}$ for a tunable $t \geq 0$, we consider the following two learning tasks.
\begin{enumerate}
\item \textbf{Pauli coefficient learning.} The goal is to output $\eps$-additive estimates of all Pauli coefficients $\mu_H(\bfp)$ with high success probability. 

\item \textbf{NFN Hamiltonian learning.} The goal is to output a classical description $\widehat H\in \mathbb{C}^{d\times d}$ of the Hamiltonian  such that
$\frac{1}{\sqrt{d}}\|\widehat H-H\|_F\le \eps$
with high success probability. 
\end{enumerate}
By the Hilbert-Schmidt orthogonality of the Pauli basis, these two tasks are equivalent to learning the Pauli coefficient vector
$\boldsymbol\mu_H=(\mu_H(\bfp))_{\bfp \in \{0, 1, 2, 3\}^n }$
in vector $\ell_\infty$- and $\ell_2$-norm, respectively. 
The normalized Frobenius norm (NFN) $d^{-1/2}\|\cdot\|_F$ is a natural metric for Hamiltonian learning, since it controls the average prediction error of observables under the learned dynamics \cite{ma2024learning}.
Learning all Pauli coefficients is also important in applications where one needs an explicit Pauli expansion of the Hamiltonian, for instance, identifying interaction terms or constructing effective Hamiltonian models.

Hamiltonian learning from dynamics has been studied extensively in recent years 
\cite{
Bairey2019LearnLocalHamiltonian,
innocenti2020supervised,
caro2022learning,
Yu2023robustefficient,
huang_2023,
castaneda2023hamiltonian,
Dutkiewicz2024advantageofquantum,
gu2024practical,
stilck2024efficient,
ma2024learning,
bakshi2024structure1,
haah2024learning,
zhao2024learning,
HuAnsatzFree2025,
Bluhm_2026,
depradenne2026learninghamiltonianslongtimes}. 
However, most existing efficient protocols rely on structural assumptions on the unknown Hamiltonian, such as geometric locality, sparsity, or low-intersection structure. 
For example, \cite{Yu2023robustefficient,stilck2024efficient,gu2024practical,haah2024learning} provide provable guarantees for geometrically local or low-intersection Hamiltonians, and demonstrate that such learning procedures can be applied to quantum systems of considerable size. 
More recently, \cite{huang_2023,bakshi2024structure1,ma2024learning} showed that, for learning sparse or low-intersection Hamiltonians, Heisenberg-limited scaling $\eps^{-1}$ in the total evolution time is achievable. 
In contrast to these works, our goal here is to learn a completely general Hamiltonian, without assuming locality, sparsity, low-intersection, or any other structure.

\subsubsection{Performance of our Hamiltonian learning protocol}

We first state the performance guarantees of our CSEU-based protocols for the two Hamiltonian learning tasks introduced above.

\begin{theorem}
\label{thm:hamiltonian_learning_informal}
Let $H$ be an arbitrary traceless $n$-qubit Hamiltonian. The following two statements hold. 

\begin{enumerate}
\item There exists a Hamiltonian-learning protocol that uses our CSEU protocol as a subroutine and outputs estimates $\{\widehat\mu_H(\bfp)\}_{\bfp\in\{0,1,2,3\}^n}$ satisfying
\[\Pr \!\Big[
|\widehat\mu_H(\bfp)-\mu_H(\bfp)|\le \eps,
\ \forall\,\bfp\in\{0,1,2,3\}^n
\Big]\ge \frac23. 
\]
The protocol uses
$\widetilde \bigO \left(d\|H\|_\infty \eps^{-1}\right)$
parallel queries to the time evolutions $e^{-\mathrm{i}Ht}$, where each query evolves for time $\bigO(\|H\|_\infty^{-1})$; and its total evolution time is
$\widetilde \bigO\left(d\eps^{-1}\right)$.

\item There exists a Hamiltonian-learning protocol that uses our CSEU protocol as a subroutine and outputs $\widehat H\in \mathbb{C}^{d\times d}$ satisfying
\[
\Pr \left[
\,\frac{1}{\sqrt{d}}\|\widehat H-H\|_F \le \eps
\right]
\ge \frac{2}{3}.
\]
The protocol uses
$\widetilde \bigO \left(d^2\|H\|_\infty \eps^{-1}\right)$
parallel queries to the time evolutions $e^{-\mathrm{i}Ht}$, where each query evolves for time $\bigO(\|H\|_\infty^{-1})$; and its total evolution time is
$\widetilde \bigO\left(d^2\eps^{-1}\right)$.
\end{enumerate}
\end{theorem}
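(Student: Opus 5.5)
Both items reduce to CSEU estimates on the single unitary $U_t=e^{-\mathrm{i}Ht}$ with $t=c/\|H\|_\infty$ for a small constant $c$, followed by a linearization of $U_t$ around the identity. We assume $\|H\|_\infty$ is known, or an upper bound on it, which sets $t$.

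\textbf{Step 1 (linearization).} Since $\|Ht\|_\infty\le c$, we have $U_t=\openone-\mathrm{i}tH+R$ with $\|R\|_\infty\le (ct)^2\|H\|_\infty^2/2$, up to constants. For a Pauli $\sigma_\bfp$, choose a state and observable pair so that $\Tr[O\,U_t\rho U_t^\dagger]$ isolates $\mu_H(\bfp)$. Take $\rho=(\openone+\sigma_\bfq)/d$, or better a rank-one eigenprojector of $\sigma_\bfq$. Then
\[
\Tr\bigl[O\,U_t\rho U_t^\dagger\bigr]=\Tr[O\rho]-\mathrm{i}t\,\Tr\bigl(O[H,\rho]\bigr)+\mathcal O(t^2\|H\|^2),
\]
and the first-order term involves $\Tr(H[\rho,O])$.

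\textbf{Step 2 (killing the quadratic bias).} The quadratic error is of constant size, not of size $\eps$. To remove it, I would work instead with the Choi-style quantity, or take the matrix logarithm of a learned local model. A cleaner route reuses Lemma~\ref{lemma:unitary-tomography-rank-one}-style local reconstruction. We estimate $\mathcal O(d^2)$ rank-one transition probabilities of $U_t$ to accuracy $\eta$, recover $U_t$ up to phase, and set $\widehat H=\mathrm{i}\log(\widehat U_t)/t$ with the trace removed. Because the spectrum of $Ht$ lies in $[-c,c]$, the logarithm is Lipschitz on a neighbourhood with a constant $\mathcal O(1)$. It follows that $\|\widehat H-H\|\lesssim \|\widehat U_t-U_t\|/t$.

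\textbf{Step 3 (norm bookkeeping).} For item 2, we need the Frobenius-norm error $\|\widehat U_t-U_t\|_F\le \eps t\sqrt d$. Obtaining this from entrywise estimates with accuracy $\eta$ in a basis of $d^2$ entries requires $\eta\approx \eps t/\sqrt d$ per entry. I would reparametrize via Pauli coefficients of $U_t$: each coefficient $\frac1d\Tr(\sigma_\bfp U_t)$ is a CSEU-estimable combination of a constant number of rank-one or $\B=\mathcal O(1)$ requests, using the input states $\tfrac12(\openone\pm\sigma_\bfp)$ paired with the maximally mixed reference. By Remark~\ref{remark:Mproperties}, estimating all $d^2$ of them to accuracy $\eta$ costs $\mathcal O(d\eta^{-1}\log d)$ queries, and we need $\ell_2$ error over the $d^2$ coefficients equal to $\eps t$.

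For item 1 ($\ell_\infty$), accuracy $\eta=\Theta(\eps t)$ per Pauli coefficient of $U_t$ gives $\widetilde{\mathcal O}(d/(\eps t))=\widetilde{\mathcal O}(d\|H\|_\infty/\eps)$ queries. The first-order relation $\mu_{U_t}(\bfp)\approx -\mathrm{i}t\mu_H(\bfp)$ holds up to the second-order correction. We remove this correction by iterating the logarithm series using the estimated coefficients themselves, which perturbs each estimate by only $\mathcal O(t\cdot\text{error})$.

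For item 2, the $\ell_\infty$ guarantee does not suffice. Instead we learn $U_t$ to diamond norm $\eps t$ via Theorem~\ref{thm:parallel_unitary_tomography}, item 2, at cost $\widetilde{\mathcal O}(d^2/(\eps t))$, then take the logarithm. Here $\frac1{\sqrt d}\|\cdot\|_F\le\|\cdot\|_\infty$, and the operator-norm error is controlled by the diamond norm up to a global phase. The phase is fixed by tracelessness of $H$.

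In both items, total time is the number of queries times $t$, giving $\widetilde{\mathcal O}(d/\eps)$ and $\widetilde{\mathcal O}(d^2/\eps)$. The queries are parallel because the CSEU protocol is.

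\textbf{Main obstacle.} The main obstacle is item 1. We must show that the Pauli coefficients of $H$ can be extracted from CSEU estimates, each of bounded $\B$, to additive accuracy $\eps$ after dividing by $t$. The second-order bias and the global phase ambiguity must not degrade this. I expect to handle it by a two-stage procedure. First, coarse tomography of $U_t$ fixes a reference $W$ and the phase. Then we estimate the Pauli coefficients of $W^\dagger U_t$, for which linearization is accurate to first order in the small residual.
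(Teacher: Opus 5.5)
Your item~2 argument is sound and takes a different route from the paper. You learn $U_t$ to diamond error $\eps t$ with Theorem~\ref{thm:parallel_unitary_tomography}, fix the global phase so the estimate is near $\openone$, and take the trace-removed principal logarithm. This even gives an operator-norm guarantee with $\widetilde{\mathcal O}(d^2\|H\|_\infty/\eps)$ queries (assume $\eps<\|H\|_\infty$, otherwise output $0$). The paper instead just runs item~1 at accuracy $\eps/d$.

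Item~1, however, has a genuine gap. A minor point first: $\frac1d\Tr(\sigma_{\bfp}U_t)$ changes under $U_t\mapsto e^{\mathrm i\phi}U_t$. It is therefore not a function of the channel, and not a combination of a constant number of requests $\Tr[O\,U\rho U^\dagger]$. A phase-referenced quantity such as $\mu_{U_t}(\bfp)\overline{\mu_{U_t}(\mathbf 0)}$ is an average of $d^2$ Pauli-transfer-matrix entries, so this part is repairable at the same cost.

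The real problem is your claim that iterating the logarithm series perturbs each estimate by only $\mathcal O(t\cdot\text{error})$. Operator products convolve Pauli coefficients. Let $X=U_t-\openone$ have coefficient vector $x$, and let your estimate be $X+E$ with every Pauli coefficient of $E$ bounded by $\eta$. Then the $\bfp$-coefficient of $XE+EX$ in the quadratic term is bounded only by $2\eta\|x\|_1$. Here $\|x\|_1\approx t\sum_{\bfq}|\mu_H(\bfq)|$ can be $\Theta(t\,d\,\|H\|_\infty)=\Theta(cd)$, e.g.\ for a scaled generic reflection, whose Pauli spectrum is flat. The median-of-CSEU guarantee controls only the maximal coordinate error, and all coordinates come from the same shadow data, so you cannot assume cancellations. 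After dividing by $t$, the error in $\widehat\mu_H(\bfp)$ can exceed the target $\eta/t$ by a factor $\Theta(cd)$. (In Frobenius or operator norm, multiplication by $X$ is harmless, which is exactly why your item~2 works; the failure is specific to the coordinatewise guarantee.) Repairing this needs $\eta\sim\eps t/d$ or $c\sim 1/d$, i.e.\ $\widetilde{\mathcal O}(d^2\|H\|_\infty/\eps)$ queries, which is the item~2 budget. Your two-stage fallback does not help either. $W^\dagger U_t$ is still $c_0$-far from $\openone$, so the quadratic bias stays constant, and going back through $\log(W\cdot W^\dagger U_t)$ mixes coefficients in the same way.

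What is missing is a way to remove the higher-order-in-$t$ contributions separately for each $\bfp$, without any global nonlinear reconstruction. Your Step~1 was the right start. The paper takes $O_{\bfp}=\sigma_{\bfp'}$, a single-qubit Pauli anticommuting with $\sigma_{\bfp}$, and $\rho_{\bfp}=(\openone+\tau_{\bfp})/d$ with $\tau_{\bfp}=\frac{\mathrm i}{2}[\sigma_{\bfp},\sigma_{\bfp'}]$. Then $f_{\bfp}(t)=\Tr(O_{\bfp}U_t\rho_{\bfp}U_t^\dagger)$ satisfies $f_{\bfp}'(0)=2\mu_H(\bfp)$ exactly, with $\B\P=2$. (A rank-one $\rho$ would give $\B\P=d$ and could not isolate a single coefficient.) The constant-size bias is then removed by Chebyshev interpolation of each scalar function $f_{\bfp}$ at $L=\mathcal O(\log(\|H\|_\infty/\eps))$ times in $[0,1/\|H\|_\infty]$ (Lemma~\ref{lem:cheb_interp}), with per-point accuracy $\Theta(\eps L^{-3}/\|H\|_\infty)$. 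Since $\widehat\mu_H(\bfp)$ is a fixed linear combination of only its own $L$ estimates, coordinatewise accuracy transfers directly. The cost is $\widetilde{\mathcal O}(d\|H\|_\infty/\eps)$ queries, each of duration at most $1/\|H\|_\infty$.
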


The main idea behind the protocol for the first Hamiltonian-learning task is as follows. 
For each nontrivial Pauli label $\bfp$, following the polynomial-interpolation technique developed in \cite{caro2022learning,stilck2024efficient,gu2024practical}, one can construct an input state $\rho_{\bfp}$, an observable $O_{\bfp}$, and a collection of short evolution times $t_1,\dots,t_L$, such that the expectation values
$\Tr \bigl[O_{\bfp}\,U_{t_j}\rho_{\bfp}U_{t_j}^\dagger\bigr]$
determine the Pauli coefficient $\mu_H(\bfp)$ after a classical interpolation step. 
Thus, Hamiltonian learning reduces to simultaneously estimating a family of linear properties of the short-time evolution unitaries $U_{t_j}$. 
For each fixed $t_j$, we apply our CSEU protocol to the unknown unitary $U_{t_j}$ and simultaneously estimate
$\Tr \bigl[O_{\bfp}\,U_{t_j}\rho_{\bfp}U_{t_j}^\dagger\bigr]$
for all nontrivial $\bfp$. Combining these estimates with the interpolation procedure yields the first statement of \tref{thm:hamiltonian_learning_informal}.
We defer the explicit protocol and the proof of the first statement in Theorem~\ref{thm:hamiltonian_learning_informal} to Appendix~\ref{appendix:ham_learning}. 

The second statement of \tref{thm:hamiltonian_learning_informal} follows directly from the first. 
Indeed, suppose that a protocol outputs estimates
$\{\widehat\mu_H(\bfp)\}_{\bfp\in\{0,1,2,3\}^n}$ satisfying
 $|\widehat\mu_H(\bfp)-\mu_H(\bfp)|\le \eps/d$
for all $\bfp\in\{0,1,2,3\}^n$.
Define the reconstructed Hamiltonian
\[
    \widehat H
    :=
    \sum_{\bfp\in\{0,1,2,3\}^n}
    \widehat\mu_H(\bfp)\sigma_{\bfp}.
\]
Using the Hilbert-Schmidt orthogonality
$\frac{1}{d} \Tr(\sigma_{\bfp}\sigma_{\bfq})=\delta_{\bfp,\bfq}$,
we have
\[
\frac1{\sqrt d}\left\|\widehat H-H\right\|_F
=
\sqrt{
\sum_{\bfp\in\{0,1,2,3\}^n}
|\widehat\mu_H(\bfp)-\mu_H(\bfp)|^2
}  
\le
\sqrt{4^n\left(\frac{\eps}{d}\right)^2}
=
\eps .
\]
Therefore, applying the first statement with accuracy $\eps/d$ yields an NFN-accurate Hamiltonian estimate with accuracy $\eps$ and success probability at least $2/3$. This gives the second statement of \tref{thm:hamiltonian_learning_informal}.

\subsubsection{Lower bounds for Hamiltonian learning}

Beyond the above upper bound arguments, it is natural to ask whether the scaling of our protocols can be further improved. 
We show that, up to polylogarithmic factors, the answer is essentially negative for general Hamiltonians. 
In particular, we prove lower bounds against arbitrary coherent learning protocols that are given access only to the real-time evolutions $e^{-\mathrm{i}Ht}$. 
Such protocols may use quantum memory, perform coherent adaptive processing between different evolution queries, and carry out an arbitrary final measurement. The formal statement and proof of the following theorem are given in Appendix~\ref{appendix:ham_learning_lower_bound}.

\begin{theorem}[Evolution time lower bounds for Hamiltonian-learning, informal]
\label{thm:coherent-hamiltonian-learning-lower-bounds}
Consider any coherent protocol that is given access only to the real-time evolutions of an unknown traceless Hamiltonian $H$. The following lower bounds hold even under the promise $\|H\|_\infty\le 1$.

\begin{enumerate}
\item If the protocol outputs $\eps$-additive estimates of all $\mu_H(\bfp)$ with probability at least $2/3$, where $0<\eps<\frac{1}{4d}$, then it must use total evolution time
$T=\widetilde{\Omega}(d\eps^{-1})$.

\item If the protocol outputs $\widehat H\in \mathbb{C}^{d\times d}$ such that
$\frac1{\sqrt d}\|\widehat H-H\|_F\le \eps$
with probability at least $2/3$, then it must use total evolution time
$T=\widetilde{\Omega} (d^2\eps^{-1})$. 
\end{enumerate}
\end{theorem}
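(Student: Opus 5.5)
The plan is to prove both items with one template: a packing argument plus Fano's inequality, combined with a reduction from continuous-time access to discrete oracle access. The lower-bound machinery behind Theorem~\ref{thm:cseu_lower_bound} (and the tomography lower bound of \cite{Haah_2023}) is stated for queries to $U,U^\dagger,\mathrm cU,\mathrm cU^\dagger$. I would therefore first show that any coherent protocol using total evolution time $T$ under $H$, with $\|H\|_\infty\le 1$, can be simulated to small error by $\widetilde{\mathcal O}(T)$ queries to $\mathrm cU$ and $\mathrm cU^\dagger$ with $U=e^{-\mathrm iH}$.

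\textbf{Step 1 (simulation).} Each evolution $e^{-\mathrm iHt}$ with $t\ge 1$ splits into $\lfloor t\rfloor$ uses of $U$ plus a fractional remainder. A fractional power $e^{-\mathrm iHs}$ with $s\in[0,1)$ can be implemented via quantum singular value transformation or fractional-query techniques. This uses $\mathcal O(\log(1/\delta))$ controlled queries to $U$ and $U^\dagger$, since the eigenphases of $U$ lie in $[-1,1]$, well away from $\pm\pi$. Taking $\delta=1/\poly(T)$ and summing the errors over all queries gives total error $o(1)$. This costs $\widetilde{\mathcal O}(T)$ queries, which is the source of the polylogarithmic loss in $\widetilde\Omega$.

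\textbf{Step 2 (packing families).}
\begin{itemize}
\item For item~2, take a Gilbert--Varshamov code $\{x\}\subset\{\pm1\}^{d^2-1}$ of size $2^{\Omega(d^2)}$ with pairwise Hamming distance $\Omega(d^2)$. Set $H_x=a\sum_{\bfp\neq 0}x_{\bfp}\sigma_{\bfp}$ with $a=\Theta(\eps/d)$. Then $d^{-1/2}\|H_x-H_y\|_F=\Theta(a\,d)>2\eps$, so an NFN learner identifies $x$. I would pick the code inside the set of sign patterns with $\|H_x\|_\infty=\mathcal O(a\,d)=\mathcal O(\eps)\le 1$; a random sign matrix has this property with high probability by matrix concentration.
\item For item~1, use $d-1$ commuting diagonal $Z$-type Paulis, $H_x=3\eps\sum_{\bfp\in S}x_{\bfp}\sigma_{\bfp}$ with $x\in\{\pm1\}^{d-1}$. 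Then $\|H_x\|_\infty\le 3\eps d<1$ thanks to $\eps<1/(4d)$, and $\eps$-additive coefficient estimates recover every bit. This gives $2^{d-1}$ hypotheses.
\end{itemize}
In both families, Fano's inequality requires the mutual information between $x$ and the protocol output to be $\Omega(d^2)$, respectively $\Omega(d)$.

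\textbf{Step 3 (information per query), the main obstacle.} I need to show that $K$ queries to controlled and inverse versions of $U_x=e^{-\mathrm iH_x}$, where all $U_x$ lie within $\mathcal O(\eps)$ of $\openone$, carry at most $\widetilde{\mathcal O}(K\eps)$ bits about $x$. Combined with Step~1 and Step~2, this yields $T=\widetilde\Omega(d^2/\eps)$ and $T=\widetilde\Omega(d/\eps)$.

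I would first try to reuse the hard-instance analysis behind Theorem~\ref{thm:cseu_lower_bound} and \cite{Haah_2023}. Both produce Heisenberg-type bounds for families of unitaries close to identity, and both allow controlled and inverse queries. The cleanest route is to check that my families $\{U_x\}$ are, after a fixed rescaling, subfamilies of those hard instances, or at least satisfy the same hypotheses those proofs use, namely closeness to identity together with pairwise separation. If that fails, I would prove the bound directly. Write each query as $\openone+\Delta_x$ with $\|\Delta_x\|_\infty=\mathcal O(\eps)$ and $\Delta_x$ close to $-\mathrm iH_x$. Then bound the Holevo information of the final ensemble by a telescoping hybrid argument over the $K$ queries, in which each query contributes $\mathcal O(\eps)$ times a logarithmic factor. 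The delicate point is to obtain a linear-in-$\eps$ (Heisenberg) rather than $\eps^2$-type per-query bound without paying an extra factor of $d$. Here I would exploit the averaging over the random signs $x$: the averaged channel is nearly identity-like, so the per-query contribution to the Holevo information is governed by $\|H_x\|_\infty$ rather than by the Frobenius norm.
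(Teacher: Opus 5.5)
Your Step~3 carries the entire difficulty of the lower bound, and it is left open; the idea that closes it is missing. After $k$ queries, a hybrid argument only gives $\|\rho_x-\rho_{\rm ref}\|_1=\mathcal O(k\eps)$. By itself this yields $k=\Omega(1/\eps)$ with no $d$-dependence. Converting it into a Holevo bound through entropy continuity costs $\log$ of the dimension of the query register \emph{plus memory}, which is unbounded. What you need is a bound on the \emph{rate} of growth of $\chi$ that involves only $\max_x\|H_x\|_\infty$ and $\log d$. The paper obtains it from the small-incremental-entangling (SIE) theorem, in four steps:
\begin{itemize}
\item Evolving the cq-state $\sum_x p_x|x\rangle\langle x|_X\otimes\rho_x^{QA}$ under $H_x$ is evolution under the controlled Hamiltonian $K_{XQ}=\sum_x|x\rangle\langle x|_X\otimes H_x$, with $\|K_{XQ}\|_\infty=\max_x\|H_x\|_\infty$.
\item The entropies $S(\rho_x)$ are unitarily invariant, so $\Delta\chi=\Delta S(QA)$.
\item SIE applied to a purification across the cut $QA:XR$ gives $|\Delta S(QA)|\le C_{\rm SIE}\,t\,\max_x\|H_x\|_\infty\log d$, independent of the ancilla (Lemmas~\ref{lem:SIE} and~\ref{lem:small-incremental-holevo}).
\item $H$-independent channels cannot increase $\chi$, so $\chi_{\rm final}\le C\,T\max_x\|H_x\|_\infty\log d$.
\end{itemize}
This argument works directly in continuous time, which also makes your Step~1 unnecessary. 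That matters, because Step~1 as written fails. A protocol may use $m\gg T$ short evolution segments, so your simulation costs $\mathcal O(T+m\log(1/\delta))$ queries, and $\delta$ must be $o(1/m)$ rather than $1/\poly(T)$. Controlling this would need a fractional-query compression argument that you do not give. Nor does Theorem~\ref{thm:cseu_lower_bound} help. In this paper it is proved by a black-box reduction to the diamond-norm tomography bound of \cite{Haah_2023}, so it supplies no per-query information estimate. Moreover, an NFN-accurate $\widehat H$ does not give a diamond-norm-accurate $e^{-\mathrm i\widehat H}$, so the tomography bound cannot be invoked directly either.

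Second, even granting Step~3, your item-1 family cannot give $d/\eps$. The bound obtained this way is $T\gtrsim\log M/(\max_x\|H_x\|_\infty\log d)$. Your Step-3 premise, that every $U_x$ is $\mathcal O(\eps)$-close to $\openone$, is false for this family. The Hamiltonian $H_x=3\eps\sum_{\bfp\in S}x_{\bfp}\sigma_{\bfp}$ has $\|H_x\|_\infty=3\eps(d-1)$ at $x\equiv 1$, and for every $x$ we have $\|H_x\|_\infty\ge\|H_x\|_F/\sqrt d=3\eps\sqrt{d-1}$. With $\log M\le d-1$, no subfamily gives more than $\widetilde\Omega(\sqrt d/\eps)$, and the full cube gives only $\widetilde\Omega(1/\eps)$.

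The fix, as in the paper, is to derive item~1 from item~2. $\eps$-accurate estimates of all $d^2$ Pauli coefficients give $d^{-1/2}\|\widehat H-H\|_F\le d\eps$, so you can apply the NFN bound at accuracy $d\eps<1/4$. The hard family then has $\|H_x\|_\infty=\Theta(d\eps)\le 1$, which is exactly where $\eps<1/(4d)$ is used, while $\log M=\Omega(d^2)$; this gives $T=\Omega(d/(\eps\log d))$.

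Your item-2 family, random-sign Pauli sums with $a=\Theta(\eps/d)$ expurgated to small operator norm, is a legitimate substitute for the paper's packing $H_x=4\eps R_x$ of random reflections (Lemma~\ref{lem:reflection-packing}). It only costs a matrix-concentration step to keep $\|H_x\|_\infty=\mathcal O(\eps)$, whereas the reflections have $\|H_x\|_\infty=4\eps$ exactly.
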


A noteworthy consequence of the lower bound on the total evolution time for Pauli-coefficient learning in \tref{thm:coherent-hamiltonian-learning-lower-bounds} is that it gives a negative answer to a question raised by Huang, Tong, Fang, and Su~\cite{huang_2023}: whether the parameters of an unstructured $n$-qubit Hamiltonian with all-to-all interactions can be learned with total evolution time $\widetilde{\mathcal O}(\eps^{-1})$, independent of the dimension.

Our result, to our knowledge, gives the first nearly optimal Hamiltonian-learning protocol for completely general Hamiltonians from real-time dynamics. 
For the Pauli-coefficient learning task, our protocol uses total evolution time $\widetilde \bigO(d\eps^{-1})$, matching the lower bound $\widetilde{\Omega}(d\eps^{-1})$ in the high-precision regime $\eps=\bigO(d^{-1})$. 
For NFN Hamiltonian learning, our protocol uses total evolution time $\widetilde \bigO(d^2\eps^{-1})$, matching the lower bound $\widetilde{\Omega}(d^2\eps^{-1})$. 
Thus, in the absence of structural assumptions about $H$, the dimension- and precision-dependence of our protocols is essentially the best possible.

\subsubsection{Comparison with prior protocols}

We now compare our protocol with some of the most efficient existing protocols for the Pauli coefficient learning task. 
Among protocols that achieve Heisenberg scaling in the target precision, the protocol of \cite{HuAnsatzFree2025} has total evolution time $\widetilde{\mathcal O}(m^2\eps^{-1})$ for learning a Hamiltonian with $m$ nonzero Pauli terms. 
This is highly efficient when $m$ scales polynomially with $n$. 
However, for a completely general Hamiltonian, one has $m=d^2$ in the worst case, and hence the total evolution time becomes
$\widetilde{\mathcal O}(d^4\eps^{-1})$.
The SPAM-robust protocol of \cite{ma2024learning} learns a $k$-body Hamiltonian with $m$ terms. 
For estimating the coefficient vector in $\ell_2$ error, its total evolution time is
$\widetilde{\mathcal O}\left(9^k m\eps^{-1}\right)$.
This protocol is also efficient when $k$ and $m$ are small. 
However, without structural assumptions, one may have $k=n$ and $m=d^2$, in which case the total evolution time becomes
\[
    T=\widetilde{\mathcal O}\left(9^n d^2\eps^{-1}\right)
    =
    \widetilde{\mathcal O}\left(d^{2+2\log_2 3}\eps^{-1}\right)
    \approx
    \widetilde{\mathcal O}\left(d^{5.17}\eps^{-1}\right),
\]
which is highly unfavorable for learning completely general Hamiltonians. 
By contrast, our protocol learns all Pauli coefficients of a completely general Hamiltonian with total evolution time
$\widetilde{\mathcal O}(d\eps^{-1})$, reducing the scaling to be linear in the system dimension $d$.

Two other recent structure-free protocols, proposed in \cite{caro2022learning,zhao2024learning}, avoid explicit dependence on the system dimension $d$, but have substantially worse dependence on the target precision $\eps$ and on $\|H\|_\infty$. 
Both require total evolution time $\widetilde \bigO(\|H\|_\infty^3\eps^{-4})$ to estimate all $d^2$ Pauli coefficients up to additive error $\eps$. 
In comparison, our result improves the dependence on $\eps^{-1}$ from quartic to linear and removes the explicit dependence on $\|H\|_\infty$ from the total evolution time, at the price of introducing a linear dependence on $d$. 
This leads to a complementary regime of advantage: when $\|H\|_\infty$ is large or the target precision is high, our protocol has a more favorable total evolution time than those of \cite{caro2022learning,zhao2024learning}. 

Our results also give a partial affirmative answer to an open question raised in~\cite{caro2022learning}, which asked whether the quartic dependence $\eps^{-4}$ in structure-free Hamiltonian learning can be improved toward the Heisenberg scaling $\eps^{-1}$. 
Our result resolves the precision-scaling part of this question. 
Although this improvement replaces the polynomial dependence on $n$ in~\cite{caro2022learning} by a linear dependence on $d=2^n$, the lower bound in Theorem~\ref{thm:coherent-hamiltonian-learning-lower-bounds} shows that this dimension dependence is essentially unavoidable in the high-precision regime.

Compared with Pauli coefficient learning, NFN Hamiltonian learning has received less attention in the literature. 
As mentioned, a natural way to obtain an NFN guarantee is to run a Pauli coefficient learning protocol with target entrywise accuracy $\eps/d$. 
Under this conversion, the protocol of \cite{HuAnsatzFree2025} would require total evolution time
$\widetilde{\mathcal O}(d^5\eps^{-1})$
in the worst case. 
The guarantee of \cite{ma2024learning} for estimating the coefficient vector in $\ell_2$ error is equivalent to NFN Hamiltonian learning, so its worst-case total evolution time remains
$\widetilde{\mathcal O}(d^{5.17}\eps^{-1})$
for completely general Hamiltonians. 
Building on a different CSEU-based approach, \cite{Li_2025} achieves NFN Hamiltonian learning with total evolution time
$\widetilde{\mathcal O}(d^{2.5}\eps^{-2})$.
Another recent protocol of \cite{castaneda2023hamiltonian} reduces the total time to
$\widetilde{\mathcal O}(d^2\|H\|_\infty^2\eps^{-1})$,
but requires access to the backward time evolution $e^{\mathrm iHt}$, which is a strictly stronger access model and is physically unrealizable in general.
By contrast, our NFN learning protocol requires only forward real-time evolution $e^{-\mathrm iHt}$ and achieves total evolution time
$\widetilde{\mathcal O}(d^2\eps^{-1})$,
thereby improving the best known forward-only scaling for NFN learning of completely general Hamiltonians.

\subsection{Near-optimal learning of the Pauli transfer matrix}
\label{subsec:ptm_learning}

Another natural representation of a quantum process is its \emph{Pauli transfer matrix} (PTM) \cite{caro2022learning}. 
For an $n$-qubit unitary $U$, the PTM is the real $d^2\times d^2$ matrix whose entries are
\[
    R_U(\bfp,\bfp' )
    :=
    \frac1d
    \Tr \left(\sigma_{\bfp}U\sigma_{\bfp' }U^\dagger\right),
    \qquad
    \bfp,\bfp' \in\{0,1,2,3\}^n ,
\]
where $d=2^n$ and $\sigma_{\bfp}$ are $n$-qubit Pauli operators. 
The PTM describes the action of the (unitary) channels in the Pauli basis. It serves as a standard representation for tasks like quantum process tomography \cite{ChenRobust2021}, noise characterization \cite{Chen2023Learnability}, randomized benchmarking \cite{Helsen2019}, and Hamiltonian learning \cite{caro2022learning}.

One concrete application of the PTM is to characterize operator spreading (see, \eg, \cite{Nahum_2018, Nadimpalli_2024}) under unknown quantum dynamics.
Indeed, each column of $R_U$ gives the Pauli expansion of a Heisenberg-evolved Pauli operator under the dynamics of $U$:
\[
U^\dagger\sigma_{\bfp'}U
=
\sum_{\bfp\in \{0,1,2,3\}^n}
R_U(\bfp,\bfp')\sigma_{\bfp}.
\]
Suppose, for instance, that $\sigma_{\bfp'}$ is initially a single-site Pauli operator.
After the evolution, $U^\dagger\sigma_{\bfp'}U$ may contain Pauli strings supported on many qubits.
The cumulative Pauli-weight profile \cite{Nadimpalli_2024}
\[
\mathbf{Wt}_{\le k}(\bfp')
:=
\sum_{\bfp:\,|\bfp|\le k}
|R_U(\bfp,\bfp')|^2
\]
measures how much of the evolved operator remains supported on at most $k$-body Pauli components.
For shallow or local dynamics, this quantity remains close to one for small $k$, reflecting limited operator growth; 
For scrambling dynamics or strong coherent error propagation, weight is transferred to higher-body components, causing $\mathbf{Wt}_{\le k}(\bfp')$ to decrease for small $k$.
Thus, PTM learning provides a direct way to quantify how an initially local observable, error, or piece of information spreads into increasingly nonlocal degrees of freedom.

Our CSEU protocol gives a direct procedure for learning the PTM of an unknown unitary channel. 
The key observation is that each nontrivial PTM entry can be written as a CSEU prediction problem with a highly mixed input state. 
For any non-identity Pauli $\sigma_{\bfp' }$, define
\[
    \rho_{\bfp' }:=\frac{\openone+\sigma_{\bfp' }}{d},
    \qquad
    O_{\bfp}:=\sigma_{\bfp}.
\]
Then, whenever $\bfp\ne \mathbf 0$,
\[
    \Tr \left(O_{\bfp}U\rho_{\bfp' }U^\dagger\right)
    =
    \frac1d \, \Tr \left(\sigma_{\bfp}U\sigma_{\bfp' }U^\dagger\right)
    =
    R_U(\bfp,\bfp' ).
\]
Although $O_{\bfp}$ is a full-rank Pauli observable, the corresponding input state $\rho_{\bfp' }$ is highly mixed. 
This low purity of $\rho_{\bfp' }$ compensates for the large effective size of the Pauli observable: as explained in Section~\ref{sec:technical_overview_UB}, the relevant query complexity depends on the product of the observable size and the input-state purity. 
Moreover, since the same shadow can be reused for many prediction requests (see Remark~\ref{remark:Mproperties}), all $d^4$ PTM entries can be estimated simultaneously with only a logarithmic overhead.

\begin{theorem}[Near-optimal PTM learning for unitary channels]
\label{thm:unitary_ptm_learning}
Suppose $d=2^n$ and  $U\in\mathsf U(d)$ is an unknown unitary. 
There exists a parallel and non-adaptive protocol that estimates all $d^4$ PTM entries of $\mathcal U$
to additive error $0<\eps<1$ with success probability at least $2/3$, using $\widetilde{\mathcal O}(d \eps^{-1})$ queries to $U$.
Conversely, for $0<\eps<c/d$, where $c>0$ is a universal constant, any protocol that achieves the same guarantee must use
$\widetilde{\Omega}(d \eps^{-1})$ queries to $U$. 
\end{theorem}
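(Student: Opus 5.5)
The upper bound should follow from the CSEU protocol of Theorem~\ref{thm:cseu_upper_bound_informal} combined with the reduction already described before the statement. The lower bound should follow by embedding the hard instances for Pauli-coefficient learning of Theorem~\ref{thm:coherent-hamiltonian-learning-lower-bounds} into PTM learning.

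\emph{Upper bound.} First dispose of the trivial entries. Unitality and trace preservation give $R_U(\mathbf 0,\mathbf 0)=1$ and $R_U(\mathbf 0,\bfp')=R_U(\bfp,\mathbf 0)=0$ for $\bfp,\bfp'\neq\mathbf 0$, so these are output exactly. For $\bfp,\bfp'\neq\mathbf 0$ we use the identity $R_U(\bfp,\bfp')=\Tr(O_{\bfp}U\rho_{\bfp'}U^\dagger)$ with $\rho_{\bfp'}=(\openone+\sigma_{\bfp'})/d$ and $O_{\bfp}=\sigma_{\bfp}$.

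Plugging $\B=\Tr(\sigma_{\bfp}^2)=d$ directly into Theorem~\ref{thm:cseu_upper_bound_informal} would only give $\mathcal O(d^{3/2}\eps^{-1})$. So the main step is the refined variance bound for our estimator (the one alluded to in Section~\ref{sec:technical_overview_UB}). That bound controls the error in terms of the product of $\Tr(O_0^2)$ and the purity of the traceless part of $\rho$, namely $\Tr(\rho_0^2)$, with the rank-one case normalised so that $\Tr(\rho_0^2)\approx 1$ recovers $d\sqrt{\B}\eps^{-1}$. For the PTM instance, $\Tr(O_0^2)=d$ and $\Tr(\rho_{\bfp',0}^2)=\Tr(\sigma_{\bfp'}^2)/d^2=1/d$. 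The product is $1$, so a single shadow with $\mathcal O(d\eps^{-1})$ parallel queries estimates each entry to error $\eps$ with probability $2/3$.

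I would verify this by re-running the second-moment computation of the estimator with a general input $\rho$ rather than a pure state, which I expect to be the main technical obstacle of the upper bound. Failing that, I would expand $\sigma_{\bfp'}=\Pi_+-\Pi_-$ into rank-$d/2$ projectors and bound the variance of each mixed-input term directly. Finally, Remark~\ref{remark:Mproperties} with $M=d^4$ and $\delta=1/3$ gives all entries simultaneously with $\mathcal O(d\eps^{-1}\log d)=\widetilde{\mathcal O}(d\eps^{-1})$ queries. The repetitions are independent, so the protocol remains parallel and non-adaptive.

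\emph{Lower bound.} I would reduce from item~1 of Theorem~\ref{thm:coherent-hamiltonian-learning-lower-bounds}. Given a traceless $H$ with $\|H\|_\infty\le 1$ (or a suitably small constant), a PTM learner querying $U=e^{-\mathrm iH}$ uses total evolution time equal to its query count. By Duhamel's formula, for anticommuting $\sigma_{\bfq},\sigma_{\bfp'}$,
\[
R_U(\bfp,\bfp')=\pm 2\,\mu_H(\bfq)+r_{\bfp,\bfp'}(H),\qquad \sigma_{\bfp}\propto\sigma_{\bfq}\sigma_{\bfp'},
\]
where $r$ collects second- and higher-order terms. The difficulty is that these remainders are not below $\eps$ for general $H$.

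I would therefore not reduce from arbitrary Hamiltonians. Instead I would reopen the hard family used in the proof of Theorem~\ref{thm:coherent-hamiltonian-learning-lower-bounds}, presumably a packing of Hamiltonians $H_x=\theta\sum_{\bfq}x_{\bfq}\sigma_{\bfq}$ with $\theta\approx\eps$. For such a family the higher-order corrections are $\mathcal O(\theta^2\|\cdot\|)$ and are either identical across the family or small compared with the separation $\eps$ in the regime $\eps<c/d$. The plan is to show that $\eps$-accurate PTM entries distinguish the members of the packing, for instance by identifying $x$ from the signs of first-order terms. The $\widetilde\Omega(d\eps^{-1})$ bound then transfers with only a constant rescaling of $\eps$. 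The step I expect to be hardest is exactly this control of the nonlinear remainder uniformly over the packing. If it fails, the fallback is to redo the information-theoretic (Holevo or packing) argument of the CSEU lower bound, Theorem~\ref{thm:cseu_lower_bound}, directly for instances $U_x$ whose PTM entries encode $x$ at scale $\eps$.
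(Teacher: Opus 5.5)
Your upper bound is the paper's argument. Theorem~\ref{thm:cseu_upper_bound} already covers mixed inputs through the purity parameter. With $\P=2/d$ and $\B=d$ (so $\B\P=2$, and $\P\le \B/d$ as the high-precision branch requires), it gives $\mathcal O(d\eps^{-1})$ queries per entry, and the median trick over the $d^4$ entries finishes. Nothing needs to be re-derived.

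The lower bound has a genuine gap at exactly the step you flag, and the route you sketch for closing it does not go through as stated.

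\emph{The hard family.} The family behind item~1 of Theorem~\ref{thm:coherent-hamiltonian-learning-lower-bounds} is not a sign-vector family. It is $H_x=4d\eps R_x$, where $R_x=V_xOV_x^\dagger$ is a Frobenius packing of reflections (Lemma~\ref{lem:reflection-packing}). A time-one query is therefore $e^{-\mathrm i\lambda R_x}$ with $\lambda=\Theta(d\eps)$, which is the generator norm the Holevo step needs to produce $d/(\eps\log d)$.

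\emph{Why entrywise control fails.} For this family, the Pauli coefficients of $H_x-H_y$ are only guaranteed to differ by $\Omega(\eps)$ in $\ell_\infty$: an $\Omega(d\eps)$ gap in $\ell_2$ spread over $d^2$ coordinates. The quadratic remainder in a single PTM entry is only controlled by $\mathcal O(\lambda^2)=\mathcal O(d^2\eps^2)$. It also depends on $x$, so it is not common to the family. Keeping the remainder below the separation entry by entry therefore requires $\eps\ll d^{-2}$, not $\eps<c/d$. The same arithmetic applies to the sign-vector family you guessed, since there $\|H_x\|_\infty\ge\|H_x\|_F/\sqrt d\approx\theta d$.

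\emph{The missing idea.} This is Lemma~\ref{lem:ptm-separation-packed-reflections}: compare the two PTMs in aggregate $\ell_2$ norm over all $d^4$ entries, and only afterwards pass to $\ell_\infty$.
\begin{itemize}
\item First-order part: by Lemma~\ref{lemma:Pauli_orthogonality} and the twirl identity $\sum_{\bfp'}\sigma_{\bfp'}A\sigma_{\bfp'}=d\Tr(A)\openone=0$ for the traceless $A=R_x-R_y$, the first-order part of $R_{U_x}-R_{U_y}$ has $\ell_2$ norm $\lambda\sqrt{2d}\,\|A\|_F\ge\sqrt2\,\lambda d$.
\item Remainder: for each $\bfp'$ the second-order remainder operator has Frobenius norm $\mathcal O(\lambda^2\sqrt d)$, so by Parseval again the remainders have total $\ell_2$ norm $\mathcal O(d\lambda^2)$.
\item Conclusion: for $\lambda$ below an absolute constant, the full difference has $\ell_2$ norm $\Omega(\lambda d)$, so some entry differs by $\Omega(\lambda/d)$. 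Choosing $\lambda=\Theta(d\eps)$ gives $4\eps$-separation. Note that $\lambda$ stays below the constant exactly when $\eps<c/d$, which is where that hypothesis comes from.
\end{itemize}
Fano with $\log M=\Omega(d^2)$, together with Lemma~\ref{lem:small-incremental-holevo} giving an increment of $\mathcal O(\lambda\log d)$ per query, then yields $K=\Omega\bigl(d/(\eps\log d)\bigr)$.

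Your fallback has the right shape, but it presupposes exactly these instances. Also, the paper's CSEU lower bound goes through the tomography reduction rather than a Holevo argument, so it does not provide them.
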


The upper and lower bounds match up to polylogarithmic factors. 
Thus, Theorem~\ref{thm:unitary_ptm_learning} establishes the near-optimal query complexity of PTM learning for unitary channels in the high-precision regime. 
The lower bound is proved by an information-theoretic packing argument. 
We construct a family of size $\exp(\Theta(d^2))$ consisting of small rotations
$U_x=\exp(-\mathrm i\lambda R_x)$
around well-separated reflections $R_x$. 
For $\lambda=\Theta(d\eps)$, the corresponding PTM vectors are separated by $\Omega(\eps)$ in $\ell_\infty$ distance, so any protocol that learns all PTM entries can identify the hidden index $x$. 
We then show that this identification task requires $\widetilde{\Omega}(d\eps^{-1})$ queries, which implies the lower bound in Theorem~\ref{thm:unitary_ptm_learning}. 
The detailed proof is deferred to Appendix~\ref{appendix:ptm_learning_lower_bound}.

It is useful to compare our result with the PTM-learning protocol of Caro~\cite{caro2022learning}, which learns the full PTM of an arbitrary $n$-qubit quantum channel using $\mathcal O(n \eps^{-4})$ oracle queries. 
Caro's protocol applies to general channels and is more dimension-efficient, making it preferable at fixed or moderate precision. 
However, its quartic dependence on $\eps^{-1}$ becomes costly in high-precision regimes. 
By contrast, our CSEU-based protocol achieves the Heisenberg-limited scaling $\widetilde{\mathcal O}(d \eps^{-1})$, making it advantageous in the high-precision regime.

\subsection{Estimating pure state properties}
\label{subsec:estimate_quantum_state_properties}

Estimating properties of pure quantum states, such as expectation values of quantum observables, is a crucial and fundamental task in many quantum information and quantum simulation applications. 
Here we consider the following problem:
Let $\{O_i\}_{i=1}^M$ be a collection of $M$ known Hermitian observables acting on a $d$-dimensional Hilbert space, satisfying
$\|O_i\|_\infty \le 1$ and 
$\Tr(O_i^2)\le \B$ for all $1\le i\le M$.
Let $\ket{\psi}=U\ket{0^n}$ be an unknown pure state prepared by an unknown unitary $U$, to which we have oracle access. 
The goal is to estimate
$\bra{\psi}O_i\ket{\psi}$
within additive error $0<\eps<1$ for all $1\le i\le M$.

Since $\bra{\psi}O_i\ket{\psi}= \tr[O_i \cdot U\ket{0^n}\!\bra{0^n}U^\dag]$, our CSEU protocol can naturally be employed to solve this task. 
By Remark~\ref{remark:Mproperties}, all $M$ expectation values can be estimated up to accuracy $\eps$ using
\[
\bigo{\frac{d\sqrt{\B}}{\eps}\log{M}}
\]
queries to $U$. 
As we explain below, this query complexity is favorable in the high-precision regime compared with previous approaches.

We first compare with the most direct protocol, namely, estimating each observable separately by repeatedly preparing $\ket{\psi}$ and directly measuring $O_i$. Since $\|O_i\|_\infty\le 1$, this requires $\bigo{M \eps^{-2} }$
queries in total. 
Hence, compared with direct measurement, our protocol achieves a substantial improvement when $M$ is large, or more generally in the high-precision regime where $\eps\ll (d\sqrt{\B})^{-1}$.

The second approach is the classical shadow estimation for pure states (CSEPS) proposed in \cite{Grier_2024}. 
This protocol performs joint measurements on multiple copies of the target state $\ket{\psi}$ and uses
\[
\bigo{\left(\frac{\sqrt{\B}}{\eps}+\frac{1}{\eps^2}\right)\log{M}}. 
\]
queries to $\ket{\psi}$. 
This complexity is optimal in the access model where only copies of the unknown state $\ket{\psi}$ is available \cite{Grier_2024}. 
By contrast, our protocol works with a stronger access model, where we are allowed access to the state-preparation unitary $U$ of $\ket{\psi}$, \ie, $\ket{\psi}=U\ket{0^n}$. 
This stronger access enables our CSEU-based approach to achieve Heisenberg scaling, and thus yields a better query complexity in the high-precision regime $\eps\ll (d\sqrt{\B})^{-1}$.

The third approach is to first conduct state tomography of the unknown pure state $\psi=\ket{\psi}\!\bra{\psi}$ itself, and then classically evaluate all observables on the reconstructed state. 
More precisely, suppose one constructs a classical description of a state $\hat\psi$ such that
$\|\hat\psi- \psi\|_1\le \eps$.
Then, for every observable $O_i$, the estimation accuracy is automatically guaranteed, since
$\bigl|\Tr[O_i\hat\psi]-\bra{\psi}O_i\ket{\psi}\bigr|
\le
\|\hat\psi- \psi\|_1
\le \eps$.
The most query-efficient protocol currently known for this inverse-free pure-state estimation task is given in \cite{chen2025inverse}, with query complexity
\[
\bigO \left(\min\left\{\frac{d^{3/2}}{\eps},\,\frac{d}{\eps^2}\right\}\right).
\]
Thus, for $M=\poly(d)$, our protocol is more favorable whenever $\B=o(d)$ and $\eps= o(\B^{-1/2})$.

Finally, \cite{PhysRevLett.129.240501,PRXQuantum.6.020308} gives Heisenberg-limited protocols for estimating multiple expectation values using
$\widetilde{\mathcal{O}} \left(\sqrt{M} \eps^{-1}\right)$
queries. 
However, these methods assume query access to both the state-preparation unitary $U$ and its inverse $U^{\dagger}$, whereas our setting allows access only to the forward oracle $U$. 
If one attempts to simulate the implementation of $U^{\dagger}$ using queries to $U$, by \cite{chen2025tight,chen2024quantum,drp2-rzzw}, this simulation requires $\Theta(d^2)$ queries to $U$ in general. 
As a result, the overall query complexity for solving the quantum state property estimation task becomes
$\widetilde{\mathcal{O}} \left( d^2\sqrt{M}\eps^{-1}\right)$,
which is strictly worse than our CSEU-based complexity.

In summary, our CSEU protocol provides an advantageous approach to high-precision quantum state property estimation using only forward access to the unknown state-preparation unitary $U$. 
In particular, when $M=\poly(d)$, $\B=o(d)$, and the target accuracy $\eps= o(d^{-1}\B^{-1/2})$, our protocol improves on previous methods in query efficiency. 

\subsection{Inverse-free amplitude estimation with fewer queries}
\label{subsec:inverse_free_amplitude_estimation}

Amplitude estimation is a fundamental estimation task with broad applications in quantum information processing.
In its standard model, one is given a known ``good'' subspace $\caH_{\rm good}$ and query access to an unknown state-preparation unitary $U$. 
Let $\Pi$ be the projector onto $\caH_{\rm good}$, which defines a two-outcome measurement $\{\Pi,\openone -\Pi\}$. 
The goal is to estimate the probability that the state $U\ket{0^n}$ passes this measurement, namely
\[
a(U)
:=
\Tr \left[\Pi \, U\ket{0^n}\!\bra{0^n}U^\dagger\right].
\]
Equivalently, suppose the state prepared by $U$ admits the decomposition
\[
U\ket{0^n}
=
\sqrt{a(U)}\,\ket{\phi_{\rm good}}
+
\sqrt{1-a(U)}\,\ket{\phi_{\rm bad}},
\]
where $\ket{\phi_{\rm good}}$ is supported on $\caH_{\rm good}$ and $\ket{\phi_{\rm bad}}$ is supported on the orthogonal complement of $\caH_{\rm good}$. Then the task is to estimate the amplitude $a(U)$ within additive error $0<\eps<1$. 

To solve the amplitude estimation task, the direct sampling protocol achieves query complexity $\bigO(\eps^{-2})$, while standard quantum amplitude estimation attains the Heisenberg scaling $\bigO(\eps^{-1})$ by using coherent amplitude amplification. 
However, the latter requires access to both $U$ and $U^\dagger$, an assumption that can be difficult to satisfy in many scenarios. 
Here we consider amplitude estimation in the inverse-free setting, where the learner has query access only to the forward unitary $U$.

Our CSEU protocol gives a simple inverse-free amplitude-estimation protocol, since estimating the amplitude $a(U)=\Tr [\Pi \, U\ket{0^n}\!\bra{0^n}U^\dagger]$ reduces directly to a CSEU prediction problem with input state $\rho=\ket{0^n}\! \bra{0^n}$ and observable $O=\Pi$. 
Applying \tref{thm:cseu_upper_bound} and noting that 
$\Tr[\left(\Pi-\frac{r}{d}\openone\right)^2]\leq r$ gives the following result.

\begin{corollary}[Inverse-free amplitude estimation]
\label{cor:inverse_free_amplitude_estimation}
Let $U\in \U(d)$ be an unknown unitary accessible only through forward oracle queries, and let $\Pi$ be a known rank-$r$ projector. 
There exists an inverse-free protocol that estimates the amplitude
$a(U)=\Tr \left[\Pi\,U\ket{0^n}\!\bra{0^n}U^\dagger\right]$
within additive error $\eps$ with probability at least $2/3$, using $\bigo{d\sqrt r \eps^{-1}}$ queries to $U$. 
\end{corollary}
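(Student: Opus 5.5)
The corollary is a direct specialization of Theorem~\ref{thm:cseu_upper_bound_informal}: amplitude estimation is a single CSEU request, with input state $\rho=\ket{0^n}\!\bra{0^n}$ and observable $O=\Pi$. The plan has three steps: check that $\Pi$ belongs to $\obs(\B)$ with $\B=r$, invoke the parallel CSEU protocol, and then evaluate the prediction function once.

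\emph{Step 1: $\Pi\in\obs(r)$.} The projector $\Pi$ is Hermitian, and for $r\ge 1$ we have $\|\Pi\|_\infty=1$, which is exactly the normalization required in Problem~\ref{prob:CSEU}. (The case $r=0$ is trivial, since then $a(U)=0$.) Its traceless part is $\Pi_0=\Pi-\frac{r}{d}\openone$. Using $\Pi^2=\Pi$ and $\Tr\Pi=r$,
\[
\Tr(\Pi_0^2)=\Tr\Pi-\frac{2r}{d}\Tr\Pi+\frac{r^2}{d^2}\,d=r-\frac{r^2}{d}\le r .
\]
So $\Pi\in\obs(\B)$ with $\B=r$, and $r\le d$ satisfies the constraint $1\le\B\le d$.

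\emph{Step 2: run the CSEU protocol.} Theorem~\ref{thm:cseu_upper_bound_informal} with $\B=r$ gives a protocol using $\bigO(d\sqrt{r}\,\eps^{-1})$ parallel queries to $U$. It outputs a shadow $\mathsf{CS}(U)$ with
\[
\Pr\Bigl[\bigl|f(\mathsf{CS}(U),\ket{0^n}\!\bra{0^n},\Pi)-a(U)\bigr|\le\eps\Bigr]\ge\frac23 .
\]
The estimator is $f(\mathsf{CS}(U),\ket{0^n}\!\bra{0^n},\Pi)$, which yields exactly the stated guarantee.

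\emph{Step 3: confirm the protocol is inverse-free.} This is the only point that needs real care: the CSEU protocol must use only forward queries to $U$, with no $U^\dagger$ and no controlled-$U$. Theorem~\ref{thm:cseu_upper_bound_informal} is stated for the black-box oracle $U$ in the parallel query model of Section~\ref{sec:query_model_notation}. In that model, $K$ copies of $U$ act in a single layer on a probe state, possibly with ancillas, and are followed by a measurement. Such a protocol requires no inverse access. I would cite this model explicitly and make no further argument.
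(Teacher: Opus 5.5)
Your proposal is correct and follows the paper's own argument: the paper also casts $a(U)$ as a single CSEU request with $\rho=\ket{0^n}\!\bra{0^n}$ and $O=\Pi$, notes $\Tr[(\Pi-\frac{r}{d}\openone)^2]\le r$, and applies the CSEU upper bound with $\B=r$. The paper cites the strengthened Theorem~\ref{thm:cseu_upper_bound} with $\P=1$, from which Theorem~\ref{thm:cseu_upper_bound_informal} follows directly, and your inverse-freeness argument (the protocol only applies $U^{\otimes s}$ to a probe state before a covariant measurement, so no $U^\dagger$ access is needed) is the right reason.
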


Combining this CSEU-based protocol with the direct sampling protocol gives the inverse-free upper bound
\[
\mathcal O \left(
\min\left\{
\frac{d\sqrt{r}}{\eps},
\frac{1}{\eps^2}
\right\}
\right).
\]
This should be compared with the query-complexity upper bound recently established by Chen~\cite{chen2025inverse}, which stems from inverse-free pure-state estimation using an ancilla-free analog of our learning scheme and scales as
\[
\mathcal O \left(
\min\left\{
\frac{d^{3/2}}{\eps},
\frac{1}{\eps^2}
\right\}
\right).
\]
Since $r\le d$, our Heisenberg-scaling term $d\sqrt r\eps^{-1}$ never exceeds the $d^{3/2}\eps^{-1}$ term of \cite{chen2025inverse}. 
Moreover, when the projector $\Pi$ has small rank, our bound gives a sharper rank-dependent improvement. 
In particular, for a rank-one success projector, the Heisenberg-scaling term is improved from $d^{3/2}\eps^{-1}$ to $d\eps^{-1}$, giving a $\sqrt d$-factor improvement. 
Consequently, in the high-precision regime $0<\eps \leq d^{-1}r^{-1/2}$ where the Heisenberg-scaling term determines the upper bound, our CSEU-based protocol yields a strictly better query complexity than the state-estimation-based approach of \cite{chen2025inverse} whenever $r=o(d)$.
Thus, CSEU provides a direct and query-efficient route to inverse-free amplitude estimation, especially when the ``good'' subspace is specified by a low-rank projector.

We also compare our result with standard amplitude-estimation protocols. 
These protocols achieve query complexity $\mathcal O(\eps^{-1})$ without the extra dimension factor, but they rely on inverse access to the unitary $U$. 
If one attempts to simulate this inverse access by implementing $U^{\dagger}$ from forward queries to $U$, then $\Theta(d^2)$ queries are necessary and sufficient in general~\cite{chen2025inverse,chen2024quantum,drp2-rzzw}. 
This would lead to an overall query cost of order $\bigo{d^2\eps^{-1}}$, which is much larger than the query cost $\bigo{d\sqrt r\eps^{-1}}$ of our CSEU-based protocol.

\subsection{Learning shallow circuits}
Learning shallow, or constant-depth, quantum circuits allows us to conduct process tomography of near-term quantum devices \cite{preskill2018nisq}. Specifically, we consider learning a family of shallow circuits with strictly bounded fan-in gates: the $\qnc^0$ circuits, which are the quantum analogs of $\nc^0$ \cite{moore1999quantumcircuitsfanoutparity}. Huang, Liu, Broughton, Kim, Anshu, Landau, and McClean \cite{Huang_2024} initiated the study of learning $\mathsf{QNC}^0$ circuits. 
Subsequent work has explored learning other circuit families containing long-range gate sets \cite{Nadimpalli_2024, qac0_barely_superlinear_2025, pmlr-v291-vasconcelos25a}.

Previous works on learning $\qnc^0$ circuits primarily considered incoherent access to shallow-circuit unitaries, using classical shadows of the output states produced by these circuits. With CSEU, we can design a coherent protocol for learning $\qnc^0$ unitaries in diamond norm at Heisenberg scaling. Specifically, we first learn the Heisenberg-evolved local operators under the action of the shallow-circuit unitary $U$, then invoke the sewing lemma of Huang \etal~\cite{Huang_2024} to efficiently recover the unitary via post-processing.

\begin{corollary}[Heisenberg-limited learning of $\qnc^0$ circuits] \label{corollary:learn_shallow_circuit}
Given query access to an unknown $n$-qubit unitary $U$ generated by a $\qnc^0$ circuit with arbitrary ancillas, there exists a protocol that outputs a classical description of an $n$-qubit channel $\widehat{\mathcal S}_U$ satisfying
$\|\widehat{\mathcal{S}}_U - \mathcal{U}\|_{\diamond} \leq \eps$, using
$\bigo{2^n n \log n \cdot \eps^{-1}}$
queries to $U$ and $\bigo{n^2 \log n \cdot \eps^{-1}}$ classical post-processing time.
\end{corollary}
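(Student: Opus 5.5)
We derive the corollary from the structure of $\qnc^0$ circuits, the CSEU upper bound of \tref{thm:cseu_upper_bound_informal} with the median trick of Remark~\ref{remark:Mproperties}, and the sewing lemma of Huang \etal~\cite{Huang_2024}. The plan has three steps: reduce the goal to learning Heisenberg-evolved single-qubit Paulis, estimate the needed coefficients with CSEU, and feed the estimates to the sewing construction.

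\textbf{Step 1: locality.} Since $U$ is a $\qnc^0$ circuit with bounded fan-in gates, for every qubit $i$ and every $P\in\{X,Y,Z\}$ the operator $U^\dagger P_i U$ is supported on a light cone $L_i$ of constant size. We therefore expand
\[
U^\dagger P_i U=\sum_{\bfp:\,\mathrm{supp}(\bfp)\subseteq L_i} R_U(\bfp,(P,i))\,\sigma_{\bfp},
\]
which has only $\mathcal O(1)$ unknown real coefficients per pair $(i,P)$, hence $\mathcal O(n)$ coefficients in total. With ancillas, the light cone of the $n$ data qubits is still constant-size on the data register once the ancillas are traced out; I would handle this exactly as in \cite{Huang_2024}.

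\textbf{Step 2: coefficient estimation.} Each coefficient is a PTM entry. Following Section~\ref{subsec:ptm_learning}, I would write it as
\[
\Tr\!\left[\sigma_{(P,i)}\,U\rho_{\bfp}U^\dagger\right],\qquad \rho_{\bfp}=\frac{\openone+\sigma_{\bfp}}{d}.
\]
Here the observable is a Pauli and the input state is highly mixed. The relevant quantity is the product of observable size and state purity, which is $\mathcal O(1)$ as noted in the technical overview, so each request effectively has $\B=\mathcal O(1)$. With $M=\mathcal O(n)$ requests and precision $\eta$, Remark~\ref{remark:Mproperties} gives a query cost of $\mathcal O(d\,\eta^{-1}\log n)=\mathcal O(2^n\log n\,\eta^{-1})$.

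\textbf{Step 3: sewing.} The estimated local operators $\widehat{A}_{i,P}$ are projected or rounded onto valid local unitary conjugations as in \cite{Huang_2024}. The sewing lemma then builds a channel $\widehat{\mathcal S}_U$ on $n$ data qubits plus $n$ ancillas. Its diamond error is at most the sum over $i$ of the local errors $\|\widehat{\mathcal A}_i-\mathcal A_i\|_\diamond$. Each local error is $\mathcal O(\eta)$, up to constants from the constant light-cone size. Choosing $\eta=\Theta(\eps/n)$ yields $\|\widehat{\mathcal S}_U-\mathcal U\|_\diamond\le\eps$. The query count becomes $\mathcal O(2^n n\log n\,\eps^{-1})$, which matches the claim.

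\textbf{Post-processing time.} Evaluating $\mathcal O(n)$ prediction functions over $\mathcal O(\log n)$ shadows gives the stated $\mathcal O(n^2\log n\,\eps^{-1})$ classical time. This assumes the prediction function for Pauli-type requests can be evaluated in $\mathcal O(n/\eta)$-scale time from the stored data, which I would verify from the structure of the estimator.

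\textbf{Main obstacle.} The hardest step is converting the entrywise additive errors on the Pauli coefficients into a diamond-norm error on each local sewn piece that is linear in $\eta$, with no extra dimension factors. This matters most if the naive reconstruction of $U^\dagger P_iU$ is not exactly a unitary conjugation. I would first invoke the robust version of the sewing lemma from \cite{Huang_2024}, which already accepts local estimates in operator norm with constant overhead. I would then bound the operator-norm error of each local reconstruction by $\mathcal O(1)\cdot\eta$, using the constant number of terms in the expansion.
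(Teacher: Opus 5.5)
Your argument for the query bound is the paper's. It uses constant-size light cones and estimates the Pauli coefficients of $U^\dagger\sigma_iU$ as $\Tr[\sigma_i\,U\rho_{\bfp}U^\dagger]$ with $\rho_{\bfp}=(\openone+\sigma_{\bfp})/d$. Passing from entrywise to operator-norm error costs only a light-cone-dependent constant (the paper's factor $(2\sqrt2)^{|\mathfrak L_D(i)|}$). It then feeds per-operator accuracy $\eps/(6n)$ into the sewing lemma and applies the median trick over $M=\mathcal O(n)$ requests. Two points should be made explicit.

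\emph{Which CSEU bound.} Cite \tref{thm:cseu_upper_bound} with $(\B,\P)=(d,2/d)$, not \tref{thm:cseu_upper_bound_informal}. The informal version at $\B=d$ only gives $\mathcal O(d^{3/2}/\eta)$. The refined hypotheses hold: $\B\P=2\ge 1$, and $\P\le\B/d$ in the high-precision regime. This gives $\mathcal O(d/\eta)$.

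\emph{What sewing approximates.} The sewing lemma as used in the paper takes the operator-norm approximations $\widehat O_{i,\sigma}$ directly, so no separate rounding step is needed. It bounds the distance to $\mathcal U\otimes\mathcal U^\dagger$, not to $\mathcal U$, by $2\sum_{i,\sigma}\eps_{i,\sigma}$. The $n$-qubit channel is $\Tr_{\rm anc}\circ\mathcal S_U$ with a fixed ancilla input, and the bound transfers by data processing.

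The running-time paragraph is where your plan breaks.
\begin{itemize}
\item The classical shadow consists of dense $2^n\times 2^n$ unitary snapshots. Each prediction $\p_{\bfq}^{-1}\Tr[\sigma_i\widehat U\rho_0\widehat U^\dagger]$ must read essentially all $4^n$ entries of $\widehat U$, and \tref{thm:cseu_upper_bound} only promises $\poly(d)$ prediction time.
\item So the per-evaluation cost of order $n/\eta$ that you intend to verify is not available unless $\eps$ is exponentially small in $n$.
\item Even granting it, $\mathcal O(n\log n)$ evaluations with $\eta=\Theta(\eps/n)$ would give $\mathcal O(n^3\log n\,\eps^{-1})$, not $\mathcal O(n^2\log n\,\eps^{-1})$.
\end{itemize}
The paper does not supply this step either. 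It asserts the $\mathcal O(n\log n\,\widetilde\eps^{-1})$ bound only ``by a similar argument'' to the running-time analysis of \cite{Huang_2024}. That analysis processes cheap randomized-measurement samples, not unitary snapshots. What is genuinely cheap is assembling the $\mathcal O(n)$ constant-size operators $\widehat O_{i,\sigma}$ and the sewing circuit from the estimated coefficients. Producing those estimates from the shadow data takes time exponential in $n$ with the estimator as given. So neither your argument nor the paper's establishes the stated classical time.
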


Although the query complexity still contains a factor $2^n=d$, it improves over full unitary tomography [cf.~Section~\ref{subsec:unitary_channel_tomography}] by nearly a factor of $d$. It also outperforms the $\qnc^0$ circuit learning protocol of Huang \etal \cite[Theorem 5]{Huang_2024}, which uses $\bigo{n^2\log n \cdot \eps^{-2}}$ queries, in the high-precision regime $\eps=o(d^{-1}\log d)$.

Meanwhile, our result is comparable to that of \cite{grewal2026efficientlearningstructuredquantum}, although their protocol relies on a stronger access model that allows queries to both $U$ and $U^{\dagger}$.  They define the Clifford nullity $t \in \{0, 1, \dots, 2n\}$ as a metric for the non-Cliffordness of the quantum circuit, and their protocol requires $\bigo{2^t n^2 \log n \cdot \eps^{-1}}$ queries. For a $\qnc^0$ circuit with macroscopic magic (see, \eg \cite{parham2025quantumcircuitlowerbounds}), the nullity $t = \Omega(n)$, and our protocol is more query-efficient even with a weaker query model. In addition, our CSEU protocol can also be invoked to learn unitaries with bounded gate complexity, slightly improving over a similar result by Zhao \etal~\cite[Theorem 16]{Zhao_2024}: they proved that for an unknown circuit composed of $G$ two-qubit gates\footnote{For a general unitary, we can always decompose it into single- and two-qubit gates via the Solovay-Kitaev theorem, with an appropriate choice of universal gate set \cite{Nielsen2012}.}, $\bigo{2^n G \log(\sqrt{2^n} G \eps^{-1}) \eps^{-1}}$ queries suffice to learn it within diamond distance error $\eps$. Our CSEU-based protocol builds upon the canonical CSEU-to-tomography conversion (see Remark~\ref{remark:canonical_tomography_from_shadow}). The performance of our protocol is summarized as follows. 

\begin{corollary}[Improved bound on learning unitaries of bounded gate complexity]
\label{corollary:recover_bounded_gate_complexity_learning}
     Given query access to an unknown $n$-qubit unitary $U$ composed of $G$ two-qubit gates, there exists a protocol that outputs an estimate $\hat{U}$ such that $\left\| \hat{\mathcal{U}} - \mathcal{U} \right\|_{\diamond} \leq \eps$, and uses $\bigo{2^n G \log(nG \eps^{-1}) \eps^{-1}}$ queries to $U$.
\end{corollary}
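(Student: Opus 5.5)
The plan is to follow the canonical CSEU-to-tomography conversion of Remark~\ref{remark:canonical_tomography_from_shadow}, replacing the pure-state covering net of Proposition~\ref{prop:TomoEfficiency} by a covering net of the hypothesis class itself. Let $\mathsf S_G\subset\U(2^n)$ be the set of unitaries implementable by $G$ two-qubit gates. First, build a finite $\eta$-net $\mathsf N$ of $\mathsf S_G$ in diamond norm, with $\eta=\eps/10$. There are at most $\binom{n}{2}^G\le n^{2G}$ choices of gate placements. For each placement, we take an $\eta/G$-net of $\U(4)$ in operator norm, which has size $(G/\eta)^{\bigO(1)}$. By the telescoping (hybrid) bound, errors add up across the $G$ gates, so
\[
\log|\mathsf N| = \bigO\bigl(G\log n + G\log(G/\eps)\bigr) = \bigO\bigl(G\log(nG\eps^{-1})\bigr).
\]

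Second, for every pair $V,V'\in\mathsf N$ with $\|\mathcal V-\mathcal V'\|_\diamond>\eps/2$, we fix identifier states: pure states $\psi,\phi$ such that
\[
\bigl|\Tr[\phi\, V\psi V^\dagger]-\Tr[\phi\, V'\psi V'^\dagger]\bigr|\ge c\,\eps
\]
for a universal constant $c>0$. Such states exist by the same argument used for Proposition~\ref{prop:TomoEfficiency}. Since the diamond distance between unitary channels is attained on pure inputs without ancilla (up to constant factors), there is a $\psi$ with $\tfrac12\|V\psi V^\dagger - V'\psi V'^\dagger\|_1 = \Omega(\eps)$. We then take $\phi = V\psi V^\dagger$, for which the difference equals $1-|\langle\psi|V^\dagger V'|\psi\rangle|^2$, which is the squared trace distance. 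Here we must be careful: this is quadratic. So instead we choose $\psi$ as a superposition of two eigenvectors of $V^\dagger V'$ whose eigenphases differ by $\Omega(\eps)$, together with a rotated $\phi$, so that the gap is linear in $\eps$. This is the step I expect to be the main obstacle: obtaining a \emph{linear} rather than quadratic separation with rank-one observables, so that $\B=1$ and Heisenberg scaling survive.

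Third, we run the CSEU protocol of \tref{thm:cseu_upper_bound_informal} with $\B=1$ and accuracy $c\eps/3$, using median amplification over all $M\le|\mathsf N|^2$ pairs as in Remark~\ref{remark:Mproperties}. This costs
\[
\bigO\bigl(d\,\eps^{-1}\log M\bigr)=\bigO\bigl(2^nG\log(nG\eps^{-1})\,\eps^{-1}\bigr)
\]
queries. In post-processing, we output any $\hat U\in\mathsf N$ that is consistent, within $c\eps/3$, with the estimates on all tests involving it. Since $U$ is $\eta$-close to some $V^\ast\in\mathsf N$, the quantities for $V^\ast$ lie within $c\eps/3+2\eta$ of the estimates, so a consistent candidate exists once constants are tuned.

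Finally, suppose the output $\hat U$ satisfies $\|\hat{\mathcal U}-\mathcal V^\ast\|_\diamond>\eps/2$. Then the identifier test for the pair $(\hat U,V^\ast)$ separates the two by $c\eps$, while consistency of both with the estimates forces a separation smaller than $c\eps$, a contradiction. Hence $\|\hat{\mathcal U}-\mathcal U\|_\diamond\le\eps/2+\eta\le\eps$ by the triangle inequality.
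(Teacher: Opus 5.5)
Your proposal is correct and follows essentially the paper's proof: a diamond-norm net of $G$-gate unitaries, identifier pairs for well-separated net elements, CSEU with $\B=1$ plus median amplification over $\bigO(|\mathsf N|^2)$ tests, and selection of a consistent candidate. The paper imports the net as Lemma~\ref{lemma:covering_of_bounded_gate_unitary} instead of building it from gate placements and nets of $\U(4)$. The step you flag as the main obstacle is already settled by Lemmas~\ref{lemma:saturability_diamond_norm_Krank_1} and~\ref{lemma:Helstrom2}: a pure input attaining the diamond norm, together with the Helstrom projector onto the positive eigenvector of $V\psi V^\dagger - V'\psi V'^\dagger$, gives a gap of exactly $\tfrac12\|\mathcal V-\mathcal V'\|_\diamond$, which is linear (your two-eigenvector construction also works but is unnecessary). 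Also take $\eta$ as a small multiple of $c\eps$ rather than $\eps/10$, so that twice the consistency threshold stays below $c\eps$.
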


Notably, for $\qnc^0$ circuits of constant depth $D$, one has $G\le Dn/2=\bigO(n)$. 
In this case, Corollary \ref{corollary:recover_bounded_gate_complexity_learning} recovers the query complexity reported in Corollary~\ref{corollary:learn_shallow_circuit} through a different approach, albeit at the cost of less efficient classical post-processing. 
The proofs of Corollaries \ref{corollary:learn_shallow_circuit} and \ref{corollary:recover_bounded_gate_complexity_learning} are deferred to Appendix \ref{sec:proof_for_learning_shallow_circuit}.

\section{Technical overview} 

In this section, we provide an overview of the main technical ideas behind our query-complexity results for CSEU in Section~\ref{sec:main_results} consisting of an upper bound showing that Heisenberg scaling can be achieved with parallel queries and a matching lower bound.

\subsection{Establishing the upper bound}\label{sec:technical_overview_UB}
We prove the upper bound through a more general version of CSEU. Unlike the standard CSEU problem (Problem~\ref{prob:CSEU}), where the prediction guarantee must hold for all state-observable pairs $(\rho,O)$, this general formulation only requires the protocol to work for a prescribed subclass of prediction requests. This allows us to track refined structural parameters of the requests, such as the purity of the input state and the rank of the state or observable. The standard CSEU upper bound in \tref{thm:cseu_upper_bound_informal} will then follow by taking the worst-case value of these parameters.

We now formalize this general CSEU task as follows.  

\begin{problem}[General CSEU problem]
\label{prob:restricted_CSEU}
Let $d\in\mathbb N$, $1\le \B\le d$, $0<\eps<1$, and 
$\mathcal R$ be a prescribed class of prediction requests $(\rho,O)$, where
$\rho\in\mathbb C^{d\times d}$ is a quantum state and $O\in\mathbb C^{d\times d}$ is an observable. 
Let $U\in\U(d)$ be an unknown unitary accessible only through black-box oracle queries.
The goal of general CSEU on the request class $\mathcal R$ is to output classical data
$\mathsf{CS}(U)\in\{0,1\}^*$, together with a deterministic prediction function
\[
f:\{0,1\}^*\times \C^{d\times d}\times \C^{d\times d}\to \mathbb R,
\]
such that for any $(\rho,O)\in\mathcal R$,
\begin{align}
\label{eq:restricted_desired_accuracy}
\Pr\left[
\bigl|
f\bigl(\mathsf{CS}(U),\rho,O\bigr)
-
\Tr[O\,U\rho U^\dagger]
\bigr|
\le \eps
\right]
\ge \frac23 .
\end{align}
Here, the probability is taken over the randomness in the generation of $\mathsf{CS}(U)$.
\end{problem}

The standard CSEU task in Problem~\ref{prob:CSEU} is recovered by taking $\mathcal R$ to be the class of all quantum states $\rho$ and all observables $O\in\obs(\B)$. The general formulation is useful because many applications only require predictions for a structured family of requests. For instance, the input states may have a small purity upper bound. The following theorem records the refined upper bound that we establish in this section.

\begin{theorem}[Strengthened version of \tref{thm:cseu_upper_bound_informal}]
\label{thm:cseu_upper_bound}
Let $1\le \B\le d$, $0<\eps<1$, and $d^{-1} \le \P\le 1$. 
Define the request class
\[
\mathcal R_{\P,\B}
:=
\left\{
(\rho,O):
\rho \text{ is a quantum state},\
\Tr[\rho^2]\le \P,\
O\in\obs(\B),\ \B\P\ge 1
\right\}.
\]
Then the following two statements hold.

\begin{enumerate}
\item There exists a protocol that solves Problem~\ref{prob:restricted_CSEU} on $\mathcal R_{\P,\B}$ using
\[
K =
\bigO\left(
\min_{1\leq s\leq d} \left\{
\frac{1}{\eps^2}
\left[
\frac{d^2}{s^3}\B\P+\frac{s}{d}\P
\right]
+
\frac{d^2}{s\eps}\sqrt{\B\P}
\right\}
\right)
\]
parallel queries to the unknown unitary $U$. 
If additionally $\eps\ge d^{-1}\sqrt{\P/\B}$, then taking $s=d$ gives
\[
K=\bigO\left(\frac{d\sqrt{\B\P}}{\eps}\right).
\]
For this choice of parameters, the resulting classical shadow data can be stored using $\poly(d)$ complex numbers, and for any given request $(\rho,O)$, the prediction function $f$ can be evaluated in $\poly(d)$ classical time, up to standard finite-precision overheads.

\item Assume $\eps\le d^{-1}\sqrt{\P/\B}$. 
In this regime, there exists a protocol that solves Problem~\ref{prob:restricted_CSEU} on
$\mathcal R_{\P,\B}$ using
\[
K=\mathcal O \left(\frac{d\sqrt{\B\P}}{\eps}\right)
\]
parallel queries to $U$, provided that $\P = 1$ or
$\P\le d^{-1} \B$. 
Moreover, the same query complexity is achievable for Problem~\ref{prob:restricted_CSEU} on the subclass
\[
\mathcal R_{\P,\B}^{\rm small}
:=
\left\{
(\rho,O)\in\mathcal R_{\P,\B}:
r(\rho,O)\leq C'\B^2 
\right\},
\quad
r(\rho,O):=\min\{\rank(O),\rank(\rho)\},
\]
where $C'$ is a fixed constant independent of $d$, $\rho$, and $O$. 
For all these cases, the resulting classical shadow data can be stored using $\poly(d)$ complex numbers, and for any given request $(\rho,O)$, the prediction function $f$ can be evaluated in $\poly(d)$ classical time, up to standard finite-precision overheads.
\end{enumerate}
\end{theorem}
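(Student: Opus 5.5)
The approach follows the Bisio--Chiribella--D'Ariano--Facchini--Perinotti framework for parallel learning of unitaries. We probe $U^{\otimes K}\otimes \openone_A$ with a covariant entangled probe state and apply a covariant POVM on the output. The outcome is a classical guess $\widehat V\in\U(d)$, distributed as $V U$ with $V$ drawn from a $U$-independent distribution $\mu$ on $\U(d)$ whose weight concentrates near $\openone$. Schur--Weyl duality decomposes $(\C^d)^{\otimes K}=\bigoplus_\lambda \mathcal V_\lambda\otimes\mathcal M_\lambda$. We take the probe $\sum_\lambda c_\lambda \Ket{\openone_{\mathcal V_\lambda}}/\sqrt{d_\lambda}$, with the multiplicity register playing the role of a reference copy. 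The measurement is the covariant POVM $\{\dd\widehat V\,\ket{\eta_{\widehat V}}\!\bra{\eta_{\widehat V}}\}$. Then $\mu$ has density $|\sum_\lambda c_\lambda \sqrt{d_\lambda}\chi_\lambda(V)|^2$, and the coefficients $c_\lambda$ become a design parameter supported on a chosen family of Young diagrams. The parameter $s$ in the statement governs this choice: we restrict to diagrams with at most $s$ rows, roughly a ``box'' of $s$ rows with $K/s$ columns, smoothed by a Fej\'er/sine-type window in the row lengths, as in Heisenberg-limited phase estimation. Then $K/s$ plays the role of $1/\text{angle}$ per eigenphase block, while $s$ controls how many directions are resolved coherently.

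Given a single outcome $\widehat V$, we build an unbiased linear estimator. By covariance,
\[
\E_{\widehat V}\bigl[\widehat V\rho\widehat V^\dagger\bigr]=\E_{V\sim\mu}\bigl[VU\rho U^\dagger V^\dagger\bigr]=\Phi_\mu(U\rho U^\dagger),
\]
and $\Phi_\mu$ is unitarily covariant under conjugation, since $\mu$ is conjugation-invariant. It is therefore a depolarizing-type map $X\mapsto aX+b\Tr(X)\openone$ on the traceless part and identity part, and the twirl is a mixture of such channels. We invert it with $\widehat\rho_{\rm out}:=a^{-1}\bigl(\widehat V\rho\widehat V^\dagger-b\,\openone\bigr)$ and set $f=\Tr(O\widehat\rho_{\rm out})$, which is unbiased. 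In addition, $1-a=\mathcal O(\text{typical }\|V-\openone\|^2)$. We repeat the experiment $N$ times, splitting $K$ into $N$ blocks of size $K/N$, and use median-of-means (Remark~\ref{remark:Mproperties}). The query count is then governed by the single-shot variance. We expand $\widehat V\rho\widehat V^\dagger-U\rho U^\dagger$ to second order in the generator $G$ of $V=e^{\I G}$. The first-order term $\I[G,U\rho U^\dagger]$ paired with $O_0$ gives a variance of order $\E\|G\|_2^2\cdot\Tr(O_0^2)\Tr(\rho^2)/d^2$-type terms by Haar-like averaging of the conjugation-invariant $G$; this produces the $\sqrt{\B\P}$ factor. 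The second-order and bias-inversion terms produce the $\frac{s}{d}\P$ and $\frac{d^2}{s^3}\B\P$ contributions. Balancing, $N$ repetitions each of size $K/N$ give $K\gtrsim \frac{1}{\eps^2}[\cdots]+\frac{d^2}{s\eps}\sqrt{\B\P}$, where the additive term is the minimal block size needed for the Heisenberg regime ($\E\|G\|_2^2\sim (d^2/(s\,k))^2\cdot\ldots$ for block size $k$). Setting $s=d$ and checking that the $\eps^{-2}$ terms are dominated when $\eps\ge d^{-1}\sqrt{\P/\B}$ gives the first item. For the storage and computation claims, it suffices to store each $\widehat V$ as a $d\times d$ matrix and evaluate $f$ by matrix products, with $a$ and $b$ computable from the character expansion.

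For item 2, $\eps\le d^{-1}\sqrt{\P/\B}$, the $\eps^{-2}$ terms would dominate, because a fixed bias-correction factor amplifies the second-order fluctuation. We switch to a regime with a single or few coherent blocks of size $K\approx d\sqrt{\B\P}/\eps$ and a different probe whose $\mu$ has tails decaying fast enough that the fourth moment of $G$ is controlled by the square of its second moment. The three subcases arise because the fourth-moment term pairs with $\Tr(O_0^2)^2$, $\Tr(\rho^2)^2$, or rank products. When $\P=1$ ($\rho$ pure) or $\P\le\B/d$, or when $\min\{\rank\rho,\rank O\}\le C'\B^2$, a Cauchy--Schwarz bound of the form $|\Tr(O_0 G\rho G)|\le \|G\|_\infty^2\min\{\ldots\}$ lets the fourth-order term be absorbed into $(d\sqrt{\B\P}/K)^2\le\eps^2$. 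I expect this variance bound to be the main obstacle: the Weingarten/character computations of second and fourth moments of $G$ under the designed $\mu$, with tight constants in $s$ and $d$ and without losing logarithmic factors. I would first attempt it via Schur-polynomial (Fourier) analysis on the maximal torus, reducing to eigenphase statistics of a determinantal ensemble.
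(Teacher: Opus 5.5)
\textbf{Item~1 has a genuine gap.} The only estimator you use is the debiased linear one, $f=\Tr(O\widehat\rho_{\rm out})$ with $\widehat\rho_{\rm out}=a^{-1}(\widehat V\rho\widehat V^\dagger-b\openone)$. You analyze it by a second-order expansion in the generator $G$ of $V=e^{\I G}$.

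In the regime item~1 covers, the target budget $\mathcal O(d\sqrt{\B\P}/\eps)$ can be $o(d^2)$ (e.g.\ $\B\P=\mathcal O(1)$, $\eps\gg d^{-1}$). The formula's $\min_{1\le s\le d}$ refers to snapshots built from $s\le d$ parallel queries. For such snapshots the depolarizing parameter is at most $s/(d^2-1)$, which is the Plancherel value and corresponds to entanglement fidelity $(s+1)/d^2$. So $\widehat V$ is essentially Haar-random around $U$, $G$ is not small, and a second-order Taylor expansion controls nothing.

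Worse, the inversion factor $a^{-1}=\p_{\bfq}^{-1}\gtrsim d^2/s$ enters the second moment squared. For generic requests, the exact variance of the linear estimator has leading term $\p_{\bfq}^{-2}\Tr[\sigma_0^2]\Tr[O^2]/(d^2-1)\approx\frac{d^2}{s^2}\B\P$, where $\sigma_0$ is the traceless part of $U\rho U^\dagger$. Averaging $K/s$ snapshots then costs $K\approx\frac{d^2\B\P}{s\eps^2}\ge\frac{d\B\P}{\eps^2}$. This is standard-quantum-limit scaling, a factor $\sqrt{\B\P}/\eps\ge\eps^{-1}$ above the claimed $d\sqrt{\B\P}/\eps$. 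The terms $\frac{d^2}{s^3}\B\P$, $\frac{s}{d}\P$ and $\frac{d^2}{s\eps}\sqrt{\B\P}$, which you attribute to ``second-order and bias-inversion terms'' and a ``minimal block size'', are asserted rather than derived. They do not come out of this estimator.

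\textbf{The missing idea is the quadratic U-statistic estimator, which exploits purity of the Choi state.} Set $\widehat Y_j=\p_{\bfq}^{-1}\Ket{\widehat U_j}\Bra{\widehat U_j}$. Because $\Phi_{\mathcal U}^2=d\,\Phi_{\mathcal U}$, the quantity $\bigl(d+\tfrac{2(1-\p_{\bfq})}{d\p_{\bfq}}\bigr)^{-1}\Tr[(O\otimes\rho_0^T)\widehat Y_i\widehat Y_j]$ with $i\ne j$ is unbiased. Averaging it over all pairs of $L$ Plancherel snapshots gives variance $\mathcal O\bigl(\frac1L(\frac{d^2}{s^4}\min\{1,\B\P\}+\frac{\P}{d})+\frac{1}{L^2}\frac{d^4}{s^4}\B\P\bigr)$. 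The large coefficient is now suppressed by $L^{-2}$, and $K=sL$ reproduces the item-1 formula exactly. The $\eps^{-1}$ term comes from $L\gtrsim\frac{d^2}{s^2}\frac{\sqrt{\B\P}}{\eps}$, not from a block size. Proving this bound requires the exact second-moment Choi operator, decomposed over the irreps of the walled Brauer algebra $\mathcal A_{2,2}^d$ with coefficients $c_\nu$ evaluated for the Plancherel measure; perturbation theory will not do it.

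\textbf{Item~2 has the right shape but is incomplete.} Your single-block linear-estimator plan matches the paper, but the probe and the variance bound, which are the actual content, are left open. The mechanism you guess for the subcases is also not the operative one.

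In the paper, sine-power probes with row gaps $\ge2$ make the Littlewood--Richardson coefficients equal weight multiplicities. Each $c_\nu$ then factorizes over rows and can be expanded in $1/N$. This gives an exact variance formula in terms of $\Tr[\sigma_0^2]\Tr[O^2]$, $\Tr[O^2\sigma_0^2]$, $\Tr[O\sigma_0O\sigma_0]$ and $\Tr[O\sigma_0]^2$.
\begin{itemize}
\item The $\P=1$ case uses that $\Tr[O\sigma]^2-\Tr[O\sigma O\sigma]$ vanishes for pure $\sigma=U\rho U^\dagger$, together with a convexity lemma showing pure inputs maximize the variance.
\item The small-rank case uses $\Tr[O\sigma]^2\le r\,\Tr[O\sigma O\sigma]$. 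Once $N\gtrsim\sqrt r$, the negative $\Theta(1/(dN^2))$ coefficient on $\Tr[O\sigma O\sigma]$ absorbs the positive $\Theta(1/(dN^4))$ term. This is not a fourth-moment Cauchy--Schwarz absorption.
\end{itemize}
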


Let us briefly interpret the theorem before discussing the protocol. The first item gives a general upper bound for CSEU on $\mathcal R_{\P,\B}$. When the target accuracy $\eps$ is not too small, this bound directly yields the Heisenberg-scaling complexity $\mathcal O(d\sqrt{\B\P}\eps^{-1})$. The second item says that the same scaling continues to hold in the high-precision regime for several important classes of requests. 
In particular, the standard CSEU problem (Problem~\ref{prob:CSEU}) corresponds to allowing arbitrary input states, for which the only uniform purity bound is $\P=1$, and the condition $\B\P\ge 1$ is satisfied automatically. Hence, Problem~\ref{prob:CSEU} is equivalent to Problem~\ref{prob:restricted_CSEU} on the request class $\mathcal R_{\P=1,\B}$, and \tref{thm:cseu_upper_bound_informal} follows immediately from \tref{thm:cseu_upper_bound}.

The rest of this subsection explains the main ingredients in the proof of \tref{thm:cseu_upper_bound_informal}. Our approach follows an estimator-based protocol. For the prescribed request class, the protocol queries the unknown unitary $U$ to produce classical data $\mathsf{CS}(U)$. The prediction function is then designed so that, for every request $(\rho,O)$ in the class,
\[
\E\left[f\bigl(\mathsf{CS}(U),\rho,O\bigr)\right]
=
\Tr\left[O\,U\rho U^\dagger\right],
\]
where the expectation is over the randomness in the generation of $\mathsf{CS}(U)$. Thus, $f(\mathsf{CS}(U),\rho,O)$ is an unbiased estimator of the desired prediction value. To meet the accuracy requirement in \pref{prob:restricted_CSEU}, it suffices to control the variance of this estimator. Indeed, if
\[
\Var\left[f\bigl(\mathsf{CS}(U),\rho,O\bigr)\right]\le \frac{\eps^2}{3},
\]
then Chebyshev's inequality implies the desired accuracy guarantee in \eref{eq:restricted_desired_accuracy}. The upper-bound proof, therefore, reduces to constructing suitable unbiased estimators and proving the corresponding variance bounds.

Our protocol contains two layers. The first layer is data acquisition: we use a covariant unitary-learning experiment to convert parallel queries to the unknown unitary into classical unitary-valued data.
The second layer is classical post-processing: given a prediction request $(\rho,O)$, we use the collected data to construct an unbiased estimator of $\Tr[O U\rho U^\dagger]$. The concrete choice of the covariant learning strategy and the estimator depends on the target-accuracy regime. In the following, we describe the data-acquisition layer first, and then turn to the estimator design and variance analysis.

\subsubsection{From parallel unitary queries to unitary snapshots}
\label{subsec:bisio_etal_optimal_unitary_learning}

We now describe the data-acquisition phase of our protocol. Learning properties of quantum resources, including quantum states and quantum processes, often leverages covariant quantum operations when the objects to learn possess proper group structure by exploiting fundamental physical symmetries \cite{Hayashi_1998, Bu_ek_1999, Bisio_2010, Holevo2011, Hayashi2017, Yang_2020, Yang_2022, pelecanos2025mixedstatetomographyreduces, Yoshida_2026}. With the same spirit, we use a covariant learning procedure to extract information from the unknown unitary. In such an experiment, one queries the unknown unitary $U$ in parallel on a suitably chosen probe state. After the action of the parallel unitaries, the resulting memory state is measured by a covariant POVM\footnote{That is, a unitary-indexed measurement whose probability of getting outcome $\widehat{U}$ conditioned on the true unitary $U$ satisfies $p(\widehat{U} | U) = p(W \widehat{U} V^\dagger | W U V^\dagger)$ for any unitary $W, V$.
}, whose outcome is a unitary-valued classical variable $\widehat U$.
This unitary learning procedure is illustrated in Figure~\ref{fig:unitary_data_acquisition}.

\begin{figure}
    \centering
    \includegraphics[width=0.58\linewidth]{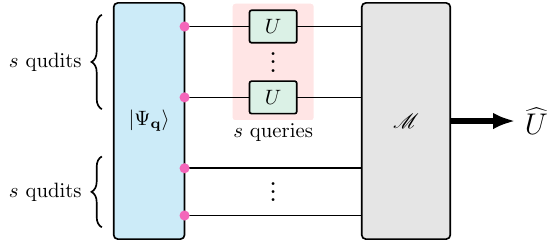}
    \caption{Covariant unitary learning protocol for generating a single unitary snapshot.  Starting from the probe state $\ket{\Psi_{\bfq}}$, the protocol applies $s$ parallel uses of the unknown unitary $U$ and then measures the resulting memory state with the covariant POVM $\mathscr M$.  The measurement outcome $\widehat U$ is stored as part of the unitary-valued classical shadow data $\mathsf{CS}(U)$.}
    \label{fig:unitary_data_acquisition}
\end{figure}

This framework is closely related to the storage-and-retrieval problem for unitary channels \cite{Bisio_2010, Sedl_k_2019}. In that setting, the goal is to use several queries to an unknown unitary $U$ to store a quantum memory, from which one can later retrieve an approximation of the channel action on an arbitrary input state.
Regarding the average-case reconstruction in terms of the entanglement fidelity, Bisio \etal~\cite{Bisio_2010} proved, albeit purely existentially, that a family of covariant protocols must contain one that achieves optimality. By focusing solely on this covariant family, we can rule out inefficient protocols and restrict our attention to those operating in a ``measure-and-operate'' (MO) manner: after the parallel-query storage stage, one performs a covariant measurement on the memory state and, conditioned on the measurement outcome $\widehat{U}$, applies the channel $\widehat{U}(\cdot)\widehat{U}^\dagger$ to the input state. Although this covariant method has successfully demonstrated its power in estimation of $\U(1)$ \cite{Bu_ek_1999}, $\su(2)$ \cite{Chiribella2004EfficientUse}, $\su(3)$ \cite{yoshida2025asymptoticallyoptimalunitaryestimation}, $\su(d)$ \cite{Yang_2020}, \etc, designing a learning strategy that efficiently solves the corresponding estimation task remains highly nontrivial, as evidenced by both the aforementioned studies and other prior work \cite{Kahn_2007, Sedl_k_2019, Yang_2022}.

For our purposes, we use this framework differently. We do not use the MO protocol to implement or reconstruct the unknown channel directly. Instead, we drop the ``operate'' phase and keep only the covariant measurement outcome as classical data. Thus, each run of the covariant learning experiment produces a unitary snapshot $\widehat U$, which will later be post-processed to estimate our target expectation $\Tr[O \cdot U\rho U^\dagger]$. The canonical form of the optimal covariant reconstruction protocol rules out numerous query-inefficient approaches and provides a structured family of probe states and measurements, from which we design our unitary snapshots.

We now present this canonical form. Let $\mathcal H^{\otimes s}$ be an $s$-qudit Hilbert space with $\dim\mathcal H=d$. By Schur-Weyl duality (see, e.g., \cite{goodman2009symmetry}), the tensor product space $\mathcal H^{\otimes s}$ and the $s$-fold unitary operator $U^{\otimes s}$ decompose as
\begin{equation}\label{eqn:Schur-Weyl_main}
    \mathcal H^{\otimes s}
    \cong
    \bigoplus_{\lambda\in\Y_s^d}
    \mathcal H_\lambda\otimes\mathcal M_\lambda,
    \qquad
    U^{\otimes s}
    \cong
    \bigoplus_{\lambda\in\Y_s^d}
    U_\lambda\otimes\openone_{\mathcal M_\lambda},
\end{equation}
where
$\Y_s^d
    =
    \left\{
    \lambda\in\mathbb N^d:
    \sum_j\lambda_j=s,\ 
    \lambda_i\ge\lambda_j\ \text{for all } i<j
    \right\}$
is the set of \textit{Young diagrams}, $\mathcal H_\lambda$ is the carrier space of the irreducible representation $U_\lambda$ of $\U(d)$ with dimension $d_\lambda$, and $\mathcal M_\lambda$ is the corresponding multiplicity space with dimension $s_\lambda$.

As shown in \cite{Bisio_2010}, for the average-case storage-and-retrieval problem of an unknown $d$-dimensional unitary channel, it suffices to consider covariant protocols of the following canonical form. A unitary learning strategy is characterized by a subset $\Y\subseteq\Y_s^d$ of Young diagrams and a probability distribution
\[
    \bfq=(q_\lambda)_{\lambda\in\Y}\in\mathbb R_{\ge 0}^{|\Y|}
\]
over $\Y$. Given such a strategy $(\Y, \bfq)$, the probe state for parallel unitary storage is chosen as
\[
    \ket{\Psi_{\bfq}}
    =
    \bigoplus_{\lambda\in\Y}
    \sqrt{\frac{q_\lambda}{d_\lambda}} \, 
    \Ket{\openone_{\mathcal H_\lambda}}\otimes\ket{\eta_\lambda},
\]
where $\ket{\eta_\lambda}$ is an arbitrary state in the bipartite multiplicity space $\mathcal M_\lambda^{\otimes 2}$. 
After applying $U^{\otimes s}$ to the probe state, the resulting memory state $U^{\otimes s}\ket{\Psi_{\bfq}}$ is measured using the covariant POVM
$\mathscr{M}=\big\{\!\ket{\Psi_{\widehat{U}}}\!\bra{\Psi_{\widehat{U}}}\dd {\widehat{U}}\big\}_{{\widehat{U}}\in\U(d)}$, where $\dd \widehat{U}$ is the Haar measure on $\U(d)$, and the reference frame vector
\[
    \ket{\Psi_{\widehat{U}}}
    =
    \bigoplus_{\lambda\in\Y}
    \sqrt{d_\lambda}\, 
    \Ket{\widehat{U}_\lambda}\otimes\ket{\eta_\lambda}.
\]  
Together, the probe state $\ket{\Psi_{\bfq}}$ and the measurement $\mathscr{M}$ give the covariant unitary learning procedure shown in Figure~\ref{fig:unitary_data_acquisition}.

In our CSEU protocol, we keep the measurement outcome $\widehat U$ and store it in the classical shadow $\mathsf{CS}(U)$ for subsequent postprocessing. The core remaining question is therefore how to properly choose $(\Y,\bfq)$ and recover the expectation $\Tr[O \cdot U \rho U^\dagger]$ from the classical shadow in postprocessing. To solve CSEU, we need another degree of freedom in addition to $(\Y, \bfq)$: the choice of a proper estimator $\widehat{E}$ of the target expectation. In our covariant framework, therefore, a CSEU protocol is characterized by $(\Y, \bfq, \widehat{E})$. The next subsection explains how these choices are made and how the corresponding variance bounds lead to the desired query complexity.

\begin{remark}
In this section, we design the CSEU protocol using $\widehat{U}$, the outcome of the covariant measurement. As shown in the previous sections, the resulting CSEU protocol, in turn, yields an optimal parallel protocol for unitary tomography; it remains open whether $\widehat{U}$ itself can directly provide an efficient estimator for the unitary tomography problem without relying on CSEU as an intermediate step. If true, this would yield a strictly query-optimal parallel protocol without a computationally costly post-processing step.
The main challenge lies in tracking the statistical attributes of the diamond-norm error $\| \widehat{\mathcal{U}} - \mathcal{U}\|_{\diamond}$, a task that may require sophisticated techniques from representation theory.
\end{remark}

\subsubsection{From unitary snapshots to predictions}

We now explain how the unitary snapshots described above are used to construct estimators, and how the learning strategy $(\Y,\bfq)$ is chosen in different accuracy regimes.  The analysis has two components.  First, we identify unbiased estimators for the target prediction value.  Second, we choose the learning strategy and the parameters $s,L$ so that the corresponding variance is at most $\mathcal O(\eps^2)$, which gives the desired success probability by Chebyshev's inequality.

\paragraph{Reduction to traceless requests.}
Before constructing the estimators, we first isolate the part of the target quantity that depends nontrivially on the unknown unitary.  For any classically specified state $\rho$ and observable $O$, write
\[
    O = O_0+\frac{\Tr(O)}{d}\openone,
    \qquad
    \rho=\rho_0+\frac{\openone}{d},
\]
where $O_0$ and $\rho_0$ denote the traceless parts of $O$ and $\rho$, respectively.  Then
\[
    \Tr\left[O \cdot U\rho U^\dagger\right]
    =
    \Tr\left[O_0 \cdot U\rho_0 U^\dagger\right]
    +
    \frac{\Tr(O)}{d}.
\]
The second term is known classically and is independent of $U$.  Hence, the only nontrivial part of the estimation problem is the $U$-dependent traceless component
$\Tr\left[O_0\,U\rho_0U^\dagger\right]$.
Accordingly, in the analysis below, we focus on traceless observables; the scalar contribution $\Tr(O)/d$ is added back in the final prediction.  Unless otherwise specified, we therefore assume that the observable $O$ is traceless.

\paragraph{Unitary snapshots and query count.}
Following the data-acquisition procedure illustrated in Figure~\ref{fig:unitary_data_acquisition}, one run of the covariant learning experiment uses $s$ parallel queries to $U$ and produces a unitary-valued classical outcome $\widehat U$.  Repeating this experiment independently $L$ times gives the classical dataset
\[
    \mathsf{UData}(s,L)
    =
    \{\widehat U_j\}_{j\in[L]}.
\]
This dataset is the classical shadow $\mathsf{CS}(U)$ used for prediction with two adjustable parameters: $s$, the number of parallel queries in one covariant learning experiment, and $L$, the number of independent repetitions used for averaging.

\begin{fact}
\label{fact:total_query_count}
Creating a dataset $\mathsf{UData}(s,L)$ requires $K=sL$ queries to the unknown unitary $U$.
\end{fact}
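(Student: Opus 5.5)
The plan is a direct accounting of the data-acquisition procedure, so it is short. The dataset $\mathsf{UData}(s,L)=\{\widehat U_j\}_{j\in[L]}$ is by definition the output of $L$ independent runs of the covariant learning experiment in Figure~\ref{fig:unitary_data_acquisition}. I will first count the oracle calls in one run, and then sum over the $L$ runs.

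\textbf{Queries per run.} Each run prepares the probe state $\ket{\Psi_{\bfq}}$, which lives in $\bigoplus_{\lambda\in\Y}\mathcal H_\lambda\otimes\mathcal M_\lambda$ (tensored with the auxiliary multiplicity copy). By the Schur--Weyl decomposition \eqref{eqn:Schur-Weyl_main}, this is a subspace of $\mathcal H^{\otimes s}$ together with an ancilla, because every $\lambda\in\Y\subseteq\Y_s^d$ has exactly $s$ boxes. The only $U$-dependent operation in the run is applying $U^{\otimes s}$ to the query register. Under \eqref{eqn:Schur-Weyl_main} this acts as $\bigoplus_\lambda U_\lambda\otimes\openone_{\mathcal M_\lambda}$ and as the identity on the ancilla. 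Realizing $U^{\otimes s}$ means placing $s$ copies of the oracle in a single parallel layer. The covariant POVM $\mathscr M$ is then a fixed measurement that does not depend on $U$, so it consumes no further queries. One run therefore uses exactly $s$ queries. One point needs care: the direct sum over $\lambda$ must not be read as requiring separate queries per irrep. It is a single state on $\mathcal H^{\otimes s}\otimes\mathcal H_{\rm anc}$, and $U^{\otimes s}$ acts blockwise on it, so no extra queries arise.

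\textbf{Summing over runs.} The $L$ repetitions are independent and non-adaptive: none of them depends on earlier outcomes. Their query counts therefore simply add, giving $K=sL$. The runs can also be carried out side by side, so the whole procedure remains a single parallel layer of $sL$ oracle uses.

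I do not expect a real obstacle. The only subtlety is the identification of the Schur--Weyl blocks with an actual $s$-fold tensor-power register, and \eqref{eqn:Schur-Weyl_main} settles that.
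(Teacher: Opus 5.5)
Your proposal is correct and follows the same accounting as the paper, which treats the fact as immediate from the construction. Each covariant learning run applies $U^{\otimes s}$ once to the probe state, which is $s$ parallel oracle uses followed by a $U$-independent measurement, and the $L$ independent runs add up to $K=sL$.

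One cosmetic point: the ancilla in $\ket{\Psi_{\bfq}}$ carries a reference copy of $\mathcal H_\lambda$ as well as a copy of $\mathcal M_\lambda$, since $\Ket{\openone_{\mathcal H_\lambda}}\in\mathcal H_\lambda^{\otimes 2}$. Because $U$ acts only on the query copy, this does not change the count.
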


Next, we construct two unbiased estimators for $\Tr[O\cdot U\rho U^\dagger]$. 

\paragraph{Averaged channel and the linear estimator.}
The first key property of the covariant learning outcome is that it acts as a depolarizing average channel around the true unitary channel.

\begin{lemma}[see, e.g., \cite{Yang_2020, he2025resourcequantificationprogramminglowdepth}]
\label{lemma:SAR_channel_average_result}
Let $\widehat U$ be distributed as one of the i.i.d. outcomes $\widehat U_j$.  Then, for any input operator $A$,
\[
    \E\left[\widehat U A\widehat U^\dagger\right]
    =
    \p_{\bfq} UA U^\dagger
    +
    (1-\p_{\bfq})\frac{\Tr(A)}{d}\openone,
\]
where $\p_{\bfq}\in[0,1]$ is the depolarizing parameter determined by the learning strategy $\bfq$. Its explicit form is given in Appendix \ref{subsec:evaluate_choi_state}.
\end{lemma}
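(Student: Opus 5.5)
The plan is to compute the induced channel $\mathcal{C}(A):=\E[\widehat U A\widehat U^\dagger]$ and then use covariance together with Schur's lemma to force it into depolarizing form around $\mathcal U$. The outcome density is
\[
p(\widehat U\mid U)=\bigl|\langle\Psi_{\widehat U}|U^{\otimes s}|\Psi_{\bfq}\rangle\bigr|^2 .
\]
Using $U^{\otimes s}\cong\bigoplus_\lambda U_\lambda\otimes\openone$ and $(U_\lambda\otimes\openone)\Ket{\openone}=\Ket{U_\lambda}$, I will rewrite this as
\[
p(\widehat U\mid U)=\Bigl|\sum_{\lambda\in\Y}\sqrt{q_\lambda}\,\Tr\bigl(\widehat U_\lambda^\dagger U_\lambda\bigr)\Bigr|^2 .
\]
Here $\langle\eta_\lambda|\eta_\lambda\rangle=1$ and the prefactors $\sqrt{d_\lambda}\sqrt{q_\lambda/d_\lambda}=\sqrt{q_\lambda}$. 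This density depends only on $\widehat U^\dagger U$, i.e.\ $p(\widehat U\mid U)=g(U^\dagger\widehat U)$ for a class function $g$ (up to conjugation, $g(V)=|\sum_\lambda\sqrt{q_\lambda}\chi_\lambda(V)|^2$). It integrates to one by Schur orthogonality of characters, since $\sum_\lambda q_\lambda=1$.

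Substituting $\widehat U=UV$ and using left invariance of the Haar measure gives
\[
\mathcal C(A)=U\,\Phi(A)\,U^\dagger,\qquad \Phi(A):=\int g(V)\,VAV^\dagger\,\dd V .
\]
Because $g$ is a class function, $\Phi$ commutes with conjugation by every $W\in\U(d)$:
\[
W\Phi(A)W^\dagger=\int g(V)\,(WVW^\dagger)\,WAW^\dagger\,(WVW^\dagger)^\dagger\,\dd V=\Phi(WAW^\dagger),
\]
where the last step changes variables $V\to W^\dagger VW$ and uses $g(WVW^\dagger)=g(V)$. So $\Phi$ is a unital, trace-preserving, $\U(d)$-covariant map on $\mathcal L(\mathbb C^d)$.

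The adjoint representation $A\mapsto WAW^\dagger$ splits into exactly two inequivalent irreducible pieces: $\C\openone$ and the traceless matrices. By Schur's lemma, $\Phi$ acts as a scalar on each. It acts as $1$ on $\openone$ by unitality, and as some $\p_{\bfq}$ on traceless operators. Hence
\[
\Phi(A)=\p_{\bfq}A+(1-\p_{\bfq})\frac{\Tr A}{d}\openone ,
\]
and conjugating by $U$ yields the claim.

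The main step needing care is showing $\p_{\bfq}\in[0,1]$ and finding its explicit value. Taking a traceless $A$ and computing $\Tr$ against $A^\dagger$ gives
\[
\p_{\bfq}=\frac{\int g(V)\,|\Tr V|^2\,\dd V-1}{d^2-1}.
\]
The integral is at most $d^2$, so $\p_{\bfq}\le 1$. Nonnegativity comes from expanding $|\Tr V|^2\,g(V)$ in characters. The product $\chi_{\square}\chi_\lambda$ decomposes via the Littlewood--Richardson (Pieri) rule, and character orthogonality turns the integral into a sum of nonnegative overlap terms $\sum_{\lambda,\mu}\sqrt{q_\lambda q_\mu}\,(\ldots)$ over pairs $\mu=\lambda\pm\square$. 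This gives the explicit formula recorded in Appendix~\ref{subsec:evaluate_choi_state}. I expect a short computation there to confirm the resulting $\p_{\bfq}\ge 0$, though for the structural statement only the value in $[0,1]$ matters.
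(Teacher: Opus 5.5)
Your proposal is correct and follows essentially the route the paper relies on. The paper cites the lemma, but its appendix writes the averaged channel as $\mathcal U\circ\int g(V)\,V(\cdot)V^\dagger\dd V$ with the class function $g=|\sum_\lambda\sqrt{q_\lambda}\chi_\lambda|^2$, splits it over isotypic components (here only the trivial and adjoint ones), and identifies $\p_{\bfq}$ with the adjoint coefficient $c_{\nu_1}$, which equals your $\bigl(\int g(V)|\Tr V|^2\dd V-1\bigr)/(d^2-1)$ once you sum $\Tr[A^\dagger\Phi(A)]$ over an orthonormal basis of traceless $A$ rather than using a single one; nonnegativity is then immediate, as you anticipate, from $c_{\nu_1}=\frac{1}{d^2-1}\sum_{\lambda,\mu}\sqrt{q_\lambda q_\mu}\,C^{\mu}_{\lambda,\nu_1}$ with nonnegative Littlewood--Richardson coefficients (the nonzero terms being $\mu=\lambda$ or $\mu$ obtained from $\lambda$ by moving one box, not $\mu=\lambda\pm\square$).
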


Since $O$ and $\rho_0$ are traceless, the depolarizing component in \lref{lemma:SAR_channel_average_result} does not contribute to the target expectation.  This immediately gives the linear estimator
\begin{equation}
\label{eqn:linear_estimator}
    \widehat X_j
    :=
    \frac{1}{\p_{\bfq}}
    \Tr\left[
        O \cdot \widehat U_j\rho_0\widehat U_j^\dagger
    \right],
    \qquad \forall \, j\in[L].
\end{equation}

\paragraph{The quadratic estimator from Choi-state purity.}
Our second estimator exploits the channel-state duality and the fact that the Choi state of a unitary channel is pure\footnote{A similar trick has been used in \cite{Grier_2024, Li_2025}: if $\widehat\psi_i$ and $\widehat\psi_j$ are independent unbiased estimators of a pure state $\psi$, then their product $\widehat\psi_i\widehat\psi_j$ is still an unbiased estimator of $\psi$.}. Recall that \cite{Watrous_2018}
\begin{equation}
\label{eqn:CSEU_target_reformulation}
    \Tr\left[O\cdot U\rho U^\dagger\right] 
    =
    \Tr\left[
        \left(O\otimes\rho^T\right)\Phi_{\mathcal U}
    \right],
\end{equation}
where $\Phi_{\mathcal U}:=\Ket{U}\Bra{U}$ denotes the Choi operator of the unitary channel $\mathcal U$.  Since $\Phi_{\mathcal U}$ is rank one and $\Braket{U|U}=d$, we have $\Phi_{\mathcal U}^2=d\,\Phi_{\mathcal U}$.  Therefore,
\[
    \Tr\left[
        \left(O\otimes\rho^T\right)\Phi_{\mathcal U}
    \right]
    =
    \frac{1}{d}
    \Tr\left[
        \left(O\otimes\rho^T\right)\Phi_{\mathcal U}^2
    \right].
\]
This motivates a quadratic estimator built from two independent unitary snapshots. Define 
\begin{equation}
\label{eqn:quadratic_estimator}
\widehat Y_j
    :=
    \frac{1}{\p_{\bfq}}\Ket{\widehat U_j}\Bra{\widehat U_j} \ \; \text{for} \ \; j\in[L], 
    \qquad \text{and} \qquad
    \widehat\Lambda_{i,j}
    :=
    \frac{1}{
        d+\frac{2(1-\p_{\bfq})}{d\p_{\bfq}}
    }
    \Tr\left[
        \left(O\otimes\rho_0^T\right)
        \widehat Y_i\,\widehat Y_j
    \right] \ \; \text{for} \ \; i\ne j \in [L]. 
\end{equation}

\begin{fact}
\label{fact:unbiased_estimators}
For any $j\in[L]$ and $i\ne j \in [L]$, both $\widehat X_j$ and $\widehat\Lambda_{i,j}$ are unbiased estimators of
$\Tr\left[O\cdot U\rho U^\dagger\right]$.
\end{fact}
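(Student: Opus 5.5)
The plan is to verify unbiasedness directly from \lref{lemma:SAR_channel_average_result}, using two facts: (i) the expectation of any quantity that depends linearly on the channel $\widehat{\mathcal U}(\cdot)=\widehat U(\cdot)\widehat U^\dagger$ can be computed from the averaged channel; (ii) independent snapshots factorize. Throughout, $O$ is traceless, as assumed in the reduction to traceless requests, and $\rho_0$ is traceless by construction. Since
\[
\Tr[O\,U\tfrac{\openone}{d}U^\dagger]=\Tr(O)/d=0,
\]
we have $\Tr[O\,U\rho U^\dagger]=\Tr[O\,U\rho_0U^\dagger]$, so it suffices to show that both estimators have mean $\Tr[O\,U\rho_0U^\dagger]$.

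\emph{Linear estimator.} Apply \lref{lemma:SAR_channel_average_result} with $A=\rho_0$. Because $\Tr(\rho_0)=0$, the depolarizing term vanishes and $\E[\widehat U_j\rho_0\widehat U_j^\dagger]=\p_{\bfq}\,U\rho_0U^\dagger$. By linearity of the trace, $\E[\widehat X_j]=\Tr[O\,U\rho_0U^\dagger]$, which is the target.

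\emph{Quadratic estimator.} First write the snapshot's Choi operator as
\[
\Ket{\widehat U}\Bra{\widehat U}=(\widehat U\otimes\openone)\Ket{\openone}\Bra{\openone}(\widehat U^\dagger\otimes\openone)=(\widehat{\mathcal U}\otimes\mathrm{id})(\Ket{\openone}\Bra{\openone}).
\]
This expression is linear in the channel $\widehat{\mathcal U}$, so its expectation is the Choi operator of the averaged channel from \lref{lemma:SAR_channel_average_result}. The channel $A\mapsto \Tr(A)\openone/d$ has Choi operator $\openone\otimes\openone/d$, so
\[
\E[\widehat Y_j]=\Phi_{\mathcal U}+c\,\openone\otimes\openone,\qquad c:=\frac{1-\p_{\bfq}}{d\,\p_{\bfq}}.
\]
For $i\ne j$ the snapshots are independent, so $\E[\widehat Y_i\widehat Y_j]=\E[\widehat Y_i]\,\E[\widehat Y_j]$. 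Using $\Phi_{\mathcal U}^2=d\,\Phi_{\mathcal U}$, this product equals
\[
(d+2c)\,\Phi_{\mathcal U}+c^2\,\openone\otimes\openone.
\]
Now take the trace against $O\otimes\rho_0^T$. The identity term contributes $c^2\Tr(O)\Tr(\rho_0)=0$. The normalization $d+\frac{2(1-\p_{\bfq})}{d\p_{\bfq}}=d+2c$ in the definition of $\widehat\Lambda_{i,j}$ cancels the prefactor exactly. By \eref{eqn:CSEU_target_reformulation}, applied linearly with $\rho_0$ in place of $\rho$, we get $\E[\widehat\Lambda_{i,j}]=\Tr[(O\otimes\rho_0^T)\Phi_{\mathcal U}]=\Tr[O\,U\rho_0U^\dagger]$, as required.

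The only delicate point is the step passing from \lref{lemma:SAR_channel_average_result}, stated for the action on operators, to the expectation of the Choi operator. Here I would check carefully that the double-ket convention $\Ket{X}=\sum_{i,j}\braket{i|X|j}\ket{i}\ket{j}$ places $\widehat U$ on the first tensor factor. This placement is what makes \eref{eqn:CSEU_target_reformulation} pair $O$ with the output factor and $\rho^T$ with the input factor; once it is confirmed, the rest is bookkeeping.
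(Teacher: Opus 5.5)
Your proposal is correct and follows the paper's proof essentially step for step. The linear estimator is handled via \lref{lemma:SAR_channel_average_result} applied to the traceless $\rho_0$. The quadratic estimator is handled via $\E[\widehat Y_j]=\Phi_{\mathcal U}+\frac{1-\p_{\bfq}}{d\p_{\bfq}}\openone\otimes\openone$, independence of snapshots, $\Phi_{\mathcal U}^2=d\,\Phi_{\mathcal U}$, and the vanishing of the identity contribution. Your explicit Choi-operator derivation of $\E[\widehat Y_j]$, including the double-ket placement check, just spells out a step the paper takes in one line.
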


\begin{proof}
The unbiasedness of $\widehat X_j$ follows directly from \lref{lemma:SAR_channel_average_result}, because $O$ and $\rho_0$ are traceless:
\[
    \E[\widehat X_j]
    =
    \frac{1}{\p_{\bfq}}
    \Tr\left[
        O\,
        \E\left(\widehat U_j\rho_0\widehat U_j^\dagger\right)
    \right]  
    =
    \frac{1}{\p_{\bfq}}
    \Tr\left[
        O
        \left(
        \p_{\bfq} U\rho_0U^\dagger
        +
        (1-\p_{\bfq})\frac{\Tr(\rho_0)}{d}\openone
        \right)
    \right]
    =
    \Tr\left[O  U\rho_0U^\dagger\right]=
    \Tr\left[O  U\rho U^\dagger\right].
\]
For the quadratic estimator, using \lref{lemma:SAR_channel_average_result} again, we have 
$$
\E\left[\widehat{Y}_j \right] = \frac{1}{\p_{\bfq}} \E\left[\Ket{\widehat{U}_j} \Bra{\widehat{U}_j} \right] = \frac{1}{\p_{\bfq}} \left( \p_{\bfq} \Ket{U} \Bra{U} + \frac{1 - \p_{\bfq}}{d } \openone \otimes \openone \right) = \Ket{U} \Bra{U} + \frac{1-\p_{\bfq}}{d\p_{\bfq}} \openone \otimes \openone, 
$$
    and therefore
    $$
    \begin{aligned}
        \E\left[ \widehat{\Lambda}_{i, j} \right] &= \frac{1}{d + \frac{2(1-\p_{\bfq})}{d \p_{\bfq}} } \Tr\left[ \left( O \otimes \rho_0^T \right) \E\left[ \widehat{Y}_i \,\widehat{Y}_j \right]\right] = \frac{1}{d + \frac{2(1-\p_{\bfq})}{d \p_{\bfq}} } \Tr\left[ \left( O \otimes \rho_0^T \right) \E\left[ \widehat{Y}_i \right] \E\left[ \widehat{Y}_j \right]\right] \\
        &= \frac{1}{d + \frac{2(1-\p_{\bfq})}{d \p_{\bfq}} } \Tr\left[ \left( O \otimes \rho_0^T \right) \left( \left(d + \frac{2(1-\p_{\bfq})}{d\p_{\bfq}} \right) \Ket{U} \Bra{U} + \left(\frac{1-\p_{\bfq}}{d\p_{\bfq}} \right)^2 \openone \otimes \openone \right) \right] \\
        &= \Tr\left[ O \cdot U \rho_0 U^\dagger \right] =
    \Tr\left[O  U\rho U^\dagger\right]. 
    \end{aligned}
    $$
This completes the proof. 
\end{proof}

Let $\widehat{\mathsf{X}} = \{ \widehat{X}_j \}_{j \in [L]}$ and $\widehat{\mathsf{\Lambda}} =\{\widehat{\Lambda}_{i, j}\}_{i \neq j \in [L]}$. 
We use their batch averages
\begin{equation}\label{eqn:average_estimators}
    \widehat Z(\widehat{\mathsf X},L)
    :=
    \frac{1}{L}\sum_{j=1}^L \widehat X_j,
    \qquad
    \widehat Z(\widehat{\mathsf\Lambda},L)
    :=
    \frac{1}{L(L-1)}
    \sum_{i\ne j\in[L]}\widehat\Lambda_{i,j}.
\end{equation}
The remaining task is to tailor the learning strategy $(\Y,\bfq)$, along with parameters $s$ and $L$, to these two averaged estimators so that their variances are bounded by $\mathcal O(\eps^2)$.

\paragraph{Variance bounds for two learning strategies.}
The two estimators above are useful in complementary regimes of the error $\eps$. 
For the quadratic estimator $\widehat Z(\widehat{\mathsf\Lambda},L)$, we use a learning strategy based on the Plancherel measure on Young diagrams \cite{Borodin2000}: we restrict to $s\le d$ and choose
\begin{equation}
\label{eq:Plancherel_measure}
    \Y=\Y_s^d,
    \qquad
    \bfq_{\mathrm{Plan}} = (q_{\lambda})_{\lambda \in \Y}, \quad  q_\lambda:=\frac{s_\lambda^2}{s!}, 
\end{equation}
where $s_\lambda$ denotes the dimension of the multiplicity space $\mathcal M_\lambda$ in \eref{eqn:Schur-Weyl_main}.
The following lemma provides the variance bound needed for the moderate-precision regime; its proof is given in Appendix~\ref{appendix:Proof_small_s_regime_estimator_Lambda_variance}.

\begin{lemma}
\label{lemma:small_s_regime_estimator_Lambda_variance}
Suppose the quantum state $\rho$ and the traceless observable $O$ satisfy $\Tr(\rho^2)\le\P$ and $O\in\obs(\B)$. When $s\le d$,  the CSEU protocol $(\Y^{d}_s, \bfq_{\mathrm{Plan}}, \widehat Z(\widehat{\mathsf\Lambda},L))$ achieves the following performance
\[
\Var\left[\widehat Z(\widehat{\mathsf\Lambda},L)\right]
\le
\bigO\left(
    \frac{1}{L}
    \left(
        \frac{d^2}{s^4}\min\{1,\B\P\}
        +
        \frac{\P}{d}
    \right)
    +
    \frac{1}{L^2}
    \left(
        \frac{d^4}{s^4}\B\P
    \right)
\right).
\]
\end{lemma}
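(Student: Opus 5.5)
The plan is the standard U-statistic (Hoeffding) decomposition of $\widehat Z(\widehat{\mathsf\Lambda},L)$, reducing the variance to two moment quantities of a single snapshot, which I then evaluate with Schur--Weyl calculus for the Plancherel strategy. Write $A:=O\otimes\rho_0^T$, $M:=\E[\widehat Y]=\Ket{U}\Bra{U}+\gamma\,\openone\otimes\openone$ with $\gamma=\frac{1-\p_{\bfq}}{d\p_{\bfq}}$, and $c:=d+2\gamma$. For i.i.d.\ snapshots, the symmetric kernel $h(i,j)=\tfrac{1}{2}(\widehat\Lambda_{i,j}+\widehat\Lambda_{j,i})$ gives
\[
\Var\bigl[\widehat Z(\widehat{\mathsf\Lambda},L)\bigr]\le \frac{4}{L}\,\Var\bigl[h_1(\widehat Y)\bigr]+\frac{2}{L(L-1)}\,\E\bigl[h(1,2)^2\bigr],
\]
where $h_1(Y)=\frac{1}{2c}\Tr[(AM+MA)Y]$ is the first projection. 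The proof thus splits into bounding a \emph{linear} second moment, which produces the $1/L$ term, and a \emph{bilinear} second moment, which produces the $1/L^2$ term.

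\textbf{Step 1: the linear term.} Since $c\ge d$, I need $\Var\bigl(\Tr[B\,\Ket{\widehat U}\Bra{\widehat U}]\bigr)/(d^2\p_{\bfq}^2)$ with $B=AM+MA$. Expanding $M$, the $\gamma$ part of $B$ is $2\gamma A$, and the $\Ket U\Bra U$ part involves $\Ket U\Bra U A$. For an arbitrary operator $C$ I need the second moment of the snapshot, $\E[\Ket{\widehat U}\Bra{\widehat U}^{\otimes 2}]$, i.e.\ a $2$-fold twirl of the covariant outcome distribution. By covariance, this equals a combination of the identity, $\Ket U\Bra U$, and swap-type operators on $(\C^d)^{\otimes 4}$, whose coefficients depend on $\bfq$ only through the quantities $\E|\Tr(U^\dagger\widehat U)|^2$ and $\E|\Tr(U^\dagger\widehat U)|^4$. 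For the Plancherel strategy with $s\le d$, I would compute these coefficients following the Choi-state evaluation cited in Lemma~\ref{lemma:SAR_channel_average_result}. The key input is $\p_{\bfq}=\Theta(s^2/d^2)$; for example, $\E|\Tr(U^\dagger\widehat U)|^2\approx s+1$ for the Plancherel choice.

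With these moments, the $\Ket U\Bra U$ contribution gives the $\P/d$ term. This uses $|\Tr[O U\rho_0U^\dagger]|\le 1$ and $\Tr(\rho_0^2)\le\P$, and the point is that the fluctuation of the linear estimator \eqref{eqn:linear_estimator} is damped by projecting onto the true direction. The $\gamma A$ contribution gives roughly $\gamma^2\Tr(O^2)\Tr(\rho_0^2)/(d^2\p_{\bfq})$-type terms. Substituting $\gamma\approx d/s^2$ and $\p_{\bfq}\approx s^2/d^2$ yields $\frac{d^2}{s^4}\B\P$ after normalization. The alternative bound via $\|O\|_\infty\le1$ gives $\frac{d^2}{s^4}$, and together these produce the $\min\{1,\B\P\}$.

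\textbf{Step 2: the bilinear term.} Using independence,
\[
\E\bigl[h(1,2)^2\bigr]\le \frac{1}{c^2}\,\E\Bigl|\Tr\bigl[A\widehat Y_1\widehat Y_2\bigr]\Bigr|^2=\frac{1}{c^2\p_{\bfq}^4}\,\E\Bigl|\Bra{\widehat U_2}A\Ket{\widehat U_1}\Braket{\widehat U_1|\widehat U_2}\Bigr|^2 .
\]
I would apply Cauchy--Schwarz together with $\E_{\widehat U_1}\bigl[\Ket{\widehat U_1}\Bra{\widehat U_1}\bigr]=d\,\p_{\bfq}M$ to write this as a trace against the twirled second moments. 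The crude estimate is $\Tr(A^\dagger A)\cdot\|\text{moment}\|$ with $\Tr(A^2)\le\B\P$, which gives $\frac{1}{d^2\p_{\bfq}^4}\cdot d^2\B\P\cdot O(1)$. Substituting $\p_{\bfq}\sim s^2/d^2$ yields $\frac{d^4}{s^4}\B\P$, up to checking that the overlap $|\Braket{\widehat U_1|\widehat U_2}|^2$ contributes only $O(d^2\p_{\bfq}^{2})$ in expectation.

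I expect the main obstacle to be the exact fourth-moment representation-theoretic computation in Steps 1--2, namely $\E[\Ket{\widehat U}\Bra{\widehat U}^{\otimes2}]$ under the Plancherel measure. I would attack it by writing $p(\widehat U|U)=|\Braket{\Psi_{\widehat U}|U^{\otimes s}\Psi_{\bfq}}|^2$ as a class function in characters $\chi_\lambda(U^\dagger\widehat U)$, and then applying the Littlewood--Richardson rule for $\lambda\otimes\square$ and $\lambda\otimes\square\otimes\square$. This requires bounding the resulting sums over Plancherel-distributed $\lambda$ by content estimates valid when $s\le d$.
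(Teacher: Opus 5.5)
Your decomposition is the one the paper uses. Its expansion of $\Var[\sum_{i<j}W_{i,j}]$ into $\sum\Var[W_{i,j}]$ plus covariances $\Var[\E(W_{i,j}\,|\,\widehat Y_i)]$ is exactly your Hoeffding split into $\Var[h_1]$ and $\E[h(1,2)^2]$, and the paper also evaluates both through $\E[\widehat Y^{\otimes 2}]$.

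The inputs you feed into it are wrong. For the Plancherel strategy with $s\le d$, $\p_{\bfq}=c_{\nu_1}=s/(d^2-1)$, not $\Theta(s^2/d^2)$. Your own remark $\E|\Tr(U^\dagger\widehat U)|^2=s+1$ already says this, since that quantity equals $(d^2-1)\p_{\bfq}+1$. Hence $\gamma\approx d/s$ and $c\,\p_{\bfq}\approx (s+2)/d$. With the values you state, your substitutions give other powers of $s$; for example, in Step 2, $\frac{1}{d^2\p_{\bfq}^4}\cdot d^2\B\P\sim d^8\B\P/s^8$. So the agreement with the target is not derived.

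The genuine gap is Step 1, and it cannot be closed. Write $\widehat U=UV$. For Plancherel weights $V$ has Haar density $|\Tr V|^{2s}/s!$, and
\[
h_1=\frac{1}{c\,\p_{\bfq}}\Re\bigl[\overline{\Tr V}\,\Tr(XV)\bigr]+\frac{\gamma}{c}\,\widehat X,\qquad X:=\rho_0U^\dagger OU ,
\]
where $\widehat X$ is the linear estimator. Invariance under $V\mapsto W^\dagger VW$, together with $\E_{\mathrm{Haar}}|\Tr V|^{2k}=k!$ for $k\le d$, gives for $s\le d-2$:
\[
\E\bigl|\overline{\Tr V}\,\Tr(XV)\bigr|^2=\frac{(s+1)(d^2-s-2)}{d(d^2-1)}\|X\|_F^2+\frac{(s+1)^2}{d^2-1}|\Tr X|^2 .
\]
The same computation for $\E[(\overline{\Tr V}\Tr(XV))^2]$ then yields $\Var\bigl(\Re[\overline{\Tr V}\Tr(XV)]\bigr)\ge\frac{s+1}{2d}\|X\|_F^2\,(1-O(s/d^2))$. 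So the part of $h_1$ projected onto $\Ket U$ has variance about $\frac{d(s+1)}{2(s+2)^2}\|\rho_0U^\dagger OU\|_F^2$. The projection damps nothing, because $|\Tr(U^\dagger\widehat U)|^2$ has mean $s+1$.

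Now take $\rho$ pure, $O\propto U\rho U^\dagger-\openone/d$ (so $\B,\P=O(1)$), and $s=d/2$. The projected term is then $\Theta(1)$. The $\gamma$-term is $O(d^{-2})$, so nothing cancels. Therefore $\Var[\widehat Z]\ge\frac{4(L-2)}{L(L-1)}\Var[h_1]=\Theta(1/L)$, whereas the statement promises $O\bigl(\frac{1}{dL}+\frac{1}{L^2}\bigr)$.

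Likewise, the $\gamma$-part scales like $\frac{d^2}{s^4}\Tr(O^2)\Tr(\rho_0^2)$, i.e. with $\B\P$ rather than $\min\{1,\B\P\}$. Your alternative via $\|O\|_\infty\le 1$ only gives $\frac{\gamma^2}{c^2}\cdot\frac{4}{\p_{\bfq}^2}\approx d^4/s^4$.

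Since these are lower bounds on $\Var[h_1]$, no bookkeeping of fourth moments will deliver the stated $1/L$ term. Along your route, the $1/L$ coefficient can genuinely be of order $\frac{d}{s}\P$, and separately of order $\frac{d^2}{s^4}\B\P$.

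Do not use the paper's derivation as a template to close this. In bounding its term (I), it contracts the observable as $W_{A_1A_2}\otimes W_{B_1B_2}$ against $\Phi_{\mathcal{E}_{\bfq,s}}$. But $\E[\widehat Y^{\otimes 2}]$ must be paired with $W_{A_1B_1}\otimes W_{A_2B_2}$, i.e. output and reference of the same snapshot. This misplacement has two effects:
\begin{enumerate}
\item It produces the basis-dependent quantity $\Tr[O\rho_0^T]$, which is the source of the $\min\{1,\B\P\}$.
\item It pairs $\Tr[W^2]$ with the $O(s^2/d^4)$ coefficient of $\Lambda$, instead of the $O(1/d^2)$ coefficient of the copy swap $\mathbb{F}_A\mathbb{F}_B$ in the Haar part. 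That is the source of the $\P/d$.
\end{enumerate}
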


For the linear estimator $\widehat Z(\widehat{\mathsf X},L)$, we introduce a different family of learning strategies, denoted as $(\Y_{\sin}, \bfq_{\sin})$, which we call the \emph{sine-power state strategies}. The full details are provided in Construction~\ref{construction:YRC_programming_scheme} of Appendix~\ref{appendix:Proof_large_s_regime_estimator_X_variance}. This family is tailored to the high-precision regime, where one uses a large number $s$ of parallel queries in a single covariant learning experiment. Its key feature is that it yields a sufficiently small variance for the linear estimator, enabling the Heisenberg-scaling query complexity in this regime.  The resulting variance bound is stated below and proved in Appendix~\ref{appendix:Proof_large_s_regime_estimator_X_variance}.

\begin{lemma}
\label{lemma:large_s_regime_estimator_X_variance}
Suppose the quantum state $\rho$ and the traceless observable $O$ satisfy $\Tr(\rho^2)\le\P$ and $O\in\obs(\B)$. When $s\ge Cd^2$ for a sufficiently large universal constant $C$, for any $j \in [L]$, the CSEU protocol $(\Y_{\sin}, \bfq_{\sin}, \widehat{X}_j)$ [cf. Construction~\ref{construction:YRC_programming_scheme}] achieves the following performance
\begin{enumerate}
    \item  $\Var[\widehat X_j]
        \le
        \bigO\left(
            \frac{d^2}{s^2}\B
        \right)$.

    \item $\Var[\widehat X_j]
        \le
        \bigO\left(
            \frac{d^2}{s^2}
            \max\{\sqrt{\B\P},\B\P\}
            +
            \frac{d^7}{s^4}
            \min\{1-\Tr(\rho^2),\B\P\}
        \right)$.

    \item $\Var[\widehat X_j]
        \le
        \bigO\left(
            \frac{d^2}{s^2}
            \max\{\sqrt{\B\P},\B\P\}
        \right)$ when $s\ge C''d^2\sqrt{r(\rho,O)}$ for a sufficiently large universal constant $C''$, where $r(\rho,O):=\min\{\rank(O),\rank(\rho)\}$.
\end{enumerate}
\end{lemma}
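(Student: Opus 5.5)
The plan is to reduce all three variance bounds to a second-moment computation over the covariant measurement outcome. Since $\widehat X_j$ is unbiased (Fact~\ref{fact:unbiased_estimators}), it suffices to bound
\[
\E\bigl[\widehat X_j^2\bigr]=\p_{\bfq}^{-2}\,\Tr\Bigl[(O\otimes O)\,\E\bigl[\widehat U^{\otimes 2}(\rho_0\otimes\rho_0)\widehat U^{\dagger\otimes 2}\bigr]\Bigr]
\]
and then subtract $\Tr[OU\rho_0U^\dagger]^2$.

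By covariance, $p(\widehat U\mid U)=p(U^\dagger\widehat U\mid \openone)$. We may therefore set $U=\openone$, conjugating $O$ by $U$ (this preserves $\B$, $\P$ and the rank). The outcome density relative to Haar is $|\langle\Psi_{\widehat U}|\Psi_{\bfq}\rangle|^2=\bigl|\sum_{\lambda\in\Y}\sqrt{q_\lambda}\,\chi_\lambda(\widehat U)\bigr|^2$, a class function. The second-moment operator is thus a twirl of $\widehat U^{\otimes 2}(\cdot)\widehat U^{\dagger\otimes2}$ against a class function. I would expand it over eigenvalue angles via the Weyl integration formula, or equivalently expand $\widehat U^{\otimes 2}\otimes\overline{\widehat U}^{\otimes 2}$ into irreps and use orthogonality of characters. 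The result is an explicit combination of permutation-type operators on $(\C^d)^{\otimes 2}$ with coefficients given by overlaps $\sum_\lambda \sqrt{q_\lambda q_{\lambda'}}\,(\text{branching multiplicities})$ between diagrams differing by a few boxes.

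The sine-power strategy of Construction~\ref{construction:YRC_programming_scheme} is designed so that $\sqrt{q_\lambda}$ varies smoothly, like a sine profile, along directions in diagram space. Each such overlap is then $1-\Theta(d^2/s^2)\cdot(\text{local quantity})$. This is the standard Heisenberg-limited phase-estimation mechanism, in which $1-\p_{\bfq}=\bigO(d^2/s^2)$ once $s\ge Cd^2$, so that no boundary effects from short rows enter.

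I would then write the second moment as
\[
\E\bigl[\widehat X_j^2\bigr]=\Tr[O\rho_0]^2+\frac{d^2}{s^2}\,T_1+\text{higher-order terms},
\]
where $T_1$ collects contractions of $O,O,\rho_0,\rho_0$: $\Tr(O^2)\Tr(\rho_0^2)/d$, $\Tr(O^2\rho_0^2)$, $\Tr(O\rho_0O\rho_0)$, $\Tr(O\rho_0)^2$ and similar terms. Item~1 follows from $\|\rho_0\|\le 1$ together with $|\Tr(O^2\rho_0^2)|\le \Tr(O^2)\le\B$. Item~2 needs sharper Hölder estimates. For example, $\Tr(O^2\rho^2)\le\|O^2\|_2\|\rho^2\|_2\le\sqrt{\B}\sqrt{\P}$ (using $\|O\|_\infty\le 1$), and $\Tr(O^2)\Tr(\rho^2)/d\le\B\P$. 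Together these give the $\max\{\sqrt{\B\P},\B\P\}$ term. The higher-order terms, of relative size $d^7/s^4$, come from the next order of the overlap expansion (the $\mathcal O(d^2)$ directions combined with $d^3$ from dimension ratios). I would bound them by the smaller of $\B\P$ and $1-\Tr\rho^2$. The latter bound arises because those terms vanish when $\rho$ is pure: a pure input is exactly rotated by $\widehat U$, and the fluctuation then only involves the first-order angle spread. For item~3, the rank enters through $|\Tr(O\rho_0 O\rho_0)|$-type higher terms, which are bounded by $r(\rho,O)\cdot(\ldots)$. When $s\ge C''d^2\sqrt{r}$, the $d^7 r/s^4$-type remainder is dominated by $d^2\B\P/s^2$, because $\B\P\ge1$.

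The main obstacle is the representation-theoretic step of computing the coefficients of the second-moment twirl for the sine-power $\bfq$, together with controlling the error terms uniformly in $\lambda$. My first attempt would be the Weyl-integration route. It reduces everything to a $d$-dimensional torus integral of $|\sum_\lambda\sqrt{q_\lambda}\chi_\lambda|^2$ against trigonometric polynomials of degree $\le 2$. Using the product structure of sine-power weights, this factorizes into one-dimensional Fejér-type kernels whose second and fourth moments are explicit.
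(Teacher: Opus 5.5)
Your framework is the one the paper uses: covariance reduction to $U=\openone$, a class-function outcome density, and expansion over the irreps of $\widehat U^{\otimes 2}\otimes\overline{\widehat U}^{\otimes 2}$. The quantitative core of your plan is wrong, however, and the step ``Item~1 follows from $|\Tr(O^2\rho_0^2)|\le\Tr(O^2)\le\B$'' fails.

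First, Construction~\ref{construction:YRC_programming_scheme} has $N=\Theta(s/d^2)$, so
$1-\p_{\bfq}=1-c_{\nu_1}=\frac{4\pi^2(d-1)}{3(d+1)N^2}+\bigO(N^{-4})=\Theta(d^4/s^2)$, not $\bigO(d^2/s^2)$.

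Second, the law of $V:=U^\dagger\widehat U$ is conjugation invariant. Write $V=e^{\mathrm{i}G}$, dropping the irrelevant global phase so that $\Tr G=0$. Then $\E[G\otimes G]=\beta(\F-\openone/d)$, and matching $\E[VAV^\dagger]$ with Lemma~\ref{lemma:SAR_channel_average_result} to second order forces $\beta=(1-\p_{\bfq})/d$. In the frame $U=\openone$, the first-order fluctuation of $\Tr[OV\rho V^\dagger]$ is $\Tr\bigl(G\cdot\mathrm{i}[\rho,O]\bigr)$. Hence
\[
\Var[\widehat X_j]=\frac{1-\p_{\bfq}}{d}\,\bigl\|[\rho,O]\bigr\|_F^2+\cdots=\frac{2(1-\p_{\bfq})}{d}\bigl(\Tr[O^2\rho_0^2]-\Tr[O\rho_0O\rho_0]\bigr)+\cdots,
\qquad \frac{1-\p_{\bfq}}{d}=\Theta\Bigl(\frac{d^3}{s^2}\Bigr),
\]
where the dots are higher order in $1-\p_{\bfq}$. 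So $\Tr(O^2\rho_0^2)$ carries a $d^3/s^2$ coefficient, and for pure inputs it is not controlled by $\B$. Take $O=|0\rangle\langle 0|-|1\rangle\langle 1|$ ($\B=2$) and $\rho=|+\rangle\langle+|$ with $|+\rangle=(|0\rangle+|1\rangle)/\sqrt2$. Then $\|[\rho,O]\|_F^2=2$ and $\Var[\widehat X_j]=\Theta(d^3/s^2)$ for $s\ge Cd^2$. Here $1-\Tr\rho^2=0$ and $r(\rho,O)=1$, so this exceeds each of Items~1--3 by a factor $d$. No bookkeeping along your lines can reach the stated bounds for such requests. Separately, your claim that the $d^7/s^4$ terms vanish for pure $\rho$ ``because a pure input is exactly rotated'' does not distinguish pure from mixed inputs.

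The paper reaches the stated bounds by a different mechanism. Its closed form \eqref{eqn:estimator_X_variance_full_expression} attaches the large negative coefficient $c_{\nu_2}/c_{\nu_1}^2-1=-\Theta(1/(dN^2))$ to $\Tr[O\sigma_0O\sigma_0]$. After re-expanding $\sigma_0=\sigma-\openone/d$, that term is discarded via $\Tr[O\sigma O\sigma]\ge0$, which is where $\sqrt{\B\P}$ enters. The items then follow from:
\begin{itemize}
\item Item~1: the pure-state reduction \lref{lemma:Var_reduce_pure};
\item Item~2: \lref{lemma:two_term_diff};
\item Item~3: absorbing the $N^{-4}$ term through $\Tr[O\sigma]^2\le r\,\Tr[O\sigma O\sigma]$ (\lref{lemma:bounding_two_indefinite_terms}).
\end{itemize}
None of these steps appears in your plan, but importing them would not rescue it. To leading order that closed form equals $\frac{2(1-\p_{\bfq})}{d}\bigl(\frac1d\Tr[O^2]\Tr[\sigma_0^2]-\Tr[O\sigma_0O\sigma_0]\bigr)$. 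This is the expression above with $\Tr[O^2\sigma_0^2]$ replaced by its unitary average. For $\rho=|0\rangle\langle0|$ and the same $O$ it equals $-\frac{2(1-\p_{\bfq})}{d}(1-\frac2d)^2$, which is negative and so cannot be a variance.

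The slip is in \eref{eqn:second_moment_channel_on_symmetric_input}. Since $\sigma_0\otimes\sigma_0$ commutes with $\F$, its $\nu_3$-component vanishes. Its $\nu_1$-component does not: its Hilbert--Schmidt overlap with $P_+(Y\otimes\openone+\openone\otimes Y)$, $\Tr Y=0$, equals $\Tr[Y\sigma_0^2]$. Redoing this projection reproduces the commutator formula.

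What the approach actually yields at leading order is $\Var[\widehat X_j]=\bigO\bigl(\frac{d^3}{s^2}\Tr[O^2\rho^2]\bigr)$. This is $\bigO(d^2/s^2)$ when $\Tr\rho^2=\bigO(1/d)$, e.g.\ for the PTM requests, but it does not give the worst-case bounds asserted in Items~1--3.
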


Since the snapshots $\{\widehat U_j\}_{j\in[L]}$ are independent, the averaged linear estimator satisfies
\[
    \Var\left[\widehat Z(\widehat{\mathsf X},L)\right]
    =
    \frac{1}{L}\Var[\widehat X_j].
\]
For either averaged estimator, $\widehat Z(\widehat{\mathsf X},L)$ or $\widehat Z(\widehat{\mathsf\Lambda},L)$, Chebyshev's inequality gives  
\[
\Pr\left[
    \left|
    \mathsf{estimator}
    -
    \Tr\left[O\cdot U\rho_0U^\dagger\right]
    \right|
    \ge \eps
\right]
\le
\frac{\Var[\mathsf{estimator}]}{\eps^2}.
\]
Thus, it suffices to choose $s,L$ so that the relevant variance is $\mathcal O(\eps^2)$.

\paragraph{Parameter choice in the moderate-precision regime.}
We first use the quadratic estimator $\widehat Z(\widehat{\mathsf\Lambda},L)$ together with the Plancherel learning strategy.  By \lref{lemma:small_s_regime_estimator_Lambda_variance}, to ensure  $\Var\!\big[\widehat Z(\widehat{\mathsf\Lambda},L)\big]\leq \mathcal O(\eps^2)$,  it suffices to take
\[
L
=
\bigO\left(
    \frac{1}{\eps^2}
    \left(
        \frac{d^2}{s^4}\min\{1,\B\P\}
        +
        \frac{\P}{d}
    \right)
    +
    \frac{1}{\eps}
    \frac{d^2}{s^2}\sqrt{\B\P}
\right).
\]
Using $K=sL$, the query complexity becomes
\[
K
=
\bigO\left(
    \frac{1}{\eps^2}
    \left(
        \frac{d^2}{s^3}\min\{1,\B\P\}
        +
        \frac{s}{d}\P
    \right)
    +
    \frac{1}{\eps}
    \frac{d^2}{s}\sqrt{\B\P}
\right),
\]
which gives the query complexity in the first part of \tref{thm:cseu_upper_bound}. 
In particular, taking $s=d$ yields
\[
K
=
\bigO\left(
    \frac{1}{\eps^2}
    \left(
        \frac{1}{d}\min\{1,\B\P\}
        +
        \P
    \right)
    +
    \frac{d\sqrt{\B\P}}{\eps} 
\right).
\]
Therefore, when
$\eps\ge d^{-1}\sqrt{\P/\B}$, the last term dominates, and we obtain
\[
    K=\bigO\left(d\sqrt{\B\P}\,\eps^{-1}\right).
\]

For the parameter choice $s=d$ in the moderate-precision regime
$\eps\ge d^{-1}\sqrt{\P/\B}$, the storage cost of the classical shadow data and the classical time required to evaluate the prediction function are both polynomial in $d$.
Indeed, in this case, the required number of independent snapshots satisfies
\[
L
=
\frac{K}{s}
=
\bigO\left(\sqrt{\B\P}\,\eps^{-1}\right)
\le
\bigO\left(\sqrt{\B\P}\cdot d\sqrt{\frac{\B}{\P}}\right)
=
\bigO(d\B)
\le
\bigO(d^2),
\]
where we used $\B\le d$ in the last step. 
The classical shadow data consist of $L$ unitary snapshots $\{\widehat U_j\}_{j\in[L]}$. 
Storing each $\widehat U_j$ as a dense $d\times d$ complex matrix requires $\mathcal O(d^2)$ complex numbers, and hence the total storage cost is $\mathcal O(Ld^2)=\poly(d)$ complex numbers. 
Moreover, for any fixed request $(\rho,O)$, the quadratic prediction function $\widehat Z(\widehat{\mathsf\Lambda},L)$ can be evaluated by summing over $L(L-1)$ snapshot pairs, with each term computable in $\poly(d)$ time. 
Thus, the prediction function $f$ can be evaluated with classical time  $\mathcal O(L^2\poly(d))=\poly(d)$,
up to standard finite-precision overheads. 
Together, these results establish the storage and classical prediction-time complexity stated in the first part of \tref{thm:cseu_upper_bound}.

\paragraph{Parameter choice in the high-precision regime.}
It remains to establish the query complexity in the high-precision regime
$\eps\le d^{-1}\sqrt{\P/\B}$.
Here we use the linear estimator $\widehat Z(\widehat{\mathsf X},L)$ together with the sine-power state learning strategy.  We use a single covariant learning experiment with $s$ parallel queries, namely $L=1$, so that the query complexity is $K=s$.

Our goal is to ensure  $\Var\!\big[\widehat Z(\widehat{\mathsf X},L)\big]\leq \mathcal O(\eps^2)$. The first bound in \lref{lemma:large_s_regime_estimator_X_variance} shows that it suffices to choose
\[
    K=s
    =
    \Theta\left(
        \frac{d\sqrt{\B}}{\eps}
    \right),
\]
with a sufficiently large leading constant so that $s\ge Cd^2$. 
The second bound in \lref{lemma:large_s_regime_estimator_X_variance} shows that it suffices to choose
\[
K=s
    =
    \Theta\left(
        \frac{d\sqrt{\B\P}}{\eps}
        +
        \frac{d^{7/4}}{\sqrt{\eps}}
        \left(1-\Tr(\rho^2)\right)^{1/4}
    \right),
\]
again with a sufficiently large leading constant so that $s\ge Cd^2$.  
In particular, when $\eps\le d^{-1}\sqrt{\P/\B}$, $\B\P\geq 1$, and $\P\leq \B/d$, the first term dominates. This yields
\[
    K=\bigO\left( \frac{d\sqrt{\B\P}}{\eps} \right).
\]
Finally, for the rank-restricted subclass, the third bound in \lref{lemma:large_s_regime_estimator_X_variance} applies.  If
$r(\rho,O)\leq C'\B^2$, then choosing
$s =\Theta\left(d\sqrt{\B\P}\,\eps^{-1}\right)$ with a sufficiently large leading constant ensures
$s\ge C''d^2\sqrt{r(\rho,O)}$
and makes the variance $\mathcal O(\eps^2)$. Thus, the same query complexity
$K=\bigO\left(d\sqrt{\B\P}\,\eps^{-1}\right)$
is achieved for this subclass. 
Together, these results establish the query complexity in the second part of \tref{thm:cseu_upper_bound}.

For the high-precision case considered here, the storage cost and classical prediction time are also polynomial in $d$. 
Indeed, since $L=1$, the classical shadow data consist of a single unitary snapshot $\widehat U$, which can be stored using $\mathcal O(d^2)$ complex numbers in a dense-matrix representation. 
Moreover, for any fixed request $(\rho,O)$, the linear prediction function only requires evaluating
$\widehat X=\p_{\bfq}^{-1}\Tr\big[O\widehat U\rho_0\widehat U^\dagger\big]$,
which can be done in $\poly(d)$ classical time using standard matrix operations. 
Together, these results establish the storage and classical prediction-time complexity stated in the second part of \tref{thm:cseu_upper_bound}.

\subsubsection{Amplification for multiple prediction requests}\label{sec:mean_Amplification}
Here, we briefly explain the standard median amplification argument \cite{Huang_2020, Grier_2024, Li_2025} used in Remark~\ref{remark:Mproperties}, which extends the CSEU guarantee for a single prediction request to a finite batch of requests.
Let $(\rho_1,O_1),\ldots,(\rho_M,O_M)$
be $M$ classically specified state-observable pairs with $O_\ell\in\obs(\B)$.
The goal is to estimate $\Tr[O_\ell U\rho_\ell U^\dag]$ 
to additive error $\eps$ for all $\ell\in[M]$, with total failure probability at most $0<\delta<1$.

Let $\mathcal A$ be an arbitrary protocol that solves Problem~\ref{prob:CSEU} with success probability at least $2/3$ using $K$ oracle queries. 
Run $\mathcal A$ independently for
$R=\mathcal O(\log(M/\delta))$ times, obtaining independent classical shadows
\[
    S^{(1)}(U),S^{(2)}(U),\ldots,S^{(R)}(U).
\]
For each request $(\rho_\ell,O_\ell)$ and each repetition $t\in[R]$, define
\[
    \widehat y_{\ell}^{(t)}
    :=
    f\bigl(S^{(t)}(U),\rho_\ell,O_\ell\bigr),
\]
where $f$ is the prediction function of  $\mathcal A$. 
For every fixed $\ell$, each $\widehat y_{\ell}^{(t)}$ is $\eps$-accurate with probability at least $2/3$. 
By a Chernoff bound, with probability at least $1-\delta/M$, more than half of
$\widehat y_{\ell}^{(1)},\ldots,\widehat y_{\ell}^{(R)}$
are $\eps$-accurate. On this event, the median estimator
\[
    \widehat y_\ell
    :=
    \mathsf{median}
    \bigl\{
    \widehat y_{\ell}^{(1)},\ldots,\widehat y_{\ell}^{(R)}
    \bigr\}
\]
also has additive error at most $\eps$. 
A union bound over all $\ell\in[M]$ then shows that, with probability at least $1-\delta$, all $M$ estimates $\widehat y_1,\dots, \widehat y_M$ are simultaneously $\eps$-accurate. 
The total number of queries to the unitary is thus 
\[
KR=\mathcal O(K\log(M/\delta)).
\]
This establishes the results stated in Remark~\ref{remark:Mproperties}.

\subsection{Establishing the lower bound}

We now explain the proof idea behind the CSEU lower bound in \tref{thm:cseu_lower_bound}. 
The argument reuses the CSEU-to-tomography reduction in Proposition~\ref{prop:TomoEfficiency}, which shows that any CSEU protocol with query complexity $K(d,\eps)$ can be converted into a full unitary tomography protocol using $\bigo{d}\,K(d,\eps)$ queries, with only a constant-factor loss in the final diamond-norm accuracy. 
Intuitively, this reduction works by using the CSEU protocol to estimate transition probabilities on a finite covering net of pure states, and then reconstructing a unitary channel consistent with these estimates. 

We then invoke the query lower bound of Haah, Kothari, O'Donnell, and Tang~\cite{Haah_2023}:
\begin{lemma}[{\cite[Theorem 1.2]{Haah_2023}}]
\label{lemma:unitary_tomo_LB}
Let $0<\eps<1/8$ and $\mathcal{A}$ be a protocol that, for an unknown $d$-dimensional unitary $U \in \C^{d \times d}$ accessible through black-box oracles that implement $U$, $U^\dagger$, $\mathrm{c}U$, and $\mathrm{c}U^\dagger$,
can output a classical description of a unitary channel $\widehat{\mathcal{U}}$ such that
\[
\Pr\left[ \big\|\widehat{\mathcal{U}} - \mathcal{U}\big\|_\diamond \leq \eps \right]\geq \frac{2}{3},
\] 
then $\mathcal{A}$ must use $\Omega(d^2\eps^{-1})$ oracle queries.
\end{lemma}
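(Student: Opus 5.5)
The plan is a packing argument combined with an information-theoretic bound that exploits the polynomial structure of query algorithms. The reduction is the standard one: any protocol that learns $U$ to diamond error $\eps$ with probability $2/3$ can identify a hidden index in a large, well-separated family of unitaries, so its output carries $\Omega(d^2)$ bits about the index. The task is then to show that $K$ queries to $U,U^\dagger,\mathrm cU,\mathrm cU^\dagger$ can convey only about $\mathcal O(K\eps)$ "effective parameters" of information about a unitary confined to an $\eps$-ball.

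\textbf{Step 1 (packing).} I would use a Gilbert--Varshamov/volume argument to construct traceless Hermitian matrices $\{H_x\}_{x\in[N]}$ with $\|H_x\|_\infty\le 1$ and $N=\exp(\Omega(d^2))$. These should be pairwise separated, $\|H_x-H_y\|_\infty\ge c$, for example random sign matrices suitably normalized, or a net-based packing of the unit operator-norm ball of $\mathfrak{su}(d)$, whose real dimension is $d^2-1$. Set $U_x=\exp(\mathrm i\,\delta H_x)$ with $\delta=C\eps$. For small $\delta$ one has $\|\mathcal U_x-\mathcal U_y\|_\diamond=\Theta(\delta\|H_x-H_y\|_{\infty,\mathrm{mod\ phase}})$, so choosing $C$ large makes distinct elements more than $2\eps$ apart in diamond norm. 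A successful tomography protocol then recovers $x$ with probability $2/3$, and Fano's inequality gives mutual information $I(x;\text{output})=\Omega(d^2)$ when $x$ is uniform.

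\textbf{Step 2 (polynomial structure and truncation).} Fix any protocol with $K$ queries. Its final pure state $\ket{\psi_x}$ (before measurement, with all ancillas purified) is a polynomial of total degree at most $K$ in the entries of $U_x$ and $U_x^\dagger$; controlled queries preserve this, since $\mathrm cU$ is linear in $U$ plus a constant. Expanding $U_x=\sum_k(\mathrm i\delta H_x)^k/k!$ writes $\ket{\psi_x}=\sum_{m\ge0}\delta^m\ket{\phi_m(H_x)}$, where $\ket{\phi_m}$ is a homogeneous polynomial of degree $m$ in the $d^2$ real coordinates of $H_x$. A combinatorial count of how the $m$ factors of $H_x$ distribute among the $K$ query slots gives $\|\phi_m\|\le K^m/m!$. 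Truncating at degree $m_0=\Theta(K\delta+d^2)$ therefore leaves a tail of norm at most $e^{-\Omega(m_0)}$.

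\textbf{Step 3 (dimension bound to Holevo).} The truncated states lie in a fixed subspace spanned by coefficient vectors of monomials of degree at most $m_0$ in $d^2$ variables. This subspace has dimension $D\le\binom{d^2+m_0}{m_0}$. Holevo's bound, together with continuity of the Holevo quantity (Fannes/Alicki--Fannes) to absorb the exponentially small tail, gives $I(x;\text{output})\le\log D+o(1)$.

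\textbf{Step 4 (combining).} If $K\delta\ll d^2$, then $\log D\approx m_0\log(e(1+d^2/m_0))$. To bring this below $c\,d^2$ one needs $m_0$ to be a small constant times $d^2$, which forces $K\eps=\Omega(d^2)$ and yields $K=\Omega(d^2/\eps)$.

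\textbf{Main obstacle.} The naive count in Step 4 gives $\log D\approx K\eps\log(d^2/(K\eps))$ and hence loses a logarithmic factor unless $m_0$ is genuinely proportional to $d^2$. Removing this log is the delicate part. My first attempt would exploit the freedom in Step 1: place the packing at a constant scale inside a $d^2$-dimensional ball, so that $\log N$ matches the monomial count exactly. I would also use Hermitian parametrization with a \emph{random} center unitary $V$, taking $U_x=V e^{\mathrm i\delta H_x}$ with $V$ Haar-distributed. Haar invariance should then reduce the relevant span to a representation-theoretic subspace whose dimension is tight: at most $\exp(\mathcal O(m_0))$ with $m_0=\mathcal O(K\delta)$ once $m_0\gtrsim d^2$.

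If this fails, the fallback is to accept the argument as a lower bound up to a log factor. The exact $\Omega(d^2\eps^{-1})$ statement is in any case the one proved in \cite[Theorem 1.2]{Haah_2023}, which is what we invoke.
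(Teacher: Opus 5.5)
The paper gives no proof of this lemma: it imports the statement verbatim from \cite[Theorem~1.2]{Haah_2023}. That is exactly your fallback, so as a justification of the statement your proposal agrees with the paper.

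Your self-contained sketch is in better shape than you think, and the ``main obstacle'' is not real. There are two fixes.

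\textbf{Truncation degree.} Truncate at $m_0=\max\{e^2K\delta,C\}$ rather than $\Theta(K\delta+d^2)$. Since $(K\delta)^m/m!\le e^{-m}$ for $m>m_0$, the tail is at most $e^{-m_0}$, a small constant. You can then apply Fano directly to the normalized truncated ensemble, whose decoding error is at most $1/3+\mathcal O(e^{-m_0})$. No Alicki--Fannes step and no $d^2$ term in $m_0$ are needed.

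\textbf{No log loss.} The bound $\log D\le m_0\log\bigl(e(1+d^2/m_0)\bigr)$ is increasing in $m_0$. At $m_0=\alpha d^2$ it equals $d^2\cdot\alpha\log\bigl(e(1+1/\alpha)\bigr)$, and $\alpha\log(e(1+1/\alpha))\to0$ as $\alpha\to0$. So a small constant $\alpha$ makes $\log D$ fall below the Fano requirement $(\tfrac23-o(1))\log N-\log 2=\Omega(d^2)$ whenever $K\delta\le\alpha d^2/e^2$. Hence $K\delta=\Omega(d^2)$, i.e.\ $K=\Omega(d^2/\eps)$, with no logarithm. The factor $\log(d^2/(K\eps))$ in your naive count is only $\log(1/\alpha)=\mathcal O(1)$ at the threshold where the contradiction bites. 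The Haar-random center and the representation-theoretic refinement are therefore unnecessary.

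\textbf{Inconsistency in your Step 4.} Your literal choice $m_0=\Theta(K\delta+d^2)$ contradicts your own Step~4. It forces $\log D\ge\log\binom{2d^2}{d^2}\approx1.39\,d^2$ regardless of $K$, so ``$m_0$ a small constant times $d^2$'' can never occur. That version works only if the packing has $\log N$ a sufficiently large multiple of $d^2$, which a small constant packing scale (your ``first attempt'') provides. Dropping the $d^2$ term is the cleaner fix.

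\textbf{Residual detail.} For $\eps$ close to $1/8$, the parameter $\delta=C\eps$ is not small, so your first-order diamond-norm separation estimate does not apply. Handle that regime with a constant-scale diamond packing of $\U(d)$, written as $e^{\mathrm i H_x}$ with $\|H_x\|_\infty\le\pi$. The expansion, truncation and dimension count go through unchanged there.
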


By Lemma~\ref{lemma:unitary_tomo_LB}, learning an unknown $d$-dimensional unitary channel to diamond-norm error $\bigO(\eps)$ requires
$\Omega(d^2\eps^{-1})$
oracle queries. 
This lower bound holds even in the stronger access model where the learner may query $U$, $U^\dagger$, $\mathrm cU$, and $\mathrm cU^\dagger$. 
Therefore, if there were a CSEU protocol using $K(d,\eps)$ queries in this same strengthened oracle model, Proposition~\ref{prop:TomoEfficiency} would convert it into a unitary tomography protocol using $\bigo{d} \cdot K(d,\eps)$ queries. 
The tomography lower bound then implies
    $\bigo{d}\,K(d,\eps)
    \ge
    \Omega(d^2\eps^{-1})$,
and hence $K(d,\eps)\ge \Omega(d\eps^{-1})$.
This concludes the proof of \tref{thm:cseu_lower_bound}.

\section{Discussion and outlook}
\label{sec:discussions}

There are several interesting directions for future work that would significantly strengthen our understanding of protocol design for tomographic tasks of quantum processes and extend our protocol to broader applicability.

\paragraph{Lower bound on $\B$.}
While the query complexity lower bound we derived has shown that the minimal dependence on $d$ and $\eps$ is $\Omega(d\eps^{-1})$, the minimal dependence on $\B$ remains unknown. Grier, Pashayan, and Schaeffer \cite{Grier_2024} utilized communication complexity techniques to derive the minimal dependence on $\B$ for CSEPS [cf. Section \ref{subsec:estimate_quantum_state_properties}]. It is plausible that similar techniques might be transferred to establish a lower bound for the dependence of $\B$ in CSEU. We herein conjecture that $\Theta(d \sqrt{\B} \eps^{-1} )$ queries are necessary and sufficient to solve the CSEU task.

\paragraph{Computational efficiency and noise-resilience.} Our design of the CSEU protocol relies on quantum operations conducted in the Schur-Weyl basis, including synthesizing highly entangled probe states and applying covariant measurements with high Kraus rank. According to \cite{harrow_schur_2005, Krovi2019efficienthigh}, the Schur transform that transforms the computational bases to the Schur-Weyl basis in the space $(\C^{\otimes d})^{\otimes s}$ requires a quantum circuit of size $\poly(s, \log d)$. At the retrieval phase, generating the Haar randomness (or an $(s + 2)$-design) requires a quantum circuit of size $\poly(d, s)$ \cite{Schuster2025, he2025resourcequantificationprogramminglowdepth}. To achieve query-optimal learning, an exponential circuit complexity of $\poly(d)$ with $\poly(d)$ ancillary qubits is required for a single-shot data acquisition, since we need $s = \Omega(d)$. A candidate approach for a quantumly efficient CSEU protocol is to reduce the problem to pure-state classical shadow \cite{Grier_2024, Li_2025}, at the cost of restricting the query complexity to the standard quantum limit with a quadratic dependence on $d$. However, the gate complexity for coherent access to $s = \Omega(d)$ copies of the unknown unitary is inherently $\poly(d)$.

Moreover, since our probe state is highly entangled, it is fragile in the presence of certain noises, leading to the performance degradation of our CSEU protocol. To see hints of this phenomenon, one can look to \cite{Chen2023, DongMIP24}, which demonstrate how noise affects the performance of quantum devices that solve decision problems; for computational problems, the situation is predictably worse. Notably, noise-robust protocols for classical shadow estimation of quantum states have been considered \cite{ChenRobust2021, PhysRevLett.134.090801}. For CSEU protocols, however, it remains open how to improve both operational efficiency and noise resilience, while preserving the query-complexity optimality achieved in this work.

\paragraph{Parallelizing quantum information processing.} 
The long-standing debate over sequential versus parallel architectures sits at the very heart of quantum information science, across multiple subfields including but not limited to quantum metrology \cite{PhysRevLett.113.250801,PhysRevLett.117.160801,zhou2021asymptotic,Liu_2023,kurdzialek2023using}, channel discrimination \cite{PhysRevLett.127.200504}, searching \cite{Zalka_1999}, and quantum property testing \cite{9719827}. Crucially, determining whether a conventionally sequential task can be parallelized without compromising its performance remains a critical quest.  A positive answer would have direct practical implications: parallel protocols can reduce the total running time and relax the coherence-time requirements for implementation on near-term devices.  Our results provide such a parallelization for unitary tomography and shadow estimation, showing that the optimal query scaling can be achieved with parallel oracle access.  These results, albeit theoretical, shed light on how to design time-efficient quantum protocols. An interesting and promising direction for future research is to determine whether such parallelization extends to a broader class of quantum information processing tasks, and whether there is a more systematic framework for converting sequential protocols into parallel ones.

\section*{Acknowledgments}
E.H., Z.L., and Y.Y. conceived the project.
E.H. and Z.L. formulated the theoretical framework, developed the theoretical tools and results, and wrote the paper with Y.Y.. N.S. and S.Z. contributed to framing and contextualizing the full tomography application and discussions. Y.Y. supervised the project. 

E.H. thanks Yangjing Dong, Fengning Ou, and Penghui Yao for insightful discussion on learning shallow circuits.
This work is supported in part by the National Natural Science Foundation of China via the Excellent Young Scientists Fund (Hong Kong and Macau) Project 12322516, the National Natural Science Foundation of China (NSFC)/Research Grants Council (RGC) Joint Research Scheme via Project N\_HKU7107/24, the Guangdong Provincial Quantum Science Strategic Initiative via Project GDZX2503001 and the Hong Kong Research Grant Council (RGC) through the General Research Fund (GRF) Grant 17305625.
N.S. and S.Z. acknowledge the support of the Natural Sciences and Engineering Research Council of Canada (NSERC), [DGECR-2025-00505, RGPIN-2025-04054], National Research Council of Canada (NRC), [AQC-217-1], and Perimeter Institute for Theoretical Physics, a research institute supported in part by the Government of Canada through the Department of Innovation, Science and Economic Development Canada and by the Province of Ontario through the Ministry of Colleges and Universities. 

\bibliographystyle{alpha}
\bibliography{Sample_Optimal_CSEU/cseu_refs}

\newcommand{\etalchar}[1]{$^{#1}$}
\begin{thebibliography}{MAVAV16}

\bibitem[AAB{\etalchar{+}}19]{arute2019quantum}
Frank Arute, Kunal Arya, Ryan Babbush, et~al.
\newblock Quantum supremacy using a programmable superconducting processor.
\newblock {\em Nature}, 574(7779):505--510, 2019.

\bibitem[Aar18]{aasonson_shadow2018}
Scott Aaronson.
\newblock Shadow tomography of quantum states.
\newblock In {\em Proceedings of the 50th Annual ACM SIGACT Symposium on Theory of Computing}, STOC 2018, page 325–338, New York, NY, USA, 2018. Association for Computing Machinery.

\bibitem[ADOY25]{qac0_barely_superlinear_2025}
Anurag Anshu, Yangjing Dong, Fengning Ou, and Penghui Yao.
\newblock On the computational power of {QAC0} with barely superlinear ancillae.
\newblock In {\em Proceedings of the 57th Annual ACM Symposium on Theory of Computing}, STOC '25, page 1476–1487, New York, NY, USA, 2025. Association for Computing Machinery.

\bibitem[Ang25]{Angrisani_2025}
Armando Angrisani.
\newblock Learning unitaries with quantum statistical queries.
\newblock {\em Quantum}, 9:1817, July 2025.

\bibitem[AS17]{Aubrun2017}
Guillaume Aubrun and Stanisław Szarek.
\newblock {\em Alice and Bob Meet Banach}.
\newblock American Mathematical Society, August 2017.

\bibitem[BAL19]{Bairey2019LearnLocalHamiltonian}
Eyal Bairey, Itai Arad, and Netanel~H. Lindner.
\newblock Learning a local {H}amiltonian from local measurements.
\newblock {\em Physical Review Letters}, 122:020504, Jan 2019.

\bibitem[BCD{\etalchar{+}}10]{Bisio_2010}
Alessandro Bisio, Giulio Chiribella, Giacomo~Mauro D’Ariano, Stefano Facchini, and Paolo Perinotti.
\newblock Optimal quantum learning of a unitary transformation.
\newblock {\em Physical Review A}, 81(3), March 2010.

\bibitem[BCH{\etalchar{+}}94]{Benkart1994}
G.~Benkart, M.~Chakrabarti, T.~Halverson, R.~Leduc, C.Y. Lee, and J.~Stroomer.
\newblock Tensor product representations of general linear groups and their connections with brauer algebras.
\newblock {\em Journal of Algebra}, 166(3):529–567, 1994.

\bibitem[BCO26]{Bluhm_2026}
Andreas Bluhm, Matthias~C. Caro, and Aadil Oufkir.
\newblock Hamiltonian property testing.
\newblock {\em Quantum}, 10:1979, January 2026.

\bibitem[BDM99]{Bu_ek_1999}
V.~Bužek, R.~Derka, and S.~Massar.
\newblock Optimal quantum clocks.
\newblock {\em Physical Review Letters}, 82(10):2207–2210, March 1999.

\bibitem[BEG{\etalchar{+}}24]{bluvstein2024logical}
Dolev Bluvstein, Simon~J. Evered, Alexandra~A. Geim, et~al.
\newblock Logical quantum processor based on reconfigurable atom arrays.
\newblock {\em Nature}, 626:58--65, 2024.

\bibitem[Bha97]{Bhatia1997}
Rajendra Bhatia.
\newblock {\em Matrix Analysis}.
\newblock Springer New York, 1997.

\bibitem[BHRK25]{PhysRevLett.134.090801}
Raphael Brieger, Markus Heinrich, Ingo Roth, and Martin Kliesch.
\newblock Stability of classical shadows under gate-dependent noise.
\newblock {\em Phys. Rev. Lett.}, 134:090801, Mar 2025.

\bibitem[BKD14]{PhysRevA.90.012110}
Charles~H. Baldwin, Amir Kalev, and Ivan~H. Deutsch.
\newblock Quantum process tomography of unitary and near-unitary maps.
\newblock {\em Phys. Rev. A}, 90:012110, Jul 2014.

\bibitem[BLMT24]{bakshi2024structure1}
Ainesh Bakshi, Allen Liu, Ankur Moitra, and Ewin Tang.
\newblock Structure learning of {H}amiltonians from real-time evolution.
\newblock In {\em 2024 IEEE 65th Annual Symposium on Foundations of Computer Science (FOCS)}, pages 1037--1050. IEEE, 2024.

\bibitem[BMQ21]{PhysRevLett.127.200504}
Jessica Bavaresco, Mio Murao, and Marco~T\'ulio Quintino.
\newblock Strict hierarchy between parallel, sequential, and indefinite-causal-order strategies for channel discrimination.
\newblock {\em Physical Review Letters}, 127:200504, Nov 2021.

\bibitem[BO21]{buadescu2021improved}
Costin B{\u{a}}descu and Ryan O'Donnell.
\newblock Improved quantum data analysis.
\newblock In {\em Proceedings of the 53rd Annual ACM SIGACT Symposium on Theory of Computing}, pages 1398--1411, 2021.

\bibitem[BOO00]{Borodin2000}
Alexei Borodin, Andrei Okounkov, and Grigori Olshanski.
\newblock Asymptotics of plancherel measures for symmetric groups.
\newblock {\em Journal of the American Mathematical Society}, 13(3):481–515, April 2000.

\bibitem[Bra07]{Bravyi2007}
Sergey Bravyi.
\newblock Upper bounds on entangling rates of bipartite {H}amiltonians.
\newblock {\em Physical Review A}, 76:052319, 2007.

\bibitem[Car24]{caro2022learning}
Matthias~C Caro.
\newblock Learning quantum processes and {H}amiltonians via the {P}auli transfer matrix.
\newblock {\em ACM Trans. Quantum Comput.}, 5(2):1--53, 2024.

\bibitem[CCHL22]{9719827}
Sitan Chen, Jordan Cotler, Hsin-Yuan Huang, and Jerry Li.
\newblock Exponential separations between learning with and without quantum memory.
\newblock In {\em 2021 IEEE 62nd Annual Symposium on Foundations of Computer Science (FOCS)}, pages 574--585, Los Alamitos, CA, USA, February 2022. IEEE Computer Society.

\bibitem[CCHL23]{Chen2023}
Sitan Chen, Jordan Cotler, Hsin-Yuan Huang, and Jerry Li.
\newblock The complexity of {NISQ}.
\newblock {\em Nature Communications}, 14(1), 2023.

\bibitem[CDPS04]{Chiribella2004EfficientUse}
G.~Chiribella, G.~M. D'Ariano, P.~Perinotti, and M.~F. Sacchi.
\newblock Efficient use of quantum resources for the transmission of a reference frame.
\newblock {\em Phys. Rev. Lett.}, 93:180503, Oct 2004.

\bibitem[CDVDM08]{cox2008blocks}
Anton Cox, Maud De~Visscher, Stephen Doty, and Paul Martin.
\newblock On the blocks of the walled brauer algebra.
\newblock {\em Journal of Algebra}, 320(1):169--212, 2008.

\bibitem[CGO{\etalchar{+}}26]{chen_Girardi2026}
Kean Chen, Filippo Girardi, Aadil Oufkir, Nengkun Yu, and Zhicheng Zhang.
\newblock Quantum channel tomography: optimal bounds and a {H}eisenberg-to-classical phase transition.
\newblock {\em arXiv preprint arXiv:2604.17369}, 2026.

\bibitem[Che25]{chen2025inverse}
Kean Chen.
\newblock Inverse-free quantum state estimation with {H}eisenberg scaling.
\newblock {\em arXiv preprint arXiv:2510.25750}, 2025.

\bibitem[Cho75]{CHOI1975285}
Man-Duen Choi.
\newblock Completely positive linear maps on complex matrices.
\newblock {\em Linear Algebra and its Applications}, 10(3):285--290, 1975.

\bibitem[CLO{\etalchar{+}}23]{Chen2023Learnability}
Senrui Chen, Yunchao Liu, Matthew Otten, Alireza Seif, Bill Fefferman, and Liang Jiang.
\newblock The learnability of pauli noise.
\newblock {\em Nature Communications}, 14(1), January 2023.

\bibitem[CML{\etalchar{+}}24]{chen2024quantum}
Yu-Ao Chen, Yin Mo, Yingjian Liu, Lei Zhang, and Xin Wang.
\newblock Quantum algorithm for reversing unknown unitary evolutions.
\newblock {\em arXiv preprint arXiv:2403.04704}, 2024.

\bibitem[CW25]{castaneda2023hamiltonian}
Juan Castaneda and Nathan Wiebe.
\newblock Hamiltonian {L}earning via {S}hadow {T}omography of {P}seudo-{C}hoi {S}tates.
\newblock {\em {Quantum}}, 9:1700, April 2025.

\bibitem[CYZ25]{chen2025tight}
Kean Chen, Nengkun Yu, and Zhicheng Zhang.
\newblock Approximation does not help in quantum unitary time-reversal.
\newblock {\em arXiv preprint arXiv:2507.05736}, 2025.

\bibitem[CYZF21]{ChenRobust2021}
Senrui Chen, Wenjun Yu, Pei Zeng, and Steven~T. Flammia.
\newblock Robust shadow estimation.
\newblock {\em PRX Quantum}, 2(3), 2021.

\bibitem[CZ26]{christensen2026learningfermioniclinearoptics}
Aria Christensen and Andrew Zhao.
\newblock Learning fermionic linear optics with {H}eisenberg scaling and physical operations, 2026.

\bibitem[DDanM14]{PhysRevLett.113.250801}
Rafal Demkowicz-Dobrza\ifmmode~\acute{n}\else \'{n}\fi{}ski and Lorenzo Maccone.
\newblock Using entanglement against noise in quantum metrology.
\newblock {\em Phys. Rev. Lett.}, 113:250801, Dec 2014.

\bibitem[DFN{\etalchar{+}}24]{DongMIP24}
Yangjing Dong, Honghao Fu, Anand Natarajan, Minglong Qin, Haochen Xu, and Penghui Yao.
\newblock The computational advantage of $\mathsf{MIP}^*$ vanishes in the presence of noise.
\newblock volume 300, pages 30:1--30:71. Schloss Dagstuhl – Leibniz-Zentrum für Informatik, 2024.

\bibitem[DHT25]{du2025efficient}
Yuxuan Du, Min-Hsiu Hsieh, and Dacheng Tao.
\newblock Efficient learning for linear properties of bounded-gate quantum circuits.
\newblock {\em Nature Communications}, 16(1):3790, 2025.

\bibitem[DOS24]{Dutkiewicz2024advantageofquantum}
Alicja Dutkiewicz, Thomas~E. O'Brien, and Thomas Schuster.
\newblock The advantage of quantum control in many-body {H}amiltonian learning.
\newblock {\em {Quantum}}, 8:1537, November 2024.

\bibitem[dPCH26]{depradenne2026learninghamiltonianslongtimes}
Constantin Cedillo~Vayson de~Pradenne, Jordan Cotler, and Hsin-Yuan Huang.
\newblock Learning {H}amiltonians at long times, 2026.

\bibitem[EFH{\etalchar{+}}23]{elben2023randomized}
Andreas Elben, Steven~T Flammia, Hsin-Yuan Huang, Richard Kueng, John Preskill, Beno{\^\i}t Vermersch, and Peter Zoller.
\newblock The randomized measurement toolbox.
\newblock {\em Nat. Rev. Phys.}, 5(1):9--24, 2023.

\bibitem[FH04]{Fulton2004}
William Fulton and Joe Harris.
\newblock {\em Representation Theory: A First Course}.
\newblock Springer New York, 2004.

\bibitem[GCC24]{gu2024practical}
Andi Gu, Lukasz Cincio, and Patrick~J Coles.
\newblock Practical {H}amiltonian learning with unitary dynamics and {G}ibbs states.
\newblock {\em Nat. Commun.}, 15(1):312, 2024.

\bibitem[GJ14]{Gutoski_2014}
Gus Gutoski and Nathaniel Johnston.
\newblock Process tomography for unitary quantum channels.
\newblock {\em Journal of Mathematical Physics}, 55(3), March 2014.

\bibitem[GKP94]{concrete_math}
Ronald~L. Graham, Donald~E. Knuth, and Oren Patashnik.
\newblock {\em Concrete Mathematics: A Foundation for Computer Science}.
\newblock Addison-Wesley Longman Publishing Co., Inc., USA, 2nd edition, 1994.

\bibitem[GL26]{grewal2026efficientlearningstructuredquantum}
Sabee Grewal and Daniel Liang.
\newblock Efficient learning of structured quantum circuits via {P}auli dimensionality and sparsity, 2026.

\bibitem[GLM06]{giovannetti2006quantum}
Vittorio Giovannetti, Seth Lloyd, and Lorenzo Maccone.
\newblock Quantum metrology.
\newblock {\em Physical Review Letters}, 96(1):010401, 2006.

\bibitem[GLM11]{giovannetti2011advances}
Vittorio Giovannetti, Seth Lloyd, and Lorenzo Maccone.
\newblock Advances in quantum metrology.
\newblock {\em Nature Photonics}, 5(4):222--229, 2011.

\bibitem[GMZ{\etalchar{+}}25]{girardi2025random}
Filippo Girardi, Francesco~Anna Mele, Haimeng Zhao, Marco Fanizza, and Ludovico Lami.
\newblock Random stinespring superchannel: converting channel queries into dilation isometry queries.
\newblock {\em arXiv preprint arXiv:2512.20599}, 2025.

\bibitem[GPS24]{Grier_2024}
Daniel Grier, Hakop Pashayan, and Luke Schaeffer.
\newblock Sample-optimal classical shadows for pure states.
\newblock {\em Quantum}, 8:1373, June 2024.

\bibitem[Gri25]{grinko2025mixed}
Dmitry Grinko.
\newblock {\em Mixed Schur-Weyl duality in quantum information}.
\newblock PhD thesis, Universiteit van Amsterdam, 2025.

\bibitem[GSG{\etalchar{+}}23]{Gebhart_2023}
Valentin Gebhart, Raffaele Santagati, Antonio~Andrea Gentile, Erik~M. Gauger, David Craig, Natalia Ares, Leonardo Banchi, Florian Marquardt, Luca Pezzè, and Cristian Bonato.
\newblock Learning quantum systems.
\newblock {\em Nature Reviews Physics}, February 2023.

\bibitem[GW09]{goodman2009symmetry}
Roe Goodman and Nolan Wallach.
\newblock {\em Symmetry, representations, and invariants}, volume 255.
\newblock Springer, 2009.

\bibitem[Har05]{harrow_schur_2005}
Aram~W. Harrow.
\newblock {\em Applications of coherent classical communication and the Schur transform to quantum information theory}.
\newblock PhD thesis, Massachusetts Institute of Technology, 2005.

\bibitem[Hay98]{Hayashi_1998}
Masahito Hayashi.
\newblock Asymptotic estimation theory for a finite-dimensional pure state model.
\newblock {\em Journal of Physics A: Mathematical and General}, 31(20):4633–4655, May 1998.

\bibitem[Hay17]{Hayashi2017}
Masahito Hayashi.
\newblock {\em A Group Theoretic Approach to Quantum Information}.
\newblock Springer International Publishing, 2017.

\bibitem[HBC{\etalchar{+}}22]{Huang_2022}
Hsin-Yuan Huang, Michael Broughton, Jordan Cotler, Sitan Chen, Jerry Li, Masoud Mohseni, Hartmut Neven, Ryan Babbush, Richard Kueng, John Preskill, and Jarrod~R. McClean.
\newblock Quantum advantage in learning from experiments.
\newblock {\em Science}, 376(6598):1182–1186, June 2022.

\bibitem[HCP23]{Huang_learningtopredict2023}
Hsin-Yuan Huang, Sitan Chen, and John Preskill.
\newblock Learning to predict arbitrary quantum processes.
\newblock {\em PRX Quantum}, 4:040337, Dec 2023.

\bibitem[HCY23]{PhysRevResearch.5.023027}
Hong-Ye Hu, Soonwon Choi, and Yi-Zhuang You.
\newblock Classical shadow tomography with locally scrambled quantum dynamics.
\newblock {\em Phys. Rev. Res.}, 5:023027, Apr 2023.

\bibitem[Hei19]{heil2019introduction}
Christopher Heil.
\newblock {\em Introduction to real analysis}.
\newblock Springer, 2019.

\bibitem[HKOT23]{Haah_2023}
Jeongwan Haah, Robin Kothari, Ryan O’Donnell, and Ewin Tang.
\newblock Query-optimal estimation of unitary channels in diamond distance.
\newblock In {\em 2023 IEEE 64th Annual Symposium on Foundations of Computer Science (FOCS)}, page 363–390. IEEE, November 2023.

\bibitem[HKP20]{Huang_2020}
Hsin-Yuan Huang, Richard Kueng, and John Preskill.
\newblock Predicting many properties of a quantum system from very few measurements.
\newblock {\em Nature Physics}, 16(10):1050–1057, June 2020.

\bibitem[HKT24]{haah2024learning}
Jeongwan Haah, Robin Kothari, and Ewin Tang.
\newblock Learning quantum {H}amiltonians from high-temperature {G}ibbs states and real-time evolutions.
\newblock {\em Nat. Phys.}, 20(6):1027--1031, 2024.

\bibitem[HLB{\etalchar{+}}24]{Huang_2024}
Hsin-Yuan Huang, Yunchao Liu, Michael Broughton, Isaac Kim, Anurag Anshu, Zeph Landau, and Jarrod~R. McClean.
\newblock Learning shallow quantum circuits.
\newblock In {\em Proceedings of the 56th Annual ACM Symposium on Theory of Computing}, STOC ’24, page 1343–1351. ACM, June 2024.

\bibitem[HMG{\etalchar{+}}25]{HuAnsatzFree2025}
Hong-Ye Hu, Muzhou Ma, Weiyuan Gong, Qi~Ye, Yu~Tong, Steven~T. Flammia, and Susanne~F. Yelin.
\newblock Ansatz-free {H}amiltonian learning with {H}eisenberg-limited scaling.
\newblock {\em PRX Quantum}, 6:040315, Oct 2025.

\bibitem[Hol11]{Holevo2011}
Alexander Holevo.
\newblock {\em Probabilistic and Statistical Aspects of Quantum Theory}.
\newblock Edizioni della Normale, 2011.

\bibitem[HP19]{hennessy2019golden}
John~L. Hennessy and David~A. Patterson.
\newblock A new golden age for computer architecture.
\newblock {\em Communications of the ACM}, 62(2):48--60, 2019.

\bibitem[HTFS23]{huang_2023}
Hsin-Yuan Huang, Yu~Tong, Di~Fang, and Yuan Su.
\newblock Learning many-body {H}amiltonians with {H}eisenberg-limited scaling.
\newblock {\em Physical Review Letters}, 130(20), May 2023.

\bibitem[HW23]{PhysRevLett.131.240602}
Jonas Helsen and Michael Walter.
\newblock Thrifty shadow estimation: Reusing quantum circuits and bounding tails.
\newblock {\em Physical Review Letters}, 131:240602, Dec 2023.

\bibitem[HWM{\etalchar{+}}22]{PhysRevLett.129.240501}
William~J. Huggins, Kianna Wan, Jarrod McClean, Thomas~E. O'Brien, Nathan Wiebe, and Ryan Babbush.
\newblock Nearly optimal quantum algorithm for estimating multiple expectation values.
\newblock {\em Physical Review Letters}, 129:240501, Dec 2022.

\bibitem[HXVW19]{Helsen2019}
Jonas Helsen, Xiao Xue, Lieven M.~K. Vandersypen, and Stephanie Wehner.
\newblock A new class of efficient randomized benchmarking protocols.
\newblock {\em npj Quantum Information}, 5(1), August 2019.

\bibitem[HY25]{he2025resourcequantificationprogramminglowdepth}
Entong He and Yuxiang Yang.
\newblock Resource quantification for programming low-depth quantum circuits, 2025.

\bibitem[IBF{\etalchar{+}}20]{innocenti2020supervised}
Luca Innocenti, Leonardo Banchi, Alessandro Ferraro, Sougato Bose, and Mauro Paternostro.
\newblock Supervised learning of time-independent {H}amiltonians for gate design.
\newblock {\em New J. Phys.}, 22(6):065001, 2020.

\bibitem[Jam72]{Jamiokowski1972}
A.~Jamiołkowski.
\newblock Linear transformations which preserve trace and positive semidefiniteness of operators.
\newblock {\em Reports on Mathematical Physics}, 3(4):275–278, December 1972.

\bibitem[JWD{\etalchar{+}}08]{4655455}
Zhengfeng Ji, Guoming Wang, Runyao Duan, Yuan Feng, and Mingsheng Ying.
\newblock Parameter estimation of quantum channels.
\newblock {\em IEEE Transactions on Information Theory}, 54(11):5172--5185, 2008.

\bibitem[Kah07]{Kahn_2007}
Jonas Kahn.
\newblock Fast rate estimation of a unitary operation in $\mathsf{SU}(d)$.
\newblock {\em Physical Review A}, 75(2), February 2007.

\bibitem[KGADD23]{kurdzialek2023using}
Stanis{\l}aw Kurdzia{\l}ek, Wojciech G{\'o}recki, Francesco Albarelli, and Rafa{\l} Demkowicz-Dobrza{\'n}ski.
\newblock Using adaptiveness and causal superpositions against noise in quantum metrology.
\newblock {\em Physical Review Letters}, 131(9):090801, 2023.

\bibitem[Koi89]{koike1989decomposition}
Kazuhiko Koike.
\newblock On the decomposition of tensor products of the representations of the classical groups: by means of the universal characters.
\newblock {\em Advances in Mathematics}, 74(1):57--86, 1989.

\bibitem[Kro19]{Krovi2019efficienthigh}
Hari Krovi.
\newblock An efficient high dimensional quantum {S}chur transform.
\newblock {\em {Quantum}}, 3:122, February 2019.

\bibitem[KTCT23]{Kunjummen_2023}
Jonathan Kunjummen, Minh~C. Tran, Daniel Carney, and Jacob~M. Taylor.
\newblock Shadow process tomography of quantum channels.
\newblock {\em Physical Review A}, 107(4), April 2023.

\bibitem[LBH15]{lecun2015deep}
Yann LeCun, Yoshua Bengio, and Geoffrey Hinton.
\newblock Deep learning.
\newblock {\em Nature}, 521(7553):436--444, 2015.

\bibitem[LHYY23]{Liu_2023}
Qiushi Liu, Zihao Hu, Haidong Yuan, and Yuxiang Yang.
\newblock Optimal strategies of quantum metrology with a strict hierarchy.
\newblock {\em Physical Review Letters}, 130(7), February 2023.

\bibitem[LLC24]{Levy_2024}
Ryan Levy, Di~Luo, and Bryan~K. Clark.
\newblock Classical shadows for quantum process tomography on near-term quantum computers.
\newblock {\em Physical Review Research}, 6(1), January 2024.

\bibitem[LLY{\etalchar{+}}26]{5khs-7dyz}
Qing Liu, Zihao Li, Xiao Yuan, Huangjun Zhu, and You Zhou.
\newblock Auxiliary-free replica shadows: Efficient estimation of multiple nonlinear quantum properties.
\newblock {\em Physical Review Letters}, 136:100602, Mar 2026.

\bibitem[LYZZ25]{Li_2025}
Zihao Li, Changhao Yi, You Zhou, and Huangjun Zhu.
\newblock Nearly query-optimal classical shadow estimation of unitary channels.
\newblock {\em PRX Quantum}, 6(3), September 2025.

\bibitem[MAVAV16]{MarienAudenaertVanAcoleyenVerstraete2016}
Micha{\"e}l Mari{\"e}n, Koenraad M.~R. Audenaert, Karel Van~Acoleyen, and Frank Verstraete.
\newblock Entanglement rates and the stability of the area law for the entanglement entropy.
\newblock {\em Communications in Mathematical Physics}, 346:35--73, 2016.

\bibitem[MB25]{mele2025optimal}
Antonio~Anna Mele and Lennart Bittel.
\newblock Optimal learning of quantum channels in diamond distance.
\newblock {\em arXiv preprint arXiv:2512.10214}, 2025.

\bibitem[Mel24]{Mele2024introductiontohaar}
Antonio~Anna Mele.
\newblock Introduction to {H}aar {M}easure {T}ools in {Q}uantum {I}nformation: {A} {B}eginner's {T}utorial.
\newblock {\em {Quantum}}, 8:1340, May 2024.

\bibitem[MFPT24]{ma2024learning}
Muzhou Ma, Steven~T Flammia, John Preskill, and Yu~Tong.
\newblock Learning $k$-body {H}amiltonians via compressed sensing.
\newblock {\em arXiv:2410.18928}, 2024.

\bibitem[ML06]{PhysRevLett.97.170501}
M.~Mohseni and D.~A. Lidar.
\newblock Direct characterization of quantum dynamics.
\newblock {\em Physical Review Letters}, 97:170501, Oct 2006.

\bibitem[Moo99]{moore1999quantumcircuitsfanoutparity}
Cristopher Moore.
\newblock Quantum circuits: Fanout, parity, and counting, 1999.

\bibitem[NC12]{Nielsen2012}
Michael~A. Nielsen and Isaac~L. Chuang.
\newblock {\em Quantum Computation and Quantum Information: 10th Anniversary Edition}.
\newblock Cambridge University Press, June 2012.

\bibitem[NGR{\etalchar{+}}21]{Nielsen_2021}
Erik Nielsen, John~King Gamble, Kenneth Rudinger, Travis Scholten, Kevin Young, and Robin Blume-Kohout.
\newblock Gate set tomography.
\newblock {\em Quantum}, 5:557, October 2021.

\bibitem[NPVY24]{Nadimpalli_2024}
Shivam Nadimpalli, Natalie Parham, Francisca Vasconcelos, and Henry Yuen.
\newblock On the {P}auli spectrum of {QAC0}.
\newblock In {\em Proceedings of the 56th Annual ACM Symposium on Theory of Computing}, STOC ’24, page 1498–1506. ACM, 2024.

\bibitem[NVH18]{Nahum_2018}
Adam Nahum, Sagar Vijay, and Jeongwan Haah.
\newblock Operator spreading in random unitary circuits.
\newblock {\em Physical Review X}, 8(2), April 2018.

\bibitem[OW16]{efficient_state_tomography}
Ryan O'Donnell and John Wright.
\newblock Efficient quantum tomography.
\newblock In {\em Proceedings of the Forty-Eighth Annual ACM Symposium on Theory of Computing}, STOC '16, page 899–912, New York, NY, USA, 2016. Association for Computing Machinery.

\bibitem[OYM25]{drp2-rzzw}
Tatsuki Odake, Satoshi Yoshida, and Mio Murao.
\newblock Analytical lower bound on query complexity for transformations of unknown unitary operations.
\newblock {\em Physical Review Letters}, 135:230603, Dec 2025.

\bibitem[Par25]{parham2025quantumcircuitlowerbounds}
Natalie Parham.
\newblock Quantum circuit lower bounds in the magic hierarchy, 2025.

\bibitem[Pre18]{preskill2018nisq}
John Preskill.
\newblock Quantum computing in the {NISQ} era and beyond.
\newblock {\em Quantum}, 2:79, 2018.

\bibitem[PSTW25]{pelecanos2025mixedstatetomographyreduces}
Angelos Pelecanos, Jack Spilecki, Ewin Tang, and John Wright.
\newblock Mixed state tomography reduces to pure state tomography, 2025.

\bibitem[PSW25]{pelecanos2025debiasedkeylsalgorithmnew}
Angelos Pelecanos, Jack Spilecki, and John Wright.
\newblock The debiased keyl's algorithm: a new unbiased estimator for full state tomography, 2025.

\bibitem[PW27]{Peter1927}
F.~Peter and H.~Weyl.
\newblock Die vollst\"andigkeit der primitiven darstellungen einer geschlossenen kontinuierlichen gruppe.
\newblock {\em Mathematische Annalen}, 97(1):737–755, December 1927.

\bibitem[SBZ19]{Sedl_k_2019}
Michal Sedlák, Alessandro Bisio, and Mário Ziman.
\newblock Optimal probabilistic storage and retrieval of unitary channels.
\newblock {\em Physical Review Letters}, 122(17), May 2019.

\bibitem[Sco08]{scott2008optimizing}
Andrew~James Scott.
\newblock Optimizing quantum process tomography with unitary 2-designs.
\newblock {\em Journal of Physics A: Mathematical and Theoretical}, 41(5):055308, 2008.

\bibitem[SFMD{\etalchar{+}}24]{stilck2024efficient}
Daniel Stilck~Fran{\c{c}}a, Liubov~A Markovich, VV~Dobrovitski, Albert~H Werner, and Johannes Borregaard.
\newblock Efficient and robust estimation of many-qubit {H}amiltonians.
\newblock {\em Nat. Commun.}, 15(1):311, 2024.

\bibitem[SHH25]{Schuster2025}
Thomas Schuster, Jonas Haferkamp, and Hsin-Yuan Huang.
\newblock Random unitaries in extremely low depth.
\newblock {\em Science}, 389(6755):92–96, 2025.

\bibitem[SSW25]{scharnhorst2025optimallowerboundsquantum}
Thilo Scharnhorst, Jack Spilecki, and John Wright.
\newblock Optimal lower bounds for quantum state tomography, 2025.

\bibitem[VAMV13]{VanAcoleyenMarienVerstraete2013}
Karel Van~Acoleyen, Micha{\"e}l Mari{\"e}n, and Frank Verstraete.
\newblock Entanglement rates and area laws.
\newblock {\em Physical Review Letters}, 111:170501, 2013.

\bibitem[VH25]{pmlr-v291-vasconcelos25a}
Francisca Vasconcelos and Hsin-Yuan Huang.
\newblock Learning shallow quantum circuits with many-qubit gates.
\newblock In Nika Haghtalab and Ankur Moitra, editors, {\em Proceedings of Thirty Eighth Conference on Learning Theory}, volume 291 of {\em Proceedings of Machine Learning Research}, pages 5553--5604. PMLR, 30 Jun--04 Jul 2025.

\bibitem[VRS{\etalchar{+}}24]{PRXQuantum.5.010352}
Beno\^{\i}t Vermersch, Aniket Rath, Bharathan Sundar, Cyril Branciard, John Preskill, and Andreas Elben.
\newblock Enhanced estimation of quantum properties with common randomized measurements.
\newblock {\em PRX Quantum}, 5:010352, Mar 2024.

\bibitem[Wat18]{Watrous_2018}
John Watrous.
\newblock {\em The Theory of Quantum Information}.
\newblock Cambridge University Press, 2018.

\bibitem[WBC{\etalchar{+}}21]{wu2021strong}
Yulin Wu, Wan-Su Bao, Sirui Cao, et~al.
\newblock Strong quantum computational advantage using a superconducting quantum processor.
\newblock {\em Physical Review Letters}, 127(18):180501, 2021.

\bibitem[Wri16]{wright2016learn}
John Wright.
\newblock {\em How to learn a quantum state}.
\newblock PhD thesis, Carnegie Mellon University, 2016.

\bibitem[WSS{\etalchar{+}}26]{west2026classicalshadowsarbitrarygroup}
Maxwell West, Frederic Sauvage, Aniruddha Sen, Roy Forestano, David Wierichs, Nathan Killoran, Dmitry Grinko, M.~Cerezo, and Martin Larocca.
\newblock Classical shadows with arbitrary group representations, 2026.

\bibitem[WYY25]{PRXQuantum.6.020308}
Kaito Wada, Naoki Yamamoto, and Nobuyuki Yoshioka.
\newblock Heisenberg-limited adaptive gradient estimation for multiple observables.
\newblock {\em PRX Quantum}, 6:020308, Apr 2025.

\bibitem[YKS{\etalchar{+}}26]{Yoshida_2026}
Satoshi Yoshida, Yuki Koizumi, Michał Studziński, Marco~Túlio Quintino, and Mio Murao.
\newblock One-to-one correspondence between deterministic port-based teleportation and unitary estimation.
\newblock {\em IEEE Transactions on Information Theory}, 72(4):2358–2377, April 2026.

\bibitem[YMR{\etalchar{+}}22]{Yang_2022}
Yuxiang Yang, Yin Mo, Joseph~M. Renes, Giulio Chiribella, and Mischa~P. Woods.
\newblock Optimal universal quantum error correction via bounded reference frames.
\newblock {\em Physical Review Research}, 4(2), May 2022.

\bibitem[YRC20]{Yang_2020}
Yuxiang Yang, Renato Renner, and Giulio Chiribella.
\newblock Optimal universal programming of unitary gates.
\newblock {\em Physical Review Letters}, 125(21), November 2020.

\bibitem[YSHY23]{Yu2023robustefficient}
Wenjun Yu, Jinzhao Sun, Zeyao Han, and Xiao Yuan.
\newblock Robust and {E}fficient {H}amiltonian {L}earning.
\newblock {\em {Quantum}}, 7:1045, June 2023.

\bibitem[Yua16]{PhysRevLett.117.160801}
Haidong Yuan.
\newblock Sequential feedback scheme outperforms the parallel scheme for {H}amiltonian parameter estimation.
\newblock {\em Phys. Rev. Lett.}, 117:160801, Oct 2016.

\bibitem[YYM25]{yoshida2025asymptoticallyoptimalunitaryestimation}
Satoshi Yoshida, Hironobu Yoshida, and Mio Murao.
\newblock Asymptotically optimal unitary estimation in $\mathrm{SU}(3)$ by the analysis of graph laplacian, 2025.

\bibitem[Zal99]{Zalka_1999}
Christof Zalka.
\newblock Grover’s quantum searching algorithm is optimal.
\newblock {\em Physical Review A}, 60(4):2746–2751, October 1999.

\bibitem[Zha25]{zhao2024learning}
Andrew Zhao.
\newblock Learning the structure of any {H}amiltonian from minimal assumptions.
\newblock In {\em Proceedings of the 57th Annual ACM SIGACT Symposium on Theory of Computing}, pages 1201--1211, 2025.

\bibitem[ZJ21]{zhou2021asymptotic}
Sisi Zhou and Liang Jiang.
\newblock Asymptotic theory of quantum channel estimation.
\newblock {\em PRX Quantum}, 2(1):010343, 2021.

\bibitem[ZL23]{Zhou2023perform}
You Zhou and Qing Liu.
\newblock Performance analysis of multi-shot shadow estimation.
\newblock {\em {Quantum}}, 7:1044, June 2023.

\bibitem[ZLK{\etalchar{+}}24]{Zhao_2024}
Haimeng Zhao, Laura Lewis, Ishaan Kannan, Yihui Quek, Hsin-Yuan Huang, and Matthias~C. Caro.
\newblock Learning quantum states and unitaries of bounded gate complexity.
\newblock {\em PRX Quantum}, 5(4), October 2024.

\bibitem[ZWD{\etalchar{+}}20]{zhong2020quantum}
Han-Sen Zhong, Hui Wang, Yu-Hao Deng, et~al.
\newblock Quantum computational advantage using photons.
\newblock {\em Science}, 370(6523):1460--1463, 2020.

\end{thebibliography}

\appendix

\section{Preliminaries}
\subsection{Notations and miscellaneous}
We use the regular abbreviation r.v. for ``random variable''. Besides, we will use several notational conventions.

We use asymptotic notations $\bigO, \widetilde{\bigO}, \Omega, \widetilde{\Omega}, o, \omega, \Theta, \widetilde{\Theta}, \simeq$ from \cite{concrete_math}. $\log$ denotes base-$e$ logarithm, and $\poly(x_1, \dots, x_k) = \bigcup_{c_1, \dots, c_k \in \N} \bigo{x_1^{c_1} \cdots x_k^{c_k}}$. The imaginary unit $\I := \sqrt{-1}$. For any complex number $c = \Re c + \I \Im c \in \C$, its complex conjugate $\overline{c} = \Re c - \I \Im c$. By default, $d = 2^n$ when appearing concurrently.

For a linear operator $X$, we use $\Ket{X} = \sum_{i, j} X_{i, j} \ket{i} \ket{j}$ to denote its vectorization. We use $\Omega_{AB} = \Ket{\openone} \Bra{\openone}_{AB} \otimes \openone_{\overline{AB}}$ and $\F_{AB}$ to denote the unnormalized maximally entangled state and the (local) swap operator on the bipartite subspace $AB$, embedded in the full Hilbert space. Specifically, $\Omega_{AB}$ satisfies the ricochet property $(X_A \otimes I_B) \Omega_{AB} = (I_A \otimes X_B^T) \Omega_{AB}$ for equi-dimensional spaces $A$ and $B$. $\Phi_{\mathcal{E}} = (\mathcal{E} \otimes \mathcal{I})(\Ket{\openone} \Bra{\openone})$ denotes the Choi operator of a channel $\mathcal{E}$. $\Tr_A[X_{AB}]$ denotes the partial trace over subspace $A$ for a bipartite operator $X_{AB}$. 
We denote by $\dens(\mathcal{H})$ the set of quantum states on space $\mathcal{H}$, and $\mathfrak{P}(\mathcal{H}) \subset \dens(\mathcal{H})$ the set of pure states on $\mathcal{H}$. 
For any pure state $\ket{\psi} \in \mathfrak{P}(\mathcal{H})$, we use $\psi$ to represent its density matrix $\ket{\psi} \bra{\psi}$. We use $\mathcal{L}(\mathcal{A}, \mathcal{B})$ to represent the set of linear operators from the space $\mathcal{A}$ to space $\mathcal{B}$, and it is abbreviated to $\mathcal{L}(\mathcal{A})$ when $\mathcal{A} = \mathcal{B}$. For an operator $X$, we use $X^T$, $X^*$, and $X^\dagger$ to denote its transpose, conjugate, and conjugate transpose, respectively. When $X$ is unitary, we use the calligraphic $\mathcal{X}(\cdot)$ to denote the corresponding unitary channel $X(\cdot)X^\dagger$. The boldface symbols ($\mathbf{p}$, $\mathbf{q}$, \etc) denote vectors, and $e_i$ denotes the $i$-th standard basis vector. $\|\cdot\|_2$ and $\|\cdot\|_{\infty}$ stands for the $\ell_2$- and $\ell_{\infty}$-norm of vectors, respectively. For $n \in \N$, we use $[n]$ to denote the index set $\{1, 2, \dots, n\}$. For two objects $ x$ and $ y$, $\delta_{x, y}$ denotes the Kronecker delta. We write $X \succeq 0$ for a Hermitian $X$ if $X$ is positive semi-definite. For $X, Y \in \mathcal{L}(\mathcal{H})$, the commutator/anti-commutator $[X, Y] = XY-YX$/$\{X, Y\} = XY + YX$.

We use $\U(d)$ and $\mathfrak{S}_n$ to denote the $d$-dimensional unitary group and the symmetric group on $[n]$. We write irrep(s) shorthand for irreducible representation(s), which is defined later in Definition \ref{def:irreps}. Since we are mainly working with the representation theory of the unitary group and the symmetric group, for a (mixed) Young diagram $\lambda$, we use $d_{\lambda} = \dim W_{\lambda}$ to denote the dimension of the Weyl module, and $s_{\lambda} = \dim(\lambda) = \dim S_{\lambda}$ to denote the dimension of the Specht module. A minor caveat is that the representation theory we discuss is originally dedicated to the general linear group $\gl(d, \mathbb{C})$, but adapted to $\U(d)$ via Weyl's unitarian trick \cite[Section 3.3.4]{goodman2009symmetry}. For a vector space $V$, we write $V^*$ for its dual space, and $\vee^k V$, $\wedge^k V$ for the symmetric and anti-symmetric subspace of $k$-fold product space $V^{\otimes k}$. For a group $G$, the complex group algebra is the collection of linear combinations of group elements over the complex number field $\C[G] = \{ \sum_{g \in G} c_g g: c_g \in \C \}$.

\begin{fact}
\label{fact:taylor_series}
    According to Taylor series, $\cos(x) = 1 - \frac{1}{2} x^2 + \frac{1}{24} x^4 - \bigo{x^6}$, $\sin(x) = x - \frac{1}{6} x^3 + \bigo{x^5}$, $2/\sin(x) - 1/\sin(2x) = \frac{3}{2x} - \frac{7x^3}{60} - \frac{31x^5}{504} - \mathcal{O}(x^7) $, $\log(1 + x) = x - \frac{x^2}{2} + \bigo{x^3}$ and $e^{x} = 1 + x + \frac{1}{2}x^2 + \bigo{x^3}$ for small $x$.
\end{fact}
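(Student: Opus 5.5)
The plan is to verify each expansion directly from the standard Maclaurin series, working throughout in a punctured neighbourhood $0<|x|<x_0$ with a fixed small $x_0$ (say $x_0=1$), so that every remainder term is bounded by a constant times the stated power of $|x|$.

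The expansions of $\cos$, $\sin$, $\log(1+x)$ and $e^x$ are immediate consequences of Taylor's theorem with Lagrange remainder. Since all derivatives of $\cos$ and $\sin$ are bounded by $1$, the remainders are $\bigo{x^6}$ and $\bigo{x^5}$ respectively. For $\log(1+x)$ and $e^x$, the third derivative is bounded on $|x|\le 1/2$, which gives $\bigo{x^3}$. Nothing further is needed for these four.

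The only nontrivial item is the cosecant combination, so I will first derive the Laurent expansion of $1/\sin x$.
\begin{itemize}
\item Write $\sin x = x\,(1-u)$ with $u = \frac{x^2}{6}-\frac{x^4}{120}+\frac{x^6}{5040}-\cdots$. This is analytic, and $|u|<1/2$ for small $x$.
\item Expand $1/(1-u)=1+u+u^2+u^3+\cdots$ as a geometric series and collect powers of $x$ up to $x^6$. This gives
\[
\frac{1}{\sin x}=\frac1x+\frac{x}{6}+\frac{7x^3}{360}+\frac{31x^5}{15120}+\bigo{x^7}.
\]
The same coefficients also follow from the classical identity $\csc x=\sum_{k\ge 0}(-1)^{k+1}2(2^{2k-1}-1)B_{2k}x^{2k-1}/(2k)!$, and I will use it as a cross-check.
\item Substitute $2x$ for $x$ in this expansion, so the $x^{2k-1}$ coefficient is multiplied by $2^{2k-1}$. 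Then subtract the result from twice the original series.
\end{itemize}

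Carrying out the subtraction term by term:
\begin{itemize}
\item the $1/x$ term is $2-\tfrac12=\tfrac32$;
\item the $x$ term is $\tfrac13-\tfrac13=0$;
\item the $x^3$ term is $\tfrac{14}{360}-\tfrac{56}{360}=-\tfrac{7}{60}$;
\item the $x^5$ term is $\tfrac{62-992}{15120}=-\tfrac{31}{504}$.
\end{itemize}
The remainder is $\bigo{x^7}$, because the function $x\mapsto 2/\sin x-1/\sin(2x)-3/(2x)$ extends analytically to $x=0$ on $|x|<\pi/2$.

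The main obstacle is only bookkeeping. The $x^5$ coefficient of $\csc x$ requires the cross terms $u^2$ and $u^3$ to be tracked correctly, so I would double-check it with the Bernoulli formula ($B_6=1/42$). The cancellation of the linear term is a useful sanity check that the computation is right.
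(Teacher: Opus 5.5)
Your proposal is correct and takes essentially the same route as the paper, which gives no argument beyond citing the standard Taylor series. Your explicit derivation of the cosecant Laurent coefficients, with the Bernoulli-number cross-check and the term-by-term subtraction giving $-\tfrac{7}{60}$ and $-\tfrac{31}{504}$, checks out, and it supplies the only detail the paper leaves implicit.
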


\begin{fact}[Folklore]
\label{fact:triangular_inequality_application}
    For a finite collection of real numbers $\{ a_{i, j} \}_{i, j}$, it holds that $\prod_{j}\left( \sum_i a_{i, j} \right) \leq \prod_{j} \left( \sum_i |a_{i, j}| \right)$. Moreover, for real coefficients $\{x_i\}_i$, $\{y_j\}_j$, it holds that $\sum_{i, j} a_{i, j} x_i y_j \leq (\sum_i |x_i|)(\sum_j |y_j|) \cdot \max_{i, j} |a_{i, j}|$.
\end{fact}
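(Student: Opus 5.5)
The first claim is the triangle inequality applied factor by factor. For each $j$, $\sum_i a_{i,j}\le \left|\sum_i a_{i,j}\right|\le \sum_i |a_{i,j}|$.

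As literally written, the product inequality needs care, because a product of signed numbers need not be dominated factorwise. For instance, with two factors equal to $-2$ and $-3$, the left side is $6$, while the right side is also $6$. I would therefore argue through absolute values:
\[
\prod_j\Bigl(\sum_i a_{i,j}\Bigr)\le \Bigl|\prod_j \sum_i a_{i,j}\Bigr| = \prod_j \Bigl|\sum_i a_{i,j}\Bigr| .
\]
I would then bound each factor by $\sum_i |a_{i,j}|$. This is legitimate because all the factors $\bigl|\sum_i a_{i,j}\bigr|$ and $\sum_i |a_{i,j}|$ are nonnegative, and products of nonnegative numbers are monotone in each factor. Those are the key steps: pass to the absolute value of the whole product, use multiplicativity of $|\cdot|$, then apply the triangle inequality to each nonnegative factor.

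For the second claim, I would similarly bound
\[
\sum_{i,j} a_{i,j}x_i y_j\le \sum_{i,j}|a_{i,j}|\,|x_i|\,|y_j|\le \max_{i,j}|a_{i,j}|\sum_{i,j}|x_i||y_j| .
\]
Then I would factor the double sum as $\bigl(\sum_i|x_i|\bigr)\bigl(\sum_j|y_j|\bigr)$, which is valid since the index set is a finite product.

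No real obstacle is expected. The only subtlety is making sure that the monotonicity of the product is applied only to nonnegative factors, which the detour through $\bigl|\prod_j\cdot\bigr|$ guarantees.
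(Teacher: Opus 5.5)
Your proof is correct, and it is the standard triangle-inequality argument the paper relies on; the paper states this Fact as folklore and gives no proof. Bounding the product by its absolute value, using multiplicativity, and then comparing nonnegative factors handles the first claim, and the termwise bound followed by factoring the double sum handles the second. One cosmetic point: your example with factors $-2$ and $-3$ does not exhibit the danger you describe, since both sides equal $6$; a better illustration is that $-3\le 1$ and $-2\le 1$ while $(-3)(-2)=6>1$, which shows why multiplying inequalities factor by factor is invalid when factors can be negative.
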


\subsection{Basics of quantum information}
\begin{definition}[Schatten-$p$ norm and H\"older's inequality, \cite{Watrous_2018}]
\label{def:schatten_norm_and_holder}
    For any bounded operator $X \in \mathcal{L}(\mathcal{H})$, and $p \in [1, \infty]$, its Schatten-$p$ norm is given by
    $
    \left\| X \right\|_p = \left( \Tr\left[ |X|^{p} \right] \right)^{1/p} = \left(  \sum_{j} \mathfrak{s}_j^{p}\right)^{1/p}
    $,
    where $\{\mathfrak{s}_j\}_j$ are the singular values of $X$. In particular, $p = 1$ yields the trace norm, $p = 2$ yields the Frobenius norm $\|\cdot\|_2 =: \|\cdot\|_F$, and $p = \infty$ yields the operator norm. For $1 \leq p < q \leq \infty$, $\|X\|_{q} \leq \|X\|_{p} \leq (\rank X )^{1/p - 1/q} \|X\|_q$. The Schatten-$p$ norm induces a valid distance metric $\|A - B\|_{p}$ for two operators $A, B \in \mathcal{L}(\mathcal{H})$.
    
    The (tracial) H\"older's inequality says that for any $A, B \in \mathcal{L}(\mathcal{H})$ and $p, q, \in [1, \infty]$ subject to $p^{-1} + q^{-1} = 1$, it holds that 
    $$
    \left| \Tr\left[A^\dagger B \right] \right| \leq \|A\|_{p} \|B\|_{q}.
    $$
    Specifically, when $p = q = 2$, this gives the Cauchy-Schwarz inequality.
\end{definition}

\begin{corollary}[Adapted from {\cite[Lemma D4]{Li_2025}} for detraced states]
\label{corollary:inner_product_terms_upper_bound}
    For $O \in \mathcal{L}(\mathcal{H})$ and a state $\rho \in \dens(\mathcal{H})$ where $d = \dim \mathcal{H}$. Denote $\rho_0 = \rho  - \openone / d$ as the detraced state. If $O \in \obs(\B)$, $\Tr[\rho^2] \leq \P$, the H\"older's inequality [cf. Definition \ref{def:schatten_norm_and_holder}] gives
    $$
    \begin{aligned}
        &\Tr[O \rho_0]^2 = \Tr[O \rho]^2 \leq \min\left\{ \|O\|_{\infty}^2 \|\rho \|_1^2, \|O\|_F^2 \|\rho \|_F^2 \right\} \leq \min\left\{ 1, \B\P \right\}, \\
        &\Tr[O^2 \rho_0^2] \leq \|O^2\|_{\infty} \|\rho_0^2\|_1 = \|\rho_0\|_F^2 \leq \P, \quad \Tr[O \rho_0 O \rho_0] \leq \|\rho_0 O\|_F \|O \rho_0\|_F \leq \|\rho_0\|_F^2 \leq \P, \\
        &\Tr[O^2 \rho_0] \leq \Tr[O^2 \rho] \leq \min\left\{ \left\|O^2 \right\|_{\infty} \left\| \rho \right\|_1, \left\| O^2 \right\|_F \left\|\rho  \right\|_F \right\} \leq \min\left\{1, \sqrt{\B\P}  \right\}.
    \end{aligned}
    $$
\end{corollary}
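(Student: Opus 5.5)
The plan is to prove each displayed inequality separately by a direct application of H\"older's inequality (Definition~\ref{def:schatten_norm_and_holder}). I use three standing facts. First, following the reduction to traceless requests, $O$ is taken traceless, so $O=O_0$ and $\|O\|_F^2=\Tr(O_0^2)\le\B$. Second, $\|O\|_\infty=1$. Third, since $\rho_0=\rho-\openone/d$ is Hermitian, $\|\rho_0\|_F^2=\Tr[\rho^2]-1/d\le\P$, while $\|\rho\|_1=1$ and $\|\rho\|_F^2=\Tr[\rho^2]\le\P$.

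For the first line, tracelessness of $O$ gives $\Tr[O\rho_0]=\Tr[O\rho]-\Tr(O)/d=\Tr[O\rho]$, which is the stated equality. I then bound $|\Tr[O\rho]|$ in two ways. With $(p,q)=(\infty,1)$ it is at most $\|O\|_\infty\|\rho\|_1=1$. With $(p,q)=(2,2)$ it is at most $\|O\|_F\|\rho\|_F\le\sqrt{\B\P}$. Squaring both bounds gives $\min\{1,\B\P\}$.

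For the second line, the first inequality is H\"older with $(\infty,1)$ applied to $O^2$ and $\rho_0^2$. I use $\|O^2\|_\infty=\|O\|_\infty^2=1$ and $\|\rho_0^2\|_1=\Tr[\rho_0^2]=\|\rho_0\|_F^2\le\P$, where the last equality holds because $\rho_0^2\succeq0$. For the second inequality, I write $\Tr[O\rho_0O\rho_0]=\Tr[(\rho_0O)(\rho_0O)]$ and apply Cauchy--Schwarz with $A=(\rho_0O)^\dagger=O\rho_0$ and $B=\rho_0O$. This gives $|\Tr[O\rho_0O\rho_0]|\le\|O\rho_0\|_F\|\rho_0O\|_F$. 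Each factor is at most $\|O\|_\infty\|\rho_0\|_F=\|\rho_0\|_F$, because $\|XY\|_F\le\|X\|_\infty\|Y\|_F$. Hence the bound is $\|\rho_0\|_F^2\le\P$.

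For the third line, I write $\Tr[O^2\rho_0]=\Tr[O^2\rho]-\Tr[O^2]/d\le\Tr[O^2\rho]$, using $O^2\succeq0$. H\"older with $(\infty,1)$ bounds $\Tr[O^2\rho]$ by $\|O^2\|_\infty\|\rho\|_1=1$. Cauchy--Schwarz bounds it by $\|O^2\|_F\|\rho\|_F$, and $\|O^2\|_F\le\|O\|_\infty\|O\|_F\le\sqrt\B$ together with $\|\rho\|_F\le\sqrt\P$ gives at most $\sqrt{\B\P}$.

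There is no genuine obstacle here. The only points needing care are to invoke tracelessness of $O$ explicitly, so that the equality $\Tr[O\rho_0]=\Tr[O\rho]$ holds and $\Tr(O^2)\le\B$, and to handle the mixed product $O\rho_0O\rho_0$ via the adjoint pairing in Cauchy--Schwarz.
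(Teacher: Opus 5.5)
Your proof is correct and follows the same route as the paper: each bound is a direct H\"older estimate with the $(\infty,1)$ or $(2,2)$ pairing, exactly as displayed in the statement. You also spell out steps the paper leaves implicit, namely the use of $\Tr(O)=0$ (needed both for $\Tr[O\rho_0]=\Tr[O\rho]$ and for $\|O\|_F^2\le\B$), the identity $\|\rho_0\|_F^2=\Tr[\rho^2]-1/d$, and the bound $\|O^2\|_F\le\|O\|_\infty\|O\|_F$.
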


\begin{lemma}[{Holevo-Helstrom, \cite[Theorem 3.4]{Watrous_2018}}]
\label{lemma:holevo_helstrom}
    For two density operators $\rho_0, \rho_1 \in \dens(\mathcal{H})$, and $\lambda \in [0, 1]$, for all dichotomous POVM $\mathscr{M} = \{\Pi_0, \Pi_1\}$, it holds that
    $$
    \lambda \cdot \Tr[\Pi_0 \rho_0] + (1 - \lambda) \cdot \Tr[\Pi_1 \rho_1] \leq \frac{1}{2} + \frac{1}{2} \left\|\lambda \rho_0 - (1 - \lambda) \rho_1 \right\|_1.
    $$
\end{lemma}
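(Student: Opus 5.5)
The plan is a direct computation that reduces everything to the Jordan decomposition of a single Hermitian operator. First, I will use completeness of the dichotomous POVM, $\Pi_1=\openone-\Pi_0$, to rewrite the left-hand side as
\[
\lambda\Tr[\Pi_0\rho_0]+(1-\lambda)\Tr[\Pi_1\rho_1]
=(1-\lambda)+\Tr\bigl[\Pi_0 X\bigr],
\qquad X:=\lambda\rho_0-(1-\lambda)\rho_1 .
\]
This step uses $\Tr\rho_1=1$. The operator $X$ is Hermitian, and since $\Tr\rho_0=\Tr\rho_1=1$ it satisfies $\Tr X=2\lambda-1$.

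Second, I will write the Jordan decomposition $X=P-N$, where $P,N\succeq 0$ have orthogonal supports. Then $\|X\|_1=\Tr P+\Tr N$ and $\Tr X=\Tr P-\Tr N$, so
\[
\Tr P=\tfrac12\bigl(\|X\|_1+2\lambda-1\bigr).
\]
Because $0\preceq\Pi_0\preceq\openone$, we have $\Tr[\Pi_0 N]\ge 0$ and $\Tr[\Pi_0 P]\le\Tr P$. Hence $\Tr[\Pi_0X]\le \Tr P$.

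Third, I combine the two steps:
\[
(1-\lambda)+\Tr[\Pi_0X]\le (1-\lambda)+\tfrac12\|X\|_1+\lambda-\tfrac12=\tfrac12+\tfrac12\|X\|_1 .
\]
This is exactly the claimed bound.

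There is no real obstacle. The only point needing care is the trace bookkeeping: $\Tr X=2\lambda-1$ rather than $0$, because the two weights $\lambda$ and $1-\lambda$ differ. Getting this constant right is what makes the $(1-\lambda)$ and $\lambda-\tfrac12$ terms combine to exactly $\tfrac12$. For completeness, I would note that the bound is attained by taking $\Pi_0$ to be the projector onto the support of $P$, although the lemma only asks for the inequality.
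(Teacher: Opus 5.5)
Your argument is correct. Completeness gives $(1-\lambda)+\Tr[\Pi_0X]$, the Jordan decomposition gives $\Tr[\Pi_0X]\le\Tr P=\tfrac12(\|X\|_1+2\lambda-1)$, and the constants combine to $\tfrac12+\tfrac12\|X\|_1$.

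The paper does not prove this lemma itself; it cites Watrous's Theorem~3.4, whose proof takes a slightly different route. It keeps both POVM elements and uses the symmetric splitting
\[
\lambda\Tr[\Pi_0\rho_0]+(1-\lambda)\Tr[\Pi_1\rho_1]
=\tfrac12\Tr\bigl[(\Pi_0+\Pi_1)(\lambda\rho_0+(1-\lambda)\rho_1)\bigr]+\tfrac12\Tr\bigl[(\Pi_0-\Pi_1)X\bigr]
=\tfrac12+\tfrac12\Tr\bigl[(\Pi_0-\Pi_1)X\bigr].
\]
It then applies H\"older's inequality with $\|\Pi_0-\Pi_1\|_\infty\le 1$.

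The two routes trade off as follows:
\begin{itemize}
\item The symmetric splitting produces the constant $\tfrac12$ directly, so no bookkeeping of $\Tr X=2\lambda-1$ is needed, and the bound is a one-line H\"older estimate.
\item Your elimination of $\Pi_1$ replaces that with an explicit positive/negative-part computation. Its advantage is that the optimal test, the projector onto the support of $P$, appears within the same calculation. In Watrous's proof the Jordan--Hahn decomposition enters only in the separate attainability argument.
\end{itemize}

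Both routes are equally elementary and rigorous.
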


\begin{lemma}
\label{lemma:Helstrom2}
For arbitrary pure states $\psi_1, \psi_2 \in \dens(\mathcal{H})$ with $\psi_1\ne \psi_2$, we have 
\[
\left\|\psi_1-\psi_2 \right\|_1
=
2\,\Tr\left[\varphi_+ \left(\psi_1-\psi_2 \right)\right],
\]
where $\varphi_+ = \ket{\varphi_+} \bra{\varphi_+}$ is the rank-one projector onto the positive eigenspace of $\psi_1-\psi_2$, equivalently, the maximization in the variational definition of the trace distance between pure states is saturated at a pure state.
\end{lemma}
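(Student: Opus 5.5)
Since $\psi_1$ and $\psi_2$ are pure states, the plan is to work in their two-dimensional span and diagonalize $\Delta:=\psi_1-\psi_2$ explicitly.

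\textbf{Step 1: reduce to the span.} Write $\ket{\psi_1},\ket{\psi_2}$ for the state vectors and let $V=\mathrm{span}\{\ket{\psi_1},\ket{\psi_2}\}$. Because $\psi_1\ne\psi_2$ as projectors, $V$ is two-dimensional. The operator $\Delta$ is Hermitian, traceless, and supported on $V$, with $\Delta=0$ on $V^\perp$.

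\textbf{Step 2: identify the spectrum.} On $V$, $\Delta$ has two eigenvalues $\mu_+$ and $\mu_-$ with $\mu_++\mu_-=\Tr\Delta=0$. So they take the form $\pm\mu$ with $\mu>0$, where $\mu\ne0$ because $\Delta\ne0$. Hence:
\begin{itemize}
\item the positive eigenspace of $\Delta$ is one-dimensional, spanned by a unit vector $\ket{\varphi_+}\in V$;
\item the spectral decomposition is $\Delta=\mu\,\varphi_+-\mu\,\varphi_-$, where $\varphi_-$ is the projector onto the corresponding negative eigenvector.
\end{itemize}
(One can also compute $\mu=\sqrt{1-|\braket{\psi_1|\psi_2}|^2}$ from $\Tr\Delta^2=2-2|\braket{\psi_1|\psi_2}|^2=2\mu^2$, but this value is not needed.)

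\textbf{Step 3: compute both sides.} The left-hand side is
\[
\|\Delta\|_1=|\mu|+|-\mu|=2\mu .
\]
The right-hand side is
\[
2\,\Tr[\varphi_+\Delta]=2\mu\,\Tr[\varphi_+\varphi_+]-2\mu\,\Tr[\varphi_+\varphi_-]=2\mu ,
\]
using orthogonality of the eigenvectors, $\Tr[\varphi_+\varphi_-]=0$. The two sides agree, which proves the identity.

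\textbf{Step 4: connect to the variational form.} For the parenthetical claim, recall $\frac12\|\Delta\|_1=\max_{0\preceq\Pi\preceq\openone}\Tr[\Pi\Delta]$. The maximum is attained by the projector onto the positive eigenspace, which by Step 3 is the rank-one projector $\varphi_+$, i.e.\ a pure state.

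The only point needing care is Step 2: the traceless, rank-at-most-two structure is exactly what forces the positive eigenspace to be one-dimensional, so that $\varphi_+$ is rank one. Everything else is routine.
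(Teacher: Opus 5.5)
Your proposal is correct and follows essentially the same argument as the paper. Both use that $\psi_1-\psi_2$ is Hermitian, traceless, nonzero and of rank at most two to force eigenvalues $\pm\lambda$ with $\lambda>0$, and then read off $\|\psi_1-\psi_2\|_1=2\lambda$ and $\Tr[\varphi_+(\psi_1-\psi_2)]=\lambda$ from the spectral decomposition. Your explicit restriction to the two-dimensional span is only a minor elaboration of that same reasoning.
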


\begin{proof}[Proof of \lref{lemma:Helstrom2}]
Since $\psi_1-\psi_2$ is Hermitian, has trace zero, and satisfies
$\rank(\psi_1-\psi_2)\le 2$, 
its nonzero eigenvalues must be of the form $+\lambda$ and $-\lambda$ for some $\lambda> 0$. Therefore,
\[
\psi_1-\psi_2=\lambda \varphi_+ - \lambda \varphi_-,
\]
where $\varphi_+$ and $\varphi_-$ are the projectors onto the positive and negative eigenspaces, respectively. Hence
\[
\|\psi_1-\psi_2\|_1 = |\lambda| + |-\lambda| = 2\lambda,
\]
and
\[
\Tr\left[\varphi_+ \left(\psi_1-\psi_2 \right) \right]
=
\lambda.
\]
Combining these two equalities confirms the statement.
\end{proof}

\begin{definition}[\cite{Nielsen2012,Watrous_2018}]
    The diamond norm of a supermap $\mathcal{E} \in \mathcal{L}(\mathcal{L}(\mathcal{H}_A), \mathcal{L}(\mathcal{H}_B))$ is given by
    $$
    \left\|\mathcal{E} \right\|_{\diamond} = \max_{\mathcal{H}_{R}: \dim \mathcal{H}_R \leq \dim \mathcal{H}_A} \max_{X \in \mathcal{L}(\mathcal{H}_A \otimes \mathcal{H}_R) } \frac{\left\|(\mathcal{E} \otimes \mathcal{I}_R)(X)  \right\|_1}{\left\|X\right\|_1}.
    $$
\end{definition}

\begin{lemma}[Adapted from {\cite[Theorem 3.55]{Watrous_2018}}]
\label{lemma:saturability_diamond_norm_Krank_1}
    Let $\mathcal{H}_0, \mathcal{H}_1$ be two spaces with $\dim \mathcal{H}_0 \leq \dim \mathcal{H}_1$, and define channels $\mathcal{V}_0, \mathcal{V}_1$ as $\forall \rho \in \dens(\mathcal{H}),~\mathcal{V}_0(\rho) = V_0 \rho V_0^\dagger, \mathcal{V}_1(\rho) = V_1 \rho V_1^\dagger$ for two isometries $V_0, V_1 \in \mathcal{L}(\mathcal{H}_0, \mathcal{H}_1)$. Then for every $\lambda \in [0, 1]$, there exists a pure state $\ket{\psi} \in \mathcal{H}_0$ such that
    $$
    \left\| \lambda \mathcal{V}_0(\psi) - (1 - \lambda) \mathcal{V}_1(\psi) \right\|_1 = \left\| \lambda \mathcal{V}_0 - (1 - \lambda) \mathcal{V}_1  \right\|_{\diamond}.
    $$
\end{lemma}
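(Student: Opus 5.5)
The plan is to reduce the diamond norm to an optimization over a single scalar, the overlap of the two output vectors. That overlap depends on the input only through its reduced state on $\mathcal H_0$, and the convexity of the numerical range will show that pure inputs without ancilla already achieve the optimum.

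\emph{Step 1: restrict to pure inputs.} By the definition of the diamond norm and convexity of the trace norm, the maximum is attained on a rank-one input. So
\[
\bigl\|\lambda\mathcal V_0-(1-\lambda)\mathcal V_1\bigr\|_\diamond=\max_{\ket{\Psi}\in\mathcal H_0\otimes\mathcal H_R}\bigl\|\lambda\, a a^\dagger-(1-\lambda)\, b b^\dagger\bigr\|_1,
\]
where $\ket a=(V_0\otimes\openone_R)\ket\Psi$ and $\ket b=(V_1\otimes\openone_R)\ket\Psi$. Since $V_0,V_1$ are isometries, both $\ket a$ and $\ket b$ are unit vectors.

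\emph{Step 2: closed form for two pure states.} The operator $\lambda aa^\dagger-(1-\lambda)bb^\dagger$ is Hermitian with rank at most $2$. I compute its trace norm in the two-dimensional span of $\ket a,\ket b$ using the trace and determinant of the $2\times 2$ restriction:
\[
\bigl\|\lambda aa^\dagger-(1-\lambda)bb^\dagger\bigr\|_1=\sqrt{1-4\lambda(1-\lambda)\,|\braket{a|b}|^2}.
\]
This is a non-increasing function of $|\braket{a|b}|$. Hence maximizing the trace norm is the same as minimizing $|\braket{a|b}|$. Writing $M:=V_0^\dagger V_1\in\mathcal L(\mathcal H_0)$, we have $\braket{a|b}=\bra\Psi (M\otimes\openone_R)\ket\Psi=\Tr[\sigma M]$, where $\sigma=\Tr_R\Psi$ is the reduced state on $\mathcal H_0$. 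Every density operator $\sigma$ on $\mathcal H_0$ arises this way once $\dim\mathcal H_R\ge\dim\mathcal H_0$, through purification.

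\emph{Step 3: pure states suffice.} The diamond norm therefore equals $\sqrt{1-4\lambda(1-\lambda)\,m^2}$, where $m:=\min_{\sigma}|\Tr[\sigma M]|$ and the minimum runs over density operators on $\mathcal H_0$. The set $\{\Tr[\sigma M]\}$ is the convex hull of the numerical range $W(M)=\{\bra\psi M\ket\psi:\ \|\psi\|=1\}$. By the Toeplitz--Hausdorff theorem $W(M)$ is already convex, and it is compact as the continuous image of the unit sphere. So the minimum of $|\cdot|$ over mixed $\sigma$ equals the minimum over $W(M)$, and it is attained at some unit vector $\ket\psi\in\mathcal H_0$. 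Taking this $\psi$ as the input with trivial ancilla gives $|\braket{a|b}|=m$. By Step 2 this yields $\|\lambda\mathcal V_0(\psi)-(1-\lambda)\mathcal V_1(\psi)\|_1=\|\lambda\mathcal V_0-(1-\lambda)\mathcal V_1\|_\diamond$.

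The main obstacle is Step 3, namely that mixed reduced states bring no advantage. This rests on the convexity of the numerical range; I would cite Toeplitz--Hausdorff rather than reprove it. I would also take some care with the $2\times2$ eigenvalue computation in Step 2 when $\ket a$ and $\ket b$ are parallel, which is the degenerate rank-one case; the formula still holds there.
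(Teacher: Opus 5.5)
Your proof is correct and is essentially the argument of Watrous's Theorem~3.55, which the paper cites rather than reproduces: reduce to pure inputs, use $\|\lambda aa^\dagger-(1-\lambda)bb^\dagger\|_1=\sqrt{1-4\lambda(1-\lambda)|\braket{a|b}|^2}$, and apply Toeplitz--Hausdorff so that $\Tr[\sigma V_0^\dagger V_1]$ is realized by a single vector, which even gives one $\psi$ valid for all $\lambda$. The one step stated too quickly is Step~1: under the paper's definition of the diamond norm (arbitrary $X$), convexity alone only reduces to rank-one inputs $\ket{u}\bra{v}$, and passing to $\ket{\Psi}\bra{\Psi}$ needs the standard fact that the diamond norm of a Hermiticity-preserving map such as $\lambda\mathcal V_0-(1-\lambda)\mathcal V_1$ is attained on a pure-state input, which Watrous's proof also invokes.
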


\begin{lemma}[{\cite[Section 3.3]{Watrous_2018}}]
\label{fact:diamond_norm_contractivity}
    For any linear super-operator $\mathcal{E} \in \mathcal{L}(\mathcal{L}(\mathcal{H}))$ and any linear operator $X \in \mathcal{L}(\mathcal{H})$, it holds that $\|\mathcal{E}(X)\|_{1} \leq \left\|\mathcal{E}\right\|_{\diamond} \|X\|_1$.
\end{lemma}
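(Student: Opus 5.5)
The plan is to evaluate the definition of the diamond norm at one convenient reference system and one convenient input, namely $X$ tensored with a rank-one projector. This exhibits $\|\mathcal E(X)\|_1/\|X\|_1$ as one of the ratios over which the maximum is taken.

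First, dispose of the trivial case. If $X=0$, both sides vanish by linearity of $\mathcal E$, so assume $X\neq 0$ and hence $\|X\|_1>0$.

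Next, fix any reference space $\mathcal H_R$ allowed in the definition, so $1\le\dim\mathcal H_R\le\dim\mathcal H$, and a unit vector $\ket{0}\in\mathcal H_R$. If a one-dimensional $\mathcal H_R$ is admissible, the argument is immediate, since $\mathcal E\otimes\mathcal I_R\cong\mathcal E$. For robustness, proceed uniformly with the embedded input $\widetilde X:=X\otimes\ket{0}\!\bra{0}\in\mathcal L(\mathcal H\otimes\mathcal H_R)$.

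The key step is the multiplicativity of the trace norm under tensor products, $\|A\otimes B\|_1=\|A\|_1\|B\|_1$. This holds because the singular values of $A\otimes B$ are exactly the pairwise products of the singular values of $A$ and $B$. Applying it twice gives
\[
\|\widetilde X\|_1=\|X\|_1\,\bigl\|\ket{0}\!\bra{0}\bigr\|_1=\|X\|_1,
\qquad
\bigl\|(\mathcal E\otimes\mathcal I_R)(\widetilde X)\bigr\|_1=\bigl\|\mathcal E(X)\otimes\ket{0}\!\bra{0}\bigr\|_1=\|\mathcal E(X)\|_1 .
\]

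Since the diamond norm is the maximum of such ratios over all admissible $\mathcal H_R$ and all inputs, we obtain
\[
\frac{\|\mathcal E(X)\|_1}{\|X\|_1}=\frac{\|(\mathcal E\otimes\mathcal I_R)(\widetilde X)\|_1}{\|\widetilde X\|_1}\le\|\mathcal E\|_\diamond .
\]
Multiplying through by $\|X\|_1$ gives the claim.

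There is no real obstacle here. The only point needing care is the tensor-product identity for the trace norm. One can verify it directly from singular value decompositions $A=\sum_i a_i\ket{u_i}\!\bra{v_i}$ and $B=\sum_j b_j\ket{w_j}\!\bra{z_j}$, which yield a singular value decomposition of $A\otimes B$ with singular values $a_ib_j$.
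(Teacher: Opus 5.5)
Your proof is correct and complete. You embed $X$ as $X\otimes\ket{0}\!\bra{0}$, use $\|A\otimes B\|_1=\|A\|_1\|B\|_1$, and compare the resulting ratio with the maximum in the paper's definition of $\|\cdot\|_\diamond$. The paper gives no proof of its own and only cites Watrous (Section 3.3), where this same stabilization-by-a-rank-one-operator argument is the standard way to show that the diamond norm dominates the induced trace norm.
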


\begin{definition}[\cite{Nielsen2012}]
    The $n$-fold tensors of Pauli operators, or the set of $n$-qubit Pauli operators $\mathsf{P}_n = \left\{ \bigotimes_{j=1}^n \sigma_{\mathrm{p}_j}: \mathrm{p}_j\in \{0,1,2,3\} \right\}$, where
    $$
   \sigma_0 = \openone_2 = \begin{pmatrix}
        1 & 0 \\ 0 & 1
    \end{pmatrix},~~\sigma_1 = \sigma_x = \begin{pmatrix}
        0 & 1 \\ 1 & 0
    \end{pmatrix},~~\sigma_2 = \sigma_y = \begin{pmatrix}
        0 & -\I \\ \I & 0
    \end{pmatrix},~~\sigma_3 = \sigma_z = \begin{pmatrix}
        1 & 0 \\ 0 & -1
    \end{pmatrix}
    $$
    forms an orthonormal basis for $\mathcal{L}(\C^{2^n})$ under the normalized Hilbert-Schmidt innder product $\left<X, Y\right>_{\mathrm{HS}} = \frac{1}{2^n} \Tr[X^\dagger Y]$. For notational brevity, we use a string $\bfp \in \{0, 1, 2, 3\}^n$ to identify the operators in $\mathsf{P}_n$, via
    $$
    \forall\,\bfp \in \{0, 1, 2, 3\}^n, \quad \sigma_{\bfp} := \bigotimes_{j=1}^n \sigma_{\mathrm{p}_j}.
    $$
\end{definition}

\subsection{Representation theory}

\begin{definition}[Basics of group representation \cite{goodman2009symmetry, efficient_state_tomography}]
    Let $G$ be a group, a (complex, unitary, finite-dimensional) representation of $G$ is a tuple $(\lambda, V)$ where $V$ is a (finite-dimensional complex) vector space $V$, and $\lambda: G \to \U(V)$ is a group homomorphism. We denote the dimension of this representation by $d_{\lambda} = \dim V$. We write $\chi_{\lambda}(g) = \Tr[\lambda(g)]$ for every $g \in G$ as the character of representation $\lambda$. For two representations $(\lambda, V_{\lambda}), (\mu, V_{\mu})$ of $G$, an interwining map $T$ is a map $T: V_{\lambda} \to V_{\mu}$ such that $T \circ \lambda(g) = \mu(g) \circ T$ for all $g \in G$. Moreover, if $T$ is invertible, \ie, $\mu(g) = T \circ \lambda(g) \circ T^{-1}$, then we say $\lambda$ and $\mu$ are isomorphic, or notationally $\lambda \cong \mu$. We can also define the dual $(\lambda^*, V_{\lambda}^*)$ of the representation $(\lambda, V_{\lambda})$, where $\lambda^*(g) = \lambda(g^{-1})^T$ for all $g \in G$, and $\chi_{\lambda^*}(g) = \overline{\chi_{\lambda}(g)}$.
\end{definition}

\begin{lemma}[\cite{heil2019introduction}]
\label{lemma:haar_measure}
    For every compact Lie group $G$, there exists a unique left- and right-invariant probability measure $\dd g$ on it, subject to $\dd(hg) = \dd(gh) = \dd g$ for any $h \in G$.
\end{lemma}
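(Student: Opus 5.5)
For the existence part I would use the smooth structure of $G$ and build the measure from a left-invariant top-degree differential form. Let $n=\dim G$ and pick a nonzero $\omega_e\in\wedge^n T_e^*G$. For each $g\in G$, set $\omega_g:=(L_{g^{-1}})^*\omega_e$, where $L_g(x)=gx$ is left translation.

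\begin{enumerate}
\item \emph{Invariance and smoothness.} Since $L_{hg}=L_h\circ L_g$, the form $\omega$ satisfies $L_h^*\omega=\omega$ for every $h$. It is smooth because group multiplication is smooth. It is nowhere vanishing, so it orients $G$. If $G$ is not connected, I would orient each component by translating from the identity component, or simply work with the density $|\omega|$.
\item \emph{A left-invariant measure.} Set $\mu(f):=\int_G f\,|\omega|$ for $f\in C(G)$. This is a positive linear functional, so the Riesz representation theorem gives a regular Borel measure. Left invariance follows from $L_h^*|\omega|=|\omega|$ and the change-of-variables formula.
\item \emph{Normalization.} Compactness gives $0<\mu(G)<\infty$. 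Rescaling by $\mu(G)^{-1}$ yields a left-invariant probability measure $\dd g$.
\end{enumerate}

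Next I would upgrade left invariance to right invariance using the modular function. For fixed $h$, the measure $E\mapsto\mu(Eh)$ is again left-invariant. The same holds at the level of forms: $R_h^*\omega$ is left-invariant, because left and right translations commute. Since left-invariant $n$-forms are determined by their value at $e$, we get $R_h^*\omega=\Delta(h)\,\omega$ with $\Delta(h)=\det\mathrm{Ad}(h)^{\pm1}$. The map $\Delta\colon G\to\mathbb R^{\times}$ is a continuous homomorphism, so $|\Delta|(G)$ is a compact subgroup of $(\mathbb R_{>0},\cdot)$. The only such subgroup is $\{1\}$. Hence $|\omega|$ is also right-invariant, and $\dd(gh)=\dd g$.

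For uniqueness, suppose $\nu$ is any left-invariant probability measure and $\mu$ is the bi-invariant one constructed above. For $f\in C(G)$, I would apply Fubini on the compact space $G\times G$, where it is applicable because $f$ is bounded and continuous and both measures are finite:
\[
\int f\,\dd\nu=\iint f(hg)\,\dd\nu(g)\,\dd\mu(h)=\iint f(hg)\,\dd\mu(h)\,\dd\nu(g)=\int f\,\dd\mu .
\]
\begin{itemize}
\item The first equality uses left invariance of $\nu$ (so $\int f(hg)\,\dd\nu(g)=\int f\,\dd\nu$ for each $h$) together with $\mu(G)=1$.
\item The last equality uses right invariance of $\mu$ (so $\int f(hg)\,\dd\mu(h)=\int f\,\dd\mu$ for each $g$) together with $\nu(G)=1$.
\end{itemize}
Regularity, via Riesz, then gives $\nu=\mu$. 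The same argument with the roles reversed handles a right-invariant $\nu$. So the left-invariant, right-invariant and bi-invariant normalized measures all coincide.

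The step I expect to need the most care is the right-invariance argument. One has to justify that $\Delta$ is a well-defined, continuous homomorphism; this comes from the computation $R_h^*\omega_e=\det(\mathrm{Ad}_{h^{-1}})\,\omega_e$ via $R_h=L_h\circ C_{h^{-1}}$, with $C$ denoting conjugation. One also has to handle disconnected $G$, which is why I would work with densities rather than oriented forms. Everything else is routine measure theory on a compact space.
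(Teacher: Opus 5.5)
Your argument is correct, but it does not follow the paper's route, because the paper gives no proof at all. It quotes the lemma as the standard Haar theorem from the literature. The usual proof of that theorem is purely measure-theoretic and works for arbitrary locally compact groups, for example via a Haar--Weil construction from covering numbers or an averaging/fixed-point argument on positive functionals. Bi-invariance for compact $G$ then follows because $\mu(Gh)=\mu(G)$ forces the modular function to equal $1$.

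You instead exploit the smooth structure:
\begin{itemize}
\item the left-invariant density $|\omega|$ gives existence in a few lines;
\item the explicit modular function $\Delta(h)=\det\mathrm{Ad}_{h^{-1}}$ gives right invariance;
\item the Fubini swap gives uniqueness even among one-sided invariant probability measures.
\end{itemize}
That last point is slightly more than the lemma states, and it is exactly what the paper implicitly relies on when it uses both invariances of $\dd g$ on $\U(d)$. The cited theorem buys generality, since no smooth structure is needed. Yours buys a short, self-contained and explicit construction that covers every group the paper actually uses.

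Your use of $|\omega|$ rather than $\omega$ is genuinely necessary. In $\mathrm{O}(2)$, a reflection acts by $-1$ on the one-dimensional Lie algebra, so $R_h^*\omega=-\omega$ there.

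The step you flagged as delicate is easier than you make it. Integrate $R_h^*|\omega|=|\Delta(h)|\,|\omega|$ over $G$ and use $R_h(G)=G$; this gives $|\Delta(h)|\,\mu(G)=\mu(G)$, hence $|\Delta(h)|=1$. You then need neither continuity of $\Delta$ nor the compact-subgroup argument.

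Two technical points to state explicitly:
\begin{itemize}
\item Fubini needs $(h,g)\mapsto f(hg)$ to be measurable for the product $\sigma$-algebra. This holds because $G$ is second countable, so $\mathcal{B}(G\times G)=\mathcal{B}(G)\otimes\mathcal{B}(G)$.
\item No regularity assumption on $\nu$ is needed, since every finite Borel measure on a compact metrizable space is regular.
\end{itemize}
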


\begin{definition}[Irreducible representations \cite{goodman2009symmetry, efficient_state_tomography}]
\label{def:irreps}
    Let $(\lambda, V_{\lambda})$ be the representation of a group $G$, a subspace $W$ is called an invariant subspace of $V_{\lambda}$ if it is invariant under the action induced by $\lambda$, \ie, $\forall g \in G$, $\mu(g) W \subseteq W$. Such an invariant subspace $W$ is trivial if $W = \{0\}$ or $W = V_{\lambda}$. If $V_{\lambda}$ has a non-trivial invariant subspace, we say that it is reducible, and irreducible otherwise. Correspondingly, the representation is called reducible/irreducible. We denote $\widehat{G}$ as the set of equivalence classes of all irreps of $G$.
\end{definition}

\begin{definition}
\label{def:class_function}
    A class function on a group $G$ is a function $f$ that is invariant under conjugacy over $G$, that is, $\forall g, h \in G$, $f(g) = f(h g h^{-1})$. Moreover, if $f$ is a scalar function and $G$ is a compact Lie group, we denote $\ell^2(G)$ as the set of $f$ subject to $\|f\|_2 := ( \int_G |f(g)|^{2} \dd g )^{1/2} < \infty$. Note that $\ell^2(G)$ is a Hilbert space equipped with an inner product $\left< f_1, f_2 \right> = \int_G f_1(g) \overline{f_2(g)} \dd g$.
\end{definition}

\begin{lemma}[Peter-Weyl, \cite{Peter1927}]
\label{lemma:peter_weyl}
    Let $G$ be a compact Lie group, then the irreducible characters of $G$ generate a dense subspace of the space of continuous class functions on $G$, and for any class function $f \in \ell^2(G)$,
    $$
    f = \sum_{\lambda \in \widehat{G} } \left<f, \chi_{\lambda}\right> \chi_{\lambda}.
    $$
\end{lemma}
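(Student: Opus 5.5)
The plan is the classical two-part proof: first show that the irreducible characters are orthonormal in $\ell^2(G)$, then show that their span is dense in the class functions. Given both, the expansion $f=\sum_{\lambda}\left<f,\chi_\lambda\right>\chi_\lambda$ is just the Fourier expansion of $f$ in a complete orthonormal system of the closed subspace of class functions.

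\textbf{Orthonormality.} For irreps $\lambda,\mu$ and any linear map $A:V_\mu\to V_\lambda$, I would form the averaged operator $\int_G \lambda(h)A\mu(h)^{-1}\dd h$. By invariance of the Haar measure (Lemma~\ref{lemma:haar_measure}) this is an intertwiner from $\mu$ to $\lambda$. Schur's lemma then says it is $0$ if $\lambda\not\cong\mu$, and equals $\frac{\Tr(A)}{d_\lambda}\openone$ if $\lambda=\mu$. Choosing $A$ to be matrix units gives the Schur orthogonality relations for matrix coefficients $g\mapsto\lambda(g)_{ij}$. Taking traces gives $\left<\chi_\lambda,\chi_\mu\right>=\delta_{\lambda\mu}$.

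\textbf{Density of matrix coefficients in $\ell^2(G)$.} This is the core of the argument and the step I expect to be the main obstacle. Fix a continuous kernel $k$ with $k(g^{-1})=\overline{k(g)}$, invariant under conjugation, and concentrated near the identity (an approximate identity). Consider the convolution operator $T_k f=k*f$ on $\ell^2(G)$. It is Hilbert--Schmidt, hence compact, and it is self-adjoint. Because $k$ is a class function, $T_k$ also commutes with right translations.

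By the spectral theorem for compact self-adjoint operators, each eigenspace $E_\alpha$ with $\alpha\ne0$ is finite-dimensional. Each $E_\alpha$ is invariant under the right regular representation. A function $\phi$ in a finite-dimensional right-invariant subspace is a matrix coefficient: if $\phi_1,\dots,\phi_m$ is a basis, then $\phi(g)=\phi(e\cdot g)$ is a linear combination of the entries of the representation matrix $g\mapsto[R(g)]_{ij}$. Decomposing that finite-dimensional representation into irreps, every element of $E_\alpha$ lies in the span of irreducible matrix coefficients.

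Hence $T_kf$ lies in the closure of this span for every $f$. Since $\|T_kf-f\|_2\to0$ as $k$ approaches a delta at $e$, matrix coefficients are $\ell^2$-dense. Because $T_k f$ is continuous and the convergence can be made uniform for continuous $f$, they are also uniformly dense in $C(G)$.

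\textbf{Passing to class functions.} Define $Pf(g)=\int_G f(hgh^{-1})\dd h$. This is an orthogonal projection onto the class functions: it is idempotent, self-adjoint, and contractive. Applied to a matrix coefficient $\Tr[A\lambda(g)]$, the Schur computation above gives
\[
P\,\Tr[A\lambda(\cdot)]=\Tr\Big[\int_G\lambda(h)^{-1}A\lambda(h)\dd h\,\lambda(\cdot)\Big]=\frac{\Tr(A)}{d_\lambda}\chi_\lambda .
\]
So $P$ maps the dense span of matrix coefficients onto the span of characters. Continuity of $P$, together with $Pf=f$ for a class function $f$, then shows that the characters span a dense subspace of the class functions, both in $\ell^2$ and uniformly for continuous class functions. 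Combined with orthonormality, this yields the stated expansion. The delicate points are the construction of the conjugation-invariant approximate identity and the compactness and finite-dimensionality step; everything else is Schur-lemma bookkeeping.
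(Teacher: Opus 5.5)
The paper gives no proof of this lemma. It is quoted as a classical result from \cite{Peter1927} and used as a black box in Corollary~\ref{corollary:covariance_translation_reformulation} and its adjoint-action variant. Your proposal is the standard compact-operator proof, and it is correct: Schur orthogonality, then density of matrix coefficients via a compact self-adjoint convolution operator, then averaging over conjugation.

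A few points to tighten:
\begin{itemize}
\item The conjugation invariance of $k$, which you list as a delicate point, is not needed. With $(k*f)(g)=\int_G k(gh^{-1})f(h)\dd h$, the operator $f\mapsto k*f$ commutes with right translations for \emph{every} $k$, so ``because $k$ is a class function'' gives the wrong reason. All you actually use is $k(g^{-1})=\overline{k(g)}$ (for self-adjointness) and concentration near $e$. Any approximate identity can be symmetrized to satisfy the first condition.
\item To evaluate elements of $E_\alpha$ at $e$, you need them to be continuous. This follows from $\phi=\alpha^{-1}T_k\phi$ together with $\|T_kf\|_\infty\le\|k\|_2\|f\|_2$. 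The same bound makes the spectral expansion of $T_kf$ converge uniformly, which is what upgrades $\ell^2$-density to uniform density of matrix coefficients.
\item For the uniform statement on continuous class functions, you need $P$ to be contractive in the sup norm as well as in $\ell^2$. This is immediate, because $Pf$ is an average of conjugates of $f$.
\end{itemize}

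It is also worth seeing how little of this the paper actually uses. The lemma is applied in Eq.~\eqref{eqn:channel_reformulation}. There the class function is $f=\bigl|\sum_{\lambda\in\Y}\sqrt{q_\lambda}\,\chi_\lambda\bigr|^2=\sum_{\lambda,\mu\in\Y}\sqrt{q_\lambda q_\mu}\,\chi_\lambda\overline{\chi_\mu}$ with $\Y$ finite. Each $\chi_\lambda\overline{\chi_\mu}$ is the character of the finite-dimensional representation $W_\lambda\otimes W_\mu^*$, hence a finite sum of irreducible characters by complete reducibility. So for the paper's purposes the expansion already follows from your orthonormality step. The density half of your argument, which is the genuinely hard part of Peter--Weyl, is never needed there.
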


\begin{lemma}[Schur's lemma, generalized \cite{goodman2009symmetry}] \label{lemma:schur}
    For a compact Lie group $G$, and a map $T: V_{\lambda} \to V_{\mu}$ where $(\lambda, V_{\lambda})$ and $(\mu, V_{\mu})$ are two representations of $G$. Suppose $T$ is $G$-equivariant, \ie, $\forall g \in G$, $T \circ \lambda(g) = \mu(g) \circ T$. Then $T = c \cdot \openone_{V_{\lambda}}$ for some $c \in \mathbb{C}$ if $\lambda \cong \mu$, and $T = 0$ otherwise. Let $\dd g$ be the normalized Haar measure on $G$, and denote the right translation on $G$ as $\mathcal{R}(g)$, then we can formulate the isotypic projector onto the $\lambda$-subspace for any $\lambda \in \widehat{G}$ by a $G$-equivariant operator:
    \begin{equation}
    \label{eqn:canonical_projection}
    \Pi_\lambda = \dim V_{\lambda} \int_G \overline{\chi_{\lambda}(g)} \mathcal{R}(g) \dd g.
    \end{equation}
    The operation $\Pi_{\lambda}$ being a projection is a direct consequence of \lref{lemma:haar_measure} and the Schur orthogonality relation $\int_G [\lambda(g)]_{i, j} \overline{[\mu(g)]_{k, \ell}} \dd g = \frac{1}{\dim V_{\lambda}} \bm{1}_{\lambda \cong \mu} \delta_{i, k} \delta_{j, \ell}$.
    Furthermore, the Wedderburn-Artin theorem \cite[Theorem 2.5.15]{grinko2025mixed} tells us that $\dim \mathbb{C}[G] = \sum_{ \lambda \in \widehat{\mathbb{C}[G]}  } (\dim V_{\lambda})^2$. 
\end{lemma}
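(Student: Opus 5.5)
We prove the three assertions of \lref{lemma:schur} in order: the scalar/zero dichotomy for equivariant maps, the Schur orthogonality relation together with the projector formula \eqref{eqn:canonical_projection}, and the dimension count for $\mathbb C[G]$. Throughout we take $\lambda,\mu$ irreducible, as the dichotomy requires. When $\lambda\cong\mu$ we identify $V_\mu$ with $V_\lambda$ through a fixed intertwiner, so that ``$T=c\cdot\openone_{V_\lambda}$'' makes sense.

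\textbf{Step 1 (dichotomy).} Let $T$ be $G$-equivariant. Then $\ker T\subseteq V_\lambda$ and $\mathrm{im}\,T\subseteq V_\mu$ are invariant subspaces. By irreducibility (Definition~\ref{def:irreps}), either $T=0$ or $T$ is a bijection; in the latter case $\lambda\cong\mu$. In particular, $T=0$ whenever $\lambda\not\cong\mu$.
\begin{itemize}
\item Suppose $\lambda\cong\mu$, and use the identification to view $T$ as an endomorphism of $V_\lambda$.
\item Since $V_\lambda$ is finite-dimensional over $\mathbb C$, $T$ has an eigenvalue $c$.
\item The operator $T-c\,\openone$ is still equivariant and has nonzero kernel, so by the first part it vanishes.
\end{itemize}

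\textbf{Step 2 (orthogonality).} Fix a matrix unit $E_{jl}$ and set
\[
A:=\int_G \lambda(g)\,E_{jl}\,\mu(g)^{-1}\,\dd g .
\]
By the invariance of the Haar measure (\lref{lemma:haar_measure}), $A$ satisfies $\lambda(h)A=A\mu(h)$, so it is an intertwiner. Step~1 then gives:
\begin{itemize}
\item if $\lambda\not\cong\mu$, then $A=0$;
\item if $\lambda=\mu$, then $A=c\,\openone$, and taking traces gives $c=\delta_{jl}/\dim V_\lambda$.
\end{itemize}
Reading off matrix entries, and using unitarity to write $\mu(g)^{-1}=\mu(g)^\dagger$, yields $\int_G[\lambda(g)]_{i,j}\overline{[\mu(g)]_{k,\ell}}\,\dd g=\frac{1}{\dim V_\lambda}\bm 1_{\lambda\cong\mu}\delta_{i,k}\delta_{j,\ell}$.

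\textbf{Step 3 (projector).} We treat $\mathcal R$ as an arbitrary finite-dimensional unitary representation.
\begin{itemize}
\item \emph{Decomposition.} Averaging any inner product over $G$ makes $\mathcal R$ unitary, and orthogonal complements of invariant subspaces are then invariant. Hence $\mathcal R$ splits into irreducible summands.
\item \emph{Equivariance.} $\Pi_\lambda$ commutes with every $\mathcal R(h)$. This follows from the substitution $g\mapsto hgh^{-1}$, invariance of $\dd g$, and the fact that $\chi_\lambda$ is a class function (Definition~\ref{def:class_function}).
\item \emph{Action on an irreducible summand $\mu$.} Apply Step~2 summed over the diagonal $i=j$. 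This shows that $\dim V_\lambda\int_G\overline{\chi_\lambda(g)}\,\mu(g)\,\dd g$ equals $\openone$ if $\mu\cong\lambda$ and $0$ otherwise.
\end{itemize}
So $\Pi_\lambda$ acts as the identity on the $\lambda$-isotypic component and as zero on every other summand. It is therefore idempotent, and it is the isotypic projector.

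\textbf{Step 4 (dimension count).} For a finite group $G$, consider the regular representation on $\mathbb C[G]$. Its character is $\chi_{\mathrm{reg}}(g)=|G|\,\delta_{g,e}$. Inner products of characters (the trace of Step~2) then show that each irrep $\lambda$ occurs in $\mathbb C[G]$ with multiplicity $\langle\chi_{\mathrm{reg}},\chi_\lambda\rangle=\dim V_\lambda$. Comparing dimensions gives $\dim\mathbb C[G]=\sum_\lambda(\dim V_\lambda)^2$. This is the Wedderburn--Artin statement cited from \cite{grinko2025mixed}; for a general semisimple algebra one may simply invoke that reference.

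The main obstacle is bookkeeping rather than conceptual. Two points need care: handling the identification $V_\mu\cong V_\lambda$ in Step~1, and keeping track of complex conjugates and dual-representation conventions in Step~2. If these are mishandled, the projector in Step~3 would select $\lambda^*$ instead of $\lambda$, so the conjugate on $\chi_\lambda$ in \eqref{eqn:canonical_projection} must be checked against the orthogonality relation.
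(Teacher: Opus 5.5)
Your argument is correct within the scope you fix: $\lambda,\mu$ irreducible, $\mathcal R$ a finite-dimensional unitary representation, and $G$ finite in Step~4. It is also more self-contained than the paper, which gives no proof of its own. The paper cites \cite{goodman2009symmetry} for the dichotomy, asserts that $\Pi_\lambda$ being a projection follows directly from Haar invariance and Schur orthogonality, and cites Wedderburn--Artin for the dimension count.

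Your route differs from the paper's in two places.

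\emph{The projector.} The paper's remark corresponds to the direct computation
\[
\Pi_\lambda^2=(\dim V_\lambda)^2\int_G\int_G\overline{\chi_\lambda(g)}\,\overline{\chi_\lambda(g^{-1}k)}\,\mathcal R(k)\dd g\dd k=\Pi_\lambda ,
\]
which uses the convolution identity $\int_G\overline{\chi_\lambda(g)}\,\overline{\chi_\lambda(g^{-1}k)}\dd g=\overline{\chi_\lambda(k)}/\dim V_\lambda$, itself a consequence of the orthogonality relation. That computation never decomposes $\mathcal R$. It therefore applies verbatim to the literal right regular representation on $\ell^2(G)$, which is infinite-dimensional for a positive-dimensional Lie group. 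There, your orthogonal-complement argument does not by itself produce irreducible summands; that is the content of Peter--Weyl, \lref{lemma:peter_weyl}. In exchange, your route also identifies the range of $\Pi_\lambda$ as exactly the $\lambda$-isotypic component, which idempotence alone does not give. It also covers every use the paper makes of $\Pi_\lambda$, since it is only ever applied to finite-dimensional actions such as $\mathcal T_{m,n}$.

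\emph{The dimension count.} Your regular-character argument settles finite groups, which is the only case in which $\dim\mathbb C[G]<\infty$ and the formula has content. Wedderburn--Artin, by contrast, holds for any finite-dimensional semisimple algebra, including the walled Brauer algebras $\mathcal A^d_{n,m}$ the paper works with, so deferring to the reference there is the right call.

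One bookkeeping remark: your Step~2 rightly restricts to $\lambda=\mu$. For $\mu=S\lambda S^\dagger$ the integral equals $\overline{S_{ki}}S_{\ell j}/\dim V_\lambda$, so the $\bm 1_{\lambda\cong\mu}$ form in the statement presupposes one fixed matrix realization per class in $\widehat G$.
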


\begin{corollary}
\label{corollary:covariance_translation_reformulation}
    For a compact Lie group $G$, and a class function $f \in \ell_2(G)$, the weighted right translation $\mathcal{E}_f  = \int_{G} f(g) \mathcal{R}(g) \dd g$ can be decomposed by
    $$
    \mathcal{E}_f = \sum_{\lambda \in \widehat{G}} c_{\lambda} \Pi_{\lambda}, \quad c_{\lambda} = \frac{1}{\dim V_{\lambda}} \int_{G} f(h) \chi_{\lambda}(h) \dd h,
    $$
    where the projection operator $\Pi_{\lambda}$ is specified in \eref{eqn:canonical_projection}.
\end{corollary}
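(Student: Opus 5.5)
The cleanest route avoids series manipulations and uses Schur's lemma (\lref{lemma:schur}) directly on each isotypic component of the right-translation representation $\mathcal R$. The first step is to show that $\mathcal E_f$ commutes with every $\mathcal R(h)$. Using that $\mathcal R$ is a homomorphism, the bi-invariance of the Haar measure (\lref{lemma:haar_measure}), and the substitution $g\mapsto hgh^{-1}$, we get
\[
\mathcal R(h)\,\mathcal E_f\,\mathcal R(h)^{-1}=\int_G f(g)\,\mathcal R(hgh^{-1})\dd g=\int_G f(h^{-1}g'h)\,\mathcal R(g')\dd g'=\mathcal E_f,
\]
where the last equality holds because $f$ is a class function (Definition~\ref{def:class_function}). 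Consequently $\mathcal E_f$ preserves each isotypic subspace $\Pi_\lambda(\cdot)$, since $\Pi_\lambda$ is itself a weighted integral of $\mathcal R$. We can therefore write $\mathcal E_f=\sum_\lambda \mathcal E_f\Pi_\lambda$.

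Next, fix $\lambda$ and decompose the $\lambda$-isotypic component as $V_\lambda\otimes\C^{m_\lambda}$, on which $\mathcal R(g)$ acts as $\lambda(g)\otimes\openone$. On this component $\mathcal E_f$ acts as $\bigl(\int_G f(g)\lambda(g)\dd g\bigr)\otimes\openone$. The operator $T_\lambda:=\int_G f(g)\lambda(g)\dd g$ on $V_\lambda$ commutes with $\lambda(h)$ for all $h$, by the same change of variables as above. Because $\lambda$ is irreducible, Schur's lemma gives $T_\lambda=c_\lambda\openone_{V_\lambda}$. Taking traces yields $c_\lambda\dim V_\lambda=\int_G f(g)\chi_\lambda(g)\dd g$, which is exactly the claimed coefficient. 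Summing over $\lambda$ gives $\mathcal E_f=\sum_\lambda c_\lambda\Pi_\lambda$.

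As a consistency check, one can also use Peter--Weyl (\lref{lemma:peter_weyl}). Expand $f=\sum_\mu\langle f,\chi_\mu\rangle\chi_\mu$ and use $\chi_\mu=\overline{\chi_{\mu^*}}$, so that by \eref{eqn:canonical_projection} each $\int\chi_\mu\mathcal R\,\dd g$ equals $\Pi_{\mu^*}/\dim V_{\mu^*}$. Reindexing $\mu^*\to\lambda$ and noting $\langle f,\chi_{\lambda^*}\rangle=\int f\chi_\lambda$ reproduces $c_\lambda$.

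The main obstacle is the analytic bookkeeping when the representation space carrying $\mathcal R$ is infinite-dimensional, for instance $\ell^2(G)$ itself. There the decomposition $\sum_\lambda\Pi_\lambda=\openone$ holds only in the strong operator topology, and interchanging the sum with $\mathcal E_f$ needs justification. I would handle this by noting that $\mathcal E_f$ is bounded (with norm at most $\|f\|_2$, since $f\in\ell^2\subset\ell^1$ on a probability space) and commutes with each $\Pi_\lambda$, so the identity can be verified on each invariant finite-dimensional piece and then extended by continuity. In every application in this paper the space is finite-dimensional, and the sum is finite.
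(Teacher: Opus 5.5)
Your argument is correct, and your ``consistency check'' is exactly the paper's proof. The paper expands $f$ in irreducible characters via Peter--Weyl (\lref{lemma:peter_weyl}), reindexes by duals using $\chi_{\lambda^*}=\overline{\chi_\lambda}$, and recognizes each term $\int\overline{\chi_\lambda}\,\mathcal R\,\dd g$ as $\Pi_\lambda/\dim V_\lambda$ through \eref{eqn:canonical_projection}.

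Your main route works on the representation side instead. You block-decompose $\mathcal R$ into isotypic components $V_\lambda\otimes\C^{m_\lambda}$, note that $\mathcal E_f$ acts there as $T_\lambda\otimes\openone$, and apply Schur's lemma to force $T_\lambda=c_\lambda\openone$. The coefficient $c_\lambda$ then comes from a trace. The class-function hypothesis enters only to make $T_\lambda$ an intertwiner; invariance of the isotypic blocks holds for any $f$, so your first commutation step is not actually needed.

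What your route buys:
\begin{itemize}
\item It makes transparent why the hypothesis on $f$ matters and why the coefficient carries $\chi_\lambda$ rather than $\overline{\chi_\lambda}$.
\item In finite dimensions it needs only complete reducibility and Schur orthogonality, not completeness of characters in $\ell^2(G)$.
\item In the infinite-dimensional case, $\sum_\lambda\Pi_\lambda=\openone$ is itself the space-side Peter--Weyl theorem, so there the two routes rest on equivalent input.
\end{itemize}

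The price is that you must separately know that the operator in \eref{eqn:canonical_projection}, built with $\overline{\chi_\lambda}$, projects onto the $\lambda$-isotypic component and not the $\lambda^*$-isotypic one. A conjugation slip there would silently swap $c_\lambda$ and $c_{\lambda^*}$. The paper's computation produces the operators $\Pi_\lambda$ in exactly the form of \eref{eqn:canonical_projection} without using that they are projectors at all.

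Conversely, the paper's sketch tacitly integrates an $\ell^2$-convergent (not pointwise) character series termwise against $\mathcal R$. That step is justified by the same bound you invoke, $\|\int_G h\,\mathcal R\,\dd g\|_\infty\le\|h\|_1\le\|h\|_2$.
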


\begin{proof}[Sketch proof of Corollary \ref{corollary:covariance_translation_reformulation}]
    The reformulation is a natural consequence of \lref{lemma:peter_weyl}: summing over $\lambda \in \widehat{G}$ is equivalent to summing over its dual $\lambda^* \in \widehat{G}$, we can reformulate $\mathcal{E}_f$ as
    \begin{align*}
    \mathcal{E}_f &= \int_{G} f(g) \mathcal{R}(g) \dd g  = \int_{G} \sum_{\lambda \in \widehat{G}} \left< f, \chi_{\lambda^*} \right> \chi_{\lambda^*}(g) \mathcal{R}(g) \dd g = \sum_{\lambda \in \widehat{G}} \left<f, \chi_{\lambda^*}\right> \int_{G} \overline{\chi_{\lambda}(g)} \mathcal{R}(g) \dd g \\
    &= \sum_{\lambda \in \widehat{G}} \left( \frac{1}{\dim V_{\lambda}} \int_{G} f(h) \chi_{\lambda}(h) \dd h  \right) \left( \dim V_{\lambda} \int_{G} \overline{\chi_{\lambda}(g)} \mathcal{R}(g) \dd g \right) = \sum_{\lambda \in \widehat{G}} c_{\lambda} \Pi_{\lambda}.
    \end{align*}
    This completes the proof.
\end{proof}
Tomographical protocols for quantum states and processes often rely on representation theory to fully utilize the symmetry of i.i.d. quantum resources \cite{Bisio_2010,efficient_state_tomography, Yang_2020, scharnhorst2025optimallowerboundsquantum, pelecanos2025debiasedkeylsalgorithmnew, pelecanos2025mixedstatetomographyreduces, west2026classicalshadowsarbitrarygroup}. We review several standard facts and parlance from representation theory that are either necessary for subsequent derivations (\eg, Appendix \ref{appendix:deferred_proofs}) or are included for completeness. Several essential components required for the proofs of subsequent lemmas are also derived in this section. These preliminaries are largely adapted from \cite{harrow_schur_2005, wright2016learn, grinko2025mixed}, to which we direct the reader for a more detailed exposition.

\begin{definition}[Permutation matrix algebra, \cite{grinko2025mixed}]
\label{def:perm_matrix_algebra}
    The natural action of the symmetric group $\mathfrak{S}_n$ on $(\mathbb{C}^d)^{\otimes n}$ is the tensor representation of $\mathfrak{S}_n$, given by a map $\psi_n^d: \mathbb{C}[\mathfrak{S}_n] \to \mathcal{L}((\mathbb{C}^d)^{\otimes n})$, acting as
    $$
    \psi_n^d(\pi) \ket{x_1} \ket{x_2} \cdots \ket{x_n} = \ket{x_{\pi^{-1}(1)}} \ket{x_{\pi^{-1}(2)}} \cdots \ket{x_{\pi^{-1}(n)}}, \quad \forall \pi \in \mathfrak{S}_n.
    $$
    The image of $\mathbb{C}[\mathfrak{S}_n]$ under the action of $\psi_n^d$ is the permutation matrix algebra $\mathcal{A}_n^d := \psi_n^d(\mathbb{C}[\mathfrak{S}_n])$.
\end{definition}

\begin{definition}[Cartan subalgebra and weight multiplicity, {\cite[Chapter 3]{goodman2009symmetry}}]
\label{def:weight_space}
    Let $\mathfrak{g} =\mathfrak{u}(d)$ (resp. $\mathfrak{su}(d)$) be the Lie algebra of $\U(d)$ (resp. $\su(d)$), the Cartan subalgebra $\mathfrak{h}  \subset \mathfrak{u}(d)$ (resp. $\mathfrak{su}(d)$) is the maximal abelian subalgebra of all diagonal matrices $h = \diag\{H_j\}_{j=1}^d$. Let $(\pi, W_{\pi})$ be a finite-dimensional representation of $\mathfrak{g}$, for a linear functional $w \in  \mathbb{Z}^d$, the weight space $W_{\pi}(w) = \{ \ket{v} \in W_{\pi}: \pi(h) \ket{v} = \left< w, h \right> \mathbf{v},~\forall h \in \mathfrak{h} \}$. When $W_{\pi}(w) \neq \{0\}$, $w$ is a weight, and the collection of weights is denoted as $\mathfrak{X}(W_{\pi})$. The $W_{\pi}$ is decomposable with weights $W_{\pi} = \bigoplus_{w} W_{\pi}(w)$. We denote the weight multiplicity $m_{\pi}(w) = \dim W_{\pi}(w)$. The Weyl group that captures the symmetries of $\mathfrak{h}$ for $\U(d)$ elements is $\mathfrak{S}_d$ \cite[Proposition 3.1.20]{goodman2009symmetry}. Any element $t \in \mathfrak{S}_d$ acts on $w \in \mathbb{Z}^d$ as $t(w) := (w_{t^{-1}(j)})_{j=1}^d$, and $\sgn(t)$ is the standard sign of the permutation $t$. This action partitions the collection of weights $\mathfrak{X}(W_\pi)$ into Weyl group orbits $\mathsf{O}(w) = \{t(w) : t \in \mathfrak{S}_d\}$. By symmetry, weight multiplicity is invariant across an orbit, meaning $m_\pi(t(w)) = m_\pi(w)$ for any $t \in \mathfrak{S}_d$.
\end{definition}

\begin{definition}[\cite{goodman2009symmetry}]
    A vector $\lambda = (\lambda_1, \lambda_2, \dots, \lambda_d) \in \mathbb{Z}^d$ satisfying $\lambda_1 \geq \lambda_2 \geq \cdots \geq \lambda_d$ is called the highest weight [cf. Definition \ref{def:weight_space}], and $\widehat{\U(d)} = \{\lambda \in \mathbb{Z}^d: \lambda_1 \geq \lambda_2 \geq \cdots \geq \lambda_d\}$, and without loss of generality, when the global phase does not matter, so that we obtain irreps for $\su(d)$ by tracking the heighest weight $(\lambda_1 - \lambda_2, \lambda_2 - \lambda_3, \dots, \lambda_{d-1} - \lambda_d)$. For a general highest weight $\lambda \in \widehat{\U(d)}$, its dual representation is induced by the highest weight $\lambda^* = (-\lambda_d, -\lambda_{d-1}, \dots, -\lambda_1)$. The sum of entries of $\lambda$, denoted as $|\lambda| = \sum_{j=1}^{k} \lambda_j$, indicates the global phase, which is assumed trivial in $\su(d)$.
\end{definition}

\begin{definition}[Partitions and Young diagrams \cite{goodman2009symmetry}] \label{def:partitions_and_young_diagrams}
A partition $\lambda \vdash n$ is a restriction of the highest weight $\lambda$ such that $\lambda_1 \geq \lambda_2 \geq \cdots \geq \lambda_{k} \geq 0$ and $|\lambda| = n$. The length of the partition $\lambda$ is given by $\ell(\lambda) = \max\{j : \lambda_j > 0\}$ and if $\ell(\lambda) \leq d$, we write $\lambda \vdash_d n$. Any partition $\lambda$ is graphically (the shape of) a Young diagram on $n$ cells arranged in $\ell(\lambda)$ rows with $\lambda_j$ cells on the $j$-th row. For instance, $\lambda = (2, 0) \in \Y_2^2$ corresponds to the diagram $\left(\syrow\right)$. 
We write $\Y_n^d$ for the collection of Young diagrams $\lambda \vdash_d n$, and it uniquely identifies the equivalence classes of irreps of the permutation matrix algebra $\mathcal{A}_{n}^d$ defined in Definition \ref{def:perm_matrix_algebra}.
\end{definition}

\begin{definition}[Representation of the symmetric group and general linear group]
    The partitions $\lambda \in \Y_n^d$ simultaneously induce the irreps of a symmetric group $\mathfrak{S}_n$ and the unitary group $\U(d)$ when the global phase does not matter. For each $\lambda \in \Y_n^d$, suppose $(\pi_{\lambda}, S_{\lambda}) \in \widehat{\mathfrak{S}_n}$ and $(\nu_{\lambda}, W_{\lambda}) \in \widehat{\U(d)}$, the free modules $S_{\lambda}$ and $W_{\lambda}$ are known as the Specht module and the (Schur-)Weyl module, respectively. 
\end{definition}

\begin{lemma}[Littlewood-Richardson, \cite{goodman2009symmetry}]
\label{lemma:littlewood_richardson}
    For two Weyl modules $W_{\lambda}$ and $W_{\mu}$ on the highest weight $\lambda, \mu$, their tensor product can generally be decomposed into irreps of $\U(d)$. Mathematically,
    $$
    W_{\lambda} \otimes W_{\mu} \cong \bigoplus_{\gamma \in \widehat{\U(d)} } W_{\gamma}^{\oplus C_{\lambda, \mu}^{\gamma}} = \bigoplus_{\gamma \vdash |\lambda| + |\mu| } W_{\gamma}^{\oplus C_{\lambda, \mu}^{\gamma}}, \quad C_{\lambda, \mu}^{\gamma} = \dim \Hom_{\U(d)} \left( W_{\gamma},  W_{\lambda} \otimes W_{\mu} \right),
    $$
    where $C_{\lambda, \mu}^{\nu} \geq 0$ are the Littlewood-Richardson coefficients that counts the multiplicity. When $\dd g$ is normalized, the orthogonality allows us to rewrite $C_{\lambda, \mu}^{\gamma} = \int_G \chi_{\lambda}(g) \chi_{\mu}(g) \overline{\chi_{\gamma}(g)} \dd g$. Notably, for those illegitimate weights $\gamma$, this coefficient is automatically zero. Moreover, it holds that
    $
   \binom{|\lambda| + |\mu|}{|\mu|} s_{\lambda} s_{\mu} = \sum_{\gamma \vdash |\lambda| + |\mu|} s_{\gamma} C_{\lambda, \mu}^{\gamma}
    $.
\end{lemma}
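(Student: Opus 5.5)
We prove the two claims of \lref{lemma:littlewood_richardson} that go beyond the definition: the character formula for $C_{\lambda,\mu}^{\gamma}$, and the dimension identity $\binom{|\lambda|+|\mu|}{|\mu|}s_\lambda s_\mu=\sum_{\gamma}s_\gamma C_{\lambda,\mu}^{\gamma}$.

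\emph{Character formula.} The character of $W_\lambda\otimes W_\mu$ is the product $\chi_\lambda\chi_\mu$. The decomposition of $W_\lambda\otimes W_\mu$ into irreps exists by complete reducibility of representations of the compact group $\U(d)$. Pairing $\chi_\lambda\chi_\mu$ against $\chi_\gamma$ in $\ell^2(\U(d))$ and applying Schur orthogonality (\lref{lemma:schur}) isolates the multiplicity of $W_\gamma$. This gives $C_{\lambda,\mu}^{\gamma}=\int\chi_\lambda\chi_\mu\overline{\chi_\gamma}\,\dd g$. If $\gamma$ is not a highest weight occurring in the decomposition, the integral vanishes by orthogonality. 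This step is routine.

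\emph{Dimension identity.} Set $n=|\lambda|$ and $m=|\mu|$, and take $d\ge n+m$. We decompose $(\C^d)^{\otimes(n+m)}=(\C^d)^{\otimes n}\otimes(\C^d)^{\otimes m}$ as a $\U(d)\times\mathfrak S_n\times\mathfrak S_m$ representation in two ways, using Schur–Weyl duality \eref{eqn:Schur-Weyl_main}.

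First, apply Schur–Weyl to each factor separately:
\[
\bigoplus_{\lambda\vdash n,\ \mu\vdash m}(W_\lambda\otimes W_\mu)\otimes(S_\lambda\otimes S_\mu)
\;\cong\;\bigoplus_{\gamma\vdash n+m}W_\gamma\otimes\Big(\bigoplus_{\lambda,\mu}(S_\lambda\otimes S_\mu)^{\oplus C_{\lambda,\mu}^\gamma}\Big).
\]
Second, apply Schur–Weyl to the whole space, giving $\bigoplus_{\gamma}W_\gamma\otimes S_\gamma$, with $\mathfrak S_n\times\mathfrak S_m$ acting by restriction. For each $\gamma$, the $W_\gamma$-isotypic components of the two decompositions must agree. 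Comparing their multiplicity spaces gives
\[
\mathrm{Res}^{\mathfrak S_{n+m}}_{\mathfrak S_n\times\mathfrak S_m}S_\gamma\;\cong\;\bigoplus_{\lambda,\mu}(S_\lambda\otimes S_\mu)^{\oplus C_{\lambda,\mu}^\gamma}.
\]
Frobenius reciprocity then yields $\mathrm{Ind}_{\mathfrak S_n\times\mathfrak S_m}^{\mathfrak S_{n+m}}(S_\lambda\otimes S_\mu)\cong\bigoplus_\gamma S_\gamma^{\oplus C_{\lambda,\mu}^\gamma}$. The induced module has dimension $[\mathfrak S_{n+m}:\mathfrak S_n\times\mathfrak S_m]\,s_\lambda s_\mu=\binom{n+m}{m}s_\lambda s_\mu$. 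Taking dimensions of both sides gives the identity.

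\emph{Main obstacle: dependence on $d$.} The argument needs every $\gamma\vdash n+m$ to appear in Schur–Weyl, which is why we take $d\ge n+m$. When $d<\ell(\gamma)$, the Weyl module $W_\gamma$ does not exist, so $C_{\lambda,\mu}^\gamma$ as defined through $\U(d)$ is zero. In that case the identity must be read with the stable Littlewood–Richardson coefficients, which are combinatorial and independent of $d$. To justify this, we would argue that $C_{\lambda,\mu}^\gamma$ computed in $\U(d)$ does not depend on $d$ once $d\ge\ell(\gamma)$. The approach is restriction $\U(d')\to\U(d)$ for $d'>d$: this preserves the isotypic components of all $\gamma$ with $\ell(\gamma)\le d$ (equivalently, stability of Schur polynomials under setting variables to zero). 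The identity is then stated for the stable coefficients, and it coincides with the $\U(d)$ coefficients whenever all relevant $\gamma$ have length at most $d$.
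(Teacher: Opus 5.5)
Your proof is correct, and it supplies something the paper does not: the paper states this lemma with a bare citation to Goodman--Wallach and gives no argument. The character formula is the routine orthogonality computation. Your route to the dimension identity is clean and self-contained: apply Schur--Weyl duality to $(\C^d)^{\otimes n}\otimes(\C^d)^{\otimes m}$ factorwise and globally, compare the $W_\gamma$-multiplicity spaces as $\mathfrak S_n\times\mathfrak S_m$-modules, then use Frobenius reciprocity. This identifies $C^{\gamma}_{\lambda,\mu}$ with the multiplicity of $S_\lambda\otimes S_\mu$ in $\mathrm{Res}^{\mathfrak S_{n+m}}_{\mathfrak S_n\times\mathfrak S_m}S_\gamma$.

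Its main benefit is that it exposes a hypothesis the paper's statement omits. Under the lemma's own convention that the coefficient vanishes for $\ell(\gamma)>d$, the identity is false for small $d$: for $d=1$ and $\lambda=\mu=(1)$, the left side is $2$ but the right side is $s_{(2)}=1$. For the $\U(d)$ coefficients the identity holds precisely when $d\ge\ell(\lambda)+\ell(\mu)$. In particular it holds under your assumption $d\ge|\lambda|+|\mu|$, which is the regime where the paper applies it in \lref{lemma:plancherel_coefficient_evaluation_rule} (there $d\ge n$, the total size).

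Two small corrections. First, the claim that restricting from $\U(d')$ to $\U(d)$ ``preserves the isotypic components'' is not right as written. Along the block embedding $U\mapsto U\oplus\openone$, $W_\gamma$ branches into many $\U(d)$-irreps. The mechanism that does work is the one in your parenthesis: Schur polynomials truncate correctly when variables are set to zero. You need neither, though. Your own isotypic comparison, run at any $d\ge\ell(\gamma)$, already gives $C^{\gamma}_{\lambda,\mu}=\dim\Hom_{\mathfrak S_n\times\mathfrak S_m}(S_\lambda\otimes S_\mu,\mathrm{Res}\,S_\gamma)$, which is manifestly independent of $d$. Second, you do not justify the second equality in the displayed decomposition, namely that only $\gamma\vdash|\lambda|+|\mu|$ occur. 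It follows from the same embedding of $W_\lambda\otimes W_\mu$ into $(\C^d)^{\otimes(n+m)}$, or from the central character $z\openone\mapsto z^{|\lambda|+|\mu|}$ together with polynomiality. It also presupposes that $\lambda,\mu$ are partitions rather than arbitrary highest weights.
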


\begin{lemma}[Schur-Weyl duality \cite{goodman2009symmetry, grinko2025mixed}]
\label{lemma:schur_weyl}
    For any $n, d \in \mathbb{N}$, the vector space $(\mathbb{C}^d)^{\otimes n}$ and the $n$-fold tensor of each $U \in \U(d)$ can be decomposed according to the irreps induced by the Young diagrams:
    $$
    (\mathbb{C}^d)^{\otimes n} \cong \bigoplus_{\lambda \in \widehat{\mathcal{A}_{n}^d} } W_{\lambda} \otimes S_{\lambda} = \bigoplus_{\lambda \in \Y_n^d } W_{\lambda} \otimes S_{\lambda}, \quad U^{\otimes n} \cong \bigoplus_{\lambda \in \Y_n^d } U_{\lambda} \otimes \openone_{S_{\lambda}}.
    $$
    The canonical bases for the Weyl and Specht modules are collectively identified as the Schur-Weyl basis \cite{harrow_schur_2005}. Recall the canonical learning protocol in \ref{subsec:bisio_etal_optimal_unitary_learning}, the optimal probe state and POVM are expressed in these respective bases, under the interpretation that $W_{\lambda} \cong \mathcal{H}_{\lambda}$, $S_{\lambda} \cong \mathcal{M}_{\lambda}$ under the isomorphism $\mathcal{H} \cong \C^d$.
\end{lemma}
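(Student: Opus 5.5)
The claim to prove is Schur--Weyl duality (Lemma~\ref{lemma:schur_weyl}): $(\mathbb C^d)^{\otimes n}\cong\bigoplus_{\lambda\in\Y_n^d}W_\lambda\otimes S_\lambda$, with $U^{\otimes n}$ acting as $U_\lambda\otimes\openone_{S_\lambda}$. The plan is to use the double commutant theorem together with the isotypic decomposition from Lemma~\ref{lemma:schur}.

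The first step is to set up the two commuting actions on $V=(\mathbb C^d)^{\otimes n}$. One is the permutation algebra $\mathcal A_n^d=\psi_n^d(\mathbb C[\mathfrak S_n])$ from Definition~\ref{def:perm_matrix_algebra}. The other is the algebra $\mathcal B$ spanned by $\{U^{\otimes n}:U\in\U(d)\}$. Permuting tensor factors commutes with $U^{\otimes n}$, so $\mathcal B\subseteq(\mathcal A_n^d)'$.

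The second step is the reverse inclusion $(\mathcal A_n^d)'\subseteq\mathcal B$. The commutant of the permutations is $\mathrm{End}(\mathbb C^d)^{\otimes n}$ restricted to symmetric tensors, which is spanned by the operators $X^{\otimes n}$ with $X\in\mathcal L(\mathbb C^d)$; this is the polarization identity. Because the map $X\mapsto X^{\otimes n}$ is polynomial, its span over $\gl(d,\mathbb C)$ equals its span over all $X$, since $\gl(d,\mathbb C)$ is dense. Weyl's unitarian trick then gives the same span over $\U(d)$: $\U(d)$ is Zariski dense in $\gl(d,\mathbb C)$, so a linear functional vanishing on every $U^{\otimes n}$ with $U$ unitary also vanishes on every $X^{\otimes n}$. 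Hence $\mathcal B=(\mathcal A_n^d)'$.

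The third step applies the double commutant theorem. $\mathcal A_n^d$ is a finite-dimensional $*$-algebra, so it is semisimple, and Wedderburn--Artin (Lemma~\ref{lemma:schur}) decomposes it as $\bigoplus_\lambda\mathrm{End}(S_\lambda)$. Here $\lambda$ runs over the irreps of $\mathfrak S_n$ that actually occur in $V$. Writing $V\cong\bigoplus_\lambda S_\lambda\otimes\Hom_{\mathfrak S_n}(S_\lambda,V)$, the commutant acts as $\bigoplus_\lambda\openone\otimes\mathrm{End}(W_\lambda)$, where $W_\lambda:=\Hom_{\mathfrak S_n}(S_\lambda,V)$. Each $W_\lambda$ is therefore an irreducible $\mathcal B$-module, and so an irreducible $\U(d)$-module. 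The $W_\lambda$ are pairwise inequivalent because they carry distinct central idempotents of $\mathcal B$.

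The final step identifies which $\lambda$ occur, and I expect this to be the main obstacle. To show $S_\lambda$ appears in $V$ if and only if $\ell(\lambda)\le d$, I would use Young symmetrizers $c_\lambda$. Applying $c_\lambda$ to a basis tensor whose row-$i$ entries all equal $e_i$ gives a nonzero vector exactly when no column needs more than $d$ distinct indices, that is, when $\ell(\lambda)\le d$. When $\ell(\lambda)>d$, column antisymmetrization kills every tensor, since $\wedge^{k}\mathbb C^d=0$ for $k>d$.

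It then remains to check that $W_\lambda$ has highest weight $\lambda$, which justifies the labeling in Definition~\ref{def:partitions_and_young_diagrams}. The vector $c_\lambda(e_1^{\otimes\lambda_1}\otimes\cdots)$ is a weight-$\lambda$ vector annihilated by the raising operators. Finally, the restriction of $U^{\otimes n}$ to the block is $U_\lambda\otimes\openone_{S_\lambda}$ by construction.
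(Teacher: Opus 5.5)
Your proposal is correct, and it is essentially the standard double-commutant proof from Goodman--Wallach (polarization plus the unitarian trick to get $(\mathcal A_n^d)'=\mathrm{span}\{U^{\otimes n}\}$, then Young symmetrizers to identify which $\lambda$ occur and their highest weights); the paper gives no argument of its own and simply cites that source. The one step you assert without justification, that $c_\lambda(e_1^{\otimes\lambda_1}\otimes\cdots)$ is killed by the raising operators, follows because column antisymmetrization forces every weight $\mu$ of $c_\lambda(\mathbb C^d)^{\otimes n}$ to satisfy $\mu_1+\cdots+\mu_k\le\lambda_1+\cdots+\lambda_k$ for all $k$, so $\lambda+e_i-e_j$ with $i<j$ cannot be a weight.
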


\begin{figure}[tbp!]
\centering
\subcaptionbox{\label{fig:walled_brauer_example}}{
\includegraphics[height=2.4cm]{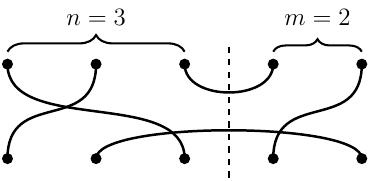}
}
\hspace{0.08\linewidth}
\subcaptionbox{\label{fig:A22_generators}}{
\includegraphics[height=2.4cm]{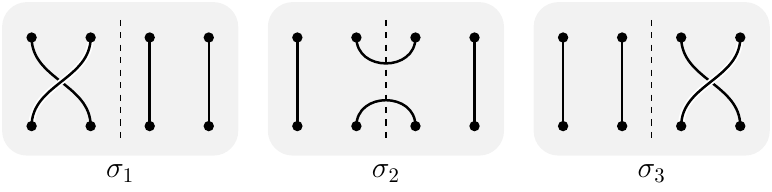}
}

\caption{(a) The diagram of an element in $\mathcal{B}_{3, 2}^d$. (b) The diagrams of the generators of $\mathcal{A}_{2, 2}^d$.}
\label{fig:joint_example}
\end{figure}

\begin{definition}[Walled Brauer algebra and the associated matrix algebra \cite{grinko2025mixed}]
The walled Brauer algebra $\mathcal{B}_{n, m}^d$ is a finite-dimensional associative algebra. Graphically, each basis element is represented as a diagram on two rows, each containing two groups of nodes (the first with $n$ nodes and the second with $m$ nodes) separated by a vertical wall. See \autoref{fig:walled_brauer_example} for an illustrative example. All nodes are connected in pairs, and all paired nodes are either on the same side of the wall or in different rows. 

The generators of $\mathcal{B}_{n, m}^d$ are given by the SWAP operation between node $i$ and node $i + 1$ for $i \in [n-2]$, and a cross-wall contraction that connects node $n$ and node $n + 1$.
Every basis element in $\mathcal{B}_{n, m}^d$ is formed by a product of this collection of generators. 
\par Extending Defintion \ref{def:perm_matrix_algebra} on $\mathcal{B}_{n, m}^d$ results in a mixed tensor representation $\kappa_{n, m}^d: \mathcal{B}_{n, m}^d \to \mathcal{L}((\mathbb{C}^{d})^{\otimes n} \otimes (\mathbb{C}^d)^{\otimes m})$, and we refer readers to \cite[Section 3.3]{grinko2025mixed} for its defining rules. Notably, $[\mathcal{A}_{n, m}^d, U^{\otimes n} \otimes {U^*}^{\otimes m}] = 0$ for all $U \in \U(d)$. The matrix algebra $\mathcal{A}_{n, m}^d := \kappa_{n, m}^d(\mathcal{B}_{n, m}^d)$ serves as a natural extension of $\mathcal{A}_{n}^d$ defined in Definition \ref{def:perm_matrix_algebra}. When $d \geq n + m$, $\mathcal{A}_{n, m}^d \cong \mathcal{B}_{n, m}^d$.
\end{definition}

\begin{definition}[Mixed Young diagrams, and mixed Schur-Weyl duality \cite{Benkart1994}] \label{def:mixed_shur_weyl}
    The irreps of $\mathcal{A}_{n, m}^d$ are indexed by the mixed Young diagrams. Specifically, every such diagram is a tuple $\lambda = (\lambda_{\ell}, \lambda_r)$, where
    $$
    \widehat{\mathcal{A}_{n, m}^d} = \{ (\lambda_{\ell}, \lambda_r):~0 \leq k \leq \min\{n, m\}, \lambda_{\ell}\vdash n-k, \lambda_{r} \vdash m - k, \ell(\lambda_{\ell}) + \ell(\lambda_r) \leq d \} =: \Y_{n, m}^d.
    $$ 
    Here, $k$ is called the number of contractions, or graphically, the number of wires connecting the nodes on the side. An analogous duality that extends \lref{lemma:schur_weyl}, or the mixed Schur-Weyl duality, reads
    $$
    (\mathbb{C}^{d})^{\otimes n} \otimes (\mathbb{C}^{d})^{\otimes m} \cong \bigoplus_{\lambda \in \widehat{\mathcal{A}_{n, m}^d} } W_{\lambda} \otimes S_{\lambda} = \bigoplus_{\lambda \in \Y_{n, m}^d}  W_{\lambda} \otimes S_{\lambda}, \quad \forall U \in \U(d),~U^{\otimes n} \otimes {U^*}^{\otimes m} \cong \bigoplus_{\lambda \in \Y_{n, m}^d}  U_{\lambda} \otimes \openone_{S_{\lambda}}.
    $$
    Additionally, there is an isomorphism between each mixed diagram $\lambda = (\lambda_{\ell}, \lambda_r)$ and a highest weight $\lambda'$, defined by $\lambda' = (\lambda_{\ell, j} - \lambda_{r, d - j + 1})_{j=1}^d$. They introduce the same irrep for $\U(d)$. 
\end{definition}

\begin{corollary} \label{corollary:expression_reformulation_adjoint}
    If we replace the right translation $\mathcal{R}(g)$ in Lemma \ref{lemma:schur} with the adjoint action $\mathcal{T}_{m, n}(g): x \mapsto \mathcal{R}(g)^{\otimes m} x \mathcal{R}(g^{-1})^{\otimes n}$, the isotropic projection takes a similar form, and we can obtain a similar decomposition result to that in Corollary \ref{corollary:covariance_translation_reformulation}: For a compact Lie group $G$ and a class function $f$,
    $$
    \mathcal{E}_f = \int_G f(g) \mathcal{T}_{m, n}(g) \dd g = \sum_{\lambda \in \widehat{G}} \left( \frac{1}{\dim V_{\lambda}} \int_{G} f(h) \chi_{\lambda}(h) \dd h  \right) \left( \dim V_{\lambda} \int_{G} \overline{\chi_{\lambda}(g)} \mathcal{T}_{m, n}(g) \dd g \right) = \sum_{\lambda \in \widehat{G}} c_{\lambda} \Pi_{\lambda},
    $$
    where the support of irreps $\lambda$ differs. When $G = \U(d)$, the irreps in Corollary \ref{corollary:covariance_translation_reformulation} are indexed by Young diagrams [cf. Definition \ref{def:partitions_and_young_diagrams}], while for the case with mixed tensor action $\mathcal{T}_{m, n}$, they are indexed by mixed Young diagrams [cf. Definition \ref{def:mixed_shur_weyl}].
\end{corollary}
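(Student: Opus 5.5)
The plan is to repeat the proof of Corollary~\ref{corollary:covariance_translation_reformulation} almost verbatim, once we check that the adjoint action $\mathcal T_{m,n}$ is itself a finite-dimensional unitary representation of $G$. The only genuinely new input is the identification of which irreps occur, and that comes from the mixed Schur--Weyl duality of Definition~\ref{def:mixed_shur_weyl}.

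\textbf{Step 1 (representation).} I would verify that $g\mapsto\mathcal T_{m,n}(g)$ is a group homomorphism into the linear maps on $\mathcal L(\mathcal H^{\otimes n},\mathcal H^{\otimes m})$. Indeed,
\[
\mathcal T_{m,n}(g)\mathcal T_{m,n}(h)(x)=\mathcal R(g)^{\otimes m}\mathcal R(h)^{\otimes m}\,x\,\mathcal R(h^{-1})^{\otimes n}\mathcal R(g^{-1})^{\otimes n}=\mathcal T_{m,n}(gh)(x).
\]
This map is unitary with respect to the Hilbert--Schmidt inner product, because $\mathcal R$ is unitary. Vectorizing via $\Ket{\cdot}$ identifies $\mathcal T_{m,n}(g)$ with $\mathcal R(g)^{\otimes m}\otimes\overline{\mathcal R(g)}^{\otimes n}$ acting on $\mathcal H^{\otimes m}\otimes(\mathcal H^{*})^{\otimes n}$.

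\textbf{Step 2 (projectors and expansion).} Schur's lemma (Lemma~\ref{lemma:schur}) and the Schur orthogonality relations hold for any finite-dimensional unitary representation of a compact group. Hence
\[
\Pi_\lambda=\dim V_\lambda\int_G\overline{\chi_\lambda(g)}\,\mathcal T_{m,n}(g)\,\dd g
\]
is the isotypic projector of this representation, and $\sum_\lambda\Pi_\lambda$ is the identity. Next I expand the class function $f$ by Peter--Weyl (Lemma~\ref{lemma:peter_weyl}), re-indexing the sum over duals $\lambda^*$ and using $\chi_{\lambda^*}=\overline{\chi_\lambda}$. This gives
\[
\langle f,\chi_{\lambda^*}\rangle=\int_G f(h)\chi_\lambda(h)\,\dd h .
\]
Substituting the expansion into $\int_G f(g)\mathcal T_{m,n}(g)\,\dd g$ then yields $c_\lambda\Pi_\lambda$ term by term, exactly as in the displayed chain of equalities.

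\textbf{Step 3 (support of the sum).} Only finitely many $\lambda$ have $\Pi_\lambda\neq 0$, because the carrier space is finite-dimensional. For $G=\U(d)$, Definition~\ref{def:mixed_shur_weyl} applied to $U^{\otimes m}\otimes {U^*}^{\otimes n}$ shows that these $\lambda$ are the mixed Young diagrams in $\Y_{m,n}^d$, via the highest weight $\lambda'$. This contrasts with the ordinary Young diagrams in $\Y_m^d$ from Lemma~\ref{lemma:schur_weyl} that arise for pure right translation.

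\textbf{Main obstacle.} The one analytic point is justifying the exchange of the Peter--Weyl series, which converges only in $\ell^2(G)$, with the operator-valued integral. I would handle it by noting that each matrix entry of $\mathcal T_{m,n}(g)$ is a continuous, hence $\ell^2$, function on $G$. The map $f\mapsto\int_G f\,\mathcal T_{m,n}\,\dd g$ is therefore a continuous linear map from $\ell^2(G)$ into a finite-dimensional space, so it commutes with $\ell^2$-convergent sums. Moreover, by Schur orthogonality all but finitely many terms of the series vanish after integration, so the interchange reduces to a finite sum.
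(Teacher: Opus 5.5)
Your proposal is correct and follows essentially the paper's route. The paper reruns the Peter--Weyl expansion and dual re-indexing of Corollary~\ref{corollary:covariance_translation_reformulation} with $\mathcal T_{m,n}$ in place of $\mathcal R$, and reads off the support from the mixed Schur--Weyl duality of Definition~\ref{def:mixed_shur_weyl}. Your extra checks (that $\mathcal T_{m,n}$ is a unitary representation, and the continuity argument for interchanging the $\ell^2$-convergent character series with the integral) fill in details the paper leaves implicit, and your tacit reading of $\mathcal R$ as a finite-dimensional unitary representation (the defining one for $\U(d)$, as used in \eqref{eqn:channel_reformulation}) is the reading Step~3 needs for its finiteness and mixed-Young-diagram claims.
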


\begin{lemma}[Dimension formula for Weyl and Specht modules, \cite{goodman2009symmetry}]
\label{lemma:dimension_formulas}
    For a highest weight $\lambda \in \mathbb{Z}^d$, the dimension of the Weyl module $W_{\lambda}$ is given by
    $
   d_{\lambda} = \dim W_{\lambda} = \prod_{1 \leq i < j \leq d} \frac{\lambda_i - \lambda_j + j - i}{j - i}
    $, and when $\lambda \vdash n$ is a partition, the dimension of the Specht module $S_{\lambda}$ is given by $s_{\lambda} = \dim S_{\lambda} = \frac{n!}{\prod_{(i, j) \in \lambda} h_{\lambda}(i, j)}$, where the hook length $h_{\lambda}(i, j) = (\text{number of cells to the right of $\lambda(i, j)$}) + (\text{number of cells below $\lambda(i, j)$}) + 1$.
\end{lemma}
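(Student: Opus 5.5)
The statement is the dimension formula for Weyl and Specht modules, which I would derive from character theory rather than from scratch combinatorics.

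\emph{Weyl module dimension.} I would start from the Weyl character formula for $\U(d)$, restricted to the maximal torus $t=\diag(x_1,\dots,x_d)$:
\[
\chi_\lambda(t)=\frac{\det\bigl(x_i^{\lambda_j+d-j}\bigr)_{i,j}}{\det\bigl(x_i^{d-j}\bigr)_{i,j}} .
\]
Then $d_\lambda=\chi_\lambda(\openone)$. Both numerator and denominator vanish at $x_i=1$, so I would specialize along the principal line $x_i=q^{d-i}$ and let $q\to1$. With $\ell_j:=\lambda_j+d-j$, both determinants become Vandermonde determinants:
\[
\det\bigl(q^{(d-i)\ell_j}\bigr)=\prod_{i<j}\bigl(q^{\ell_i}-q^{\ell_j}\bigr).
\]
Taking the ratio gives
\[
\prod_{i<j}\frac{q^{\ell_i}-q^{\ell_j}}{q^{d-i}-q^{d-j}} .
\]
In the limit $q\to1$, each factor tends to $(\ell_i-\ell_j)/(j-i)$, and $\ell_i-\ell_j=\lambda_i-\lambda_j+j-i$. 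This yields the stated product formula.

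For a general highest weight $\lambda\in\mathbb Z^d$ with negative entries, I would note that tensoring with $\det^{k}$ shifts every $\lambda_i$ by $k$. This leaves both the differences $\lambda_i-\lambda_j$ and the dimension unchanged, so it suffices to treat partitions.

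\emph{Specht module dimension.} Here I would use the Schur--Weyl duality of Lemma~\ref{lemma:schur_weyl}. It gives $s_\lambda$ as the multiplicity of $W_\lambda$ in $(\C^d)^{\otimes n}$, i.e.\ the coefficient extracted from $(x_1+\dots+x_d)^n=\sum_\lambda s_\lambda\, s_\lambda(x)$. Equivalently, $s_\lambda$ is the coefficient of $x^{\ell}$ in $\Delta(x)\,(x_1+\cdots+x_d)^n$, taking $d\ge\ell(\lambda)$. Expanding the Vandermonde determinant and the multinomial gives Frobenius' formula
\[
s_\lambda=\frac{n!}{\prod_i \ell_i!}\prod_{i<j}(\ell_i-\ell_j).
\]

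The last step converts this to the hook formula. The identity needed is that, for row $i$, the hook lengths in that row are exactly $\{1,\dots,\ell_i\}\setminus\{\ell_i-\ell_j: j>i\}$. Multiplying over all rows turns the Frobenius expression into $n!/\prod h_\lambda(i,j)$.

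I expect this set identity on hook lengths to be the main obstacle. I would prove it by induction on the number of rows, or by the standard argument that the numbers $\ell_i-\ell_j$ fill exactly the gaps left by the hooks. Everything else is routine.

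Since the paper cites this lemma from \cite{goodman2009symmetry}, I would present the argument only as a sketch and defer the details to that reference.
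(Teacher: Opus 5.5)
Your sketch is correct. The principal specialization $x_i=q^{d-i}$ of the bialternant formula gives $\prod_{i<j}(\ell_i-\ell_j)/(j-i)$ in the limit $q\to1$. Extracting the coefficient of $x^{\ell}$ in $\Delta(x)(x_1+\cdots+x_d)^n$ gives $n!\det[1/(\ell_i-d+j)!]$. Writing $1/(\ell_i-k)!=\ell_i(\ell_i-1)\cdots(\ell_i-k+1)/\ell_i!$ and doing column operations turns this into Frobenius' formula $n!\prod_{i<j}(\ell_i-\ell_j)/\prod_i\ell_i!$. The row-wise identity $\{h_\lambda(i,c)\}_c\sqcup\{\ell_i-\ell_j\}_{j>i}=\{1,\dots,\ell_i\}$ then converts Frobenius' formula into the hook formula.

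The paper gives no proof of its own; it states the lemma as a standard fact and cites Goodman--Wallach. Your argument supplies a standard derivation behind that citation. The one step you leave implicit is the falling-factorial determinant evaluation, and it is routine. Watch the notational clash between $s_\lambda=\dim S_\lambda$ and the Schur polynomial $s_\lambda(x)$ in your expansion of $(x_1+\cdots+x_d)^n$.
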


\begin{lemma}[{\cite[Corollary 2.3.1]{koike1989decomposition}}, reformulated]
\label{lemma:mixed_littlewood_richardson}
    For a mixed Young diagram $\nu = (\nu_{\ell}, \nu_{r})$ and two highest weights $\lambda, \mu$, we have $C_{\lambda, \nu}^{\mu} := \dim \Hom_{\U(d)}(W_{\mu}, W_{\lambda} \otimes W_{\nu}) = \int_G \chi_{\lambda}(g) \chi_{\nu}(g) \overline{\chi_{\mu}(g)} \dd g = \sum_{\gamma} C_{\nu_{\ell}, \gamma}^{\lambda} C_{\nu_r, \gamma}^{\mu}$.
\end{lemma}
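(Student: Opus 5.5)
The plan is to prove the identity at the level of characters, reducing everything to ordinary Littlewood--Richardson coefficients (\lref{lemma:littlewood_richardson}) and then collapsing the resulting sums; the main input is Koike's expansion of mixed characters into products of polynomial and dual characters. First, since $C_{\lambda,\nu}^{\mu}=\dim\Hom_{\U(d)}(W_\mu,W_\lambda\otimes W_\nu)$ and characters of irreps are orthonormal in $\ell^2(\U(d))$ (Schur orthogonality in \lref{lemma:schur}), the integral form follows directly from the orthonormality of irreducible characters (\lref{lemma:peter_weyl}). So the content lies in the last equality. I would attack it through the universal character expansion of the mixed irrep $\nu=(\nu_\ell,\nu_r)$ of Definition~\ref{def:mixed_shur_weyl}:
\[
\chi_{(\nu_\ell,\nu_r)}=\sum_{\alpha,\beta,\delta}(-1)^{|\delta|}\,C^{\nu_\ell}_{\alpha,\delta}\,C^{\nu_r}_{\beta,\delta'}\,\chi_\alpha\,\overline{\chi_\beta},
\]
where $\delta'$ is the conjugate partition. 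This is Koike's formula, valid as an identity of universal characters and specializing to $\U(d)$ when $\ell(\nu_\ell)+\ell(\nu_r)\le d$.

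Second, I would substitute this expansion into $\int\chi_\lambda\chi_\nu\overline{\chi_\mu}\,\dd g$. Each product term $\chi_\lambda\chi_\alpha\overline{\chi_\beta\chi_\mu}$ integrates, by two applications of \lref{lemma:littlewood_richardson}, to $\sum_\kappa C^{\kappa}_{\lambda,\alpha}C^{\kappa}_{\mu,\beta}$. Third, I would resum over $\alpha,\beta,\delta$ using associativity of LR coefficients, i.e.\ the skew-Schur identities $\sum_\alpha C^{\nu_\ell}_{\alpha\delta}C^\kappa_{\lambda\alpha}=\langle s_{\nu_\ell/\delta}s_\lambda,s_\kappa\rangle$. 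The signed sum over $\delta$ with $\delta'$ on the other side is exactly the inverse of the Cauchy-type kernel $\sum_\delta s_{\delta}(x)s_{\delta}(y)$. Hence the alternating sum telescopes: only the term in which the intermediate diagram is shared directly survives, leaving a single sum $\sum_\gamma$ of a product of two ordinary LR coefficients, one involving $\nu_\ell$ and one involving $\nu_r$. This is the asserted form, where $\gamma$ plays the role of the common intermediate highest weight (equivalently, $W_\lambda\otimes W_{\nu_\ell}$ and $W_\mu\otimes W_{\nu_r}$ are compared via a shared component).

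The hardest step is this cancellation, together with the bookkeeping of conventions. The placement of $\lambda$ versus $\mu$ in the two factors depends on whether the mixed module sits inside $W_{\nu_\ell}\otimes W_{\nu_r}^*$ or its dual. I would first verify the small case $\nu=(\ybox,\ybox)$, the adjoint, to fix conventions before doing the general resummation. A second difficulty is finite $d$: Koike's identity holds for universal characters, and specializing below the stable range requires modification rules. I would restrict to the regime $d\ge\ell(\lambda)+\ell(\nu_\ell)+\ell(\nu_r)+\ell(\mu)$ actually used later in the paper, where no modification occurs, and cite \cite{koike1989decomposition} for the general statement.
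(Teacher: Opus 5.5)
Your route (expand $\chi_{(\nu_\ell,\nu_r)}$ by Koike's universal-character formula, integrate term by term with ordinary Littlewood--Richardson coefficients, then resum) is a legitimate and more self-contained alternative to the paper, which only cites Koike's Corollary~2.3.1. However, the resummation does not land where you say it does. The step that does the work is the skew Cauchy identity $\sum_\kappa s_{\kappa/\lambda}(x)s_{\kappa/\mu}(y)=\prod_{i,j}(1-x_iy_j)^{-1}\sum_\gamma s_{\mu/\gamma}(x)s_{\lambda/\gamma}(y)$, not associativity. Combined with $\sum_\delta(-1)^{|\delta|}s_\delta(x)s_{\delta'}(y)=\prod_{i,j}(1-x_iy_j)$, it gives
\[
\sum_{\nu_\ell,\nu_r}\Bigl(\sum_{\alpha,\beta,\delta,\kappa}(-1)^{|\delta|}C^{\nu_\ell}_{\alpha,\delta}C^{\nu_r}_{\beta,\delta'}C^{\kappa}_{\lambda,\alpha}C^{\kappa}_{\mu,\beta}\Bigr)s_{\nu_\ell}(x)\,s_{\nu_r}(y)=\sum_\gamma s_{\mu/\gamma}(x)\,s_{\lambda/\gamma}(y).
\]
So your argument proves $C^{\mu}_{\lambda,\nu}=\sum_\gamma C^{\lambda}_{\nu_r,\gamma}C^{\mu}_{\nu_\ell,\gamma}$, with $\nu_\ell$ and $\nu_r$ interchanged relative to the statement. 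This placement is not a free convention. Definition~\ref{def:mixed_shur_weyl} makes $\nu_\ell$ the covariant part (highest weight $\nu_{\ell,j}-\nu_{r,d-j+1}$), and with that convention the printed identity is false for asymmetric $\nu$. For $\nu=((1),\varnothing)$ one has $W_\nu=\C^d$, so $C^{\mu}_{\lambda,\nu}=1$ exactly when $\mu$ is $\lambda$ plus one box; the printed sum equals $C^{\lambda}_{(1),\mu}$, which is nonzero only when $\lambda$ is $\mu$ plus one box. Likewise, $\nu=((2),(1,1))$ with $\lambda=(1,1)$, $\mu=(2)$ has true value $1$ but printed value $0$. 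The printed right-hand side is really the multiplicity of $W_\nu$ in $W_\lambda\otimes W_\mu^*$, i.e.\ $\int\chi_\lambda\overline{\chi_\mu}\,\overline{\chi_\nu}\,\dd g$. Since $\overline{\chi_{(\nu_\ell,\nu_r)}}=\chi_{(\nu_r,\nu_\ell)}$, a conjugation was lost in the reformulation. Your planned check on $\nu=((1),(1))$ cannot detect any of this, because $\nu_\ell=\nu_r$ there. The swap is harmless for the only downstream use, Lemma~\ref{lemma:plancherel_coefficient_evaluation_rule}, whose sum is symmetric, but you cannot claim to have reached ``the asserted form''.

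The finite-$d$ issue also cannot be delegated to a citation, because the unmodified identity is simply false outside a stable range. Take $\lambda=\mu=(1^d)$ and $\nu=((1),(1))$. Then $W_\lambda\otimes W_\nu\cong\det\otimes\mathfrak{sl}_d$ is irreducible with highest weight $(2,1,\dots,1,0)$, so $C^{\mu}_{\lambda,\nu}=0$, while the $\gamma$-sum equals $1$ (take $\gamma=(1^{d-1})$). Koike's finite-$d$ results involve modification rules and do not yield this formula. Your restriction is therefore necessary. In fact your steps only use $\ell(\nu_\ell)+\ell(\nu_r)\le d$ (for specializing the universal character) and $\ell(\lambda)+\ell(\nu_\ell)\le d$ (so that the $\kappa$-sums are unrestricted). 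But it is not ``the regime actually used later in the paper''. Lemma~\ref{lemma:plancherel_coefficient_evaluation_rule} applies the identity to all $\lambda,\mu\vdash_d n$ with $n\le d$, and at $n=d$ this includes exactly the counterexample above. What you can honestly prove is the swapped identity under these length hypotheses, and you should say explicitly that it does not cover the edge cases invoked later.
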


\begin{lemma}[{\cite[Corollary 7.1.7]{goodman2009symmetry}}]
\label{lemma:character_formula}
    Let $\varrho = ( \frac{d + 1 - 2j }{2} )_{j=1}^d$ be the Weyl vector of $\U(d)$ [cf. \cite[Lemma 3.1.21]{goodman2009symmetry}], for weights $\lambda, \mu$ and $\gamma$, then
    \begin{equation}
    \label{eqn:LR_coefficient_and_weight_multiplicity}
    C_{\lambda, \mu}^{\gamma} = \sum_{t \in \mathfrak{S}_d} \sgn(t) \cdot m_{\mu}\left( \gamma + \varrho - t(\lambda + \varrho) \right).
    \end{equation}
\end{lemma}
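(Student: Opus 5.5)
The plan is to derive \eqref{eqn:LR_coefficient_and_weight_multiplicity} from the Weyl character formula by comparing coefficients of formal exponentials on the maximal torus; this is the Brauer--Klimyk argument. For $\alpha\in\mathbb{Z}^d+\varrho$ write $e^{\alpha}$ for the function $\diag(z_1,\dots,z_d)\mapsto\prod_j z_j^{\alpha_j}$ (half-integer powers are harmless, since we only multiply through by the common factor $\prod_j z_j^{(d-1)/2}$). Define the alternant
\[
A_\alpha:=\sum_{t\in\mathfrak S_d}\sgn(t)\,e^{t(\alpha)} .
\]

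The first step records two standard facts.
\begin{itemize}
\item[(a)] By Weyl's character formula, for every highest weight $\lambda$ we have $A_\varrho\,\chi_\lambda=A_{\lambda+\varrho}$.
\item[(b)] By the weight-space decomposition in Definition~\ref{def:weight_space}, $\chi_\mu=\sum_{w}m_\mu(w)\,e^{w}$.
\end{itemize}
Because class functions on $\U(d)$ are determined by their restriction to the torus, Lemma~\ref{lemma:littlewood_richardson} together with character orthogonality gives $\chi_\lambda\chi_\mu=\sum_{\gamma'}C^{\gamma'}_{\lambda,\mu}\chi_{\gamma'}$, with $\gamma'$ ranging over highest weights. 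Multiplying by $A_\varrho$ and applying (a) to both sides yields the identity
\begin{equation*}
A_{\lambda+\varrho}\,\chi_\mu=\sum_{\gamma'}C^{\gamma'}_{\lambda,\mu}\,A_{\gamma'+\varrho}.
\end{equation*}

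The second step extracts the coefficient of the single monomial $e^{\gamma+\varrho}$ on each side. This uses linear independence of distinct characters $e^{\alpha}$ of the torus.
\begin{itemize}
\item \emph{Left-hand side.} Expanding with (b) gives $\sum_{t}\sum_{w}\sgn(t)\,m_\mu(w)\,e^{t(\lambda+\varrho)+w}$. The coefficient of $e^{\gamma+\varrho}$ collects exactly the pairs with $w=\gamma+\varrho-t(\lambda+\varrho)$. It therefore equals $\sum_t\sgn(t)\,m_\mu(\gamma+\varrho-t(\lambda+\varrho))$, which is the right-hand side of \eqref{eqn:LR_coefficient_and_weight_multiplicity}.
\item \emph{Right-hand side.} It remains to show that $e^{\gamma+\varrho}$ appears in $\sum_{\gamma'}C^{\gamma'}_{\lambda,\mu}A_{\gamma'+\varrho}$ with coefficient exactly $C^{\gamma}_{\lambda,\mu}$.
\end{itemize}

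This last point is the main obstacle, though a mild one. Since $\gamma'$ is dominant and $\varrho$ has strictly decreasing entries, $\gamma'+\varrho$ is strictly decreasing. Hence $t(\gamma'+\varrho)$ is strictly decreasing only when $t=\mathrm{id}$. Suppose $t(\gamma'+\varrho)=\gamma+\varrho$. The right side is strictly decreasing, so $t=\mathrm{id}$ and then $\gamma'=\gamma$. Moreover, all entries of a strictly decreasing vector are distinct, so $\mathfrak S_d$ acts freely on it and no cancellation occurs inside $A_{\gamma+\varrho}$. The coefficient of $e^{\gamma+\varrho}$ in $A_{\gamma+\varrho}$ is therefore exactly $1$. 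Equating the two coefficients proves the lemma for dominant $\gamma$.

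For a non-dominant $\gamma$ the left side $C^{\gamma}_{\lambda,\mu}$ is zero by convention. In the intended application only dominant $\gamma$ are used, so I would simply remark that the identity is asserted for highest weights $\gamma$. I would also note that $m_\mu$ is $\mathfrak S_d$-invariant (Definition~\ref{def:weight_space}). This invariance is not needed in the argument above, but it explains why the formula can equivalently be written with $t$ acting on the other argument.
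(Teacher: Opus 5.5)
Your proposal is correct: it is the standard Brauer--Klimyk argument (multiply $\chi_\lambda\chi_\mu=\sum_{\gamma'}C^{\gamma'}_{\lambda,\mu}\chi_{\gamma'}$ by the Weyl denominator $A_\varrho$, then compare coefficients of $e^{\gamma+\varrho}$ using strict dominance of $\gamma'+\varrho$), which is the usual proof of the Goodman--Wallach result that the paper cites without giving a proof of its own. Your restriction to dominant $\gamma$ is actually necessary rather than a convenience: if $\gamma+\varrho=s(\gamma''+\varrho)$ with $\gamma''$ dominant and $s\neq\mathrm{id}$, the same coefficient comparison shows the right-hand side equals $\sgn(s)\,C^{\gamma''}_{\lambda,\mu}$, which need not vanish (for instance $d=2$, $\lambda=\mu=0$, $\gamma=(-1,1)$ gives $-1$).
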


\begin{corollary}[{\cite[Exercise 25.33]{Fulton2004}}]
    For highest weights $\lambda, \mu \in \widehat{\U(d)}$, if $\lambda + w \in \widehat{\U(d)}$ for every $w \in \mathbb{Z}^d$ where $m_{\mu}(w) \neq 0$, then for all $\gamma \in \widehat{\U(d)}$, $C_{\lambda, \mu}^{\gamma} = m_{\mu}(\gamma - \lambda)$.
\end{corollary}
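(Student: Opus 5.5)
The plan is to start from the Brauer–Klimyk type formula in \lref{lemma:character_formula},
\[
C_{\lambda,\mu}^{\gamma}=\sum_{t\in\mathfrak S_d}\sgn(t)\, m_\mu\bigl(\gamma+\varrho-t(\lambda+\varrho)\bigr),
\]
and show that under the hypothesis only the identity permutation contributes. The identity term is $m_\mu(\gamma+\varrho-\lambda-\varrho)=m_\mu(\gamma-\lambda)$, with sign $+1$. That is exactly the claimed value.

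\emph{Step 1: reflect the weight back by $t^{-1}$.} Fix $t\neq \mathrm{id}$ and suppose, for contradiction, that $w:=\gamma+\varrho-t(\lambda+\varrho)$ has $m_\mu(w)\neq 0$. The Weyl-group action $t(v)=(v_{t^{-1}(j)})_j$ is a linear coordinate permutation. By the orbit invariance of weight multiplicities recorded in Definition~\ref{def:weight_space}, $w':=t^{-1}(w)$ is again a weight of $W_\mu$ with $m_\mu(w')=m_\mu(w)\neq 0$. Applying $t^{-1}$ to the definition of $w$ and using linearity gives
\[
\lambda+w'+\varrho \;=\; t^{-1}(\gamma+\varrho).
\]

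\emph{Step 2: compare strict dominance on the two sides.} By hypothesis $\lambda+w'$ is a highest weight, so its entries are weakly decreasing. The Weyl vector $\varrho=\bigl(\tfrac{d+1-2j}{2}\bigr)_j$ has entries that strictly decrease by exactly $1$. Hence $\lambda+w'+\varrho$ has strictly decreasing entries.

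On the right-hand side, $\gamma$ is dominant, so $\gamma+\varrho$ also has strictly decreasing, and in particular pairwise distinct, entries. The vector $t^{-1}(\gamma+\varrho)$ is a coordinate permutation of $\gamma+\varrho$. A permutation of a vector with distinct entries is strictly decreasing only if the permutation is the identity. Since $t\neq\mathrm{id}$, the two sides cannot be equal, which is a contradiction. Therefore $m_\mu(w)=0$ for every $t\neq\mathrm{id}$, and the alternating sum collapses to $m_\mu(\gamma-\lambda)$.

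\emph{Main obstacle.} The argument is short. The only delicate point is bookkeeping in Step 1: I must apply $t^{-1}$, not $t$, and check that the permutation convention $t(v)=(v_{t^{-1}(j)})$ acts linearly on $\varrho$, $\lambda$ and $\gamma$ alike. I also need the hypothesis ``$\lambda+w$ dominant for every weight $w$'' to be applied to the reflected weight $w'$. That step is legitimate precisely because the set of weights of $W_\mu$ is $\mathfrak S_d$-stable. If the integrality or ordering of $\varrho$ looks problematic (its entries are half-integers when $d$ is even), I would note that only the strict decrease of its entries is used, so half-integrality is harmless.
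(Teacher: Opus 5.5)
Your proof is correct. Applying $t^{-1}$ to $w=\gamma+\varrho-t(\lambda+\varrho)$ gives $\lambda+w'+\varrho=t^{-1}(\gamma+\varrho)$, whose left side is strictly decreasing by the hypothesis (applied to the reflected weight $w'$) while the right side is not for $t\neq\mathrm{id}$, so only the identity term $m_\mu(\gamma-\lambda)$ of \lref{lemma:character_formula} survives; the paper itself only cites Fulton--Harris here, but this is the same mechanism it uses for Corollary~\ref{corollary:no_multiplicity}, where $t^{-1}(\mu+\varrho)=\lambda+\varrho+t^{-1}(w_t)$ is instead ruled out by a gap estimate at a local inversion of $t$.
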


\begin{corollary}
\label{corollary:no_multiplicity}
    For highest weights $\lambda, \mu, \nu \in \mathbb{Z}^d$, define the minimum adjacent gaps $\gap(\lambda) = \min_{j \in [d-1]} (\lambda_{j} - \lambda_{j+1})$ and $\gap(\mu) = \min_{j \in [d-1]} (\mu_{j} - \mu_{j+1})$. Suppose $(\nu)_{1} - (\nu)_{d} \leq \gap(\lambda) + \gap(\mu) + 1$, then $C_{\lambda, \nu}^{\mu} = m_{\nu}(\mu - \lambda)$.
\end{corollary}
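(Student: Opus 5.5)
The plan is to start from the Brauer–Klimyk-type formula in Lemma~\ref{lemma:character_formula}:
\[
C_{\lambda,\nu}^{\mu}=\sum_{t\in\mathfrak S_d}\sgn(t)\, m_\nu\bigl(\mu+\varrho-t(\lambda+\varrho)\bigr).
\]
The identity term is exactly $m_\nu(\mu-\lambda)$. So it suffices to show that every term with $t\neq \mathrm{id}$ vanishes, i.e.\ that $w_t:=\mu+\varrho-t(\lambda+\varrho)$ is not a weight of $W_\nu$ whenever $t\neq\mathrm{id}$.

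\emph{Step 1 (a bound on weights of $W_\nu$).} Every weight $w$ of $W_\nu$ lies in the Weyl orbit of some dominant weight $\preceq\nu$ in dominance order. Equivalently, $w$ lies in the convex hull of $\mathsf O(\nu)$. Hence every coordinate satisfies $\nu_d\le w_k\le \nu_1$. In particular,
\[
|w_k-w_{k+1}|\le \nu_1-\nu_d \qquad\text{for all } k.
\]
This is the only representation-theoretic input needed beyond the paper's lemmas. I would cite it as standard, e.g.\ from \cite{goodman2009symmetry}.

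\emph{Step 2 (a large adjacent gap for $t\neq\mathrm{id}$).} Set $a:=\lambda+\varrho$. It is strictly decreasing, and $a_i-a_j\ge (j-i)(\gap(\lambda)+1)$ for $i<j$. Since $t\neq\mathrm{id}$, the permutation $t^{-1}$ has a descent: some $k$ with $t^{-1}(k)>t^{-1}(k+1)$. Using $t(a)_k=a_{t^{-1}(k)}$, we get
\[
(w_t)_k-(w_t)_{k+1}=(\mu_k-\mu_{k+1}+1)+\bigl(a_{t^{-1}(k+1)}-a_{t^{-1}(k)}\bigr).
\]
The first bracket is at least $\gap(\mu)+1$. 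The second is a difference of $a$ at a smaller index minus a larger one, so it is at least $\gap(\lambda)+1$. Therefore
\[
(w_t)_k-(w_t)_{k+1}\ge \gap(\lambda)+\gap(\mu)+2>\nu_1-\nu_d,
\]
where the last inequality is the hypothesis. By Step 1, $w_t$ is not a weight of $W_\nu$, so $m_\nu(w_t)=0$.

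\emph{Conclusion.} Only the identity survives, which gives $C_{\lambda,\nu}^\mu=m_\nu(\mu-\lambda)$.

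The main point needing care is Step 1, and in particular its validity for a general highest weight $\nu\in\mathbb Z^d$ that may have negative entries, such as mixed diagrams. I would handle this by shifting $\nu$ by a multiple of $(1,\dots,1)$. The shift tensors by a power of the determinant and translates all weights uniformly. It therefore leaves $\nu_1-\nu_d$, the differences of weight coordinates, and the multiplicities $m_\nu(\mu-\lambda)$ (after the matching shift of $\mu$) unchanged. This reduces the claim to the polynomial case, where Step 1 is the usual dominance statement.
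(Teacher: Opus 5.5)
Your proposal is correct and follows essentially the same route as the paper: both expand $C_{\lambda,\nu}^{\mu}$ via Lemma~\ref{lemma:character_formula} and kill every non-identity term by exhibiting an adjacent coordinate gap of at least $\gap(\lambda)+\gap(\mu)+2$ in the would-be weight. The paper phrases this as a contradiction for $t^{-1}(w_t)$ at a descent of $t$, while you compute directly in $w_t$ at a descent of $t^{-1}$. Your Step~1 also states explicitly the bound $\nu_d\le w_k\le \nu_1$ on weights of $W_\nu$, which the paper uses only tacitly.
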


\begin{proof}[Proof of Corollary \ref{corollary:no_multiplicity}]
    For the equality to hold, it suffices to show that in \eref{eqn:LR_coefficient_and_weight_multiplicity} of \lref{lemma:character_formula}, every contributing $t \in \mathfrak{S}_d$ must satisfy 
    $$
   w_t := \mu + \varrho - t(\lambda + \varrho) \in \mathfrak{X}(W_{\nu}).
    $$
    Equivalently, we need $t^{-1}(\mu + \varrho) = \lambda + \varrho + t^{-1}(w_t)$. For a non-identity permutation $t$, there exists at least one local inversion that $t(j) \geq t(j +1) + 1$, so that
    \begin{align*}
    \left( t^{-1}(\mu + \varrho) \right)_j - \left( t^{-1}(\mu + \varrho) \right)_{j+1} &= \mu_{t(j)} + \varrho_{t(j)} - \mu_{t(j+1)} - \varrho_{t(j+1)} \\
    &= (\mu_{t(j)} - \mu_{t(j+1)}) + (\varrho_{t(j)} - \varrho_{t(j+1)}) \leq -\gap(\mu) - 1.
    \end{align*}
    Similarly, for the right-hand side, since $w_t \in W_{\nu}$, 
    \begin{align*}
    ( \lambda + \varrho + t^{-1}(w_t) )_j - ( \lambda + \varrho + t^{-1}(w_t) )_{j+1} &= (\lambda_{j} - \lambda_{j+1}) + (\varrho_{j} - \varrho_{j+1}) + \left((w_t)_{t(j)} - (w_t)_{t(j+1)}\right) \\
    &\geq \gap(\lambda) + 1 - ((\nu)_1 - (\nu)_d).
    \end{align*}
    Combining these inequalities, we obtain
    $$
    -\gap(\mu) - 1 \geq \gap(\lambda) + 1 - ((\nu)_1 - (\nu)_d) \implies (\nu)_1 - (\nu)_d \geq \gap(\lambda) + \gap(\mu) + 2,
    $$
    a contradiction. Therefore, the only surviving term in \eref{eqn:LR_coefficient_and_weight_multiplicity} is $t = (1)$, concluding the proof.
\end{proof}

\begin{figure}[tbp!]
    \centering
    \includegraphics[width=0.75\linewidth]{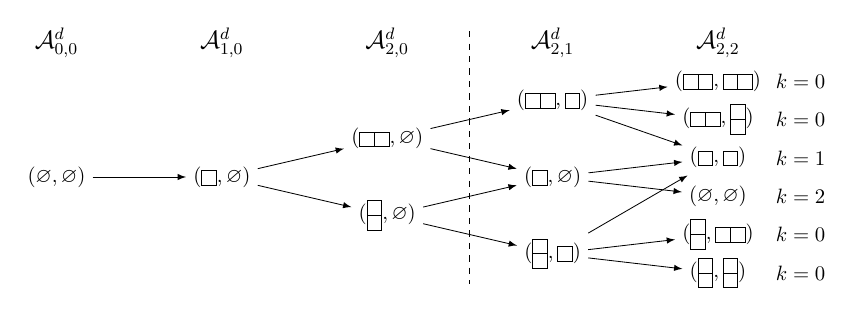}
    \caption{Constructing the diagrams in $\Y_{2, 2}^d$ determining the irreps of $\mathcal{A}_{2, 2}^d$.}
    \label{fig:Bratteli_2_2}
\end{figure}

\begin{example}[Mixed Young diagrams and generators of $\mathcal{A}_{2, 2}^d$]
\label{example:symmetric_subalgebra_A_2_2_d}
    The irreps of the matrix algebra $\mathcal{A}_{2, 2}^d$ are indexed by six mixed diagrams, as shown in \autoref{fig:Bratteli_2_2}. As remarked in \cite[Equation (3.10)]{grinko2025mixed}, the presentation of $\mathcal{A}_{2, 2}^d$ corresponds to an explicit matrix representation over the bipartite space $\mathcal{H}_A \otimes \mathcal{H}_B =\mathcal{H}_{A_1} \otimes \mathcal{H}_{A_2} \otimes \mathcal{H}_{B_1} \otimes \mathcal{H}_{B_2} \cong (\mathbb{C}^{d})^{\otimes 2} \otimes (\mathbb{C}^{d})^{\otimes 2}$ with local SWAPs $\sigma_1, \sigma_3$ and cross-wall contraction $\sigma_2$:
    $$
    \sigma_1 = \mathbb{F}_A, \quad \sigma_2 = \Omega_{A_2 B_1}, \quad \sigma_3 = \mathbb{F}_B.
    $$
    Their actions are graphically shown in \autoref{fig:A22_generators}.
    For the subalgebra of $\mathcal{A}_{2, 2}^d$ that is invariant under the action $x \mapsto \sigma_1 \sigma_3 x \sigma_3 \sigma_1$\footnote{
   Informally speaking, it is spanned by the standard $24$ basis elements of $\mathcal{A}_{2, 2}^d$ modulo simultaneous local swaps. 
    }, we can write its basis elements in a canonical form $P_{\pm} X Q_{\pm}$ where $P_{\pm} = \frac{\openone \pm \F_A}{2} \otimes \openone$ and $Q_{\pm} =\openone \otimes \frac{\openone \pm \F_B}{2}$ are projections onto the two sides of the wall, $X \in \{ \openone, \Delta, \Lambda\}$ where $\Delta = \Omega_{A_1 B_1} + \Omega_{A_2 B_2}$ and $\Lambda = \Omega_{A_1 B_1} \Omega_{A_2 B_2} = \Omega_{AB}$ capture the symmetric contractions of different types (singly/doubly).
\end{example}

\begin{example}[Multiplicity of weights associated with irreps of $\mathcal{A}_{2, 2}^d$]
\label{example_weight_multiplicities}
    For the sake of future clarity, we evaluate the multiplicity $m_{\nu}(w)$ and the counting of weight vectors $w$ for different orbits of $\mathfrak{X}(W_{\nu})$, where $\nu = \nu_1, \nu_2, \nu_3, \nu_4 \in \widehat{\mathcal{A}_{2, 2}^d}$. Here $\nu_1 = (\sybox, \sybox) \cong (1, 0, \dots, 0, -1)$, $\nu_2 = (\syrow, \syrow) \cong (2, 0, \dots, 0, -2)$, $\nu_3 = (\syrow, \sycol) \cong (2, 0, \dots, 0, -1, -1)$ (similar for $(\sycol, \syrow) \cong (1, 1, 0, \dots, 0, -2)$) [cf. \autoref{fig:Bratteli_2_2}], and $\nu_4 = (\sycol, \sycol) \cong (1, 1, 0, \dots, 0, -1, -1)$. By permutation invariance of multiplicity [cf. Definition \ref{def:weight_space}], it suffices to track the multiplicity for each orbit, and how many weights it contains.
    \begin{itemize}
        \item The non-zero weights of $\nu_1$ are roots of $\U(d)$, taking the form $e_i - e_j$ for $i \neq j \in [d]$. There are $d(d-1)$ many by choosing any ordered $i, j \in [d]$.

        \item The non-zero weights of $\nu_2$ composed of weights $e_i + e_j$ (from action on $\vee^2 {\mathbb{C}^d}$) for $i, j \in [d]$, and $-e_k - e_{\ell}$ (from action on $\vee^2 {\mathbb{C}^d}^*$) for $k, \ell \in [d]$. The following types of orbits are possible:
        \begin{enumerate}
            \item The orbit $2e_i - 2e_k$ for distinct $i , k$ with multiplicity $1$, formed only by $i = j$ and $k = \ell$. By choosing ordered $i \neq k$, there are $d(d-1)$ many weights.
            \item The orbit $\pm(2e_i - e_k - e_{\ell})$ for distinct $i, k, \ell$ with multiplicity $1$, formed by taking $i = j$ and unordered $k \neq \ell$ or $k = \ell$ and unordered $i \neq j$. There are thus $2 \times d \times \binom{d-1}{2} = d(d-1)(d-2)$ many weights.
            \item The orbit $e_i + e_j - e_{k} - e_{\ell}$ for distinct $i, j, k, \ell$ with multiplicity $1$, formed by taking unordered $i \neq j$ and unordered $k \neq \ell$. There are thus $\binom{d}{2} \times \binom{d-2}{2} = \frac{1}{4}d(d-1)(d-2)(d-3)$ many weights.
            \item The orbit $e_i - e_k$ for distinct $i, k$ with multiplicity $d - 1$, since $2e_i + (- e_i - e_k)$, $e_i + e_k + (- 2e_k)$ and $e_i + e_j + (- e_j - e_k)$ all give this weight, in a total of $1 + 1 + d - 2 = d$ ways, subtracting the multiplicity contributed by $\nu_1$, which is $1$. The total weights are $d(d-1)$ many by choosing any ordered pairs $i \neq k$.
        \end{enumerate}

        \item The non-zero weights of $\nu_3$ composed of weights $e_i + e_j$ (from action on $\vee^2 \C^d)$ for $i , j \in [d]$, and $-e_k - e_{\ell}$ (from action on $\wedge^2 {\C^d}^*$) for $k \neq \ell \in [d]$. The following types of orbits are possible:
        \begin{enumerate}
            \item The orbit $2e_i - e_k - e_\ell$ for distinct $i, k, \ell$ with multiplicity $1$, formed by taking $i = j$ and $k \neq \ell$. Choosing $i$ and unordered $k \neq \ell$, there are $d \times \binom{d-1}{2} = \frac{1}{2} d(d-1)(d-2)$ many weights.

            \item The orbit $e_i + e_j - e_k - e_{\ell}$ for distinct $i, j, k, \ell$ with multiplicity $1$, formed by taking unordered $i \neq j$ and unordered $k \neq \ell$, similarly there are $\binom{d}{2} \times \binom{d-2}{2} = \frac{1}{4} d(d-1)(d-2)(d-3)$ many.

            \item The orbit $e_i - e_k$ for distinct $i, k$ with multiplicity $d - 2$, since $2e_i + (-e_i - e_k)$ and $e_i + e_j + (-e_j - e_k)$ give this type of weight, with $1 + d - 2 = d - 1$ free choices, subtracting multiplicity $1$ contributed by $\nu_1$ yields multiplicity $d - 2$. The total counting is $d(d-1)$ by choosing unordered $i \neq k$.
        \end{enumerate}
        
        \item The non-zero weights of $\nu_4$ composed of weights $e_i + e_j$ (from action on $\wedge^2 {\mathbb{C}^d}$) for $i \neq j \in [d]$, and $-e_k - e_{\ell}$ (from action on $\wedge^2 {\mathbb{C}^d}^*$) for $k \neq \ell \in [d]$. The following types of orbits are possible:
        \begin{enumerate}
            \item The orbit $e_i + e_j - e_k - e_\ell$ for distinct $i, j, k, \ell$ with multiplicity $1$, formed by taking unordered $i \neq j$ and $k \neq \ell$. There are $\binom{d}{2} \times \binom{d-2}{2} = \frac{1}{4}d(d-1)(d-2)(d-3)$ many weights.
            \item The orbit $e_i - e_k$ for distinct $i, k$ with multiplicity $d - 3$, since $e_i + e_j + (-e_k - e_j)$ gives rise to $e_i - e_k$, and there are $d - 2$ many due to the free choice of $j \in [d] \setminus \{i, k\}$, subtracted by the multiplicity $1$ contributed by $\nu_1$. The total counting of weights is $d(d-1)$ as well, by choosing ordered $i \neq k$.
        \end{enumerate}
    \end{itemize}
\end{example}

\section{Deferred proofs of lemmata for the query-optimal protocol}
\label{appendix:deferred_proofs}
In this section, we present the clunky proofs of Lemmas \ref{lemma:small_s_regime_estimator_Lambda_variance} and \ref{lemma:large_s_regime_estimator_X_variance} that help evaluate our query-optimal CSEU protocol. The key ingredient is the full expression of the variances of the estimators $\hat{Z}(\hat{\mathsf{X}}, L)$ and $\hat{Z}(\hat{\mathsf{\Lambda}}, L)$. Recall that we have defined the detraced operator $\rho_0 = \rho - \openone/d$ obtained from the quantum state of interest throughout the section, and that $O$ is traceless.

\subsection{Reformulating the variances}
\label{appendix:variance_reformulation_for_both_estimators}
 We begin by defining the second-moment learning channel.
\begin{definition}
    The second-moment learning channel via the canonical optimal protocol [cf. Section \ref{subsec:bisio_etal_optimal_unitary_learning}] with learning strategy $(\Y, \bfq)$ is defined by 
    $$
    \begin{aligned}
    \forall \psi \in (\mathbb{C}^{d})^{\otimes 2}, \quad
    \mathcal{M}_{\bfq, s}^{(2)}(\psi) &= \int \braket{ \Psi_{V} | U^{\otimes s} \ket{\Psi_{\bfq}} \bra{\Psi_{\bfq}} {U^\dagger}^{\otimes s} |\Psi_V } \cdot V^{\otimes 2} \psi {V^\dagger}^{\otimes 2} \dd V \\
    &=  \int \left| \sum_{\lambda \in \Y} \sqrt{q_{\lambda}} \Tr[U^{\dagger}_{\lambda} V_{\lambda}] \right|^2 V^{\otimes 2} \psi {V^\dagger}^{\otimes 2} \dd V.
    \end{aligned}
    $$
\end{definition}
The variance of the estimators $\hat{X}_j$, $\hat{Z}(\hat{\mathsf{\Lambda}}, L)$ [cf. Eqs.~\eqref{eqn:linear_estimator} and \eqref{eqn:average_estimators}] can be fully expanded in terms of the second-moment channel $\mathcal{M}_{\bfq, s}^{(2)}$:
    \begin{equation}
\label{eqn:variance_expression_naive_estimator}
    \begin{aligned}
    \Var\left[\hat{X}_j\right] &= \E\left[\hat{X}_j^2\right] - \E\left[ \hat{X}_j \right]^2 \\
    &= \frac{1}{\p_{\bfq}^2} \E\left[\Tr\left[ O \otimes O \cdot \hat{U}_j^{\otimes 2} \rho_0^{\otimes 2} { (\hat{U}_{j}^\dagger) }^{\otimes 2}  \right]  \right] - \left( \Tr\left[ O \cdot U \rho_0 U^\dagger \right]  \right)^2 \\
    &= \frac{1}{\p_{\bfq}^2} \Tr\left[ O \otimes O \cdot \mathbb{E}\left[ \hat{U}_j^{\otimes 2} \rho_0^{\otimes 2} { (\hat{U}_{j}^\dagger) }^{\otimes 2} \right] \right]   - \left( \Tr\left[ O \cdot U \rho_0 U^\dagger \right]  \right)^2 \\
    &= \frac{1}{\p_{\bfq}^2} \Tr\left[ O \otimes O \cdot \mathcal{M}_{\bfq, s}^{(2)}(\rho_0^{\otimes 2}) \right]   - \left( \Tr\left[ O \cdot U \rho_0 U^\dagger \right]  \right)^2.
    \end{aligned}
    \end{equation}
    As for the variance of the estimator $\hat{Z}(\hat{\mathsf{\Lambda}}, L)$, we derive an upper bound instead. For conciseness, we use $\propto$ to hide the leading coefficients.
    \begin{align*}
    \Var\left[ \hat{Z}(\hat{\mathsf{\Lambda}}, L) \right] &=\frac{1}{L^2(L-1)^2\left(d + \frac{2(1-\p_{\bfq})}{d \p_{\bfq}}\right)^2} \Var\left[ \sum_{i \neq j \in [L]} \Tr\left[ \left( O \otimes \rho_0^T \right) \hat{Y}_i \hat{Y}_j \right] \right] \\
    &\propto \Var\left[ \sum_{i < j \in [L]} W_{i, j} \right]  \tag{$W_{i, j} = \Tr\left[ \left( O \otimes \rho_0^T \right) \{\hat{Y}_i,  \hat{Y}_j \} \right]$} \\
    &= \sum_{i < j \in [L]} \Var\left[W_{i, j} \right] + \sum_{\substack{i < j, k < \ell \in [L] \\ \{i, j\} \neq \{k, \ell\} } } \Cov\left( W_{i, j}, W_{k, \ell} \right) \\
    &= \sum_{i < j \in [L]} \Var\left[W_{i, j} \right] + \sum_{\substack{i < j; i, j \neq \ell \in [L]} } \Cov\left( W_{i, j}, W_{i, \ell} \right) \\
    &= \sum_{i < j \in [L]} \Var\left[W_{i, j} \right] + \sum_{\substack{i < j; i, j \neq \ell \in [L]} } \left( \E_{\hat{Y}_i} \left[\E\left[W_{i, j} | \hat{Y}_i \right] \E \left[W_{i, \ell} | \hat{Y}_i \right]  \right] - \E[W_{i, j}] \E[W_{i, \ell}] \right) \\
    &= \sum_{i < j \in [L]} \Var\left[W_{i, j} \right] + \sum_{\substack{i < j; i, j \neq \ell \in [L]} } \Var_{\hat{Y}_i} \left[ \E\left[ W_{i, j} | \hat{Y}_i \right]  \right],
    \end{align*}
    where
    $$
    \begin{aligned}
        \E\left[ W_{i, j} | \hat{Y}_i \right] &= \E_{\hat{Y}_j} \left[ \Tr\left[ \left(O \otimes \rho_0^T \right) \left(\hat{Y}_i \hat{Y}_j + \hat{Y}_j \hat{Y}_i \right)  \right]  \right] \\
        &=  \Tr\left[ \left(O \otimes \rho_0^T \right) \left(\hat{Y}_i \cdot \E[\hat{Y}_j] + \E[\hat{Y}_j] \cdot \hat{Y}_i \right)  \right] \\
        &=  \Tr\left[ \left\{\E[\hat{Y}_j], O \otimes \rho_0^T \right\} \hat{Y}_i 
        \right].
    \end{aligned}
    $$
Therefore, using the inequality $\Var[X] \leq \E[X^2]$,
    \begin{align*}
        & \Var\left[ \hat{Z}(\hat{\mathsf{\Lambda}}, L) \right] \propto \sum_{i < j \in [L]} \Var\left[ \Tr\left[\left(O \otimes \rho_0^T \right) \left\{\hat{Y}_i, \hat{Y}_j\right\}  \right]  \right] + \sum_{\substack{i < j; i, j \neq \ell \in [L]} } \Var\left[ \Tr\left[ \left\{\E[\hat{Y}_j], O \otimes \rho_0^T \right\} \hat{Y}_i 
        \right]  \right] \\
        &\quad= \frac{L(L-1)}{2} \cdot \Var\left[ \Tr\left[\left(O \otimes \rho_0^T \right) \left\{\hat{Y}_i, \hat{Y}_j\right\}  \right]  \right] + L(L-1)(L-2) \cdot \Var\left[ \Tr\left[ \left\{\E[\hat{Y}_j], O \otimes \rho_0^T \right\} \hat{Y}_i 
        \right]  \right] \\
        &\quad \leq \frac{L(L-1)}{2} \cdot \E\left[ \Tr\left[\left(O \otimes \rho_0^T \right) \left\{\hat{Y}_i, \hat{Y}_j\right\}  \right]^2 \right] + L(L-1)(L-2) \cdot \E\left[ \Tr\left[ \left\{\E[\hat{Y}_j], O \otimes \rho_0^T \right\} \hat{Y}_i 
        \right]^2 \right] \\
        &\quad = \frac{L(L-1)}{2} \cdot \E\left[ \Tr\left[\left(O \otimes \rho_0^T \right)^{\otimes 2} \left\{\hat{Y}_i, \hat{Y}_j\right\}^{\otimes 2}  \right] \right] + L(L-1)(L-2) \cdot \E\left[ \Tr\left[ \left\{\E[\hat{Y}_j], O \otimes \rho_0^T \right\}^{\otimes 2} \hat{Y}_i^{\otimes 2} 
        \right] \right] \\
        &\quad = \frac{L(L-1)}{2} \cdot  \E\left[ \Tr\left[\left(O \otimes \rho_0^T \right)^{\otimes 2} \left\{\hat{Y}_i, \hat{Y}_j\right\}^{\otimes 2}  \right] \right]  + L(L-1)(L-2) \cdot  \Tr\left[ \left\{\E[\hat{Y}_j], O \otimes \rho_0^T \right\}^{\otimes 2} \E\left[\hat{Y}_i^{\otimes 2}\right]
        \right].
    \end{align*}
    Finally, we have
    \begin{align*}
      \E\left[ \Tr\left[\left(O \otimes \rho_0^T \right)^{\otimes 2} \left\{\hat{Y}_i, \hat{Y}_j\right\}^{\otimes 2}  \right] \right] &= \Tr\left[\left(O \otimes \rho_0^T \right)^{\otimes 2}\E \left[(\hat{Y}_i \hat{Y}_j)^{\otimes 2} + (\hat{Y}_j \hat{Y}_i)^{\otimes 2} \right] \right] \\
      &\quad + \E\left[\Tr\left[\left(O \otimes \rho_0^T \right)^{\otimes 2}  \left( \hat{Y}_i \hat{Y}_j \otimes \hat{Y}_j \hat{Y}_i + \hat{Y}_j \hat{Y}_i \otimes \hat{Y}_i \hat{Y}_j \right) \right] \right], \\
      \Tr\left[\left(O \otimes \rho_0^T \right)^{\otimes 2}\E \left[(\hat{Y}_i \hat{Y}_j)^{\otimes 2} + (\hat{Y}_j \hat{Y}_i)^{\otimes 2} \right] \right] &= 2\Tr\left[\left(O \otimes \rho_0^T \right)^{\otimes 2} \E\left[ \hat{Y}_i^{\otimes 2} \hat{Y}_j^{\otimes 2} \right] \right] \\
      &= 2 \Tr\left[\left(O \otimes \rho_0^T \right)^{\otimes 2} \E \left[ \hat{Y}_i^{\otimes 2} \right]^2 \right], \\
      \E\left[\Tr\left[\left(O \otimes \rho_0^T \right)^{\otimes 2}  \left( \hat{Y}_i \hat{Y}_j \otimes \hat{Y}_j \hat{Y}_i + \hat{Y}_j \hat{Y}_i \otimes \hat{Y}_i \hat{Y}_j \right) \right] \right] &=2 \E\left[ \Tr\left[ \left( O \otimes \rho_0^T \right)\hat{Y}_i \hat{Y}_j \otimes   \left( O \otimes \rho_0^T \right) \hat{Y}_j \hat{Y}_i  \right]\right] \\
      &= 2 \E\left[ \Tr\left[ \left( O \otimes \rho_0^T \right)\hat{Y}_i \hat{Y}_j \otimes  \hat{Y}_i \left( O \otimes \rho_0^T \right) \hat{Y}_j  \right]\right] \\
    &= 2  \Tr\left[ \E\left[\left(  O \otimes \rho_0^T  \otimes \openone^{\otimes 2}\right)\left( \hat{Y}_i \otimes \hat{Y}_i \right)\left(\openone^{\otimes 2} \otimes  O \otimes \rho_0^T \right)\left(\hat{Y}_j \otimes \hat{Y}_j  \right) \right] \right] \\
    &= 2  \Tr\left[ \left(  O \otimes \rho_0^T  \otimes \openone^{\otimes 2}\right)\E\left[ \hat{Y}_i^{\otimes 2} \right]\left(\openone^{\otimes 2} \otimes  O \otimes \rho_0^T \right) \E\left[\hat{Y}_j^{\otimes 2}  \right]  \right].
    \end{align*}

\begin{figure}
    \centering
    \includegraphics[width=0.7\linewidth]{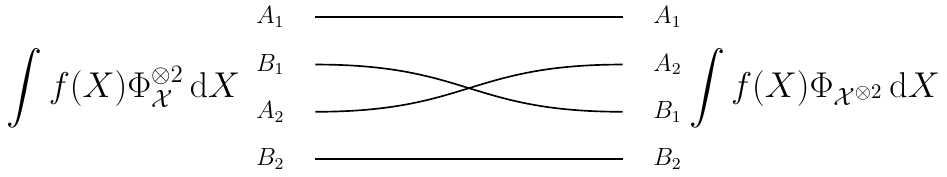}
    \caption{The local permutation that rearranges the bipartite space.}
    \label{fig:local_permutation}
\end{figure}

    It suffices to evaluate the expectation $\E[\hat{Y}_j^{\otimes 2}]$ for any $j \in [L]$. Recall by our construction in \eref{eqn:quadratic_estimator},
    $$
    \E\left[\hat{Y}_j^{\otimes 2}\right] = \frac{1}{\p_{\bfq}^2} \E\left[ \Phi_{\hat{U}_j}^{\otimes 2} \right].
    $$
    By linearity of the Choi isomorphism \cite{Watrous_2018}, the expectation $\E[ \Ket{\hat{U}_j} \Bra{\hat{U}_j}]$ evaluates to the Choi operator of the channel $\mathcal{M}_{\bfq, s}^{(2)}$ up to a local permutation, as per \autoref{fig:local_permutation}. We denote it as $\widetilde{\Phi}_{\bfq, s} = \F_{A_2 B_1} \Phi_{\mathcal{M}_{\bfq, s}^{(2)}} \F_{A_2 B_1}$. Finally, the variance of the estimator $\hat{Z}(\hat{\mathsf{\Lambda}}, L)$ is upper bounded by
    \begin{equation}
    \label{eqn:variance_upper_bound_quadratic_estimator}
    \begin{aligned}
    \Var\left[ \hat{Z}(\hat{\mathsf{\Lambda}}, L) \right] &\leq \frac{1}{L^2(L-1)^2\left(d + \frac{2(1-\p_{\bfq})}{d \p_{\bfq}}\right)^2} \Bigg( \frac{L(L-1)}{2} \cdot  \E\left[ \Tr\left[\left(O \otimes \rho_0^T \right)^{\otimes 2} \left\{\hat{Y}_i, \hat{Y}_j\right\}^{\otimes 2}  \right] \right] \\
    &\quad + L(L-1)(L-2) \cdot  \Tr\left[ \left\{\E\left[\hat{Y}_j\right], O \otimes \rho_0^T \right\}^{\otimes 2} \E\left[\hat{Y}_i^{\otimes 2}\right]
        \right] \Bigg) \\
        &= \frac{1}{L^2(L-1)^2\left(d + \frac{2(1-\p_{\bfq})}{d \p_{\bfq}}\right)^2}  \Bigg( \frac{1}{\p_{\bfq}^{4}} L(L-1)  \bigg( \Tr\left[\left(O \otimes \rho_0^T \right)^{\otimes 2} \widetilde{\Phi}_{\bfq, s}^2 \right] \\
        &\quad + \Tr\left[ \left(  O \otimes \rho_0^T  \otimes \openone^{\otimes 2}\right) \widetilde{\Phi}_{\bfq, s} \left(\openone^{\otimes 2} \otimes  O \otimes \rho_0^T \right) \widetilde{\Phi}_{\bfq, s}  \right]  \bigg) \\
        &\quad + \frac{1}{\p_{\bfq}^{2}} L(L-1)(L-2)  \cdot  \Tr\left[ \left\{\Ket{U} \Bra{U} + \frac{1-\p_{\bfq}}{d\p_{\bfq}} \openone \otimes \openone, O \otimes \rho_0^T \right\}^{\otimes 2} \widetilde{\Phi}_{\bfq, s}
        \right] \Bigg).
        \end{aligned}
    \end{equation}
    In the evaluations, we can distribute the local swaps that constitute the operator $\widetilde{\Phi}_{\bfq, s}$ to the observables due to the cyclic property of the trace. Essential to estimating the final scaling of $\Var[ \hat{Z}(\hat{\mathsf{\Lambda}}, L) ]$ is the expression of the Choi operator of the second-moment channel $\Phi_{\mathcal{M}_{\bfq, s}^{(2)}} = (\mathcal{M}_{\bfq, s}^{(2)} \otimes \mathcal{I})(\Ket{\openone} \Bra{\openone})$.

\subsection{Evaluating the Choi operator of the second-moment channel}
\label{subsec:evaluate_choi_state}

We start this section by reformulating the second-moment channel $\mathcal{M}_{\bfq, s}^{(2)}$. By the (two-sided) invariance of the Haar measure [cf. \lref{lemma:haar_measure}], we have 
\begin{equation}
\label{eqn:second_moment_channel_isolating_unitary}
\begin{aligned}
\mathcal{M}_{\bfq, s}^{(2)} &= \int \left| \sum_{\lambda \in \Y} \sqrt{q_{\lambda}} \Tr[U^{\dagger}_{\lambda} V_{\lambda}] \right|^2 \mathcal{V}^{\otimes 2} \dd V \\
&= \int \left| \sum_{\lambda \in \Y} \sqrt{q_{\lambda}} \Tr[V_{\lambda}] \right|^2 \left(\mathcal{U} \circ \mathcal{V}\right)^{\otimes 2} \dd V = \mathcal{U}^{\otimes 2} \circ \mathcal{E}_{\bfq, s}, \quad \mathcal{E}_{\bfq, s} = \int \left| \sum_{\lambda \in \Y} \sqrt{q_{\lambda}} \chi_{\lambda}(V) \right|^2 \mathcal{F}^{\otimes 2} \dd V.
\end{aligned}
\end{equation}
It suffices to explicitly evaluate $\mathcal{E}_{\bfq, s}$. Remarkably, the channel $\mathcal{E}_{\bfq, s}$ is collectively unitary-equivariant \cite[Appendix B.2]{Yang_2020}, \ie, it commutes with all 2-fold tensor products of unitary channels $\mathcal{W}^{\otimes 2}$ where $W \in \U(d)$. Consequently, its Choi operator commutes with the mixed tensor $W^{\otimes 2} \otimes {W^*}^{\otimes 2}$, and thus the action of $\mathcal{E}_{\bfq, s}$ can be decomposed according to irreps of $\mathcal{A}_{2, 2}^d$. Recall Definition \ref{def:class_function}, the function $f(V) = \left| \sum_{\lambda \in \Y} \sqrt{q_{\lambda}} \chi_{\lambda}(V) \right|^2$ is a class function, taking $G = \U(d)$ in Corollary \ref{corollary:expression_reformulation_adjoint} and \lref{lemma:mixed_littlewood_richardson}, $\mathcal{E}_{\bfq, s}$ can be reformulated as
\begin{equation}
\label{eqn:channel_reformulation}
\begin{aligned}
\mathcal{E}_{\bfq, s} = \sum_{\nu \in \Y_{2, 2}^d} c_{\nu}^{\bfq} \Pi_{\nu}, \quad c_{\nu}^{\bfq} &= \frac{1}{d_{\nu}} \int \left| \sum_{\lambda \in \Y} \sqrt{q_{\lambda}} \chi_{\lambda}(V) \right|^2 \chi_{\nu}(V) \dd V \\
&= \frac{1}{d_{\nu}} \sum_{\lambda, \mu \in \Y} \sqrt{q_{\lambda} q_{\mu}} \int \chi_{\lambda}(V) \overline{\chi_{\mu}(V)} \chi_{\nu}(V) \dd V =\frac{1}{d_{\nu}} \sum_{\lambda, \mu \in \Y} \sqrt{q_{\lambda} q_{\mu}} C_{\lambda, \nu}^{\mu} \geq 0,
\end{aligned}
\end{equation}
where the action $\mathcal{T}_{2, 2}(U): X \mapsto \mathcal{U}^{\otimes 2}(X)$.
There are six irreps associated with $\mathcal{A}_{2, 2}^d$ [cf. \autoref{fig:Bratteli_2_2}]. For notational clarity, we will index them by
\begin{equation}
\label{eqn:irreps_mixed_diagrams}
\nu_0 = \left(\varnothing, \varnothing\right),\quad \nu_1 = \left(\ybox, \ybox\right), \quad \nu_2 = \left(\yrow, \yrow\right), \quad \nu_{3, 1} = \left(\yrow, \ycol\right), \quad \nu_{3, 2} = \left(\ycol, \yrow\right), \quad \nu_4 = \left(\ycol, \ycol\right).
\end{equation}

\begin{remark}
\label{remark:factor_p_reformulation}
    The recovery coefficient $\p_{\bfq}$ [cf. \lref{lemma:SAR_channel_average_result}] in the previous discussion can be directly associated with a representation-theoretic coefficient. Its analytical expression \cite{Yang_2020, he2025resourcequantificationprogramminglowdepth} is given by
    $$
    \p_{\bfq} = \frac{1}{d^2 - 1} \left( \int \left| \sum_{\lambda \in \Y} \sqrt{q_{\lambda}} \sum_{\mu \in \lambda \otimes \sybox } \Tr\left[U_{\mu}^{\dagger} V_{\mu} \right] \right|^2 \dd V - 1 \right),
    $$
    where $\lambda \otimes \square$ is the collection of Young diagrams obtained by setting $\lambda_j$ to $\lambda_j + 1$ for all feasible $j \in [d]$, promised that the resulting vector is a legitimate Young diagram in $\Y_{s + 1}^d$ \cite{Yang_2020, Yang_2022}.
    By the Schur orthogonality relation and the definition in \eref{eqn:irreps_mixed_diagrams},
    $$
    \begin{aligned}
    \int \left| \sum_{\lambda \in \Y} \sqrt{q_{\lambda}} \sum_{\mu \in \lambda \otimes \sybox } \Tr\left[U_{\mu}^{\dagger} V_{\mu} \right] \right|^2 \dd V &=  \sum_{\lambda, \mu \in \Y} \sqrt{q_{\lambda} q_{\mu}} \sum_{\substack{\gamma \in \lambda \otimes \sybox, \xi \in \mu \otimes \sybox}} \int \chi_{\gamma}(Z) \overline{\chi_{\xi}(Z)} \dd Z \\
    &= \sum_{\lambda, \mu \in \Y} \sqrt{q_{\lambda} q_{\mu}} \left| \left(\lambda \otimes \square \right) \cap \left(\mu \otimes \square \right) \right| \\
    &= \sum_{\lambda, \mu \in \Y} \sqrt{q_{\lambda} q_{\mu}} \left( C_{\lambda, (\sybox, \sybox)}^{\mu} + \delta_{\lambda, \mu} \right) = \dim W_{\nu_1}^{d} \cdot c_{\nu_1}^{\bfq} + 1 = (d^2 - 1) c_{\nu_1}^{\bfq} + 1,
    \end{aligned}
    $$
    where $\delta_{\lambda, \mu}$ stems from the coefficient $C_{\lambda, (\varnothing, \varnothing)}^{\mu} = C_{\lambda, \nu_0}^{\mu}$. Therefore,
    $$
    \p_{\bfq} = \frac{1}{d^2 - 1} \left( \int \left| \sum_{\lambda \in \Y} \sqrt{q_{\lambda}} \sum_{\mu \in \lambda \otimes \sybox } \Tr\left[U_{\mu}^{\dagger} V_{\mu} \right] \right|^2 \dd V - 1 \right) = \frac{\left((d^2 - 1) c_{\nu_1}^{\bfq} + 1\right) - 1}{d^2 - 1} = c_{\nu_1}^{\bfq}.
    $$
\end{remark}

\begin{remark}
    When the ensemble $(\Y, \bfq)$ is clear in the context, we write $c_{\nu}$ for $c_{\nu}^{\bfq}$ for notational brevity.
\end{remark}

As described in Appendix \ref{appendix:variance_reformulation_for_both_estimators}, we require the Choi operator of $\mathcal{M}_{\bfq, s}^{(2)}$ to evaluate the variance of the estimator $\hat{Z}(\hat{\mathsf{\Lambda}}, L)$. Suppose the Hilbert space is indexed by $\mathcal{H}_{A_1} \otimes \mathcal{H}_{A_2} \otimes \mathcal{H}_{B_1} \otimes \mathcal{H}_{B_2}$ [cf. \autoref{fig:local_permutation}]. To evaluate the Choi operator of $\mathcal{M}_{\bfq, s}^{(2)}$, it suffices to evaluate the Choi operator of $\mathcal{E}_{\bfq, s}$, since they are readily connected via
$$
\Phi_{\mathcal{M}_{\bfq, s}^{(2)}} = \left( \mathcal{M}_{\bfq, s}^{(2)} \otimes \mathcal{I} \right)(\Lambda) = \left( \mathcal{U}^{\otimes 2} \circ \mathcal{E}_{\bfq, s}  \otimes \mathcal{I} \right)(\Lambda) = \left( \mathcal{U}^{\otimes 2} \otimes \mathcal{I} \right)(\Phi_{\mathcal{E}_{\bfq, s}}).
$$
Using \eref{eqn:channel_reformulation}, for a fixed ensemble $(\Y, \bfq)$, 
$$
\Phi_{\mathcal{E}_{\bfq, s}} = \left( \mathcal{E}_{\bfq, s} \otimes \mathcal{I} \right)(\Lambda) = \sum_{\nu \in \Y_{2, 2}^d} c_{\nu} \left( \Pi_{\nu} \otimes \mathcal{I} \right)(\Lambda) =: \sum_{\nu \in \Y_{2, 2}^d} c_{\nu} \Xi_{\nu}.
$$
Due to the collective unitary-equivariance of $\mathcal{E}_{\bfq, s}$, and note that the Choi operator $\Phi_{\mathcal{E}_{\bfq, s}}$ is also invariant under the joint local swap $\F_A\F_B(\cdot)\F_A\F_B$ due to $\F_A\F_B (\mathcal{W}^{\otimes 2} \otimes \mathcal{I}^{\otimes 2})(\Ket{\openone} \Bra{\openone}) \F_A\F_B = (\mathcal{W}^{\otimes 2} \otimes \mathcal{I}^{\otimes 2})(\Ket{\openone} \Bra{\openone})$ for all unitary $W \in \U(d)$ and the linearity of integration, $\Phi_{\mathcal{E}_{\bfq, s}}$ is naturally a member of the joint swap-invariant subalgebra of the walled Brauer algebra $\mathcal{A}_{2, 2}^d$, as remarked in Example \ref{example:symmetric_subalgebra_A_2_2_d}. Each operator $\Xi_{\nu}$ can be expressed as a linear combination of the canonical basis elements presented in Example \ref{example:symmetric_subalgebra_A_2_2_d}.

Specifically, the vector space $(\mathbb{C}^d)^{\otimes 2} \otimes ({\mathbb{C}^d}^*)^{\otimes 2}$ can be decomposed as 
$$
(\mathbb{C}^d)^{\otimes 2} \otimes ({\mathbb{C}^d}^*)^{\otimes 2} =\left( \vee^2 \mathbb{C}^d \otimes \vee^2 {\mathbb{C}^d}^* \right) \oplus \left(\wedge^2 \mathbb{C}^d \otimes \wedge^2 {\mathbb{C}^d}^* \right) \oplus \left( \vee^2 \mathbb{C}^d \otimes \wedge^2 {\mathbb{C}^d}^* \right) \oplus \left( \wedge^2 \mathbb{C}^d \otimes \vee^2 {\mathbb{C}^d}^* \right).
$$
The actions on $\left( \vee^2 \mathbb{C}^d \otimes \vee^2 {\mathbb{C}^d}^* \right)$ is contributed by irreps $\nu_0, \nu_1, \nu_2$, $\left(\wedge^2 \mathbb{C}^d \otimes \wedge^2 {\mathbb{C}^d}^* \right)$ contributed by $\nu_0, \nu_1, \nu_4$, $\left( \vee^2 \mathbb{C}^d \otimes \wedge^2 {\mathbb{C}^d}^* \right)$ contributed by $\nu_1, \nu_{3, 1}$, and $\left( \wedge^2 \mathbb{C}^d \otimes \vee^2 {\mathbb{C}^d}^* \right)$ contributed by $\nu_1, \nu_{3, 2}$ [cf. \eref{eqn:irreps_mixed_diagrams}]. The projections onto the half spaces $\vee^2 \mathbb{C}^d$, $\wedge^2 \mathbb{C}^d$ and their duals are given by $\frac{\openone \otimes \openone + \F}{2}$ and $\frac{\openone \otimes \openone - \F}{2}$, respectively, which, embedded into the full space, give rise to the local projections $P_{\pm}$, $Q_{\pm}$ [cf. Example \ref{example:symmetric_subalgebra_A_2_2_d}]. Each component $\Xi_{\nu}$ can, therefore, be written as a linear combination of the canonical basis operators, and sandwiched by $P_{\pm}$ and $Q_{\pm}$ accordingly to be projected onto different subspaces.

With an informal notation, we decompose $\nu_0$ and $\nu_1$ into their multiplicity-resolved components: $\nu_0 =\nu_0^+ \oplus \nu_0^-$, $\nu_1 = \nu_1^{++} \oplus \nu_1^{--} \oplus \nu_1^{+-} \oplus \nu_1^{-+}$, according to their contribution to the actions on different subspaces
$$
\begin{aligned}
&P_{+} \Lambda Q_{+} = \Xi_{\nu_0^+} + \Xi_{\nu_1^{++}} + \Xi_{\nu_2}, \quad  P_{-} \Lambda Q_{-} = \Xi_{\nu_0^-} + \Xi_{\nu_1^{--}} + \Xi_{\nu_4}, \\
&P_{+} \Lambda Q_{-} = \Xi_{\nu_1^{+-}} + \Xi_{\nu_{3, 1}}, \quad P_{-} \Lambda Q_{+} = \Xi_{\nu_1^{-+}} + \Xi_{\nu_{3, 2}}.
\end{aligned}
$$
According to the contractive actions of each irrep, they apply to $\Lambda$ differently. We can figure out the expressions for each component: Firstly, under the symmetric and antisymmetric projectors,
\begin{align*}
\Xi_{\nu_0^{+}} &= \frac{1}{\dim \vee^2 \C^d} P_{+}Q_{+} = \frac{2}{d(d+1)} P_{+} Q_{+}, \quad \Xi_{\nu_0^-} = \frac{1}{\dim \wedge^2 \C^d} P_{-}Q_{-} = \frac{2}{d(d-1)} P_{-} Q_{-}.
\end{align*}
The action of $\nu_1$ corresponds to symmetric single contractions on $\Lambda$, subtracting the tracial terms gives
$$
\Xi_{\nu_1^{++}} = \frac{1}{d} P_{+} \left( \Delta - \frac{2}{d} \openone \otimes \openone \right) Q_{+}, \quad \Xi_{\nu_1^{--}} = \frac{1}{d} P_{-}\left( \Delta - \frac{2}{d} \openone \otimes \openone  \right) Q_{-}, \quad \Xi_{\nu_1^{\pm\mp}} = \frac{1}{d} P_{\pm}\left( \Delta - \frac{2}{d} \openone \otimes \openone  \right) Q_{\mp}.
$$
Correspondingly, 
\begin{align*}
\Xi_{\nu_2} &= P_{+} \Lambda Q_{+} - \Xi_{\nu_0^+} - \Xi_{\nu_1^{++}} \\
& = P_{+} \left( \Lambda - \frac{2}{d(d+1)} \openone \otimes \openone - \left( \frac{1}{d} \Delta - \frac{2}{d^2} \openone \otimes \openone  \right) \right)Q_{+} \\
&= P_{+} \left(\Lambda - \frac{1}{d} \Delta + \frac{2}{d^2(d+1)} \openone \otimes \openone   \right)Q_{+}, \\
\Xi_{\nu_4} &= P_{-} \Lambda Q_{-} - \Xi_{\nu_0^-} - \Xi_{\nu_1^{--}} \\
& = P_{-} \left( \Lambda - \frac{2}{d(d - 1)} \openone \otimes \openone - \left( \frac{1}{d} \Delta - \frac{2}{d^2} \openone \otimes \openone  \right) \right)Q_{-} \\
&= P_{-} \left( \Lambda - \frac{1}{d} \Delta - \frac{2}{d^2(d - 1)} \openone \otimes \openone \right)Q_{-}, \\
\Xi_{\nu_{3, 1}} &= P_{+} \Lambda Q_{-} - \Xi_{\nu_1^{+-}} = P_{+}\left( \Lambda - \frac{1}{d}\Delta + \frac{2}{d^2} \openone \otimes \openone \right) Q_{-}, \\
\Xi_{\nu_{3, 2}} &= P_{-} \Lambda Q_{+} - \Xi_{\nu_1^{-+}} = P_{-}\left( \Lambda - \frac{1}{d}\Delta + \frac{2}{d^2} \openone \otimes \openone \right) Q_{+}.
\end{align*}
Note that $\nu_{3, 1} = \nu_{3, 2}^*$ by examining their highest weight; these two diagrams introduce the same probe scheme-relevant coefficient $c_{\nu_{3}}$ due to the symmetry of the quadratic form that defines $c_{\nu}$ [cf. \eref{eqn:channel_reformulation}]. Therefore, we can collect the terms and fully expand $\Phi_{\mathcal{E}_{\bfq, s}}$ in the following canonical form:
\begin{equation}
\label{eqn:choi_state_full_expression}
\begin{aligned}
    \Phi_{\mathcal{E}_{\bfq, s}} &= \sum_{\nu \in \Y_{2, 2}^d} c_{\nu} \Xi_{\nu} \\
    &= \frac{2}{d(d+1)} P_{+} Q_{+} + \frac{2}{d(d-1)} P_{-} Q_{-} + c_{\nu_1} \left( \frac{1}{d} \Delta - \frac{2}{d^2} \openone \otimes \openone  \right) \\
    &\quad + c_{\nu_2}  P_{+}\left( \Lambda - \frac{1}{d} \Delta + \frac{2}{d^2(d+1)} \openone \otimes \openone  \right) Q_{+} + c_{\nu_4} P_{-} \left( \Lambda - \frac{1}{d} \Delta - \frac{2}{d^2(d-1)} \openone \otimes \openone \right) Q_{-} \\
    &\quad + c_{\nu_3} \left( P_{+} \left( \Lambda - \frac{1}{d} \Delta + \frac{2}{d^2} \openone \otimes \openone   \right) Q_{-}  + P_{-} \left( \Lambda - \frac{1}{d} \Delta + \frac{2}{d^2} \openone \otimes \openone \right) Q_{+}  \right) \\
    &= P_{+}\left( \left(\frac{2}{d(d+1)} - \frac{2}{d^2} c_{\nu_1} + \frac{2}{d^2(d + 1)} c_{\nu_2} \right) \openone \otimes \openone + \left( \frac{1}{d} c_{\nu_1} - \frac{1}{d} c_{\nu_2} \right) \Delta + c_{\nu_2} \Lambda \right)Q_{+} \\
    &\quad + P_{-}\left( \left( \frac{2}{d(d-1)} - \frac{2}{d^2} c_{\nu_1} - \frac{2}{d^2(d - 1)} c_{\nu_4} \right) \openone \otimes \openone + \left( \frac{1}{d} c_{\nu_1} - \frac{1}{d} c_{\nu_4} \right) \Delta + c_{\nu_4} \Lambda \right) Q_{-} \\
    &\quad + P_{+} \left( \left( \frac{2}{d^2} c_{\nu_3} - \frac{2}{d^2} c_{\nu_1} \right) \openone  \otimes \openone + \left( \frac{1}{d} c_{\nu_1} - \frac{1}{d} c_{\nu_3} \right) \Delta + c_{\nu_3} \Lambda \right) Q_{-} \\
    &\quad + P_{-}\left( \left( \frac{2}{d^2} c_{\nu_3} - \frac{2}{d^2} c_{\nu_1} \right) \openone \otimes \openone + \left( \frac{1}{d} c_{\nu_1} - \frac{1}{d} c_{\nu_3} \right) \Delta + c_{\nu_3} \Lambda  \right) Q_{+} \\
    &= P_{+} \left( u_{+} \openone \otimes \openone + v_{+} \Delta + c_{\nu_2} \Lambda \right) Q_{+}  + P_{-} \left( u_{-} \openone \otimes \openone + v_{-} \Delta + c_{\nu_4} \Lambda \right) Q_{-} \\
    &\quad + P_{+}\left( w \openone \otimes \openone + r \Delta + c_{\nu_3} \Lambda \right) Q_{-} + P_{-}\left( w \openone \otimes \openone + r \Delta + c_{\nu_3} \Lambda \right) Q_{+},
\end{aligned}
\end{equation}
where 
\begin{equation}
\label{eqn:auxiliary_coefficients}
\begin{aligned}
    &u_{+} = \frac{2}{d(d+1)} - \frac{2}{d^2} c_{\nu_1} + \frac{2}{d^2(d + 1)} c_{\nu_2}, \quad u_{-} = \frac{2}{d(d-1)} - \frac{2}{d^2} c_{\nu_1} - \frac{2}{d^2(d - 1)} c_{\nu_4}, \\
    &v_{+} = \frac{c_{\nu_1} - c_{\nu_2}}{d}, \quad v_{-} = \frac{c_{\nu_1} - c_{\nu_4}}{d}, \quad w = \frac{2(c_{\nu_3} - c_{\nu_1})}{d^2}, \quad r = \frac{c_{\nu_1} - c_{\nu_3}}{d}.
\end{aligned}
\end{equation}

\subsection{Evaluating the second-moment channel on tensor product inputs}
According to Appendix \ref{appendix:variance_reformulation_for_both_estimators}, the variance of the na\"ive linear estimator $\hat{X}_j$ can be readily evaluated once we obtain $\mathcal{M}_{\bfq, s}^{(2)}(\rho_0^{\otimes 2})$. When the input $\rho_0 \otimes \rho_0$, the covariance $\mathcal{M}_{\bfq, s}^{(2)} = \mathcal{U}^{\otimes 2} \circ \mathcal{E}_{\bfq, s} = \mathcal{E}_{\bfq, s} \circ \mathcal{U}^{\otimes 2}$ [cf. Appendix \ref{subsec:evaluate_choi_state}] ensures that the input to the channel $\mathcal{E}_{\bfq, s}$ is simply the operator $\mathcal{U}(\rho_0) \otimes \mathcal{U}(\rho_0)$. 

\begin{remark}
    For the rest of the paper, we denote the evolved quantum state $\sigma = \mathcal{U}(\rho)$, and its detraced counterpart $\sigma_0 = \mathcal{U}(\rho_0)$. By the unitarily invariance of the Forbenius norm, $\|\sigma\|_F^2 = \|\rho\|_F^2$ and $\|\sigma_0\|_F^2 = \|\rho_0\|_F^2$.
\end{remark}

It suffices to project the operator $\sigma_0 \otimes \sigma_0$ onto the isotypic subspace of each irrep. Inspired by the expressions in \eref{eqn:choi_state_full_expression}, the projection reads:
\begin{itemize}
    \item For the trivial representation $\nu_0$, note that $c_{\nu_0} = \frac{1}{d_{\nu_0}} \sum_{\lambda, \mu \in \Y} \sqrt{q_{\lambda} q_{\mu}} \delta_{\lambda, \mu} = \sum_{\lambda \in \Y} q_{\lambda} = 1$. By noting that $\chi_{\nu_0}(V) \equiv 1$ for any $V \in \U(d)$, the the action $\Pi_{\nu_0}(\sigma_0 \otimes \sigma_0)$ coincides with the 2-moment Haar twirling operation over $\U(d)$. By \cite[Equation (94)]{Mele2024introductiontohaar},
    $$
    \Pi_{\nu_0}(\sigma_0 \otimes \sigma_0) =  - \frac{\Tr[\sigma_0^2]}{d(d^2 - 1)}  \openone \otimes \openone + \frac{\Tr[\sigma_0^2]}{d^2 - 1} \F.
    $$
    A sanity check can be given by
    \begin{align*}&\Tr\left[(\sigma_0 \otimes \sigma_0) P_{+} \right] \cdot \frac{2}{d(d+1)} P_{+} + \Tr\left[(\sigma_0 \otimes \sigma_0) P_{-} \right] \cdot \frac{2}{d(d-1)} P_{-} \\
    &= \frac{\openone \otimes \openone}{2}\left( \frac{\Tr[\sigma_0^2]}{d(d+1)} - \frac{\Tr[\sigma_0^2]}{d(d-1)}  \right) + \frac{\F}{2} \left( \frac{\Tr[\sigma_0^2]}{d(d+1)} + \frac{\Tr[\sigma_0^2]}{d(d-1)} \right) = - \frac{\Tr[\sigma_0^2]}{d(d^2 - 1)}  \openone \otimes \openone + \frac{\Tr[\sigma_0^2]}{d^2 - 1} \F.
    \end{align*}
    \item For the representation $\nu_1$, singly contracting the projection $P_{\pm}(\sigma_0 \otimes \sigma_0) P_{\pm}$ yields $\frac{1}{2} (2\Tr[\sigma_0] \sigma_0 \pm( \sigma_0^2 \otimes \openone + \openone \otimes \sigma_0^2)) = \pm\frac{1}{2}(\sigma_0^2 \otimes \openone + \openone \otimes \sigma_0^2)$ and $P_{\pm} (\sigma_0 \otimes \sigma_0)P_{\mp} = 0$. For the full action of $\nu_1$, the action given by the singly contraction maps $\sigma_0 \otimes \sigma_0$ to $(\sigma_0 - \Tr[\sigma_0] \cdot \openone / d) \otimes \Tr[\sigma_0] \cdot \openone / d + \Tr[\sigma_0] \cdot \openone / d \otimes (\sigma_0 - \Tr[\sigma_0] \cdot \openone / d)$, which yields $0$ by the traceless condition. 
    \item For the mixed representation $\nu_3$, since $\sigma_0 \otimes \sigma_0$ is symmetric, the projection $P_{\pm}(\sigma_0 \otimes \sigma_0) P_{\mp}$ vanishes. While subtracting the correction terms contributed by $\nu_0$ and $\nu_1$ yields
    \begin{align*}
    \Pi_{\nu_3}(\sigma_0 \otimes \sigma_0) &= P_{+} \left( P_{-}\left(\sigma_0 \otimes \sigma_0 \right) - \frac{1}{d} \left( - \frac{\sigma_0^2 \otimes \openone + \openone \otimes \sigma_0^2}{2}  \right) + \frac{2}{d^2}\left( - \frac{\Tr[\sigma_0^2]}{2}  \right) \openone \otimes \openone   \right) P_{+} \\
    &\quad + P_{-} \left( P_{+}\left(\sigma_0 \otimes \sigma_0 \right) - \frac{1}{d} \left(  \frac{\sigma_0^2 \otimes \openone + \openone \otimes \sigma_0^2}{2}  \right) + \frac{2}{d^2}\left(  \frac{\Tr[\sigma_0^2]}{2}  \right) \openone \otimes \openone   \right) P_{-}
    \\
    &=\frac{P_{+} - P_{-}}{2d} ( \sigma_0^2 \otimes \openone + \openone \otimes \sigma_0^2) - \frac{\Tr[\sigma_0^2]}{d^2} \left(P_{+} - P_{-}\right) \\
    &= \frac{1}{2d} \left(\sigma_0^2 \otimes \openone + \openone \otimes \sigma_0^2\right) \F  - \frac{\Tr[\sigma_0^2]}{d^2} \F,
    \end{align*}
    where the term $\pm \frac{1}{2} \left(\sigma_0^2 \otimes \openone + \openone \otimes \sigma_0^2\right)$ originates from singly contracting the expression $P_{\pm}(\sigma_0 \otimes \sigma_0)P_{\pm}$, and $\pm \frac{1}{2} \Tr[\sigma_0^2]$ originates from its doubly contraction.
    
    \item Similarly, for the fully symmetric/anti-symmetric representation $\nu_2$/$\nu_4$, subtracting the correction terms from $\nu_0$ and $\nu_1$ gives
    $$
    \begin{aligned}
    &\Pi_{\nu_2}(\sigma_0 \otimes \sigma_0) = 
    P_{+} (\sigma_0 \otimes \sigma_0) - \frac{1}{2d} P_{+} ( \sigma_0^2 \otimes \openone + \openone \otimes \sigma_0^2) + \frac{\Tr[\sigma_0^2]}{d^2(d+1)} P_{+} \\
    &\Pi_{\nu_4}(\sigma_0 \otimes \sigma_0) = 
    P_{-} (\sigma_0 \otimes \sigma_0) + \frac{1}{2d} P_{-} ( \sigma_0^2 \otimes \openone + \openone \otimes \sigma_0^2) + \frac{\Tr[\sigma_0^2]}{d^2(d-1)} P_{-}.
    \end{aligned}
    $$
\end{itemize}
Combining everything, the action of $\mathcal{E}_{\bfq, s}$ on input $\sigma_0 \otimes \sigma_0$ reads
\begin{equation}
\label{eqn:second_moment_channel_on_symmetric_input}
\begin{aligned}
&\mathcal{E}_{\bfq, s}(\sigma_0 \otimes \sigma_0) =  - \frac{\Tr[\sigma_0^2]}{d(d^2 - 1)}  \openone \otimes \openone + \frac{\Tr[\sigma_0^2]}{d^2 - 1} \F  \\
&\qquad + c_{\nu_2} \left[ P_{+} (\sigma_0 \otimes \sigma_0) - \frac{1}{2d} P_{+} ( \sigma_0^2 \otimes \openone + \openone \otimes \sigma_0^2) + \frac{\Tr[\sigma_0^2]}{d^2(d+1)} P_{+} \right] \\
&\qquad + c_{\nu_4} \left[P_{-} (\sigma_0 \otimes \sigma_0) + \frac{1}{2d} P_{-} ( \sigma_0^2 \otimes \openone + \openone \otimes \sigma_0^2) + \frac{\Tr[\sigma_0^2]}{d^2(d-1)} P_{-} \right] \\
&\qquad + c_{\nu_3} \left[ \frac{1}{2d} \left(\sigma_0^2 \otimes \openone + \openone \otimes \sigma_0^2\right) \F  - \frac{\Tr[\sigma_0^2]}{d^2} \F \right].
\end{aligned}
\end{equation}

\subsection{Bounding the variance of the quadratic estimator}
There are three main components of the upper bound of the variance of the quadratic estimator $\hat{Z}(\hat{\mathsf{\Lambda}}, L)$, as per \eref{eqn:variance_upper_bound_quadratic_estimator}. The section is dedicated to evaluating these terms, which are listed as follows:
$$
\begin{aligned}
&\text{\hypertarget{variance_term_1}{\textbf{(I)}}}: ~\Tr\left[ \left\{\Ket{U} \Bra{U} + \frac{1-\p_{\bfq}}{d\p_{\bfq}} \openone \otimes \openone, O \otimes \rho_0^T \right\}^{\otimes 2} \widetilde{\Phi}_{\bfq, s}
        \right], \\
        &\text{\hypertarget{variance_term_2}{\textbf{(II)}}}:~ \Tr\left[\left(O \otimes \rho_0^T \right)^{\otimes 2} \widetilde{\Phi}_{\bfq, s}^2 \right], \quad \text{\hypertarget{variance_term_3}{\textbf{(III)}}}:~\Tr\left[ \left(  O \otimes \rho_0^T  \otimes \openone^{\otimes 2}\right) \widetilde{\Phi}_{\bfq, s} \left(\openone^{\otimes 2} \otimes  O \otimes \rho_0^T \right) \widetilde{\Phi}_{\bfq, s}  \right].
\end{aligned}
$$
For notational brievity, we would denote the scalar $\Gamma = \frac{1-\p_{\bfq}}{d\p_{\bfq}}$ when evaluating term \hyperlink{variance_term_1}{(I)}. The evaluation of term \hyperlink{variance_term_2}{(II)} and \hyperlink{variance_term_3}{(III)} will make extensive use of tensor network diagrams, which we refer readers to \cite[Section 6]{Mele2024introductiontohaar} for a comprehensive introduction.

\subsubsection{Bounding variance term \protect\hyperlink{variance_term_1}{(I)}}
\label{appendix:bound_term_1}
Recall that $\widetilde{\Phi}_{\bfq, s} = \mathbb{F}_{A_2 B_1} \Phi_{\mathcal{M}_{\bfq, s}^{(2)}} \mathbb{F}_{A_2 B_1} = \mathbb{F}_{A_2 B_1} \Phi_{\mathcal{U}^{\otimes 2} \circ \mathcal{E}_{\bfq, s}} \mathbb{F}_{A_2 B_1}$, and that
$$
\begin{aligned}
    \left\{\Ket{U} \Bra{U} + \Gamma \openone \otimes \openone, O \otimes \rho_0^T \right\}^{\otimes 2} &= \left(\left\{ \Ket{U} \Bra{U}, O \otimes \rho_0^T \right\} + 2 \Gamma O \otimes \rho_0^T \right)^{\otimes 2}.
\end{aligned}
$$
In the Heisenberg picture, if we denote $\left\{ \Ket{U} \Bra{U}, O \otimes \rho_0^T \right\} + 2 \Gamma O \otimes \rho_0^T =: W$ and ${\mathcal{U}^{\dagger}}^{\otimes 2}(W) =: V$, then
$$
\begin{aligned}
    \Tr\left[ \left\{\Ket{U} \Bra{U} + \frac{1-\p_{\bfq}}{d\p_{\bfq}} \openone \otimes \openone, O \otimes \rho_0^T \right\}^{\otimes 2} \widetilde{\Phi}_{\bfq, s}
        \right] &= \Tr\left[\mathbb{F}_{A_2 B_1} \left(W_{A_1 B_1} \otimes W_{A_2 B_2} \right)\mathbb{F}_{A_2 B_1} \Phi_{\mathcal{U}^{\otimes 2} \circ \mathcal{E}_{\bfq, s}} \right] \\
        &= \Tr\left[\left( W_{A_1 A_2} \otimes W_{B_1 B_2} \right) \Phi_{\mathcal{U}^{\otimes 2} \circ \mathcal{E}_{\bfq, s}} \right] \\
        &= \Tr\left[\left( {\mathcal{U}^\dagger}^{\otimes 2}\left(W_{A_1 A_2} \right) \otimes W_{B_1 B_2} \right) \Phi_{\mathcal{E}_{\bfq, s}} \right] = \Tr\left[ \left(V \otimes W \right) \Phi_{\mathcal{E}_{\bfq, s}}  \right],
\end{aligned}
$$
\begin{figure}[tbp!]
    \centering
    \includegraphics[width=\linewidth]{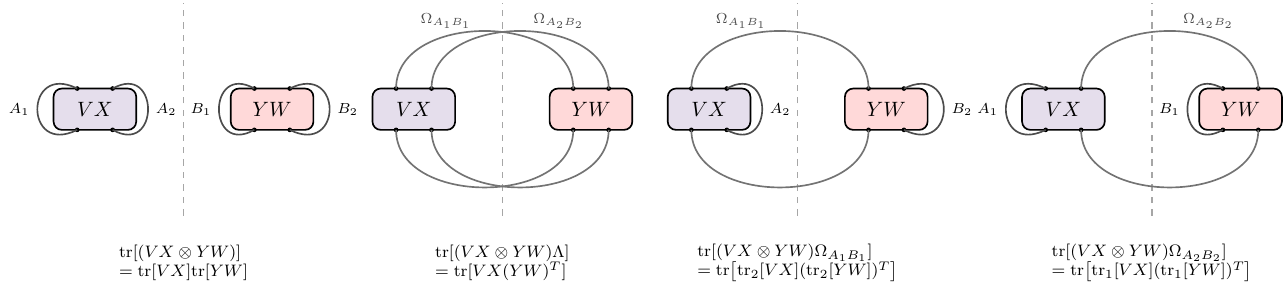}
    \caption{The inner product evaluation of $VX \otimes YW$ with different canonical bases.}
    \label{fig:contraction_rule_1}
\end{figure}
By the linearity of the trace, it suffices to evaluate the inner product of the effective observable $V \otimes W$ with the basis elements [cf. \eref{eqn:choi_state_full_expression}]. For any $X, Y \in \{ \frac{\openone + \F}{2}, \frac{\openone - \F}{2} \}$ according to \autoref{fig:contraction_rule_1}, we have
$$
\begin{aligned}
    \Tr\left[ \left(V \otimes W \right) X \otimes Y \right] &= \Tr\left[ VX\otimes YW \right] = \Tr[VX] \Tr[YW], \\
    \Tr\left[ \left(V \otimes W \right) (X \otimes \openone) \Delta (\openone \otimes Y) \right] &= \Tr\left[ \left(VX \otimes YW \right) \left(\Omega_{A_1 B_1} + \Omega_{A_2 B_2} \right) \right] \\
    &= \Tr\left[ \Tr_{2}[VX] \left( \Tr_{2}[YW] \right)^T  \right] + \Tr\left[ \Tr_1[VX] \left( \Tr_{1}[YW] \right)^T  \right], \\
    \Tr\left[ \left(V \otimes W \right) (X \otimes \openone) \Lambda (\openone \otimes Y) \right] &= \Tr\left[ VX \left(YW\right)^T \right].
\end{aligned}
$$
The building block terms for these expressions are identified by
$$
\begin{aligned}
VX_{\pm} &=  \left( \left\{ \Ket{U^*} \Bra{U^*}, U^\dagger O U \otimes U^\dagger \rho_0^T U \right\} + 2 \Gamma U^\dagger O U \otimes U^\dagger \rho_0^T U \right) \frac{\openone \pm \mathbb{F}}{2}, \\
Y_{\pm}W &= \frac{\openone \pm \mathbb{F}}{2}\left( \left\{ \Ket{U} \Bra{U}, O \otimes \rho_0^T \right\} + 2 \Gamma O \otimes \rho_0^T \right) .
\end{aligned}
$$
Recall that $O$ is traceless, we have
$$
\begin{aligned}
\Tr[VX_{\pm}] &= \Bra{U^*} U^\dagger O U \otimes U^\dagger \rho_0^T U \Ket{U^*} \pm \frac{1}{2} \left( \Tr\left[ \left\{ \Ket{U^*} \Bra{U^*}, U^\dagger O U \otimes U^\dagger \rho_0^T U \right\} \mathbb{F}\right] + 2 \Gamma \Tr[O \rho_0^T] \right) \\
&= \Tr\left[ O \cdot U \rho_0 U^\dagger \right] \pm \left(  \Re\Tr\left[ O \cdot U \rho_0 U^* \right] + \Gamma \Tr[O \rho_0^T] \right), \\
\Tr\left[ Y_{\pm}W \right] &= \Braket{U | O \otimes \rho_0^T | U } \pm \frac{1}{2} \left(  \Tr\left[\mathbb{F} \left\{ \Ket{U} \Bra{U}, O \otimes \rho_0^T \right\}  \right] + 2 \Gamma \Tr[O \rho_0^T] \right) \\
&= \Tr\left[ O \cdot U \rho_0 U^\dagger \right] \pm \left( \Re\Tr\left[ O \cdot U \rho_0 U^*\right] + \Gamma \Tr[O \rho_0^T] \right).
\end{aligned}
$$
For the detraced state $\rho_0$, its trace norm satisfies $\|\rho_0\|_1 = \| \rho - \openone / d \|_1 \leq \| \rho\|_1 + \| \openone / d \|_1 = 2$, where $\rho$ is our original density matrix.
Therefore, for any $a, b \in \{+, -\}$, using H\"older's inequality [cf. Definition \ref{def:schatten_norm_and_holder}],
$$
\begin{aligned}
\left|\Tr[VX_a ] \Tr\left[ Y_b W \right] \right| &\leq 2 \Tr\left[ O \cdot U \rho_0 U^\dagger \right]^2 + 2\left( \Re\Tr\left[ O \cdot U \rho_0 U^*\right] + \Gamma \Tr[O \rho_0^T] \right)^2  \\
&\leq 4 \Gamma^2 \Tr[O \rho_0^T]^2 + 4 \left|\Tr\left[ O \cdot U \rho_0 U^*\right]  \right|^2 + 2 \Tr\left[ O \cdot U \rho_0 U^\dagger \right]^2 \\
&\leq \left( 4 \Gamma^2 + 6 \right) \min\left\{ \|\rho_0\|_1^2 \|O\|_{\infty}^2, \|\rho_0\|_F^2 \|O\|_F^2  \right\} = \left( 4 \Gamma^2 + 6 \right) \min\left\{ 4, \|\rho_0\|_F^2 \|O\|_F^2  \right\}.
\end{aligned}
$$
For the partial-traced operators, by the tracelessness of the detraced state $\rho_0$,
$$
\begin{aligned}
    \Tr_1[VX_{\pm}] &= \frac{1}{2} U^\dagger \left( \left\{ U^T O^T U^*, \rho_0^T  \right\} \pm \left( U^T \rho_0 U^\dagger O + \rho_0^T U^T O^T U^\dagger + 2 \Gamma \rho_0^T O \right)  \right) U, \\
    \Tr_2[VX_{\pm}] &= \frac{1}{2}  \left\{ \rho_0, U^\dagger O U \right\} \pm \frac{1}{2} \left( O^T U^* \rho_0^T U + U^\dagger O U \rho_0 U^* U + 2 \Gamma U^\dagger O \rho_0^T U \right), \\
    \left(\Tr_1[Y_{\pm}W] \right)^T&= \frac{1}{2} \left\{ U^\dagger O U, \rho_0 \right\} \pm \frac{1}{2}\left( \rho_0 U^\dagger O U^T + U^\dagger \rho_0^T U^T O^T + 2 \Gamma  \rho_0 O^T  \right), \\
    \left(\Tr_2[Y_{\pm}W]\right)^T &= \frac{1}{2} \left\{ U^* \rho_0^T U^T, O^T \right\}  \pm \frac{1}{2} \left( O^T U^* \rho_0^T U + U^* O U \rho_0 + 2 \Gamma O^T \rho_0 \right)
\end{aligned}
$$
Using Cauchy-Schwarz, the first singly contracted inner product can be upper-bounded by
$$
\begin{aligned}
\left|\Tr\left[ \Tr_{2}\left[ VX \right] \left(\Tr_{2}\left[ YW \right] \right)^T \right] \right| &\leq \left\| \Tr_2[VX]  \right\|_F \left\| \Tr_2[YW]  \right\|_F,
\end{aligned}
$$
whereby using sub-additivity, for any $a, b \in \{+, -\}$,
$$
\begin{aligned}
    \left\| \Tr_2[VX_a]  \right\|_F &\leq \frac{1}{2} \left\| \left\{ \rho_0, U^\dagger O U \right\} \right\|_F + \frac{1}{2} \left( \left\| O^T U^* \rho_0^T U \right\|_F + \left\|U^\dagger O U \rho_0 U^* U \right\|_F + 2 \Gamma \left\|U^\dagger O \rho_0^T U \right\|_F \right) \\
    &\leq  \min\left\{ \|\rho_0\|_1 \|O\|_{\infty}, \|\rho_0\|_F \|O\|_F  \right\} + \left(1 + \Gamma\right) \min\left\{ \|\rho_0\|_1 \|O\|_{\infty}, \|\rho_0\|_F \|O\|_F \right\} \\
    &\leq \left(\Gamma + 2\right) \min\left\{ 2 , \|\rho_0\|_F \|O\|_F \right\}, \\
\left\| \Tr_2[Y W_b]  \right\|_F &\leq \frac{1}{2} \left\|\left\{ U^* \rho_0^T U^T, O^T \right\} \right\|_F + \frac{1}{2}  \left( \left\|O^T U^* \rho_0^T U \right\|_F + \left\|U^* O U \rho_0 \right\|_F + 2 \Gamma \left\|O^T \rho_0 \right\|_F \right) \\
&\leq \min\left\{ \|\rho_0\|_1 \|O\|_{\infty}, \|\rho_0\|_F \|O\|_F  \right\}  + \left( \Gamma + 2 \right) \min\left\{ \|\rho_0\|_1 \|O\|_{\infty}, \|\rho_0\|_F \|O\|_F  \right\} \\
&\leq \left(\Gamma + 2\right) \min\left\{ 2, \|\rho_0\|_F \|O\|_F \right\}.
\end{aligned}
$$
Therefore, 
$$
\left|\Tr\left[ \Tr_{2}\left[ VX_a \right] \left(\Tr_{2}\left[ YW_b \right] \right)^T \right] \right|  \leq (\Gamma + 2)^2 \min\left\{ 4 , \|\rho_0\|_F^2 \|O\|_F^2 \right\}.
$$
Similarly, for the second singly-contracted term, 
$$
\left|\Tr\left[ \Tr_1[VX_a] \left( \Tr_{1}[YW_b] \right)^T  \right] \right| \leq \left\| \Tr_1[VX] \right\|_F \left\| \Tr_{1}[YW] \right\|_F,
$$
where
$$
\begin{aligned}
\left\| \Tr_{1}[VX_a] \right\|_F &\leq \frac{1}{2} \left\| \left\{ U^T O^T U^*, \rho_0^T  \right\} \right\|_F + \frac{1}{2} \left( \left\| U^T \rho_0 U^\dagger O \right\|_F + \left\|\rho_0^T U^T O^T U^\dagger \right\|_F + 2 \Gamma \left\|\rho_0^T O \right\|_F\right) \\
&\leq \left( \Gamma + 2 \right) \min\left\{ \|\rho_0\|_1 \|O\|_{\infty}, \|\rho_0\|_F \|O\|_F  \right\} \leq (\Gamma + 2)  \min\left\{ 2, \|\rho_0\|_F \|O\|_F  \right\}
\\
    \left\| \Tr_{1}[YW_b] \right\|_F &\leq \frac{1}{2} \left\| \left\{ U^\dagger O U, \rho_0 \right\} \right\|_F + \frac{1}{2}\left( \left\|\rho_0 U^\dagger O U^T \right\|_F + \left\|U^\dagger \rho_0^T U^T O^T \right\|_F + 2 \Gamma  \left\|\rho_0 O^T \right\|_F  \right) \\
    &\leq \left( \Gamma + 2 \right) \min\left\{ \|\rho_0\|_1 \|O\|_{\infty}, \|\rho_0\|_F \|O\|_F  \right\} \leq (\Gamma + 2)  \min\left\{ 2, \|\rho_0\|_F \|O\|_F  \right\}. 
\end{aligned}
$$
And therefore 
$$
\left|\Tr\left[ \Tr_1[VX_a] \left( \Tr_{1}[YW_b] \right)^T  \right] \right| \leq (\Gamma + 2)^2  \min\left\{ 4, \|\rho_0\|_F^2 \|O\|_F^2  \right\}.
$$
Finally, in a similar vein, for any $a, b \in \{+, -\}$, $X_a, Y_b$ are projections, Cauchy-Schwarz yields
\begin{align*}
\left|\Tr\left[ VX_a \left(Y_bW\right)^T \right] \right| = \left|\Tr\left[ Y_b VX_a W^T \right] \right| &\leq \sqrt{ \Tr\left[ \left( Y_b VX_a \right)^\dagger \left( Y_b VX_a \right)  \right] \cdot \Tr\left[ W^* W^T  \right] } \\
&\leq \sqrt{\Tr\left[ V^\dagger V \right] \cdot \Tr\left[ W^\dagger W \right]} = \Tr\left[ W^2 \right],
\end{align*}
where
$$
\begin{aligned}
    \Tr\left[W^2 \right] &= 2 \Tr\left[ \Ket{U} \Bra{U}(O \otimes \rho_0^T) \Ket{U} \Bra{U} (O \otimes \rho_0^T)   \right] + 2(d + 4 \Gamma) \Tr\left[\Ket{U} \Bra{U} \left(O^2 \otimes {\rho_0^2}^T \right)  \right] + 4 \Gamma^2 \Tr\left[O^2 \otimes {\rho_0^2}^T   \right] \\
    &= 2 \left( \Tr\left[O \cdot U \rho_0 U^\dagger \right] \right)^2 + 2(d + 4 \Gamma) \Tr\left[ O^2 \cdot U \rho_0^2 U^\dagger \right] + 4 \Gamma^2 \Tr[O^2] \Tr[\rho_0^2] \\
    &\leq 4 \Gamma^2 \|O\|_F^2 \|\rho_0\|_F^2 + 2(d + 4 \Gamma) \|\rho_0\|_F^2 + 2 \min\left\{ \|O\|_{\infty}^2 \|\rho_0\|_1^2, \|O\|_F^2 \|\rho_0\|_F^2 \right\} \\
    &\leq 4 \Gamma^2 \|O\|_F^2 \|\rho_0\|_F^2 + 2(d + 4 \Gamma) \|\rho_0\|_F^2 + 2 \min\left\{ 4, \|O\|_F^2 \|\rho_0\|_F^2 \right\}.
\end{aligned}
$$

Combining the above inequalities, \eref{eqn:choi_state_full_expression}, Fact \ref{fact:triangular_inequality_application}, the variance term \hyperlink{variance_term_1}{(I)} can be bounded from above by
\begin{equation}
\label{eqn:term_1_upper_bound}
\begin{aligned}
&\Tr\left[ \left(V \otimes W \right) \Phi_{\mathcal{E}_{\bfq, s}}  \right] \\&\quad\leq \left( |u_+| + |u_-| + 2|w| \right) \max_{a, b \in \{+, -\}} \left|\Tr[VX_a] \Tr[YW_b] \right| \\
&\quad \quad + \left( |v_{+}| + |v_{-}| + 2|r| \right) \max_{a, b \in \{+, -\}}\left( \left|\Tr\left[ \Tr_{2}[VX_a] \left( \Tr_{2}[YW_b] \right)^T  \right] \right| + \left|\Tr\left[ \Tr_1[VX_a] \left( \Tr_{1}[YW_b] \right)^T  \right] \right|\right) \\
&\quad \quad + \left( c_{\nu_2} + c_{\nu_4} + 2 c_{\nu_3} \right) \max_{a, b \in \{+, -\}}\left|\Tr\left[ VX_a \left(YW_b\right)^T \right] \right| \\
&\quad\leq \left( |u_+| + |u_-| + 2|w| \right) \left( 4 \Gamma^2 + 6 \right) \min\left\{ 4, \|\rho_0\|_F^2 \|O\|_F^2  \right\} \\
&\quad\quad + \left( |v_{+}| + |v_{-}| + 2|r| \right) \cdot 2(\Gamma + 2)^2 \min\left\{ 4 , \|\rho_0\|_F^2 \|O\|_F^2 \right\} \\
&\quad\quad + \left( c_{\nu_2} + c_{\nu_4} + 2 c_{\nu_3} \right) \left( 4 \Gamma^2  \|\rho_0\|_F^2 \|O\|_F^2 + 2(d + 4 \Gamma) \|\rho_0\|_F^2 + 2 \min\left\{ 4, \|\rho_0\|_F^2  \|O\|_F^2 \right\} \right).
\end{aligned}
\end{equation}

\subsubsection{Bounding variance term \protect\hyperlink{variance_term_2}{(II)}}
\label{appendix:bound_term_2}
Still, we start by swapping the space and reformulating the expression using the property of $\mathcal{E}_{\bfq, s}$:
\begin{align*}
\Tr\left[\left(O \otimes \rho_0^T \right)^{\otimes 2} \widetilde{\Phi}_{\bfq, s}^2 \right] &= \Tr\left[ \left( O \otimes \rho_0^T \otimes O \otimes \rho_0^T \right) \mathbb{F}_{A_2 B_1} \Phi_{\mathcal{U}^{\otimes 2} \circ \mathcal{E}_{\bfq, s}}^2 \mathbb{F}_{A_2 B_1}\right] \\
&= \Tr\left[ \mathbb{F}_{A_2 B_1}\left( O \otimes \rho_0^T \otimes O \otimes \rho_0^T \right) \mathbb{F}_{A_2 B_1} \Phi_{\mathcal{U}^{\otimes 2} \circ \mathcal{E}_{\bfq, s}}^2 \right] \\
&= \Tr\left[ \left( O  \otimes O \otimes \rho_0^T \otimes \rho_0^T \right)  \Phi_{\mathcal{U}^{\otimes 2} \circ \mathcal{E}_{\bfq, s}}^2 \right] \\
&= \Tr\left[ \left( O  \otimes O \otimes \rho_0^T \otimes \rho_0^T \right)  \Phi_{ \mathcal{E}_{\bfq, s} \circ \mathcal{U}^{\otimes 2} }^2 \right] \\
&= \Tr\left[ \left( O  \otimes O \otimes (U\rho_0 U^\dagger)^T \otimes (U\rho_0 U^\dagger)^T \right)  \Phi_{\mathcal{E}_{\bfq, s} }^2 \right] \\
&=\Tr\left[ \left( O  \otimes O \otimes \sigma_0^T \otimes \sigma_0^T \right)  \Phi_{ \mathcal{E}_{\bfq, s}}^2 \right].
\end{align*}

\begin{figure}[tbp!]
    \centering
\includegraphics[width=0.9\linewidth]{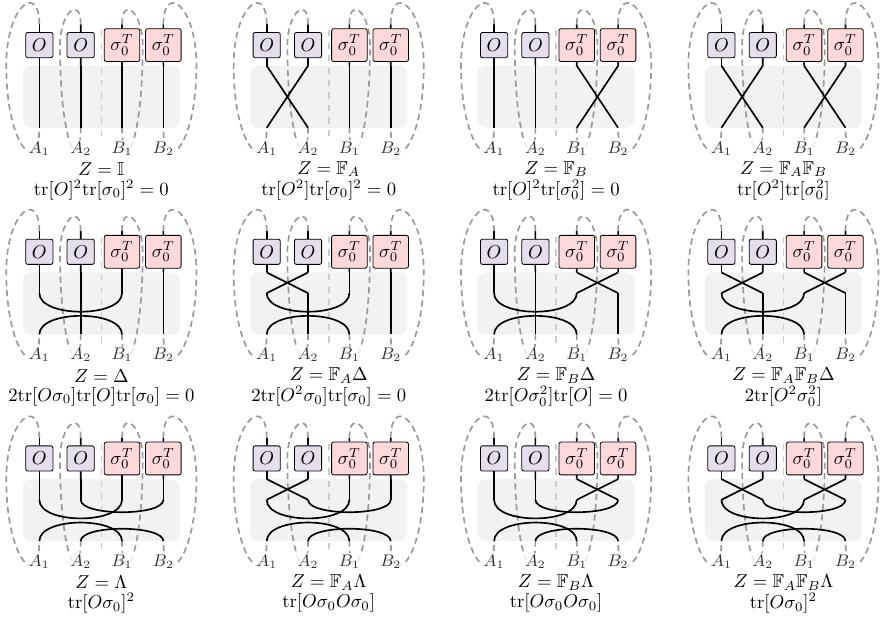}
    \caption{Evaluation of $\mathsf{E}(Z) = \Tr[(O \otimes O \otimes \sigma_0^T \otimes \sigma_0^T) Z]$ across all considered operators $Z$.}
    \label{fig:contraction_rule_2}
\end{figure}

It suffices to evaluate this expression on the product of every pair of canonical bases, according to the contraction rules shown in \autoref{fig:contraction_rule_2}. Moreover, since the observable $O \otimes O \otimes \sigma_0^T \otimes \sigma_0^T$ is invariant under local swaps on the spaces $A$ and $B$, the same results hold after permuting $Z$ with these operations. For conciseness, we define the inner product $\mathsf{E}(Z) = \Tr[(O \otimes O \otimes \rho_0^T \otimes \rho_0^T) Z]$ and, with slight notational overlap, write $\mathsf{E}(X, Y) := \mathsf{E}(XY)$. 
We write $X = P_{a_x} \{\openone, \Delta, \Lambda\} Q_{b_x}$ and $Y = P_{a_y} \{\openone, \Delta, \Lambda\} Q_{b_y}$ for $a_x, a_y, b_x, b_y \in \{+, -\}$ [cf. Example \ref{example:symmetric_subalgebra_A_2_2_d}]. Using the local permutation invariance, we can evaluate $\mathsf{E}$ on necessary pairs of $X, Y$:
\begin{align*}
    \mathsf{E}\left(P_{a_x} \openone Q_{b_x} ,  P_{a_y} \openone Q_{b_y} \right) &= \mathsf{E}(P_{a_x} P_{a_y} Q_{b_x} Q_{b_y}) = \mathsf{E}(\delta_{a_x, a_y} \delta_{b_x, b_y} P_{a_x} Q_{b_x}),  \\
    &= \frac{\delta_{a_x, a_y} \delta_{b_x, b_y}}{4} \mathsf{E}\left( \openone + a_x \F_A + b_x \F_B + a_x b_x \F_A \F_B \right) \\
    &= \frac{\delta_{a_x, a_y} \delta_{b_x, b_y}}{4}\left(  a_x b_x \Tr[O^2] \Tr[\sigma_0^2] \right),
    \\
    \mathsf{E}\left(P_{a_x} \Delta Q_{b_x},  P_{a_y} \openone Q_{b_y} \right) &= \mathsf{E}\left(P_{a_x} \Delta P_{a_y} Q_{b_x} Q_{b_y}\right) = \mathsf{E}\left(\delta_{b_x, b_y} P_{a_x} \Delta P_{a_y} Q_{b_x}\right)  \\
    &= \frac{\delta_{b_x, b_y}}{8} \mathsf{E}\big( \Delta + a_x \mathbb{F}_A \Delta + a_y \Delta \mathbb{F}_A + a_x a_y \mathbb{F}_A \Delta \mathbb{F}_A \\
    &\quad + b_x \Delta \F_B + a_x b_x \F_A \Delta \F_B + a_y b_x \Delta \F_A \F_B + a_x a_y b_x \F_A \Delta \F_A \F_B \big) \\
    &= \frac{\delta_{b_x, b_y}}{4} \left(a_x + a_y  \right)  \Tr[O^2 \sigma_0^2], \\
    \mathsf{E}\left(P_{a_x} \Lambda Q_{b_x},  P_{a_y} \openone Q_{b_y} \right) &= \mathsf{E}\left(P_{a_x} \Lambda P_{a_y} Q_{b_x} Q_{b_y}\right) = \mathsf{E}\left(\delta_{b_x, b_y} P_{a_x} \Lambda P_{a_y} Q_{b_x}\right) \\
    &= \frac{\delta_{b_x, b_y}}{8} \mathsf{E}\big( \Lambda + a_x \mathbb{F}_A \Lambda + a_y \Lambda \mathbb{F}_A + a_x a_y \mathbb{F}_A \Lambda \mathbb{F}_A  \\
    &\quad + b_x \Lambda \F_B + a_x b_x \F_A \Lambda \F_B + a_y b_x \Lambda \F_A \F_B + a_x a_y b_x \F_A \Lambda \F_A \F_B
    \big)
    \\
    &= \frac{\delta_{b_x, b_y}}{8} \left( ( 1 + a_x a_y + a_x b_y + a_y b_x ) \Tr[O \sigma_0]^2 + (a_x + a_y + b_x + a_x a_y b_x) \Tr[O \sigma_0 O \sigma_0]  \right),  \\
    \mathsf{E}\left(P_{a_x} \openone Q_{b_x}, P_{a_y} \Delta Q_{b_y}\right) &= \mathsf{E}\left( P_{a_x} P_{a_y} Q_{b_x} \Delta Q_{b_y}  \right) = \mathsf{E}\left( \delta_{a_x, a_y} P_{a_x} Q_{b_x} \Delta Q_{b_y} \right) \\
    &= \frac{\delta_{a_x, a_y}}{8}\big( 
    \Delta + b_x \F_B \Delta + b_y \Delta \F_B + b_x b_y \F_B \Delta \F_B
    \\ 
    &\quad + a_x \F_A \Delta + a_x b_x \F_A \F_B \Delta + a_x b_y \F_A \Delta \F_B + a_x b_x b_y \F_A \F_B \Delta \F_B \big) \\
    &= \frac{\delta_{a_x, a_y}}{4} (a_x b_x + a_x b_y) \Tr[O \sigma_0]^2   , \\
    \mathsf{E}\left(P_{a_x} \Delta Q_{b_x}, P_{a_y} \Delta Q_{b_y}\right) &= \mathsf{E}\left( P_{a_x} \Delta P_{a_y} Q_{b_x} \Delta Q_{b_y} \right) \\
    &= \frac{1}{4} \mathsf{E}\left( P_{a_x} \left( d \Delta + 2 \Lambda + (a_y + b_x)(\Delta +  \Delta \F_A \F_B) + a_y b_x (d \Delta + 2 \Lambda) \F_A \F_B  \right) Q_{b_y} \right) \\
    &= \frac{1}{4} \big( d \mathsf{E}(P_{a_x} \Delta Q_{b_y}) + 2 \mathsf{E}(P_{a_x} \Lambda Q_{b_y}) + (a_y + b_x) \mathsf{E}(P_{a_x} \Delta Q_{b_y}) \\
    &\quad + (a_y + b_x) \mathsf{E}(P_{a_x} \Delta \F_A \F_B Q_{b_y}) + d a_y b_x \mathsf{E}(P_{a_x} \Delta \F_A \F_B Q_{b_y}) + 2 a_y b_x  \mathsf{E}(P_{a_x} \Lambda \F_A \F_B Q_{b_y})\big) \\
    &= \frac{1}{8} \big( (d a_x b_y + a_x a_y b_y + a_x b_x b_y + d_{a_y} b_x + a_y + b_x) \Tr[O^2 \sigma_0^2] \\
    &\quad + (1 + a_x b_y + a_y b_x + a_x a_y b_x b_y) \Tr[O \sigma_0]^2 + (a_x + b_y + a_x a_y b_x + a_y b_x b_y) \Tr[O \sigma_0 O \sigma_0] \big), \\
    \mathsf{E}( P_{a_x} \Lambda Q_{b_x}, P_{a_y} \Delta Q_{b_y} ) &= \mathsf{E}( P_{a_x} \Lambda P_{a_y} Q_{b_x} \Delta Q_{b_y} ) \\
    &= \frac{1}{4} \mathsf{E}\left(  P_{a_x} \left(2d \Lambda (1 + a_y b_x) + 2(a_y + b_x) \Lambda   \right) Q_{b_y} \right) = \frac{1}{2} \left( d (1 + a_y b_x) + a_y + b_x \right) \mathsf{E}(P_{a_x} \Lambda Q_{b_y}) \\
    &= \frac{1}{8} \left( d (1 + a_y b_x) + a_y + b_x \right) \left( (1 + a_x b_y) \Tr[O \sigma_0]^2 + (a_x + b_y) \Tr[O \sigma_0 O \sigma_0] \right),\\
    \mathsf{E}( P_{a_x} \openone Q_{b_x}, P_{a_y} \Lambda Q_{b_y} ) &=  \mathsf{E}( P_{a_x} P_{a_y} Q_{b_x}\Lambda Q_{b_y} ) = \mathsf{E}(\delta_{a_x, a_y} P_{a_x} Q_{b_x}\Lambda Q_{b_y} ) \\
    &= \frac{\delta_{a_x, a_y}}{8} \mathsf{E}\big(  
    \Lambda + b_x \F_{B} \Lambda  + b_y \Lambda \F_B + b_x b_y \F_B \Lambda \F_B \\
    &\quad + a_x \F_A \Lambda + a_x b_x \F_A \F_B \Lambda + a_x b_y \F_A \Lambda \F_B + a_x b_x b_y \F_A \F_B \Lambda \F_B
    \big) \\
    &= \frac{\delta_{a_x, a_y}}{8} \left( (1 + a_x b_x + a_x b_y + b_x b_y) \Tr[O \sigma_0]^2 + (a_x + b_x + b_y + a_x b_x b_y) \Tr[O \sigma_0 O \sigma_0]  \right), \\
    \mathsf{E}( P_{a_x} \Delta Q_{b_x}, P_{a_y} \Lambda Q_{b_y} ) &= \mathsf{E}( P_{a_x} \Delta P_{a_y} Q_{b_x} \Lambda Q_{b_y} ) \\
    &= \frac{1}{4} \mathsf{E}\left(  P_{a_x} \left(  2(1 + a_y b_x) d \Lambda + 2(a_y + b_x) \Lambda 
    \right) Q_{b_y} \right) \\
    &=\frac{1}{8} \left( d (1 + a_y b_x) + a_y + b_x \right) \left( (1 + a_x b_y) \Tr[O \sigma_0]^2 + (a_x + b_y) \Tr[O \sigma_0 O \sigma_0] \right),  \\
    \mathsf{E}( P_{a_x} \Lambda Q_{b_x}, P_{a_y} \Lambda Q_{b_y} ) &= \mathsf{E}( P_{a_x} \Lambda P_{a_y} Q_{b_x}\Lambda Q_{b_y} ) \\
    &= \frac{1}{4} \mathsf{E}\left( P_{a_x} \left( d^2 (1 + a_y b_x) \Lambda + d(a_y + b_x) \Lambda  \right)  Q_{b_y} \right) \\
    &= \frac{1 }{4}\left(d^2 (1 + a_x b_y) + d(a_y + b_x)\right) \mathsf{E}(P_{a_x} \Lambda Q_{b_y}) \\
    &= \frac{1}{8} \left(d^2 (1 + a_x b_y) + d(a_y + b_x)\right) \left( (1 + a_x b_y) \Tr[O \sigma_0]^2 + (a_x + b_y) \Tr[O \sigma_0 O \sigma_0] \right).
\end{align*}

If we denote $\mathsf{T}(W, V)  = \max_{a_x, b_x, a_y, b_y} \mathsf{E}( P_{a_x} W Q_{b_x}, P_{a_y} V Q_{b_y} )$ for $W, V \in \{\openone, \Delta, \Lambda\}$, by the positivity of the terms involved in the evaluation, it follows that
\begin{equation}
\label{eqn:bilinear_form_1_upper_bound}
\begin{aligned}
    &\mathsf{T}(\openone, \openone) \leq \frac{1}{4} \Tr[O^2] \Tr[\sigma_0^2], \quad \mathsf{T}(\Delta, \openone) \leq \frac{1}{2}\Tr[O^2 \sigma_0^2], \quad \mathsf{T}(\Lambda, \openone) \leq \frac{1}{2} \left( \Tr[O \sigma_0]^2 + \Tr[O \sigma_0 O \sigma_0] \right), \\
    &\mathsf{T}(\openone, \Delta) \leq \frac{1}{2} \Tr[O \sigma_0]^2 , \quad \mathsf{T}(\Delta, \Delta) \leq \frac{1}{2} \left(  \Tr[O^2 \sigma_0^2] + \Tr[O \sigma_0]^2 + \Tr[O \sigma_0 O \sigma_0] \right), \\
    &\mathsf{T}(\Lambda, \Delta) \leq \frac{d + 1}{2} \left( \Tr[O \sigma_0]^2 + \Tr[O \sigma_0 O \sigma_0] \right), \quad \mathsf{T}(\openone, \Lambda) \leq \frac{1}{2} \left( \Tr[O \sigma_0]^2 + \Tr[O \sigma_0 O \sigma_0] \right), \\
    &\mathsf{T}(\Delta, \Lambda) \leq \frac{d + 1}{2} \left( \Tr[O \sigma_0]^2 + \Tr[O \sigma_0 O \sigma_0] \right), \quad \mathsf{T}(\Lambda, \Lambda) \leq \frac{d^2 + d}{2}\left( \Tr[O \sigma_0]^2 + \Tr[O \sigma_0 O \sigma_0] \right).
\end{aligned}
\end{equation}
Finally, recall \eref{eqn:choi_state_full_expression} and Fact \ref{fact:triangular_inequality_application}, we use the bilinearity of $\mathsf{E}(\cdot)$ and replace the coefficients by their absolute value and bound the values of $\mathsf{E}(\cdot)$ from above by their respective maximum values $\mathsf{T}(\cdot)$ to get
\begin{equation}
\label{eqn:bilinear_form_2_upper_bound}
\begin{aligned}
   \Tr\left[\left(O \otimes \rho_0^T \right)^{\otimes 2} \widetilde{\Phi}_{\bfq, s}^2 \right] &= \mathsf{E}\left(\Phi_{\mathcal{E}_{\bfq, s}}^2 \right) = \mathsf{E}\bigg(\big(P_{+} \left( u_{+} \openone \otimes \openone + v_{+} \Delta + c_{\nu_2} \Lambda \right) Q_{+}  + P_{-} \left( u_{-} \openone \otimes \openone + v_{-} \Delta + c_{\nu_4} \Lambda \right) Q_{-} \\
    &\quad \qquad \qquad \qquad + P_{+}\left( w \openone \otimes \openone + r \Delta + c_{\nu_3} \Lambda \right) Q_{-} + P_{-}\left( w \openone \otimes \openone + r \Delta + c_{\nu_3} \Lambda \right) Q_{+} \big)^2 \bigg) \\
    &\leq (|u_{+}| + |u_{-}| + 2|w|)^2 \mathsf{T}(\openone, \openone) + (|v_{+}| + |v_{-}| + 2 |r|)^2 \mathsf{T}(\Delta, \Delta) + (c_{\nu_2} + c_{\nu_4} + 2 c_{\nu_3})^2 \mathsf{T}(\Lambda, \Lambda) \\
    &\quad + \left( |u_{+}| + |u_{-}| + 2|w| \right)(|v_{+}| + |v_{-}| + 2 |r|)\left[ \mathsf{T}(\openone, \Delta) + \mathsf{T}(\Delta, \openone) \right] \\
    &\quad + \left( |u_{+}| + |u_{-}| + 2|w| \right)(c_{\nu_2} + c_{\nu_4} + 2 c_{\nu_3}) \left[ \mathsf{T}(\openone, \Lambda) + \mathsf{T}(\Lambda, \openone) \right] \\
    &\quad + (|v_{+}| + |v_{-}| + 2 |r|)(c_{\nu_2} + c_{\nu_4} + 2 c_{\nu_3}) \left[ \mathsf{T}(\Delta, \Lambda) + \mathsf{T}(\Lambda, \Delta) \right].
\end{aligned}
\end{equation}
Combining \eref{eqn:bilinear_form_1_upper_bound} and \eref{eqn:bilinear_form_2_upper_bound} grants us an upper bound for the variance term \hyperlink{variance_term_2}{(II)}.

\subsubsection{Bounding variance term \protect\hyperlink{variance_term_3}{(III)}}
\label{appendix:bound_term_3}
Following the same procedure as Appendix \ref{appendix:bound_term_2}, we reformulate the variance in terms of $\Phi_{\mathcal{E}_{\bfq, s}}$
$$
\begin{aligned}
    &\Tr\left[ \left(  O \otimes \rho_0^T  \otimes \openone^{\otimes 2}\right) \widetilde{\Phi}_{\bfq, s} \left(\openone^{\otimes 2} \otimes  O \otimes \rho_0^T \right) \widetilde{\Phi}_{\bfq, s}  \right] \\
    =~&\Tr\left[ \left(  O \otimes \rho_0^T  \otimes \openone^{\otimes 2}\right) \mathbb{F}_{A_2 B_1} \Phi_{\mathcal{U}^{\otimes 2} \circ \mathcal{E}_{\bfq, s}} \mathbb{F}_{A_2 B_1} \left(\openone^{\otimes 2} \otimes  O \otimes \rho_0^T \right) \mathbb{F}_{A_2 B_1} \Phi_{\mathcal{U}^{\otimes 2} \circ \mathcal{E}_{\bfq, s}} \mathbb{F}_{A_2 B_1}  \right] \\
    =~&\Tr\left[ \mathbb{F}_{A_2 B_1}\left(  O \otimes \rho_0^T  \otimes \openone^{\otimes 2}\right) \mathbb{F}_{A_2 B_1} \Phi_{\mathcal{U}^{\otimes 2} \circ \mathcal{E}_{\bfq, s}} \mathbb{F}_{A_2 B_1} \left(\openone^{\otimes 2} \otimes  O \otimes \rho_0^T \right) \mathbb{F}_{A_2 B_1} \Phi_{\mathcal{U}^{\otimes 2} \circ \mathcal{E}_{\bfq, s}}   \right] \\
    =~&\Tr\left[ \left( O \otimes \openone \otimes \rho_0^T \otimes \openone \right) \Phi_{\mathcal{U}^{\otimes 2} \circ \mathcal{E}_{\bfq, s}} \left( \openone \otimes O \otimes \openone \otimes \rho_0^T \right) \Phi_{\mathcal{U}^{\otimes 2} \circ \mathcal{E}_{\bfq, s}}   \right] \\
    =~&\Tr\left[ \left( O \otimes \openone \otimes \rho_0^T \otimes \openone \right) \Phi_{\mathcal{E}_{\bfq, s} \circ \mathcal{U}^{\otimes 2}} \left( \openone \otimes O \otimes \openone \otimes \rho_0^T \right) \Phi_{\mathcal{E}_{\bfq, s} \circ \mathcal{U}^{\otimes 2}}   \right] \\
    =~&\Tr\left[ \left( O \otimes \openone \otimes (U \rho_0 U^\dagger)^T \otimes \openone \right) \Phi_{\mathcal{E}_{\bfq, s}} \left( \openone \otimes O \otimes \openone \otimes (U \rho_0 U^\dagger)^T \right) \Phi_{\mathcal{E}_{\bfq, s}} \right] \\
    =~&\Tr\left[ \left( O \otimes \openone \otimes \sigma_0^T \otimes \openone \right) \Phi_{\mathcal{E}_{\bfq, s}} \left( \openone \otimes O \otimes \openone \otimes \sigma_0^T \right) \Phi_{\mathcal{E}_{\bfq, s}} \right].
\end{aligned}
$$

\begin{figure}[tbp!]
    \centering
    \includegraphics[width=0.4\linewidth]{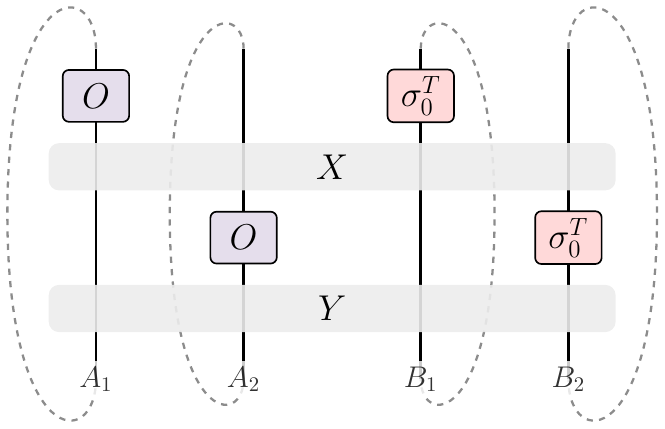}
    \caption{Diagrammatic illustration of evaluating the bilinear scalar function $\mathsf{B}(X, Y)$.}
    \label{fig:contraction_rule_3}
\end{figure}

Define the bilinear form $\mathsf{B}(X, Y) = \Tr[\left( O \otimes \openone \otimes \sigma_0^T \otimes \openone \right) X\left( \openone \otimes O \otimes \openone \otimes \sigma_0^T \right) Y ]$ for each $X = P_{a_x} \{\openone, \Delta, \Lambda\} Q_{b_x}$ and $Y = P_{a_y} \{\openone, \Delta, \Lambda\} Q_{b_y}$, and define $\mathsf{S}(W, V) = \max_{a_x, b_x, a_y, b_y} \mathsf{B}(P_{a_x} W Q_{b_x}, P_{a_y} V Q_{b_y} )$. The rule for evaluating $\mathsf{B}(X, Y)$ for each $X, Y$ is illustrated in \autoref{fig:contraction_rule_3}. Here we hide the internal wire-connecting structure of the basis elements obtained via expanding each $X, Y$. By the positivity of the terms in the evaluation outcomes, we directly derive upper bounds for $\mathsf{S}$ similar to \eref{eqn:bilinear_form_1_upper_bound}.

\begin{align*}
    \mathsf{S}(\openone, \openone) &\leq \frac{1}{16} \big( \b(\openone, \openone) + \b(\openone, \F_A) + \b(\openone, \F_B) + \b(\openone, \F_A \F_B) \\
    &\quad + \b(\F_A, \openone) + \b(\F_A, \F_A) + \b(\F_A, \F_B) + \b(\F_A, \F_A \F_B) \\
    &\quad + \b(\F_B, \openone) + \b(\F_B, \F_A) + \b(\F_B, \F_B) + \b(\F_B, \F_A \F_B) \\
    &\quad + \b(\F_A\F_B, \openone) + \b(\F_A\F_B, \F_A) + \b(\F_A\F_B, \F_B) + \b(\F_A\F_B, \F_A \F_B)
    \big) \\
    &= \frac{1}{16} (d + 2)^2 \Tr[O^2]  \Tr[\sigma_0^2] , \\
    \mathsf{S}(\openone, \Delta) &\leq \frac{1}{16} \big( \b(\openone, \Delta) + \b(\openone, \F_A \Delta) + \b(\openone,  \Delta\F_B) + \b(\openone, \F_A \Delta \F_B) \\
    &\quad + \b(\F_A, \Delta) + \b(\F_A,  \F_A \Delta) + \b(\F_A, \Delta\F_B) + \b(\F_A, \F_A \Delta \F_B) \\
    &\quad + \b(\F_B, \Delta) + \b(\F_B, \F_A \Delta) + \b(\F_B, \Delta\F_B) + \b(\F_B, \F_A \Delta \F_B) \\
    &\quad + \b(\F_A\F_B, \Delta) + \b(\F_A\F_B, \F_A \Delta) + \b(\F_A\F_B, \Delta\F_B) + \b(\F_A\F_B, \F_A \Delta \F_B)
    \big) \\
    &= \frac{1}{16} \left((4d+8) \Tr[O^2 \sigma_0^2] + (2d + 4) \Tr[O^2] \Tr[\sigma_0^2]  \right), \\
\mathsf{S}(\openone, \Lambda) &\leq \frac{1}{16} \big( \b(\openone, \Lambda) + \b(\openone, \F_A \Lambda) + \b(\openone,  \Lambda\F_B) + \b(\openone, \F_A \Lambda \F_B) \\
    &\quad + \b(\F_A, \Lambda) + \b(\F_A,  \F_A \Lambda) + \b(\F_A, \Lambda\F_B) + \b(\F_A, \F_A \Lambda \F_B) \\
    &\quad + \b(\F_B, \Lambda) + \b(\F_B, \F_A \Lambda) + \b(\F_B, \Lambda\F_B) + \b(\F_B, \F_A \Lambda \F_B) \\
    &\quad + \b(\F_A\F_B, \Lambda) + \b(\F_A\F_B, \F_A \Lambda) + \b(\F_A\F_B, \Lambda\F_B) + \b(\F_A\F_B, \F_A \Lambda \F_B)
    \big) \\
    &= \frac{1}{16}\left( 2 \Tr[O \sigma_0]^2 + 2\Tr[O \sigma_0 O \sigma_0] + (d + 6) \Tr[O^2 \sigma_0^2] + \Tr[O^2] \Tr[\sigma_0^2]  \right), \\
\mathsf{S}(\Delta, \openone) &= \max_{a_x, b_x, a_y, b_y} \mathsf{B}(P_{a_x} \Delta Q_{b_x}, P_{a_y} \openone Q_{b_y} ) = \max_{a_x, b_x, a_y, b_y} \mathsf{B}(\F_A \F_B P_{a_y} \openone Q_{b_y} \F_A \F_B,  \F_A \F_B P_{a_x} \Delta Q_{b_x} \F_A \F_B) \\
&= \max_{a_x, b_x, a_y, b_y} \mathsf{B}(P_{a_y} \openone Q_{b_y},   P_{a_x} \Delta Q_{b_x} ) = \mathsf{S}(\openone, \Delta) \\
&\leq \frac{1}{16} \left( (d+4) \Tr[O^2 \sigma_0] + (4d+8) \Tr[O^2 \sigma_0^2] + \Tr[O^2] + (2d + 4) \Tr[O^2] \Tr[\sigma_0^2]  \right), \\
\mathsf{S}(\Delta, \Delta) &\leq \frac{1}{16} \big( \b(\Delta, \Delta) + \b(\Delta, \F_A \Delta) + \b(\Delta, \Delta \F_B) + \b(\Delta, \F_A \Delta  \F_B)  \\
&\quad + \b(\F_A \Delta, \Delta) + \b(\F_A \Delta, \F_A \Delta) + \b(\F_A \Delta, \Delta \F_B) + \b(\F_A \Delta, \F_A \Delta \F_B) \\
&\quad + \b(\Delta \F_B, \Delta) + \b(\Delta \F_B, \F_A \Delta) + \b(\Delta \F_B, \Delta \F_B) + \b(\Delta \F_B, \F_A \Delta  \F_B) \\
&\quad + \b(\F_A\Delta \F_B, \Delta) + \b(\F_A\Delta \F_B, \F_A \Delta) + \b(\F_A\Delta \F_B, \Delta \F_B) + \b(\F_A\Delta \F_B, \F_A \Delta  \F_B)
\big) \\
&= \frac{1}{16}\big( (2d + 4)\Tr[O \sigma_0 O \sigma_0] + (2d + 12)\Tr[O^2 \sigma_0^2] + 4 \Tr[O^2] \Tr[\sigma_0^2] + 4\Tr[O \sigma_0]^2   \big)
\\
\mathsf{S}(\Delta, \Lambda) &\leq \frac{1}{16} \big( \b(\Delta, \Lambda) + \b(\Delta, \F_A \Lambda) + \b(\Delta, \Lambda \F_B) + \b(\Delta, \F_A \Lambda  \F_B)  \\
&\quad + \b(\F_A \Delta, \Lambda) + \b(\F_A \Delta, \F_A \Lambda) + \b(\F_A \Delta, \Lambda \F_B) + \b(\F_A \Delta, \F_A \Lambda \F_B) \\
&\quad + \b(\Delta \F_B, \Lambda) + \b(\Delta \F_B, \F_A \Lambda) + \b(\Delta \F_B, \Lambda \F_B) + \b(\Delta \F_B, \F_A \Lambda  \F_B) \\
&\quad + \b(\F_A\Delta \F_B, \Lambda) + \b(\F_A\Delta \F_B, \F_A \Lambda) + \b(\F_A\Delta \F_B, \Lambda \F_B) + \b(\F_A\Delta \F_B, \F_A \Lambda \F_B)
\big) \\
&=\frac{1}{16}\left( (2d + 4) \left(\Tr[O^2 \sigma_0^2] + \Tr[O \sigma_0]^2 + \Tr[O \sigma_0 O \sigma_0] \right)  \right)
\\
\mathsf{S}(\Lambda, \openone) &= \max_{a_x, b_x, a_y, b_y} \mathsf{B}(P_{a_x} \Lambda Q_{b_x},   P_{a_y} \openone Q_{b_y} ) = \max_{a_x, b_x, a_y, b_y} \mathsf{B}(\F_A \F_B  P_{a_y} \openone Q_{b_y} \F_A \F_B, \F_A \F_B P_{a_x} \Lambda Q_{b_x} \F_A \F_B ) \\
&= \max_{a_x, b_x, a_y, b_y} \mathsf{B}( P_{a_y} \openone Q_{b_y} , P_{a_x} \Lambda Q_{b_x} ) = \mathsf{S}(\openone, \Lambda) \\
&\leq \frac{1}{16}\left( 2 \Tr[O \sigma_0]^2 + 2\Tr[O \sigma_0 O \sigma_0] + (d + 6) \Tr[O^2 \sigma_0^2] + \Tr[O^2] \Tr[\sigma_0^2]  \right), \\
\mathsf{S}(\Lambda, \Delta) &= \max_{a_x, b_x, a_y, b_y} \mathsf{B}(P_{a_x} \Lambda Q_{b_x},   P_{a_y} \Delta Q_{b_y} ) = \max_{a_x, b_x, a_y, b_y} \mathsf{B}(\F_A \F_B  P_{a_y} \Delta Q_{b_y} \F_A \F_B, \F_A \F_B P_{a_x} \Lambda Q_{b_x} \F_A \F_B ) \\
&= \max_{a_x, b_x, a_y, b_y} \mathsf{B}( P_{a_y} \Delta Q_{b_y} , P_{a_x} \Lambda Q_{b_x} ) = \mathsf{S}(\Delta, \Lambda) \\
&\leq \frac{1}{16}\left((2d + 4) \left(\Tr[O^2 \sigma_0^2] + \Tr[O \sigma_0]^2 + \Tr[O \sigma_0 O \sigma_0] \right)\right), \\
\mathsf{S}(\Lambda, \Lambda) &\leq \frac{1}{16} \big( \b(\Lambda, \Lambda) + \b(\Lambda, \F_A \Lambda) + \b(\Lambda, \Lambda \F_B) + \b(\Lambda, \F_A \Lambda  \F_B)  \\
&\quad + \b(\F_A \Lambda, \Lambda) + \b(\F_A \Lambda, \F_A \Lambda) + \b(\F_A \Lambda, \Lambda \F_B) + \b(\F_A \Lambda, \F_A \Lambda \F_B) \\
&\quad + \b(\Lambda \F_B, \Lambda) + \b(\Lambda \F_B, \F_A \Lambda) + \b(\Lambda \F_B, \Lambda \F_B) + \b(\Lambda \F_B, \F_A \Lambda  \F_B) \\
&\quad + \b(\F_A\Lambda \F_B, \Lambda) + \b(\F_A\Lambda \F_B, \F_A \Lambda) + \b(\F_A\Lambda \F_B, \Lambda \F_B) + \b(\F_A\Lambda \F_B, \F_A \Lambda \F_B)
\big) \\
&= \frac{1}{16}(d + 2)^2 \Tr[O \sigma_0]^2.
\end{align*}
Analogously, by the bilinearity of $\mathsf{B}(\cdot)$ and Fact \ref{fact:triangular_inequality_application}, we bound each $\mathsf{B}(X, Y)$ from above by the corresponding maximum values $\mathsf{S}(\cdot)$ and replace the coefficients by their absolute values. An upper bound is readily obtained
\begin{equation}
\label{eqn:bilinear_form_3_upper_bound}
\begin{aligned}
    &\Tr\left[ \left( O \otimes \openone \otimes \sigma_0^T \otimes \openone \right) \Phi_{\mathcal{E}_{\bfq, s}} \left( \openone \otimes O \otimes \openone \otimes \sigma_0^T \right) \Phi_{\mathcal{E}_{\bfq, s}} \right] = \b\left( \Phi_{\mathcal{E}_{\bfq, s}}, \Phi_{\mathcal{E}_{\bfq, s}} \right) \\
    &\leq (|u_{+}| + |u_{-}| + 2|w|)^2 \mathsf{S}(\openone, \openone) + (|v_{+}| + |v_{-}| + 2|r|)^2 \mathsf{S}(\Delta, \Delta) + (c_{\nu_2} + c_{\nu_4} + 2 c_{\nu_3})^2 \mathsf{S}(\Lambda, \Lambda) \\
    &\quad + (|u_{+}| + |u_{-}| + 2|w|)(|v_{+}| + |v_{-}| + 2|r|) \left[ \mathsf{S}(\openone, \Delta) + \mathsf{S}(\Delta, \openone) \right] \\
    &\quad + (|u_{+}| + |u_{-}| + 2|w|)(c_{\nu_2} + c_{\nu_4} + 2c_{\nu_3})\left[ \mathsf{S}(\openone, \Lambda) + \mathsf{S}(\Lambda, \openone) \right] \\
    &\quad +(|v_{+}| + |v_{-}| + 2|r|)(c_{\nu_2} + c_{\nu_4} + 2 c_{\nu_3}) \left[ \mathsf{S}(\Delta, \Lambda) + \mathsf{S}(\Lambda, \Delta) \right].
\end{aligned}
\end{equation}
Combining the above inequalities yields an upper bound for the variance term \hyperlink{variance_term_3}{(III)}. 

\subsection{Proof of \lref{lemma:small_s_regime_estimator_Lambda_variance}}\label{appendix:Proof_small_s_regime_estimator_Lambda_variance}
We are now ready to prove \lref{lemma:small_s_regime_estimator_Lambda_variance}. We restrict $s \leq d$ and choose the ensemble $(\Y, \bfq)$ as the Plancherel measure \cite{Borodin2000} supported on all legitimate partitions $\lambda \vdash_d s$:
$$
\Y = \Y_{s}^d, \quad q_{\lambda} := \frac{s_{\lambda}^2}{s!}.
$$
Then it suffices to evaluate the coefficients $c_{\nu_1}, c_{\nu_2}, c_{\nu_3}$ and $c_{\nu_4}$. We will use the following lemma:

\begin{lemma}
\label{lemma:plancherel_coefficient_evaluation_rule}
    For a mixed Young diagram $\nu = (\nu_{\ell}, \nu_r)$ where $|\nu_{\ell}| = |\nu_r| = t$, for $d \geq n \geq t$, it holds that
    $$
    \sum_{\lambda, \mu \vdash_d n } s_{\lambda} s_{\mu} C_{\lambda, \nu}^{\mu} = \binom{n}{t}^2 (n - t)! s_{\nu_{\ell}} s_{\nu_r}.
    $$
\end{lemma}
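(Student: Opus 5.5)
We would prove the identity by interpreting the left-hand side as a trace over $(\mathbb{C}^d)^{\otimes n}$ and counting with characters. We use the character form of the mixed Littlewood--Richardson coefficient from \lref{lemma:mixed_littlewood_richardson}, namely $C_{\lambda,\nu}^{\mu}=\int \chi_\lambda(V)\chi_\nu(V)\overline{\chi_\mu(V)}\dd V$. Summing against $s_\lambda s_\mu$ and applying Schur--Weyl duality (\lref{lemma:schur_weyl}) gives $\sum_{\lambda\vdash_d n}s_\lambda\chi_\lambda(V)=\Tr[V^{\otimes n}]=\Tr(V)^n$. Consequently,
\[
\sum_{\lambda,\mu\vdash_d n}s_\lambda s_\mu C^{\mu}_{\lambda,\nu}=\int |\Tr V|^{2n}\,\chi_\nu(V)\,\dd V .
\]

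Next we expand $\chi_\nu$, the character of the mixed irrep $W_\nu\subset(\mathbb{C}^d)^{\otimes t}\otimes(\mathbb{C}^{d*})^{\otimes t}$, in a form that integrates well against $|\Tr V|^{2n}$. Rather than using determinantal formulas, we use \lref{lemma:mixed_littlewood_richardson} a second time: $\chi_\nu=\sum_\gamma C^{\cdot}$ branching shows that $\chi_{\nu_\ell}\overline{\chi_{\nu_r}}=\sum_{k}\sum_{\gamma\vdash k}\chi_{(\nu_\ell/\gamma,\nu_r/\gamma)}$-type terms. The cleaner route is to avoid expanding $\chi_\nu$ at all. We write $|\Tr V|^{2n}\chi_\nu=\Tr[(V^{\otimes n}\otimes\bar V^{\otimes n})\otimes V_\nu]$ and evaluate the integral as the dimension of the $\U(d)$-invariants in $(\mathbb{C}^d)^{\otimes n}\otimes(\mathbb{C}^{d*})^{\otimes n}\otimes W_\nu$.

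To count these invariants, we decompose $(\mathbb{C}^d)^{\otimes n}\otimes(\mathbb{C}^{d*})^{\otimes n}$ by mixed Schur--Weyl duality (Definition~\ref{def:mixed_shur_weyl}). This gives $\bigoplus_{\kappa\in\Y^d_{n,n}}W_\kappa\otimes S_\kappa$, where $\kappa=(\kappa_\ell,\kappa_r)$ with $|\kappa_\ell|=|\kappa_r|=n-k$, and the multiplicity $\dim S_\kappa$ is the dimension of the walled Brauer irrep. The invariants are then $\sum_\kappa \dim S_\kappa\cdot[\kappa=\nu^*]$. Since $W_\nu^*\cong W_{(\nu_r,\nu_\ell)}$ and $d\ge n$ (so $\mathcal{A}^d_{n,n}\cong\mathcal{B}^d_{n,n}$ is semisimple with generic dimensions), only $\kappa=(\nu_r,\nu_\ell)$ contributes, with $k=n-t$ contractions. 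The value is therefore the dimension of the walled Brauer irrep indexed by $(\nu_r,\nu_\ell)$ with $n-t$ contractions. By the standard combinatorial formula for that dimension, this equals $\binom{n}{t}^2(n-t)!\,s_{\nu_\ell}s_{\nu_r}$. The factor arises by choosing which $t$ of the $n$ left strands and $t$ of the $n$ right strands survive, matching the remaining $n-t$ pairs across the wall in $(n-t)!$ ways, and labeling the survivors by standard tableaux of $\nu_\ell$ and $\nu_r$.

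The main obstacle is justifying this dimension formula and the genericity condition. The condition $\ell(\kappa_\ell)+\ell(\kappa_r)\le d$ must hold, and it does because $\ell(\nu_\ell)+\ell(\nu_r)\le 2t$; but the claim that $\dim S_\kappa$ takes its generic value requires $d\ge 2n$ in general. If $n\le d<2n$ causes trouble, we would instead argue directly with the integral. Using the hypothesis $n\le d$, the moments $\int|\Tr V|^{2n}\chi_{\nu}$ agree with their Gaussian/Wick (stable) values. Concretely, we would compute $\int |\Tr V|^{2n}\chi_{\nu_\ell}(V)\overline{\chi_{\nu_r}(V)}\dd V$ via $\Tr(V)^n\chi_{\nu_\ell}=\sum_\alpha(\#\text{skew tableaux})\chi_\alpha$, whose orthogonality is exact for $n+t\le$ stable range. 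We would then subtract the lower-contraction corrections coming from the expansion of $\chi_\nu$ from \lref{lemma:mixed_littlewood_richardson}, checking that the result collapses to the stated product. We would try the walled Brauer dimension count first and fall back on this skew-tableau count if it fails.
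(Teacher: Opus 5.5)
Your first step is correct, and it is a genuinely different route from the paper. By Schur--Weyl duality $\sum_{\lambda\vdash_d n}s_\lambda\chi_\lambda(V)=(\Tr V)^n$, so the left-hand side equals $\int|\Tr V|^{2n}\chi_\nu(V)\,\dd V$. This is the multiplicity of $W_\nu$ in $(\mathbb{C}^d)^{\otimes n}\otimes((\mathbb{C}^d)^*)^{\otimes n}$.

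The gap is in evaluating this number under the hypothesis $d\ge n$. First, the isomorphism $\mathcal{A}^d_{n,n}\cong\mathcal{B}^d_{n,n}$, and with it the generic cell-module dimensions, requires $d\ge 2n$, not $d\ge n$. Second, your fallback claim that for $n\le d$ the moments $\int|\Tr V|^{2n}\chi_\nu$ take their stable values is false. Write $\chi_\nu=\sum_\gamma(-1)^{|\gamma|}s_{\nu_\ell/\gamma}\overline{s_{\nu_r/\gamma'}}$. The integrand then contains terms of degree $n+t$ in $V$ and in $\bar V$, and $\int_{\U(d)}f\bar g\,\dd V$ agrees with the Hall inner product in general only up to degree $d$. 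So, as your own remark about ``$n+t\le$ stable range'' suggests, your argument proves the identity only for $d\ge n+t$.

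In that range your argument is complete. The integral is independent of $d$, so it can be evaluated at $d\ge 2n$, where the walled Brauer dimension count gives $\binom{n}{t}^2(n-t)!\,s_{\nu_\ell}s_{\nu_r}$. This replaces the paper's chain of three facts: Koike's formula, $\sum_\lambda s_\lambda C^\lambda_{\nu_\ell,\gamma}=\binom{n}{t}s_{\nu_\ell}s_\gamma$, and $\sum_{\gamma\vdash n-t}s_\gamma^2=(n-t)!$. It has the advantage of showing exactly where $d$ enters.

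The window $n\le d<n+t$ cannot be closed by any argument, because the identity is false there. Take $t=1$ and $\nu=(\square,\square)$, the adjoint representation, so $\chi_\nu=|\Tr V|^2-1$. Use $\int|\Tr V|^{2k}\,\dd V=\sum_{\lambda\vdash_d k}s_\lambda^2$, which equals $k!$ for $k\le d$ and $(d+1)!-1$ for $k=d+1$. For $n=d$ the left-hand side is $(d+1)!-1-d!=d\cdot d!-1$, whereas the right-hand side is $d^2(d-1)!=d\cdot d!$.

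Concretely, for $d=n=2$ one has $C^{(2)}_{(2),\nu}=C^{(1,1)}_{(2),\nu}=C^{(2)}_{(1,1),\nu}=1$. But $C^{(1,1)}_{(1,1),\nu}=0$, since $\det\otimes\mathrm{adj}$ is irreducible with highest weight $(2,0)$. So the sum is $3\neq 4$.

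The paper's proof has the same defect. Koike's formula $C^\mu_{\lambda,\nu}=\sum_\gamma C^\lambda_{\nu_\ell,\gamma}C^\mu_{\nu_r,\gamma}$ is a stable-range identity, guaranteed when $\ell(\lambda)+\ell(\mu)\le d$. At $d=2$ it returns $1$ instead of $0$ for $\lambda=\mu=(1,1)$. The lemma should assume $d\ge n+t$.

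In the paper's application ($t\le 2$, $s\le d$), this only perturbs the $c_\nu$ at $s\in\{d-1,d\}$, by relative amounts of order $1/d!$. For example, $c_{\nu_1}=\frac{d}{d^2-1}-\frac{1}{(d^2-1)\,d!}$ at $s=d$. So the asymptotic variance bounds survive.
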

\begin{proof}[Proof of \lref{lemma:plancherel_coefficient_evaluation_rule}]
    Using Lemmas \ref{lemma:littlewood_richardson} and \ref{lemma:mixed_littlewood_richardson} and rearranging, it follows that
    \begin{align*}
        \sum_{\lambda, \mu \vdash_d n } s_{\lambda} s_{\mu} C_{\lambda, \nu}^{\mu} &= \sum_{\lambda, \mu \vdash_d n} s_{\lambda } s_{\mu} \sum_{\gamma \vdash_d n - t} C_{\nu_{\ell}, \gamma}^{\lambda} C_{\nu_r, \gamma}^{\mu} \\
        &= \sum_{\gamma \vdash_d n - t} \left( \sum_{\lambda \vdash_d n} s_{\lambda} C_{\nu_{\ell}, \gamma}^{\lambda} \right) \left( \sum_{\mu \vdash_d n} s_{\mu} C_{\nu_r, \gamma}^{\mu} \right) \\
        &= \binom{n}{t}^2 \sum_{\gamma \vdash_d n - t} s_{\gamma}^2 s_{\nu_{\ell}} s_{\nu_r} = \binom{n}{t}^2 (n - t)! s_{\nu_{\ell}} s_{\nu_r},
    \end{align*}
    concluding the proof. Graphically, using the language of \cite{cox2008blocks}, the leading coefficient of $s_{\nu_{\ell}} s_{\nu_r}$ corresponds to counting the number of partial one-row $(n, n, n - t)$ diagrams, where we have $n - t$ northern arcs due to contraction, freely paired up by some permutation in $\mathfrak{S}_{n - t}$. On both sides of the wall, there are $\binom{n}{t}$ choices of non-contracting propagating wires. A simple combinatorial argument yields the coefficient $\binom{n}{t}^2 (n-t)!$.
\end{proof}

Using \lref{lemma:plancherel_coefficient_evaluation_rule}, since each $\nu \in \{\nu_1, \nu_2, \nu_3, \nu_4\}$ satisfies $\nu = (\nu_{\ell}, \nu_{r})$ with $|\nu_{\ell}| = |\nu_r| \leq s \leq d$, we have
$$
\begin{aligned}
c_{\nu} &= \frac{1}{d_{\nu}} \sum_{\lambda, \mu \in \Y } \sqrt{q_{\lambda} q_{\mu}} C_{\lambda, \nu}^{\mu} \\
&= \frac{1}{d_{\nu} s!} \sum_{\lambda, \mu \vdash_d s} s_{\lambda} s_{\mu} C_{\lambda, \nu}^{\mu} = \frac{1}{d_{\nu} s!} \binom{s}{|\nu_{\ell}|}^2 (s - |\nu_{\ell}|)! s_{\nu_{\ell}} s_{\nu_{r}}. 
\end{aligned}
$$
Then it is a standard procedure to compute $c_{\nu}$ using the dimension formulas in \lref{lemma:dimension_formulas}:

\begin{align*}
   c_{\nu_1} &= \frac{1}{s! d_{\left(\sybox, \sybox\right)}} \binom{s}{1}^2 (s-1)! s_{\sybox}^2 =  \frac{s}{d^2 - 1}, \\
    c_{\nu_2} &= \frac{1}{s! d_{\left(\syrow, \syrow\right)}} \binom{s}{2}^2 (s-2)! s_{\syrow}^2 = \frac{s(s-1)}{d^2(d+3)(d-1)}, \\
    c_{\nu_3} &= \frac{1}{s! d_{\left(\syrow, \sycol\right)}} \binom{s}{2}^2 (s-2)! s_{\syrow} s_{\sycol} = \frac{1}{s! d_{\left(\sycol, \syrow\right)}} \binom{s}{2}^2 (s-2)! s_{\sycol} s_{\syrow} = \frac{s(s-1)}{(d^2 - 4)(d^2 - 1)}, \\
    c_{\nu_4} &= \frac{1}{s! d_{\left(\sycol, \syrow\right)}} \binom{s}{2}^2 (s-2)! s_{\sycol} s_{\syrow} = \frac{s(s - 1)}{d^2(d-3)(d+1)}.
\end{align*}
Notably, the expression of $c_{\nu_1}$, or equivalently $\mathsf{p}_{\bfq}$, matches the optimality result in \cite[Corollary 4]{Yoshida_2026} via the conversion from $\mathsf{p}_{\bfq}$ to the entanglement infidelity \cite{Yang_2020}, confirming its optimality. Since $c_{\nu} \geq 0$ for $\nu = \nu_1, \nu_2, \nu_3, \nu_4$ and $s \leq d$, the scaling of the coefficients in \eref{eqn:auxiliary_coefficients} and the correction coefficient $\Gamma$ reads
\begin{align*}
    &|u_{+}| \leq \frac{2}{d(d+1)} + \frac{2}{d^2(d+1)} c_{\nu_2} = \bigo{ \frac{1}{d^2} + \frac{s^2}{d^7} } = \bigo{\frac{1}{d^2}}, \quad |u_{-}| \leq \frac{2}{d(d-1)} = \bigo{\frac{1}{d^2}}, \\
    &|v_{+}| \leq \frac{1}{d} c_{\nu_1} = \bigo{ \frac{s}{d^3} }, \quad |v_{-}| \leq \frac{1}{d} c_{\nu_1} = \bigo{ \frac{s}{d^3} }, \\
    &|w| = \frac{2}{d^2}(|c_{\nu_3} - c_{\nu_1}|) = \frac{2s(d^2 - s - 3)}{d^2(d^2-1)(d^2 - 4)} = \bigo{ \frac{s}{d^4} }, \quad |r| = \frac{1}{d}(c_{\nu_1} - c_{\nu_3}) =\frac{s(d^2 - s - 3)}{d(d^2-1)(d^2 - 4)} = \bigo{ \frac{s}{d^3} },  \\
    &\Gamma = \frac{1-\p_{\bfq}}{d\p_{\bfq}} = \frac{1 - c_{\nu_1}}{d c_{\nu_1}} = \bigtheta{ \frac{d}{s}\left(1 - \frac{s}{d^2} \right) } = \bigtheta{ \frac{d}{s} }.
\end{align*}
Hence, we have that
$$
\begin{aligned}
&|u_{+}| + |u_{-}| + 2|w| \leq \bigo{ \frac{1}{d^2} }, \quad |v_{+}| + |v_{-}| + 2|r| \leq \bigo{ \frac{s}{d^3} }, \quad c_{\nu_2} + c_{\nu_4} + 2 c_{\nu_3} \leq \bigo{ \frac{s^2}{d^4} }.
\end{aligned}
$$
Using Corollary \ref{corollary:inner_product_terms_upper_bound} and that $O \in \obs(\B)$, $\Tr[\sigma_0^2] = \Tr[\rho_0^2] < \Tr[\rho^2] \leq \P$ [cf. Problem \ref{prob:CSEU}], we can estimate the scaling of variance term \hyperlink{variance_term_1}{(I)}: Recall \eref{eqn:term_1_upper_bound}, and for notational elegance we use the handy fact that $\min\{1, x\} \leq \min\{2, x\} \leq 2 \min\{1, x\}$ for any $x \geq 0$. The scaling upper bound thus reads
\begin{align*}
    \Tr\left[ \left\{\Ket{U} \Bra{U} + \frac{1-\p_{\bfq}}{d\p_{\bfq}} \openone \otimes \openone, O \otimes \rho_0^T \right\}^{\otimes 2} \widetilde{\Phi}_{\bfq, s}
        \right] &\leq \bigo{ \frac{1}{d^2} } \cdot \bigo{ \frac{d^2}{s^2} \min\{1, \B\P \} } + \bigo{ \frac{s}{d^3} } \cdot \bigo{ \frac{d^2}{s^2} \min\{1, \B\P\} } \\
        &\qquad + \bigo{ \frac{s^2}{d^4} } \cdot \bigo{ \frac{d^2}{s^2} \B \P + d \P + \min\{1, \B\P\} } \\
        &\leq \bigo{ \left( \frac{1}{s^2} + \frac{s^2}{d^4} + \frac{1}{sd} \right) \min\{1, \B\P\}  + \frac{1}{d^2} \B\P + \frac{s^2}{d^3} \P } \\
        &\leq \bigo{  \frac{1}{s^2} \min\{1, \B\P\}  + \frac{s^2}{d^3} \P }.
\end{align*}
Therefore, in $\Var[ \hat{Z}(\hat{\mathsf{\Lambda}}, L) ]$ [cf. \eref{eqn:variance_upper_bound_quadratic_estimator}] the full variance expression associated with term \hyperlink{variance_term_1}{(I)} reads,
\begin{equation}
\label{eqn:quadratic_final_upperbound_scaling_term_1}
\begin{aligned}
    &\frac{L(L-1)(L-2)}{L^2(L-1)^2 {\p_{\bfq}^2}\left(d + 2\Gamma\right)^2}   \Tr\left[ \left\{\Ket{U} \Bra{U} + \Gamma \openone \otimes \openone, O \otimes \rho_0^T \right\}^{\otimes 2} \widetilde{\Phi}_{\bfq, s}
        \right] \\
        &\qquad \leq \bigtheta{ \frac{1}{L} } \cdot \bigtheta{\frac{d^2}{s^2}  } \cdot \bigo{  \frac{1}{s^2} \min\{1, \B\P\}  + \frac{s^2}{d^3} \P } = \bigo{\frac{1}{L} \left( \frac{d^2}{s^4} \min\{1, \B\P\} + \frac{\P}{d}  \right) },
\end{aligned}
\end{equation}
recovering the first half of the variance expression shown in \lref{lemma:small_s_regime_estimator_Lambda_variance}.

Using the main results of Appendix \ref{appendix:bound_term_2} and \ref{appendix:bound_term_3}, Corollary \ref{corollary:inner_product_terms_upper_bound} indicates
\begin{align*}
    &\mathsf{T}(\openone, \openone) \leq \bigo{\B\P}, \quad \mathsf{T}(\Delta, \openone) \leq \bigo{ \P }, \quad \mathsf{T}(\Lambda, \openone) \leq \bigo{ \min\{1, \B \P\} + \P }, \\
    &\mathsf{T}(\openone, \Delta) \leq \bigo{ \min\{1, \B\P\} }, \quad \mathsf{T}(\Delta, \Delta) \leq \bigo{ \min\{1, \B\P\}  + \P }, \\
    &\mathsf{T}(\Lambda, \Delta) \leq \bigo{ d\left( \min\{1, \B \P \} + \P \right) }, \quad \mathsf{T}(\openone, \Lambda) \leq \bigo{ \min\{1, \B\P\} + \P }, \\
    &\mathsf{T}(\Delta, \Lambda) \leq \bigo{ d \left( \min\{1, \B \P \} + \P \right) }, \quad \mathsf{T}(\Lambda, \Lambda) \leq \bigo{ d^2\left( \min\{1, \B\P\} + \P \right) },
\end{align*}
and 
\begin{align*}
    &\mathsf{S}(\openone, \openone) \leq \bigo{ d^2 \B\P }, \quad \mathsf{S}(\openone, \Delta), \mathsf{S}(\Delta, \openone) \leq \bigo{ d \left( \min\{1, \B\P\} +  \B \P \right) }, \\
    &\mathsf{S}(\openone, \Lambda), \mathsf{S}(\Lambda, \openone) \leq \bigo{\min\{1, \B\P\} + \B\P + d \P }, \quad \mathsf{S}(\Delta, \Delta) \leq \bigo{ d\P + \min\{1, \B\P\} + \B\P }, \\
    &\mathsf{S}(\Delta, \Lambda), \mathsf{S}(\Lambda, \Delta) \leq \bigo{d \left( \min\{1, \B\P\} + \P \right)}, \quad \mathsf{S}(\Lambda, \Lambda) \leq \bigo{ d^2 \min\{1, \B\P\} }.
\end{align*}
Plugging these scalings back into \eref{eqn:bilinear_form_2_upper_bound} gives
\begin{align*}
    \Tr\left[\left(O \otimes \rho_0^T \right)^{\otimes 2} \widetilde{\Phi}_{\bfq, s}^2 \right] &\leq \bigo{\frac{1}{d^4}} \cdot \bigo{\B\P} + \bigo{\frac{s^2}{d^6}} \cdot \bigo{ \min\{1, \B\P\} + \P } \\
    &\quad + \bigo{ \frac{s^4}{d^8} } \cdot \bigo{ d^2\left( \min\{1, \B\P\} + \P \right) } + \bigo{ \frac{s}{d^5} } \cdot \bigo{ \min\{1, \B\P\} + \P }  \\
    &\quad + \bigo{ \frac{s^2}{d^6} } \cdot \bigo{ \min\{1, \B\P\} + \P } + \bigo{ \frac{s^3}{d^7} } \cdot \bigo{ d \left( \min\{1, \B \P \} + \P \right) } \\
    &\leq \bigo{ \frac{\B\P}{d^4} + \left( \frac{s}{d^5} + \frac{s^4}{d^6} \right) \min\{1, \B\P\} + \left( \frac{s}{d^5} + \frac{s^4}{d^6} \right) \P }.
\end{align*}
Repeating the same for \eref{eqn:bilinear_form_3_upper_bound}, we have
\begin{align*}
    &\Tr\left[ \left(  O \otimes \rho_0^T  \otimes \openone^{\otimes 2}\right) \widetilde{\Phi}_{\bfq, s} \left(\openone^{\otimes 2} \otimes  O \otimes \rho_0^T \right) \widetilde{\Phi}_{\bfq, s}  \right] \\
    &\quad \leq \bigo{\frac{1}{d^4}} \cdot \bigo{d^2  \B \P} + \bigo{ \frac{s^2}{d^6} } \cdot \bigo{ d  \P + \min\{1, \B\P\}  + \B\P } \\
    &\quad \quad  + \bigo{ \frac{s^4}{d^8} } \cdot \bigo{d^2 \min\{1, \B\P\} } + \bigo{\frac{s}{d^5}} \cdot \bigo{ d \left( \min\{1, \B \P\} +  \B \P \right) } \\
    &\quad \quad +\bigo{\frac{s^2}{d^6}} \cdot \bigo{d\P + \B \P + \min\{1, \B\P\}} + \bigo{\frac{s^3}{d^7}} \cdot \bigo{d \left(\min\{ 1, \B\P\} +  \P \right) } \\
    &\quad = \bigo{ \left( \frac{s}{d^4}  + \frac{s^4}{d^6} \right) \min\{1, \B\P\} + \frac{\B\P}{d^2}   +  \frac{s^2}{d^5}  \P }
\end{align*}
Note that we have assumed that $1 \leq \B \leq d$ in Problem \ref{prob:CSEU} and used the basic fact that $d^{-1} \leq \P \leq 1$. The full variance expression associated with the terms \hyperlink{variance_term_2}{(II)} and \hyperlink{variance_term_3}{(III)} can be simply bounded by
\begin{equation}
\label{eqn:quadratic_final_upperbound_scaling_term_2}
\begin{aligned}
&\frac{L(L-1)}{L^2(L-1)^2 \p_{\bfq}^{4}\left(d + 2\Gamma\right)^2}  \left(     \Tr\left[\left(O \otimes \rho_0^T \right)^{\otimes 2} \widetilde{\Phi}_{\bfq, s}^2 \right]  + \Tr\left[ \left(  O \otimes \rho_0^T  \otimes \openone^{\otimes 2}\right) \widetilde{\Phi}_{\bfq, s} \left(\openone^{\otimes 2} \otimes  O \otimes \rho_0^T \right) \widetilde{\Phi}_{\bfq, s}  \right]  \right) \\
&\quad \leq \bigtheta{ \frac{1}{L^2} } \cdot \bigtheta{ \frac{d^6}{s^4} } \cdot \bigo{\frac{\B\P}{d^2}} \leq \bigo{ \frac{1}{L^2} \left(\frac{d^4}{s^4} \B\P \right) },
\end{aligned}
\end{equation}
yielding the second half of the variance expression. Eq. \eqref{eqn:quadratic_final_upperbound_scaling_term_1} and \eqref{eqn:quadratic_final_upperbound_scaling_term_2} jointly conclude the proof.

\subsection{Proof of \lref{lemma:large_s_regime_estimator_X_variance}}
\label{appendix:Proof_large_s_regime_estimator_X_variance}
In this section, we analyze the variance of the linear estimator $\hat{X}_j$ for arbitrary $j \in [L]$. Recall \eref{eqn:variance_expression_naive_estimator} and Remark \ref{remark:factor_p_reformulation}, the variance reads
$$
\begin{aligned}
    \Var\left[\hat{X}_j\right] &= \frac{1}{\p_{\bfq}^2} \Tr\left[ O \otimes O \cdot \mathcal{M}_{\bfq, s}^{(2)}(\rho_0^{\otimes 2}) \right]   - \left( \Tr\left[ O \cdot U \rho_0 U^\dagger \right]  \right)^2 \\
    &= \frac{1}{c_{\nu_1}^2} \Tr\left[ O \otimes O \cdot \mathcal{E}_{\bfq, s} \circ \mathcal{U}^{\otimes 2}(\rho_0^{\otimes 2}) \right]   - \Tr\left[ O \sigma_0 \right]^2 \\
    &=\frac{1}{c_{\nu_1}^2} \Tr\left[ O \otimes O \cdot \mathcal{E}_{\bfq, s} (\sigma_0^{\otimes 2}) \right]   -  \Tr\left[ O \sigma_0 \right]^2.
\end{aligned}
$$
Using \eref{eqn:second_moment_channel_on_symmetric_input}, we have
$$
\begin{aligned}
    \Tr\left[ O \otimes O \cdot \mathcal{E}_{\bfq, s}(\sigma_0 \otimes \sigma_0)  \right] &= \Tr\left[ \left(O \otimes O\right) \left( - \frac{\Tr[\sigma_0^2]}{d(d^2 - 1)}  \openone \otimes \openone + \frac{\Tr[\sigma_0^2]}{d^2 - 1} \F \right) \right] \\
    &\quad + c_{\nu_2} \cdot \Tr\left[\left(O \otimes O\right) \left(  P_{+} (\sigma_0 \otimes \sigma_0) - \frac{1}{2d} P_{+} ( \sigma_0^2 \otimes \openone + \openone \otimes \sigma_0^2) + \frac{\Tr[\sigma_0^2]}{d^2(d+1)} P_{+} \right)  \right] \\
    &\quad + c_{\nu_4} \cdot \Tr\left[\left(O \otimes O\right) \left(P_{-} (\sigma_0 \otimes \sigma_0) + \frac{1}{2d} P_{-} ( \sigma_0^2 \otimes \openone + \openone \otimes \sigma_0^2) + \frac{\Tr[\sigma_0^2]}{d^2(d-1)} P_{-}\right)  \right] \\
    &\quad + c_{\nu_3} \cdot \Tr\left[\left(O \otimes O\right) \left(\frac{1}{2d} \left(\sigma_0^2 \otimes \openone + \openone \otimes \sigma_0^2\right) \F  - \frac{\Tr[\sigma_0^2]}{d^2} \F\right) \right] \\
    &= \frac{c_{\nu_2} + c_{\nu_4}}{2} \cdot \Tr[O \sigma_0]^2 + \frac{c_{\nu_2} - c_{\nu_4}}{2} \cdot \Tr[O \sigma_0 O \sigma_0] + \frac{2c_{\nu_3} - c_{\nu_2} - c_{\nu_4}}{2d} \cdot \Tr[O^2 \sigma_0^2] \\
    &\quad + \left( \frac{1}{d^2 - 1} + \frac{c_{\nu_2}}{2d^2(d + 1)} - \frac{c_{\nu_4}}{2d^2(d - 1)} - \frac{c_{\nu_3}}{d^2} \right) \Tr[\sigma_0^2] \Tr[O^2].
\end{aligned}
$$
The variance $\Var[\hat{X}_j]$ can then be written as
\begin{equation}
\label{eqn:estimator_X_variance_full_expression}
\begin{aligned}
    \Var\left[\hat{X}_j\right] &= \frac{1}{c_{\nu_1}^2} \left( \frac{1}{d^2 - 1} + \frac{c_{\nu_2}}{2d^2(d + 1)} - \frac{c_{\nu_4}}{2d^2(d - 1)} - \frac{c_{\nu_3}}{d^2} \right) \Tr[\sigma_0^2] \Tr[O^2]  \\
    &\quad + \frac{2c_{\nu_3} - c_{\nu_2} - c_{\nu_4}}{2d c_{\nu_1}^2} \cdot \Tr[O^2 \sigma_0^2] + \frac{c_{\nu_2} - c_{\nu_4}}{2 c_{\nu_1}^2} \cdot \Tr[O \sigma_0 O \sigma_0] + \left(\frac{c_{\nu_2} + c_{\nu_4}}{2 c_{\nu_1}^2} - 1\right) \Tr[O \sigma_0]^2 \\
    &= \frac{1}{c_{\nu_1}^2} \left( \frac{1}{d^2 - 1} + \frac{c_{\nu_2}}{2d^2(d + 1)} - \frac{c_{\nu_4}}{2d^2(d - 1)} - \frac{c_{\nu_3}}{d^2} \right) \Tr[\sigma_0^2] \Tr[O^2] + \frac{2c_{\nu_3} - c_{\nu_2} - c_{\nu_4}}{2d c_{\nu_1}^2} \cdot \Tr[O^2 \sigma_0^2]  \\
    &\quad + \left( \frac{c_{\nu_2}}{c_{\nu_1}^2} - 1  \right)  \Tr[O \sigma_0 O \sigma_0] + \left( \frac{c_{\nu_2} + c_{\nu_4}}{2c_{\nu_1}^2} - 1  \right) \left( \Tr[O \sigma_0]^2 - \Tr[O \sigma_0 O \sigma_0] \right).
\end{aligned}
\end{equation}
It suffices to choose an appropriate learning strategy $(\Y, \bfq)$ and evaluate the variance. Specifically, we choose $(\Y, \bfq)$ from a family of sine-power states that resembles the optimal quantum clocks in quantum metrology \cite{Bu_ek_1999, Holevo2011}, while what we're recording is not the time flow, but the shadows of the quantum system's evolution. The construction is detailed in the following.

\begin{construction}[{Sine-power state family for unitary learning, \cite[Appendix B.3, adapted]{Yang_2020}}]
\label{construction:YRC_programming_scheme}
    The family of sine-power state learning strategies $(\Y, \bfq)$ is constructed as follows: First, define $N = \lfloor \frac{1}{3d-2} \left(\frac{2s}{d-1} + d - 2  \right) \rfloor $ and the residual $N_0 = s - \frac{1}{2} \left((3d-2) n - d + 2 \right)(d-1) $. Then, define the base Young diagram $\mu_0 \in \Y_{N_0}^d$ that yields a minimum row-wise growth rate:
    $$
    \mu_0 = (\mu_{0, j})_{1 \leq j \leq d}: \quad \sum_{j=1}^d |\mu_{0, j}| = N_0;~\forall j > i \in [d],~\mu_{0, j} \leq \mu_{0, i} \leq \mu_{0, j} + 1.
    $$
    The collection of Young diagrams is then supported over all free vectors $\tilde{\lambda} \in \{0, 1, \dots, N -1\}^{d-1}$ (that do not necessarily correspond to a legitimate weight vector from a Young diagram):
    $$
    \Y = \left\{ \lambda \in \Y_{s}^d: \exists \tilde{\lambda} \in \{0, 1, \dots, N -1\}^{d - 1},\,\forall j \in [d-1],\,\lambda_j = \mu_{0, j} + N(2d-3) + 1 - (N+1)(j-1) + \tilde{\lambda}_j \right\}.
    $$
    The distribution over $\Y$ is directly indexed by the free vector $\tilde{\lambda}$ due to the bijection $\lambda \leftrightarrow \tilde{\lambda}$, which acts as a family of sinusoidal amplitude distribution characterized by a tunable sharpness parameter $t$:
    $$
    \forall k \in \{0, 1, \dots, N -1\}, \quad g_{k}^{(t)} := \frac{1} {N} \frac{2^t}{\binom{t}{t/2}} \sin^{t} \left( \frac{2k+1}{2N} \pi \right),~t \in 2 \mathbb{N};\qquad
   \forall \lambda \in \Y, \quad  q_{\lambda} = q_{\tilde{\lambda}} := \prod_{j=1}^{d-1} g_{\tilde{\lambda}_j}^{(t)}.
    $$
\end{construction}

Firstly, we note that for each $\lambda \in \Y$ and $j \in [d - 1]$, inheriting the notation in Corollary \ref{corollary:no_multiplicity}, its minimum gap between adjacent rows satisfies
\begin{align*}
\gap(\lambda) &= \min_{j \in [d-1]} (\lambda_{j} - \lambda_{j+1}) \\
&=\min_{j \in [d-1]} \left( (\mu_{0, j} - \mu_{0, j + 1}) + (N+1) + (\tilde{\lambda}_j - \tilde{\lambda}_{j+1}) \right) \\
&\geq N + 1 + \min_{j \in [d-1]}(\mu_{0, j} - \mu_{0, j + 1}) + \min_{j \in [d-1]} (\tilde{\lambda}_j - \tilde{\lambda}_{j+1}) \\
&\geq N + 1 - (N - 1) = 2.
\end{align*}
Therefore, take any $\lambda, \mu \in \Y$, we have
$
\gap(\lambda) + \gap(\mu) + 1 \geq 5
$. Since for any $\nu \in \{\nu_1, \nu_2, \nu_3, \nu_4\}$, we have $(\nu)_1 - (\nu)_d \leq 4 < 5$, Corollary \ref{corollary:no_multiplicity} allows us to reformulate the coefficient $c_{\nu}$ as 
\begin{align*}
c_{\nu} &= \frac{1}{\dim W_{\nu}} \sum_{\lambda, \mu \in \Y  } \sqrt{q_{\lambda} q_{\mu}} C_{\lambda, \nu}^{\mu} \\
&= \frac{1}{\dim W_{\nu}} \sum_{\lambda, \mu \in \Y  } \sqrt{q_{\lambda} q_{\mu}} \cdot m_{\nu}(\mu - \lambda) \\
&= \frac{1}{\dim W_{\nu}} \sum_{\lambda\in \Y  } \sum_{w \in \mathfrak{X}(W_{\nu}) } \sqrt{q_{\lambda} q_{\lambda + w}} \cdot m_{\nu}(w) \\
&= \frac{1}{\dim W_{\nu}} \sum_{w \in \mathfrak{X}(W_{\nu}) } m_{\nu}(w) \sum_{\lambda\in \Y  }  \sqrt{q_{\lambda} q_{\lambda + w}}.
\end{align*}

Plugging in the expression of $\bfq$ [cf. Construction \ref{construction:YRC_programming_scheme}] and set the parameter $t = 4$. Since appending $w$ to $\lambda$ is equivalent to shifting the free vector $\tilde{\lambda}$, for any legitimate weight vector $w$, we have
\begin{align*}
    \sum_{\lambda\in \Y  }  \sqrt{q_{\lambda} q_{\lambda + w}} &  = \sum_{\substack{
    \forall j \in[d-1], \\ 0 \leq \tilde{\lambda}_j, \tilde{\lambda}_j + w_j \leq N-1 }}  \sqrt{q_{\tilde{\lambda}} q_{\tilde{\lambda} + w}} \\
    & = \sum_{\substack{
    \forall j \in[d-1], \\ 0 \leq \tilde{\lambda}_j, \tilde{\lambda}_j + w_j \leq N-1 }} \prod_{j=1}^{d-1} \frac{8}{3N} \sin^2\left( \frac{2 \tilde{\lambda}_j + 1}{2N} \pi \right) \sin^2 \left( \frac{2 (\tilde{\lambda}_j + w_j) + 1}{2N} \pi \right) \\
    & = \prod_{j=1}^{d-1} \sum_{\substack{0 \leq \tilde{\lambda}_j, \tilde{\lambda}_j + w_j \leq N-1 }} \frac{2}{3N} \bigg[ 1 + \frac{1}{2}\cos \left( \frac{2w_j}{N} \pi \right) \\
    &  \qquad \qquad \qquad  - \cos \left( \frac{2\tilde{\lambda}_j + 1}{N} \pi \right) - \cos \left( \frac{2(\tilde{\lambda}_j + w_j) + 1}{N} \pi \right) + \frac{1}{2} \cos \left( \frac{2(2 \tilde{\lambda}_j + w_j + 1)}{N} \pi \right) \bigg] \\
    & = \prod_{j=1}^{d-1} \underbrace{\left[ \frac{2}{3} \left( 1 - \frac{|w_j|}{N} \right) \left( \cos^2 \left( \frac{|w_j|}{N} \pi \right) + \frac{1}{2}  \right)  + \frac{1}{3N} \sin\left( \frac{2|w_j|}{N} \pi \right) \left( \frac{2}{\sin \frac{\pi}{N} } - \frac{1}{\sin \frac{2\pi}{N} } \right) \right]}_{=:\, h(|w_j|)}.
\end{align*}
For our choice of $\nu$, the number of nonzero entries in any weight vector $w \in \mathfrak{X}(W_{\nu})$ is upper-bounded by $4$, and the magnitude of $|w_j|$ is upper-bounded by $2$. For sufficiently large constant $N \geq \Omega(1)$, we can invoke Fact \ref{fact:taylor_series} to Taylor-expand each factor in the product:
$$
h(|w_j|) = 1 - \frac{2\pi^2}{3N^2}|w_j|^2 + \frac{2\pi^4}{9N^4} |w_j|^4 - \frac{\pi^4}{90N^5} \left( 8|w_j|^5 + 7|w_j| \right) + \bigo{ \frac{1}{N^6} }.
$$
Therefore, taking the logarithm on both sides,
$$
\log h(|w_j|) = - \frac{2\pi^2}{3N^2} |w_j|^2 - \frac{\pi^4}{90N^5}\left( 8|w_j|^5 + 7|w_j| \right) + \bigo{ \frac{1}{N^6} }.
$$
Since $h(|w_j|) < 1$\footnote{
To grasp this, the Cauchy-Schwarz inequality says $h(|w_j|) = \sum_{\tilde{\lambda}_j} \sqrt{ g_{\tilde{\lambda}_j}^{(t)} g_{\tilde{\lambda}_j + w_j}^{(t)} } \leq \left(\sum_{\tilde{\lambda}_j}  g_{\tilde{\lambda}_j}^{(t)} \right)^{1/2} \left(\sum_{\tilde{\lambda}_j} g_{\tilde{\lambda}_j + w_j}^{(t)} \right)^{1/2} < 1$.
}, applying Fact \ref{fact:taylor_series} again, the original product expands to
\begin{equation}
\label{eqn:taylor_expansion}
\begin{aligned}
\prod_{j=1}^{d-1} h(|w_j|) &= \exp\left( \sum_{j=1}^{d-1} \log h(|w_j|)  \right) \\
&= 1 - \frac{2 \pi^2}{3N^2} \sum_{j=1}^{d-1} |w_j|^2 + \frac{2 \pi^4}{9N^4} \left( \sum_{j=1}^{d-1} |w_j|^2 \right)^2 - \frac{\pi^4}{90 N^5} \left( 8\sum_{j=1}^{d-1} |w_j|^5 + 7 \sum_{j=1}^{d-1} |w_j|  \right) + \bigo{ \frac{1}{N^6} }.
\end{aligned}
\end{equation}
If we denote the factors
$$
\mathscr{L}_{k}(\nu) = \frac{1}{\dim W_{\nu}} \sum_{w \in \mathfrak{X}(W_{\nu}) } m_{\nu}(w) \sum_{j=1}^{d-1} |w_j|^k, \quad \mathscr{Q}(\nu) = \frac{1}{\dim W_{\nu}} \sum_{w \in \mathfrak{X}(W_{\nu}) } m_{\nu}(w) \left(\sum_{j=1}^{d-1} |w_j|^2 \right)^2.
$$
Then using \eref{eqn:taylor_expansion} and $\sum_{w \in \mathfrak{X}(W_{\nu})} m_{\nu}(w) = \dim W_{\nu}$ [cf. Definition \ref{def:weight_space} and \lref{lemma:schur}], the coefficient $c_{\nu}$ can be expressed as
$$
c_{\nu} = 1 - \frac{2\pi^2}{3N^2} \mathscr{L}_2(\nu) + \frac{2\pi^4}{9N^4} \mathscr{Q}(\nu) - \frac{\pi^4}{90N^5} \left( 8 \mathscr{L}_{5}(\nu) + 7 \mathscr{L}_1(\nu) \right) + \bigo{ \frac{1}{N^6} }.
$$

\begin{lemma}
\label{lemma:expression_of_factors}
    For $\nu = \nu_1, \nu_2, \nu_3, \nu_4$, the expressions of the factors read
    $$
    \begin{aligned}
    &\mathscr{L}_{k}(\nu_1) = \frac{2(d - 1)}{d+1}, \quad \mathscr{L}_{k}(\nu_2) = \frac{4(d + 2^k - 1)(d-1)}{d(d+3)}, \quad \mathscr{L}_k(\nu_3) = \frac{2(d-1)(2d + 2^k)}{(d+1)(d+2)}, \quad \mathscr{L}_{k}(\nu_4) = \frac{4(d-1)^2}{d(d+1)}; \\
    &\mathscr{Q}(\nu_1) = \frac{2(2d-3)}{d+1}, \quad \mathscr{Q}(\nu_2) = \frac{4(4d^3 + 13d^2 - 13d - 24)}{d^2(d+3)}, \\
    &\mathscr{Q}(\nu_3) = \frac{4(4d^2 + 3d - 12)}{(d+1)(d+2)}, \quad \mathscr{Q}(\nu_4) = \frac{4(d-1)(4d^2 - 11d + 8)}{d^2(d+1)}.
    \end{aligned}
    $$
\end{lemma}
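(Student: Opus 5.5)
The proof is a direct counting argument. I would take the weight-multiplicity data recorded in Example~\ref{example_weight_multiplicities} and the dimensions $d_{\nu}$ from Lemma~\ref{lemma:dimension_formulas}:
\[
d_{\nu_1}=d^2-1,\quad d_{\nu_2}=\tfrac{d^2(d+3)(d-1)}{4},\quad d_{\nu_3}=\tfrac{(d^2-4)(d^2-1)}{4},\quad d_{\nu_4}=\tfrac{d^2(d-3)(d+1)}{4}.
\]
I would then evaluate $\sum_{w} m_\nu(w)\,F(w)$ orbit by orbit, where $F(w)$ is either $\sum_{j\le d-1}|w_j|^k$ or $\bigl(\sum_{j\le d-1}|w_j|^2\bigr)^2$. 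The zero weight contributes nothing, so only the nonzero orbits listed in the example matter.

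The key simplification is that every orbit $\mathsf O(w)$ is invariant under the Weyl group $\mathfrak S_d$, and the multiplicity is constant on it. Hence, averaged over an orbit, each coordinate $j\in[d]$ carries the same profile of values.

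For $\mathscr L_k$, this means the orbit sum of $\sum_{j\le d-1}|w_j|^k$ equals $\frac{d-1}{d}$ times (orbit size) times $\sum_{j=1}^d |w_j|^k$. The full-coordinate sum is a constant on the orbit; for example, it equals $2$ for $e_i-e_k$ and $2^k+2$ for $2e_i-e_k-e_\ell$.

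For $\mathscr Q$, the restricted square depends on whether coordinate $d$ lies in the support of $w$, and on which entry sits there. I would therefore split each orbit by the value of $w_d$. A given nonzero entry occupies position $d$ for a fraction $1/d$ of the orbit, and $w_d=0$ for the complementary fraction $(d-s)/d$, where $s$ is the support size. Each case then contributes $(S-w_d^2)^2$, where $S=\sum_{j=1}^d w_j^2$.

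As a sanity check, for $\nu_1$ there are $d(d-1)$ weights $e_i-e_k$, each with $S=2$. This gives
\[
\mathscr L_k=\frac{d(d-1)\cdot 2(d-1)/d}{d^2-1}=\frac{2(d-1)}{d+1},
\qquad
\mathscr Q=\frac{d(d-1)\bigl[4(1-\tfrac2d)+\tfrac2d\bigr]}{d^2-1}=\frac{2(2d-3)}{d+1},
\]
both matching the claimed formulas. This confirms the method, and I would carry out the same bookkeeping for $\nu_2,\nu_3,\nu_4$.

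For $\nu_2$ the orbits are as follows:
\begin{itemize}
\item $2e_i-2e_k$, with $S=8$ and $\sum|w_j|^k=2^{k+1}$;
\item $\pm(2e_i-e_k-e_\ell)$, with $S=6$ and $\sum|w_j|^k=2^k+2$;
\item $e_i+e_j-e_k-e_\ell$, with $S=4$ and $\sum|w_j|^k=4$;
\item $e_i-e_k$ with multiplicity $d-1$, with $S=2$ and $\sum|w_j|^k=2$.
\end{itemize}
For $\nu_3$ and $\nu_4$ the orbits are the analogous sublists, with the modified counts and multiplicities given in the example. I would sum these contributions, multiply by $\frac{d-1}{d}$ for $\mathscr L_k$ or apply the case split for $\mathscr Q$, and divide by $d_\nu$. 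The resulting polynomial fractions should then simplify to the stated expressions.

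The main obstacle is purely arithmetic: the $\mathscr Q(\nu_2)$ and $\mathscr Q(\nu_3)$ cases. These involve several orbit types, and in the mixed orbit $2e_i-e_k-e_\ell$ the coordinate $d$ can carry the entry $2$, the entry $-1$, or $0$, so several subcases contribute with different weights. I would organise this step as a small table of (orbit, count, multiplicity, subcase fraction, value of $(S-w_d^2)^2$). I would then double-check the totals against the identity $\sum_w m_\nu(w)=d_\nu$, i.e.\ count plus zero-weight multiplicity, before simplifying.

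As a further consistency check, the sum over all coordinates without restriction should reproduce $\frac{d}{d-1}$ times the $\mathscr L_k$ values. This guards against miscounting orbit sizes, for instance the factor $2$ from the $\pm$ orbits in $\nu_2$ versus the single orientation of $2e_i-e_k-e_\ell$ in $\nu_3$.
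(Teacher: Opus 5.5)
Your proposal is correct and follows essentially the same route as the paper. The paper also enumerates the nonzero-weight orbits from Example~\ref{example_weight_multiplicities}, uses $\mathfrak S_d$-symmetry to obtain $\mathscr L_k(\nu)=\frac{d-1}{d}\widetilde{\mathscr L}_k(\nu)$ from full-coordinate sums, and for $\mathscr Q(\nu)$ splits each orbit by the value of $w_d$; your $1/d$-per-support-entry fractions reproduce its direct counts, and carrying out the bookkeeping does give the stated formulas. For instance, the $\nu_2$ numerator totals $(d-1)(4d^3+13d^2-13d-24)$, so the coefficient $12d^2$ in the paper's intermediate line is a typo, while its final expression is right.
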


\begin{proof}[Proof of \lref{lemma:expression_of_factors}]
We prove the lemma by brute-force enumeration. Recall that we have evaluated the multiplicity and counting result related to all the orbits of (non-zero) weight vectors in $\mathfrak{X}(W_{\nu})$ for $\nu = \nu_1, \nu_2, \nu_3, \nu_4$ in Example \ref{example_weight_multiplicities}. It suffices to plug them into our target quantities. By the strong symmetry among the coordinates of the weight $w$ [cf. Definition \ref{def:weight_space}], consider the full coordinate sum
    $$
    \widetilde{\mathscr{L}}_{k}(\nu) = \frac{1}{\dim W_{\nu}} \sum_{w \in \mathfrak{X}(W_{\nu}) } m_{\nu}(w) \sum_{j=1}^{d} |w_j|^k,
    $$
    then ${\mathscr{L}}_{k}(\nu) = \frac{d-1}{d} \widetilde{\mathscr{L}}_{k}(\nu)$. It suffices to track the sum $\sum_{w \in \mathfrak{X}(W_{\nu}) } m_{\nu}(w) \sum_{j=1}^{d} |w_j|^k$. For conciseness, we only analyze non-trivial weights $w \neq \mathbf{0}$ for each irrep.
    
    For $\nu_1$, orbit $e_i - e_j$ for $i \neq j$ has multiplicity $1$ and there are $d(d-1)$ such weights of distinct $i, j$. Since $\sum_{j=1}^d |w_j|^k \equiv 2$ for non-zero weights $w \in \mathfrak{X}(W_{\nu_1})$ always,
    $$
    \widetilde{\mathscr{L}}_k(\nu_1) = \frac{2d(d-1)}{d^2 - 1} = \frac{2d}{d+1}.
    $$

    For $\nu_2$, it has (1) Orbit $2e_i - 2e_j$ on distinct $i, j$ with multiplicity $1$ and there are $d(d-1)$ such weights, contributing $d(d-1)(2^k + 2^k) = 2d(d-1) \cdot 2^k$; (2) Orbit $\pm(2e_i - e_j - e_{\ell})$ on distinct $i, j, \ell$ with multiplicity $1$ and $d(d-1)(d-2)$ such weights, contributing $d(d-1)(d-2)(2^k + 1 + 1) = d(d-1)(d-2)(2^k + 2)$; (3) Orbit $e_i + e_j - e_{\ell} - e_r$ on distinct $i, j, \ell, r$ with multiplicity $1$ and $\frac{1}{4}d(d-1)(d-2)(d-3)$ such weights, contributing $\frac{1}{4}d(d-1)(d-2)(d-3)(1+1+1+1) = d(d-1)(d-2)(d-3)$; (4) Orbit $e_i - e_j$ on distinct $i, j$ with multiplicity $d-1$ and $d(d-1)$ such weights, contributing $d(d-1)^2(1+1) = 2d(d-1)^2$. Collecting all the terms gives
    $$
    \begin{aligned}
    \widetilde{\mathscr{L}}_k(\nu_2) &= \frac{2d(d-1) \cdot 2^k + d(d-1)(d-2)(2^k + 2) + d(d-1)(d-2)(d-3) + 2d(d-1)^2}{\frac{d^2(d-1)(d+3)}{4}} \\
    &= \frac{d^2(d-1)(d + 2^k - 1)}{\frac{d^2(d-1)(d+3)}{4}} = \frac{4(d + 2^k - 1)}{d + 3}.
    \end{aligned}
    $$

    For $\nu_3$, it contains (1) Orbit $2e_i - e_k - e_\ell$ on distinct $i, k, \ell$ with multiplicity $1$ and there are $\frac{1}{2}d(d-1)(d-2)$ many. For each these weight, $\sum_{j=1}^d |w_j|^k = 2^k + 1 + 1 = 2^k + 2$, contributing $\frac{1}{2}d(d-1)(d-2)(2^k + 2)$; (2) Orbit $e_i + e_j - e_k - e_\ell$ on distinct $i, j, k, \ell$ with multiplicity $1$ and there are $\frac{1}{4}d(d-1)(d-2)(d-3)$ such weights, each contributing $\sum_{j=1}^d |w_j|^k = 1 + 1 + 1 + 1 = 4$. The total contribution is $d(d-1)(d-2)(d-3)$; (3) Orbit $e_i - e_k$ on distinct $i, k$ with multiplicity $d - 2$, and there are $d(d-1)$ many, each contributing $\sum_{j=1}^d |w_j|^k = 1 + 1 = 2$. Total contribution is $2 \times d(d-1) \times (d-2) = 2d(d-1)(d-2)$. Hence,
    $$
    \widetilde{\mathscr{L}}_k(\nu_3) = \frac{\frac{1}{2}d(d-1)(d-2)(2^k + 2) + d(d-1)(d-2)(d-3) + 2d(d-1)(d-2)}{\frac{(d^2-1)(d^2 - 4)}{4}} = \frac{2d(2d + 2^k)}{(d+1)(d+2)}.
    $$
    
    As for $\nu_4$, it contains (1) Orbit $e_i + e_j - e_{\ell} - e_r$ on distinct $i, j, \ell, r$ with multiplicity $1$ and there are $\frac{1}{4} d(d-1)(d-2)(d-3)$ such weights, contributing $d(d-1)(d-2)(d-3)$; (2) Orbit $e_i - e_j$ on distinct $i, j$ with multiplicity $d - 3$ and there are $d(d-1)$ such weights, contributing $2d(d-1)(d-3)$. Therefore, 
    $$
    \widetilde{\mathscr{L}}_k(\nu_4) = \frac{d(d-1)(d-2)(d-3) + 2d(d-1)(d-3)}{ \frac{d^2(d-3)(d+1)}{4}  } = \frac{d^2(d-1)(d-3)}{\frac{d^2(d-3)(d+1)}{4}} = \frac{4(d-1)}{d+1}.
    $$

    The evaluation of $\widetilde{\mathscr{L}}_k(\nu)$ can be reused to evaluate $\mathscr{Q}(\nu)$, subject to a restriction in the summation limit. For conciseness, we denote the sum $S(w) = \sum_{j=1}^{d-1} |w_j|^2$.

    For $\nu_1$, the only contributing terms stems from the non-trivial weight $w = e_i - e_j$ on distinct $i, j$ with multiplicity $1$; When $w_d = 0$, there are $(d-1)(d-2)$ such weights with $S(w)^2 = (1 + 1)^2 = 4$; When $w_d = \pm 1$, there are $2(d-1)$ such weights with $S(w)^2 = 1^2 = 1$. The sum reads $4(d-1)(d-2) + 2(d-1) = 2(d-1)(2d-3)$. Thus, we have
    $$
    \mathscr{Q}(\nu_1) = \frac{2(d-1)(2d-3)}{d^2 - 1} = \frac{2(2d - 3)}{d + 1}.
    $$

    For $\nu_2$: (1) For orbit $2e_i - 2e_j$ on distinct $i, j$ of multiplicity $1$, when $w_d = 0$, there are $(d-1)(d-2)$ such weights with $S(w)^2 = (4 + 4)^2 = 64$; When $w_d = \pm 2$, there are $2(d-1)$ such weights with $S(w)^2 = 4^2 = 16$. The sum is given by $64(d-1)(d-2) + 32(d - 1) = 32(d-1)(2d-3)$. (2) For orbit $\pm(2e_i - e_j - e_\ell)$ on distinct $i, j, \ell$ with multiplicity $1$; When $w_d = 0$, there are $(d-1)(d-2)(d-3)$ such weights with $S(w)^2 = (4 + 1 + 1)^2 = 36$; When $w_d = \pm 2$, there are $(d-1)(d-2)$ such weights, with $S(w)^2 = (1 + 1)^2 = 4$; When $w_d = \pm 1$, there are $2(d-1)(d-2)$ such weights with $S(w)^2 = (4 + 1)^2 = 25$. They sum up to $36(d-1)(d-2)(d-3) + 4(d-1)(d-2) + 50(d-1)(d-2) = 18(d-1)(d-2)(2d-3)$. (3) For the orbit $e_i + e_j - e_\ell - e_r$ on distinct $i, j, \ell, r$ with multiplicity $1$; When $w_d = 0$, there are $\frac{1}{4} (d - 1)(d-2)(d-3)(d-4)$ such weights with $S(w)^2 = (1 +1 + 1 + 1)^2 = 16$; When $w_d = \pm 1$, there are $(d - 1)(d - 2)(d - 3)$ such weights. They sum up to $4(d-1)(d-2)(d-3)(d-4) + 9(d-1)(d-2)(d-3) = (d-1)(d-2)(d-3)(4d-7)$. (4) For the orbit $e_i - e_j$ with multiplicity $d - 1$, it behaves analogously to $2e_i - 2e_j$ but with different $S(w)^2$, the sum reads $(d - 1) \cdot \left[(1 + 1)^2 (d - 1)(d - 2) + 2(d-1) \right] = 2(d-1)^2(2d-3)$. Hence,
    $$
    \begin{aligned}
    \mathscr{Q}(\nu_2) &= \frac{32(d-1)(2d-3) + 18(d-1)(d-2)(2d-3) + (d-1)(d-2)(d-3)(4d-7) + 2(d-1)^2(2d-3) }{ \frac{d^2(d-1)(d+3)}{4} } \\
    &= \frac{(d - 1)(4d^3 + 12d^2 - 13d - 24)}{\frac{d^2(d-1)(d+3)}{4}} = \frac{4 (4d^3 + 13d^2 - 13d - 24)}{d^2(d+3)}.
    \end{aligned}
    $$

    For $\nu_3$: (1) For orbit $2e_i - e_k - e_\ell$ on distinct $i, k, \ell$ with multiplicity $1$, when $w_d = 0$, there are $\frac{1}{2}(d-1)(d-2)(d-3)$ weights with $S(w)^2 = (2^2 + 1^2 + 1^2)^2 = 36$. When $w_d = 2$, there are $\frac{1}{2}(d-1)(d-2)$ weights with $S(w)^2 = (1^2 + 1^2)^2 = 4$. When $w_d = -1$, there are $(d-1)(d-2)$ weights with $S(w)^2 = (2^2 + 1^2)^2 = 25$. Total contribution yields $36 \times \frac{1}{2}(d-1)(d-2)(d-3) + 4 \times \frac{1}{2}(d-1)(d-2) + 25 (d-1)(d-2) = 9(d-1)(d-2)(2d-3)$; (2) For orbit $e_i + e_j - e_k - e_{\ell}$ with multiplicity $1$, when $w_d = 0$, there are $\binom{d-1}{2} \times \binom{d-3}{2} = \frac{1}{4} (d-1)(d-2)(d-3)(d-4)$ many with $S(w)^2 = (1+1+1+1)^2 = 16$. When $w_d = \pm 1$, there are $2 \times (d-1) \times \binom{d-2}{2} = (d-1)(d-2)(d-3)$ many, with $S(w)^2 = (1^2 + 1^2 + 1^2)^2 = 9$. They sum up to $16 \times \frac{1}{4} (d-1)(d-2)(d-3)(d-4) + 9(d-1)(d-2)(d-3) = (d-1)(d-2)(d-3)(4d-7)$; (3) For orbit $e_i - e_k$ on distinct $i, k$ with multiplicity $d - 2$, when $w_d = 0$, there are $(d-1)(d-2)$ weights with $S(w)^2 = (1^2 + 1^2)^2 = 4$; When $w_d = \pm 1$, there are $2(d-1)$ weights with $S(w)^2 = 1$. Total contribution is $(d-2) \times (4(d-1)(d-2) + 2(d-1)) = 2(d-1)(d-2)(2d-3)$. Collecting the terms gives
    \begin{align*}
    \mathscr{Q}(\nu_3) &= \frac{(d-1)(d-2)\left(9(2d-3) + (d-3)(4d-7) + 2(2d-3) \right)}{ \frac{(d^2-1)(d^2-4)}{4} } = \frac{4(4d^2 + 3d - 12)}{(d+1)(d+2)}.
    \end{align*}
    
    Finally, for $\nu_4$: (1) For orbit $e_i + e_j - e_\ell - e_r$ on distinct $i, j, \ell, r$ with multiplicity $1$, it is similar to case (3) of representation $\nu_2$, giving rise to the sum $(d-1)(d-2)(d-3)(4d-7)$; (2) For orbit $e_i - e_j$ on distinct $i, j$ with multiplicity $d - 3$, it is similar to case (4) of $\nu_2$, but with a different multiplicity, contributing $(d-3) \cdot \left[(1 + 1)^2 (d - 1)(d - 2) + 2(d-1) \right] = 2(d-1)(d-3)(2d-3)$. Therefore, we have 
    $$
    \begin{aligned}
    \mathscr{Q}(\nu_4) &= \frac{(d-1)(d-2)(d-3)(4d-7) + 2(d-1)(d-3)(2d-3)}{ \frac{d^2(d-3)(d+1)}{4} } \\
    &= \frac{(d-1)(d-3)(4d^2 - 11d + 8)}{ \frac{d^2(d-3)(d+1)}{4} } = \frac{4(d-1)(4d^2 - 11d + 8)}{d^2(d+1)}.
    \end{aligned}
    $$
    Rescaling the expressions of $\widetilde{\mathscr{L}}_k(\nu)$ for ${\mathscr{L}}_k(\nu)$ completes the proof.
\end{proof}

Leveraging \lref{lemma:expression_of_factors}, we can formulate the leading coefficients that appear in the variance [cf. \eref{eqn:estimator_X_variance_full_expression}] by directly plugging in the analytical expressions of $\mathscr{L}_k(\nu)$ and $\mathscr{Q}(\nu)$. For sufficiently large $N$, we have
\begin{equation}
\label{eqn:scaling_of_leading_term}
\begin{aligned}
    &\frac{1}{d^2 - 1} + \frac{c_{\nu_2}}{2d^2(d + 1)} - \frac{c_{\nu_4}}{2d^2(d - 1)} - \frac{c_{\nu_3}}{d^2} \\
    &\qquad = \frac{8\pi^2(d-1)}{3N^2 d^2(d+1)} - \frac{8\pi^4(4d^5 + 15d^4 - 9d^3 - 54d^2 + 12d + 48)}{9N^4 d^4(d+1)(d+2)(d+3)} + \bigo{\frac{1}{d^2 N^5}} = \Theta\left(\frac{1}{d^2 N^2}\right)
    \\
    &\frac{2c_{\nu_3} - c_{\nu_2} - c_{\nu_4}}{2d} = -\frac{32\pi^4(d+4)(d^2-3)}{9N^4 d^3(d+1)(d+2)(d+3)} + \bigo{\frac{1}{d^4N^5}} = -\Theta\left( \frac{1}{d^3 N^4} \right), \\
    &c_{\nu_2} - c_{\nu_1}^2 = - \frac{8\pi^2(d-1)}{3d(d+1)N^2} + \frac{8\pi^4(14d^4 + 33d^3 - 34d^2 - 61d - 24)}{9d^2(d+1)^2(d+3)N^4} - \bigo{\frac{1}{dN^5}} = - \bigtheta{ \frac{1}{dN^2}}, \\
    &c_{\nu_2} + c_{\nu_4} - 2 c_{\nu_1}^2 = \frac{32\pi^4(d^2+d-4)(2d^2+3d+3)}{9d^2(d+1)^2(d+3)N^4} - \bigo{ \frac{1}{dN^5} } = \bigtheta{ \frac{1}{dN^4}}.
\end{aligned}
\end{equation}
Note that if we reformulate $\Tr[O \sigma_0 O \sigma_0]$ in terms of $\sigma$:
$$
\Tr[O \sigma_0 O \sigma_0] = \Tr\left[ O\sigma O\sigma  \right] - \frac{2}{d} \Tr\left[O^2 \sigma  \right] + \frac{1}{d^2} \Tr[O^2].
$$
Plugging these into \eref{eqn:estimator_X_variance_full_expression}, since $c_{\nu_1}^{-2} = 1 + o(1) = \bigo{1}$, and $\Tr[O^2 \sigma_0^2] \geq 0$ holds unconditionally, the variance of estimator $\hat{X}_j$ is bounded by
\begin{equation}
\label{eqn:variance_ultimate_upper_bound}
    \begin{aligned}
\Var\left[\hat{X}_j\right] &\leq \bigtheta{\frac{1}{d^2 N^2}} \cdot \Tr[\sigma_0^2] \Tr[O^2] - \bigtheta{ \frac{1}{dN^2}} \cdot \Tr[O \sigma_0 O \sigma_0] + \bigtheta{ \frac{1}{dN^4}} \cdot \left( \Tr[O \sigma_0]^2 - \Tr[O \sigma_0 O \sigma_0] \right) \\
&= \bigtheta{\frac{1}{d^2 N^2}} \cdot \Tr[\sigma_0^2] \Tr[O^2] - \bigtheta{ \frac{1}{dN^2}} \cdot \left(\Tr\left[ O\sigma O\sigma  \right] - \frac{2}{d} \Tr\left[O^2 \sigma  \right] + \frac{1}{d^2} \Tr[O^2]\right) \\
&\quad + \bigtheta{ \frac{1}{dN^4}} \cdot \left( \Tr[O \sigma ]^2 - \Tr\left[ O\sigma O\sigma  \right] + \frac{2}{d} \Tr\left[O^2 \sigma  \right] - \frac{1}{d^2} \Tr[O^2] \right) \\
&\leq \bigo{ \frac{\B\P}{d^2N^2} } + \bigtheta{\frac{1}{d^2 N^2}} \cdot \Tr[O^2 \sigma ] - \bigtheta{\frac{1}{d N^2}} \cdot \Tr[O \sigma  O \sigma ] + \bigtheta{\frac{1}{dN^4}} \cdot \left( \Tr[O \sigma ]^2 - \Tr\left[ O\sigma O\sigma  \right] \right)
    \end{aligned}
\end{equation}
To proceed to analyzing this upper bound, we will use some useful lemmas:

\begin{lemma}
\label{lemma:two_term_diff}
    For an observable $O \in \obs(\B)$ and a legitimate quantum state $\psi  \in \dens(\mathcal{H})$, it holds that
    $$
    \Tr[O \psi ]^2 - \Tr[O \psi  O \psi ] \leq \min\left\{1 - \Tr[\psi ^2] , \B\P \right\}.
    $$
\end{lemma}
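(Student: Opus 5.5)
The plan is to bound the left-hand side separately by each of the two quantities in the minimum, treating $\psi$ and $O$ as in the surrounding context. There $O$ is traceless (after the reduction to traceless requests), so $\Tr[O^2]=\Tr[O_0^2]\le\B$ and $\|O\|_\infty\le 1$. We also use $\Tr[\psi^2]\le\P$ for the state in question ($\sigma=\mathcal U(\rho)$, whose purity equals that of $\rho$).

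\emph{The $\B\P$ bound.} First I will show $\Tr[O\psi O\psi]\ge 0$, by writing it as $\Tr[(\psi^{1/2}O\psi^{1/2})^2]$, the trace of the square of a Hermitian operator. It then follows that
\[
\Tr[O\psi]^2-\Tr[O\psi O\psi]\le \Tr[O\psi]^2 .
\]
Next, Corollary~\ref{corollary:inner_product_terms_upper_bound}, via Cauchy--Schwarz, gives $\Tr[O\psi]^2\le \|O\|_F^2\|\psi\|_F^2\le\B\P$.

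\emph{The $1-\Tr[\psi^2]$ bound.} I will diagonalize $\psi=\sum_i p_i\ket{i}\!\bra{i}$ with $p_i\ge0$ and $\sum_i p_i=1$, and set $a_i:=\bra{i}O\ket{i}\in[-1,1]$ (using $\|O\|_\infty\le1$). In this basis the two terms become
\[
\Tr[O\psi]^2=\Bigl(\sum_i p_ia_i\Bigr)^2,\qquad
\Tr[O\psi O\psi]=\sum_{i,j}p_ip_j|O_{ij}|^2\ge\sum_i p_i^2a_i^2 ,
\]
where the inequality keeps only the diagonal terms, each of which is nonnegative. Subtracting, the diagonal terms of the square cancel, leaving
\[
\Tr[O\psi]^2-\Tr[O\psi O\psi]\le\sum_{i\ne j}p_ip_ja_ia_j\le\sum_{i\ne j}p_ip_j=1-\sum_ip_i^2=1-\Tr[\psi^2],
\]
where the middle step uses $|a_i|\le1$.

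Combining the two bounds gives the minimum. I do not expect a real obstacle. The only delicate points are (i) discarding the off-diagonal $|O_{ij}|^2$ terms with the correct sign, and (ii) noting explicitly that the $\B\P$ bound relies on $O$ being traceless, as in the surrounding analysis, so that $\|O\|_F^2\le\B$.
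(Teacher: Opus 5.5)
Your proof is correct. Its skeleton matches the paper's: one bound $\Tr[O\psi]^2\le\B\P$ via Corollary~\ref{corollary:inner_product_terms_upper_bound}, and a separate bound $1-\Tr[\psi^2]$ from an eigenbasis expansion.

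The two arguments differ in which operator gets diagonalized for the second bound. The paper works in the eigenbasis of $O$, writing $O=\sum_j\lambda_j\ket{\psi_j}\!\bra{\psi_j}$. This turns the left-hand side into the exact expression $\sum_{i,j}(\psi_{ii}\psi_{jj}-|\psi_{ij}|^2)\lambda_i\lambda_j$. The coefficients are nonnegative because the $2\times2$ principal minors of $\psi$ are positive semidefinite. Hence the sum is at most $\max_{i,j}|\lambda_i\lambda_j|\,(\Tr[\psi]^2-\Tr[\psi^2])$.

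You instead work in the eigenbasis of $\psi$. There the inequality comes from discarding the nonnegative off-diagonal terms $p_ip_j|O_{ij}|^2$ and from $|O_{ii}|\le\|O\|_\infty$. The two arguments are essentially transposes of each other and are equally elementary. The paper keeps an exact identity until the last step and needs minor positivity. Yours needs only that diagonal entries are bounded by the operator norm.

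Your write-up also makes explicit two points the paper leaves tacit. First, $\Tr[O\psi O\psi]=\Tr[(\psi^{1/2}O\psi^{1/2})^2]\ge0$, which the paper's step $\Tr[O\psi]^2-\Tr[O\psi O\psi]\le\Tr[O\psi]^2$ silently uses. Second, $\|O\|_F^2\le\B$ requires $O$ to be traceless. Without that assumption the stated inequality can fail: take $O=\openone$, $\psi=\openone/d$, $\B=1$, $\P=1/d$ with $d\ge3$, so the left side is $1-1/d$ while the right side is $1/d$.
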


\begin{proof}[Proof of \lref{lemma:two_term_diff}]
    One can easily see that $\Tr[O \psi ]^2 - \Tr[O \psi  O \psi ] \leq \Tr[O \psi ]^2 \leq \min\{1, \B\P\}$ using Corollary \ref{corollary:inner_product_terms_upper_bound}. To show the upper bound $1 - \Tr[\psi ^2]$, suppose $O$ has eigen decomposition $O = \sum_j \lambda_j \ket{\psi_j} \bra{\psi_j}$, we write $\psi _{i,j} = \braket{\psi_i | \psi  | \psi_j}$ and reformulate the expression
    $$
    \begin{aligned}
        \Tr[O \psi ]^2 - \Tr[O \psi  O \psi ] &= \left( \sum_{i} \psi _{i, i} \lambda_i \right) \left( \sum_{j} \psi _{j, j} \lambda_j \right) - \sum_{i, j} \left|\psi _{i,j}\right|^2 \lambda_i \lambda_j \\
        &= \sum_{i, j} \left( \psi _{i, i}\psi _{j, j} - \left|\psi _{i,j}\right|^2 \right) \lambda_{i} \lambda_j \\
        &\leq \max_{i, j} \left|\lambda_{i} \lambda_j\right| \cdot \sum_{i, j} \left( \psi _{i, i}\psi _{j, j} - \left|\psi _{i,j}\right|^2 \right) \\
        &= \max_{i, j} \left|\lambda_{i} \lambda_j\right| \cdot \left( \Tr[\psi ]^2 - \Tr[\psi ^2] \right) =  \max_{i, j} \left|\lambda_{i} \lambda_j\right| \cdot \left( 1 - \Tr[\psi ^2] \right),
    \end{aligned}
    $$
    where the first inequality is guaranteed by the principal submatrix $[\psi _{a, b}]_{a, b \in \{i, j\} } \succeq 0$ for any density matrix $\psi $. Since $O \in \obs(\B)$, $\max_{i, j} \left|\lambda_{i} \lambda_j\right| \leq \|O\|_{\infty}^2 \leq 1$. Combining these two upper bounds and noting that $\min\{\min\{1, \B\P\}, 1 - \Tr[\psi ^2]\} = \min\{1-\Tr[\psi ^2], \B\P\}$ concludes the proof.
\end{proof}

\begin{lemma}
\label{lemma:bounding_two_indefinite_terms}
    For two observables $A, B \in \mathcal{L}(\mathcal{H})$ such that $A \succeq 0$, it holds that 
    $$
    \Tr[AB]^2 \leq \min\{ \rank(A), \rank(B) \} \cdot \Tr[ABAB].
    $$
\end{lemma}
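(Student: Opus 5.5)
The plan is to symmetrize the product $AB$ with the square root of the positive semidefinite factor, and then apply Cauchy--Schwarz to the eigenvalues of the resulting Hermitian operator. Since $A\succeq 0$, the square root $A^{1/2}$ is well defined and positive semidefinite, and $\rank(A^{1/2})=\rank(A)$. We take $B$ to be Hermitian, as the word ``observable'' indicates.

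First define $C:=A^{1/2}BA^{1/2}$. This operator is Hermitian, and cyclicity of the trace gives $\Tr[C]=\Tr[AB]$. Moreover,
\[
\Tr[C^2]=\Tr\bigl[A^{1/2}BA^{1/2}A^{1/2}BA^{1/2}\bigr]=\Tr[ABAB].
\]
In particular $\Tr[ABAB]\ge 0$, so the right-hand side of the claim is meaningful.

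Next, bound the rank of $C$. Because $C$ is a product containing the factors $A^{1/2}$ and $B$, we have $\rank(C)\le\min\{\rank(A^{1/2}),\rank(B)\}=\min\{\rank(A),\rank(B)\}$.

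Finally, let $\gamma_1,\dots,\gamma_k$ be the nonzero eigenvalues of $C$, where $k=\rank(C)$. These eigenvalues are real. Cauchy--Schwarz in $\mathbb{R}^k$ gives
\[
\Tr[C]^2=\Bigl(\sum_{i=1}^k\gamma_i\Bigr)^2\le k\sum_{i=1}^k\gamma_i^2=\rank(C)\,\Tr[C^2].
\]
This is the same rank-dependent norm comparison as in Definition~\ref{def:schatten_norm_and_holder}, namely $\|C\|_1\le\sqrt{\rank C}\,\|C\|_2$, combined with $|\Tr C|\le\|C\|_1$. Substituting the identities for $\Tr[C]$ and $\Tr[C^2]$ and the rank bound yields $\Tr[AB]^2\le\min\{\rank(A),\rank(B)\}\cdot\Tr[ABAB]$.

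I do not expect any real obstacle. The only point needing care is to use the symmetrized operator $A^{1/2}BA^{1/2}$ rather than $AB$ itself. The operator $AB$ need not be Hermitian, so applying the Schatten-norm comparison directly to it would give $\Tr[(AB)^\dagger AB]=\Tr[A^2B^2]$, which is not the quantity we want. The symmetrization is exactly what turns $\Tr[C^2]$ into $\Tr[ABAB]$.
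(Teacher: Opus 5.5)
Your proposal is correct and follows essentially the same route as the paper. Both symmetrize to $A^{1/2}BA^{1/2}$, use Cauchy--Schwarz to get $\Tr[C]^2\le\rank(C)\,\Tr[C^2]$, and bound $\rank(C)\le\min\{\rank(A),\rank(B)\}$; your explicit use of the Hermiticity of $B$, so that $C$ has real eigenvalues, only spells out a point the paper leaves implicit.
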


\begin{proof}[Proof of \lref{lemma:bounding_two_indefinite_terms}]
    Since $A \succeq 0$, the operator $\sqrt{A}$ is well-defined. Thus,
    $$
    \begin{aligned}
        \Tr[AB]^2 &= \Tr\left[\sqrt{A} B \sqrt{A} \right]^2 \leq \rank\left( \sqrt{A} B \sqrt{A} \right) \cdot \Tr\left[\left(\sqrt{A} B \sqrt{A}  \right)^2 \right] \leq \min\{ \rank(A), \rank(B) \} \cdot \Tr[ABAB],
    \end{aligned}
    $$
    where the first inequality is due to Cauchy-Schwarz. This completes the proof.
\end{proof}
Then we can proceed to analyze the upper bound on the variance case by case.
\paragraph{When the state $\rho$ is pure.} The purity $\Tr[\sigma] =\P = 1$, and using \lref{lemma:two_term_diff}, the term $\Tr[O \sigma ]^2 - \Tr\left[ O\sigma O\sigma  \right]$ vanishes. Therefore, from Corollary \ref{corollary:inner_product_terms_upper_bound} we can readily establish an upper bound
\begin{equation}
    \label{eqn:pure_state_variance_upper_bound}
    \begin{aligned}
        \Var\left[\hat{X}_j\right] &\leq \bigo{ \frac{\B}{d^2N^2} } + \bigtheta{\frac{1}{d^2 N^2}} \cdot \min\left\{ 1, \sqrt{\B} \right\} = \bigo{ \frac{\B}{d^2 N^2} }
        = \bigo{ \frac{d^2\B}{s^2} } ,
    \end{aligned}
\end{equation}
here we have used the condition that $\B \geq 1$ [cf. Problem \ref{prob:CSEU}].

\paragraph{When the state $\rho$ is mixed.} Indeed, one can readily invoke Corollary \ref{corollary:inner_product_terms_upper_bound} and Lemma \ref{lemma:two_term_diff} again to obtain
$$
\begin{aligned}
\Var\left[\hat{X}_j\right] &\leq \bigo{ \frac{\B\P}{d^2N^2} } + \bigtheta{\frac{1}{d^2 N^2}} \cdot \min\left\{1, \sqrt{\B\P} \right\} + \bigtheta{\frac{1}{dN^4}} \cdot \min\left\{1 - \Tr[\rho^2], \B\P \right\} \\
&\leq \bigo{ \frac{1}{d^2 N^2} \max\left\{ \sqrt{\B\P}, \B\P \right\} + \frac{1}{dN^4} \min\left\{1 - \Tr[\rho^2], \B\P \right\} },
\end{aligned}
$$
where we have used the convention $x + \min\{ 1, \sqrt{x}\} = \max\{\sqrt{x}, x\}$ for all $x \geq 0$.
Furthermore, if we take a second look at the last two terms and combine their asymptotics
$$
\begin{aligned}
     - \bigtheta{\frac{1}{d N^2}} \cdot \Tr[O \sigma  O \sigma ] + \bigtheta{\frac{1}{dN^4}} \left( \Tr[O \sigma ]^2 - \Tr\left[ O\sigma O\sigma  \right] \right) = \bigtheta{ \frac{1}{dN^4} \left( \Tr\left[ O \sigma  \right]^2 - \Theta(N^2) \cdot \Tr\left[ O\sigma O\sigma  \right] \right) }.
\end{aligned}
$$
\lref{lemma:bounding_two_indefinite_terms} indicates that $\Tr\left[ O \sigma  \right]^2 \leq r \cdot \Tr[O \sigma  O \sigma ]$ where $r = \min\{ \rank(\sigma ), \rank(O) \}$. Plugging this into the variance gives a finer-grained asymptotic of the upper bound
$$
\begin{aligned}
\Var\left[\hat{X}_j\right] &\leq \bigo{ \frac{\B\P}{d^2N^2} } + \bigtheta{\frac{1}{d^2 N^2}} \cdot \min\left\{1, \sqrt{\B\P} \right\} + \bigo{ \frac{1}{dN^4} \left( r - \Theta(N^2) \right) \cdot \Tr[O \sigma  O \sigma ] } \\
&\leq \bigo{\frac{1}{d^2 N^2} \max\left\{ \sqrt{\B\P}, \B\P \right\}} + \bigo{ \frac{\P}{dN^4} \left( r - \Theta(N^2) \right) }.
\end{aligned}
$$
Notably, there exists some choice of $N = \Omega(\sqrt{r})$ with a sufficiently large leading coefficient independent of both $\rho$ and $O$, such that the $1/dN^4$-scaling term vanishes, leaving alone the first term in the expression.

\begin{remark}[Scaling of the leading coefficient in $d$ for higher-order terms in \eref{eqn:scaling_of_leading_term}]
    It is beneficial to put a minor caveat that in our evaluation of the coefficients presented in \eref{eqn:scaling_of_leading_term}, the leading terms we keep in terms of $1/N^k$ strictly dominate for a sufficiently large yet constant $N$ regardless of the value of $d$. We provide a coarse-grained analysis that suffices to unveil this fact.

    For a fixed weight $w \in \mathfrak{X}(W_{\nu})$, we denote $N_j(w)$ as the number of entries in the first $d - 1$ coordinate of $w$ with absolute value $j \in \{0, 1, 2\}$. If we write $X = h(1)$ and $Y = h(2)$, then $c_{\nu}$ can be written as
    $$
    c_{\nu} = \frac{1}{\dim W_{\nu}} \sum_{w \in \mathfrak{X}(W_{\nu}) } m_{\nu}(w) \prod_{j=1}^{d-1} h(|w_j|) = \frac{1}{\dim W_{\nu}} \sum_{w \in \mathfrak{X}(W_{\nu}) } m_{\nu}(w) X^{N_1(w)} Y^{N_2(w)} =: \mathfrak{M}_{\nu}(X, Y).
    $$
    The expression $\mathfrak{M}_{\nu}(X, Y)$ can be expressed as a linear combination of the monomials $X^{N_1(w)} Y^{N_2(w)}$ with the convention that $X^0 Y^0 = 1$, where the coefficients are determined via the evaluations in the proof of \lref{lemma:expression_of_factors}. Instead of explicitly evaluating $\sum_{j=1}^{d-1}|w_j|^k$ and $(\sum_{j=1}^{d-1}|w_j|^2)^2$ for each orbit explicitly, we replace them with the monomials $X^{N_1(w)} Y^{N_2(w)}$ according to each $w$. For instance, the vector $(2, -1, -1, 0, \dots, 0)$ gives rise to $X^2 Y$. Inheriting the multiplicity and counting results in the proof of \lref{lemma:expression_of_factors}, we obtain
    $$
    \begin{aligned}
        \mathfrak{M}_{\nu_1}(X, Y) &= \frac{d-2}{d+1}X^2 + \frac{2}{d+1} X + \frac{1}{d+1}, \\
        \mathfrak{M}_{\nu_2}(X, Y) &= \frac{(d-2)(d-3)(d-4)}{d^2(d+3)} X^4 + \frac{4(d-2)(d-3)}{d^2(d+3)} X^3 + \frac{4(d-2)(d-3)}{d^2(d+3)} X^2Y \\ & \quad + \frac{4(d-2)}{d(d+3)} X^2 + \frac{8(d-2)}{d^2(d+3)} XY + \frac{4(d-2)}{d^2(d+3)} Y^2 + \frac{8(d-1)}{d^2(d+3)} X + \frac{8}{d^2(d+3)} Y + \frac{2}{d(d+3)}, \\
            \mathfrak{M}_{\nu_3}(X, Y) &= \frac{(d-3)(d-4)}{(d+1)(d+2)} X^4 + \frac{4(d-3)}{(d+1)(d+2)} X^3 + \frac{2(d-3)}{(d+1)(d+2)} X^2 Y \\
&\quad + \frac{2(2d-3)}{(d+1)(d+2)} X^2 + \frac{4}{(d+1)(d+2)} XY + \frac{8}{(d+1)(d+2)} X + \frac{2}{(d+1)(d+2)}, \\
    \mathfrak{M}_{\nu_4}(X, Y) &= \frac{(d-1)(d-2)(d-4)}{d^2(d+1)} X^4 + \frac{4(d-1)(d-2)}{d^2(d+1)} X^3  + \frac{4(d-1)(d-2)}{d^2(d+1)} X^2 + \frac{8(d-1)}{d^2(d+1)}{X} + \frac{2}{d(d+1)}.
    \end{aligned}
    $$
    For simplicity, we keep the leading order of the coefficients in the large-$d$ regime. Since $0 < X, Y < 1$,
    \begin{align*}
        c_{\nu_1} = \mathfrak{M}_{\nu_1}(X, Y) &\simeq X^2 + \frac{1}{d} \left(X + 1\right), \\
        c_{\nu_2} =  \mathfrak{M}_{\nu_2}(X, Y) &\simeq X^4 +\frac{1}{d} \left(  4X^3 + 4X^2 Y + 4X^2 + \frac{8}{d} XY + \frac{4}{d} Y^2 + \frac{8}{d} X+ \frac{8}{d^2} Y + \frac{2}{d} \right) \\
        &\simeq X^4 +\frac{4}{d} \left(X^3 + X^2 Y + X^2 \right) + \bigo{\frac{1}{d^2}}, \\
                c_{\nu_3} = \mathfrak{M}_{\nu_3}(X, Y) &\simeq X^4 + \frac{1}{d}\left( 4X^3 + 2X^2 Y + 4X^2 + \frac{4}{d} XY + \frac{8}{d} X + \frac{2}{d} \right) \\
        &\simeq X^4 + \frac{2}{d}\left( 2X^3 + X^2 Y + 2X^2 \right) + \bigo{\frac{1}{d^2}} \\
        c_{\nu_4} = \mathfrak{M}_{\nu_4}(X, Y) &\simeq X^4 + \frac{1}{d}\left(4 X^3 + 4X^2 + \frac{8}{d} X + \frac{2}{d} \right) \simeq X^4 + \frac{4}{d} \left( X^3 + X^2 \right) + \bigo{\frac{1}{d^2}}.
    \end{align*}
    Plugging these into the coefficients in \eref{eqn:scaling_of_leading_term}, we can show their asymptotics in terms of $d$
    \begin{equation}
    \label{eqn:scaling_of_coefficient}
    \begin{aligned}
        \frac{1}{d^2 - 1} + \frac{c_{\nu_2}}{2d^2(d + 1)} - \frac{c_{\nu_4}}{2d^2(d - 1)} - \frac{c_{\nu_3}}{d^2} &\simeq \frac{1 - X^4}{d^2} - \frac{2X^2(2X + Y + 2)}{d^3} + \bigo{\frac{1}{d^4}} = \bigo{\frac{1}{d^2}}, \\
        \frac{2c_{\nu_3} - c_{\nu_2} - c_{\nu_4}}{2d} &\simeq -\frac{2Y^2}{d^3} - \frac{4Y}{d^4} + \bigo{\frac{1}{d^5}} = -\bigo{\frac{1}{d^3}},
        \\
        c_{\nu_2} - c_{\nu_1}^2 &\simeq \frac{2}{d} \left( X^3 + 2X^2 Y + X^2 \right) + \bigo{\frac{1}{d^2}} = \bigo{\frac{1}{d}},\\
        c_{\nu_2} + c_{\nu_4} - 2c_{\nu_1}^2 &\simeq \frac{4}{d} \left(X^3 + X^2 Y +X^2 \right) + \bigo{\frac{1}{d^2}} = \bigo{\frac{1}{d}}.
    \end{aligned}
    \end{equation}
    \eref{eqn:scaling_of_coefficient} characterizes the dependence of the leading coefficient on $d$, consistent with the result we reported in \eref{eqn:scaling_of_leading_term}. This confirms that the coefficients of the leading $1/N^k$ terms dominate those of the omitted residual terms, regardless of the dimension $d$.
\end{remark}

Furthermore, the following lemma allows us to reduce the worst-case maximization of 
$\Var[\hat X_j]$ over input states to pure states. 
The first statement of \lref{lemma:large_s_regime_estimator_X_variance} follows from this reduction and \eref{eqn:pure_state_variance_upper_bound}.

\begin{lemma}[Pure states suffice]\label{lemma:Var_reduce_pure}
Let $\mathcal H$ be a finite-dimensional Hilbert space, let $U\in\mathsf U(\mathcal H)$, 
let $O$ be an observable on $\mathcal H$, and let $c>0$. 
Let $\mu$ be a normalized measure on $\mathsf U(\mathcal H)$, and let $p(\hat U)$ be a probability density with respect to $\mu$. 
Define
\begin{align}\label{eqn:def_Sigma_O}
    \Sigma_O
    :=
    \int_{\mathsf U(\mathcal H)}
    p(\hat U)\,
    \bigl(\hat U^\dagger O \hat U\bigr)
    \otimes
    \bigl(\hat U^\dagger O \hat U\bigr)
    \,d\mu(\hat U).
\end{align}
Then
\[
\max_{\rho\in\dens(\mathcal H)}
\left\{
    c\cdot\Tr \left[\Sigma_O(\rho\otimes\rho)\right]
    -
    \bigl[\Tr(OU\rho U^\dag)\bigr]^2
\right\}
=
\max_{\psi\in\mathfrak P(\mathcal H)}
\left\{
   c\cdot\Tr \left[\Sigma_O(\psi\otimes\psi)\right]
    -
    \bigl[\Tr(OU\psi U^\dag)\bigr]^2
\right\}.
\]
In particular, the maximum over all density operators is attained by a pure state.
\end{lemma}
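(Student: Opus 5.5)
The plan is to fix the value of the subtracted linear functional, so that on each level set only a convex function remains, and then to show that the extreme points of such a level set inside the state space are pure. Write $O' := U^\dagger O U$, so that $\Tr(OU\rho U^\dagger)=\Tr[O'\rho]$. Define the quadratic functional
\[
q(A) := \Tr\left[\Sigma_O (A\otimes A)\right] = \int_{\mathsf U(\mathcal H)} p(\hat U)\,\bigl(\Tr[\hat U^\dagger O\hat U A]\bigr)^2\,d\mu(\hat U)
\]
on Hermitian $A$. The second expression follows by expanding \eref{eqn:def_Sigma_O}, since $\Tr[(X\otimes X)(A\otimes A)]=\Tr[XA]^2$. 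Because $p\ge 0$ and each $\Tr[\hat U^\dagger O\hat U A]$ is real for Hermitian $A$, $q$ is a positive semidefinite quadratic form on the real vector space of Hermitian operators. Hence $q$ is convex and continuous. The objective is $F(\rho)=c\,q(\rho)-\Tr[O'\rho]^2$ with $c>0$. It is not convex in general, because of the subtracted square, so we cannot argue directly that its maximum sits at an extreme point of $\dens(\mathcal H)$.

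First, we slice the state space by the value of the linear functional. For $t$ in the compact interval $\{\Tr[O'\rho]:\rho\in\dens(\mathcal H)\}$, let $K_t:=\{\rho\in\dens(\mathcal H):\Tr[O'\rho]=t\}$. This is a nonempty compact convex set, and on it $F(\rho)=c\,q(\rho)-t^2$ is a convex continuous function. Therefore $\max_{K_t}F$ is attained at an extreme point of $K_t$, by the standard fact that a convex continuous function on a compact convex set in finite dimension attains its maximum at an extreme point (Krein--Milman/Bauer). Since $\dens(\mathcal H)=\bigcup_t K_t$, it then suffices to show that every extreme point of every $K_t$ is a pure state. The "$\ge$" direction of the claimed equality is trivial because $\mathfrak P(\mathcal H)\subset\dens(\mathcal H)$. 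Existence of the maximum on the left follows from compactness of $\dens(\mathcal H)$ and continuity of $F$.

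The main step is the extreme-point characterization, via a dimension count. Let $\rho\in K_t$ have rank $r$ and support projector $\Pi$. The set $\{\rho+\Delta:\Delta=\Delta^\dagger,\ \Delta=\Pi\Delta\Pi,\ \Tr\Delta=0\}$ stays inside $\dens(\mathcal H)$ for all sufficiently small $\Delta$, since $\rho$ is strictly positive on its support. This is a real affine space of dimension $r^2-1$. Intersecting it with the single extra linear constraint $\Tr[O'\Delta]=0$ leaves a subspace of dimension at least $r^2-2$. If $r\ge 2$ this dimension is $\ge 2>0$, so there is a nonzero admissible $\Delta$ with $\rho\pm\Delta\in K_t$, and $\rho$ is not extreme. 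Hence extreme points of $K_t$ have rank one, i.e.\ they lie in $\mathfrak P(\mathcal H)$. This is the only step needing care, in checking that small perturbations supported on $\Pi$ preserve positivity, and it goes through.

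Combining the steps, for every $\rho\in\dens(\mathcal H)$ with $t=\Tr[O'\rho]$ there is a pure $\psi\in K_t$ with $F(\psi)\ge F(\rho)$. Hence the maximum over $\dens(\mathcal H)$ equals the maximum over $\mathfrak P(\mathcal H)$, and it is attained by a pure state. This argument uses nothing about $c$, $p$, $\mu$ or $U$ beyond $c>0$ and $p\ge 0$, consistent with the generality of \lref{lemma:Var_reduce_pure}.
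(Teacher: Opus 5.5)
Your proposal is correct and follows the same strategy as the paper. You fix the value of $\Tr[U^\dagger O U\rho]$, note that on this slice the objective is convex (your positive semidefinite quadratic form $q$ plays the role of the paper's pointwise Jensen step for $x\mapsto x^2$ under the integral), and show that the slice is generated by pure states. The difference is only in that last step: the paper builds the decomposition explicitly by repeatedly peeling off a pure $\phi\in\operatorname{supp}(\rho)$ with matching expectation value and weight $1/\langle\phi|\rho_S^{-1}|\phi\rangle$, whereas you obtain it abstractly from Bauer's maximum principle together with your dimension count showing that extreme points of $K_t$ have rank one.
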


Recall that, by \lref{lemma:SAR_channel_average_result} and \eref{eqn:linear_estimator}, for traceless observable $O\in\obs(\B)$, we have 
\[
\hat X_j
=
\frac{1}{\p_{\bfq}}
\Tr\left( O \hat U_j \rho_0 \hat U_j^\dagger \right)
=
\frac{1}{\p_{\bfq}}
\Tr\left( O \hat U_j \rho \hat U_j^\dagger \right),
\qquad
\E[\hat X_j]
=
\Tr(OU\rho U^\dagger).
\]
Thus,
\[
\Var[\hat X_j]
=
\E[\hat X_j^2]-\E[\hat X_j]^2    
=
\frac{1}{\p_{\bfq}^2}
\E\left[
    \left(
    \Tr\left( O\hat U_j\rho\hat U_j^\dagger \right)
    \right)^2
\right]
-
\bigl[\Tr(OU\rho U^\dagger)\bigr]^2 .
\]
Let $p(\hat U)$ be the probability density of the random outcome $\hat U_j$. 
Then the variance can be written as
\[
\begin{aligned}
\Var[\hat X_j]
&=
\frac{1}{\p_{\bfq}^2}
\int_{\mathsf U(\mathcal H)}
    p(\hat U)\,
    \left[
    \Tr\left( O\hat U\rho \hat U^\dagger \right)
    \right]^2
    \,d\mu(\hat U)
-
\bigl[\Tr(OU\rho U^\dagger)\bigr]^2        \\
&=
\frac{1}{\p_{\bfq}^2}
\Tr[\Sigma_O(\rho\otimes\rho)]
-
\bigl[\Tr(OU\rho U^\dagger)\bigr]^2,
\end{aligned}
\]
where $\Sigma_O$ is defined in \eref{eqn:def_Sigma_O}. 
Applying \lref{lemma:Var_reduce_pure} with $c=\p_{\bfq}^{-2}$, we see that the maximum of $\Var[\hat X_j]$ over all input states $\rho$ is attained by a pure state. 
Combining this reduction with \eref{eqn:pure_state_variance_upper_bound}, which bounds the variance for pure input states, we obtain
\[
\Var[\hat X_j]
=
\bigo{\frac{d^2\B}{s^2}}
\]
for arbitrary input state $\rho$, provide that $s\geq C d^2$ for a sufficiently large constant $C$ [cf. Construction~\ref{construction:YRC_programming_scheme}]. 
This proves the first statement of \lref{lemma:large_s_regime_estimator_X_variance}.

\begin{proof}[Proof of \lref{lemma:Var_reduce_pure}]
For each unitary $\hat U$, define
\[
    O_{\hat U}:=\hat U^\dagger O\hat U,
    \qquad
    \widetilde O:=U^\dagger O U .
\]
For any density operator $\rho$, define
\[
    \mathscr{F}(\rho)
    :=
    c\cdot\Tr[\Sigma_O(\rho\otimes\rho)]
    -
    \bigl[\Tr(OU\rho U^\dag)\bigr]^2 .
\]
Since $\Tr(OU\rho U^\dag)=\Tr(\widetilde O\rho)$, we have
\[
    \mathscr{F}(\rho)
    =
    c\cdot\int_{\mathsf U(\mathcal H)}
    p(\hat U)\,
    \bigl[\Tr(O_{\hat U}\rho)\bigr]^2
    \,d\mu(\hat U)
    -
    \bigl[\Tr(\widetilde O\rho)\bigr]^2 .
\]

We first prove the following elementary claim.

\noindent \emph{Claim.} For every density operator $\rho$, there exists a finite pure-state
decomposition
\[
    \rho=\sum_j q_j\psi_j,
    \qquad
    q_j\geq 0,\qquad \sum_j q_j=1,
\]
such that
$\Tr(\widetilde O\psi_j)=\Tr(\widetilde O\rho)$
for every $j$.

To prove the claim, let $S=\operatorname{supp}(\rho)$, and let
$P_S$ be the projector onto $S$. When $\rho$ is pure, the claim is
trivial. Therefore, assume $\operatorname{rank}(\rho)\geq 2$. 
Since $\rho$ is supported on $S$, the value of $\Tr(\widetilde O\rho)$ falls into the interval
\[
    \left[
        \lambda_{\min}\left(P_S\widetilde O P_S \big|_S \right),
        \lambda_{\max}\left(P_S\widetilde O P_S \big|_S \right)
    \right],
\]
where $\lambda_{\max}$ and $\lambda_{\min}$ denote the maximum and minimum eigenvalues, respectively, and $P_S\widetilde{O}P_S \big|_S$ denotes the restriction of $\widetilde{O}$ to the subspace $S$.
Let $|v_{\min}\rangle$ and $|v_{\max}\rangle$ be eigenvectors corresponding to $\lambda_{\min}$ and $\lambda_{\max}$. 
By varying a superposition of these two vectors continuously, for example
\[
|\phi_\theta\rangle
    =
    \cos\theta\,|v_{\min}\rangle
    +
    \sin\theta\,|v_{\max}\rangle,
\]
the expectation value $\langle\phi_\theta|\widetilde O|\phi_\theta\rangle$ changes continuously from the minimum eigenvalue to the maximum eigenvalue. 
Therefore, there exists a unit vector
$|\phi\rangle\in S$ such that
$\langle\phi|\widetilde O|\phi\rangle=\Tr(\widetilde O\rho)$.
Let $\phi=|\phi\rangle\langle\phi|$, and choose
\[
    t:=\frac{1}{\langle \phi|\rho^{-1}_S|\phi\rangle},
\]
where $\rho^{-1}_S$ denotes the inverse of $\rho$ restricted to its support $S$. 
Then $0<t<1$, $\rho-t\phi\geq 0$, and $\rho-t\phi$ has rank strictly smaller than that of $\rho$. 
Define
\[
    \rho'
    :=
    \frac{\rho-t\phi}{1-t}.
\]
Then $\rho'$ is a density operator and
$\rho=t\phi+(1-t)\rho'$.
Moreover,
\[
\Tr(\widetilde O\rho')
=
\frac{\Tr(\widetilde O\rho)-t\,\Tr(\widetilde O\phi)}{1-t}
=
\Tr(\widetilde O\rho).
\]
Repeating this rank-reduction procedure finitely many times yields a pure-state
decomposition
$\rho=\sum_j q_j\psi_j$
such that
$\Tr(\widetilde O\psi_j)=\Tr(\widetilde O\rho)$
for all $j$. This proves the claim.

Now apply the claim to an arbitrary $\rho\in\dens(\mathcal H)$. 
Let $\rho=\sum_j q_j\psi_j$ be a pure-state decomposition satisfying
$\Tr(\widetilde O\psi_j)=\Tr(\widetilde O\rho)$
for every $j$. 
For each $\hat U$, define
$x_j(\hat U):=\Tr(O_{\hat U}\psi_j)$.
Then
\[
    \Tr(O_{\hat U}\rho)
    =
    \sum_j q_j x_j(\hat U).
\]
By convexity of the map $x\mapsto x^2$,
\[
    \bigl[\Tr(O_{\hat U}\rho)\bigr]^2
    =
    \left(\sum_j q_j x_j(\hat U)\right)^2
    \leq
    \sum_j q_j x_j(\hat U)^2.
\]
Integrating over $\hat U$, we obtain
\[
\Tr[\Sigma_O(\rho\otimes\rho)]
    =
    \int_{\mathsf U(\mathcal H)}
    p(\hat U)\,
    \bigl[\Tr(O_{\hat U}\rho)\bigr]^2
    \,d\mu(\hat U)                                      \leq
    \sum_j q_j
    \int_{\mathsf U(\mathcal H)}
    p(\hat U)\,
    \bigl[\Tr(O_{\hat U}\psi_j)\bigr]^2
    \,d\mu(\hat U)                                      =
    \sum_j q_j
    \Tr[\Sigma_O(\psi_j\otimes\psi_j)] .
\]
Since $\Tr(\widetilde O\psi_j)=\Tr(\widetilde O\rho)$ for every $j$, we also have
\[
    \bigl[\Tr(\widetilde O\rho)\bigr]^2
    =
    \sum_j q_j \bigl[\Tr(\widetilde O\psi_j)\bigr]^2.
\]
Therefore,
\[
\mathscr{F}(\rho)
=
c\cdot\Tr[\Sigma_O(\rho\otimes\rho)]
-
\bigl[\Tr(\widetilde O\rho)\bigr]^2                      \leq
\sum_j q_j 
\left\{c\cdot
    \Tr[\Sigma_O(\psi_j\otimes\psi_j)]
    -
    \bigl[\Tr(\widetilde O\psi_j)\bigr]^2
\right\}                                      
=
\sum_j q_j \mathscr{F}(\psi_j).
\]
Hence, there exists at least one index $j_0$ such that
 $\mathscr{F}(\psi_{j_0})\geq \mathscr{F}(\rho)$.
Thus, every value attained by a mixed state is upper-bounded by a value attained by some pure state. Consequently,
\[
    \sup_{\rho\in\dens(\mathcal H)}\mathscr{F}(\rho)
    \leq
    \sup_{\psi\in\mathfrak P(\mathcal H)}\mathscr{F}(\psi).
\]
The reverse inequality follows immediately from
$\mathfrak P(\mathcal H)\subseteq \dens(\mathcal H)$.
Therefore,
\[
    \sup_{\rho\in\dens(\mathcal H)}\mathscr{F}(\rho)
    =
    \sup_{\psi\in\mathfrak P(\mathcal H)}\mathscr{F}(\psi).
\]

Finally, since $\dens(\mathcal H)$ is compact and $\mathscr{F}$ is continuous,
the supremum over $\dens(\mathcal H)$ is attained. 
The argument above then shows that a maximizer can be chosen to be pure. 
This completes the proof.
\end{proof}

\section{Proof of \pref{prop:TomoEfficiency}}\label{appendix:proof_LB}

 Our proof of the CSEU-to-tomography reduction in \pref{prop:TomoEfficiency} relies on the existence and properties of covering nets of unit vectors on $\C^d$. These nets provide a finite-size approximation to the set of $d$-dimensional quantum states, serving as fingerprints for us to reconstruct the unknown unitary from shadow estimations via a brute-force approach over $\U(d)$.

\begin{definition}[\cite{Nielsen2012, Aubrun2017}]
    Let $\mathcal H$ be a $d$-dimensional Hilbert space and $\covering=\{\ket{\psi_j}\}_{j=1}^{S}$ be a set of pure states on $\mathcal H$ such that, for every pure state $\ket{\phi}$ on $\mathcal H$, there exists some $j \in [S]$ satisfying $\frac{1}{2}\|\phi-\psi_j\|_1\le \eps$. We call such a set $\covering = \covering(\mathfrak{P}(\mathbb{C}^{d}), \|\cdot\|_1, \eps)$ a pure-state $\eps$-covering net in trace distance on $\mathcal H$. In the latter context, we would omit the distance metric, defaulting to the trace distance.
\end{definition}

\begin{lemma}[{\cite[Theorem 5.11]{Aubrun2017}}]
\label{lemma:epsnet}
For any $0<\eps<1$, there exists a pure-state $\eps$-covering net $\covering$ on a $d$-dimensional Hilbert space satisfying
\[
\left(\frac{1}{\eps}\right)^{c_1 d}
\le \left|\covering\right| \le
\left(\frac{1}{\eps}\right)^{c_2 d}
\]
for some universal constants $c_1,c_2>0$.
\end{lemma}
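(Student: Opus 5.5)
The plan is to pass to the unitarily invariant probability measure $\mu$ on $\mathfrak P(\mathbb C^d)$, i.e. the law of $U\ket{0}\!\bra{0}U^\dagger$ for Haar-random $U$. Both bounds will then follow from a volume comparison using one exact cap computation. For pure states we have
\[
\tfrac12\|\phi-\psi\|_1=\sqrt{1-|\braket{\phi|\psi}|^2},
\]
which is a genuine metric on $\mathfrak P(\mathbb C^d)$ since it is the trace distance. The ball $B(\psi,\eta)$ of trace-radius $\eta$ is therefore the cap $\{\phi:|\braket{\phi|\psi}|^2\ge 1-\eta^2\}$. For Haar-random $\phi$, the overlap $|\braket{\phi|\psi}|^2$ is $\mathrm{Beta}(1,d-1)$-distributed, so $\Pr[|\braket{\phi|\psi}|^2\ge x]=(1-x)^{d-1}$. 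This gives the exact formula $\mu(B(\psi,\eta))=\eta^{2(d-1)}$ for every $\psi$ and every $0<\eta\le 1$.

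For the lower bound, any $\eps$-net $\covering$ must satisfy $\mathfrak P(\mathbb C^d)=\bigcup_j B(\psi_j,\eps)$. Taking $\mu$-measure gives $1\le|\covering|\,\eps^{2(d-1)}$, hence $|\covering|\ge (1/\eps)^{2(d-1)}\ge(1/\eps)^{d}$ for $d\ge 2$. So $c_1=1$ works; the case $d=1$ is trivial up to the convention on constants.

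For the upper bound, I would take $\covering$ to be a maximal $\eps$-separated subset of $\mathfrak P(\mathbb C^d)$ in trace distance. Such a set exists and is finite by compactness. By maximality it is an $\eps$-net. By separation and the triangle inequality, the balls $B(\psi_j,\eps/2)$ are pairwise disjoint, so $|\covering|\,(\eps/2)^{2(d-1)}\le 1$, i.e. $|\covering|\le (2/\eps)^{2(d-1)}$. For $\eps\le 1/2$ we have $2/\eps\le(1/\eps)^2$, which yields $|\covering|\le(1/\eps)^{4d}$, so $c_2=4$.

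The step I expect to be the main obstacle is the regime of $\eps$ close to $1$. There $(1/\eps)^{c_2d}\to 1$ while any net must contain at least two states, because no single cap of radius $\eps<1$ contains an orthogonal state. The two-sided bound with $\eps$-independent constants can therefore only hold for $\eps$ bounded away from $1$. I would state and prove the lemma for $0<\eps\le\eps_0$ with $\eps_0=1/2$; this is the only regime the paper uses, since the reduction in Proposition~\ref{prop:TomoEfficiency} uses $\eps=1/40$. Alternatively, for $\eps\in(1/2,1)$ one can reuse the net for $\eps=1/2$, at the cost of replacing $(1/\eps)$ by a constant base in the upper bound. The remaining verification is the Beta-law identity for the cap measure, which I would check by writing a Haar-random unit vector as a normalized complex Gaussian vector.
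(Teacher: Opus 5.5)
Your proof is correct, but it takes a different route from the paper, which gives no argument of its own here. The paper simply quotes \cite[Theorem 5.11]{Aubrun2017}, and the only part it ever uses is the upper bound $|\covering|=\exp(\mathcal O(d))$ at the fixed precision $1/40$ (in \lref{lemma:covering_net_saturation} and Proposition~\ref{prop:TomoEfficiency}).

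Your route is the standard volumetric argument, made fully explicit through the exact cap measure $\mu(B(\psi,\eta))=\eta^{2(d-1)}$. The $\mathrm{Beta}(1,d-1)$ law you invoke is right, e.g.\ via $|g_1|^2/\sum_i|g_i|^2$ for a standard complex Gaussian vector $g$. Over the citation, this buys explicit constants and, more usefully, the correct range of validity. Your raw estimate $\eps^{-2(d-1)}\le|\covering|\le(2/\eps)^{2(d-1)}$ holds for every $0<\eps<1$ and every $d$. Only the repackaging into the form $(1/\eps)^{cd}$, with the constants in the exponent, breaks down.

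Your caveat is therefore a genuine defect of the statement as written, not of your method. For $\eps<1$, every $\eps$-net needs at least $d$ elements: a unit vector orthogonal to the span of fewer than $d$ net vectors is at trace distance exactly $1$ from all of them. Meanwhile $(1/\eps)^{c_2 d}\to 1$ as $\eps\to 1$, so the upper bound must fail near $\eps=1$.

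The case $d=1$ should be excluded rather than called trivial. There the pure-state space is a single point, so $|\covering|=1<(1/\eps)^{c_1}$, and the lower bound fails for every $c_1>0$.

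Restricting to $d\ge 2$ and $\eps\le 1/2$, or putting the constants in the base of the power, repairs the lemma. Neither change affects anything downstream in the paper.
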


We now describe the reduction underlying \pref{prop:TomoEfficiency}. 
Suppose $\mathcal A$ is a protocol that solves the CSEU task in Problem~\ref{prob:CSEU}. 
Using $\mathcal A$ as a subroutine, we construct a protocol $\mathcal A_{\rm tomo}$ for process tomography of an arbitrary unknown unitary $U\in\U(d)$.

\paragraph{Step 1: Estimate linear properties of $U$.}
Let $\covering =\{\ket{\psi_a}\}_a$ be a pure-state $1/40$-covering net on $\mathbb C^d$ with $|\covering|=\exp(\mathcal O(d))$. 
Apply the CSEU protocol $\mathcal A$ to simultaneously estimate, for all identifier states pairs $(\psi_a,\psi_b)\in \covering\times \covering$, the quantities
$
\Tr(\psi_a\,U\psi_b U^\dagger)
$
up to additive error $\eps$, with overall success probability at least $2/3$. 
We denote the resulting estimates as
$
\widehat t_{a,b}\approx \Tr(\psi_a\,U\psi_b U^\dagger).
$

\paragraph{Step 2: Reconstruct a consistent unitary.}
Search over $\U(d)$ and find one unitary $\hat U$ satisfying
\begin{equation}
\label{eqn:fit_estimates}
\left|
\widehat t_{a,b}-\Tr\bigl(\psi_a\,\hat U\psi_b \hat U^\dagger\bigr)
\right|
\le \eps
\qquad
\forall \psi_a,\psi_b\in \covering.
\end{equation}
Finally, output $\hat U$ as the reconstructed unitary.

\bigskip

We now prove that this construction has the claimed query complexity and reconstruction guarantee stated in \pref{prop:TomoEfficiency}.

\begin{proof}[Proof of \pref{prop:TomoEfficiency}]
By the choice of $\covering$, we have $|\covering|=\exp(\mathcal O(d))$. In particular, the family of quantities
\[
\left\{
\Tr(\psi_a\,U\psi_b U^\dagger)
:
\psi_a,\psi_b\in\covering
\right\}
\]
has size $|\covering|^2=\exp(\bigo{d})$. By Remark~\ref{remark:Mproperties}, to simultaneously estimate this many quantities within additive error $\eps$, it suffices to run $\mathcal A$ independently for
\[
T=\mathcal{O}\bigl(\log |\covering|^2\bigr)=\mathcal{O}(d)
\]
times and then apply the standard median trick in post-processing. Thus Step~1 of $\mathcal A_{\rm tomo}$ uses $K(d,\eps)\cdot \bigo{d}$ queries in total, and with probability at least $2/3$ all estimates are correct up to additive error $\eps$, namely,
\begin{equation}
\label{eq:good_estimation_event}
\left|
\widehat t_{a,b}-\Tr(\psi_a\,U\psi_b U^\dagger)
\right|
\le \eps
\qquad
\forall \psi_a,\psi_b\in\covering.
\end{equation}

Under \eref{eq:good_estimation_event}, the true unitary $U$ itself satisfies the consistency condition \eref{eqn:fit_estimates}. Therefore, the exhaustive search in Step~2 succeeds in finding at least one unitary $\hat U$ satisfying \eref{eqn:fit_estimates}.

We claim that any unitary $\hat U$ satisfying \eref{eqn:fit_estimates} must obey
$\|\hat{\mathcal U}-\mathcal U\|_\diamond \le 5\eps$.
Otherwise, if $\|\hat{\mathcal U}-\mathcal U\|_\diamond > 5\eps$, then \lref{lemma:covering_net_saturation} below with $\eta=5\eps$ yields $\psi_{\rm in},\psi_{\rm out}\in\covering$ such that
\[
\left|
\Tr \left(\psi_{\rm out}\,\hat U\psi_{\rm in}\hat U^\dagger\right)
-
\Tr \left(\psi_{\rm out}\,U\psi_{\rm in}U^\dagger\right)
\right|
>
2\eps.
\]
But setting $\psi_a=\psi_{\rm out}$ and $\psi_b=\psi_{\rm in}$, the triangle inequality together with Eqs.~\eqref{eqn:fit_estimates} and \eqref{eq:good_estimation_event} gives
\[
\left|
\Tr\bigl(\psi_a\,\hat U\psi_b\hat U^\dagger\bigr)
-
\Tr\bigl(\psi_a\,U\psi_bU^\dagger\bigr)
\right|
\le
\left|
\Tr\bigl(\psi_a\,\hat U\psi_b\hat U^\dagger\bigr)-\widehat t_{a,b}
\right|
+
\left|
\widehat t_{a,b}-\Tr\bigl(\psi_a\,U\psi_bU^\dagger\bigr)
\right|
\le 2\eps,
\]
contradicting the previous inequality.
Therefore, every unitary $\hat U$ satisfying \eref{eqn:fit_estimates} must satisfy
$\bigl\|\hat{\mathcal U}-\mathcal U\bigr\|_\diamond \le 5\eps$.
Since the event \eref{eq:good_estimation_event} occurs with probability at least $2/3$, the same lower bound holds for the success probability of $\mathcal A_{\rm tomo}$. This completes the proof. 
\end{proof}

\begin{lemma}
\label{lemma:covering_net_saturation}
Suppose $0<\eta<1$, $U,V\in \U(d)$, $\left\|\mathcal U-\mathcal V \right\|_\diamond>\eta$, and
$\covering=\{\ket{\psi_a}\}_a$ is a pure-state $1/40$-covering net  on $\mathbb{C}^d$ with $|\covering|=\exp(\mathcal O(d))$.  
Then there exist two pure states $\psi_{\rm in}, \psi_{\rm out}\in \covering$ such that 
$$
\left| \Tr\left( \psi_{\rm out} \cdot U \psi_{\rm in} U^\dagger \right) - \Tr\left( \psi_{\rm out} \cdot V\psi_{\rm in} V^\dagger \right) \right| 
> \frac{2}{5}\eta. 
$$
\end{lemma}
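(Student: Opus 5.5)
We first reduce the diamond-norm gap to a statement about a single pure input. By \lref{lemma:saturability_diamond_norm_Krank_1} with $\lambda=1/2$, applied to the isometries $U,V$, there is a pure state $\ket{\phi}$ with $\|U\phi U^\dagger - V\phi V^\dagger\|_1=\|\mathcal U-\mathcal V\|_\diamond>\eta$, with no ancilla needed. Next, \lref{lemma:Helstrom2} applied to the two pure output states $U\phi U^\dagger$ and $V\phi V^\dagger$ gives a pure state $\varphi_+$ with
\[
\Tr\bigl[\varphi_+\,(U\phi U^\dagger - V\phi V^\dagger)\bigr]=\tfrac12\|U\phi U^\dagger-V\phi V^\dagger\|_1>\tfrac{\eta}{2}.
\]
The remaining task is to replace $\phi$ and $\varphi_+$ by net elements while losing at most $\eta/10$.

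The naive replacement, by $\psi_{\rm in},\psi_{\rm out}\in\covering$ within trace distance $1/40$, costs an additive $\mathcal O(1/40)$ error. This is not proportional to $\eta$, and it kills the bound when $\eta$ is small. This is the main obstacle. What makes it tractable is that the difference $\Delta(\psi_{\rm in},\psi_{\rm out}):=\Tr[\psi_{\rm out}(U\psi_{\rm in}U^\dagger-V\psi_{\rm in}V^\dagger)]$ is itself small when $U\approx V$, so perturbation errors should scale with $\|\mathcal U-\mathcal V\|_\diamond$, not with an absolute constant. The plan is therefore to show a relative Lipschitz bound.

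Concretely, write $\Delta$ as a bilinear expression, $\Delta=\Tr[\psi_{\rm out}U\psi_{\rm in}U^\dagger]-\Tr[\psi_{\rm out}V\psi_{\rm in}V^\dagger]$. Moving $\psi_{\rm in}\to\phi$ changes $\Delta$ by $\Tr[\psi_{\rm out}(\mathcal U-\mathcal V)(\psi_{\rm in}-\phi)]$, whose absolute value is at most $\|(\mathcal U-\mathcal V)(\psi_{\rm in}-\phi)\|_1$. By \lref{fact:diamond_norm_contractivity} this is at most $\|\mathcal U-\mathcal V\|_\diamond\,\|\psi_{\rm in}-\phi\|_1\le \tfrac{2}{40}\|\mathcal U-\mathcal V\|_\diamond$. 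Similarly, moving $\psi_{\rm out}\to\varphi_+$ costs at most $\|\psi_{\rm out}-\varphi_+\|_1\cdot\tfrac12\|U\phi U^\dagger-V\phi V^\dagger\|_1$, using that the operator $U\phi U^\dagger-V\phi V^\dagger$ is traceless with eigenvalues $\pm\lambda$. Its operator norm is therefore $\tfrac12$ of its trace norm, so this step costs at most $\tfrac{2}{40}\cdot\tfrac12 D$, where $D:=\|\mathcal U-\mathcal V\|_\diamond$. We fix $\phi$ via the saturation lemma so that $\tfrac12\|U\phi U^\dagger-V\phi V^\dagger\|_1=D/2$.

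Altogether, choosing $\psi_{\rm in}$ near $\phi$ and $\psi_{\rm out}$ near $\varphi_+$ gives
\[
\Delta(\psi_{\rm in},\psi_{\rm out})\ge \tfrac{D}{2}-\tfrac{D}{20}-\tfrac{D}{40}=\tfrac{17D}{40}>\tfrac{2}{5}D>\tfrac{2}{5}\eta.
\]
The routine checks are the constants and the precise ordering of the two replacements, done first the input and then the output. Equivalently, one can bound $|\Delta(\psi_{\rm in},\psi_{\rm out})-\Delta(\phi,\varphi_+)|$ via a triangle inequality over the two swaps, using the same operator-norm and contractivity bounds for each.
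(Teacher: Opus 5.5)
Your proposal is correct and is essentially the paper's proof: you saturate $\|\mathcal U-\mathcal V\|_\diamond$ on a pure input via \lref{lemma:saturability_diamond_norm_Krank_1} and take the positive-eigenspace projector from \lref{lemma:Helstrom2}. You then swap both for net elements with the same three-term triangle inequality (H\"older for the output swap, \lref{fact:diamond_norm_contractivity} for the input swap), so both errors scale with $\|\mathcal U-\mathcal V\|_\diamond$. The only difference is that you bound the output swap through $\|U\phi U^\dagger-V\phi V^\dagger\|_\infty=\tfrac12\|\mathcal U-\mathcal V\|_\diamond$ rather than the paper's cruder $\|O-\psi_{\rm out}\|_\infty\le\|O-\psi_{\rm out}\|_1$, which gives the constant $17/40$ instead of $2/5$; either suffices.
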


\begin{proof}[Proof of \lref{lemma:covering_net_saturation}]
Define the linear map
$\Upsilon(\cdot) := \mathcal U(\cdot)-\mathcal V(\cdot)$. 
Since $\Upsilon$ is the difference of the actions of two unitary channels, the diamond norm is attained on a pure-state input according to \lref{lemma:saturability_diamond_norm_Krank_1}.
Hence, there exists a pure state $\rho$ that saturates the trace norm of $\Upsilon(\rho)$ and thus the diamond norm of $\Upsilon$, i.e., 
$\|\Upsilon(\rho)\|_1=\|\Upsilon\|_\diamond$. 
Moreover, according to \lref{lemma:Helstrom2}, there exists a rank-one projector $O$
onto the positive eigenspace of $\Upsilon(\rho)$ that satisfies
\[
2\left|\Tr \bigl(O\,\Upsilon(\rho)\bigr)\right|
=
\|\Upsilon(\rho)\|_1
=
\|\Upsilon\|_\diamond.
\]

Since $\covering$ is a pure-state $1/40$-covering net, there exist pure states
$\psi_{\rm in},\psi_{\rm out}\in\covering$ such that
\[
\|\psi_{\rm in}-\rho\|_1\le \frac{1}{20},
\qquad
\|\psi_{\rm out}-O\|_1\le \frac{1}{20}.
\]
This pair satisfies the desired conclusion, since
\begin{align*}
&\left| \Tr\left( \psi_{\rm out} \cdot U \psi_{\rm in} U^\dagger \right) - \Tr\left( \psi_{\rm out} \cdot V\psi_{\rm in} V^\dagger \right) \right|  =
\left|\Tr\bigl(\psi_{\rm out}\,\Upsilon(\psi_{\rm in})\bigr)\right| \\
&\quad \ge
\left|\Tr\bigl(O\,\Upsilon(\rho)\bigr)\right|
-
\left|\Tr\bigl((O-\psi_{\rm out})\,\Upsilon(\rho)\bigr)\right|
-
\left|\Tr\bigl(\psi_{\rm out}\,\Upsilon(\rho-\psi_{\rm in})\bigr)\right|.
\end{align*}
The first term $\left|\Tr \bigl(O\,\Upsilon(\rho)\bigr)\right|
=\frac{1}{2}\|\Upsilon\|_\diamond$. 
For the second term, using $\|O-\psi_{\rm out}\|_{\infty} \le \|O-\psi_{\rm out}\|_1\le 1/20$ and Hölder's inequality, we have
\[
\left|\Tr\bigl((O-\psi_{\rm out})\,\Upsilon(\rho)\bigr)\right|
\le
\|O-\psi_{\rm out}\|_{\infty} \,\|\Upsilon(\rho)\|_1
\le
\frac{1}{20}\|\Upsilon\|_\diamond.
\]
For the third term, using $\|\psi_{\rm out}\|_{\infty} =1$ and
\lref{fact:diamond_norm_contractivity}, 
we get
\[
\left|\Tr\bigl(\psi_{\rm out}\,\Upsilon(\rho-\psi_{\rm in})\bigr)\right|
\le
\|\psi_{\rm out}\|_{\infty} \|\Upsilon(\rho-\psi_{\rm in})\|_1
=
\|\Upsilon(\rho-\psi_{\rm in})\|_1
\le
\|\Upsilon\|_\diamond\,\|\rho-\psi_{\rm in}\|_1
\le
\frac{1}{20}\|\Upsilon\|_\diamond.
\]
Therefore,
\[
\left| \Tr\left( \psi_{\rm out} \cdot U \psi_{\rm in} U^\dagger \right) - \Tr\left( \psi_{\rm out} \cdot V\psi_{\rm in} V^\dagger \right) \right| 
\ge
\left(\frac{1}{2}-\frac{1}{20}-\frac{1}{20}\right)\|\Upsilon\|_\diamond
=
\frac{2}{5}\|\Upsilon\|_\diamond 
>
\frac{2}{5}\eta,
\]
where we have used the assumption $\|\Upsilon\|_\diamond> \eta$.
This completes the proof.
\end{proof}

\section{Proof of \lref{lemma:unitary-tomography-rank-one}}\label{appendix:proof_lemma_unitary-tomography-rank-one}

This section proves Lemma~\ref{lemma:unitary-tomography-rank-one}, which provides the reconstruction step used in the computationally efficient, nearly query-optimal unitary tomography protocol in the main text.

To begin with, we explicitly specify the $M=3d^2-2$ pairs of rank-one projectors.
Let $\{|1\rangle,\dots,|d\rangle\}$ be the computational basis of $\C^d$. 
For any $j\ne k \in [d]$, define
\[
    |s_{jk}^{+}\rangle
    :=
    \frac{|j\rangle+|k\rangle}{\sqrt2},
    \qquad
    |s_{jk}^{\mathrm i}\rangle
    :=
    \frac{|j\rangle+\mathrm i |k\rangle}{\sqrt2}.
\]
For $j=2,\dots,d$, define
\[
|\alpha_j\rangle
:=
\frac{|1\rangle+\mathrm i |j\rangle}{\sqrt2},
\qquad
|\beta_j^{+}\rangle
:=
\frac{|1\rangle+|j\rangle}{\sqrt2},
\qquad
|\beta_j^{\mathrm i}\rangle
:=
\frac{|1\rangle+\mathrm i |j\rangle}{\sqrt2}.
\]
The collection consists of the following three families.

\noindent\emph{(I) Basis-to-basis transition probabilities.} 
For $1\le k,j\le d$, define
\[
    P_j^{(0)}:=|j\rangle\langle j|,
    \qquad
    Q_k^{(0)}:=W|k\rangle\langle k|W^\dagger .
\]
Then $\{(P_j^{(0)}, Q_k^{(0)})\}_{j,k=1}^d$ gives $d^2$ pairs of rank-one projectors.

\noindent\emph{(II) Output-superposition transition probabilities.} 
For $j\ne k$, define
\[
Q_{jk}^{+}:=
    W|s_{jk}^{+}\rangle\langle s_{jk}^{+}|W^\dagger,
\qquad
Q_{jk}^{\mathrm i}:=
    W|s_{jk}^{\mathrm i}\rangle\langle s_{jk}^{\mathrm i}|W^\dagger .
\]
Then there are $2d(d-1)$ pairs in the collection
\[
\big\{(P_j^{(0)}, Q_{jk}^{+}), (P_j^{(0)}, Q_{jk}^{\mathrm i})\big\}_{j\ne k}.
\]

\noindent\emph{(III) Input-superposition transition probabilities for column phases.}
For $j=2,\dots,d$, define
\[
P_j^{\rm ph}:=|\alpha_j\rangle\langle \alpha_j|,
\qquad
Q_j^{+}:=
W|\beta_j^{+}\rangle\langle\beta_j^{+}|W^\dagger,
\qquad
Q_j^{\mathrm i}:=
W|\beta_j^{\mathrm i}\rangle\langle\beta_j^{\mathrm i}|W^\dagger .
\]
Then
\[
\big\{(P_j^{\rm ph}, Q_j^{+}), (P_j^{\rm ph}, Q_j^{\mathrm i})\big\}_{j=2}^d
\]
gives $2(d-1)$ pairs. Hence the total number of pairs is
\[
    M=d^2+2d(d-1)+2(d-1)=3d^2-2.
\]

We now prove the reconstruction guarantee. 
Define
\[
    V:=W^\dagger U .
\]
By the standard relation between the diamond distance of unitary channels and the operator-norm distance between their implementing unitaries, the assumption
$\|\mathcal U-\mathcal W\|_\diamond\le c_0$
implies that there exists a phase $\theta\in\mathbb R$ such that
\[
\left\|e^{\mathrm i\theta}V-\openone\right\|_{\infty}
=
\left\|e^{\mathrm i\theta}W^\dagger U-\openone\right\|_{\infty}
=
\left\|e^{\mathrm i\theta}U-W\right\|_{\infty}
\leq
\left\|\mathcal U-\mathcal W\right\|_\diamond
\le c_0 .
\]

For $j,k=1,\dots,d$, we write
\[
    V_{kj}:=\langle k|V|j\rangle,
    \qquad
    a_j:=|V_{jj}|,
    \qquad
    V_{jj}=e^{\mathrm i\theta_j}a_j,
\]
and define the column-normalized matrix $C$ by
\[
    C_{kj}:=e^{-\mathrm i\theta_j}V_{kj}.
\]
Since $\|e^{\mathrm i\theta}V-\openone\|_{\infty} \leq c_0$, for every $j$,
\[
    \left|e^{\mathrm i\theta}V_{jj}-1\right|
    =
    \left|\langle j|(e^{\mathrm i\theta}V-\openone)|j\rangle\right|
    \le c_0 .
\]
Therefore,
\begin{align}\label{eq:C_jj_bound}
    C_{jj}=a_j=|V_{jj}|=\left|e^{\mathrm i\theta}V_{jj}\right|
    \ge
    1-\left|e^{\mathrm i\theta}V_{jj}-1\right|
    \ge
    1-c_0 .
\end{align}
Similarly, for $k\ne j$,
\begin{align}\label{eq:C_kj_bound}
    |C_{kj}|=|V_{kj}|
    =
    \left|\langle k|(e^{\mathrm i\theta}V-\openone)|j\rangle\right|
    \le c_0 .
\end{align}
For $k,j=1,\dots,d$, the first family gives
\[
p_{kj}:=
\Tr\left(Q_k^{(0)} U P_j^{(0)} U^\dagger\right)
=
|V_{kj}|^2
=
|C_{kj}|^2 .
\]
For $k\ne j$, the second family gives
\[
\begin{aligned}
p_{jk,+}
&:=
\Tr\left(Q_{jk}^{+} U P_j^{(0)} U^\dagger\right)
=
\left|
\left\langle s_{jk}^{+}\middle|V\middle|j\right\rangle
\right|^2 =
\left|
\frac{V_{jj}+V_{kj}}{\sqrt2}
\right|^2
=
\frac12\left(a_j^2+|C_{kj}|^2\right)
+
a_j\Re(C_{kj}),
\\
p_{jk,\mathrm i}
&:=
\Tr\left(Q_{jk}^{\mathrm i} U P_j^{(0)} U^\dagger\right)
=
\left|
\left\langle s_{jk}^{\mathrm i}\middle|V\middle|j\right\rangle
\right|^2 =
\left|
\frac{V_{jj}-\mathrm i V_{kj}}{\sqrt2}
\right|^2
=
\frac12\left(a_j^2+|C_{kj}|^2\right)
+
a_j\Im(C_{kj}).
\end{aligned}
\]
Therefore,
\[
\Re(C_{kj})
=
\frac{
p_{jk,+}-\frac12(p_{jj}+p_{kj})
}{a_j},
\qquad
\Im(C_{kj})
=
\frac{
p_{jk,\mathrm i}-\frac12(p_{jj}+p_{kj})
}{a_j}.
\]

We now use the estimates obtained from CSEU to reconstruct $C$. Let $\widehat p_{kj},\widehat p_{jk,+},\widehat p_{jk,\mathrm i}$ be the corresponding $\eta$-accurate estimates. 
For $k\ne j$, define
\[
\widehat a_j:=\sqrt{\widehat p_{jj}},
\quad
\widehat C_{jj}:=\widehat a_j,
\]
and
\[
\Re(\widehat C_{kj})
:=
\frac{
\widehat p_{jk,+}
-
\frac12(\widehat p_{jj}+\widehat p_{kj})
}{\widehat a_j},
\qquad
\Im(\widehat C_{kj})
:=
\frac{
\widehat p_{jk,\mathrm i}
-
\frac12(\widehat p_{jj}+\widehat p_{kj})
}{\widehat a_j}.
\]
Choosing $c_\star,c_0>0$ sufficiently small, for instance $c_\star,c_0\le 10^{-3}$, we have
\[
    p_{jj}=|V_{jj}|^2=a_j^2\ge (1-c_0)^2\ge .99,
\]
and
\[
    \widehat a_j=\sqrt{\widehat p_{jj}}
    \ge
    \sqrt{p_{jj}-\eta}
    \ge
    .9 .
\]
A direct perturbation estimate gives
\[
    \left|\widehat C_{kj}-C_{kj}\right|\le c_1\eta
    \qquad
    \text{for all } k,j,
\]
for some constant $c_1>0$.

It remains to recover the relative column phases. 
Define
\[
\varphi_j:=\theta_j-\theta_1,~~\forall j \in \{2,\dots,d\},
    \qquad
    \varphi_1:=0,
    \qquad
    z_j:=e^{\mathrm i\varphi_j}.
\]
The third family gives
\[
r_{j,+}:=
\Tr\left(Q_j^{+} U P_j^{\rm ph} U^\dagger\right)
=
|\langle \beta_j^{+}|V|\alpha_j\rangle|^2,
\]
and
\[
r_{j,\mathrm i}:=
\Tr\left(Q_j^{\mathrm i} U P_j^{\rm ph} U^\dagger\right)
=
|\langle \beta_j^{\mathrm i}|V|\alpha_j\rangle|^2 .
\]
Using $V_{k\ell}=e^{\mathrm i\theta_\ell}C_{k\ell}$, we obtain
\begin{align}
\label{eq:rj-plus-rj-i}
r_{j,+}
=
\frac14
\left|
C_{11}+C_{j1}
+
\mathrm i z_j(C_{1j}+C_{jj})
\right|^2,
\qquad 
r_{j,\mathrm i}
=
\frac14
\left|
C_{11}-\mathrm i C_{j1}
+
\mathrm i z_j(C_{1j}-\mathrm i C_{jj})
\right|^2 .
\end{align}

We now describe an explicit phase-recovery procedure. 
At the exact level, once $C$ is known, the two quantities $r_{j,+}$ and $r_{j,\mathrm i}$ determine $z_j$ through a well-conditioned $2\times2$ real linear system. 
Expanding the two identities in \eqref{eq:rj-plus-rj-i} gives two real linear equations in
$\Re z_j$ and $\Im z_j$. 
Indeed, for any $\gamma\in\mathbb C$, it holds universally
\[
    2\Re(\gamma z_j)
    =
    2\Re(\gamma)\Re(z_j)
    -
    2\Im(\gamma)\Im(z_j).
\]
Thus, the exact values satisfy
\[
    P_j(C)
    \begin{pmatrix}
        \Re z_j\\
        \Im z_j
    \end{pmatrix}
    =
    b_j(C,r),
\]
where $r$ denotes the pair $(r_{j,+},r_{j,\mathrm i})$, and the $2 \times 2$ real matrix
\[
P_j(C)
:=
2\begin{pmatrix}
\Re \left[
\mathrm i(C_{1j}+C_{jj})\overline{(C_{11}+C_{j1})}
\right]
&
-\Im \left[
\mathrm i(C_{1j}+C_{jj})\overline{(C_{11}+C_{j1})}
\right]
\\[1ex]
\Re \left[
\mathrm i(C_{1j}-\mathrm i C_{jj})\overline{(C_{11}-\mathrm i C_{j1})}
\right]
&
-\Im \left[
\mathrm i(C_{1j}-\mathrm i C_{jj})\overline{(C_{11}-\mathrm i C_{j1})}
\right]
\end{pmatrix},
\]
and
\[
b_j(C,r)
:=
\begin{pmatrix}
4r_{j,+}
-
|C_{11}+C_{j1}|^2
-
|C_{1j}+C_{jj}|^2
\\[0.8ex]
4r_{j,\mathrm i}
-
|C_{11}-\mathrm i C_{j1}|^2
-
|C_{1j}-\mathrm i C_{jj}|^2
\end{pmatrix}.
\]
When $C=\openone $, we have
\[
    P_j(\openone )
    =
    \begin{pmatrix}
        0 & -2\\
        2 & 0
    \end{pmatrix},
\]
whose minimum singular value is $\sigma_{\min}(P_j(\openone ))=2$. For general $C$, 
by Eqs.~\eqref{eq:C_jj_bound} and \eqref{eq:C_kj_bound}, the entries $C_{11},C_{jj},C_{1j},C_{j1}$ are within $\mathcal O(c_0)$ of their values at $C=\openone $. 
Since each entry of $P_j(C)$ is a fixed quadratic polynomial in these four variables, there exists some constant $c_2>0$ such that
\[
    \|P_j(C)-P_j(\openone )\|_{\infty}\le c_2c_0 .
\]
By Weyl's inequality for singular values \cite[Problem III.6.13]{Bhatia1997},
\[
    \sigma_{\min}(P_j(C))
    \ge
    \sigma_{\min}(P_j(\openone ))-\|P_j(C)-P_j(\openone )\|_{\infty}
    \ge
    2-c_2c_0 .
\]
Choosing $c_0>0$ sufficiently small, we may assume
\[
    \mu_0:=2-c_2c_0>1 .
\]
Thus, the phase-recovery linear systems are uniformly well-conditioned.

We now use the noisy estimates. 
Let $\widehat r_{j,+}$ and $\widehat r_{j,\mathrm i}$ be the $\eta$-accurate estimates of $r_{j,+}$ and $r_{j,\mathrm i}$. 
Construct
\[
    \widehat P_j:=P_j(\widehat C),
    \qquad
    \widehat b_j:=b_j(\widehat C,\widehat r),
\]
where $\widehat r=(\widehat r_{j,+},\widehat r_{j,\mathrm i})$. 
Since $|\widehat C_{k\ell}-C_{k\ell}|\le c_1\eta$ for all $k,\ell$, and
\[
|\widehat r_{j,+}-r_{j,+}|\le \eta,
\qquad
|\widehat r_{j,\mathrm i}-r_{j,\mathrm i}|\le \eta,
\]
we have
\[
    \left\|\widehat P_j-P_j(C)\right\|_{\infty}\le c_3\eta,
    \qquad
    \left\|\widehat b_j-b_j(C,r)\right\|_2\le c_3\eta
\]
for some constant $c_3>0$. 
By choosing $c_\star>0$ sufficiently small so that $c_3c_\star\le \mu_0/2$, we have $c_3\eta\le c_3c_\star/d\le\mu_0/2$.
Therefore,
\[
    \sigma_{\min}(\widehat P_j)
    \ge
    \sigma_{\min}(P_j(C))-\left\|\widehat P_j-P_j(C)\right\|_{\infty}
    \ge
    \mu_0-c_3\eta
    \ge
    \frac{\mu_0}{2}.
\]
Thus, $\widehat P_j$ is invertible and satisfies
\[
    \left\|\widehat P_j^{-1}\right\|_{\infty}
    \le
    \frac{2}{\mu_0}.
\]
Define
\[
    \widehat u_j:=
    \begin{pmatrix}
        \widehat x_j\\
        \widehat y_j
    \end{pmatrix}
    :=
    \widehat P_j^{-1}\widehat b_j,
    \qquad
    u_j:=
    \begin{pmatrix}
        \Re z_j\\
        \Im z_j
    \end{pmatrix}.
\]
The exact vector $u_j$ satisfies
\[
    P_j(C)u_j=b_j(C,r),
\]
and $\|u_j\|_2=1$, since $|z_j|=1$. 
Therefore, by definition of the operator norm,
\[
\begin{aligned}
\left\|\widehat u_j-u_j\right\|_2
&=
\left\|
\widehat P_j^{-1}\widehat b_j-u_j
\right\|_2  
=
\left\|
\widehat P_j^{-1}
\left(\widehat b_j-\widehat P_j u_j\right)
\right\|_2 
\le
\left\|\widehat P_j^{-1}\right\|_{\infty}  \left\|\widehat b_j-\widehat P_j u_j\right\|_2
\\
&\le
\left\|\widehat P_j^{-1}\right\|_{\infty}
\left(
\left\|\widehat b_j-b_j(C,r)\right\|_2
+
\left\|b_j(C,r)-\widehat P_j u_j\right\|_2
\right) \\
&=
\left\|\widehat P_j^{-1}\right\|_{\infty}
\left(
\left\|\widehat b_j-b_j(C,r)\right\|_2
+
\left\|( P_j(C)-\widehat P_j) u_j\right\|_2
\right) \\
&\le
\left\|\widehat P_j^{-1}\right\|_{\infty}
\left(
\left\|\widehat b_j-b_j(C,r)\right\|_2
+
\left\|\widehat P_j-P_j(C)\right\|_{\infty}\left\|u_j \right\|_2
\right) \\
&\le
\frac{2}{\mu_0}
\left(c_3\eta+c_3\eta\right)
=
c_4\eta ,
\end{aligned}
\]
where $c_4:=4c_3/\mu_0$. 
Hence,
\[
\left|
(\widehat x_j+\mathrm i\widehat y_j)-z_j
\right|
=
\left\|\widehat u_j-u_j\right\|_2
\le c_4\eta \leq c_4c_\star.
\]
Let $w_j:=\widehat x_j+\mathrm i\widehat y_j$. 
Since $|z_j|=1$, by choosing $c_\star>0$ sufficiently small so that
$c_4c_\star\le 1/2$, we have
\[
    |w_j|\ge |z_j|-|w_j-z_j|
    \ge 1-c_4\eta
    \ge \frac12.
\]
Thus, the normalization below is well-defined. Set
\[
    \widehat z_j
    :=
    \frac{w_j}{|w_j|},
    \qquad
    j=2,\dots,d,
    \qquad
    \widehat z_1:=1 .
\]
Moreover, by the triangle inequality,
\[
\begin{aligned}
\left|\widehat z_j-z_j\right|
=
\left|\frac{w_j}{|w_j|}-z_j\right| \le
\left|\frac{w_j}{|w_j|}-w_j\right|
+
|w_j-z_j|  =
\bigl||z_j|-|w_j|\bigr|
+
|w_j-z_j| \le
2|w_j-z_j|
\le
2c_4\eta .
\end{aligned}
\]
Thus, setting $c_5:=2c_4$, we have
\[
    |\widehat z_j-z_j|\le c_5\eta
    \qquad
    \forall\, j \in \{1,\dots,d\}.
\]

We are now ready to reconstruct the unitary. 
Define
\[
    \widetilde V_{kj}:=\widehat z_j\,\widehat C_{kj},
    \qquad
    1\le k,j\le d.
\]
Its exact counterpart is given by
\[
    z_j C_{kj}
    =
    e^{\mathrm i(\theta_j-\theta_1)}C_{kj}
    =
    e^{-\mathrm i\theta_1}V_{kj}.
\]
Using $|C_{kj}|\le 1$, $|\widehat z_j-z_j|\le c_5\eta$, and
$|\widehat C_{kj}-C_{kj}|\le c_1\eta$, we obtain
\[
    \left|\widetilde V_{kj}-e^{-\mathrm i\theta_1}V_{kj}\right|
    \le c_6\eta 
\]
for some constant $c_6>0$. Therefore,
\[
    \left\|\widetilde V-e^{-\mathrm i\theta_1}V\right\|_F
    \le c_6 d\eta .
\]
The matrix $\widetilde V$ is not necessarily unitary. To this end, we project it onto $\U(d)$ by taking its polar factor:
\[
    \widehat V
    :=
    \widetilde V\left(\widetilde V^\dagger\widetilde V\right)^{-1/2}.
\]
Since $d\eta<c_\star$, by choosing $c_\star>0$ sufficiently small, we have
\[
    \left\|\widetilde V-e^{-\mathrm i\theta_1}V\right\|_{\infty}\le
    \left\|\widetilde V-e^{-\mathrm i\theta_1}V\right\|_F
    \le c_6d\eta<1.
\]
Since $\widetilde V^\dagger\widetilde V$ is positive definite, the projected operator $\widehat V$ is well-defined and unitary.
By the Fan-Hoffman theorem \cite[Proposition III.5.1]{Bhatia1997}, the polar factor $\widehat V$ is the nearest unitary to $\widetilde V$ in Frobenius norm. Hence, 
\[
    \left\|\widehat V-e^{-\mathrm i\theta_1}V\right\|_F
    \le
    \left\|\widehat V-\widetilde V\right\|_F+\left\|\widetilde V-e^{-\mathrm i\theta_1}V\right\|_F
    \le
    2\left\|\widetilde V-e^{-\mathrm i\theta_1}V\right\|_F
    \le
    2c_6 d\eta .
\]
Now define
\[
    \widehat U:=W\widehat V .
\]
Note that left multiplication by $W$ does not change the distance to $U$ up to the irrelevant global phase. 
Therefore, using the standard norm inequality,
\[
\min_{\omega\in\mathbb R}
\left\|\widehat U-e^{\mathrm i\omega}U\right\|_{\infty}
\le
\left\|\widehat U-e^{-\mathrm i\theta_1}U\right\|_{\infty} 
\le
\left\|\widehat U-e^{-\mathrm i\theta_1}U\right\|_F 
=
\left\|\widehat V-e^{-\mathrm i\theta_1}V\right\|_F
\le 2c_6d\eta .
\]
Using the standard bound for unitary channels, we obtain
\[
    \left\|\widehat{\mathcal U}-\mathcal U\right\|_\diamond
    \le
    2\min_{\omega\in\mathbb R}
    \left\|\widehat U-e^{\mathrm i\omega}U \right\|_{\infty}
    \le
    4c_6d\eta
    =
    \mathcal O(d\eta).
\]

It remains to analyze the overall classical running time. 
The $M=3d^2-2$ pairs of projectors are specified explicitly from $W$ and the computational basis, and hence can be generated in $\poly(d)$ time. 
Given the estimates, the entries of $\widehat C$ are obtained by applying the explicit formulas above to $\mathcal O(d^2)$ transition probabilities. 
The relative phases are thus recovered by solving $d-1$ real linear systems of size $2\times2$. 
After this, assembling $\widetilde V$ requires $\mathcal O(d^2)$ arithmetic operations, and projecting $\widetilde V$ onto the unitary group can be done by computing a singular value decomposition of a $d\times d$ matrix. 
The final estimate is $\widehat U=W\widehat V$, which requires a single matrix multiplication. 
All these operations are polynomial in $d$, and therefore the overall classical post-processing time is $\poly(d)$, up to standard finite-precision overheads of a classical machine. 
This completes the proof.

\section{Proof of Theorem~\ref{thm:hamiltonian_learning_informal}}
\label{appendix:ham_learning}

In this section, we provide the explicit construction of our Hamiltonian-learning protocol and prove the performance guarantee stated in Theorem~\ref{thm:hamiltonian_learning_informal}. 
The construction follows the polynomial-interpolation approach of \cite{caro2022learning,stilck2024efficient,gu2024practical}, combined with our CSEU protocol for simultaneously estimating the required linear properties of short-time unitary channels.

Let $H=\sum_{\mathbf p\in\{0,1,2,3\}^n}\mu_H(\mathbf p)\sigma_{\mathbf p}$ be the Pauli expansion of the unknown traceless $n$-qubit Hamiltonian, where $d=2^n$, $\sigma_{\mathbf 0}=\openone$, and $\mu_H(\mathbf 0)=0$. Since the Pauli operators satisfy the orthogonality $\Tr(\sigma_{\mathbf p}\sigma_{\mathbf p'}) = d\,\delta_{\mathbf p,\mathbf p'}$, we have $\mu_H(\mathbf p)=\Tr(H\sigma_{\mathbf p})/d$.

We now explain how to convert the estimation of a Pauli coefficient $\mu_H(\mathbf p)$ into the estimation of a short-time dynamical expectation value. Fix any nontrivial Pauli label $\mathbf p\in\{0,1,2,3\}^n\setminus\{\mathbf 0\}$. Choose an index $k=k(\mathbf p)\in[n]$ such that $p_k\neq 0$. Then choose another Pauli label $\mathbf p'\in\{0,1,2,3\}^n$ such that $p_k'\in\{1,2,3\}\setminus\{p_k\}$ and $p_j'=0$ for all $j\neq k$. By construction, $\sigma_{\mathbf p'}$ is a single-qubit Pauli operator that anticommutes with $\sigma_{\mathbf p}$.

Define the Hermitian operator $\tau_{\mathbf p}:=\frac{\mathrm{i}}{2}[\sigma_{\mathbf p},\sigma_{\mathbf p'}]$, which is again a Pauli operator. Let 
\begin{equation}
\label{eq:rho_O_def}
\rho_{\mathbf p}:=\frac{1}{d}(\openone+\tau_{\mathbf p}), \qquad O_{\mathbf p}:=\sigma_{\mathbf p'}.
\end{equation}
Because $\tau_{\mathbf p}$ is Hermitian and satisfies $\tau_{\mathbf p}^2=\openone$, the operator $\rho_{\mathbf p}$ is a valid quantum state.

The crucial property of this construction is that the first derivative of the short-time expectation value of $O_{\mathbf p}$ in the evolved state $U_t \rho_{\mathbf p} U_t^\dag$ encodes the target coefficient $\mu_H(\mathbf p)$. 
Indeed, define
\[
f_{\mathbf p}(t)
:=
\Tr \bigl(O_{\mathbf p}\,U_t \rho_{\mathbf p} U_t^\dag\bigr)
=
\Tr \bigl(\sigma_{\mathbf p'}\,e^{-\mathrm{i} Ht}\rho_{\mathbf p}e^{\mathrm{i} Ht}\bigr).
\]
Then we have 
\[
\left. \frac{{\rm d} f_{\mathbf p}(t)}{{\rm d} t} \right|_{t=0}
=
\Tr \bigl(\mathrm{i} [H,\sigma_{\mathbf p'}]\rho_{\mathbf p}\bigr)
=
\Tr \bigl(H\,\mathrm{i}[\sigma_{\mathbf p'},\rho_{\mathbf p}]\bigr)
=
\frac{2}{d}\Tr(H\sigma_{\mathbf p})
=
2\mu_H(\bfp), 
\]
where the third equality holds because 
\[
\mathrm{i}[\sigma_{\mathbf p'},\rho_{\mathbf p}]
=
\frac{\mathrm{i}}{d}[\sigma_{\mathbf p'},\tau_{\mathbf p}]
=
\frac{2}{d}\sigma_{\mathbf p}. 
\]
Thus, recovering the first derivative of $f_{\mathbf p}(t)$ at $t=0$ is equivalent to recovering $\mu_H(\mathbf p)$.
Fortunately, this derivative can be extracted by performing polynomial interpolation
on $f_{\mathbf p}(t)$ \cite{caro2022learning,stilck2024efficient,gu2024practical}. 

Building on the polynomial interpolation technique developed in \cite{caro2022learning,stilck2024efficient,gu2024practical}, 
our Hamiltonian learning protocol proceeds as follows: 

\paragraph{Step 1: Choose target accuracy and the interpolation times.}
Following the protocol in \cite{caro2022learning}, to learn the Pauli coefficients each within additive error $0<\varepsilon<1/3$, we choose parameters 
\begin{align}\label{eqn:widetilde_varepsilon}
T:= \frac{1}{\|H\|_\infty}, 
\qquad
\widetilde\varepsilon
:=
\frac{3T\varepsilon}{4(L-1)L(2L-1)}, 
\qquad
L:=
\left\lceil
2\log \left(
\frac{8\|H\|_\infty }{\sqrt{2\pi\ln 2}\,\varepsilon}
\right)
\right\rceil, 
\end{align}
and interpolation time instances
\[
t_j:=\frac{T}{2}(1+z_j),
\qquad
z_j:=-\cos \left(\frac{2j-1}{2L}\pi\right), 
\qquad 
j=1,2,\dots,L. 
\]

\paragraph{Step 2: Estimate all required short-time expectation values.}
For each $j\in[L]$, we treat
$U_{t_j}=e^{-\mathrm{i} Ht_j}$
as an unknown unitary, and
apply our CSEU protocol to simultaneously estimate, for all $\mathbf p\neq \mathbf 0$, the quantities
$\Tr \bigl(O_{\mathbf p}U_{t_j}\rho_{\mathbf p} U_{t_j}^\dagger\bigr)$
up to additive error $\widetilde\varepsilon$, with overall success probability at least $1-\frac{1}{3L}$.
Denote the resulting estimates by
$2\widehat\alpha_{t_j}^{(\mathbf p)}\approx \Tr \bigl(O_{\mathbf p}U_{t_j}\rho_{\mathbf p} U_{t_j}^\dag\bigr)$.

\paragraph{Step 3: Compute the Chebyshev coefficients.}
For each nontrivial $\mathbf p$, 
define the approximate Chebyshev coefficients by
\[
\widehat b_0^{(\mathbf p)}
=
\frac{1}{L}\sum_{j=1}^L \widehat\alpha_{t_j}^{(\mathbf p)},
\qquad \text{and} \qquad
\widehat b_\ell^{(\mathbf p)}
=
\frac{2}{L}\sum_{j=1}^L \widehat\alpha_{t_j}^{(\mathbf p)}\,T_\ell(z_j)
\quad 
(1\le \ell\le L-1), 
\]
where $T_\ell$ is the $\ell$-th Chebyshev polynomial.

\paragraph{Step 4: Reconstruct the Pauli coefficients.}
Finally, reconstruct the Pauli coefficients
\begin{align}\label{eq:hatmu}
\widehat\mu_H(\mathbf 0)=0, 
\qquad 
\widehat\mu_H(\mathbf p)
:=
-\frac{2}{T}
\sum_{\ell=0}^{L-1}
(-1)^\ell \ell^2 \, \widehat b_\ell^{(\mathbf p)}
\quad \forall 
\mathbf p\in\{0,1,2,3\}^n\setminus\{\mathbf 0\}.
\end{align}
and the Hamiltonian 
\begin{align}\label{eq:hatH}
\widehat H
=
\sum_{\mathbf p\in\{0,1,2,3\}^n\setminus\{\mathbf 0\}}
\widehat\mu_H(\mathbf p)\sigma_{\mathbf p}. 
\end{align}

The performance of our Hamiltonian learning protocol
is stated in the following Proposition, which confirms the first statement of
\tref{thm:hamiltonian_learning_informal} in the main text.

\begin{proposition}
\label{prop:hamiltonian_learning_formal}
Our Hamiltonian learning protocol described above uses $\widetilde{\bigO}\left( d\eps^{-1} \|H\|_\infty \right)$ parallel queries to the real time-evolution unitaries $e^{-\mathrm{i}Ht}$, where each query evolves for time $\bigo{\|H\|_\infty^{-1}}$. Its total evolution time  is $\widetilde{\bigO}\left( d \eps^{-1} \right)$. 
In addition, the reconstructed Pauli coefficients $\widehat\mu_H(\mathbf p)$ in \eref{eq:hatmu} satisfy
\begin{equation}
\label{eq:pauli_coe_eps0}
\left|\widehat\mu_H(\mathbf p)-\mu_H(\mathbf p)\right|\le \varepsilon
\quad \text{for all} \ \ 
\mathbf p\in\{0,1,2,3\}^n
\end{equation}
with probability at least $2/3$. 
\end{proposition}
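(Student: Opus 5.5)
The proof splits into three parts: an error-propagation bound for the Chebyshev interpolation, a query count obtained from Theorem~\ref{thm:cseu_upper_bound} with the parameters $(\B,\P)$ of the requests $(\rho_{\mathbf p},O_{\mathbf p})$, and a union bound over the $L$ time instances.

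\textbf{Accuracy.} Define exact quantities $\alpha^{(\mathbf p)}_{t}:=\frac12 f_{\mathbf p}(t)$. Let $b^{(\mathbf p)}_\ell$ be the discrete Chebyshev coefficients built from the exact values $\alpha^{(\mathbf p)}_{t_j}$ at the $L$ Chebyshev nodes, and let $\mu^{\rm int}(\mathbf p):=-\frac{2}{T}\sum_\ell(-1)^\ell\ell^2 b^{(\mathbf p)}_\ell$.

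Since $T_\ell'(-1)=(-1)^{\ell+1}\ell^2$ and $\frac{{\rm d}t}{{\rm d}z}=T/2$, the quantity $\mu^{\rm int}(\mathbf p)$ is exactly $\frac{{\rm d}}{{\rm d}t}$ at $t=0$ of the degree-$(L-1)$ Chebyshev interpolant of $\alpha^{(\mathbf p)}_t$ on $[0,T]$. By the computation preceding the proposition, $\partial_t\alpha^{(\mathbf p)}_t|_{t=0}=\mu_H(\mathbf p)$.

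\emph{Interpolation error.} The function $t\mapsto\alpha_t^{(\mathbf p)}$ is entire. Its $m$-th derivative is bounded by $\frac12(2\|H\|_\infty)^m$, because $|\Tr(O\,\mathrm{ad}_H^m(\rho))|\le(2\|H\|_\infty)^m$. On $[0,T]$ with $T=\|H\|_\infty^{-1}$, I would invoke the standard bound for derivatives of Chebyshev interpolants of analytic functions, as done in \cite{caro2022learning}. This gives an error of order $\|H\|_\infty\,2^{-L/2}$ up to constants. With the stated $L$ (logarithmic in $\|H\|_\infty/\eps$), this error is at most $\eps/2$. I would take this bound directly from \cite[Lemma on Chebyshev derivative error]{caro2022learning}, since the parameter choices are copied from there.

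\emph{Estimation error.} The map from the values $\widehat\alpha_{t_j}$ to $\widehat\mu_H(\mathbf p)$ is linear. Since $|T_\ell(z_j)|\le1$, each $\widehat b_\ell$ is perturbed by at most $\frac{2}{L}\cdot L\cdot\frac{\widetilde\eps}{2}=\widetilde\eps$. Hence
\[
|\widehat\mu_H(\mathbf p)-\mu^{\rm int}(\mathbf p)|\le\frac{2}{T}\,\widetilde\eps\sum_{\ell<L}\ell^2=\frac{2\widetilde\eps}{T}\cdot\frac{(L-1)L(2L-1)}{6}=\frac{\eps}{4}.
\]
The two errors sum to less than $\eps$. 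The case $\mathbf p=\mathbf 0$ is trivial.

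\textbf{Query count.} The requests are $\rho_{\mathbf p}=(\openone+\tau_{\mathbf p})/d$ and $O_{\mathbf p}=\sigma_{\mathbf p'}$. These have $\Tr\rho_{\mathbf p}^2=2/d=:\P$, $\|O_{\mathbf p}\|_\infty=1$, and $\Tr O_{\mathbf p}^2=d=:\B$. Thus $\B\P=2\ge1$, so every request lies in $\mathcal R_{\P,\B}$. Moreover $\P=2/d\le\B/d$, so part 2 of Theorem~\ref{thm:cseu_upper_bound} applies in the high-precision regime $\widetilde\eps\le d^{-1}\sqrt{\P/\B}\approx\sqrt2/d^{3/2}$. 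Part 1 with $s=d$ covers the complementary regime. In both cases a single request costs
\[
\mathcal O\bigl(d\sqrt{\B\P}/\widetilde\eps\bigr)=\mathcal O(d/\widetilde\eps)
\]
queries.

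Remark~\ref{remark:Mproperties} with $M=d^2-1$ requests and failure probability $1/(3L)$ multiplies this by $\mathcal O(\log(dL))$. A union bound over the $L$ times $t_j$ gives overall success probability at least $2/3$.

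Substituting $\widetilde\eps=\Theta(\eps/(\|H\|_\infty L^3))$, the total number of queries is
\[
\mathcal O\!\bigl(L\cdot d\|H\|_\infty L^3\eps^{-1}\log(dL)\bigr)=\widetilde{\mathcal O}(d\|H\|_\infty\eps^{-1}).
\]
Each query evolves for time $t_j\le T=\|H\|_\infty^{-1}$, so the total evolution time is $\widetilde{\mathcal O}(d\eps^{-1})$. All queries are parallel, because each CSEU run is parallel and the runs are independent.

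\textbf{Main obstacle.} I expect the main obstacle to be the interpolation-error bound: checking that the derivative-of-interpolant error estimate, with the specific constant $8/\sqrt{2\pi\ln 2}$ in $L$, really yields at most $\eps/2$. I would import it from \cite{caro2022learning}, checking only that our $f_{\mathbf p}$ satisfies their derivative-growth hypothesis. The earlier computation gives exactly that.
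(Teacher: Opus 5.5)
Your proposal is correct and follows the paper's proof: a union bound over the $L$ CSEU runs with failure probability $1/(3L)$ each, a per-request cost $\mathcal O(d\sqrt{\B\P}/\widetilde\eps)$ with $\B\P=2$ times the $\mathcal O(\log(dL))$ median overhead, and the Chebyshev interpolation guarantee imported from \cite{caro2022learning}. The paper imports that guarantee wholesale as Lemma~\ref{lem:cheb_interp}, whereas you split off and verify the $\eps/4$ estimation-error part yourself; one harmless slip is that the regime threshold is $d^{-1}\sqrt{\P/\B}=\sqrt2/d^{2}$, not $\sqrt2/d^{3/2}$, which does not matter because both regimes give $\mathcal O(d/\widetilde\eps)$.
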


\begin{lemma}[{\cite[Appendix D]{caro2022learning}}]
\label{lem:cheb_interp}
For any fixed nontrivial $\mathbf p\in\{0,1,2,3\}^n\setminus\{\mathbf 0\}$, if
\begin{equation}
\label{eq:interp_assumption}
\left|
2\widehat\alpha_{t_j}^{(\mathbf p)}
-
\Tr \bigl(O_{\mathbf p}U_{t_j}\rho_{\mathbf p} U_{t_j}^\dagger\bigr)
\right|
\le
\widetilde\varepsilon
\qquad
\text{for all } j\in[L],
\end{equation}
where $\widetilde\varepsilon$ is defined in \eref{eqn:widetilde_varepsilon}, then $\widehat\mu_H(\mathbf p)$ given in \eref{eq:hatmu} satisfies
$|\widehat\mu_H(\mathbf p)-\mu_H(\mathbf p)|\le \varepsilon$.
\end{lemma}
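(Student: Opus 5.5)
The plan is the standard Chebyshev-interpolation error analysis of \cite{caro2022learning}, specialized to our scaling. Define the exact quantities $\alpha_{t}^{(\mathbf p)} := \tfrac12 f_{\mathbf p}(t)$ and view $g(z):=\alpha_{T(1+z)/2}^{(\mathbf p)}$ as a function on $[-1,1]$. Since $\frac{\mathrm d}{\mathrm dt}f_{\mathbf p}(0)=2\mu_H(\mathbf p)$ and $t=0$ corresponds to $z=-1$, we have
\[
\mu_H(\mathbf p)=\tfrac{2}{T}\,g'(-1).
\]
Write $P_{L-1}$ for the degree-$(L-1)$ Chebyshev interpolant of $g$ at the nodes $z_j$; its Chebyshev coefficients $b_\ell$ are given by the exact discrete formulas of Step~3 with $\widehat\alpha$ replaced by $\alpha$. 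Because $T_\ell'(-1)=(-1)^{\ell+1}\ell^2$, we obtain
\[
\tfrac{2}{T}P_{L-1}'(-1)=-\tfrac{2}{T}\sum_{\ell}(-1)^\ell\ell^2 b_\ell,
\]
which is exactly the formula \eqref{eq:hatmu} applied to exact data. The error then splits into two terms, the \emph{interpolation error} $\tfrac{2}{T}\bigl|g'(-1)-P_{L-1}'(-1)\bigr|$ and the \emph{propagation error} $\tfrac{2}{T}\bigl|\sum_\ell(-1)^\ell\ell^2(\widehat b_\ell-b_\ell)\bigr|$. The goal is to bound each by $\varepsilon/2$.

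\emph{Propagation term.} By hypothesis \eqref{eq:interp_assumption}, $|\widehat\alpha_{t_j}-\alpha_{t_j}|\le\widetilde\varepsilon/2$, and $|T_\ell(z_j)|\le1$. Hence $|\widehat b_\ell-b_\ell|\le\widetilde\varepsilon$ for every $\ell$. Summing,
\[
\tfrac{2}{T}\sum_{\ell=0}^{L-1}\ell^2\widetilde\varepsilon=\tfrac{2\widetilde\varepsilon}{T}\cdot\tfrac{(L-1)L(2L-1)}{6},
\]
and substituting \eqref{eqn:widetilde_varepsilon} gives exactly $\varepsilon/4\le\varepsilon/2$. This step is purely mechanical.

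\emph{Interpolation term.} This is the main obstacle. The function $g$ is entire, since $f_{\mathbf p}(t)=\Tr(O_{\mathbf p}e^{-\mathrm iHt}\rho_{\mathbf p}e^{\mathrm iHt})$. Expanding in nested commutators gives
\[
\Bigl|\tfrac{\mathrm d^k}{\mathrm dt^k}f_{\mathbf p}\Bigr|\le(2\|H\|_\infty)^k\|O_{\mathbf p}\|_\infty\|\rho_{\mathbf p}\|_1=(2\|H\|_\infty)^k,
\]
so $|g^{(k)}|\le (T\|H\|_\infty)^k/2=1/2$ with our choice $T=1/\|H\|_\infty$. I would bound the derivative of the remainder through the Chebyshev expansion of $g$. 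Writing $g=\sum_\ell a_\ell T_\ell$, aliasing gives
\[
|g'(-1)-P'_{L-1}(-1)|\le\sum_{\ell\ge L}\ell^2|a_\ell|+\sum_{\ell<L}\ell^2|b_\ell-a_\ell|,
\]
and both sums are controlled by the tail $\sum_{m\ge L}|a_m|$ weighted by $L^2$. For the coefficients, the Bernstein-ellipse estimate applied to the entire function $g$, whose growth is of order $e^{T\|H\|_\infty|z|}$, yields $|a_m|\lesssim 2^{-m}$ up to constants. The $2^{-L}$ decay then beats the polynomial $L^3$ factor once $L\ge 2\log\bigl(8\|H\|_\infty/(\sqrt{2\pi\ln2}\,\varepsilon)\bigr)$. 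The specific constant $\sqrt{2\pi\ln 2}$ suggests that Caro instead used the derivative-remainder formula together with Stirling, bounding $\frac{2^{L}}{L!}\max|g^{(L+1)}|$. I would follow whichever route lands on that constant, and accept losing a constant if needed, since $\widetilde{\mathcal O}$ absorbs it.

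Combining the two terms gives $|\widehat\mu_H(\mathbf p)-\mu_H(\mathbf p)|\le\varepsilon$. The case $\mathbf p=\mathbf 0$ is trivial.
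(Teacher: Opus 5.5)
Your reduction is correct: $\mu_H(\mathbf p)=\tfrac{2}{T}g'(-1)$, and formula \eqref{eq:hatmu} applied to exact data equals $\tfrac{2}{T}P_{L-1}'(-1)$ because $T_\ell'(-1)=(-1)^{\ell+1}\ell^2$. The propagation term is bounded by exactly $\varepsilon/4$, which is what the choice of $\widetilde\varepsilon$ encodes.

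The paper gives no argument of its own here; it imports Caro's Appendix~D. So the whole content of the lemma is that the interpolation term is at most $3\varepsilon/4$ for the \emph{prescribed} $L$, and your proposal does not establish this. The route you sketch actually fails. A fixed Bernstein ellipse only gives $|a_m|\lesssim 2^{-m}$, so after multiplying by $2/T=2\|H\|_\infty$ the interpolation error is of order $\|H\|_\infty L^3 2^{-L}$. With $L=\bigl\lceil 2\log\bigl(8\|H\|_\infty/(\sqrt{2\pi\ln 2}\,\varepsilon)\bigr)\bigr\rceil$ and natural logarithms (the paper's convention), $2^{-L}$ is only about $(\varepsilon/\|H\|_\infty)^{2\ln 2}$. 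The bound is then of order $\varepsilon\,L^3(\varepsilon/\|H\|_\infty)^{2\ln 2-1}$ with $2\ln 2-1\approx 0.39$, which exceeds $\varepsilon$ unless $\|H\|_\infty/\varepsilon$ is astronomically large. Already for $\|H\|_\infty/\varepsilon=100$ one gets $L=12$ and $\|H\|_\infty L^3 2^{-L}\approx 42\,\varepsilon$. Your fallback of ``losing a constant'' is not available either, because the lemma fixes both $L$ and $\widetilde\varepsilon$.

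The missing ingredient is that the interpolation error decays factorially, not geometrically, in $L$, and your own bound $|g^{(k)}|\le\tfrac12$ already implies this. There are two ways to exploit it.
\begin{itemize}
\item \emph{Ellipse chosen per coefficient.} Since $\|e^{-\mathrm iHt}\|_\infty\le e^{|\operatorname{Im}t|\,\|H\|_\infty}$ for complex $t$, you have $|g(z)|\le\tfrac12 e^{|\operatorname{Im}z|}$. Bernstein's estimate on $E_\rho$ with $\rho=2m$, rather than a fixed $\rho$, then gives $|a_m|\le(e/2m)^m$.
\item \emph{Divided differences, avoiding aliasing.} Write $g'(-1)-P_{L-1}'(-1)=g[z_1,\dots,z_L,-1,-1]\,\omega(-1)+g[z_1,\dots,z_L,-1]\,\omega'(-1)$ with $\omega=2^{1-L}T_L$. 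Using $|\omega(-1)|=2^{1-L}$ and $|\omega'(-1)|=2^{1-L}L^2$, you get $|g'(-1)-P_{L-1}'(-1)|\le\frac{2^{-L}}{L!}\bigl(L^2+\tfrac{1}{L+1}\bigr)$.
\end{itemize}
Combined with $e^{L/2}\ge 8\|H\|_\infty/(\sqrt{2\pi\ln 2}\,\varepsilon)$, the second bound puts $\tfrac{2}{T}|g'(-1)-P_{L-1}'(-1)|$ below $\tfrac34\varepsilon$ with a wide margin for all $L\ge 3$. The degenerate regime of the smallest $L$, where $\|H\|_\infty\lesssim\varepsilon$, still needs separate care. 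This quantitative check against the stated $L$ is what your proposal has to supply.
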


\begin{proof}[Proof of Proposition~\ref{prop:hamiltonian_learning_formal}]
For each $j\in[L]$, Step 2 of our protocol applies CSEU to the unknown unitary $U_{t_j}=e^{-\mathrm{i}Ht_j}$ and simultaneously estimates, for all $\mathbf p\neq \mathbf 0$, the quantities
$\Tr \bigl(O_{\mathbf p}U_{t_j}\rho_{\mathbf p} U_{t_j}^\dagger\bigr)$
up to additive error $\widetilde\varepsilon$, with overall success probability at least $1-\delta$, where $\delta=\frac{1}{3L}$. Thus, by the union bound, with probability at least
$1-\sum_{j=1}^L \delta=\frac{2}{3}$, the estimates satisfy
\begin{equation}
\label{eq:all_good_event}
\left|
2\widehat\alpha_{t_j}^{(\mathbf p)}
-
\Tr \bigl(O_{\mathbf p}U_{t_j}\rho_{\mathbf p} U_{t_j}^\dagger\bigr)
\right|
\le
\widetilde\varepsilon
\qquad
\text{for all } \mathbf p\neq \mathbf 0,\; j\in[L].
\end{equation}
Condition on this event. Then for all $\mathbf p\neq \mathbf 0$, Lemma~\ref{lem:cheb_interp} implies that
$|\widehat\mu_H(\mathbf p)-\mu_H(\mathbf p)|\le \varepsilon$.
For $\mathbf p=\mathbf 0$, we readily have $\widehat\mu_H(\mathbf 0)=\mu_H(\mathbf 0)=0$ by construction and the tracelessness of $H$.
Therefore, \eref{eq:pauli_coe_eps0} holds for all $\mathbf p\in\{0,1,2,3\}^n$.

It remains to bound the query complexity and the total evolution time. For each fixed $j\in[L]$, we need to simultaneously estimate the following collection of $d^2-1$ quantities
\[
\left\{
\Tr \bigl(O_{\mathbf p}U_{t_j}\rho_{\mathbf p} U_{t_j}^\dagger\bigr)
:
\mathbf p\in\{0,1,2,3\}^n\setminus\{\mathbf 0\}
\right\}
\]
up to error $\widetilde\varepsilon$, with overall success probability at least $1-\delta$. 
By the guarantee of our CSEU protocol, this can be achieved by taking
\[
N_j=
\mathcal O \left(\frac{d}{\widetilde\varepsilon}\, \max_{p\in\{0,1,2,3\}^n\setminus\{\mathbf 0\}} \left\{\sqrt{\tr(\rho_{\mathbf p}^2 )\tr(O_{\mathbf p}^2)} \right\} \log \left( \frac{d^2-1}{\delta} \right) \right)
\]
parallel queries to the time-evolution unitary $e^{-\mathrm{i}Ht_j}$. Since  $\tr(\rho_{\mathbf p}^2)=2/d$, $\tr(O_{\mathbf p}^2)=d$ [cf. \eref{eq:rho_O_def}], $\delta=\Theta(L^{-1})$, and
$\widetilde\varepsilon=\Theta(T\varepsilon L^{-3})=\Theta(\varepsilon \|H\|_\infty^{-1}L^{-3})$,
we obtain
\[
N_j=
\mathcal O \left(\frac{d L^3 \|H\|_\infty (\log(d)+\log(L))}{\varepsilon}  \right).
\]
Summing over all $L=\Theta\left( \log(\|H\|_\infty\varepsilon^{-1}) \right)$ interpolation times, the total number of queries to  $e^{-\mathrm{i}Ht}$ is 
\[
\sum_{j=1}^L N_j
\le 
\mathcal O \left(\frac{d L^4 \|H\|_\infty (\log(d)+\log(L))}{\varepsilon}  \right) 
=
\widetilde{\mathcal O} \left(\frac{d \|H\|_\infty}{\varepsilon} \right).
\]

Finally, each query to $e^{-\mathrm{i}Ht_j}$ uses evolution time $t_j\le T= \|H\|_\infty^{-1}$. So the total evolution time is upper-bounded by
\[
\sum_{j=1}^L t_j\,N_j 
\le
\frac{1}{\|H\|_\infty}\sum_{j=1}^L N_j
=
\widetilde{\mathcal O} \left(\frac{d}{\varepsilon} \right).
\]
This completes the proof.
\end{proof}

\section{Proof of \tref{thm:coherent-hamiltonian-learning-lower-bounds}}\label{appendix:ham_learning_lower_bound}
In this section, we prove \tref{thm:coherent-hamiltonian-learning-lower-bounds} in the main text via an information-theoretic approach. For notational
and technical clarity, we first introduce the notations and present some useful lemmas.

\begin{definition}[\cite{Nielsen2012}]
For two random variables $X,Y$ supported on $\mathcal X,\mathcal Y$, respectively, with joint distribution $p(x,y)=\Pr[X=x,Y=y]$, their joint entropy is
\[
    S(X,Y)=-\sum_{x\in\mathcal X,y\in\mathcal Y}p(x,y)\log p(x,y).
\]
The marginal entropies are
\[
    S(X)=-\sum_{x\in\mathcal X}p(x)\log p(x),
    \qquad
    S(Y)=-\sum_{y\in\mathcal Y}p(y)\log p(y),
\]
where $p(x)=\sum_{y\in\mathcal Y}p(x,y)$ and $p(y)=\sum_{x\in\mathcal X}p(x,y)$. 
The mutual information of $X$ and $Y$ is
\[
    I(X:Y)=S(X)+S(Y)-S(X,Y).
\]
\end{definition}

\begin{definition}[\cite{Nielsen2012, Watrous_2018}]
\label{def:quantum_information_quantities}
    For a quantum state $\sigma_A$ on a
finite-dimensional system $\H_A$, we write
\[
    S(A)_\sigma
    :=
    - \tr(\sigma_A\log\sigma_A)
\]
for its von Neumann entropy. If $\sigma_{AB}$ is a bipartite state on $\H_A \otimes \H_B$, we define the marginal state entropy
\[
    S(A)_\sigma
    :=
    S(\sigma_A),
    \qquad
    \sigma_A:= \tr_B(\sigma_{AB}).
\]
For a bipartite state $\sigma_{AB}$, define the quantum mutual information between $\H_A$ and $\H_B$ via state $\sigma$
\[
    I(A:B)_\sigma
    :=
    S(A)_\sigma+S(B)_\sigma-S(AB)_\sigma.
\]
For a classical probability distribution $\{p_x\}_{x \in \X}$, the Holevo information of the ensemble $\{ (p_x, \rho_x) \}_{x \in \X}$ is the mutual information on the classical-quantum state $\sigma(\{ (p_x, \rho_x) \}_{x \in \X}) = \sum_{x \in \X} p_x \ket{x} \bra{x}_{A} \otimes \rho_x^B$, denoted as
$$
\chi\left(\{ (p_x, \rho_x) \}_{x \in \X}\right) = I(A:B)_{\sigma(\{ (p_x, \rho_x) \}_{x \in \X})} = S\left( \sum_{x \in \X} p_x \rho_x \right) - \sum_{x \in \X} p_x S(\rho_x).
$$
The Holevo information obeys the data processing inequality: Under the mapping of any quantum channel $\mathcal{E}$, the Holevo information of an ensemble is non-increasing: $\chi\left(\{ (p_x, \mathcal{E}(\rho_x)) \}_{x \in \X}\right) \leq \chi\left(\{ (p_x, \rho_x) \}_{x \in \X}\right)$.
\end{definition}
These information-theoretic quantities have several useful properties.

\begin{lemma}[Holevo's bound, {\cite[Theorem 12.1]{Nielsen2012}}]
\label{lemma:holevo_bound}
Assume that two local parties, say, Alice and Bob, reside on $\H_A$ and $\H_B$. Alice prepares the state $\rho_x$ with probability $p_x$, sends it to Bob, and Bob retrieves information by applying the measurements $\{ \Pi_y \}_{y \in \mathcal{Y}}$. Let $X$ be the r.v. that represents Alice's state preparation, and $Y$ be the r.v. that represents the measurement outcome subject to $\Pr[Y = y | X = x] = \Tr[\Pi_y \rho_x]$ for any $y \in \mathcal{Y}$. Then the mutual information of $X$ and $Y$ is bounded by
$$
I(X:Y) \leq \chi\left( \left\{p_x, \rho_x \right\}_{x \in \X} \right).
$$
\end{lemma}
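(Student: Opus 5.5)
This is Holevo's bound as stated in \cite[Theorem 12.1]{Nielsen2012}. The plan is to build a joint quantum state that carries Alice's preparation, Bob's system, and a copy of the measurement outcome, and then use monotonicity of quantum mutual information under local operations.

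\emph{Step 1: the classical-quantum state.} Introduce a classical register $X$ and form
\[
\sigma_{XB}=\sum_{x\in\X}p_x\ket{x}\bra{x}_X\otimes\rho_x^B .
\]
By Definition~\ref{def:quantum_information_quantities}, $I(X:B)_\sigma=\chi(\{p_x,\rho_x\}_{x\in\X})$.

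\emph{Step 2: the measurement channel.} Model Bob's POVM as the quantum-to-classical channel
\[
\mathcal E_{B\to Y}(\tau)=\sum_{y\in\mathcal Y}\Tr[\Pi_y\tau]\,\ket{y}\bra{y}_Y .
\]
Applying $\mathrm{id}_X\otimes\mathcal E$ to $\sigma_{XB}$ gives
\[
\omega_{XY}=\sum_{x,y}p_x\Tr[\Pi_y\rho_x]\ket{x}\bra{x}\otimes\ket{y}\bra{y}.
\]
This state is diagonal, with diagonal entries equal to the joint distribution $p(x,y)=p_x\Pr[Y=y\mid X=x]$. The von Neumann entropies of $\omega$ and of its marginals therefore reduce to Shannon entropies, so $I(X:Y)_\omega=I(X:Y)$.

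\emph{Step 3: monotonicity.} It remains to show $I(X:Y)_\omega\le I(X:B)_\sigma$, i.e.\ that a local channel on $B$ cannot increase mutual information. I would not invoke the Holevo-information data-processing statement in Definition~\ref{def:quantum_information_quantities} directly, because it concerns a channel mapping ensembles to ensembles; applied here with $\mathcal E(\rho_x)$ in place of $\rho_x$, it says exactly $\chi(\{p_x,\mathcal E(\rho_x)\})\le\chi(\{p_x,\rho_x\})$. Combining this with $\chi(\{p_x,\mathcal E(\rho_x)\})=I(X:Y)_\omega$ from Step~2 finishes the proof.

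The only real content is this data-processing step, which is assumed from the earlier definition. If it had to be proved from scratch, the main obstacle would be strong subadditivity, via a Stinespring dilation of $\mathcal E$ followed by discarding the environment; I take that as given. The remaining steps are bookkeeping: checking that each entropy of the diagonal state $\omega_{XY}$ coincides with the corresponding Shannon entropy.
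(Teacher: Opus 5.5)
Your proof is correct and is essentially the standard argument behind the cited \cite[Theorem 12.1]{Nielsen2012}; the paper itself gives no proof beyond this citation. You encode the ensemble as a classical-quantum state, model Bob's POVM as a quantum-to-classical channel so that $I(X:Y)_\omega=I(X:Y)$, and conclude by monotonicity, taking that step from the data-processing property of the Holevo information recorded in Definition~\ref{def:quantum_information_quantities} (Nielsen--Chuang instead derive it from monotonicity of quantum mutual information, i.e.\ strong subadditivity, using an apparatus register); the only blemish is that Step~3 says you would not invoke that statement and then does exactly that, which is harmless.
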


\begin{lemma}[Fano's inequality, {\cite[Box 12.2]{Nielsen2012}}]
\label{lemma:fano}
    Suppose we are inferring the value of an r.v. $X$ taking values in $\X$ based on an observation of another r.v. $Y$ via $\widehat X = f(Y)$ for some measurable $f$. Let $p_e = \Pr[X \neq \widehat X]$ be the probability of getting an erroneous estimate, then it holds that
    $$
    - p_e \log(p_e) - (1-p_e) \log(1-p_e) + p_e \log\left(\left|\X\right| -1 \right) \geq S(X) - I(X:Y).
    $$
\end{lemma}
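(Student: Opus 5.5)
The plan is to prove the inequality through conditional entropies, using the identity $S(X)-I(X:Y)=S(X|Y)$. Here $S(X|Y):=S(X,Y)-S(Y)$, so this identity is just the definition of $I(X:Y)$ rearranged. The right-hand side of the claim is then the equivocation $S(X|Y)$, and it suffices to show
\[
S(X|Y)\le h(p_e)+p_e\log\left(|\X|-1\right),
\qquad
h(p):=-p\log p-(1-p)\log(1-p).
\]

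First I will pass from $Y$ to $\widehat X=f(Y)$. Since $\widehat X$ is a deterministic function of $Y$, conditioning on $Y$ reveals at least as much as conditioning on $\widehat X$. Concretely, $S(X|Y)=S(X|Y,\widehat X)\le S(X|\widehat X)$, where the inequality is "conditioning reduces entropy", i.e.\ nonnegativity of the conditional mutual information $I(X:Y|\widehat X)$. So it is enough to bound $S(X|\widehat X)$.

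Next I introduce the error indicator $E:=\mathbf 1[X\neq \widehat X]$, which satisfies $\Pr[E=1]=p_e$. I expand $S(E,X|\widehat X)$ by the chain rule in two ways. Because $E$ is a function of $(X,\widehat X)$, we have $S(E|X,\widehat X)=0$, so the first expansion gives $S(E,X|\widehat X)=S(X|\widehat X)$. The second expansion gives $S(E,X|\widehat X)=S(E|\widehat X)+S(X|E,\widehat X)$.

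For the first term, $S(E|\widehat X)\le S(E)=h(p_e)$. For the second, I split on the value of $E$. On $\{E=0\}$ we have $X=\widehat X$, so this part contributes $0$. On $\{E=1\}$, once $\widehat X$ is fixed, $X$ ranges over $\X\setminus\{\widehat X\}$, which has $|\X|-1$ elements; its conditional entropy is therefore at most $\log(|\X|-1)$, and this part contributes at most $p_e\log(|\X|-1)$. Combining the two expansions gives the desired bound on $S(X|\widehat X)$, and hence on $S(X|Y)$.

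The only step needing care is this conditional bound on $\{E=1\}$: the support of $X$ shrinks only after conditioning on $\widehat X$, so $\widehat X$ must be kept in the conditioning there. The remaining steps are routine applications of the chain rule. If $|\X|=1$, then $p_e=0$ and $S(X)=0$, so under the convention $0\log 0=0$ both sides vanish and the statement holds trivially.
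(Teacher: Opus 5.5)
Your proof is correct and is the standard error-indicator/chain-rule argument of Nielsen--Chuang Box 12.2, which the paper cites without reproducing. The one difference is that you first pass from $Y$ to $\widehat X$ via $S(X|Y)\le S(X|\widehat X)$, whereas the textbook conditions on $Y$ throughout, which works because $Y$ already determines $\widehat X$. Your count of $|\X|-1$ alternatives on $\{E=1\}$ tacitly requires $f$ to map into $\X$ (otherwise the inequality can fail); this assumption is implicit in the statement and is satisfied by the paper's nearest-neighbor decoders.
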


Our proof builds upon the following small incremental entangling theorem assisted by ancilla.

\begin{lemma}[Small incremental entangling, \cite{Bravyi2007, VanAcoleyenMarienVerstraete2013, MarienAudenaertVanAcoleyenVerstraete2016}]
\label{lem:SIE}
Let $\H_A,\H_a,\H_B,\H_b$ be finite-dimensional Hilbert spaces, and let
$K$ be a Hermitian operator supported on $\H_a\otimes \H_b$. For an arbitrary
pure state $|\Psi\rangle_{AaBb} \in \H_A \otimes \H_a \otimes \H_B \otimes \H_b$, define the evolved state
\[
    |\Psi(t)\rangle
    :=
    e^{-\mathrm{i}tK_{ab}}|\Psi\rangle .
\]
Then there exists a constant $C_{\rm SIE}>0$ such that for
all $t\ge 0$,
\[
    \left|
    S(Aa)_{\Psi(t)}-S(Aa)_{\Psi(0)}
    \right|
    \le
    C_{\rm SIE}t\|K\|_\infty
    \log \bigl(\min\{\dim \H_a,\dim \H_b\}\bigr).
\]
\end{lemma}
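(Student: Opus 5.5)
This lemma is cited from the literature (Bravyi, Van Acoleyen--Mari\"en--Verstraete, and Mari\"en--Audenaert--Van Acoleyen--Verstraete), so the plan is to reduce the stated form to the known entangling-rate bound and integrate. First, I would differentiate the entropy. Let $\rho(t)=\Tr_{Bb}\Psi(t)$. Since $\Psi(t)$ is pure, $S(Aa)_{\Psi(t)}=S(Bb)_{\Psi(t)}$, and
\[
\frac{\mathrm d}{\mathrm dt}S(Aa)_{\Psi(t)}=-\mathrm{i}\,\Tr\bigl(K_{ab}[\Psi(t),\log\rho_{Aa}(t)\otimes\openone_{Bb}]\bigr).
\]
This expression depends only on the instantaneous state. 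It therefore suffices to prove a uniform bound on the entangling rate,
\[
\Gamma(\Psi,K):=\Bigl|\tfrac{\mathrm d}{\mathrm dt}S(Aa)\Bigr|\le C_{\rm SIE}\|K\|_\infty\log D,\qquad D=\min\{\dim\H_a,\dim\H_b\},
\]
valid for every pure $\Psi$ and ancilla dimensions $\dim\H_A,\dim\H_B$. Integrating from $0$ to $t$ then gives the claim.

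Second, I would prove the rate bound, following Mari\"en et al. The key step is their inequality for the change of entropy under a perturbation of the form $\Psi\mapsto e^{-\mathrm{i}\delta K}\Psi e^{\mathrm{i}\delta K}$. It reduces to showing $|\Tr(K\,[\Psi,\log\rho_{Aa}])|\le c\|K\|\log D$. To do this, write the Schmidt decomposition across $Aa|Bb$ and express the rate as a bilinear form in the Schmidt coefficients with weights $\log(p_i/p_j)$. Then bound this form by a quantity of the type $\sum_{ij}|x_{ij}|^2|\log(p_i/p_j)|\,\min(p_i,p_j)$. The needed ingredient is the operator inequality that controls $p\log p$-type differences. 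Independence from $\dim\H_A,\dim\H_B$ comes from first reducing to the case where $K$ acts on $a\otimes b$ only. The ancillas $A,B$ then enter only through the reduced dynamics, and the bound must use $\log\dim\H_a$ (respectively $\dim\H_b$) rather than the full dimensions.

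Third, I would note the symmetric roles of $a$ and $b$: exchanging $Aa\leftrightarrow Bb$ leaves the entropy unchanged, which yields the minimum of the two dimensions. The main obstacle is the ancilla-independent rate bound (Bravyi's original conjecture). Its proof requires the delicate convexity estimate of Mari\"en et al.\ and cannot be obtained by a naive Fannes-type continuity argument, because that would introduce $\log\dim\H_A$. Given the paper's framing, I would invoke their Theorem directly for this step rather than reprove it.
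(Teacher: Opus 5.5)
Your plan matches the paper's, which gives no proof of its own and imports the ancilla-independent entangling-rate theorem of Van Acoleyen--Mari\"en--Verstraete (building on Bravyi); the stated integrated form follows, as you say, by integrating $\bigl|\tfrac{\mathrm d}{\mathrm dt}S(Aa)_{\Psi(t)}\bigr|\le C_{\rm SIE}\|K\|_\infty\log\min\{\dim\H_a,\dim\H_b\}$ over $[0,t]$. Two small remarks: the derivative is actually $+\mathrm{i}\,\Tr\bigl(K_{ab}[\Psi(t),\log\rho_{Aa}(t)\otimes\openone_{Bb}]\bigr)$, though the sign is harmless under the absolute value; and your Schmidt-coefficient sketch in the second step is the ancilla-free style of argument, which by itself cannot remove the dependence on $\dim\H_A,\dim\H_B$ (the Schmidt rank across $Aa|Bb$ grows with the ancillas, and $K$ already acts only on $ab$ by hypothesis, so there is nothing to ``reduce''), so it is the citation, not that sketch, that carries this step.
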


\begin{lemma}[Small incremental Holevo information]
\label{lem:small-incremental-holevo}
Let $\H_X$ be a Hilbert space (classical register) with an orthonormal basis
$\{|x\rangle\}_{x}$. Let $\H_Q$ be a $d$-dimensional
quantum system, and let $\H_A$ be an arbitrary auxiliary quantum system.
Let $\{p_x\}_x$ be a probability distribution, let $\rho_x^{QA}$ be
states on $\H_Q \otimes \H_A$, and let $H_x$ be Hermitian operators on $\H_Q$. For
$t\ge 0$, we define
\[
    \rho_x^{QA}(t)
    :=
    e^{-\mathrm{i}tH_x\otimes \openone_A}
    \rho_x^{QA}
    e^{\mathrm{i}tH_x\otimes \openone_A},
\]
the classical-quantum state
\[
    \omega_{XQA}(t)
    :=
    \sum_x p_x |x\rangle\langle x|_X\otimes \rho_x^{QA}(t),
\]
and the Holevo information for the time-evolved ensemble $\{p_x,\rho_x^{QA}(t)\}_x$
\[
    \chi(t):=I(X:QA)_{\omega(t)} = \chi\left(\{p_x,\rho_x^{QA}(t)\}_x\right).
\]
Then it holds that
\[
    |\chi(t)-\chi(0)|
    \le
    C_{\rm SIE}t\,\max_x\|H_x\|_\infty\log d .
\]
\end{lemma}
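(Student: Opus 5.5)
The plan is to reduce to the small incremental entangling lemma (Lemma~\ref{lem:SIE}) by purifying the classical register together with the mixed states. First, purify each $\rho_x^{QA}$ with an auxiliary purifying system $\H_R$, obtaining pure states $\ket{\phi_x}_{QAR}$. Then purify the ensemble by introducing a copy register $\H_{X'}$ and setting
\[
\ket{\Psi}:=\sum_x \sqrt{p_x}\,\ket{x}_X\ket{x}_{X'}\ket{\phi_x}_{QAR}.
\]
Its reduced state on $XQA$ is exactly $\omega_{XQA}(0)$. Define the controlled Hamiltonian $K:=\sum_x \ket{x}\!\bra{x}_X\otimes H_x$ acting on $X\otimes Q$. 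Since the $X$-part is block diagonal, $e^{-\mathrm{i}tK}\ket{\Psi}$ has reduced state $\omega_{XQA}(t)$ on $XQA$, and $\|K\|_\infty=\max_x\|H_x\|_\infty$.

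Next, write $\chi(t)=S(X)+S(QA)-S(XQA)$, evaluated on $\omega(t)$. The marginal on $X$ is $\sum_x p_x\ket{x}\!\bra{x}$, which is unchanged in time, so $S(X)$ is constant. It therefore suffices to control the changes of $S(QA)_{\omega(t)}$ and of $S(XQA)_{\omega(t)}$.

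For the term $S(XQA)$, the full state $e^{-\mathrm{i}tK}\ket{\Psi}$ is pure, so $S(XQA)=S(X'R)$. The evolution is generated by $K$ on $X\otimes Q$ and never touches $X'R$, so $S(X'R)$ is constant in time. Hence $S(XQA)$ does not change, and $\chi(t)-\chi(0)=S(QA)_{\omega(t)}-S(QA)_{\omega(0)}$.

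The remaining task is to control $S(QA)$, which is the main step. Apply Lemma~\ref{lem:SIE} with $a:=Q$, $A:=A$, $b:=X$, $B:=X'R$, and interaction $K_{ab}$. The lemma bounds the change of $S(Aa)=S(QA)$ by $C_{\rm SIE}t\|K\|_\infty\log\min\{d,\dim\H_X\}$. Since this is at most $C_{\rm SIE}t\max_x\|H_x\|_\infty\log d$, the claimed bound follows.

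The one subtlety is that $\H_X$ may be infinite or large. This is harmless, because the minimum in Lemma~\ref{lem:SIE} takes $\min\{d,\cdot\}$, and the index set can be assumed finite, as implicit in the finite-dimensional setting. No further computation is needed.
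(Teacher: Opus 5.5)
Your proof is correct and follows essentially the same route as the paper. Both arguments introduce the controlled Hamiltonian $K_{XQ}=\sum_x |x\rangle\langle x|_X\otimes H_x$, reduce $\chi(t)-\chi(0)$ to the change of $S(QA)$, and apply Lemma~\ref{lem:SIE} to a purification across the cut $QA : X(\mathrm{reference})$, using $\log\min\{d,\dim\H_X\}\le\log d$. The differences are cosmetic: you build the purification explicitly and show $S(XQA)$ is constant via the untouched complement $X'R$, whereas the paper takes an arbitrary purification and uses unitary invariance of each $S(\rho_x^{QA})$.
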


\begin{proof}[Proof of \lref{lem:small-incremental-holevo}]
If $\max_x\|H_x\|_\infty=0$, then every $H_x=0$, hence $\rho_x^{QA}(t)=\rho_x^{QA}$
for all $x$, and the claim is immediate. We therefore assume that $\max_x\|H_x\|_\infty>0$.

Firstly, the evolution is driven by a controlled Hamiltonian
\[
    K_{XQ}
    :=
    \sum_x |x\rangle\langle x|_X\otimes H_x^Q .
\]
The evolved quantum state is therefore
\[
    \omega_{XQA}(t)
    =
    e^{-\mathrm{i}tK_{XQ}}
    \omega_{XQA}(0)
    e^{\mathrm{i}tK_{XQ}}.
\]
Since $K_{XQ}$ is already block diagonal in the register $X$,
\[
    \|K_{XQ}\|_\infty
    =
    \max_x \|H_x\|_\infty. 
\]
We can reformulate the difference of Holevo information $\chi(t) - \chi(0)$ using the invariance of von Neumann entropy under unitary evolution, which yields $S(\rho_x^{QA}(t)) = S(\rho_x^{QA})$ for all $x$:
\begin{equation}
\label{eq:chit_chi0}
\begin{aligned}
    \chi(t) - \chi(0)
    &=
    I(X:QA)_{\omega(t)} - I(X:QA)_{\omega(0)} \\
    &=
    S(QA)_{\omega(t)}-S(QA|X)_{\omega(t)} - \left(S(QA)_{\omega(0)}-S(QA|X)_{\omega(0)}  \right) \\
    &=
    S \left(\sum_x p_x\rho_x^{QA}(t)\right)
    -
    \sum_x p_x S(\rho_x^{QA}(t)) - \left(S \left(\sum_x p_x\rho_x^{QA}\right)
    -
    \sum_x p_x S(\rho_x^{QA})\right) \\
    &= S \left(\sum_x p_x\rho_x^{QA}(t)\right)
    -
    S \left(\sum_x p_x\rho_x^{QA}\right) \\
    &=
    S(QA)_{\omega(t)}-S(QA)_{\omega(0)}.
\end{aligned}
\end{equation}

Now let $|\Psi(0)\rangle_{XQAR}$ be any purification of
$\omega_{XQA}(0)$, where $\H_R$ is a reference system. Define
\[
    |\Psi(t)\rangle_{XQAR}
    :=
    e^{-\mathrm{i}tK_{XQ}}|\Psi(0)\rangle_{XQAR}.
\]
Tracing out $\H_R$ gives $\omega_{XQA}(t)$, and hence the $QA$-marginal
of $|\Psi(t)\rangle$ is exactly $\omega_{QA}(t)$. Therefore,
\[
    S(QA)_{\omega(t)}
    =
    S(QA)_{\Psi(t)}.
\]
Apply Lemma~\ref{lem:SIE} to the space bipartition $QA:XR$, we have
\[
\begin{aligned}
    \left|
    S(QA)_{\omega(t)}-S(QA)_{\omega(0)}
    \right|
    &=
    \left|
    S(QA)_{\Psi(t)}-S(QA)_{\Psi(0)}
    \right| \\
    &\le
    C_{\rm SIE}t\,\|K_{XQ}\|_\infty\log \left(\min\{\dim \H_Q,\dim \H_X\}\right)  
    \\
    &\leq 
    C_{\rm SIE}t\,\max_x\|H_x\|_\infty \log d .
\end{aligned}
\]
Combining this with \eref{eq:chit_chi0}, we have
\[
    |\chi(t)-\chi(0)|
    \le
    C_{\rm SIE}t\,\max_x\|H_x\|_\infty\log d .
\]
This completes the proof. 
\end{proof}

\begin{lemma}[Packing of random reflections, {\cite[Section 5]{Bluhm_2026}}]
\label{lem:reflection-packing}
There exist constants $d_0\in\mathbb N$ and $c_0>0$
such that the following holds for every even integer $d\ge d_0$.
Let
\[
    O:=\operatorname{diag}\left\{\openone_{d/2},-\openone_{d/2}\right\}.
\]
Then there exist unitaries $U_1,\ldots,U_M\in\mathsf U(d)$ with
\[
    M\ge \exp(c_0d^2),
\]
such that for all $x\ne y \in [M]$ and $R_x:=U_xOU_x^\dagger$,
\[
    \frac{1}{\sqrt d}\|R_x-R_y\|_F\ge 1.
\]
\end{lemma}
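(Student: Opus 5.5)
The plan is a probabilistic construction: draw $U_1,\ldots,U_M$ independently from the Haar measure on $\mathsf U(d)$ and show that, for $M=\lceil\exp(c_0d^2)\rceil$ with $c_0$ small, all pairs are well separated with positive probability. First reduce the separation condition to a trace inequality. Since $R_x^2=\openone$ for every $x$,
\[
\|R_x-R_y\|_F^2 = 2d-2\Tr(R_xR_y).
\]
Hence $\frac1{\sqrt d}\|R_x-R_y\|_F\ge 1$ is equivalent to $\Tr(R_xR_y)\le d/2$. So it suffices to show that a single pair violates this with probability at most $\exp(-c d^2)$, and then take a union bound over the fewer than $M^2$ pairs.

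For a fixed pair, condition on $U_y$ and set $R:=R_y$, a fixed Hermitian unitary. Consider
\[
f(U):=\Tr(UOU^\dagger R).
\]
Its mean is
\[
\E_U f(U)=\Tr\bigl(\E_U[UOU^\dagger]\,R\bigr)=\frac{\Tr(O)\Tr(R)}{d}=0,
\]
because $\Tr(O)=0$ (this is where $d$ even and the balanced choice of $O$ enter). Next I bound the Lipschitz constant of $f$ in the Frobenius (Hilbert–Schmidt) metric. Writing $f(U)-f(U')=\Tr\bigl((U-U')OU^\dagger R\bigr)+\Tr\bigl(U'O(U-U')^\dagger R\bigr)$ and applying Cauchy–Schwarz together with $\|OU^\dagger R\|_F=\sqrt d$ gives
\[
|f(U)-f(U')|\le 2\sqrt d\,\|U-U'\|_F .
\]
So $f$ is $L$-Lipschitz with $L=2\sqrt d$.

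The key, and the only nonelementary, step is concentration of measure on the unitary group. I would invoke the standard Gromov–Milman/Meckes log-Sobolev bound: for $L$-Lipschitz $f$ on $\mathsf U(d)$ with the Hilbert–Schmidt metric,
\[
\Pr\bigl[f-\E f\ge t\bigr]\le \exp\!\left(-\frac{c\,d\,t^2}{L^2}\right).
\]
This bound holds on $\mathsf{SU}(d)$ and transfers to $\mathsf U(d)$. I need to check one point: the Haar measure on $\mathsf U(d)$ is not quite log-Sobolev with constant $\propto 1/d$ because of the central $\mathsf U(1)$ factor. This causes no trouble here, since $f$ is invariant under $U\mapsto e^{\mathrm i\phi}U$, so one may work on $\mathsf{SU}(d)$ directly. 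With $t=d/2$ and $L^2=4d$ this gives
\[
\Pr\bigl[\Tr(R_xR_y)>d/2\bigr]\le \exp\!\left(-\frac{c\,d^2}{16}\right),
\]
uniformly in $U_y$, and therefore also unconditionally.

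Finally, the union bound over pairs gives a failure probability at most $M^2\exp(-cd^2/16)<1$ once $c_0<c/32$ and $d\ge d_0$ absorbs the rounding and constants. Hence some realization of $U_1,\ldots,U_M$ satisfies every pairwise condition, which proves the statement. I expect the concentration step to be the main obstacle. The $d\,t^2/L^2$ scaling has to be exactly right, since any loss of a factor of $d$ would yield only $\exp(cd)$ points. If the Lipschitz bound turned out to be too lossy, I would fall back on expanding $\Tr(UOU^\dagger R)$ as a Gaussian-like quadratic form via a Gaussian approximation to Haar unitaries, or on a volumetric packing argument on the Grassmannian of $d/2$-planes, whose real dimension is $d^2/2$.
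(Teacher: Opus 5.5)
Your proof is correct and follows essentially the same route as the cited argument of Bluhm, Caro, and Oufkir: Haar-random conjugates $U_xOU_x^\dagger$, the reduction $\|R_x-R_y\|_F^2=2d-2\Tr(R_xR_y)$, Lipschitz concentration on the unitary group (constant $2\sqrt d$, tail $\exp(-c\,d\,t^2/L^2)$ at $t=d/2$), and a union bound over fewer than $M^2$ pairs. Your handling of the central $\mathsf U(1)$ factor via phase invariance of $f$ is valid, though not needed, since the Meckes--Meckes concentration bound is stated directly for Haar measure on $\mathsf U(d)$.
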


The following theorem confirms the second statement of \tref{thm:coherent-hamiltonian-learning-lower-bounds}.

\begin{theorem}[Lower bound for total evolution time with coherent queries]
\label{thm:coherent-total-time-lower-bound}
There exist a constant $d_0\in\mathbb N$ 
such that the following holds for every even integer $d\ge d_0$ and
every $0<\eps<1/4$. Consider any coherent Hamiltonian-learning protocol that is given access only to the forward real-time evolution unitaries
$U_t:=e^{-\mathrm{i}Ht}$ for tunable times $t\ge0$, where $H$ is an unknown traceless $d$-dimensional Hamiltonian $H$ satisfying
$\|H\|_\infty\le 1$. If the protocol successfully outputs $\widehat H$ with high probability:
\[
    \Pr \left[
    \frac{1}{\sqrt d}\left\|\widehat H-H\right\|_F\le \eps
    \right]\ge \frac23
\]
for any such $H$, then it must use total evolution time
\[
    T=\Omega \left(\frac{d^2}{\eps\log d}\right).
\]
\end{theorem}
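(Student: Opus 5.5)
The proof will be an information-theoretic packing argument built from the lemmas just stated. The hard instances come from \lref{lem:reflection-packing}. Set $\lambda := 4\eps$, which is less than $1$ because $\eps<1/4$, and define
\[
H_x:=\lambda R_x,\qquad x\in[M],\qquad M\ge \exp(c_0 d^2).
\]
Each $R_x$ is unitarily conjugate to $O$, so $\Tr R_x=0$ and $\|H_x\|_\infty=\lambda\le 1$; every $H_x$ therefore satisfies the promise. For $x\neq y$ the packing guarantee gives $\frac{1}{\sqrt d}\|H_x-H_y\|_F\ge\lambda=4\eps>2\eps$. Any protocol meeting the NFN guarantee can thus be turned into an identification procedure: output the index $\widehat X$ whose $H_{\widehat X}$ is closest to $\widehat H$ in normalized Frobenius norm. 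When the learner succeeds, the triangle inequality forces $\widehat X=X$. Drawing $X$ uniformly from $[M]$, we get error probability $p_e\le 1/3$.

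Next, Fano's inequality (\lref{lemma:fano}) yields $I(X:\widehat X)\ge (1-p_e)\log M-\log 2=\Omega(d^2)$. I then bound the same mutual information from above by the Holevo information of the final memory state. By \lref{lemma:holevo_bound} together with data processing, $I(X:\widehat X)\le\chi_{\rm final}$, where $\chi_{\rm final}$ is the Holevo information of the ensemble $\{1/M,\rho_x^{\rm final}\}$ just before the final measurement.

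To control $\chi_{\rm final}$, I view a general coherent protocol as a sequence of steps. Each step is either an $x$-independent channel or an oracle query $e^{-\mathrm i H_x t_k}\otimes\openone$ acting on a $d$-dimensional register $Q$ plus an ancilla $A$. Parallel queries, or queries on several registers, can be serialized into single-register queries with the same total time, since the unitaries on distinct registers commute. For the $x$-independent steps, which include intermediate measurements dilated into unitaries on a larger ancilla, the data processing property in Definition~\ref{def:quantum_information_quantities} gives that $\chi$ does not increase. For each query step, \lref{lem:small-incremental-holevo} gives $|\Delta\chi|\le C_{\rm SIE}\,t_k\,\lambda\log d$. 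Initially $\chi=0$, so telescoping gives
\[
\chi_{\rm final}\le C_{\rm SIE}\,\lambda\, T\log d = 4C_{\rm SIE}\,\eps\, T\log d .
\]
Combining this with the Fano bound yields $\eps T\log d=\Omega(d^2)$, that is, $T=\Omega\!\left(d^2/(\eps\log d)\right)$.

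The step I expect to need the most care is this modelling of an arbitrary coherent protocol as interleaved $x$-independent channels and single-register controlled evolutions, so that the lemma applies with dimension $d$ and not the full ancilla dimension. The point is that the controlled Hamiltonian $\sum_x|x\rangle\langle x|\otimes H_x$ acts nontrivially only on $X$ and $Q$, and in \lref{lem:SIE} the logarithmic factor involves $\min\{\dim\H_a,\dim\H_b\}\le d$ with $a=Q$ and $b=X$; this is exactly what the proof of \lref{lem:small-incremental-holevo} exploits. Two remaining checks are routine: adaptivity and classical feedback are absorbed into the ancilla via Stinespring dilation, and the $\Omega(d^2)$ from Fano dominates the constant $\log 2$ once $d\ge d_0$.
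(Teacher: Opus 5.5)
Your proposal is correct and follows essentially the same route as the paper's proof. Both use the reflection packing with $H_x=4\eps R_x$, nearest-neighbor decoding together with Fano's inequality to get $I(X:\widehat X)=\Omega(d^2)$, and then the small-incremental-Holevo lemma (with the $\min\{\dim\H_Q,\dim\H_X\}\le d$ factor), data processing across $x$-independent channels, and Holevo's bound to get $I(X:\widehat X)=\mathcal O(\eps T\log d)$.
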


\begin{proof}[Proof of \tref{thm:coherent-total-time-lower-bound}]

\begin{figure}[tbp!]
    \centering
    \includegraphics[width=\linewidth]{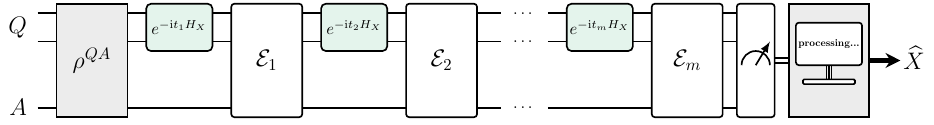}
    \caption{An arbitrary protocol that learns a Hamiltonian through coherent queries to its time evolution.}
    \label{fig:hamiltonian_learning_from_time_evolution}
\end{figure}

We will work with a general coherent query model: the protocol applies a finite sequence of time evolutions $\{e^{-\mathrm{i}t_jH}\}_{j=1}^m$ through oracle queries to the unknown Hamiltonian $H$, interspersed with quantum channels $\{\mathcal E_j\}_{j=1}^m$ that are independent of $H$, and finally performs a measurement followed by classical post-processing. 
An illustration is provided in \autoref{fig:hamiltonian_learning_from_time_evolution}. 
We define $T=\sum_{j=1}^m t_j$ as the total duration of all oracle uses. 
If several uses of the time evolution are performed in parallel, their durations are counted separately. 
As discussed in Section~\ref{sec:query_model_notation}, 
such parallel protocols are included in this model by unfolding a parallel layer into sequential oracle uses acting on different query registers; the required routing can be implemented by $H$-independent channels, such as SWAP operations between the query register $Q$ and ancillary register $A$.

Let
$O:=\operatorname{diag}\{\openone_{d/2},-\openone_{d/2}\}$.
Note that $ \tr(O)=0$, $O^2=\openone$, and $\|O\|_\infty=1$.
By Lemma~\ref{lem:reflection-packing}, there exist constant
$c_0>0$ and unitaries $U_1,\ldots,U_M\in\mathsf U(d)$, with
$M\ge \exp(c_0d^2)$,
such that 
\[
    \frac{1}{\sqrt d}\|R_x-R_y\|_F\ge 1
\]
for all $x\ne y$, where $R_x:=U_xOU_x^\dagger$. Each $R_x$ is traceless, subject to $R_x^2=\openone$ and $\|R_x\|_\infty=1$.

Then, we set
$H_x:=4\eps R_x$. 
For every $0<\eps<1/4$, 
each $H_x$ is a valid Hamiltonian in the promised class:
\[
     \tr(H_x)=0,
    \qquad
    \|H_x\|_\infty
    =
    4\eps\|R_x\|_\infty
    =
    4\eps
    \le 1.
\]
Moreover, for all $x\ne y$,
\[
\begin{aligned}
    \frac{1}{\sqrt d}\|H_x-H_y\|_F
    =
    4\eps \cdot \frac{1}{\sqrt d}\|R_x-R_y\|_F  
    \ge
    4\eps .
\end{aligned}
\]
That is, $\{H_x\}_{x=1}^M$ is a $4\eps$-separated family in
normalized Frobenius norm.

Let $X$ be uniformly distributed on $[M]$. Conditioned on
$X=x$, the unknown Hamiltonian given to the learning protocol is
$H_x$. Let $\widehat H$ be the output of the protocol. From $\widehat H$,
define the nearest-neighbor decoder
\[
    \widehat X
    :=
    \mathop{\mathrm{argmin}}_{x\in[M]}
    \frac{1}{\sqrt d}\left\|\widehat H-H_x\right\|_F ,
\]
as the r.v. that depicts the estimate, with ties broken arbitrarily.
We claim that whenever
\[
    \frac{1}{\sqrt d}\left\|\widehat H-H_X\right\|_F\le \eps,
\]
we must have $\widehat X=X$. Indeed, for any $y\ne X$, the triangle
inequality gives
\[
\begin{aligned}
    \frac{1}{\sqrt d}\left\|\widehat H-H_y\right\|_F
    &\ge
    \frac{1}{\sqrt d}\left\|H_X-H_y\right\|_F
    -
    \frac{1}{\sqrt d}\left\|\widehat H-H_X\right\|_F 
    \ge
    4\eps-\eps
    =
    3\eps
    >
    \eps .
\end{aligned}
\]
On the other hand, the distance from $\widehat H$ to $H_X$ is at
most $\eps$. Thus $H_X$ is strictly closer to $\widehat H$ than
any other $H_y$, ensuring that $\widehat X=X$.

By the performance guarantee of the learning protocol, the error probability
$p_e=\Pr[\widehat X\ne X]$ is at most $1/3$. 
Applying \lref{lemma:fano} with $Y=\widehat X$ and $f$ being the identity map, we obtain
\[
\begin{aligned}
I(X:\widehat X)
&\ge
S(X)
+
p_e\log p_e
+
(1-p_e)\log(1-p_e)
-
p_e\log(M-1) \\
&=
\log M
+
p_e\log p_e
+
(1-p_e)\log(1-p_e)
-
p_e\log(M-1) \\
&\ge
\log M-\log 2-\frac13\log(M-1) 
\ge
\frac23\log M-\log 2
=
\Omega(d^2),
\end{aligned}
\]
where we used $M\ge \exp(c_0d^2)$.

It remains to bound $I(X:\widehat X)$ from above in terms of the total
evolution time. We track the Holevo information of the ensemble of
all possible algorithm states during the learning procedure. The learning protocol takes the generic form in \autoref{fig:hamiltonian_learning_from_time_evolution}. The quantum system is given by $\H_Q \otimes \H_A$, where $\H_Q$ is the $d$-dimensional query system on which the
unknown Hamiltonian acts, and $\H_A$ denotes the space of all auxiliary registers and
quantum memory.

Consider the $j$-th oracle-evolution segment, and let its duration be $t_j$. 
For $x\in[M]$, define the evolution channel generated by $H_x$ as
$\mathcal U_{x,t}(\cdot):=e^{-\mathrm{i}tH_x\otimes\openone_A}(\cdot)e^{\mathrm{i}tH_x\otimes\openone_A}$.
We define the state at the beginning of the $j$-th segment recursively by
\[
    \rho_x^{(1)}(0):=\rho,
    \qquad
    \rho_x^{(j)}(0)
    =
    \mathcal E_{j-1}\circ \mathcal U_{x,t_{j-1}}\circ
    \cdots \circ
    \mathcal E_1\circ \mathcal U_{x,t_1}(\rho)
    \quad \text{for } j\ge 2.
\]
During the $j$-th segment, for any $0\le s\le t_j$, the state evolves as
\[
    \rho_x^{(j)}(s)
    =
    e^{-\mathrm{i}sH_x\otimes \openone_A}
    \rho_x^{(j)}(0)
    e^{\mathrm{i}sH_x\otimes \openone_A}.
\]
We denote the Holevo information of the ensemble at time $s$ in this segment by
\[
    \chi_j(s)
    :=
    \chi\left(
    \left\{
    p_x,\rho_x^{(j)}(s)
    \right\}_{x\in[M]}
    \right),
\]
where $p_x=1/M$ for all $x\in[M]$.

Next, we bound the increase of $\chi_j(s)$ during this segment from above. By Lemma~\ref{lem:small-incremental-holevo}, we have
\[
    |\chi_j(t_j)-\chi_j(0)|
    \le
    C_{\rm SIE}t_j\,\max_x\|H_x\|_\infty\log d .
\]
Since
$\max_x\|H_x\|_\infty=4\eps$,
there exists a constant $C'>0$ such that
\begin{equation}
\label{eqn:holevo_difference_upper_bound}
    \chi_j(t_j)-\chi_j(0)
    \le
    C'\eps \,t_j\log d .
\end{equation}

Between oracle-evolution segments, the protocol applies channels $\{ \mathcal{E}_j \}_{j=1}^{m-1}$ that
are independent of $x$. By the data-processing inequality [cf. Definition \ref{def:quantum_information_quantities}], such channels cannot increase the Holevo
information.
Before the first oracle use, the protocol's initial state $\rho$ is independent of
$X$, and hence the initial Holevo information $\chi_1(0) = 0$. Therefore, summing over all oracle-evolution segments gives
$$
\begin{aligned}
\sum_{j=1}^{m} \left( \chi_{j}(t_j) - \chi_{j}(0) \right) &= \sum_{j=2}^{m} \left[ \chi\left( \left\{ p_x, \rho_x^{(j)}(t_j) \right\}_{x \in [M] } \right) - \chi\left( \left\{ p_x, \rho_x^{(j)}(0) \right\}_{x \in [M] } \right) \right] + \chi_1(t_1) \\
&=\sum_{j=2}^{m} \left[ \chi\left( \left\{ p_x, \rho_x^{(j)}(t_j) \right\}_{x \in [M] } \right) - \chi\left( \left\{ p_x, \mathcal{E}_{j-1}\left( \rho_x^{(j-1)}(t_{j-1}) \right) \right\}_{x \in [M] } \right) \right] + \chi_1(t_1) \\
&\geq \sum_{j=2}^{m} \left[ \chi\left( \left\{ p_x, \rho_x^{(j)}(t_j) \right\}_{x \in [M] } \right) - \chi\left( \left\{ p_x,  \rho_x^{(j-1)}(t_{j-1})  \right\}_{x \in [M] } \right) \right] + \chi_1(t_1) \\
&= \chi\left( \left\{ p_x, \rho_x^{(m)}(t_m) \right\}_{x \in [M] } \right) = \chi_m(t_m).
\end{aligned}
$$
Combining this with \eref{eqn:holevo_difference_upper_bound}, if we denote the Holevo information of the ensemble
of the final quantum states before the final measurement as $\chi_{\rm final}$, as specified in \autoref{fig:hamiltonian_learning_from_time_evolution}, it holds that
\[
    \chi_{\rm final} = \chi\left( \left\{ p_x, \mathcal{E}_m\left(\rho_x^{(m)}(t_m) \right) \right\}_{x \in [M] } \right) \leq \chi_m(t_m) \leq \sum_{j=1}^m \left(\chi_j(t_j) - \chi_j(0)\right)
    \le
    C'\eps\log d \cdot \sum_{j=1}^m t_j
    =
    C'\eps T\log d .
\]

Finally, $\widehat X$ is obtained from the final quantum state by a
measurement followed by classical post-processing. Using \lref{lemma:holevo_bound},
the mutual information between $X$ and any classical variable obtained
by measuring the final state and conducting arbitrary post-processing can not exceed $\chi_{\rm final}$. Therefore, we have
\[
    I(X:\widehat X)
    \le
    \chi_{\rm final}
    \le
    C'\eps T\log d .
\]
Combining this upper bound with the Fano lower bound
$I(X:\widehat X)\ge \Omega(d^2)$, we obtain
$C'\eps T\log d \ge \Omega(d^2)$, 
or equivalently,
\[
    T\ge
    \Omega \left(\frac{d^2}{\eps\log d}\right).
\]
This completes the proof.
\end{proof}

Finally, the following corollary confirms the first statement of \tref{thm:coherent-hamiltonian-learning-lower-bounds}.

\begin{corollary}[Lower bound for estimating all Pauli coefficients]
\label{cor:pauli-coeff-lower-bound}
Let $d=2^n$. There exist a constant $n_0\in\mathbb N$ such that the following holds for every integer $n\ge n_0$ and
every $0<\eps<\frac{1}{4d}$. Consider any coherent Hamiltonian-learning
protocol that is given access only to the real-time evolution unitaries
$U_t:=e^{-\mathrm{i}Ht}$ $(t\ge 0)$, 
of an unknown traceless $n$-qubit Hamiltonian $H$ satisfying
$\|H\|_\infty\le 1$. If the protocol outputs estimates
$\{\widehat\mu_H(\mathbf p)\}_{\mathbf p\in\{0,1,2,3\}^n}$ such that
\[
    \Pr  \Big[\forall\,\bfp\in\{0,1,2,3\}^n, \
    |\widehat\mu_H(\bfp)-\mu_H(\bfp)|\le \eps
    \Big]\ge \frac23
\]
for any such $H$, then it must use total evolution time
\[
    T=\Omega \left(\frac{d}{\eps\log d}\right).
\]
\end{corollary}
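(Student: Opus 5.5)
We reduce Corollary~\ref{cor:pauli-coeff-lower-bound} to \tref{thm:coherent-total-time-lower-bound}. Suppose a protocol outputs all Pauli coefficients to additive accuracy $\eps<\frac{1}{4d}$ with probability at least $2/3$. We form $\widehat H:=\sum_{\bfp}\widehat\mu_H(\bfp)\sigma_{\bfp}$ by classical post-processing, possibly forcing $\widehat\mu_H(\mathbf 0)=0$, which only helps. The Hilbert--Schmidt orthogonality computation from Section~\ref{subsec:learning_general_hamiltonians} gives, on the success event,
\[
\frac{1}{\sqrt d}\bigl\|\widehat H-H\bigr\|_F=\Bigl(\sum_{\bfp}|\widehat\mu_H(\bfp)-\mu_H(\bfp)|^2\Bigr)^{1/2}\le \sqrt{d^2}\,\eps=d\eps=:\eps'.
\]
The same protocol, with identical queries and total evolution time $T$, is therefore an NFN learner with accuracy $\eps'=d\eps<1/4$ and success probability $2/3$, on the same promised class of traceless $H$ with $\|H\|_\infty\le 1$.

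Since $d=2^n$ is even and at least $d_0$ once $n\ge n_0:=\lceil\log_2 d_0\rceil$, \tref{thm:coherent-total-time-lower-bound} applies with $\eps'\in(0,1/4)$. It gives
\[
T=\Omega\!\left(\frac{d^2}{\eps'\log d}\right)=\Omega\!\left(\frac{d^2}{d\eps\log d}\right)=\Omega\!\left(\frac{d}{\eps\log d}\right),
\]
which is the claim.

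The only point needing care is the range condition. The hypothesis $\eps<\frac{1}{4d}$ is exactly what makes $\eps'<1/4$, so the theorem's packing construction, with $H_x=4\eps' R_x$ and $\|H_x\|_\infty\le 1$, remains admissible. We also check that the $\ell_\infty\to\ell_2$ conversion loses exactly a factor $d$, since there are $d^2$ coefficients. No further obstacle arises, because the reduction is purely classical post-processing and leaves the query model (forward-only, coherent, adaptive) untouched.
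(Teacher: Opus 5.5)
Your proposal is correct and follows the same argument as the paper. You set the NFN accuracy to $d\eps<1/4$, use Pauli Hilbert--Schmidt orthogonality to lose exactly a factor $d$ over the $d^2$ coefficients, and then invoke \tref{thm:coherent-total-time-lower-bound}, which gives $T=\Omega\bigl(d/(\eps\log d)\bigr)$.
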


\begin{proof}[Proof of Corollary~\ref{cor:pauli-coeff-lower-bound}]
Let $\varepsilon_{\rm F}$ denote the target accuracy in normalized Frobenius norm in \tref{thm:coherent-total-time-lower-bound}. We will apply that theorem with $\varepsilon_{\rm F}:=d\eps$. 
Then $0<\eps<\frac{1}{4d}$ implies $0<\varepsilon_{\rm F}=d\eps<1/4$, so the theorem applies.

Suppose that a protocol outputs estimates
$\{\widehat\mu_H(\bfp)\}_{\bfp\in\{0,1,2,3\}^n}$ such that
\[
    |\widehat\mu_H(\bfp)-\mu_H(\bfp)|\le \eps
    \qquad
    \text{for all }\bfp\in\{0,1,2,3\}^n
\]
with probability at least $2/3$. Define the reconstructed Hamiltonian
\[
    \widehat H
    :=
    \sum_{\bfp\in\{0,1,2,3\}^n}
    \widehat\mu_H(\bfp)\sigma_{\bfp}.
\]
Using the Hilbert-Schmidt orthogonality
$ \frac{1}{d} \Tr(\sigma_{\bfp}\sigma_{\bfq})=\delta_{\bfp,\bfq}$, on this
event we have
\[
\begin{aligned}
\frac1{\sqrt d}\left\|\widehat H-H\right\|_F=
\sqrt{\frac1d\left\|\widehat H-H\right\|_F^2}
    =
\sqrt{\sum_{\bfp\in\{0,1,2,3\}^n}
    |\widehat\mu_H(\bfp)-\mu_H(\bfp)|^2 }
    \le
\sqrt{4^n\eps^2}
    =
    d\eps .
\end{aligned}
\]
Therefore, estimating all Pauli coefficients to additive accuracy
$\eps$ with probability at least $2/3$ yields a normalized-Frobenius
Hamiltonian estimate with accuracy $\varepsilon_{\rm F}=d\eps$ and the
same success probability.

By \tref{thm:coherent-total-time-lower-bound}, such a protocol must consume total evolution time
\[
    T
    =
    \Omega \left(
    \frac{d^2}{\varepsilon_{\rm F}\log d}
    \right)
    =
    \Omega \left(
    \frac{d^2}{d\eps\log d}
    \right)
    =
    \Omega \left(
    \frac{d}{\eps\log d}
    \right),
\]
concluding the proof.
\end{proof}

\section{Proof of \tref{thm:unitary_ptm_learning}} \label{appendix:ptm_learning_lower_bound}

In this section, we prove \tref{thm:unitary_ptm_learning} in the main text. 

\begin{proof}[Proof of the upper bound in \tref{thm:unitary_ptm_learning}]
We show how to estimate all PTM entries using our CSEU protocol. 
First consider the nontrivial entries with $\bfp\ne \mathbf 0$ and $\bfp' \ne \mathbf 0$. 
For each such $\bfp' $, define
\[
    \rho_{\bfp' }:=\frac{\openone+\sigma_{\bfp' }}{d},
    \qquad
    O_{\bfp}:=\sigma_{\bfp}.
\]
Then $\rho_{\bfp' }$ is a valid quantum state, since $\sigma_{\bfp' }$ has eigenvalues $\pm1$, and
\[
    \Tr(\rho_{\bfp' }^2)
    =
    \frac1{d^2}\Tr\bigl((\openone+\sigma_{\bfp' })^2\bigr)
    =
    \frac{2}{d}.
\]
Moreover, $O_{\bfp}$ satisfies $\|O_{\bfp}\|_\infty=1$ and
$\Tr(\sigma_{\bfp}^2)=d$.
Thus these requests belong to the specific CSEU class (see Problem~\ref{prob:restricted_CSEU}) with parameters
$\P=2/d$ and $\B=d$. 
For every $\bfp,\bfp' \ne \mathbf 0$, we have
\[
    \Tr\left(O_{\bfp}U\rho_{\bfp' }U^\dagger\right)
    =
    \frac1d
    \Tr\left(\sigma_{\bfp}U(\openone+\sigma_{\bfp' })U^\dagger\right)        
    =
    \frac1d
    \Tr\left(\sigma_{\bfp}U\sigma_{\bfp' }U^\dagger\right)
    =
    R_U(\bfp,\bfp' ).
\]
Applying \tref{thm:cseu_upper_bound} with $\B\P=2$, each such PTM entry can be estimated to additive error $\eps$ using $\mathcal O(d \eps^{-1})$ parallel and non-adaptive queries to $U$.

It remains to estimate all entries simultaneously. 
There are at most $d^4$ PTM entries, so by the simultaneous-estimation guarantee in Remark~\ref{remark:Mproperties}, repeating the CSEU protocol $\mathcal O(\log d)$ times and taking coordinate-wise medians gives additive error at most $\eps$ for all nontrivial entries with success probability at least $2/3$. 
The total number of queries is therefore
\[
K=\mathcal O \left(\frac{d}{\eps}\log d\right)
    =
    \widetilde{\mathcal O} \left(\frac d\eps\right).
\]

Finally, the remaining entries with $\bfp=\mathbf 0$ or $\bfp' =\mathbf 0$ are known exactly. 
Since $U\openone U^\dagger=\openone$ and $\Tr(\sigma_{\bfp})=0$ for $\bfp\ne\mathbf 0$, we have 
\[
    R_U(\mathbf 0,\mathbf 0)=1,
    \qquad
    R_U(\bfp,\mathbf 0)=0 \quad (\bfp\ne\mathbf 0),
    \qquad
    R_U(\mathbf 0,\bfp' )=0 \quad (\bfp' \ne\mathbf 0).
\]

In conclusion, by using
$\widetilde{\mathcal O}(d \eps^{-1})$ parallel and non-adaptive queries to the unknown unitary, the protocol outputs estimates for all $d^4$ PTM entries with additive error at most $\eps$ and success probability at least $2/3$.
\end{proof}

We next prove the lower bound. 
The argument is information-theoretic and proceeds by reducing PTM learning to the problem of identifying a hidden element from a large packing of unitary channels. 
We first collect the auxiliary lemmas needed for the proof, including the existence of the packing as the hard instance.

\begin{lemma}[Pauli Parseval identity]
\label{lemma:Pauli_orthogonality}
Let $\{\sigma_{\bfp}\}_{\bfp\in\{0,1,2,3\}^n}$ be the $n$-qubit Pauli operators on a $d=2^n$ dimensional Hilbert space.
Then for any operator $B\in\mathbb C^{d\times d}$,
\[
    \sum_{\bfp\in\{0,1,2,3\}^n}
    \left|
    \frac1d\Tr(\sigma_{\bfp}B)
    \right|^2
    =
    \frac1d\|B\|_{\mathrm F}^2 .
\]
\end{lemma}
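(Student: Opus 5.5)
The plan is to expand $B$ in the Pauli basis and use the Hilbert--Schmidt orthonormality of $\{\sigma_{\bfp}/\sqrt d\}$, so the statement reduces to Parseval's identity for an orthonormal basis of $\mathbb C^{d\times d}$ with the inner product $\langle X,Y\rangle=\Tr(X^\dagger Y)$.

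First, I recall from the definition of $\mathsf P_n$ given in the preliminaries that the $4^n=d^2$ Pauli strings satisfy $\Tr(\sigma_{\bfp}^\dagger\sigma_{\bfq})=\Tr(\sigma_{\bfp}\sigma_{\bfq})=d\,\delta_{\bfp,\bfq}$, using that each $\sigma_{\bfp}$ is Hermitian. This orthogonality follows from the single-qubit case by factorizing the trace over tensor factors. Hence $\{\sigma_{\bfp}/\sqrt d\}_{\bfp}$ is an orthonormal set of $d^2$ elements in the $d^2$-dimensional space $\mathbb C^{d\times d}$, and is therefore an orthonormal basis.

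Next, I write $B=\sum_{\bfp} b_{\bfp}\sigma_{\bfp}$, where the coefficients are $b_{\bfp}=\frac1d\Tr(\sigma_{\bfp}B)$, obtained by taking the inner product with $\sigma_{\bfq}$ and applying orthogonality. Then
\[
\|B\|_F^2=\Tr(B^\dagger B)=\sum_{\bfp,\bfq}\overline{b_{\bfp}}\,b_{\bfq}\Tr(\sigma_{\bfp}\sigma_{\bfq})=d\sum_{\bfp}|b_{\bfp}|^2 .
\]
Dividing by $d$ gives the claim. Note that $B$ need not be Hermitian: the coefficients $b_{\bfp}$ may be complex, and the absolute values in the statement handle this.

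No step is a real obstacle. The only care needed is to use $\sigma_{\bfp}^\dagger=\sigma_{\bfp}$ so that the coefficient formula involves $\sigma_{\bfp}$ rather than its adjoint, and to check the basis cardinality $4^n=d^2$ so that completeness holds.
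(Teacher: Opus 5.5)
Your proof is correct and follows the same route as the paper: both rest on $\{\sigma_{\bfp}/\sqrt d\}$ being a Hilbert--Schmidt orthonormal basis of $\mathbb C^{d\times d}$ and then apply Parseval's identity. You verify that identity explicitly by expanding $B$ and using $\sigma_{\bfp}^\dagger=\sigma_{\bfp}$, and you check the cardinality $4^n=d^2$ for completeness, a detail the paper leaves implicit.
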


\begin{proof}
Since the normalized Pauli operators
$\left\{\sigma_{\bfp}/\sqrt d\right\}_{\bfp\in\{0,1,2,3\}^n}$
form an orthonormal basis of $\mathbb C^{d\times d}$ with respect to the Hilbert-Schmidt inner product, by Parseval's identity,
\[
\sum_{\bfp\in\{0,1,2,3\}^n}
    \left|
    \frac1d\Tr(\sigma_{\bfp}B)
    \right|^2
=
\frac1d
\sum_{\bfp}
    \left|
    \Tr\left(\frac{\sigma_{\bfp}}{\sqrt d}B\right)
    \right|^2        
=
\frac1d
\sum_{\bfp}
    \left|
    \left\langle
    \frac{\sigma_{\bfp}}{\sqrt d},B
    \right\rangle_{\mathrm{HS}}
    \right|^2        
=
\frac1d
\left\langle B,B\right\rangle_{\mathrm{HS}}
=
\frac1d\|B\|_{\mathrm F}^2 .
\]
This proves the lemma. 
\end{proof}

\begin{lemma}[PTM separation for small rotations]
\label{lem:ptm-separation-packed-reflections}
Let $d=2^n$ and 
$O:=\operatorname{diag}(\openone_{d/2},-\openone_{d/2})$.
Suppose $R_x=V_xOV_x^\dagger$, $x\in[M]$, satisfy
$\frac1{\sqrt d}\|R_x-R_y\|_{\mathrm F}\ge 1$
for all $x\ne y$.
Then there exist numerical constants $\lambda_0>0$ and $b_0>0$, such that for every $0<\lambda<\lambda_0$,
if $U_x:=e^{-\mathrm{i}\lambda R_x}$,
then 
\[
    \max_{\bfp,\bfp' \in\{0,1,2,3\}^n}
    \left|
    R_{U_x}(\bfp,\bfp' )-R_{U_y}(\bfp,\bfp' )
    \right|
    \ge
    b_0\frac{\lambda}{d} \quad \forall x\ne y.
\]
\end{lemma}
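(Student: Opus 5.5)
Since each $R_x$ is a reflection ($R_x^2=\openone$), we have exactly $U_x=e^{-\mathrm i\lambda R_x}=\cos\lambda\,\openone-\mathrm i\sin\lambda\,R_x$. The plan is to compute the Heisenberg-evolved Paulis in closed form and read off a first-order-in-$\lambda$ difference. Write $c=\cos\lambda$ and $s=\sin\lambda$. Then
\[
U_x^\dagger\sigma_{\bfp'}U_x
= c^2\sigma_{\bfp'}+s^2R_x\sigma_{\bfp'}R_x+\mathrm i cs\,[R_x,\sigma_{\bfp'}].
\]
Consequently
\[
R_{U_x}(\bfp,\bfp')-R_{U_y}(\bfp,\bfp')=\frac1d\Tr\Bigl(\sigma_{\bfp}\,B_{xy}^{(\bfp')}\Bigr),
\qquad
B_{xy}^{(\bfp')}:=\mathrm i cs\,[R_x-R_y,\sigma_{\bfp'}]+s^2\bigl(R_x\sigma_{\bfp'}R_x-R_y\sigma_{\bfp'}R_y\bigr).
\]
(Up to the convention $U$ versus $U^\dagger$ in the PTM, which only flips a sign.)

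The approach is to reduce the $\ell_\infty$ lower bound to an $\ell_2$ one via Lemma~\ref{lemma:Pauli_orthogonality}. Summing over all $d^2$ labels $\bfp$ gives $\sum_{\bfp}|R_{U_x}(\bfp,\bfp')-R_{U_y}(\bfp,\bfp')|^2=\frac1d\|B_{xy}^{(\bfp')}\|_F^2$. Since the maximum is at least the root-mean-square over $d^2$ terms, it suffices to find a single $\bfp'$ with
\[
\frac1d\|B_{xy}^{(\bfp')}\|_F^2\gtrsim \lambda^2 .
\]
That gives $\max_{\bfp}|\cdot|\ge \lambda/d$ up to constants. To produce such a $\bfp'$, I will average over $\bfp'$ as well, setting $\Delta:=R_x-R_y$.

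\textbf{Leading term.} The leading contribution is $cs\,\frac1d\sum_{\bfp'}\|[\Delta,\sigma_{\bfp'}]\|_F^2$. Using the Pauli twirl identities $\sum_{\bfp'}\sigma_{\bfp'}A\sigma_{\bfp'}=d\Tr(A)\openone$ and $\sum_{\bfp'}\sigma_{\bfp'}\Delta^\dagger\sigma_{\bfp'}\Delta$, we get
\[
\sum_{\bfp'}\|[\Delta,\sigma_{\bfp'}]\|_F^2=2d^2\|\Delta\|_F^2-2d\,|\Tr\Delta|^2=2d^2\|\Delta\|_F^2,
\]
because $\Tr R_x=\Tr R_y=0$. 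Hence the average of $\frac1d\|[\Delta,\sigma_{\bfp'}]\|_F^2$ over the $d^2$ choices of $\bfp'$ equals $\frac{2}{d}\|\Delta\|_F^2\ge 2$, using the hypothesis $\|\Delta\|_F^2\ge d$.

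\textbf{Second-order term.} This is the main obstacle. I will bound it crudely by
\[
\|R_x\sigma R_x-R_y\sigma R_y\|_F\le\|\Delta\sigma R_x\|_F+\|R_y\sigma\Delta\|_F\le 2\|\Delta\|_F .
\]
By the triangle inequality in the $\ell_2$-average over $\bfp'$ (Minkowski), the root-mean-square of $\frac{1}{\sqrt d}\|B^{(\bfp')}_{xy}\|_F$ is at least
\[
\Bigl(cs\sqrt2-2s^2\Bigr)\frac{\|\Delta\|_F}{\sqrt d}\ge\Bigl(cs\sqrt2-2s^2\Bigr).
\]
For $\lambda<\lambda_0$ small this is at least $\lambda/2$. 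Therefore some $\bfp'$ achieves $\frac1d\|B^{(\bfp')}_{xy}\|_F^2\ge\lambda^2/4$, and then some $\bfp$ gives a difference of at least $\frac{\lambda}{2d}$. The same argument rules out cancellation between the two terms, so the lemma follows with $b_0=1/2$.
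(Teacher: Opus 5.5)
Your proof is correct and follows the same skeleton as the paper's. You isolate the first-order commutator term, use the Pauli twirl identity together with $\Tr(R_x-R_y)=0$ to get $\sum_{\bfp'}\|[R_x-R_y,\sigma_{\bfp'}]\|_F^2=2d^2\|R_x-R_y\|_F^2$, apply Pauli Parseval and a reverse triangle inequality in $\ell_2$ over all $(\bfp,\bfp')$, and pass to $\ell_\infty$ by dividing by $d^2$. The only difference is that you use the exact identity $e^{-\mathrm i\lambda R}=\cos\lambda\,\openone-\mathrm i\sin\lambda\,R$ (valid since $R^2=\openone$) instead of a Taylor remainder; this makes the second-order term explicit, bounds it by $2\sin^2\lambda\,\|R_x-R_y\|_F$ instead of bounding each $R$ separately, and yields explicit constants such as $b_0=1/2$.
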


\begin{proof}[Proof of \lref{lem:ptm-separation-packed-reflections}]
Fix $x\ne y$. For any Hermitian $R$ with $\|R\|_\infty\le 1$, define
$U_R:=e^{-\mathrm{i}\lambda R}$.
We first expand $R_{U_R}(\bfp,\bfp' )$ to the first order in $\lambda$. For $0<\lambda< 1$, Taylor expanding the conjugation map
$X\mapsto e^{-\mathrm{i}\lambda R}Xe^{\mathrm{i}\lambda R}$ gives
\[
    e^{-\mathrm{i}\lambda R}\sigma_{\bfp' }e^{\mathrm{i}\lambda R}
    =
    \sigma_{\bfp' }
    -
    \mathrm{i}\lambda [R,\sigma_{\bfp' }]
    +
    E_{R,\bfp' },
\]
where the remainder second-order term $E_{R,\bfp' }$ satisfies
\[
    \|E_{R,\bfp' }\|_{\mathrm F}
    \le
    C\lambda^2\|[R,[R,\sigma_{\bfp' }]]\|_{\mathrm F}        
    \le
    2C\lambda^2\|R\|_\infty\|[R,\sigma_{\bfp' }]\|_{\mathrm F}   
    \le
    4C\lambda^2\|R\|_\infty^2\|\sigma_{\bfp' }\|_{\mathrm F}     
    \le
    4C\lambda^2\|\sigma_{\bfp' }\|_{\mathrm F}
    =
    4C\lambda^2\sqrt d .
\]
for numerical constant $C>0$. 
Here we used $\|R\|_\infty\le 1$ and the standard inequality
\[
    \|[R,X]\|_{\mathrm F}
    \le
    \|RX\|_{\mathrm F}+\|XR\|_{\mathrm F}
    \le
    2\|R\|_\infty\|X\|_{\mathrm F}.
\]
For simplicity, we absorb the coefficient $4$ into the numerical constant $C$ below. 

Then we have 
\[
R_{U_R}(\bfp,\bfp' )
=\frac1d
\Tr \left(\sigma_{\bfp}U_R\sigma_{\bfp' }U_R^\dagger\right)
=
\delta_{\bfp,\bfp' }
-
\frac{\mathrm{i}\lambda}{d}
\Tr \left(\sigma_{\bfp}[R,\sigma_{\bfp' }]\right)
+
\Delta_{R,\bfp,\bfp' },
\]
where
\[
\Delta_{R,\bfp,\bfp' }
    :=
    \frac1d\Tr(\sigma_{\bfp}E_{R,\bfp' }).
\]
Using \lref{lemma:Pauli_orthogonality}, the collection of remainders satisfies
\begin{align}\label{eqn:Delta_bound1}
\sum_{\bfp,\bfp' }|\Delta_{R,\bfp,\bfp' }|^2
=
\sum_{\bfp' }
\sum_{\bfp}\,
\left|
\frac1d\Tr(\sigma_{\bfp}E_{R,\bfp' })
\right|^2        
=
\sum_{\bfp' }
\frac1d\|E_{R,\bfp' }\|_{\mathrm F}^2        
\le
C^2 d^2\lambda^4 .
\end{align}

Now compare $R=R_x$ and $R=R_y$. Let
$A:=R_x-R_y$.
Since both $R_x$ and $R_y$ are traceless, $A$ is traceless. From the
first-order expansion above,
\[
\begin{aligned}
    R_{U_x}(\bfp,\bfp' )-R_{U_y}(\bfp,\bfp' )
    &=
    -\frac{\mathrm{i}\lambda}{d}
    \Tr \left(\sigma_{\bfp}[A,\sigma_{\bfp' }]\right)
    +
    \left(\Delta_{R_x,\bfp,\bfp' }-\Delta_{R_y,\bfp,\bfp' } \right).
\end{aligned}
\]
Taking the $\ell_2$-norm over all PTM entries and using the reverse triangle inequality, we obtain
\begin{align}\label{eqn:sum_pq_Rx-Ry_expan}
    \left(
    \sum_{\bfp,\bfp' }
    |R_{U_x}(\bfp,\bfp' )-R_{U_y}(\bfp,\bfp' )|^2
    \right)^{1/2}                                      
    &\ge
    \lambda
    \left(
    \sum_{\bfp,\bfp' }\,
    \left|
    \frac1d
    \Tr \left(\sigma_{\bfp}[A,\sigma_{\bfp' }]\right)
    \right|^2
    \right)^{1/2}   
    -
    \left(
    \sum_{\bfp,\bfp' }
    |\Delta_{R_x,\bfp,\bfp' }-\Delta_{R_y,\bfp,\bfp' }|^2
    \right)^{1/2} .
\end{align}
Here, the second term can be bounded using  \eref{eqn:Delta_bound1} and the triangle inequality:
\begin{align}\label{eqn:UB_second_term}
    \left(
    \sum_{\bfp,\bfp' }
    |\Delta_{R_x,\bfp,\bfp' }-\Delta_{R_y,\bfp,\bfp' }|^2
    \right)^{1/2}
    \le
    \left(
    \sum_{\bfp,\bfp' }
    |\Delta_{R_x,\bfp,\bfp' }|^2
    \right)^{1/2}
    +
    \left(
    \sum_{\bfp,\bfp' }
    |\Delta_{R_y,\bfp,\bfp' }|^2
    \right)^{1/2}        
    \le
    2Cd\lambda^2 .
\end{align}

We now lower bound the first-order term. 
By \lref{lemma:Pauli_orthogonality} we have
\begin{align}\label{eqn:sum_pq_expan}
    \sum_{\bfp,\bfp' }\,
    \left|
    \frac1d
    \Tr \left(\sigma_{\bfp}[A,\sigma_{\bfp' }]\right)
    \right|^2
    =
    \frac1d
    \sum_{\bfp' }
    \|[A,\sigma_{\bfp' }]\|_{\mathrm F}^2 . 
\end{align}
Note that 
\[
\begin{aligned}
    \sum_{\bfp' }\|[A,\sigma_{\bfp' }]\|_{\mathrm F}^2
    =
    \sum_{\bfp' }
    \Tr \left((A\sigma_{\bfp' }-\sigma_{\bfp' }A)^\dagger
    (A\sigma_{\bfp' }-\sigma_{\bfp' }A)\right)       
    =
    2d^2\Tr(A^2)
    -
    2\sum_{\bfp' }\Tr(A\sigma_{\bfp' }A\sigma_{\bfp' }).
\end{aligned}
\]
Using the Pauli twirling identity
$\sum_{\bfp' }\sigma_{\bfp' }A\sigma_{\bfp' }=d\,\Tr(A)\openone$
and $\Tr(A)=0$, the second term vanishes. Hence
\[
    \sum_{\bfp' }\|[A,\sigma_{\bfp' }]\|_{\mathrm F}^2
    =
    2d^2\|A\|_{\mathrm F}^2 .
\]
Combining this relation with \eref{eqn:sum_pq_expan}, we obtain
\begin{align}\label{eqn:UB_first_term}
    \left(
    \sum_{\bfp,\bfp' }\,
    \left|
    \frac1d
    \Tr \left(\sigma_{\bfp}[A,\sigma_{\bfp' }]\right)
    \right|^2
    \right)^{1/2}
    &=
    \sqrt{2d}\,\|A\|_{\mathrm F} \ge
    \sqrt2\, d, 
\end{align}
where we have used the packing assumption
$\|A\|_{\mathrm F}=\|R_x-R_y\|_{\mathrm F}\ge \sqrt d$.

Equations~\eqref{eqn:sum_pq_Rx-Ry_expan}, \eqref{eqn:UB_second_term}, and \eqref{eqn:UB_first_term} together imply that 
\[
\begin{aligned}
    \left(
    \sum_{\bfp,\bfp' }
    |R_{U_x}(\bfp,\bfp' )-R_{U_y}(\bfp,\bfp' )|^2
    \right)^{1/2}
    \ge
    \sqrt2\,\lambda d
    -
    2Cd\lambda^2 .
\end{aligned}
\]
Choose
\[
    \lambda_0:=\min\left\{1,\frac{\sqrt2}{4C}\right\}.
\]
Then for every $0<\lambda<\lambda_0$,
\[
    \left(
    \sum_{\bfp,\bfp' }
    |R_{U_x}(\bfp,\bfp' )-R_{U_y}(\bfp,\bfp' )|^2
    \right)^{1/2}
    \ge
    b_0\lambda d
\]
for a numerical constant $b_0>0$.

Finally, there are in total $d^4$ PTM entries. Therefore,
\[
    \max_{\bfp,\bfp' }
    |R_{U_x}(\bfp,\bfp' )-R_{U_y}(\bfp,\bfp' )|
    \ge
    \frac{1}{d^2}
    \left(
    \sum_{\bfp,\bfp' }
    |R_{U_x}(\bfp,\bfp' )-R_{U_y}(\bfp,\bfp' )|^2
    \right)^{1/2}       
    \ge
    b_0\frac{\lambda}{d}.
\]
This proves the lemma.
\end{proof}

\begin{proof}[Proof of the lower bound in \tref{thm:unitary_ptm_learning}]
Let $d_0,c_0>0$ be the numerical constants from \lref{lem:reflection-packing}. Thus, for every even $d\ge d_0$, there exists a collection of $M\ge \exp(c_0d^2)$ reflections
\[
    R_x=V_xOV_x^\dagger,
    \qquad
    x\in[M],
\]
such that
\[
    \frac1{\sqrt d}\|R_x-R_y\|_{\mathrm F}\ge 1
    \qquad
    \forall\, x\ne y .
\]

Let $b_0,\lambda_0>0$ be the constants from
\lref{lem:ptm-separation-packed-reflections}. We set
$\lambda:=4d\eps/b_0$.
Choose the numerical constant $c>0$ in the theorem small enough so that $4c/b_0<\lambda_0$.
Then the assumption $0<\eps<c/d$ implies $0<\lambda<\lambda_0$.

For each $x\in[M]$, define
$U_x:=e^{-\mathrm{i}\lambda R_x}$ and $\caU_x(\cdot)=U_x(\cdot)U_x^\dag$. 
By Lemma~\ref{lem:ptm-separation-packed-reflections}, for every
$x\ne y$,
\[
    \max_{\bfp,\bfp' }\,
    \left|
    R_{U_x}(\bfp,\bfp' )-R_{U_y}(\bfp,\bfp' )
    \right|
    \ge
    b_0 \cdot \frac{\lambda}{d}
    =
    4\eps .
\]
Thus the PTM vectors of the candidate unitary channels are
$4\eps$-separated in entrywise $\ell_\infty$ distance.

Let $X$ be uniformly distributed on $[M]$, and suppose that the
unknown channel given to the learner is $\mathcal U_X$. Let
$\widehat R_{\bfp,\bfp' }$ be the estimates output by the learner.
Define the nearest-neighbor decoder
\[
    \widehat X
    :=
    \arg\min_{x\in[M]}
    \max_{\bfp,\bfp' }\,
    \left|
    \widehat R_{\bfp,\bfp' }
    -
    R_{U_x}(\bfp,\bfp' )
    \right|,
\]
with ties broken arbitrarily.

On the event that all PTM entries of $\mathcal U_X$ are estimated
within additive error $\eps$, we have
\[
    \max_{\bfp,\bfp' }\,
    \left|
    \widehat R_{\bfp,\bfp' }
    -
    R_{U_X}(\bfp,\bfp' )
    \right|
    \le \eps .
\]
For any $y\ne X$, the $4\eps$-separation gives
\[
\begin{aligned}
    \max_{\bfp,\bfp' }\,
    \left|
    \widehat R_{\bfp,\bfp' }
    -
    R_{U_y}(\bfp,\bfp' )
    \right|
    \ge
    \max_{\bfp,\bfp' }\,
    \left|
    R_{U_X}(\bfp,\bfp' )
    -
    R_{U_y}(\bfp,\bfp' )
    \right|  
    -
    \max_{\bfp,\bfp' }\,
    \left|
    \widehat R_{\bfp,\bfp' }
    -
    R_{U_X}(\bfp,\bfp' )
    \right|  
    \ge
    4\eps-\eps
    >
    \eps .
\end{aligned}
\]
Hence $\widehat X=X$ on the success event. Since the assumed PTM learning guarantee holds for every unitary channel, it holds in
particular for every $\mathcal U_x$, and therefore, the probability of an erroneous guess 
$p_e=\Pr[\widehat X\ne X]\le 1/3$.

Applying \lref{lemma:fano} with $Y=\widehat X$ and $f$ being the identity map, we obtain
\[
\begin{aligned}
    I(X:\widehat X)
    &\ge
    S(X)
    +
    p_e\log p_e
    +
    (1-p_e)\log(1-p_e)
    -
    p_e\log(M-1) \\
    &=
    \log M
    +
    p_e\log p_e
    +
    (1-p_e)\log(1-p_e)
    -
    p_e\log(M-1) \\
    &\ge
    \log M-\log 2-\frac13\log(M-1) 
    \ge
    \frac23\log M-\log 2
    =
    \Omega(d^2).
\end{aligned}
\]

It remains to upper bound the information that $K$ queries can reveal
about $X$. We use the standard coherent query model: the learner starts
from an $X$-independent state, applies $X$-independent quantum
channels between queries to the oracles, and each use of oracle applies
$\mathcal U_x$ to a $d$-dimensional query register $Q$. If several
uses of $\mathcal U_x$ are made in parallel, we count them separately;
equivalently, an $r$-fold parallel use $\mathcal U_x^{\otimes r}$ can
be viewed as $r$ consecutive oracle uses acting on different query
registers, since these uses commute. Intermediate measurements and
classical randomness can be deferred coherently, so this model is without
loss of generality for protocols with at most $K$ oracle uses.

One query to $\mathcal U_x$ is the time-one evolution generated by
$\lambda R_x$ on the query register. Consider the $j$-th query. Just
before this query, conditioned on $X=x$, let the learner's state be
$\rho_x^{(j)}$ on $QA$, where $A$ denotes all auxiliary registers
and quantum memory. During the query, define the continuous interpolation
\[
    \rho_x^{(j)}(s)
    =
    e^{-\mathrm{i}s\lambda R_x\otimes I_A}
    \rho_x^{(j)}
    e^{\mathrm{i}s\lambda R_x\otimes I_A},
    \qquad 0\le s\le1 .
\]
Let
\[
    \omega_{XQA}^{(j)}(s)
    :=
    \frac1M\sum_{x=1}^M
    |x\rangle\langle x|_X\otimes \rho_x^{(j)}(s),
\]
and define
\[
    \chi_j(s):=I(X:QA)_{\omega^{(j)}(s)} .
\]
By Lemma~\ref{lem:small-incremental-holevo}, applied with
$H_x=\lambda R_x$, we have
\[
    |\chi_j(1)-\chi_j(0)|
    \le
    C_{\rm SIE}\lambda\log d ,
\]
because $\|R_x\|_\infty=1$ for all $x$. Between oracle calls, the
learner applies channels independent of $x$, which cannot increase the
mutual information by the data-processing inequality. Since the initial
state is independent of $X$, summing over all $K$ queries gives
\[
    \chi_{\rm final}
    \le
    C_{\rm SIE}K\lambda\log d .
\]

Finally, $\widehat X$ is obtained from the final quantum state by a
measurement followed by classical post-processing. Therefore, by Holevo's
theorem and data processing,
\[
    I(X:\widehat X)
    \le
    \chi_{\rm final}
    \le
    C_{\rm SIE}K\lambda\log d .
\]
Combining this with $I(X:\widehat X)\ge c_1d^2$, we get
\[
    c_1d^2
    \le
    C_{\rm SIE}K\lambda\log d .
\]
Using $\lambda=4d\eps/b_0$, this implies
\[
    K
    \ge
    \frac{c_1b_0}{4C_{\rm SIE}}
    \frac{d}{\eps\log d}
    =
    \Omega \left(\frac{d}{\eps\log d}\right).
\]
This completes the proof.
\end{proof}

\section{Proofs of Corollaries
\ref{corollary:learn_shallow_circuit} and \ref{corollary:recover_bounded_gate_complexity_learning}} \label{sec:proof_for_learning_shallow_circuit}

\begin{lemma}[Sewing lemma, {\cite[Lemma 9]{Huang_2024}}] \label{lemma:sewing_lemma}
    For an $n$-qubit unitary $U$, let $\sigma_i := \sigma_{\mathrm{p}, i} \otimes \openone_{[n] \setminus \{i\} }$ where $\mathrm{p} \in \{1, 2, 3\}$ be the local Pauli operator on qubit $i$, and $\widehat{O}_{i, \sigma}$ be the approximation of the Heisenberg-evolved operator $O_{i, \sigma} = U^\dagger \sigma_i U$ subject to $\|\widehat{O}_{i, \sigma}  - U^\dagger \sigma_{i} U\|_{\infty} \leq \eps_{i, \sigma}$, there exists a sewing channel $\mathcal{S}_U = \mathcal{S}(\{\widehat{O}_{i, \sigma}\}_{i, \sigma})$ that satisfies
$$
\left\| \mathcal{S}_U - \mathcal{U} \otimes \mathcal{U}^\dagger \right\|_{\diamond} \leq 2 \sum_{i=1}^n \sum_{\mathrm{p} \in \{1, 2, 3\} } \eps_{i, \sigma_{\mathrm{p}}}.
$$
\end{lemma}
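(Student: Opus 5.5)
The plan is to realize $\mathcal U\otimes\mathcal U^\dagger$ exactly as a product of \emph{local} conjugated swaps. Then each factor is expressed linearly through the Heisenberg-evolved Paulis $O_{i,\sigma}$, the estimates $\widehat O_{i,\sigma}$ are substituted, each factor is projected back onto the unitary group, and the errors are added by telescoping.

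\emph{Exact identity.} Take two $n$-qubit registers $A,B$. Let $\mathsf S_i$ be the swap between qubit $i$ of $A$ and qubit $i$ of $B$, and let $\mathsf S=\prod_i\mathsf S_i$ be the full register swap. Since $\mathsf S U_A^\dagger\mathsf S=U_B^\dagger$, we have
\[
U_A^\dagger\,\mathsf S\,U_A\,\mathsf S=U_A^\dagger U_B .
\]
After relabeling $A\leftrightarrow B$ (a fixed, $U$-independent swap), this is $U\otimes U^\dagger$. Moreover, because the $\mathsf S_i$ commute,
\[
U_A^\dagger\mathsf S U_A=\prod_{i=1}^n W_i,\qquad W_i:=U_A^\dagger\mathsf S_iU_A .
\]
The Pauli decomposition of the two-qubit swap, $\mathsf S_i=\tfrac12\bigl(\openone+\sum_{\mathrm p=1}^3\sigma_{\mathrm p,i}^A\otimes\sigma_{\mathrm p,i}^B\bigr)$, then gives
\[
W_i=\tfrac12\Bigl(\openone+\sum_{\mathrm p}O_{i,\sigma_{\mathrm p}}\otimes\sigma_{\mathrm p,i}^B\Bigr),
\]
which is linear in the Heisenberg-evolved local Paulis.

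\emph{Construction and error accounting.} Set
\[
\widetilde W_i:=\tfrac12\Bigl(\openone+\sum_{\mathrm p}\widehat O_{i,\sigma_{\mathrm p}}\otimes\sigma_{\mathrm p,i}^B\Bigr).
\]
Since $\|\sigma^B\|_\infty=1$, we get $\|\widetilde W_i-W_i\|_\infty\le\tfrac12\sum_{\mathrm p}\eps_{i,\sigma_{\mathrm p}}$. Because $\widetilde W_i$ need not be unitary, replace it by its polar factor $\widehat W_i$. The polar factor is a nearest unitary in operator norm, and $W_i$ is itself unitary, so
\[
\|\widehat W_i-W_i\|_\infty\le 2\|\widetilde W_i-W_i\|_\infty\le\sum_{\mathrm p}\eps_{i,\sigma_{\mathrm p}} .
\]
Define $\mathcal S_U$ as the unitary channel of $\bigl(\prod_i\widehat W_i\bigr)\mathsf S$, composed with the register relabeling. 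A telescoping argument for products of unitaries gives
\[
\Bigl\|\prod_i\widehat W_i-\prod_iW_i\Bigr\|_\infty\le\sum_i\|\widehat W_i-W_i\|_\infty .
\]
Finally, the standard bound $\|\mathcal V-\mathcal V'\|_\diamond\le2\|V-V'\|_\infty$ for unitary channels yields the claimed $2\sum_i\sum_{\mathrm p}\eps_{i,\sigma_{\mathrm p}}$.

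The main obstacle is the polar-projection step. If $\widehat O$ is far from $O$, then $\widetilde W_i$ may be singular, and the "nearest unitary within $2\times$" claim needs care. I would handle this by choosing any unitary minimizer of $\|\widetilde W_i-V\|_\infty$, which always exists by compactness; the factor-of-two bound then holds unconditionally. This avoids relying on invertibility of $\widetilde W_i$.
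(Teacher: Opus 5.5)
Your argument is correct and is essentially the construction behind this lemma. The paper does not prove the lemma itself but imports it from Huang et al., where $U\otimes U^\dagger$ is likewise recovered, up to the register swap, from the commuting conjugated local swaps $U^\dagger\mathsf S_iU=\tfrac12\bigl(\openone+\sum_{\mathrm p}O_{i,\sigma_{\mathrm p}}\otimes\sigma_{\mathrm p,i}\bigr)$, with the estimates $\widehat O_{i,\sigma}$ substituted in, each factor made unitary, and the errors telescoped. Your bookkeeping ($\tfrac12$ from the Pauli expansion of SWAP, $2$ from the nearest-unitary step, $2$ from passing to the diamond norm) reproduces the stated constant exactly, consistent with the paper's later choice $\eps_{i,\sigma}=\eps/(6n)$, and your compactness remark for a singular $\widetilde W_i$ is sound (any polar factor already attains the operator-norm minimum, by Weyl's perturbation bound for singular values).
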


\begin{proof}[Proof of Corollary \ref{corollary:learn_shallow_circuit}]
Via the data processing inequality of diamond norm \cite{Nielsen2012}, the channel $\hat{\mathcal{S}}_U = \Tr_{\mathrm{anc}} \circ \mathcal{S}_U$ obtained by tracing out the ancilla for the sewing channel in \lref{lemma:sewing_lemma} gives a $2 \sum_{i=1}^n \sum_{\mathrm{p} \in \{1, 2, 3\} } \eps_{i, \sigma_{\mathrm{p}}}$-approximation of $\mathcal{U}$ in diamond norm. To ensure that $\| \hat{\mathcal{S}}_U - \mathcal{U} \|_{\diamond} \leq \eps$, it suffices to learn each evolved operator $U^\dagger \sigma_i U$ to error $\eps' = \eps/(6n)$. By a standard lightcone argument \cite[Fact 5]{Huang_2024}, the evolution of the local operator $\sigma_i$ under a depth-$D$ $\qnc^0$ unitary $U^\dagger$ is confined to a lightcone $\mathfrak{L}_{D}(i) \subseteq [n]$ whose size is bounded by $|\mathfrak{L}_{D}(i)| \leq 2^D$ where $D = \bigo{1}$, assuming the circuit to be all-to-all. Suppose $U^\dagger \sigma_i U = \sum_{\bfp \in \{0, 1, 2, 3\}^{|\mathfrak{L}_{D}(i)|}  } \mu_{O_{i, \sigma}}(\bfp) \sigma_{\bfp} \otimes \openone_{\overline{\mathfrak{L}_D(i)}}$ under the Pauli basis, then taking $O = \sigma_i$ and $\rho = 2^{-n}(\openone + \sigma_{\bfp} \otimes \openone_{\overline{\mathfrak{L}_D(i)}})$ in CSEU produces an estimate of the Pauli coefficient
$$
\Tr\left[ O \cdot U \rho U^\dagger \right] = \Tr\left[ U^\dagger \sigma_i U \cdot \frac{\openone + \sigma_{\bfp} \otimes \openone_{\overline{\mathfrak{L}_D(i)}} }{2^n} \right] = \mu_{O_{i, \sigma}}(\bfp).
$$
Setting the error $\widetilde{\eps} = \eps' /(2 \sqrt{2})^{|\mathfrak{L}_{D}(i)|} = \eps /6n(2 \sqrt{2})^{|\mathfrak{L}_{D}(i)|}$ and construct $\widehat{O}_{i, \sigma}$ from these estimates of Pauli coefficients suffices to fulfill the precision requirement of the sewing procedure:
\begin{align*}
\left\| \widehat{O}_{i, \sigma} - U^\dagger \sigma_i U \right\|_{\infty} &= \left\|\sum_{\bfp \in \{0, 1, 2, 3\}^{|\mathfrak{L}_{D}(i)| }  } \widehat{\mu}_{O_{i, \sigma}}(\bfp) \sigma_{\bfp} \otimes \openone_{\overline{\mathfrak{L}_D(i)}} - \sum_{\bfp \in \{0, 1, 2, 3\}^{|\mathfrak{L}_{D}(i)|} } \mu_{O_{i, \sigma}}(\bfp) \sigma_{\bfp} \otimes \openone_{\overline{\mathfrak{L}_D(i)}}  \right\|_{\infty} \\
&\leq \left\|\sum_{\bfp \in \{0, 1, 2, 3\}^{|\mathfrak{L}_{D}(i)| }  } \left(\widehat{\mu}_{O_{i, \sigma}}(\bfp) - \mu_{O_{i, \sigma}}(\bfp) \right)\sigma_{\bfp}  \right\|_{\infty} \\
&\leq \sqrt{ \sum_{\bfp \in \{0, 1, 2, 3\}^{|\mathfrak{L}_{D}(i)| }  } \left|\widehat{\mu}_{O_{i, \sigma}}(\bfp) - \mu_{O_{i, \sigma}}(\bfp)\right|^2 \|\sigma_{\bfp}\|_F^2 } \\ 
&\leq \left(2 \sqrt{2}\right)^{|\mathfrak{L}_{D}(i)|} \max_{\bfp \in \{0, 1, 2, 3\}^{|\mathfrak{L}_{D}(i)| }} \left|\widehat{\mu}_{O_{i, \sigma}}(\bfp) - \mu_{O_{i, \sigma}}(\bfp)\right| \leq  \eps'.
\end{align*}
We will need $M = \sum_{i=1}^n 4^{|\mathfrak{L}_D(i)|}  \leq n \cdot 4^{2^D} = \bigo{n}$ such pairs of observables and states. Using \tref{thm:cseu_upper_bound}, since $\B = 2^n = \omega(\sqrt{2^n}) = \sqrt{r(\rho, O)}$, the depth-$D$ $\qnc^0$ unitary $U$ can be learned with 
$$
\bigo{ \frac{2^n}{\widetilde{\eps}} \log(M) } = \bigo{\left(2 \sqrt{2}\right)^{2^D} \cdot \frac{2^n}{\eps}  n \log n } = \bigo{ \frac{2^n n \log n}{\eps} }
$$
queries to $U$. A similar argument to the running-time analysis in \cite[Theorem 5]{Huang_2024} shows that our protocol requires $\bigo{n \log n \widetilde{\eps}^{-1} } = \bigo{n^2 \log n \eps^{-1}}$ classical computational time. This completes the proof.
\end{proof}

\begin{lemma}[{\cite[Theorem 8]{Zhao_2024}}] \label{lemma:covering_of_bounded_gate_unitary}
    Let $\covering = \covering( \U^G_d, \|\cdot\|_{\diamond}, \widetilde{\eps} ) \subseteq \U(d)$ be an $\widetilde{\eps}$-covering net of $\U^G_d$, the set of $n$-qubit unitaries generated by $G$ two-qubit gates. Then the size of $\covering$ satisfies
$$
    \log \left|\covering\right|  \leq 32 G \log\left(\frac{12 G}{\widetilde{\eps}}\right) + 2G \log n.
$$
\end{lemma}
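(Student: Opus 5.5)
The plan is a direct product-of-nets construction: discretize the circuit architecture and each gate separately, then control how the per-gate errors accumulate. Write any $U\in\U^G_d$ as $U=g_G\cdots g_1$, where each $g_k$ is a two-qubit unitary acting on an ordered pair of qubits $(i_k,j_k)$, tensored with the identity on the remaining qubits. There are at most $n(n-1)\le n^2$ choices of support for each gate. Hence the set of possible architectures has size at most $n^{2G}$, which accounts for the additive $2G\log n$ term.

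Next, fix an architecture and discretize each gate. Let $\delta:=\widetilde{\eps}/(4G)$. I would use a standard volumetric argument on the operator-norm unit ball of $4\times4$ complex matrices, which has real dimension $32$. This gives a $\delta$-net of size at most $(3/\delta)^{32}$, and its points lie in the ball rather than in $\U(4)$. To obtain a net inside $\U(4)$, keep one ball point per element that is $\delta/2$-close to some unitary, and replace it by such a unitary. This yields an internal $\delta$-net $\mathcal N\subseteq\U(4)$ with $|\mathcal N|\le(6/\delta)^{32}$. Only the constant inside the logarithm needs care, and this could be tuned to match the stated $12G/\widetilde\eps$.

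The candidate covering net $\covering$ consists of all products $h_G\cdots h_1$, with each $h_k\in\mathcal N$ placed on the support dictated by a fixed architecture. It is contained in $\U(d)$, and
\[
\log|\covering|\le 2G\log n+32G\log\!\left(\frac{c\,G}{\widetilde\eps}\right).
\]
For the covering property, take $U=g_G\cdots g_1$ and pick $h_k\in\mathcal N$ with $\|g_k-h_k\|_\infty\le\delta$. The hybrid (telescoping) argument gives
\[
\|U-V\|_\infty\le\sum_k\|g_k-h_k\|_\infty\le G\delta,
\]
using the unitary invariance of the operator norm and the fact that tensoring with the identity does not change it. Then $\|\mathcal U-\mathcal V\|_\diamond\le 2\|U-V\|_\infty\le 2G\delta=\widetilde\eps/2\le\widetilde\eps$.

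The step I expect to be the main obstacle is not the accumulation argument but the bookkeeping needed to land on the exact constants $32$ and $12$. In particular, converting an external net in the matrix ball into an internal net of unitaries costs a factor of $2$ in radius. I would absorb this by choosing $\delta$ slightly smaller, namely $\delta=\widetilde\eps/(2G)$ with the slack from the bound $2G\delta$ versus $\widetilde\eps$. I would then verify that $3/\delta$ at the reduced radius stays at most $12G/\widetilde\eps$.
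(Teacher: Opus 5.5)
Your proposal is correct and follows the standard argument behind the cited result; the paper gives no proof of its own and defers to \cite[Theorem~8]{Zhao_2024}. The steps are: count at most $n^{2G}$ gate placements, take an internal operator-norm net of $\U(4)$ of size $(6/\delta)^{32}$ with $\delta=\widetilde{\eps}/(2G)$ (obtained from an external $\widetilde{\eps}/(4G)$-net of the $32$-real-dimensional ball), telescope to $\|U-V\|_\infty\le G\delta$, and apply $\|\mathcal U-\mathcal V\|_\diamond\le 2\|U-V\|_\infty$, which gives exactly $32G\log(12G/\widetilde{\eps})+2G\log n$. Set $\delta=\widetilde{\eps}/(2G)$ from the start, since this is larger, not smaller, than your first choice $\widetilde{\eps}/(4G)$, and read the lemma as asserting that such a net exists, as you implicitly did.
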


\begin{proof}[Proof of Corollary \ref{corollary:recover_bounded_gate_complexity_learning}]
    We replicate the proof of Section \ref{prop:TomoEfficiency} based on the existence of the covering net of bounded-gate unitaries presented in \lref{lemma:covering_of_bounded_gate_unitary}, with a slight modification: Instead of using a constant-covering net of all $d$-dimensional pure states as the identifiers, which can be massive for small $G$, we build identifiers directly from the unitary covering net. For any two unitaries in $\covering$ that are at least $\widetilde{\eps}$-far in diamond distance, there exists a pair of identifier states that saturate it. By definition of $\covering$, there are at most $\binom{|\covering|}{2} = \bigo{|\covering|^2}$ such pairs of unitaries, identified by the same number of pairs of pure states. Therefore, to find a nearest neighbor for an unknown unitary $U$ within $\covering$, it suffices to precompute all these pairs of identifier states $\{(\psi_a, \psi_b)\}_{a, b}$ [cf. \autoref{fig:cseu_to_tomography}], and evaluate $\Tr[ \psi_a \cdot U \psi_b U^\dagger ]$ to precision $\eps'$. Let $\widehat{U} \in \covering$ be $\widetilde{\eps}$-close to $U$, using H\"older's inequality, 
    $$
    \begin{aligned}
    \forall\, a, b, \quad \left|\widehat{t}_{a, b} - \Tr\left[ \psi_a \cdot \widehat{U} \psi_b \widehat{U}^\dagger \right]\right| &\leq \left|\widehat{t}_{a, b} - \Tr\left[ \psi_a \cdot U \psi_b U^\dagger \right]  \right| + \left| \Tr\left[ \psi_a \cdot \widehat{U} \psi_b \widehat{U}^\dagger \right] - \Tr\left[ \psi_a \cdot U \psi_b U^\dagger \right]\right| \\
    &\leq \eps' + \left\| \psi_a \right\|_{\infty} \left\| \psi_b \right\|_{1} \left\| \widehat{\mathcal{U}} - \mathcal{U}  \right\|_{\diamond} \leq \eps' + \widetilde{\eps}.
    \end{aligned}
    $$
    While for any unitaries $U' \in \covering$ that is $5\widetilde{\eps}$-far from $U$, let $U_0 \in \covering$ be $\widetilde{\eps}$-close to $U$, it holds that $\left\| \mathcal{U}' - \mathcal{U}_0 \right\|_{\diamond} \geq \left\| \mathcal{U}' - \mathcal{U} \right\|_{\diamond} - \left\|\mathcal{U} - \mathcal{U}_0 \right\|_{\diamond} \geq  4\widetilde{\eps}$. Therefore, it is guaranteed that
    \begin{align*}
    \exists\, a, b, \quad \left| \widehat{t}_{a, b} - \Tr\left[ \psi_a \cdot U' \psi_b {U'}^\dagger \right]  \right| &\geq \left| \Tr\left[ \psi_a \cdot {U'} \psi_b {U'}^\dagger \right] - \Tr\left[ \psi_a \cdot U_0 \psi_b U_0^\dagger \right] \right| \\
    &\quad - \left|\Tr\left[ \psi_a \cdot U_0 \psi_b U_0^\dagger \right] - \Tr\left[ \psi_a \cdot U \psi_b U^\dagger \right] \right|  - \left| \Tr\left[ \psi_a \cdot U \psi_b U^\dagger \right] - \widehat{t}_{a, b} \right| \\
    &\geq \left\| \mathcal{U}' - \mathcal{U}_0 \right\|_{\diamond} - \left\| \psi_a \right\|_{\infty} \left\| \psi_b \right\|_{1} \left\| \mathcal{U}_0 - \mathcal{U}  \right\|_{\diamond} - \eps' \geq 4 \widetilde{\eps} - \widetilde{\eps} - \eps' = 3\widetilde{\eps} - \eps'.
    \end{align*}
    Taking $\widetilde{\eps} = 2\eps'$ suffices to ensure that the expectations for these two cases are strictly separated, so that any output by the database should be at most $5 \widetilde{\eps}$-far from $U$ [cf. \autoref{fig:cseu_to_tomography}], while it is guaranteed that a unitary sufficiently close to $U$ can be chosen as the output. Therefore, to learn any unitary $U \in \U_d^G$ to precision $\eps$ in diamond norm, we take $\widetilde{\eps} = \frac{1}{5} \eps$, and accordingly $\eps' = \frac{1}{10} \eps$. Applying \tref{thm:cseu_upper_bound} in the pure state case with $\B = 1$, $M = \bigo{|\mathsf{C}|^2}$ and the upper bound in \lref{lemma:covering_of_bounded_gate_unitary}, we conclude that 
    $$
    \bigo{\frac{2^n}{\eps' } \log \left( \bigo{|\covering|^2} \right) } = \bigo{\frac{2^n}{ \eps / 10 } \left( 32 G \log \left( \frac{12G}{\eps / 5} \right) + 2 G \log n \right) } = \bigo{ \frac{2^n G}{\eps} \log \left( \frac{nG}{\eps} \right) }
    $$
    queries to $U$ would suffice. This concludes the proof.
\end{proof}

\end{document}